\documentclass[12pt]{report}

\usepackage[utf8]{inputenc}
\usepackage[T1]{fontenc}
\usepackage[polish, english]{babel}




\author{\theauthor}
\title{\thetitle}

\def\mytitle{\newpage
\thispagestyle{empty}
\begin{centering}

\includegraphics[width=6cm, trim={0.2cm, 0.2cm, 0.2cm, 0.2cm}, clip]{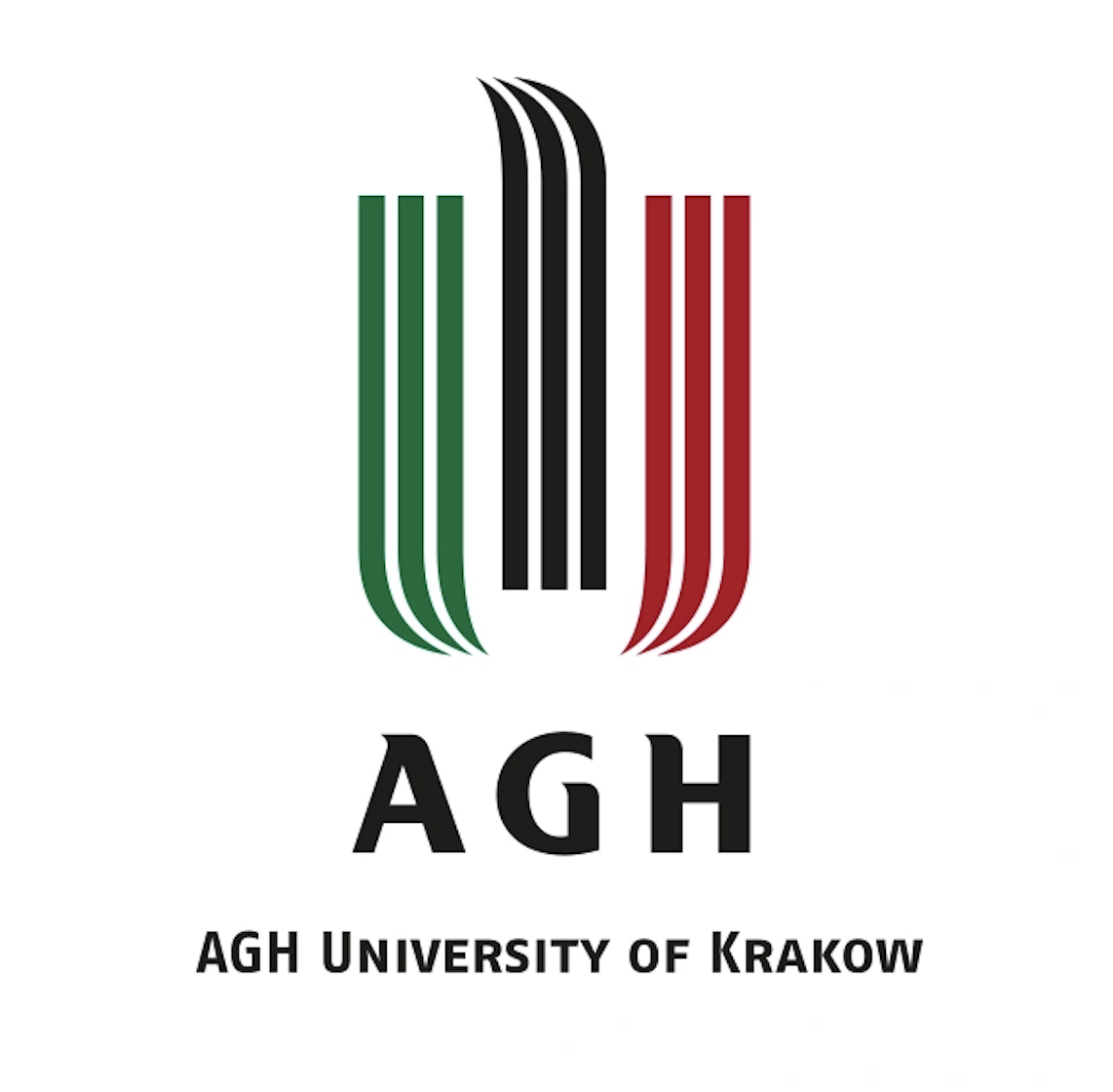} \\
\vspace{0.9cm}
\large
\textbf{FIELD OF SCIENCE:\quad Natural sciences} \\ 
\vspace{0.4cm}
SCIENTIFIC DISCIPLINE:\quad Computer and information sciences \\
\vspace{1.2cm}
\LARGE
\textbf{DOCTORAL THESIS}\\
\vspace{1.2cm}
\LARGE
Map of Elections\\
\vspace{1.9cm}
\large
Author:\quad Stanisław Andrzej Szufa \\
\vspace{0.6cm}
Supervisor:\quad prof. dr~hab.~in\.z.~Piotr Faliszewski \\
\vspace{0.6cm}
Completed in:\quad AGH University, Faculty of Computer Science \\
\vspace{1cm}
Krakow, 2024 \\
\end{centering} 
\newpage
}









\usepackage{natbib}
\usepackage{hyperref}
\usepackage{amsmath,amsthm,amsfonts}
\usepackage{mathtools}
\usepackage[dvipsnames]{xcolor}
\usepackage{pgfplots}
\usepackage{booktabs}
\usepackage{units}
\usepackage{kbordermatrix}
\usepackage{graphicx}
\usepackage{tikz}
\usepackage{caption}
\usepackage{subcaption}
\usepackage{amsthm}
\usepackage{amsfonts}
\usepackage{nicefrac}
\usepackage{times}
\usepackage{algorithm}
\usepackage{algorithmic}
\usepackage{paralist}
\usepackage{multirow}
\usepackage{comment}
\usetikzlibrary{external}
\usepackage{cleveref}
\usepackage{mathdots}
\usepackage{footnote}
\usepackage{etoolbox}

\theoremstyle{definition}
\newtheorem{proposition}{Proposition}[chapter]
\newtheorem{observation}{Observation}[chapter]

\newtheorem{corollary}{Corollary}[chapter]
\newtheorem{remark}{Remark}[chapter]
\newtheorem{definition}{Definition}[chapter]
\newtheorem{theorem}{Theorem}[chapter]

\newtheorem{example}{Example}[chapter]
\AtBeginEnvironment{example}{\pushQED{\qed}}
\AtEndEnvironment{example}{\popQED\endexample}


\newcommand{\pref}{\succ}

\newcommand{\score}{{{\mathrm{score}}}}

\newcommand{\reals}{{\mathbb{R}}}
\newcommand{\p}{{\mathrm{P}}}
\newcommand{\np}{{\mathrm{NP}}}

\newcommand{\wone}{{\mathrm{W[1]}}}

\newcommand{\calB}{\mathcal{B}}
\newcommand{\calM}{\mathcal{M}}

\newcommand{\calS}{\mathcal{S}}

\newcommand{\calD}{\mathcal{D}}
\newcommand{\calL}{\mathcal{L}}
\newcommand{\calP}{\mathcal{P}}
\newcommand{\calT}{\mathcal{T}}

\newcommand{\POS}{{{d_\mathrm{pos}}}}
\newcommand{\LPOS}{{{d_\mathrm{pos}^{\ell_1}}}}
\newcommand{\PAIR}{{{d_\mathrm{pair}}}}
\newcommand{\BOR}{{{d_\mathrm{Borda}}}}
\newcommand{\EMD}{{{{\mathrm{emd}}}}}
\newcommand{\discrete}{{{\mathrm{disc}}}}
\newcommand{\DISC}{{{d_\mathrm{disc}}}}
\newcommand{\hamming}{{{\mathrm{H}}}}
\newcommand{\SWAP}{{{d_\mathrm{swap}}}}
\newcommand{\swap}{{{\mathrm{swap}}}}
\newcommand{\SPEAR}{{{d_\mathrm{Spear}}}}
\newcommand{\spearman}{{{\mathrm{Spear}}}}
\newcommand{\MR}{{{\mathrm{MR}}}}
\newcommand{\TMR}{{{\mathrm{TMR}}}}

\newcommand{\cost}{{{{\mathrm{cost}}}}}
\newcommand{\pos}{{{{\mathrm{pos}}}}}

\newcommand{\emd}{{{\mathrm{emd}}}}
\newcommand{\sort}{{{\mathrm{sort}}}}
\newcommand{\normphi}{{{\mathrm{norm}\hbox{-}\phi}}}

\newcommand{\rID}{{{\mathrm{rID}}}}
\newcommand{\UN}{{{\mathrm{UN}}}}
\newcommand{\AN}{{{\mathrm{AN}}}}
\newcommand{\ID}{{{\mathrm{ID}}}}
\newcommand{\id}{{{\mathrm{id}}}}
\newcommand{\an}{{{\mathrm{an}}}}
\newcommand{\ST}{{{\mathrm{ST}}}}

\newcommand{\stt}{{{\mathrm{st}}}}
\newcommand{\sgn}{{{\mathrm{sgn}}}}
\newcommand{\Euc}{{{\mathrm{Euc}}}}

\newcommand{\did}{d\hbox{-}\mathrm{ID}}
\newcommand{\discid}{d_{disc}\hbox{-}\mathrm{ID}}

\newcommand{\freq}{{{\mathrm{freq}}}} 
\usepackage{ mathrsfs }

\usepackage[most]{tcolorbox}

\newtcolorbox{conclusionbox}[1][]{
    colback=white,
    colframe=black,
    title=Main Conclusions,
    sharp corners,
    #1,
}
\newtcolorbox{conclusionsbox}[1][]{
    colback=white,
    colframe=black,
    title=Main Conclusions,
    sharp corners,
    #1,
}
\newtcolorbox{contributionbox}[1][]{
    colback=white,
    colframe=black,
    title=Main Contributions,
    sharp corners,
    #1,
}
\newtcolorbox{contributionsbox}[1][]{
    colback=white,
    colframe=black,
    title=Main Contributions,
    sharp corners,
    #1,
}

\def\R{\mathbb{R}}

\def\2vec#1#2{\left(\begin{array}{c}{#1}\\{#2}\end{array}\right)}


\usepackage{helvet}
\usepackage{courier}
\DeclareCaptionStyle{ruled}{labelfont=normalfont,labelsep=colon,strut=off}
\frenchspacing

\usepackage{environ}

\newcommand{\repeatproposition}[1]{  \begingroup
  \renewcommand{\theproposition}{\ref{#1}}  \expandafter\expandafter\expandafter\proposition
  \csname repproposition@#1\endcsname
  \endproposition
  \endgroup
  \setcounter{theorem}{\value{theorem}-1}
}

\NewEnviron{repproposition}[1]{  \global\expandafter\xdef\csname repproposition@#1\endcsname{    \unexpanded\expandafter{\BODY}  }  \expandafter\proposition\BODY\unskip\label{#1}\endproposition
}

\newcommand{\repeattheorem}[1]{  \begingroup
  \renewcommand{\thetheorem}{\ref{#1}}  \expandafter\expandafter\expandafter\theorem
  \csname reptheorem@#1\endcsname
  \endtheorem
  \endgroup
  \setcounter{theorem}{\value{theorem}-1}
}

\NewEnviron{reptheorem}[1]{  \global\expandafter\xdef\csname reptheorem@#1\endcsname{    \unexpanded\expandafter{\BODY}  }  \expandafter\theorem\BODY\unskip\label{#1}\endtheorem
}

\pagestyle{plain}
\setcounter{secnumdepth}{2}
\sloppy

\usepackage{thm-restate}


\newcommand{\av}{\mathit{av}}

\newcommand{\app}{{\mathrm{app}}}
\DeclareMathOperator{\ham}{ham}

\usepackage[toc,page]{appendix}

\newcommand{\new}[1]{{\color{black} #1}}

\usetikzlibrary{chains,positioning,decorations.pathreplacing}

 \newcommand{\drawunabove}[2]{
    \draw (#1+0.5, #2+1) node[anchor=south] {UN};
    \fill[black!25!white] (#1+0,#2+0)  rectangle (#1+1,#2+1);
    \draw (#1+0,#2+0) rectangle (#1+1,#2+1);
  }
  \newcommand{\drawidabove}[2]{
    \draw (#1+0.5, #2+1) node[anchor=south] {ID};
    \fill[black!25!white] (#1+0,#2+1)  -- (#1+0.2, #2+1) -- (#1+1, #2+0.2) -- (#1+1, #2) -- (#1+1-0.2, #2) -- (#1, #2+1-0.2) -- cycle;
    \draw (#1+0,#2+0) rectangle (#1+1,#2+1);
  }

    \newcommand{\drawanaboveleft}[2]{
    \draw (#1-0.75, #2+0.5) node[anchor=south] {AN};
    \fill[black!25!white] (#1+0,#2+0)  -- (#1+0.2, #2+0) -- (#1+1, #2+1-0.2) -- (#1+1, #2+1) -- (#1+1-0.2, #2+1) -- (#1, #2+0.2) -- cycle;
    \fill[black!25!white] (#1+0,#2+1)  -- (#1+0.2, #2+1) -- (#1+1, #2+0.2) -- (#1+1, #2) -- (#1+1-0.2, #2) -- (#1, #2+1-0.2) -- cycle;
    \draw (#1+0,#2+0) rectangle (#1+1,#2+1);
  }
    \newcommand{\drawanaboveright}[2]{
    \draw (#1+1.75, #2+0.5) node[anchor=south] {AN};
    \fill[black!25!white] (#1+0,#2+0)  -- (#1+0.2, #2+0) -- (#1+1, #2+1-0.2) -- (#1+1, #2+1) -- (#1+1-0.2, #2+1) -- (#1, #2+0.2) -- cycle;
    \fill[black!25!white] (#1+0,#2+1)  -- (#1+0.2, #2+1) -- (#1+1, #2+0.2) -- (#1+1, #2) -- (#1+1-0.2, #2) -- (#1, #2+1-0.2) -- cycle;
    \draw (#1+0,#2+0) rectangle (#1+1,#2+1);
  }
    \newcommand{\drawstbelowleft}[2]{
    \draw (#1-0.75, #2-0.5) node[anchor=south] {ST};
    \fill[black!25!white] (#1+0,#2+1)  rectangle (#1+0.5, #2+0.5);
    \fill[black!25!white] (#1+0.5,#2+0.5)  rectangle (#1+1, #2+0);
    \draw (#1+0,#2+0) rectangle (#1+1,#2+1);
  }
      \newcommand{\drawstbelowright}[2]{
    \draw (#1+1.75, #2-0.5) node[anchor=south] {ST};
    \fill[black!25!white] (#1+0,#2+1)  rectangle (#1+0.5, #2+0.5);
    \fill[black!25!white] (#1+0.5,#2+0.5)  rectangle (#1+1, #2+0);
    \draw (#1+0,#2+0) rectangle (#1+1,#2+1);
  }
  
  \usepackage[textwidth=1.5cm, textsize=tiny]{todonotes}
 
\setlength{\marginparwidth}{1.5cm}

\interfootnotelinepenalty=10000

 \newcommand{\drawunabovepair}[2]{
    \draw (#1+0.5, #2+1) node[anchor=south] {UN};
    \draw (#1+0.5, #2-1.15) node[anchor=south] {AN};
    \fill[black!12!white] (#1+0,#2+0)  rectangle (#1+1,#2+1);
    \draw (#1+0,#2+0) rectangle (#1+1,#2+1);
    \draw (#1+0,#2+1) -- (#1+1,#2+0); 
  }
   \newcommand{\drawidabovepair}[2]{
    \draw (#1+0.5, #2+1) node[anchor=south] {ID};
    \fill[black!25!white] (#1+0,#2+1) -- (#1+1,#2+0) -- (#1+1, #2+1) -- cycle;
    \draw (#1+0,#2+0) rectangle (#1+1,#2+1);
    \draw (#1+0,#2+1) -- (#1+1,#2+0); 
  }
 \newcommand{\drawstbelowpair}[2]{
    \draw (#1+0.5, #2-1.15) node[anchor=south] {ST};
    \fill[black!25!white] (#1+0.5,#2+0.5)  rectangle (#1+1, #2+1);
    \fill[black!12!white] (#1+0,#2+1)  rectangle (#1+0.5, #2+0.5);
    \fill[black!12!white] (#1+0.5,#2+0.5)  rectangle (#1+1, #2+0);
    \draw (#1+0,#2+0) rectangle (#1+1,#2+1);
    \draw (#1+0,#2+1) -- (#1+1,#2+0); 
    \draw (#1+0.5,#2+0) -- (#1+0.5,#2+1);
  }
  
 \newcommand{\drawunaboveborda}[2]{
    \draw (#1+0.5, #2+0.75) node[anchor=south] {UN};
    \draw (#1+0.5, #2-0.85) node[anchor=south] {AN};
    \fill[black!25!white] (#1+0,#2+0.25)  rectangle (#1+1,#2+0.5);
    \draw (#1+0,#2+0.25) rectangle (#1+1,#2+0.75);
  }
   \newcommand{\drawidaboveborda}[2]{
    \draw (#1+0.5, #2+0.75) node[anchor=south] {ID};
    \fill[black!25!white] (#1+0,#2+0.25) -- (#1+1,#2+0.25) -- (#1+0, #2+0.75) -- cycle;
    \draw (#1+0,#2+0.25) rectangle (#1+1,#2+0.75);
  }
  
 \newcommand{\drawstbelowborda}[2]{
    \draw (#1+0.5, #2-1.15) node[anchor=south] {ST};
    \fill[black!25!white] (#1+0,#2+0.25)  rectangle (#1+0.5, #2+0.625);
    \fill[black!25!white] (#1+0.5,#2+0.25)  rectangle (#1+1, #2+0.375);
    \draw (#1+0,#2+0.25) rectangle (#1+1,#2+0.75);
  }

 \newcommand{\drawunswap}[2]{
    \draw (#1+0.5, #2+0) node[anchor=south] {\scriptsize UN};
    \draw (#1+0,#2+0) rectangle (#1+1,#2+1);
  }

  \newcommand{\drawidswap}[2]{
    \draw (#1+0.5, #2+0) node[anchor=south] {\scriptsize ID};
    \draw (#1+0,#2+0) rectangle (#1+1,#2+1);
  }
  \newcommand{\drawanswap}[2]{
    \draw (#1+0.5, #2+0) node[anchor=south] {\scriptsize AN};
    \draw (#1+0,#2+0) rectangle (#1+1,#2+1);
  }
  \newcommand{\drawstswap}[2]{
    \draw (#1+0.5, #2+0) node[anchor=south] {\scriptsize ST};
    \draw (#1+0,#2+0) rectangle (#1+1,#2+1);
  }



\usepackage{cleveref}
\crefname{figure}{figure}{figures}
\hypersetup{
		pdfencoding=auto, 
		psdextra,
		colorlinks=true,
		citecolor=black,
		linkcolor=black,
		urlcolor=black
	}
\usepackage{subcaption}



\begin{document}

\mytitle

\quad
\thispagestyle{empty}
\clearpage
\selectlanguage{polish}
\begin{abstract}

W niniejszej rozprawie skupiam się na badaniu zagadnień związanych z obliczeniową teorią wyboru społecznego (ang. \textit{computational social choice}). Dyscyplina ta koncentruje się na analizie zbiorowego podejmowania decyzji -- w szczególności na jej obliczeniowych aspektach. Ze zbiorowym podejmowaniem decyzji mamy do czynienia między innymi w kontekście wyborów. Przykładowo mogą to być wybory prezydenckie bądź wybory parlamentarne. 
Wyborami można również nazwać proces wyłaniania zwycięzcy w konkursie Chopinowskim. 
Formalnie, przez wybory rozumiemy zbiór kandydatów oraz zbiór wyborców posiadających pewne preferencje względem tychże kandydatów. Teoria wyborów mierzy się z szeregiem problemów, takich jak stwierdzanie kto wygrywa dane wybory, ocenianie marginesów zwycięstwa, analiza różnego rodzaju manipulacji wynikiem, badanie własności aksjomatycznych (np. kryterium Condorceta) i wiele innych.

W swoich badaniach przede wszystkim skupiam się na analizie różnych statystycznych modeli preferencji, czyli modeli pozwalających generować wybory (zestawy głosów).  Analizowane przeze mnie modele preferencji są powszechnie wykorzystywane przez społeczność zajmującą się algorytmicznymi aspektami wyborów. Ich lepsze zrozumienie pozwoli w przyszłości na trafniejsze dobieranie modeli zależnie od sytuacji (przykładowo do symulacji obliczeniowych) oraz na bardziej racjonalne planowanie eksperymentów obliczeniowych.

Każde wybory, zarówno te prawdziwe, jak i te wygenerowane przez modele statystyczne, możemy utożsamić z punktem w pewnej wielowymiarowej przestrzeni. Pojawia się pytanie, jak porównywać ze sobą różne wybory? W szczególności jak mierzyć odległości pomiędzy nimi? 

Próbując odpowiedzieć na powyższe pytania wprowadzam narzędzie nazywane mapą wyborów -- graficzną reprezentację ułatwiającą zrozumienie przestrzeni wyborów. Na początku przygotowujemy zestaw wyborów. Następnie, zgodnie z zadaną metryką, obliczamy odległości pomiędzy każdą parą wyborów. Na koniec, bazując na obliczonych odległościach, osadzamy wszystkie wybory (punkty) w dwuwymiarowej przestrzeni euklidesowej, tak aby odległości euklidesowe, jak najlepiej odzwierciedlały, te obliczone przy pomocy metryki. Mapa wyborów – to nie pojedyncza mapa, a narzędzie pozwalające tworzyć różne warianty mapy dla różnych modeli i parametrów. Mapa wyborów pozwala lepiej zrozumieć zarówno istniejące modele, jak i prawdziwe wybory. Dzięki mapie udało się dokonać wielu istotnych spostrzeżeń. 


\end{abstract}

\quad
\thispagestyle{empty}
\clearpage
\selectlanguage{english}
\begin{abstract}

In the following thesis, we study the topics related to computational social choice theory. This discipline focuses on the analysis of collective decision-making, in particular, on its computational aspects. We deal with collective decision-making, for example, in the context of elections. For instance, it can be a presidential or a parliamentary election. By an election we can also call the process of selecting the winner in the Chopin Competition. Formally, by an election we mean a set of candidates and a set of voters who have certain preferences over these candidates. Election theory faces a number of problems, such as determining the winner or winners in an election, calculating margins of victory, analyzing various types of manipulation, studying axiomatic properties (e.g., the Condorcet winner criterion), and many others.

In our research, we mainly focus on the analysis of various statistical models of preferences, i.e., models for generating elections (votes). The preference models we have analyzed are widely used by the community dealing with the algorithmic aspects of elections. Their better understanding will allow for more accurate selection of models depending on the situation (for example, for computational simulations) and for more rational planning of computational experiments.

Each election, both real and generated from statistical cultures, can be associated with a point in a multidimensional space. The question arises, how to compare different elections with each other? In particular, how do we measure the distances between them?

In an attempt to answer these questions, we introduce a framework called a map of elections, a graphical representation that makes it easier to understand the election space. First, we prepare a set of elections. Then, according to a given metric, we calculate the distances between each pair of elections. Finally, based on the calculated distances, we embed all elections (points) in a two-dimensional Euclidean space so that the Euclidean distances reflect as closely as possible those computed with the metric. The map of elections is not a single map, but a framework that allows us to create different map variants for different models and parameters. Moreover, the map of elections allows us to better understand both existing statistical cultures and real-life elections. Thanks to the map, it was possible to make many intriguing observations.

\end{abstract}

\tableofcontents

\chapter{Introduction}
\label{ch:introduction}

When talking about elections, most people focus on the election's outcome, or, in other words, on the winners. However, a raw election without a voting rule and without a winner is a very interesting object in itself. It is surprising how much we can say about an election without electing anyone at all. For example, if in an election each voter votes differently, we would say that such an election is very diverse. On the other hand, if in an election all voters vote in exactly the same way (i.e., all votes are identical), then we would say that the voters are in perfect agreement.


Usually by an election we will refer to an object that consists of a set of candidates and a collection of voters that have some preferences over these candidates. In most cases, we assume that each voter strictly ranks all the candidates, from the most to the least appealing one. 

The most popular examples of elections are political elections, such as presidential and parliamentary ones. But, in fact, we do not only deal with elections in politics. Many surveys or sport competitions can be seen as elections as well. However, perhaps in a less intuitive way. To clarify how one can treat a sport competition as an election, let us give an example. Given Formula 1 races from a certain year, we can treat each single race as a single vote, where the position of each candidate in the ballot is the place he or she won in the race. The driver who finishes the race first will be ranked in the first place in the ballot, the second driver will be ranked second, and so on. The total number of votes will be equal to the number of races that took place in a given year. We can understand various other competitions, such as Tour de France or Giro d’Italia as elections in a similar manner.



\section{Map of Elections}
One of the questions that arises naturally is \textit{when are two elections similar?} Or stated the other way around, \textit{when are two elections different?} And if they are different, we might want to know how different they are.  For instance, are presidential elections in Poland and France very different? Or maybe the structure of the elections is similar and only the names on the ballots differ? Let us have a look at the following toy example. We have two elections, each consisting of three voters and three candidates. In the first election, children give their preferences about animals, and in the second one, adults give their preferences about food:
  \begin{align*}
    Veronica \colon &Pig \pref Snail \pref Rabbit, \\
    John  \colon &Snail \pref Pig \pref Rabbit, \\
    Nicholas  \colon &Rabbit \pref Snail \pref Pig. \\
    & \\
    Newman \colon &Risotto \pref Salad \pref Pizza, \\
    Veil  \colon &Pizza \pref Salad \pref Risotto, \\
    Johnson  \colon &Salad \pref Pizza \pref Risotto. \\
  \end{align*}

Are these two elections similar? At first glance, probably not that much. But if we forget about the names of the voters and the names of the alternatives, these two elections become identical. To see this, let us assume that Veronica is Veil, John is Johnson, Nicholas is Newman, Pig is Pizza, Snail is Salad and Rabbit is Risotto. Mathematically speaking, we simply have three preference orders, each of them appearing exactly once. The order of votes is irrelevant:
  \begin{align*}
    (p \pref s \pref r, \ \ s \pref p \pref r, \ \ r \pref s \pref p)
  \end{align*}

To speak more generally, given two elections, the first problem which we will face is verifying whether these elections are isomorphic, that is, if it is possible to rename the candidates and the voters in such a way that these elections become identical. Verifying whether two elections are isomorphic can be done in polynomial time. However, if two elections are not isomorphic, the second problem arises, that is, how to define and compute the distance between them. 

An efficient way of computing distances between elections is important. However, even if we knew that the distance between two elections is equal to five, we still would not know much about these elections. Is five a lot or not?

To solve this problem, we introduce another crucial component of this dissertation, the concept called the  \textit{map of elections}, to which this work owes its title. The idea is as follows. First, we generate numerous elections from various statistical cultures (that is, models that serve for generating random instances of elections). Second, we compute the distances between each pair of elections. Third, we embed these distances in a two-dimensional Euclidean space using an embedding algorithm.  Finally, we obtain a map. Map-representation of elections makes it easier to understand their numerous properties. 

To make it even easier, we mark four characteristic points on our map. First, we have an \textit{identity} election, where all voters agree on a single preference order. Then, we have a \textit{uniformity} election, where the votes are as diverse as possible. Finally, we have \textit{stratification} and \textit{antagonism} elections, the description of which we will omit in the introduction for simplicity (all four points will be described in detail in Chapter~\ref{ch:stat_cult}). We call these points \textit{the compass} because they help us navigate through the map; so when a given point (an election) lands in a certain part of the map, we can say something meaningful about this election. Identity and uniformity are the two most extreme points, representing order versus chaos, respectively. In all of our metrics, the distance between identity and uniformity is the largest possible in the whole space of elections. For example, if the distance between two particular elections is five, but the distance between identity and uniformity is six, then these two elections are far away. However, if the distances between identity and uniformity were fifty, then we can argue that these elections are quite similar.

To give the reader the flavor of what this thesis is about, we present an example of a map of elections in~\Cref{fig:map_introduction_example}. Each dot corresponds to a single election. The closer two particular dots are on the map, the more similar are elections that they represent, and if two dots are of the same color, it means that they come from the same distribution, i.e., statistical culture. How to generate elections from a given model will be described in detail in~\Cref{ch:stat_cult}. Nonetheless, without going into the technical details of particular models, we can see that for most of the models, the elections generated from that model are very similar to each other. However, it is not entirely true for, for example, blue points, which represent the Mallows model---a popular model which we will now briefly describe. The Mallows model is parametrized by a dispersion parameter, which defines the correlation between the votes within an election. The larger is the parameter, the less correlated are the votes. If this parameter is equal to zero, we have an extreme correlation and all votes are identical. When this parameter is equal to one, we witness full chaos and no correlation at all.
Going back to our map in~\Cref{fig:map_introduction_example}, depending on the dispersion parameter, elections generated from the Mallows model (the blue points) can occupy quite different places. If we sample elections from the Mallows model with numerous different values of the dispersion parameter, we obtain what we call a path from one of the extreme points, identity, to another extreme point, uniformity. To conclude, depending on the dispersion parameter, we can generate drastically different elections. Nevertheless, for a fixed parameter, all generated elections will be similar to each other.

\begin{figure}
    \centering
    \includegraphics[width=10cm, trim={0.2cm 0.2cm 0.2cm 0.2cm}, clip]{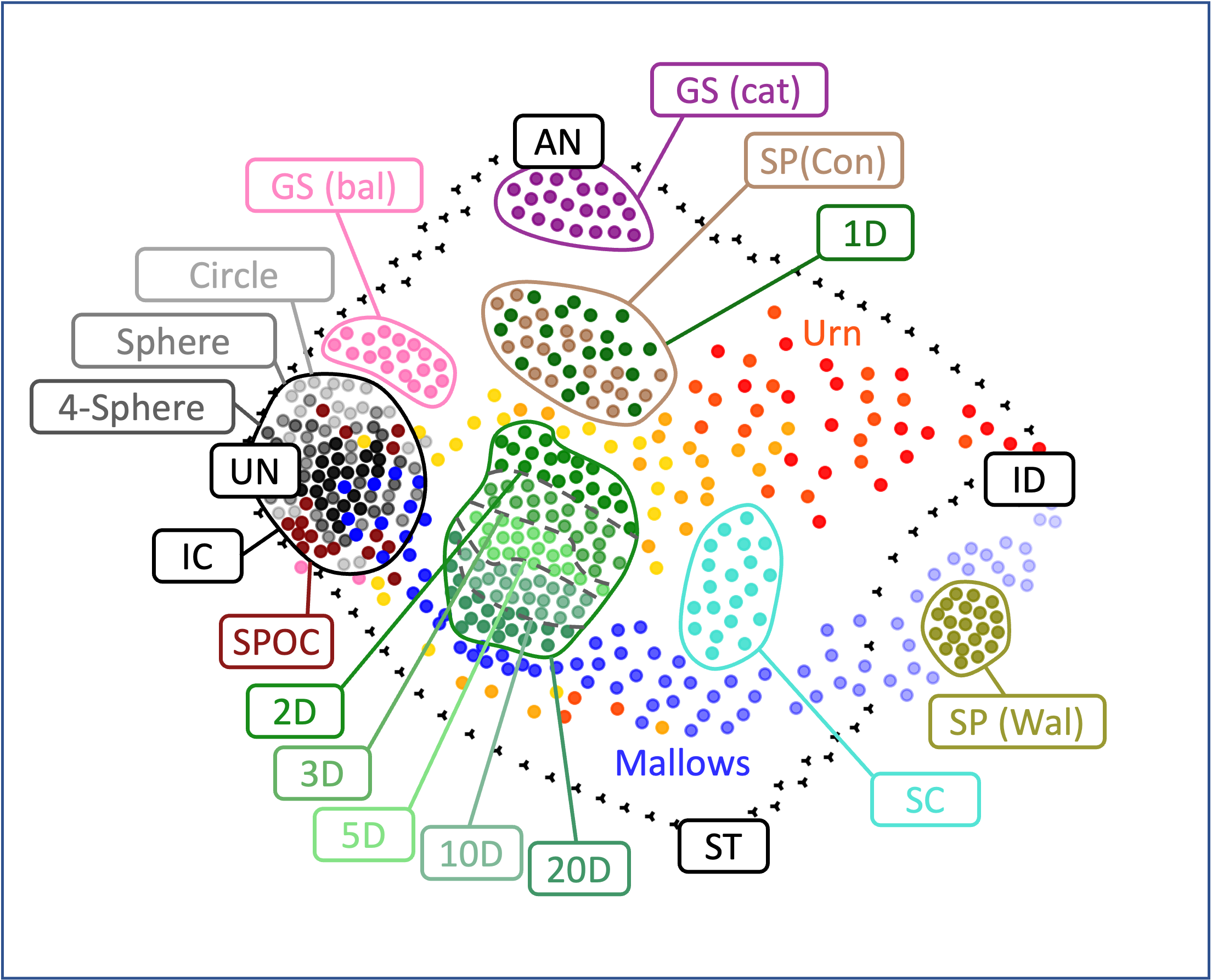}
    \caption{An example of Map of Elections.}
    \label{fig:map_introduction_example}
\end{figure}

Our analyzes of distances between elections started a new line of research within computational social choice, resulting in numerous papers and, hopefully, many more to come. In this dissertation, we focus on ordinal elections, but the \textit{map of elections} framework can be easily generalized to \textit{map of instances}, which can be used for many other types of objects that are studied within computational social choice, such as approval elections (which we discuss in detail in~\Cref{ch:approval}), stable roommates instances, stable marriages instances, participatory budgeting instances, or fair division ones.

\section{Motivation}

Although many papers on computational social choice are theoretical, the number of experimental works is rapidly growing. And there are many questions that can only be answered by experimentation. We start by giving an example related to the Condorcet winner. We call a candidate a Condorcet winner if such a candidate is preferred by more than half of the voters when compared one-to-one with any other candidate. In some elections none of the candidates is a Condorcet winner; however, if such a candidate exists, many people claim that he or she should become an overall winner of the election. We say that a voting rule satisfies the Condorcet winner criterion if whenever a Condorcet winner exists, this rule selects him or her as the winner. From a theoretical point of view, we can divide rules into two groups, those that satisfy the Condorcet winner criterion, and those that do not. Unfortunately, the real world is not black and white. It might be the case that some of the rules that do not satisfy the Condorcet winner criterion, but do not satisfy it due to very few unrealistic instances, on which they fail to select the Condorcet winner. This moves us to the second problem---what does it mean that an instance is unrealistic? It is hard to answer this question in general. But if we speak about particular types of elections, we can try to give an answer. For example, in the context of political elections, we usually have many more voters than candidates, so an instance with $1000$ candidates and $10$ voters probably is not very realistic. Another way of verifying whether a given instance is realistic is by comparing it with real-life data from a given context, e.g., political. Going back to our Condorcet winner criterion, instead of two groups, we rather have a spectrum of rules. And to distinguish between rules that almost always select a Condorcet winner (if such a candidate exists) and those that fail it more frequently, we need experiments.

Another thing that we can only partially describe with raw theory is the time needed to perform particular tasks. For example, the time needed to compute a winner or a winning committee under a certain voting rule.
In~\Cref{fig:time_introduction_example} we present an introductory example of a map of elections where each point's color depicts the time needed to compute the winning committee under the Harmonic-Borda multiwinner voting rule. As we can see, the longest time is needed for instances similar to those from impartial culture, while the shortest time is needed for those similar to identity.
For most of the rules, we know their time-complexity, however, usually it relates only to the worst case. So again, it might be the case that the rule in practice is fast, but due to some unfortunate instances, the worst-case complexity is far from polynomial. It is also interesting to know whether, if two instances of elections are similar, it takes the same amount of time to compute the winners of these elections under the considered voting rule.

The next potential benefit from this thesis is a general improvement on experiments done across the computational social choice. In numerous experiments, people use different models with different parameters that seem to be selected quite arbitrarily. A better understanding of statistical cultures and the nature of elections is crucial for conducting better experiments. 
We believe that the \textit{map of elections} framework, proposed by us, can help in choosing synthetically generated elections to use in experiments when evaluating a given voting rule
or a social choice phenomenon.

\begin{figure}[t]
    \centering
    \includegraphics[width=10cm, trim={0.2cm 0.2cm 0.2cm 0.2cm}, clip]{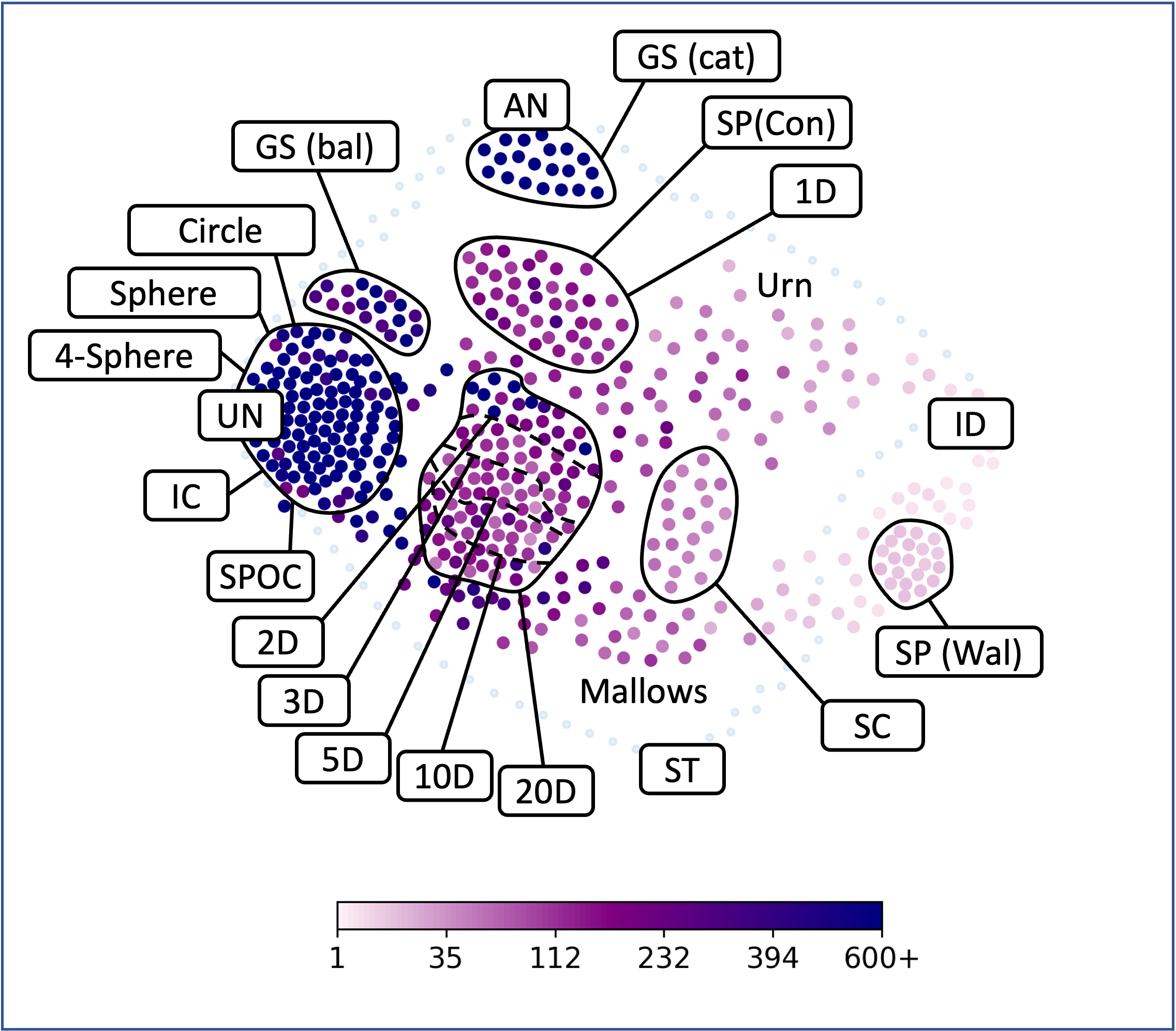}
    \caption{ILPs runtime (in seconds) for Harmonic-Borda voting rule.}
    \label{fig:time_introduction_example}
\end{figure}

There are many statistical cultures, for example, the Mallows model, the urn model, or, the impartial culture and there are many questions worth asking here. First, it would be valuable to know how different from each other the elections generated from a given model are. Next, how different are the statistical models from each other. For parameterized models, it would also be important to know how their parameters influence them, and which ranges of parameters correspond to realistic instances.

Another motivation regards real-life elections. So, one way of getting data is by generating it according to a certain statistical model. This gives us flexibility in selecting arbitrarily the number of candidates and the number of voters. However, it is also very interesting to analyze real-life data. How do real-life elections relate to synthetic data? Are real-life elections similar to each other?

Finally, our analysis will help us better understand the space of elections in itself and will tell us how different two elections can be.
In the following, we briefly describe the structure of the dissertation.
\section{Structure}
 
We start by describing statistical cultures for sampling ordinal elections and provide some insight into the inner structure of such elections.
Next, we focus on numerous distances between elections. First, we introduce isomorphic distances (that is, distances under which only isomorphic elections are at distances zero). Then, we move on to nonisomorphic distances.
Later, we evaluate the map of elections framework, and study its potential applications.
After that, we discuss the distances between elections of different sizes.
Finally, we focus on elections with approval ballots and present maps of approval elections. Below we briefly describe the content of each chapter one by one.

\begin{description}
    
    \item[Preliminaries.] Introduction of basic definitions and notation. 

    \item[Statistical Cultures.] We provide a description of statistical cultures known in the literature, and how to sample elections from these cultures. Next, we present \emph{maps of preferences}, where we look at relations between votes within a single election. Informally speaking, it is a microscope view of an election, giving us insight into the structure of the votes.
    

    \item[Distances.] In the first part of this chapter, we define the \textsc{Election Isomorphism} problem and introduce three isomorphic distance. We say that a distance is isomorphic if, for any two elections that are not isomorphic, the distance between them is larger than zero. These three metrics are the Swap distance, the Spearman distance, and the discrete distance. Unfortunately, only the discrete distance can be computed in polynomial time. As for the Swap and Spearman distances, the complexity mostly comes from the fact that we have to find optimal matchings of voters and of candidates at the same time. What is surprising is that, for Swap distance, even if the voters' matching is given, the problem remains NP-hard. 
    
    In the second part, we introduce the nonisomorphic positionwise distance, which can be computed in polynomial time. We argue that this particular distance is very practical. Although it is a pseudometric, and it is losing some precision when compared to, e.g., the Swap distance, it is much faster to compute and still carries a lot of information. Within this chapter, we also discuss two other nonisomorphic distances, the pairwise distance and the Bordawise distance. Finally, we compare all isomorphic and nonisomorphic distances.
    
    Moreover, throughout the chapter we present various maps of elections---one for each metric. 
    
        \item[Applications.]
         We focus on practical applications of the \textit{map of elections} framework. We consider several embedding algorithms (i.e., ways of putting a set of points in a low-dimensional Euclidean space)  and evaluate their performance. We focus on Fruchterman-Reingold
         force-directed algorithm, a novel variant of Kamada-Kawai algorithm, Multi Dimensional Scaling and a few others. Next, we evaluate popular voting rules for ordinal elections. First, we focus on single-winner voting rules such as Plurality, Borda, Copeland, and Dodgson. All these rules assign a certain score to each candidate and the candidate with the highest score is declared a winner. For each election, we can compute such highest score, and then color the map proportionally to that score, that is, color each point on the map proportionally to the highest score in the election that that point depicts. We also analyze multiwinner voting rules such as Chamberlin--Courant and Harmonic-Borda. Both these rules assign a certain score to each committee, and the committee with the highest score is declared as winning. So, again, we can color the map, however, this time we color it proportionally to the score of the best committee. Besides coloring the map by a score, we also color it by the runtime of the algorithm for a given voting rule. Such time-focused coloring gives an insight into which types of elections are harder (i.e., take more time) and which ones are easier (i.e., take less time) to compute. Then, we analyze real-life instances of elections. In particular, we focus on the data provided within PrefLib---a popular \textit{preference library} that contains various real-life datasets. We study such instances as political elections in Dublin, Glasgow, and Aspen; voting of Electoral Reform Society; surveys about different types of sushi and about pictures on T-Shirts; numerous sport competitions and many others. Finally, we consider a \emph{skeleton map}---a special type of a map of elections, which can be computed analytically.
    
    \item[Subelections.] For the classical \textsc{Election Isomorphism} problem, we always consider elections of the same size, that is, with the same number of voters and the same number of candidates in both elections. In this chapter, we introduce \textsc{Subelection Isomorphism}, where we relax the assumption about the sizes of elections. In the \textsc{Subelection Isomorphism} problem we are given two elections, a smaller and a larger one, and we ask if it is possible to remove some candidates and voters from the larger election so that it becomes isomorphic to the smaller one. 
    
    Moreover, we consider a family of \textsc{Maximum Common Subelection} problems, where given two elections we ask for the largest election, which is a subelection of both given elections at the same time. First, we provide the computational complexity for all variants, and later we present several experimental results on both synthetic and real-life data.
    
    \item[Approval Elections.]
    We consider approval elections, where instead of ranking all the candidates, voters approve subsets of them. In other words, each voter partition all the candidates into two sets, those that he or she approves and those that he or she does not. As for ordinal elections, we introduce distances between such elections (the isomorphic Hamming distance and the nonisomorphic approvalwise distance). We present several novel statistical cultures, and argue why we recommend using them. Again, we show maps of preferences. Finally, we present maps of approval elections and conduct experiment such as, for example, analysis of cohesiveness level or behavior of voting rules.
    
    \item[Summary.]
    In the last chapter, we recapitulate the main contributions of this dissertation, and show directions for possible extensions and future work.
    
\end{description}

\section{Conference Publications}

Most of the results presented in this dissertation have already been presented at various conferences. In the following, we attach the list of publications chronologically, by the date of publication, on which this thesis is based. All the results that are included in the thesis are due to Stanisław Szufa.

\begin{enumerate}
    \item \emph{How Similar Are Two Elections?} \\ Piotr Faliszewski, Piotr Skowron, Arkadii Slinko, \textbf{Stanisław Szufa}, Nimrod Talmon (AAAI-2019).
    \item \emph{Drawing a Map of Elections in the Space of Statistical Cultures} \\ \textbf{Stanisław Szufa}, Piotr Faliszewski, Piotr Skowron, Arkadii Slinko, Nimrod Talmon (AAMAS-2020).
    \item \emph{Putting a Compass on the Map of Elections} \\ Niclas Boehmer, Robert Bredereck, Piotr Faliszewski, Rolf Niedermeier, \textbf{Stanisław Szufa} (IJCAI-2021).
    \item \emph{The Complexity of Subelection Isomorphism Problems} \\ Piotr Faliszewski, Krzysztof Sornat, \textbf{Stanisław Szufa} (AAAI-2022).
    \item \emph{Understanding Distance Measures Among Elections} \\ Niclas Boehmer, Piotr Faliszewski, Rolf Niedermeier, \textbf{Stanisław Szufa}, Tomasz W\c{a}s (IJCAI-2022).
    \item \emph{How to Sample Approval Elections?} \\ \textbf{Stanisław Szufa}, Piotr Faliszewski, Łukasz Janeczko, Martin Lackner, Arkadii Slinko, Krzysztof Sornat, Nimrod Talmon (IJCAI-2022).
    \item \emph{Expected Frequency Matrices of Elections: Computation, Geometry, and Preference Learning} \\ Niclas Boehmer, Robert Bredereck, Edith Elkind, Piotr Faliszewski, \textbf{Stanisław Szufa} (NeurIPS-2022).
\end{enumerate}

\new{
Below we list the results from the thesis that are not included in any of the publications described above.
\begin{itemize}
    \item Maps of Ordinal Preferences~(\Cref{ordinal_map_pref}).
    \item Evaluation of different embedding algorithms (i.e., analysis of distortion and monotonicity; \Cref{ch:applications:sec:embedding}).
    \item Comparison of the performerce of different voting rules~(\Cref{ch:applications:sec:rules}).\footnote{Minor results were also published in the work of~\cite{szu-fal-sko-sli-tal:c:map}}
    \item Experiments on real-life data in the context of subeletions~(\Cref{real-life-sub-exp})
    \item Maps of Approval Preferences~(\Cref{approval_map_pref}).
\end{itemize}
}

\new{
As to the connections between the chapters and the papers, they are as follows.
\begin{itemize}
    \item \Cref{ch:distances} is based on the papers \emph{How Similar Are Two Elections?}~\citep{fal-sko-sli-szu-tal:c:isomorphism}, \emph{Drawing a Map of Elections in the Space of Statistical Cultures}~\citep{szu-fal-sko-sli-tal:c:map}, \emph{Putting a Compass on the Map of Elections}~\citep{boe-bre-fal-nie-szu:c:compass}, \emph{Understanding Distance Measures Among Elections}~\citep{boe-fal-nie-szu-was:c:understanding}. 
    \item \Cref{ch:applications} is based on the papers \emph{Drawing a Map of Elections in the Space of Statistical Cultures}~\citep{szu-fal-sko-sli-tal:c:map} and \emph{Expected Frequency Matrices of Elections: Computation, Geometry, and Preference Learning}~\citep{boehmer2022expected}.
    \item \Cref{ch:subelections} is based on the paper \emph{The Complexity of Subelection Isomorphism Problems}~\citep{faliszewski2022complexity}.
    \item \Cref{ch:approval} is based on the paper \emph{How to Sample Approval Elections?}~\citep{SFJLSS22}.
\end{itemize}
}

\subsection*{Acknowledgements}
The research presented in this dissertation was supported by the National Science Centre, Poland (NCN) grant No 2018/29/N/ST6/01303 and by European Research Council (ERC) under the European Union’s Horizon 2020 research and innovation programme (grant agreement No 101002854).

\begin{center}
    \includegraphics[width=5cm]{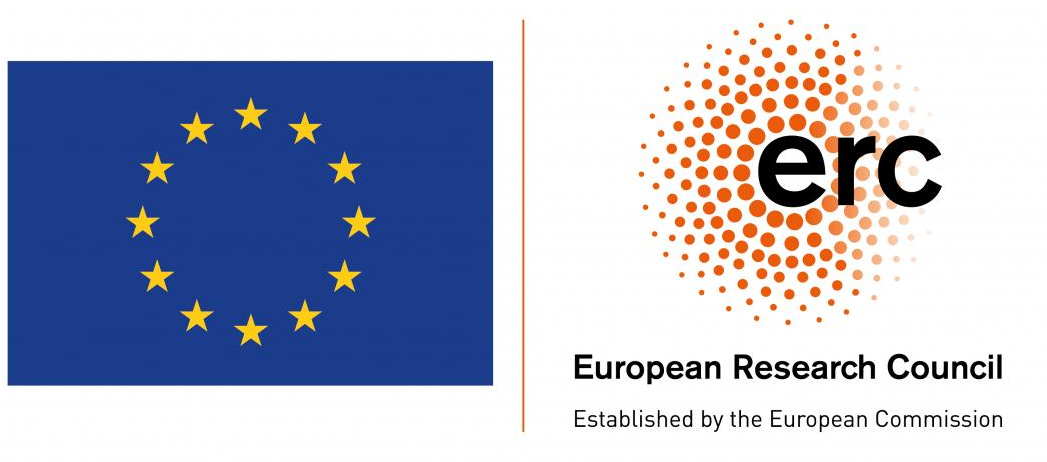}
\end{center}

\chapter{Preliminaries}
\label{ch:preliminaries}


In this chapter, we describe the basic concepts and notation. We define basic metrics between vectors and basic metrics between votes, which will serve us for computing distances between elections, to be introduced in \Cref{ch:distances}. We briefly describe six different embedding algorithms. Finally, we explain the concept of a~\emph{map}.


For a given positive integer~$t$, we write~$[t]$ to denote the set~$\{1,2,\dots,t\}$, and we write~$[t]_0$ as an abbreviation for~$[t] \cup \{0\}$. By~$ \mathbb{R}_+$ we denote the set of nonnegative real numbers.
By~$S_n$ we mean the set of all permutations over~$[n]$. Given two equal-sized sets~$A$ and~$B$, by~$\Pi(A,B)$ we denote the set of all one-to-one mappings from A to B.
For a vector~$x$,~$\overline{x}$ denotes the arithmetic average of
the values from~$x$.



\section{Elections}

An election~$E = (C,V)$ consists of a set of candidates~$C = \{c_1,\ldots, c_m\}$ and a collection of voters~$V = (v_1, \ldots, v_n)$, where each voter~$v$ has a \emph{preference
order} (sometimes referred to as a \emph{vote}), also denoted as~$v$ (the exact meaning will always be clear
from the context, and this convention will simplify our discussions).
We write~$\calL(C)$ to denote the set of all preference orders over~$C$. Every subset~$\calD$ of~$\calL(C)$ is called a domain (of
preference orders over~$C$) and, in particular,~$\calL(C)$ itself is
the general domain. 
The preference orders 
always come from some domain~$\calD$ (the general domain, unless stated otherwise).
Given two candidates~$c_i, c_j \in C$, we write~$c_i \pref_v c_j$ (or, equivalently,~$v \colon c_i \pref c_j$) to denote that voter~$v$ prefers~$c_i$ to~$c_j$. We extend this notation to more than two candidates in a
natural way. For example, we write~$v \colon c_1 \pref c_2 \pref \cdots \pref c_m$ to indicate that voter~$v$ likes~$c_1$ best, then~$c_2$, and so on, until~$c_m$. If we put
some set~$S$ of candidates in such a description of a preference
order, then we mean listing its members in some arbitrary (but fixed, global)
order. Including~$\overleftarrow{S}$ means listing the members of~$S$
in the reverse order.



Consider two sets of candidates,~$C$ and~$D$, of the same
cardinality. 
Let~$\sigma$ be a bijection from~$C$ to~$D$.  We extend~$\sigma$ to act on preference orders~$v$ in~$\mathcal{L}(C)$ in the
natural way:~$\sigma(v)\in \mathcal{L}(D)$
is the preference
order such that for each~$c, c' \in C$ it holds that~$v \colon c \pref c' \iff \sigma(v) \colon \sigma(c) \pref
\sigma(c')$.

For an election~$E = (C,V)$, where~$V = (v_1, \ldots, v_n)$, a
candidate set~$D$, and a bijection~$\sigma$ from~$C$ to~$D$, by~$\sigma(E)$ we mean election with candidate set~$D$ and voter
collection~$(\sigma(v_1), \ldots, \sigma(v_n))$.
Similarly, given a permutation~$\pi \in S_n$, by~$\pi(V)$ we mean~$(v_{\pi(1)}, \ldots, v_{\pi(n)})$.

\section{Distances}

Formally, for a set~$X$ a function~$d \colon X \times X \rightarrow \mathbb{R}_+ \cup \{0\}$ is a metric if for each~$x, y, z \in X$ it holds that:

\begin{enumerate}
    \item~$d(x,y) = 0$ if and only if~$x = y$,
    \item~$d(x,y) = d(y,x)$,
    \item~$d(x,z) \leq d(x,y) + d(y,z)$.
\end{enumerate}    

\noindent
A~pseudometric relaxes the first condition
to the requirement that~$d(x,x) = 0$ for each~$x \in X$. In
particular, for a pseudometric~$d$ it is possible that~$d(x,y) = 0$
when~$x \neq y$.

\subsubsection{Distances Between Vectors}
For some metrics, as an intermediate step, we will be computing the distances between vectors of real numbers.

Let~$x = (x_1, \ldots, x_n)$ and~$y = (y_1, \ldots, y_n)$ be
two real-valued vectors. Then 
%
%
for a given~$p\in \mathbb{R}$, their~$\ell_p$-distance is 
$\ell_p(x,y) = (|x_1-y_1|^p + \cdots + |x_n-y_n|^p)^{\nicefrac{1}{p}}$.

Given two real-valued vectors $x$ and
$y$, we write $\EMD(x, y)$
to denote the \emph{earth mover's distance} (EMD) between them. Intuitively,
this is the minimal cost of turning $x$ into $y$, where the cost of
moving a value $\Delta$ from position $i$ to position $j$ in the
vector is $\Delta \cdot |i - j|$. Our EMD
distance can be computed using a well-known greedy polynomial-time
algorithm.

Given a real-valued vector~$z = (z_1, \ldots, z_n)$, we
write~$\hat{z}$ to denote its prefix-sum variant, i.e., an~$n$-dimensional vector such that for each~$i \in [n]$, its~$i$-th
entry is~$\hat{z}_i = z_1 + z_2 + \cdots +  z_i$.
If the entries of~$x$ and~$y$ sum up to the same value and contain only nonnegative
entries, then their \emph{earth mover's distance} alternatively, can be defined as:
\[
  \EMD(x,y) = \ell_1(\hat{x},\hat{y}).
\]
%
%
Both definitions, presented above, are equivalent (\cite{rubner2000earth}).


\subsubsection{Distances Between Votes}
\label{ch:preliminaries:dist_between_votes}
We focus on the following three distances between preference orders
(below, let~$C$ be a set of candidates and let~$u$ and~$v$ be two
preference orders from~$\calL(C)$):

\begin{description}\label{swapSpearDef}
\item[Discrete Distance.]  The discrete distance between~$u$ and~$v$,~$d_\discrete(u,v)$, is~$0$ when~$u$ and~$v$ coincide and
  is~$1$ otherwise.

\item[Swap Distance.] The swap distance between~$u$ and~$v$ (also known as Kendall's Tau distance in statistics),
  denoted~$d_\swap(u,v)$, is the smallest number of swaps of
  consecutive candidates that need to be performed within~$u$ to
  transform it into~$v$.

\item[Spearman Distance.] The Spearman's distance (also known as the
  Spearman's footrule or the displacement distance) measures the total
  displacement of candidates in~$u$ relative to their positions in~$v$. Formally, it is defined as:
\[
   d_\spearman(u,v) = \sum_{c\in C} |\text{pos}_v(c)-\text{pos}_u(c)|.
\]
\end{description}

\begin{example}\label{ex:2}
  Consider an election~$E = (C,V)$, where~$C = \{a,b,c,d,e\}$,~$V = (u,v)$, and the votes are:
  \begin{align*}
    \small
    u\colon&  a \pref b \pref c \pref d \pref e, \\
    v\colon&  b \pref a \pref e \pref c \pref d.
  \end{align*}
  Then,~$d_\discrete(u,v)=1$, because the votes are not identical;~$d_\swap(u,v)=3$, as to transform~$u$ into~$v$ we can first swap~$(a,b)$, then~$(d,e)$, and finally~$(c,e)$; and~$d_\spearman(u,v)=1+1+1+1+2=6$;~$1$ from candidate~$a$ because~$|pos_u(a)-pos_v(a)|=1$, also~$1$ from candidate~$b$ because~$|pos_u(b)-pos_v(b)|=1$, similarly~$1$ from~$c$, and~$1$ from~$d$, and finally~$2$ from~$e$ because~$|pos_u(e)-pos_v(e)|=2$.
\end{example}

We only consider distances over preference orders that are defined for
all sets of candidates (as is the case for~$d_\discrete$,~$d_\swap$,
and~$d_\spearman$). For an in-depth discussion regarding distances
between elections, we point 
to the literature on distance rationalizability of voting
rules~\citep{nit:j:closeness,mes-nur:b:distance-realizability,elk-fal-sli:j:dr}
and, in particular, to the survey of~\cite{elk-sli:b:rationalization}.

\section{Correlation}

The Pearson Correlation Coefficient (PCC) measures the level of linear correlation between two random variables
and takes values between~$-1$ and~$1$. Its absolute value gives the
level of correlation, and the sign indicates positive or negative
correlation. For two vectors~$x = (x_1, \ldots, x_t)$ and~$y = (y_1, \ldots, y_t)$,
their PCC is defined as:
\[\textstyle
     \mathrm{PCC}(x,y) = \frac{\sum_{i=1}^t(x_i-\overline{x})(y_i-\overline{y})} {\left({\sum_{i=1}^t(x_i-\overline{x})^2 \sum_{i=1}^t(y_i-\overline{y})^2}\right)^{\nicefrac{1}{2}}}. 
\]

\noindent


\noindent
In~\Cref{fig:pcc} we present the examples of the PCC values. For values close to $0$ we do not have correlation at all. For values close to $1$ ($-1$) we have strong positive (negative) correlation.

We use PCC as it is one of the standard, well-known ways of calculating the correlation.

\begin{figure}[t]
    \centering

    \begin{subfigure}[b]{0.19\textwidth}
        \centering
        \includegraphics[width=3cm]{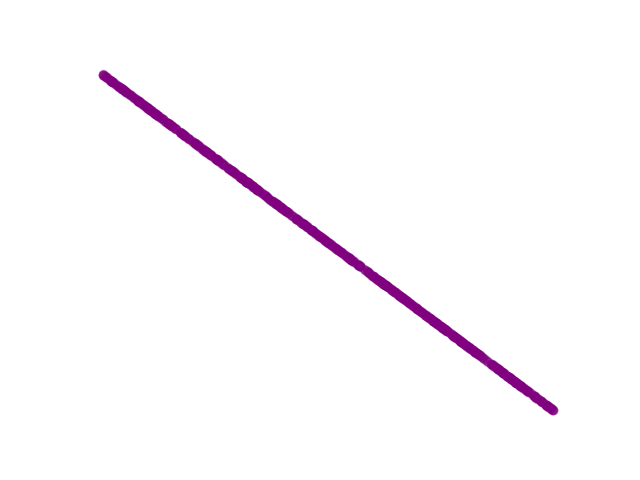}
        \caption{PCC = $-1$}
    \end{subfigure}
    \begin{subfigure}[b]{0.19\textwidth}
        \centering
        \includegraphics[width=3cm]{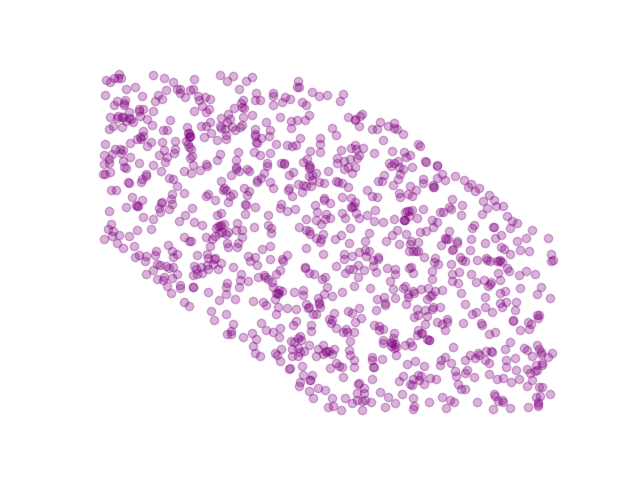}
        \caption{PCC = $-0.5$}
    \end{subfigure}
        \begin{subfigure}[b]{0.19\textwidth}
        \centering
        \includegraphics[width=3cm]{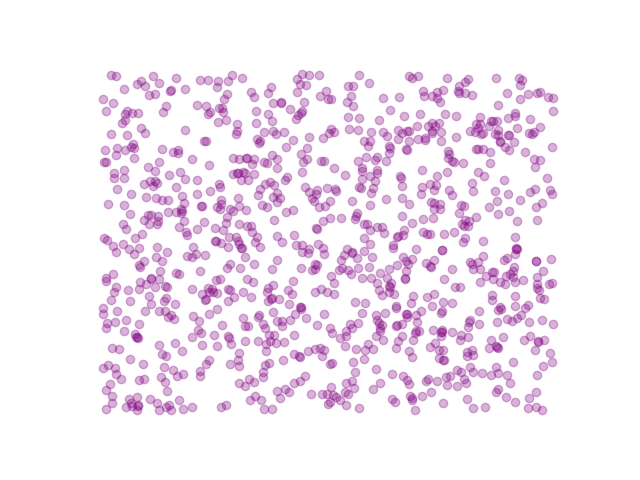}
        \caption{PCC = $0$}
    \end{subfigure}
        \begin{subfigure}[b]{0.19\textwidth}
        \centering
        \includegraphics[width=3cm]{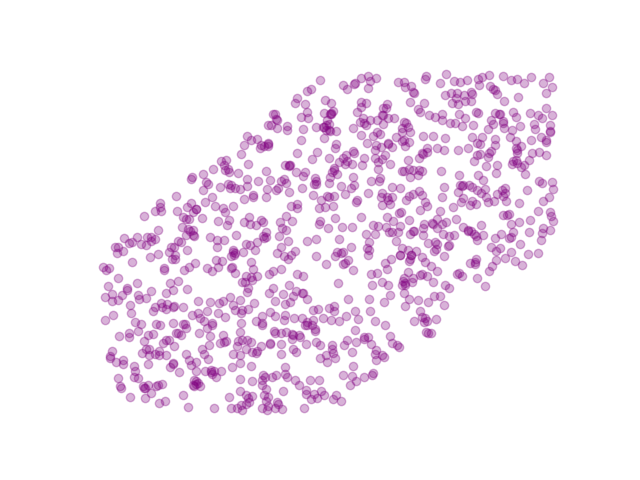}
        \caption{PCC = $0.5$}
    \end{subfigure}
        \begin{subfigure}[b]{0.19\textwidth}
        \centering
        \includegraphics[width=3cm]{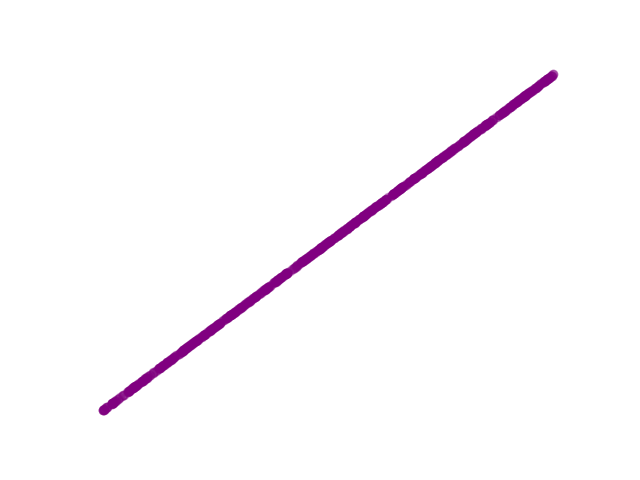}
        \caption{PCC = $1$}
    \end{subfigure}
    
    \caption{Examples of PCC values. In each picture we visualize two vectors $(x_1,\dots,x_n)$ and $(y_1,\dots,y_n)$ as points $(p_1,\dots,p_n)$, where each point $p_i$ has coordinates $x_i, y_i$.}
    \label{fig:pcc}
\end{figure}


\section{Integer Linear Programming}

Integer Linear Programming is a way of solving optimization problems, where the original problem is being represented as a linear objective function and a set of constraints expressed as linear inequalities, where all variables are integers. ILP algorithms are particularly useful when the considered problem is NP-hard.


In our experiments, we use two popular (and free under academic license) ILP solvers. One provided by \emph{Gurobi Optimiztion} and the other one provided by \emph{IBM ILOG CPLEX Optimization Studio}.

In some experiments, we will focus on the running time of certain algorithms (e.g., the running time needed to computing the winning committee under given multi-winner voting rule). Whenever we discuss the running time of a particular algorithm, we assume that the computation for a single instance was run with CPLEX on a single thread (Intel(R) Xeon(R) Platinum 8280 CPU @ 2.70GH) of a 448 thread machine with 6TB of RAM, with exception for experiments done in~\ref{ch:subelections}, which were performed on a single thread on Apple MacBook Air with M1 processor and 8 GB RAM.

\section{Embeddings}\label{desc:embed}
Sometimes, given a set of points and their distance matrix (i.e., square matrix containing all pairwise distances between points), we want to embed these points in a low (i.e., two or three) dimensional space. To do this, we can use a wide variety of techniques. We use the following six methods (which we briefly describe below): 
\emph{Principal Component Analysis}\footnote{We use Python implementation from \emph{sklearn.decomposition.PCA} package.}~\citep{minka2000automatic},
\emph{(metric) Multidimensional Scaling}\footnote{We use Python implementation from \emph{sklearn.manifold.MDS} package.}~\citep{kruskal1964multidimensional,de2005modern}, 
\emph{t-Distributed Stochastic Neighbor Embedding}\footnote{We use Python implementation from 
\emph{sklearn.manifold.TNSE} package.}~\citep{van2008visualizing,van2010fast}, 
\emph{Locally Linear Embedding}\footnote{We use Python implementation from \emph{sklearn.manifold.LocallyLinearEmbedding} package.}~\citep{donoho2003hessian,zhang2006mlle}, 
\emph{Fruchterman-Reingold}\footnote{We use Python implementation from \emph{networkx.spring\_layout} package.}~\citep{fruchterman1991graph}, 
and \emph{Kamada-Kawai}\footnote{We use Python implementation from \emph{mapel.core} package}~\citep{kamada1989algorithm,mt:sapala}:

\begin{description}
\item [Principal Component Analysis (PCA)] is a linear dimensionality reduction algorithm that aims at extracting crucial information from a high-dimensional space. It is based on eigenvalues and eigenvectors of the distance matrix. 
\item [(metric) Multidimensional Scaling (MDS)] unlike PCA, is a nonlinear dimensionality reduction method. It mainly focuses on maintaining the original distances, by minimizing the stress function, where the stress function is the square root of the normalized squared misrepresentations (i.e., differences between original distances and Euclidean distances after the embedding).
\item [t-Distributed Stochastic Neighbor Embedding (t-SNE)] is a statistical method based on Kullback–Leibler divergence. It may not properly preserve densities or distances. Like MDS, it is nonlinear.
\item [Locally Linear Embedding (LLE)] is also a nonlinear dimensionality reduction method. While embedding the points, instead of trying to maintain properly the distances between all of them, it is only focusing on maintaining the distances between points that are close (i.e., the original distance between them is small) to each other.
\item [Fruchterman-Reingold (FR)] is a force-directed graph drawing algorithm (i.e., it aims at drawing the graphs in a pleasant and appealing way). It works in analogy to physical springs as edges between points. It uses both attracting and repulsing forces between points. The aim of the method is to draw an appealing graph. It tries to distribute all the points more or less evenly across the given space.
\item [Kamada-Kawai (KK){\normalfont,}] like FR, is a force-directed graph drawing algorithm. The difference between FR and KK is that FR focuses more on producing a~pleasant picture, while KK focuses more on maintaining proper distances. We use a variant of KK proposed by \cite{mt:sapala} that lowers the probability of the result being stuck in the local minima.
\end{description}
One of the main disadvantages of force-directed algorithms is that they are slower compared to the other methods described above. We provide a detailed comparison of the embedding algorithms in~\Cref{ch:applications:sec:embedding}.

\section{Map of Objects}
Given a set of objects (for example, a set of elections or a set of votes), by a \emph{map} of these objects, we refer to a two-dimensional graphical representation of that set. To create such a map, we first compute distances between each pair of objects, and then, based on these distances, we create a two-dimensional embedding, where each point depicts a single object. We expect similar objects to be embedded close to one another.
    Note that objects might be located in a high-dimensional non-Euclidean space, hence, it will not always be possible to embed properly all the points, i.e., maintain all the distances.
    
We briefly discuss a toy example of such a map. Let us assume that we have five items named~$a, b, c, d,$ and~$e$, and the distance matrix shown in \Cref{fig:toy}a. Looking at the matrix, we expect~$a, b,$ and~$c$ to be located relatively close to each other and to form more or less a triangle, and we expect~$d$ and~$e$ to be located even closer to each other (because the distance between them is the smallest one in the whole matrix). Moreover, we expect that the~$abc$ triangle would be rotated in such a way that~$c$ would be pointing towards~$de$. That is exactly what we observe in the embedding presented in \Cref{fig:toy}b, and that is our map.

\begin{figure}[t]

\begin{subfigure}[b]{0.49\columnwidth}
 \centering

        \begin{tabular}{c|ccccc}
          \toprule
               &~$a$ &~$b$ &~$c$ &~$d$ &~$e$  \\
          \midrule
         ~$a$  &~$-$ &~$2$ &~$2$ &~$4$ &~$4$    \\
         ~$b$  &~$2$ &~$-$ &~$2$ &~$4$ &~$4$    \\
         ~$c$  &~$2$ &~$2$ &~$-$ &~$3$ &~$3$    \\
         ~$d$  &~$4$ &~$4$ &~$3$ &~$-$ &~$1$    \\
         ~$e$  &~$4$ &~$4$ &~$3$ &~$1$ &~$-$    \\
          \bottomrule
        \end{tabular}
        \vspace{0.45cm}
    \caption{Distance Matrix}
    \end{subfigure}
    \begin{subfigure}[b]{0.49\columnwidth}
            \centering
            \includegraphics[width=4.5cm]{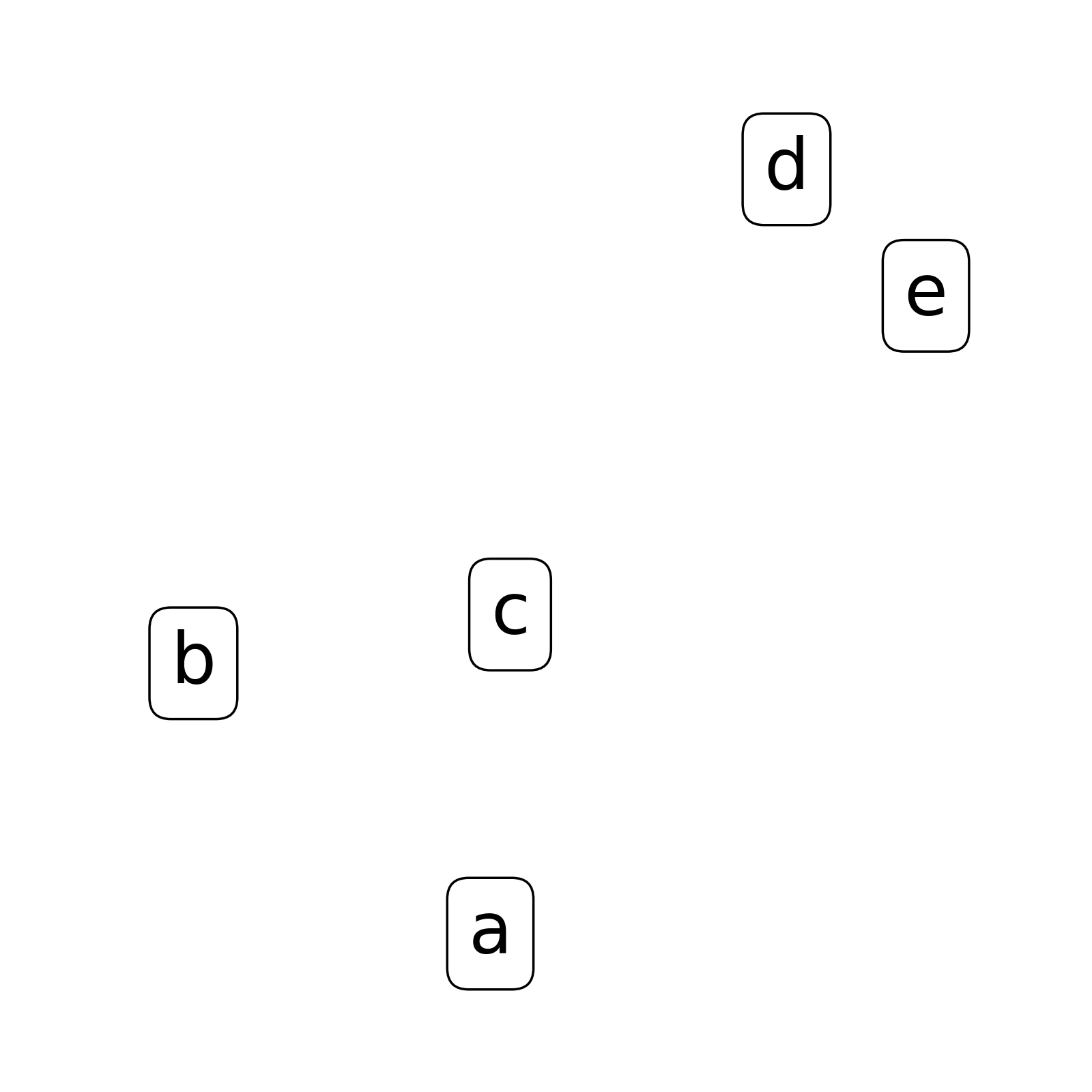}
          
    \caption{Map}        
    \end{subfigure}
    
      \caption{Toy example with matrix distances (left) and the map (right).}
\label{fig:toy}

\end{figure}

\chapter{Statistical Cultures}
\label{ch:stat_cult}

In this chapter we describe the statistical cultures that we use for generating instances of elections. First, we describe general models, and then we move on to structured domains. At the end of the chapter, we present maps of preferences, a simple yet interesting visualization of instances generated from various models described below. Maps of preferences help in understanding the structure of the votes and show (dis)similarities between different models.

\section{General Models}

Below, we define the most popular general\footnote{By general we mean that any election can be sampled from such statistical culture.} statistical cultures that we will use in this dissertation. When discussing elections, by $m$ we denote the number of candidates, and by $n$ we denote the number of voters.

\subsubsection{Impartial Culture and Related Models}
Under the impartial culture (IC) model, every preference order appears
with the same probability. That is, to generate a vote, we choose a
preference order uniformly at random.

Under the impartial anonymous culture (IAC) model, we require that each
\emph{voting situation} appears with the same probability~\citep{iac1,iac2}. A voting
situation specifies how many votes with a given preference order are
present in a profile; thus, IAC generates anonymized
preference profiles uniformly at random.

The impartial anonymous neutral culture (IANC) additionally abstracts away from the names of the
candidates~\citep{ege-gir:j:isomorphism-ianc}. This means that for a given numbers of candidates and voters, the number of different IANC elections is equal to the number of equivalence classes under any isomorphic distance, such as, for example, the swap distance.

\subsubsection{Pólya-Eggenberger Urn Model} The Pólya-Eggenberger
urn model \citep{berg1985paradox,mcc-sli:j:similarity-rules} is parametrized with a nonnegative number~$\alpha$, the level of contagion, and
proceeds as follows:
  Initially, we have an urn with one copy of each
of the~$m!$ possible preference orders. To generate a vote, we draw a
preference order from the urn uniformly at random (this is
the generated vote) and return it to the urn together with
additional~$\alpha m!$ copies. The larger~$\alpha$ is,
the more correlated are the generated votes.
For~$\alpha = 0$, the model is equivalent to IC, for~$\alpha = \nicefrac{1}{m!}$, 
it is equivalent to IAC, and for~$\alpha = \infty$
all votes are identical.
%

\subsubsection{Mallows Model} The Mallows
model \citep{mal:j:mallows} is parameterized by a \emph{ dispersion parameter}~$\phi \in
[0,1]$ and a center preference order~$v$ (we choose it uniformly at
random and then use it for all generated votes). We generate each vote
independently at random, where the probability of generating vote~$u$
is proportional to~$\phi^{d_\swap(v,u)}$. For~$\phi = 1$, the model is
equivalent to IC, while for~$\phi = 0$ all generated
votes are identical to the center vote~$v$.  See the work of Lu and
Boutilier for an effective algorithm for sampling from the Mallows
model~\citep{lu-bou:j:sampling-mallows}.

In our experiments, we consider a new parameterization
introduced by \cite{boe-bre-fal-nie-szu:c:compass}. It uses a \emph{normalized dispersion parameter}~$\normphi$, which is converted to a value of~$\phi$ so that the expected swap distance
between the central vote~$v^*$ and a sampled vote~$v$ is~$\frac{\normphi}{2}$ 
times the maximum swap distance between two votes. We refer to Mallows model with normalized dispersion parameter as Normalized Mallows (Norm-Mallows) model.


Besides the basic Norm-Mallows model, we consider a combination of two Norm-Mallows models, where~$\omega \in (0, 0.5]$ fraction of the votes are reversed, i.e., after sampling all the votes from the basic Norm-Mallows model, we reverse the first
$\lfloor \omega n \rfloor$ of them\footnote{It is equivalent to sampling $\lfloor \omega n \rfloor$ votes from the basic Norm-Mallows model, and then sampling the rest of the votes from the same Norm-Mallows model but with a reversed central ballot.}.
We refer to this variant as the weighted Norm-Mallows model. 

\section{Structured Domains}

In this section we focus on structured domains. We describe several properties of elections such as single-peakedness, single-crossingness, and group-separability, and discuss how to sample elections having such properties. Moreover, we study the Euclidean-based models.

\subsection{Single-Peaked Elections}

\emph{Single-peaked} preferences, introduced by~\cite{bla:b:polsci:committees-elections}, capture settings where it is possible to order the candidates in such a way
that as we move along this order, each voter's appreciation of the candidates first increases and then decreases. One typical
example of such an order is the classic left-to-right spectrum of
political opinions.



\begin{definition}
  Let~$v$ be a vote over~$C$ and let~$\lhd$ be the societal axis over~$C$. 
  We say that~$v$ is {\em single-peaked with respect to~$\lhd$}
  if for every~$t \in [|C|]$ its~$t$ top-ranked candidates form an
  interval within~$\lhd$.  An election is {\em single-peaked with
    respect to~$\lhd$} if all its votes are. An election is {\em
    single-peaked (SP)} if it is single-peaked with respect to some
  axis.
\end{definition}





\begin{example}
  Consider an election with the set of candidates~$C = \{a,b,c,d,e\}$ and votes:
 \begin{align*}
  v_1& \colon a \pref b \pref c \pref d \pref e, \\
  v_2& \colon e \pref d \pref c \pref b \pref a, \\
  v_3& \colon b \pref c \pref a \pref d \pref e. \\
\end{align*}
  This election is single-peaking with respect to axis~$a,b,c,d,e$. Moreover, it is a unique axis with respect to which this election is single-peaked.
\end{example}

We also consider the {\em single-peaked on a circle domain (SPOC)},
introduced by \cite{pet-lac:j:spoc}.  A vote is {\em SPOC with respect to
an axis~$c_1 \lhd \cdots \lhd c_m$} if it is single-peaked with respect
to some axis of the form:
\[ c_i \lhd c_{i+1} \lhd \cdots \lhd c_m \lhd c_1 \lhd \cdots \lhd
  c_{i-1}; \] 
  An election is SPOC with respect to an axis if all of its votes are (the value of~$i$ may differ from one vote to another). SPOC votes may capture, for example, preferences regarding
meeting times when people are in different time zones.

\subsubsection{Sampling}
We consider two ways of generating single-peaked elections, one
studied by~\cite{wal:t:generate-sp} and one studied by~\cite{con:j:eliciting-singlepeaked};
hence, we refer to them as the \emph{Walsh model} and the \emph{Conitzer
  model}. 
In both models, we first choose the axis (uniformly at
random). To generate a vote, we proceed as follows:
\begin{enumerate}
\item Under the Walsh model, we choose a single-peaked preference
  order (under the given axis) uniformly at random.
  \cite{wal:t:generate-sp} provided a sampling algorithm for
  this task. This model is also sometimes referred to as \emph{impartial culture over single-peaked votes}.

\item To generate a vote under the Conitzer model for the axis~$c_1
  \lhd c_2 \lhd \cdots \lhd c_m$, we first choose some candidate~$c_i$
  (uniformly at random) to be ranked on top (so, at this point,~$c_i$
  is the only ranked candidate). Then, we perform~$m-1$ steps
  as follows: Let~$\{c_j, c_{j+1}, \ldots,
  c_{k}\}$ be the set of the currently ranked candidates. We choose
  the next-ranked candidate from the set~$\{c_{j-1}, c_{k+1}\}$
  uniformly at random.
  This model is also sometimes referred to as the \emph{random peak} model.
\end{enumerate}
To generate a single-peaked on a circle vote, we use the Conitzer
model, except that we take into account that the axis is cyclical (note that this process generates each possible SPOC vote with equal probability, so, in fact, we can say that we use impartial culture over SPOC votes).
\smallskip

\subsection{Single-Crossing Elections}
We also consider \emph{single-crossing} elections, introduced by \cite{mir:j:single-crossing} and \cite{rob:j:tax} 
in the context of taxation.  
\begin{definition}[\cite{mir:j:single-crossing}, \cite{rob:j:tax}]
  An election~$E = (C,V)$ is single crossing if it is possible to
  order the voters in such a way that for each pair of candidates~$a,b
  \in C$, the set of voters that prefer~$a$ to~$b$ either forms a
  prefix or a suffix of this order.
\end{definition}

\newpage
\begin{example}
  Consider election with the set of candidates~$C = \{a,b,c,d\}$, and votes:
 \begin{align*}
  v_1& \colon a \pref b \pref c \pref d, \\
  v_2& \colon a \pref b \pref d \pref c, \\
  v_3& \colon d \pref a \pref c \pref b, \\
  v_4& \colon d \pref c \pref b \pref a.
\end{align*}
  This election is single-crossing because each pair of candidates is crossing at most once. 
  In particular, pair~$\{c,d\}$ is crossing between votes~$v_1$ and~$v_2$, pairs~$\{a,d\}, \{b,c\}$, and~$\{b,d\}$ are crossing between votes~$v_2$ and~$v_3$, and pairs~$\{a,b\}$ and~$\{a,c\}$ are crossing between votes~$v_3$ and~$v_4$. 
\end{example}


We say that a set of preference orders~$\calD$ is a \emph{single-crossing
  domain} if every election where each voter has a preference order
from~$\calD$ is single-crossing.
For a recent discussion 
of single-crossing domains, see, e.g., 
the work of~\cite{pup-sli:j:single-crossing}.


\subsubsection{Sampling}
We would like to generate single-crossing elections uniformly at
random, 
but we are not aware of an efficient sampling algorithm for this task.
Thus, to generate a single-crossing election, we first generate a
single-crossing domain~$\calD$ and then draw~$n$ votes from it
uniformly at random. To generate this 
domain for a
candidate set~$C = \{c_1, \ldots, c_m\}$, we use the following
procedure:
\begin{enumerate}
\item We let~$v$ be a preference order~$c_1 \pref c_2 \pref \cdots
  \pref c_m$ and we output~$v$ as the first member of our domain.
\item We repeat the following steps until we output~$c_m \pref c_{m-1}
  \pref \cdots \pref c_1$: 
  \begin{enumerate}
      \item  We draw candidate~$c_j$ uniformly at
  random and we let~$c_i$ be the candidate ranked right ahead of~$c_i$
  in~$v$ (if~$c_j$ is ranked on top, then we repeat);
    \item If~$i < j$
  then we swap~$c_i$ and~$c_j$ in~$v$ and output the new preference
  order.
  \end{enumerate}
\item We randomly permute the names of the candidates.
\end{enumerate}
Our 
domains always have cardinality~$(\nicefrac{1}{2})m(m-1)+1$.

\subsection{Group-Separable Elections}
Next, we consider \emph{group-separable} elections, introduced by
\cite{ina:j:group-separable, ina:j:simple-majority}. 
An election is group-separable if each set~$A$ of at least two candidates can be
partitioned into two nonempty subsets,~$A'$ and~$A''$, such that each
voter either prefers all members of~$A'$ to all members of~$A''$ or
the other way round.
For our purposes, it will be convenient to use the
 tree-based definition of \cite{kar:j:group-separable} which is equivalent to the previous one.  Let~$C = \{c_1, \ldots, c_m\}$ 
 be a set of candidates and consider a
rooted, ordered tree~$\calT$ whose leaves are elements of~$C$. The
\emph{frontier} of this tree is the preference order that ranks the
candidates in the order in which they appear in the tree from left to
right. A preference order is \emph{consistent} with a given tree if
it can be obtained as its frontier by reversing the order in which the
children of some nodes appear.

\begin{definition}\label{def:gs}
  An election~$E = (C,V)$ is \emph{group-separable} if there is a
  rooted, ordered tree~$\calT$ whose leaves are members of~$C$, such
  that each vote in~$V$ is consistent with~$\calT$.
\end{definition}

\noindent The trees from~\Cref{def:gs} form a subclass of
\emph{clone decomposition trees}, which 
are examples of PQ-trees
\citep{elk-fal-sli:c:decloning,boo-lue:j:consecutive-ones-property}.

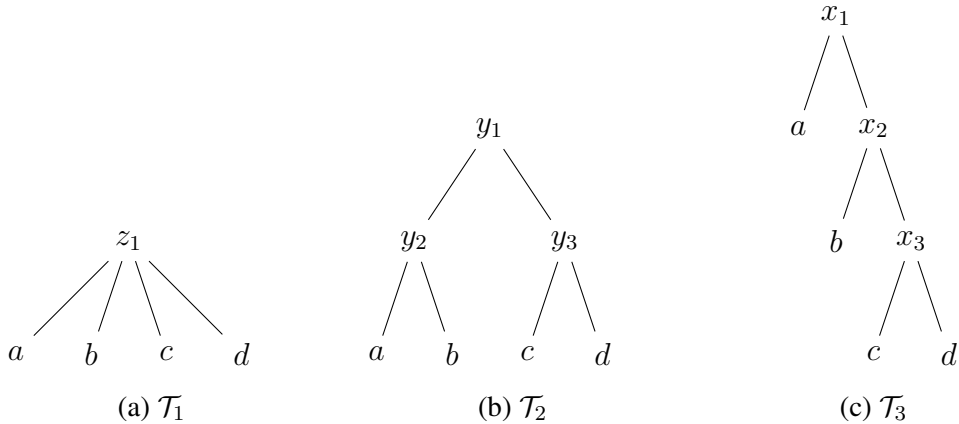
\begin{figure}
  \newcommand{\myscale}{1}
  \centering
  
  \begin{subfigure}[b]{0.3\columnwidth}
    \begin{tikzpicture}[scale=\myscale]
    \node{$z_1$}[sibling distance = 1cm]
        child {node {$a$}}
        child {node {$b$}}
        child {node {$c$}}
        child {node {$d$}};
        \end{tikzpicture}
    \caption{$\calT_1$}
  \end{subfigure}\hfill
  \begin{subfigure}[b]{0.3\columnwidth}
    \begin{tikzpicture}[scale=\myscale]
      \node{$y_1$}[sibling distance = 2cm]
        child {node {$y_2$} [sibling distance = 1cm]
            child {node {$a$}}
            child {node {$b$}}
            }
        child {node {$y_3$} [sibling distance = 1cm]
            child {node {$c$}}
            child {node {$d$}}
            };
    \end{tikzpicture}
    \caption{$\calT_2$}
  \end{subfigure}\hfill
  \begin{subfigure}[b]{0.3\columnwidth}
    \centering
  \begin{tikzpicture}[scale=\myscale]
     \node{$x_1$}[sibling distance = 1cm]
        child {node {$a$}}
        child {node {$x_2$}
            child {node {$b$}}
            child {node {$x_3$}
                child {node {$c$}}
                child {node {$d$}}
                }
            };
      \end{tikzpicture}
    \caption{$\calT_3$}
  \end{subfigure}
  
  \caption{\label{fig:trees}Three examples of clone decomposition trees.}
\end{figure}

\newpage
\begin{example}
  Consider the set of candidates~$C = \{a, b,c,d\}$, trees~$\calT_1$,~$\calT_2$, 
  and~$\calT_3$ from~\Cref{fig:trees}, and votes:
 \begin{align*}
  v_1& \colon a \pref b \pref c \pref d, \\
  v_2& \colon c \pref d \pref b \pref a, \\
  v_3& \colon b \pref d \pref c \pref a.
\end{align*}
  Vote~$v_1$ is consistent
  with each of the trees,~$v_2$ is consistent with~$\calT_2$ (reverse the children of~$y_1$
  and~$y_2$) and with~$\calT_3$ (reverse the children of~$x_1$
  and~$x_2$) , and~$v_3$ is consistent with~$\calT_3$ (reverse the
  children of~$x_1$ and~$x_3$).
\end{example}

In many cases, we will be interested in two particularly characteristic trees, i.e., balanced and caterpillar ones. \emph{Balanced tree} is a complete, full binary tree (if the number of candidates/leaves is equal to a power of two, then this is a perfect tree). \emph{Caterpillar tree} is a binary tree where each inner node's left child is a leaf, and the right child is either an inner node or a leaf.
$\calT_2$ from~\Cref{fig:trees} is an example of a balanced tree, whereas~$\calT_3$ is an example of a caterpillar tree.

\subsubsection{Sampling}
Given a certain tree, to generate a vote, we simply reverse each internal node with probability~$0.5$ and then take the frontier as our vote. We repeat this procedure independently to generate as many votes as required in the election.

\subsection{Euclidean Elections}

Finally, Euclidean preferences, discussed in detail, e.g., by \cite{enelow1984spatial,enelow1990advances}, are based on a
similar idea as the single-peaked ones, but are defined geometrically:
Each candidate and each voter corresponds to a point in a Euclidean
space and voters form their preferences by ranking the candidates with
respect to their distance. That is, if the point of voter~$v$ is closer
to the point of candidate~$c$ than to that of candidate~$d$ then~$v$ prefers~$c$ to~$d$.

\begin{definition}\label{def:t-euclidean}
  Let~$t$ be a positive integer.  An election~$E = (C,V)$ is~$t$-Euclidean if it is possible to associate each candidate and each
  voter with his or her ideal point in a~$t$-dimensional Euclidean space~$\mathbb{R}^t$ in such a way that the following holds: For each
  voter~$v$ and each two candidates~$a, b \in C$,~$v$ prefers~$a$ to~$b$ if and only if~$v$'s point is closer to the point of~$a$ than to
  the point of~$b$.
\end{definition}

\subsubsection{Sampling}
To generate the Euclidean election, we simply sample ideal points of candidates and voters from a given space and then, based on these ideal points, we create the votes. Given a certain space, we sample from it uniformly at random. In particular, we consider the following models:
\begin{itemize}
    
    \item \emph{Interval} -- points are sampled uniformly at random from a $1$-dimensional interval.
    \item \emph{Disc} -- points are sampled uniformly at random from a $2$-dimensional disc.
    \item \emph{Square} -- points are sampled uniformly at random from a $2$-dimensional square.
    \item \emph{Cube} -- points are sampled uniformly at random from a $3$-dimensional cube.
    \item \emph{$n$-Cube} -- points are sampled uniformly at random from an $n$-dimensional hyper cube.
    \item \emph{Circle} -- points are sampled uniformly at random from a circle.
    \item \emph{Sphere} -- points are sampled uniformly at random from an ordinary sphere in a $3$-dimensional Euclidean space.
    \item \emph{$n$-Sphere} -- points are sampled uniformly at random from an $n$-sphere in an $(n+1)$-dimensional Euclidean space.

\end{itemize}

It is well known that Interval elections are both 
single-peaked and single-crossing. We also note that in Circle elections, the voters have SPOC
preferences.


\section{Compass Elections}
Next, we provide four characteristic elections, to which we refer as \emph{compass elections}. We believe that they capture some notions of ``extremes'' and are qualitatively different from each other. These four compass points are as follows.
\begin{description}
\item[Identity.] In the identity elections, denoted~$\ID$, all voters have the
  same, fixed preference order---which we sample uniformly at random.
\item[Antagonism.] In the antagonism elections, denoted~$\AN$, half of the voters
  rank the candidates in one way and half of the voters rank them in
  the opposite way.
\item[Uniformity.] In the uniformity elections, denoted~$\UN$, each possible vote
  appears the same number of times.
\item[Stratification.] In the stratification elections, denoted~$\ST$, the candidates
  are partitioned into two equal-sized sets~$A$ and~$B$.  Each
  possible preference order where all members of~$A$ are ranked ahead
  of~$B$ appears the same number of times.
\end{description}

In practice, to generate the identity election, we sample one vote uniformly at random, and all votes are its copies. To generate the antagonism election, we sample one vote uniformly at random, and half of the votes are its copies, while the other half are copies of the reverse vote. To get ideal uniformity and stratification, we would need exponentially many votes (i.e., with respect to the number of candidates), so due to limited
number of votes, for uniformity we just sample an election from impartial culture---as an approximation of the uniformity, and for stratification, to generate a vote we sample the first half of the vote from impartial culture (based on the first half of the candidates), and then we sample the second half of the vote also from impartial culture (but based on the second half of the candidates).



  


\section{Map of Preferences}\label{ordinal_map_pref}
To get a better understanding of our statistical cultures, in this section we present a {\it map of preferences}\footnote{In principle, the map of preferences is very similar to the map of elections, where each point on the map, instead of depicting a single election, is depicting a single vote. Historically, we introduced the maps of elections prior to the maps of preferences}. For a given election, to generate its map of preferences, we proceed as follows. First, we compute the swap distance between each pair of votes. Then, based on these distances, we create a two-dimensional embedding using the MDS algorithm (see \Cref{desc:embed}). Each dot corresponds to a single vote. The closer two dots are on the map, the more similar are the votes that they represent (or, more precisely, the smaller is their swap distance).

We generated elections with~$10$ candidates and~$1000$ voters from~$25$ different models\footnote{We take at most one election from a given model, the only exceptions are parametrized models such as the Norm-Mallows and urn models, from which we take several elections with different parameters.}, described before. These models include impartial culture, urn model with~$\alpha \in \{0.05, 0.2, 1\}$, Walsh and Conitzer models, SPOC, single-crossing model, balanced and caterpillar group-separable models, Interval, Square, Cube, 10-dimensional Hypercube, Circle, and Sphere Euclidean models, and Norm-Mallows model with~$\phi \in \{0.05, 0.2, 0.5\}$, and~$\omega \in \{0, 0.25, 0.5\}$ (with each possible combination of~$\phi$ and~$\omega$). Moreover, we added three compass elections, i.e., $\ID$, $\AN$, and $\ST$; we skipped $\UN$ because the result is almost identical to the IC map. The results are presented in~\Cref{fig:microscope}. For clarity, if there are more than 30 copies of the same vote, we denote it by adding a purple disc---the larger the disc, the more copies there are. 

\begin{figure}
    \centering
    \includegraphics[width=14cm]{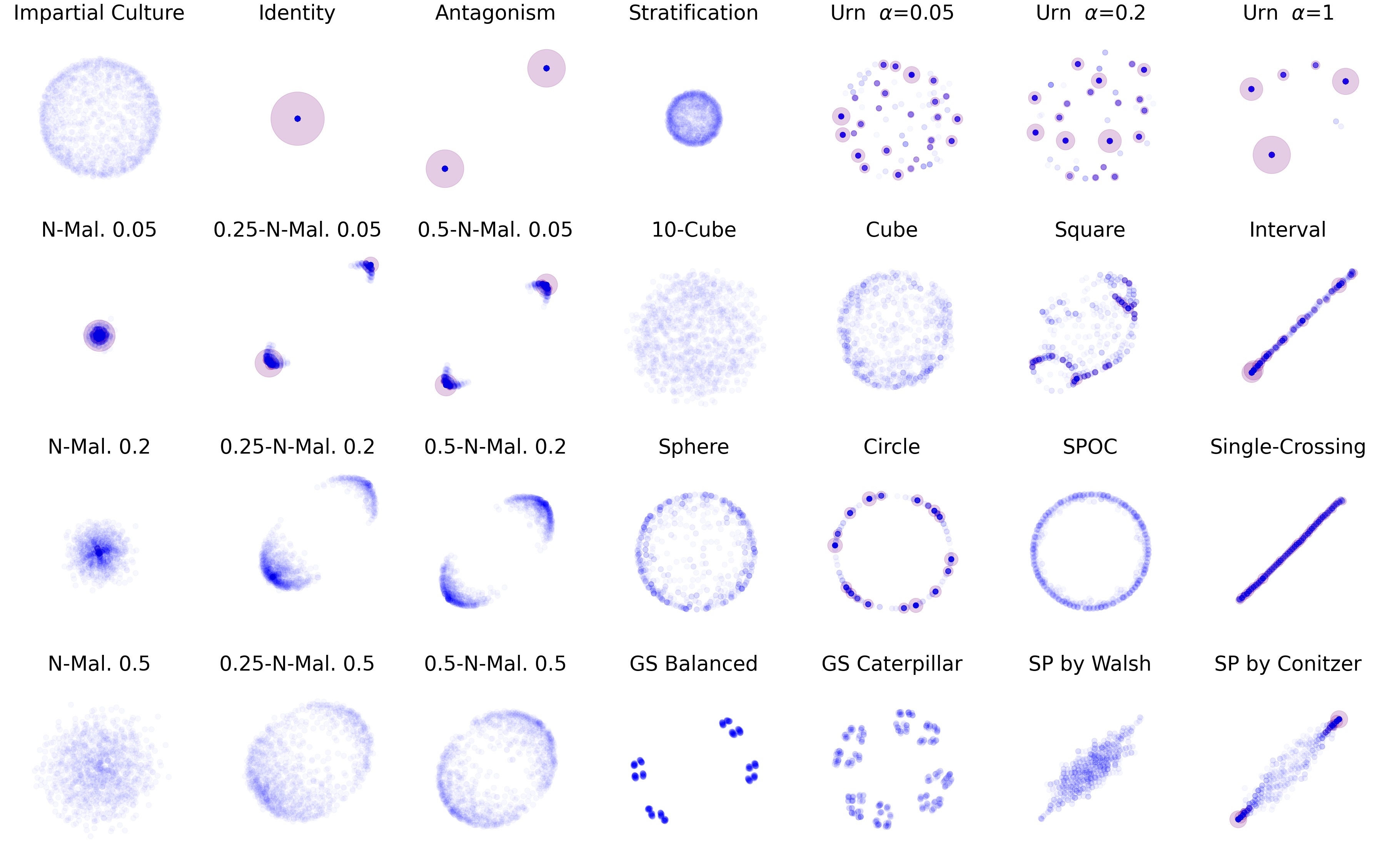}
    \caption{Maps of Preferences ($10$ candidates,~$1000$ voters).}
    \label{fig:microscope}
\end{figure}

We start our analysis by looking at the impartial culture election. Votes are more or less uniformly spread, with slightly higher density near the edge. In multidimensional space the votes would form a permutohedron; however, here we are limited to an embedding in two-dimensional space, so proportionally more votes land on the edge. 

Then, we have $\ID$ followed by $\AN$ and $\ST$. As expected, for $\ID$ we have a single point in the center because all votes are identical, and for $\AN$ we have two points located at the largest possible distance because we have only two types of votes (i.e., $500$ times vote $v$, and $500$ times its reversed copy). For $\ST$, we observe a similar picture to the one for the IC election, however, the diameter is much smaller. This is because in an $\ST$ election all the voters agree that half of the candidates are better than the other half, hence, the largest possible distance between two votes is equal to half of the largest possible distance between two votes from IC.

Next, we have three elections from the urn model. The larger is the~$\alpha$ parameter, the smaller is the number of different votes, leading to fewer points on the map. To be more precise, below we provide the formula for the
(upper bound on the) expected number of different votes under the urn model, with assumption that~$n \leq m!$.

\begin{proposition}
Given parameter of contagion~$\alpha$ and number of voters~$n$ the expected number of different votes under the urn model is upper-bounded by~$\sum_{i=1}^{n} \frac{1}{1+(i-1)\alpha}$.
\end{proposition}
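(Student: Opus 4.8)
The plan is to use linearity of expectation over indicator variables that mark the \emph{first} appearance of each distinct preference order. For $i \in [n]$, let $X_i$ be the indicator of the event that the $i$-th drawn vote differs from every vote drawn in rounds $1, \ldots, i-1$. Since each distinct order among the $n$ sampled votes is recorded exactly once --- namely in the round of its first appearance --- the number of distinct votes equals $\sum_{i=1}^n X_i$, and hence its expectation equals $\sum_{i=1}^n \Pr[X_i = 1]$. So it suffices to bound each $\Pr[X_i = 1]$ by $\frac{1}{1+(i-1)\alpha}$.

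Next I would track the total number of balls in the urn. Initially the urn holds $m!$ balls (one per order), and each of the first $i-1$ draws returns the drawn ball together with $\alpha m!$ fresh copies, so immediately before the $i$-th draw the urn contains exactly $m!\bigl(1 + (i-1)\alpha\bigr)$ balls. The key step is to count how many of these correspond to orders \emph{not yet seen}: an order drawn before carries its original copy plus all copies added for it, whereas an unseen order carries only its single original copy. Thus, conditioned on the first $i-1$ rounds having produced $k$ distinct orders, the number of balls belonging to unseen orders is exactly $m! - k$, giving $\Pr[X_i = 1 \mid k \text{ distinct so far}] = \frac{m!-k}{m!(1+(i-1)\alpha)}$.

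Finally I would discard the dependence on $k$ with the crude bound $m! - k \le m!$, which holds since $k \ge 0$. Taking expectation over the random history then yields $\Pr[X_i = 1] \le \frac{1}{1+(i-1)\alpha}$, and summing over $i \in [n]$ produces the claimed upper bound $\sum_{i=1}^n \frac{1}{1+(i-1)\alpha}$. Note that for $i=1$ this is tight (the first draw is always new and the bound is $1$), which is a useful sanity check. The assumption $n \le m!$ guarantees $k \le i-1 < m!$ throughout, so unseen orders always exist and the conditional probabilities are well-defined in the natural regime.

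The only real obstacle is getting the ball-count bookkeeping right. In particular, the whole dependence on the realized history collapses once one observes that unseen orders contribute at most $m!$ balls out of the total $m!\bigl(1+(i-1)\alpha\bigr)$; this is precisely what converts the exact conditional probability into the clean, history-independent upper bound, and it is also the reason the statement is an inequality rather than an equality.
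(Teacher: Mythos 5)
Your proof is correct and follows essentially the same route as the paper's: linearity of expectation over per-round indicators of a new vote, with the probability of novelty in round $i$ bounded by the fraction $\frac{m!}{m!(1+(i-1)\alpha)} = \frac{1}{1+(i-1)\alpha}$ of balls in the urn. The only difference is that you make the slack explicit by counting exactly $m!-k$ unseen-order balls and bounding $m!-k \le m!$, whereas the paper phrases the same slack as ``ignoring the case where a draw from the original urn repeats an already-seen vote''; the two accountings are equivalent.
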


\begin{proof}
The probability of having a new vote in the first iteration is $1$; in the second iteration it is at most  $\frac{1}{1+\alpha}$; in the third iteration it is at most $\frac{1}{1+2\alpha}$ and so on. In general, in the $i$th iteration we have at most probability $\frac{1}{1+(i-1)\alpha}$ of sampling a vote from the original urn, and probability $\frac{(i-1)\alpha}{1+(i-1)\alpha}$ of repeating one of the previous votes.
Therefore, the expected number of different votes in $n$ iterations is upper-bounded by~$\sum_{i=1}^{n} \frac{1}{1+(i-1)\alpha}$. It is an upper-bound because we ignore the case where while sampling a vote from the original urn, we sample a vote that we have already sampled before.
\end{proof}

For the normalized Mallows model, the shorthand captions in the pictures are of the form~$\omega$-$N$-$Mal.$~$\phi$. For standard Normalized Mallows, as expected, we have a central point (corresponding to the central order) and the further away we move from that point, the fewer votes we have. On the other hand, for the weighted variant with~$\omega \in \{0.25, 0.5\}$ we observe two antagonistic groups. The central ranking and its reverse are at the largest possible distance. Any noise on one of them is shifting a given vote closer to the other group.

Next, we move on to structured domains. We start with the single-crossing model. The map for the single-crossing is one straight line. It is because the single-crossing domain is defined by a sequence of swaps, so for each vote the sum of its distances to the two most extreme votes is constant. Moreover, there cannot be two different votes that are at the same distances from the extremes, because it would contradict the fact that the domain is defined by a sequence of swaps. The map for the Interval model looks very similar. Note that every election from the Interval model is also single-crossing. Interestingly, the votes from the Interval election look less evenly distributed than those from the single-crossing election. It is so, because in the Interval election candidates' points are sampled randomly, so since there are only ten of them, by chance they can be distributed unevenly over the interval, which leads to an uneven distribution of preference orders. For the single-crossing model such a thing cannot occur.

When we shift from the Interval model to the Square, Cube, and finally the 10-dimensional Hypercube ones, the maps become gradually more and more similar to that for impartial culture. The same is true when we shift from Circle to Sphere, etc. However, hypersphere elections converge faster toward impartial culture model than hypercube ones. For example, even 10-dimensional Hypercube is still something in between~$N$-$Mal.$~$0.5$ and impartial culture.

In a single-peaked election there are two possible \emph{extreme votes}, i.e., one identical with the societal axis, and the second one, identical with the reversed societal axis. For single-peaked models, we observe an interesting difference between the Walsh and Conitzer approaches. For the Walsh model, the points are more uniformly spread, while for the Conitzer model, we obtain somewhat antagonistic single-peaked elections. In fact, for Conitzer model, the probability of sampling an extreme vote is~$2 \frac{1}{m}$ (for~$m>1$). Therefore, for the presented example~$\sim200$ votes will be extreme ones ($\sim100$ per each extreme). On the other hand, for the Walsh model, the probability of sampling an extreme vote is~$2^{-(m-2)}$ (for~$m>1$), so for the presented example~$\sim0.4$ vote will be an extreme one.

Although, the voters in an election from the Circle model have SPOC preferences, votes from the Circle model are less evenly distributed than those from the SPOC model. This is a similar case to that of Interval and single-crossing elections.
    
For the balanced and caterpillar group-separable models, we see the divisions of points into subgroups, which corresponds to the inner nodes of the trees. For the group-separable caterpillar variant, they are spread across a larger space than for the group-separable balanced variant.
    

\section{Summary}
In this chapter, we introduced some of the most popular statistical cultures that are used in experiments in computational social choice. Next, we described four compass elections: identity, uniformity, antagonism, and stratification. Finally, using the {\it map of preferences} framework, we gave the reader the intuition about how elections from different models look like.


\chapter{Distances Among Elections}
\label{ch:distances}

\section{Introduction}
How similar are two elections? In this chapter we suggest how one can go about answering this question.
We introduce the \textsc{Election Isomorphism} problem and a family of
its approximate variants, which measure the degree of similarity
between two elections by using distances over preference orders.


In the \textsc{Election Isomorphism} problem we are given two elections,
$E_1$ and~$E_2$, both with the same numbers of candidates and the same
numbers of voters, and we ask if it is possible to transform one into the other by renaming the candidates and reordering the voters.
While this problem is similar in spirit to the famous \textsc{Graph
  Isomorphism} problem (whose complexity status remains elusive; see
the report of~\cite{bab-daw-sch-tor:j:graphi-isomorphism} and further
discussion on Babai's home page for recent progress on the
problem), the structure of elections with ordinal ballots is such that
it is very easy to provide a polynomial-time algorithm for
\textsc{Election Isomorphism}.  On the other hand, for approval-based
elections, 
\textsc{Election Isomorphism} is at least as hard as \textsc{Graph Isomorphism}---a graph can be
  encoded as an approval election in a simple way. However, more details about the approval-based elections will be given in \Cref{ch:approval}.

We are also interested in approximate variants of the
\textsc{Election Isomorphism} problem, which turn out to define
distances over elections. 
We extend the distance between preference orders to 
whole elections in a way that respects both anonymity and neutrality.
Namely, we ask if, via appropriate
renaming of the candidates and reordering the voters, it is 
possible to bring a given election within some small 
distance of another given one. 
  We note that approximate \textsc{Graph Isomorphism} problems are also
studied in the
literature \citep{arv-koe-kuh-vas:c:approximate-graph-isomorphism,gro-rat-woe:c:approximate-isomorphism}. Although, in
spirit, they are very similar to our problems, they differ on the
technical level.

We focus on three isomorphic distances (i.e., distances under which only isomorphic elections are at distances zero), that is, the swap, Spearman, and discrete distances. Unfortunately, both the swap and Spearman distances are quite complex and take a lot of time to compute even for relatively small instances of elections. On the other hand, the discrete distance is faster, yet not very informative. So, in one way or another, all three distances are of limited practical value when comparing elections with, for example,~$100$ candidates and~$100$ voters. This conclusion leads to the development of various ``nonisomorphic'' distances. We call them nonisomorphic because sometimes, even if two elections are not isomorphic, these distances might return zero. All our nonisomorphic distances instead of operating on complete elections, work on their aggregate representations---compressed forms of elections. It can be seen as a tradeoff, when we accept losing some information about elections in exchange for a better performance with regard to the running time. However, as we will show in \Cref{sec:non_iso_dist}, not for all nonisomorphic distances this tradeoff pays off.

The structure of this chapter is as follows. First, we focus on \textsc{Election Isomorphism}, and isomorphic distances. Second, we move to the aggregate representations of elections and nonisomorphic distances based on these representations. In these parts we largely focus on the complexity of computing our distances.
Then, we compare both isomorphic and nonisomorphic distances altogether: We discuss the relation between compass elections (i.e., the four characteristic elections which were initially presented in \Cref{ch:stat_cult}). Finally, we study correlation between distances, numbers of equivalence classes under each of them, and (what is most interesting) we compare the maps that our distances produce.

  
 

\section{Election Isomorphism}
In this section we define the notion of election isomorphism,
illustrate its usefulness, and show that testing if two elections are
isomorphic is a polynomial-time computable task. We start with a
formal definition.

\begin{definition}
  We say that elections~$E=(C,V)$ and~$E'=(C',V')$, where~$|C| = |C'|$,~$V = (v_1, \ldots, v_n)$, and~$V' = (v'_1, \ldots, v'_n)$, are
  isomorphic if there is a bijection~$\sigma\colon C \to C'$ and a
  permutation~$\nu\in S_n$ such that~$\sigma(v_i)=v'_{\nu(i)}$ for all~$i\in [n]$.
\end{definition}


\begin{example}
  Consider elections~$E = (C,V)$ and~$E' = (C',V')$, 
  such that~$C = \{a,b,c\}$,~$C' = \{x,y,z\}$,~$V = (v_1,v_2,v_3)$,~$V' = (v'_1, v'_2,v'_3)$, with the following preference orders:
\begin{align*}
\centering
 &  v_1 \colon a \pref b \pref c, &&  v'_1 \colon y \pref x \pref z,  & \\
 &  v_2 \colon b \pref a \pref c, &&  v'_2 \colon x \pref y \pref z,  & \\
 &  v_3 \colon c \pref a \pref b, &&   v'_3 \colon z \pref x \pref y. & \\
\end{align*}
$E$ and~$E'$ are isomorphic, by mapping candidates~$a$ to~$x$,~$b$ to
$y$, and~$c$ to~$z$, and voters~$v_1$ to~$v'_2$,~$v_2$ to~$v'_1$, and
$v_3$ to~$v'_3$.
\end{example}
The idea of election isomorphism has already appeared in the
literature, though without using this name and usually as a tool to
achieve some specific goal.  For example,~\cite{ege-gir:j:isomorphism-ianc} refer to two
isomorphic elections as members of the same \emph{anonymous and
  neutral equivalence class (ANEC)} and study the problem of sampling
representatives of ANECs uniformly at random.~\cite{has-end:c:diversity-indices} use the election
isomorphism idea in their analysis of preference diversity indices.

In the \textsc{Election Isomorphism} problem we are given two
elections and we ask if they are isomorphic.  Surprisingly, 
the problem has an easy polynomial-time algorithm.


\begin{proposition}\label{thm:ei-complexity}
  \textsc{Election Isomorphism} is in~$\p$.
\end{proposition}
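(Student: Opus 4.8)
The plan is to exploit the fact that, unlike in graph isomorphism, a single vote already pins down the entire candidate renaming. Recall that a witnessing pair consists of a bijection $\sigma\colon C \to C'$ and a permutation $\nu \in S_n$ with $\sigma(v_i) = v'_{\nu(i)}$ for all $i \in [n]$; equivalently, the multiset of renamed votes $\{\sigma(v_1), \dots, \sigma(v_n)\}$ must coincide with $\{v'_1, \dots, v'_n\}$. The key observation is that once we commit to the image of the first voter, say $\sigma(v_1) = v'_j$, the bijection $\sigma$ is forced: since $v_1$ is a linear order on $C$ and $v'_j$ is a linear order on $C'$, the requirement $\sigma(v_1) = v'_j$ means exactly that $\sigma$ sends the candidate ranked $k$-th in $v_1$ to the candidate ranked $k$-th in $v'_j$, for each $k \in [m]$. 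There is thus exactly one candidate bijection consistent with each choice of $j$.

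First I would loop over the $n$ possible images $v'_j$ of $v_1$. For each $j \in [n]$, I define the unique induced bijection $\sigma_j\colon C \to C'$ as above, compute $\sigma_j(E)$, and then test whether $\sigma_j(E)$ and $E'$ agree up to reordering of voters, i.e., whether their multisets of preference orders are equal. This last test is routine: encode each vote as a length-$m$ string over the candidates, sort the votes of $\sigma_j(E)$ and of $E'$, and compare them entry by entry. If some $j$ passes, I report that the elections are isomorphic; otherwise I report that they are not.

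For correctness, one direction is immediate: if some $\sigma_j$ makes the two multisets equal, then reading off a matching between the votes yields a permutation $\nu \in S_n$, so $(\sigma_j, \nu)$ witnesses isomorphism. For the converse, suppose $(\sigma, \nu)$ is a witnessing pair and set $j = \nu(1)$, so that $\sigma(v_1) = v'_{\nu(1)} = v'_j$. Because a linear order determines the position of every candidate uniquely, the map $\sigma$ must agree with $\sigma_j$ on every candidate, hence $\sigma = \sigma_j$; consequently the iteration $j = \nu(1)$ detects the isomorphism. For the running time, there are $n$ iterations; in each, constructing $\sigma_j$ costs $O(m)$, applying it to all votes costs $O(nm)$, and the multiset comparison costs $O(nm \log n)$ via sorting, for a total that is polynomial in $n$ and $m$. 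Hence \textsc{Election Isomorphism} is in $\p$.

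The only conceptual hurdle is the observation that the candidate renaming is completely determined by the image of a single vote; there is no real search over the $m!$ bijections, which is precisely why the ordinal case is easy, in contrast to the approval-based and graph cases alluded to in the text. Once this is in place, everything reduces to bookkeeping and a standard multiset-equality check, and I do not anticipate any genuine difficulty in filling in the details.
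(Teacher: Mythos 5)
Your proof is correct and follows essentially the same route as the paper: both arguments hinge on the observation that fixing the image of $v_1$ among the $n$ votes of $E'$ uniquely determines the candidate bijection, and then verify each of the $n$ candidates bijections in polynomial time. The only (cosmetic) difference is that you check the resulting vote multisets for equality by sorting, whereas the paper phrases the same check as finding a perfect matching in a bipartite graph of identical votes; the two are equivalent.
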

\begin{proof}
  Let~$E = (C,V)$ and~$E' = (C',V')$ be two input elections where
 $C = \{c_1, \ldots, c_m\}$,~$C' = \{c'_1, \ldots, c'_m\}$,
 $V = (v_1, \ldots, v_n)$ and~$V = (v'_1, \ldots, v'_n)$.  Without loss
  of generality, let us assume that~$v_1$'s preference order is
 $v_1 \colon c_1 \pref c_2 \pref \cdots \pref c_m.$ For each 
 $v'_j$ there is a bijection~$\sigma_j$ from~$C$ to~$C'$ such that for
  the preference order of~$v'_j$ we have~$\pos_{v'_j}(\sigma_j(c_i))=i$.
  For each~$\sigma_j$, we
  build a bipartite graph where~$v_1, \ldots, v_n$ are the vertices on
  the left,~$v'_1, \ldots, v_n'$ are the vertices on the right, and
  there is an edge between~$v_i$ and~$v_\ell$ if~$\sigma_j(v_i) = v_\ell$; we accept if this graph has a perfect
  matching for some~$\sigma_j$ and we reject otherwise.

  The algorithm runs in polynomial time because there 
  are~$n$~$\sigma_j$'s to try, and computing perfect matchings is a
  polynomial-time computable task.  The correctness follows from the fact
  that we need to map~$v_1$ to some vote in~$E'$ and we try all
  possibilities.
\end{proof}

Before moving to isomorphic distances, for a moment we will stop and discuss the single-peaked and single-crossing domains and their relation to isomorphism.

\subsubsection{Maximal Domains}
As an extended example of the usefulness of the isomorphism idea,
we consider the single-peaked and single-crossing domains.
They received extensive attention within (computational) social choice;
we point the reader to the survey of~\cite{elkind2022preference} for more details.

A single-peaked (single-crossing) domain is maximal if it is not
contained in any other single-peaked (single-crossing) domain. Each
maximal single-peaked domain~$\calD \subseteq \calL(C)$ contains
$2^{|C|-1}$ preference orders (\cite{mon:survey} attributes this fact to a 1962 work of Kreweras).
Since we can view a domain as an election that includes a single copy of
every preference order from the domain, our notion of isomorphism
directly translates to the case of domains, and we can formalize the fundamental difference between single-peakedness and
single-crossingness.

\begin{proposition}
  Each two maximal single-peaked domains over candidate sets of the
  same size are isomorphic. 
  \end{proposition}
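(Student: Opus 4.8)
The plan is to reduce the statement to the observation that a maximal single-peaked domain is essentially just an axis in disguise. First I would establish the structural characterization: every maximal single-peaked domain $\calD$ over a candidate set $C$ equals the set of \emph{all} preference orders that are single-peaked with respect to one fixed societal axis $\lhd$. Indeed, the members of a single-peaked domain are by definition single-peaked with respect to a common axis $\lhd$; the collection of \emph{all} votes single-peaked with respect to $\lhd$ is itself a single-peaked domain, and it contains $\calD$. Since $\calD$ is maximal, the two coincide. This is consistent with the stated count, since for a fixed axis there are exactly $2^{|C|-1}$ single-peaked votes.

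Given two maximal single-peaked domains $\calD_1$ over $C_1$ and $\calD_2$ over $C_2$ with $|C_1| = |C_2| = m$, I would write $\calD_1$ as the domain of an axis $a_1 \lhd a_2 \lhd \cdots \lhd a_m$ on $C_1$ and $\calD_2$ as the domain of an axis $b_1 \lhd b_2 \lhd \cdots \lhd b_m$ on $C_2$. I would then define the bijection $\sigma \colon C_1 \to C_2$ by $\sigma(a_i) = b_i$ for each $i \in [m]$; this is the unique order isomorphism carrying the first axis onto the second.

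The heart of the argument is to check that $\sigma$ transports $\calD_1$ onto $\calD_2$. Because $\sigma$ maps the axis on $C_1$ to the axis on $C_2$ in an order-preserving way, it sends intervals of the first axis exactly to intervals of the second. For any vote $v$, the top $t$ candidates of $\sigma(v)$ are precisely the $\sigma$-images of the top $t$ candidates of $v$; hence they form an interval of the second axis if and only if the top $t$ candidates of $v$ form an interval of the first. Ranging over all $t \in [m]$, this shows that $v$ is single-peaked with respect to the first axis if and only if $\sigma(v)$ is single-peaked with respect to the second, i.e. $v \in \calD_1 \iff \sigma(v) \in \calD_2$. Thus $v \mapsto \sigma(v)$ is a bijection from $\calD_1$ to $\calD_2$.

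Finally, I would package this as an election isomorphism. Viewing each domain as an election containing exactly one copy of each of its preference orders, the candidate bijection $\sigma$ together with the voter permutation $\nu$ matching each vote $v \in \calD_1$ to its image $\sigma(v) \in \calD_2$ witnesses isomorphism in the sense of the isomorphism definition. I expect the only genuinely substantive step to be the interval-preservation claim underpinning the bijection; the reduction to axes via maximality and the final bookkeeping are routine.
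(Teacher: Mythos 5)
Your proposal is correct and follows essentially the same route as the paper: the paper's proof simply exhibits the bijection $x_i \mapsto y_i$ between the two axes and asserts that it witnesses the isomorphism, while you additionally spell out the two implicit steps (that maximality forces each domain to be the full set of orders single-peaked with respect to its axis, and that an order-isomorphism of axes preserves intervals and hence single-peakedness). No substantive difference in approach.
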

  
\begin{proof}
  It suffices to note that 
  if~$\calD$ and~$\calD'$ are two maximal single-peaked domains (over
  candidate sets~$\{x_1, \ldots, x_m\}$ and~$\{y_1, \ldots, y_m\}$,
  respectively), with axes~$>_1$ and~$>_2$, such that: 
  \begin{align*}
   x_1 >_1 \cdots >_1 x_m && \text{and} && y_1 >_2 \cdots >_2 y_m,
  \end{align*}
  then a bijection that maps each~$x_i$ to~$y_i$ witnesses that the
  two domains are isomorphic.
\end{proof}

According to \cite{slinko2021characterization}, the number of maximal nonisomorphic single-crossing domains is equivalent to the number of weak Bruhat orders\footnote{https://oeis.org/A005118}.

\begin{corollary}
  There are~$\nicefrac{\binom{m}{2}!}{1^{n-1}\cdot3^{n-2}\cdot\,\cdots\,\cdot(2n-3)^1}$ maximal single-crossing
  domains over the same set of candidates that are not isomorphic.
\end{corollary}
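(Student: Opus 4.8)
The plan is to combine the characterization imported from \cite{slinko2021characterization}, which identifies nonisomorphic maximal single-crossing domains with maximal chains in the weak Bruhat order on $S_m$, with the classical enumeration of such chains via the hook length formula. Since the structural bridge is already cited, the corollary is essentially an explicit evaluation: counting the reduced decompositions of the longest element $w_0$ of $S_m$.

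First I would spell out, at least informally, why maximal single-crossing domains correspond to these chains. Listing the voters of a single-crossing domain in their crossing order produces a sequence of preference orders in which consecutive orders differ by exactly one swap of adjacent candidates; for a \emph{maximal} domain this sequence runs from a fixed order $c_1 \pref \cdots \pref c_m$ to its reverse $c_m \pref \cdots \pref c_1$, with every pair of candidates crossing exactly once. This is precisely a saturated chain from the identity to $w_0$ in the weak Bruhat order, i.e., a reduced word for $w_0$ in the adjacent transpositions; its length $\binom{m}{2}$ matches the domain cardinality $\tfrac{1}{2}m(m-1)+1$ recorded earlier. The content I would borrow from \cite{slinko2021characterization} is that passing to isomorphism classes collapses these objects exactly so that nonisomorphic maximal domains are counted by reduced words of $w_0$ (equivalently, by the maximal chains enumerated in OEIS A005118).

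Next I would carry out the enumeration. By Stanley's theorem, with the explicit correspondence of Edelman and Greene, the reduced decompositions of $w_0 \in S_m$ are equinumerous with the standard Young tableaux of staircase shape $\delta = (m-1, m-2, \ldots, 1)$, a diagram with $\binom{m}{2}$ cells. The hook length formula then gives
\[
  f^{\delta} = \frac{\binom{m}{2}!}{\prod_{c \in \delta} h(c)},
\]
so the remaining task is to read off the hook lengths of the staircase. These turn out to be exactly the odd numbers $1, 3, \ldots, 2m-3$, with the value $2k-1$ occurring $m-k$ times (for instance, for $m=4$ one gets hooks $5,3,1,3,1,1$, whose product is $45$ and $6!/45 = 16$). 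Substituting yields the denominator $1^{m-1}\cdot 3^{m-2}\cdots(2m-3)^1$ and hence the claimed formula.

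The hard part is not the final computation but the bookkeeping hidden in the first step, which is why I would lean on \cite{slinko2021characterization} rather than reprove it: one must check that the isomorphism used here (candidate renaming together with the innocuous reordering of voters, domains being viewed as single-copy elections) induces exactly the identifications that make the map to reduced words a bijection, with no residual symmetry from reversing the crossing order or the societal axis. The only genuinely computational point on our side is verifying the staircase hook lengths, cleanly done by noting that in each row the arm and leg lengths combine into consecutive odd values and tallying multiplicities by diagonals. (I would also reconcile the index $n$ in the statement's denominator with the number of candidates $m$: the expression is the staircase hook product and is naturally indexed by $m$.)
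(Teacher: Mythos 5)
Your proposal is correct and follows essentially the same route as the paper, which states the corollary as an immediate consequence of the cited characterization of \cite{slinko2021characterization} (nonisomorphic maximal single-crossing domains $\leftrightarrow$ maximal chains in the weak Bruhat order, i.e.\ reduced words for $w_0$, OEIS A005118) combined with the standard Stanley/Edelman--Greene evaluation via the hook length formula for the staircase shape; you simply make explicit the steps the paper leaves implicit. Your observation that the $n$ in the denominator should read $m$ (the number of candidates) is also right --- that is a typo in the statement.
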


This means that there is a significant difference between the single-peaked and single-crossing domains.

\section{Isomorphic Distances}
\label{sec:iso_dist}

We use the isomorphism idea to build distances between
elections that respect voter anonymity (so the order of the
voters in an election is irrelevant) and candidate neutrality (so the
names of the candidates are nothing more than 
temporary identifiers).

We focus on the following three distances, swap, Spearman, and discrete, which were described in detail in~\Cref{ch:preliminaries:dist_between_votes}.

As a reminder, by~$S_n$, we mean the set of all permutations over~$[n]$. Moreover, for two sets~$A$,~$B$ of the same cardinality, by~$\Pi(A,B)$ we denote the set of all one-to-one mappings from~$A$ to~$B$. 
Below we give our main definition.

\begin{definition}
  Let~$d$ be a distance between preference orders.
%
  Let~$E = (C,V)$ and~$E' = (C',V')$ be two elections, 
  where~$|C| = |C'|$,~$V = (v_1, \ldots, v_n)$ 
  and~$V' = (v'_1, \ldots, v'_n)$. We define the~$d$-isomorphism distance
  between~$E$ and~$E'$ as:
  
    \begin{equation*} 
    d(E,E') = \min_{\nu \in S_n}\min_{\sigma \in \Pi(C,C')}\sum_{i=1}^n  d(\sigma(v_i),v'_{\nu(i)})
  \end{equation*}
\end{definition}

We sometimes refer to the bijection~$\sigma$ as the candidate matching
and to the permutation~$\nu$ as the voter matching, and sometimes
instead of~$\nu$, we use bijection~$\tau \in \Pi(V,V')$ (depending on 
what is more convenient).  The name,~$d$-isomorphism distance, is
justified by the fact that if~$d(E,E') = 0$ for some two
elections (and~$d$ is a metric over preference orders), then these
elections are isomorphic.

  Note that in the above definition, we view elections as both
  anonymous and neutral.  This is why we apply the minimum operator
  over all permutations of the voters and over all bijections between
  the candidates.


\subsection{Computational Complexity}

We now turn to the complexity of computing isomorphism distances.
Formally, our problem is defined as follows.

\begin{definition}
  Let~$d$ be a distance over preference orders.  In the
  $d$-\textsc{Isomorphism Distance} problem (the~$\did$ problem) we
  are given two elections,~$E = (C,V)$ and~$E' = (C',V')$ 
  such that~$|C| = |C'|$ and~$|V| = |V'|$, and an integer~$k$.  We ask if~$d(E,E') \leq k$.
\end{definition}

We are also interested in two variants of this problem, the
\textsc{$\did$ with Candidate Matching} problem, where the bijection
$\sigma$ between the candidate sets is given (and fixed), and the
\textsc{$\did$ with Voter Matching} problem, where the voter
permutation~$\nu$ is given (and fixed). The former problem is in~$\p$
for polynomial-time computable distances, but, as we will see later,
this is not always true for the latter.

The summary of results is presented in Table~\ref{tab:iso_complexity}. Now, we will move on to analyzing all nine variants.

\begin{proposition}\label{pro:matching}
  For a polynomial-time computable~$d$, the problem \textsc{$\did$ with Candidate
    Matching} is in~$\p$.
\end{proposition}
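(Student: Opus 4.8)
The plan is to observe that fixing the candidate matching~$\sigma$ removes the outer minimization in the definition of the~$d$-isomorphism distance, leaving only the inner minimization over voter permutations. Concretely, computing the distance under a given~$\sigma \in \Pi(C,C')$ amounts to evaluating
\[
  \min_{\nu \in S_n} \sum_{i=1}^n d(\sigma(v_i), v'_{\nu(i)}),
\]
and this is exactly a minimum-weight perfect matching (assignment) problem on a complete bipartite graph. So the whole argument reduces to exhibiting this reduction and invoking a known polynomial-time matching algorithm.

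First I would construct an~$n \times n$ cost matrix~$M$ whose entry~$M_{i,j} = d(\sigma(v_i), v'_j)$ records the cost of aligning voter~$v_i$ (after applying~$\sigma$ to its preference order) with voter~$v'_j$. Since~$\sigma$ is given and fixed, computing~$\sigma(v_i)$ takes polynomial time for each~$i$, and since~$d$ is polynomial-time computable by assumption, each of the~$n^2$ entries can be filled in polynomial time; hence the matrix is built in polynomial time overall. Next I would build the bipartite graph with left vertices~$v_1, \ldots, v_n$, right vertices~$v'_1, \ldots, v'_n$, and edge weights given by~$M$, and solve the assignment problem (for instance via the Hungarian algorithm), obtaining a minimum-weight perfect matching in polynomial time. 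The matchings of this graph are in one-to-one correspondence with permutations~$\nu \in S_n$, and the weight of the matching induced by~$\nu$ is~$\sum_{i=1}^n d(\sigma(v_i), v'_{\nu(i)})$, so the optimal matching weight equals the desired minimum. Finally I would compare this optimum against the input integer~$k$ and accept if and only if it is at most~$k$.

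The correctness is immediate from the bijective correspondence between perfect matchings and voter permutations, and the running time is polynomial because it is dominated by filling the~$n^2$ cost entries and by the assignment solver, both of which are polynomial. There is essentially no hard step here: the only point requiring a remark is that each cost entry is computable in polynomial time, which follows at once from the hypothesis that~$d$ is polynomial-time computable together with the fact that applying the fixed bijection~$\sigma$ to a preference order is trivial. The contrast to flag for later is that this easy reduction crucially exploits the fixedness of~$\sigma$; the unrestricted problem, and even the \textsc{$\did$ with Voter Matching} variant for~$d = d_\swap$, will not admit such a clean reduction, which is precisely why those cases are treated separately.
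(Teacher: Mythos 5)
Your proof is correct and takes essentially the same approach as the paper: apply the fixed candidate matching~$\sigma$, compute all pairwise vote distances, and solve a minimum-weight perfect matching problem on the resulting bipartite graph. Your write-up is slightly more explicit about the cost matrix, the Hungarian algorithm, and the final comparison with~$k$, but the argument is identical.
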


\begin{proof}
  Let~$E$ and~$E'$ be our input elections and let~$\sigma$ be the
  input matching between candidates from~$E$ and~$E'$. To compute the distance between elections, it suffices to do the following. First, compute a distance between every pair of votes (one from~$\sigma(E)$ and another from~$E'$), Then, build a corresponding bipartite graph, where vertices on the left
  are the voters from~$\sigma(E)$, the vertices on the right are the voters
  from~$E'$, and all possible edges exist, weighted by the distances
  between the votes they connect. Finally, find the smallest-weight
  matching. The weight of the matching gives the value of the
  distance, and the matching itself gives the permutation~$\nu$).
\end{proof}
%
%
Using an argument very similar to that in the proof of Proposition~\ref{thm:ei-complexity},
we show that~$\discid$ problem is in~$\p$.


\begin{table}[t]
    \centering
    \begin{tabular}{c|ccc}
      \toprule
      &      & \sc with voter & \sc with candidate  \\
     $d$ &~$\did$ & \sc matching   & \sc matching    \\
      \midrule
     $d_\discrete~$ &~$\p$  &~$\p$    &~$\p$   \\
     $d_\spearman~$ &~$\np$-complete$^\dag$  &~$\p$    &~$\p$   \\
     $d_\swap~$    &~$\np$-complete &~$\np$-complete &~$\p$  \\
      \bottomrule
    \end{tabular}
    
    \caption{\label{tab:iso_complexity}The complexity of computing isomorphic distances.~$\dag$ this result is not a contribution of this dissertation.}
\end{table}




\begin{proposition}\label{thm:aei-hamming}
  The~$\discid$ problem is in~$\p$.
\end{proposition}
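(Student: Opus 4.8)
The plan is to follow the template of the proof of Proposition~\ref{thm:ei-complexity}, exploiting the fact that the discrete distance only records whether two votes are identical. The starting observation is that for any candidate matching~$\sigma \in \Pi(C,C')$ and voter matching~$\nu \in S_n$ we have $\sum_{i=1}^n d_\discrete(\sigma(v_i),v'_{\nu(i)}) = n - |\{i : \sigma(v_i) = v'_{\nu(i)}\}|$, so minimizing the $d_\discrete$-isomorphism distance is exactly the same as maximizing the number of voters mapped to an identical vote.

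Next I would treat a fixed candidate matching~$\sigma$. Since $d_\discrete$ is polynomial-time computable, Proposition~\ref{pro:matching} already shows that, with~$\sigma$ fixed, the best voter matching is found in polynomial time; concretely, the maximum number of identical matches equals $\sum_{p \in \calL(C')} \min(a_p, b_p)$, where~$a_p$ and~$b_p$ count how many votes of~$\sigma(E)$ and of~$E'$, respectively, equal the order~$p$ (sort the two multisets of votes and pair equal orders greedily). Thus the whole difficulty lies in the number of candidate matchings, which a priori is~$m!$.

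The key step that removes this difficulty is that a single identical match pins down~$\sigma$ completely: if~$\sigma(v_i) = v'_j$, then~$\sigma$ must send the candidate ranked in position~$k$ by~$v_i$ to the candidate ranked in position~$k$ by~$v'_j$ for every~$k$, since~$v_i$ ranks all of~$C$. Hence any solution matching at least one voter identically uses a candidate matching of the form~$\sigma_{i,j}$, the unique bijection determined by the pair~$(v_i,v'_j)$, of which there are at most~$n^2$. Moreover, forcing~$\sigma = \sigma_{i,j}$ already produces the identical match of~$v_i$ with~$v'_j$, so every instance admits a solution with at least one match; therefore the global optimum is attained by one of these~$n^2$ matchings.

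The algorithm is then immediate: for each pair~$(i,j) \in [n]\times[n]$ form~$\sigma_{i,j}$, apply it to~$E$, count the identical matches with~$E'$ as above, return~$n$ minus the largest count found, and compare with~$k$. This runs in polynomial time, since there are~$n^2$ matchings, each processed in time polynomial in~$n$ and~$m$. The point requiring care — and the main obstacle in a careless write-up — is justifying that searching the~$n^2$ anchor pairs suffices: this rests on the uniqueness-of-$\sigma$ argument, together with the remark that, unlike in Proposition~\ref{thm:ei-complexity}, one may \emph{not} simply fix~$v_1$ on the left, because the optimal~$\sigma$ need not make~$v_1$ identical to any vote of~$E'$, which is precisely why all pairs rather than just~$n$ of them must be tried.
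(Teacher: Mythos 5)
Your proof is correct and follows essentially the same route as the paper's: enumerate the $n^2$ candidate matchings $\sigma_{i,j}$ induced by pinning a vote of $E$ onto a vote of $E'$, solve the resulting voter-matching problem for each, and justify this by noting that the optimum distance is at most $n-1$, so some vote is always matched perfectly and the optimal $\sigma$ must be one of the $\sigma_{i,j}$. Your added remark on why fixing only $v_1$ (as in the exact-isomorphism algorithm) would not suffice is a useful clarification but not a different argument.
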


\begin{proof}
  Given two elections~$E = (C,V)$ and~$E' = (C',V')$, 
  where~$|C| = |C'|$,~$V = (v_1, \ldots, v_n)$ 
  and~$V' = (v'_1, \ldots, v'_n)$, for each pair of votes~$(v_i,v'_j)$ we
  construct a mapping~$\sigma_{ij}\colon C\to C'$ so that 
$
  \pos_{v_i}(c)=\pos_{v'_j}(\sigma_{ij}(c))
$
for each~$c\in C$.
We  choose~$\sigma_{ij}$ that leads to the smallest~$d_\discrete$
  distance (we compute these distances using the
  \textsc{$\discid$ with Candidate Matching} problem).

  The correctness of the algorithm follows from the observation that
  the largest possible value of~$d_\discrete(E,E')$ is~$n-1$; we
  can always ensure that at least one vote from~$E$ matches perfectly
  a vote from~$E'$. Thus, there must be two votes 
  for which~$\sigma_{ij}$ is the optimal candidate matching. 
\end{proof}

Using the same reasoning as above, we can also easily show the following.

\begin{corollary}
  The \textsc{$\discid$ with Voter Matching} problem is in~$\p$.
\end{corollary}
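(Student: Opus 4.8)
The plan is to exploit the fact that, with the voter matching $\nu$ held fixed, the only remaining freedom is the candidate bijection $\sigma \in \Pi(C,C')$, together with the observation that the discrete distance is an all-or-nothing quantity. First I would note that for each index $i$, the votes $v_i$ and $v'_{\nu(i)}$ are both strict total orders over $m$ candidates, so there is exactly one bijection $\sigma_i \colon C \to C'$ that turns $v_i$ into $v'_{\nu(i)}$, namely the position-matching map determined by $\pos_{v_i}(c) = \pos_{v'_{\nu(i)}}(\sigma_i(c))$ for every $c \in C$. Consequently, for an arbitrary $\sigma$ we have $d_\discrete(\sigma(v_i), v'_{\nu(i)}) = 0$ precisely when $\sigma = \sigma_i$, and it equals $1$ otherwise.

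From this I would conclude that, for any fixed $\sigma$, the objective $\sum_{i=1}^n d_\discrete(\sigma(v_i), v'_{\nu(i)})$ equals $n - |\{i \in [n] : \sigma = \sigma_i\}|$. Minimizing over $\sigma$ is therefore the same as maximizing the number of indices whose forced bijection $\sigma_i$ agrees with $\sigma$; the optimum is attained by taking $\sigma$ to be a most frequently occurring value among $\sigma_1, \ldots, \sigma_n$. This mirrors the reasoning of Proposition~\ref{thm:aei-hamming}: since choosing $\sigma = \sigma_1$ already matches at least one pair and hence yields distance at most $n-1$, an optimal candidate matching necessarily coincides with some $\sigma_i$, so it suffices to search within the linearly-sized family $\{\sigma_i : i \in [n]\}$ rather than over all $m!$ bijections.

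The resulting algorithm is then clearly polynomial. I would compute $\sigma_1, \ldots, \sigma_n$ in $O(nm)$ time, group equal permutations (by sorting or hashing their length-$m$ vector encodings, in $O(nm \log n)$ time) to find the largest multiplicity $M$, and accept if and only if $n - M \le k$. I do not expect any genuinely hard step here: the whole content of the argument is the observation that the discrete distance collapses the search over candidate bijections to the $n$ per-vote forced maps, which is exactly the device used in Proposition~\ref{thm:aei-hamming}, now applied with the voter matching already pinned down instead of optimized over.
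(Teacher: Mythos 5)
Your proof is correct and follows essentially the same route as the paper, which simply invokes the reasoning of Proposition~\ref{thm:aei-hamming}: since the discrete distance is all-or-nothing, the optimal candidate bijection must coincide with one of the $n$ forced per-pair maps $\sigma_i$, and here the voter matching already tells you which pairs to consider. Your closed form $n - M$ via counting the most frequent $\sigma_i$ is a slightly tidier packaging of the same idea, but not a different argument.
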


The elections for which the~$d_\discrete$ distance
is small are, in fact, nearly identical (up to renaming of the
candidates and reordering the voters). In consequence, we do not
expect such elections to frequently appear in real-life (for example, for two elections with~$n$
voters and a relatively large number of candidates, generated according
to the impartial culture model, we would expect their
$d_\discrete$ distance to typically be~$n-1$).  Thus, we need
more fine-grained distances, such as~$d_\swap$ and
$d_\spearman$.  Unfortunately, they are~$\np$-hard to compute
and, indeed, for~$d_\swap$ we inherit this result from the Kemeny
rule.

The $d_\swap$-\textsc{ID} problem generalizes the problem of finding a Kemeny ranking 
(roughly speaking,
to find a Kemeny ranking for a given election, it suffices to find
the smallest swap-based isomorphism distance between this election
and a ``constant'' one, where all the voters report identical
preference orders).


\begin{proposition}\label{proposition:swapnphard}
  The~$d_\swap$-\textsc{ID} problem is~$\np$-complete, even for elections
  with four voters.
\end{proposition}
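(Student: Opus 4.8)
The plan is to prove the two halves of $\np$-completeness separately: membership in $\np$ is routine, and hardness comes from a reduction from the problem of computing a Kemeny ranking, exploiting exactly the connection flagged in the paragraph preceding the statement. For membership, note that a candidate matching $\sigma \in \Pi(C,C')$ together with a voter matching $\nu \in S_n$ forms a polynomial-size certificate: given the pair $(\sigma,\nu)$, the value $\sum_{i=1}^n d_\swap(\sigma(v_i), v'_{\nu(i)})$ is computable in polynomial time (each swap distance between two votes is polynomial-time computable), and we accept iff this value is at most $k$. So $d_\swap$-\textsc{ID} is in $\np$.

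For hardness I would reduce from \textsc{Kemeny Score}, using the classical fact that deciding whether an election admits a ranking of Kemeny score at most $k$ is $\np$-hard even when the election has only four voters (the four-voter bound is due to Dwork, Kumar, Naor, and Sivakumar, and I would cite it rather than reprove it). Given such a Kemeny instance, an election $E = (C,V)$ with $V = (v_1,\dots,v_4)$ together with a bound $k$, I would construct the $d_\swap$-\textsc{ID} instance $(E, E', k)$, where $E' = (C',V')$ has a candidate set $C'$ of the same cardinality as $C$ and four voters $V' = (w,w,w,w)$ all reporting one fixed preference order $w$, say $c'_1 \pref \cdots \pref c'_m$. Both elections then have exactly four voters, matching the claim in the statement.

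The heart of the correctness argument is the identity $d_\swap(E,E') = \min_{r \in \calL(C)} \sum_{i=1}^4 d_\swap(v_i, r)$, i.e., that $d_\swap(E,E')$ equals the Kemeny score of $E$. Starting from $d_\swap(E,E') = \min_{\nu \in S_4}\min_{\sigma \in \Pi(C,C')} \sum_{i=1}^4 d_\swap(\sigma(v_i), v'_{\nu(i)})$, I would make two simplifications. First, since every voter in $E'$ reports the same order $w$, the term $v'_{\nu(i)}$ equals $w$ irrespective of $\nu$, so the outer minimization over $\nu$ is vacuous. Second, because swap distance is invariant under a simultaneous renaming of candidates, $d_\swap(\sigma(v_i), w) = d_\swap(v_i, \sigma^{-1}(w))$; and as $\sigma$ ranges over all bijections in $\Pi(C,C')$, the order $r := \sigma^{-1}(w)$ ranges over all of $\calL(C)$. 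Hence the double minimum collapses to $\min_{r \in \calL(C)} \sum_{i=1}^4 d_\swap(v_i, r)$, so $d_\swap(E,E') \leq k$ holds iff $E$ has a ranking of Kemeny score at most $k$, completing the reduction.

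The main obstacle is not the reduction itself, which is essentially immediate once the definition is unwound, but rather having on hand the sharp form of Kemeny hardness that pins the number of voters to four; the reduction transfers this parameter faithfully, so the entire strength of the bound rests on the cited four-voter hardness. The only point that needs careful verification is the invariance step $d_\swap(\sigma(v_i), w) = d_\swap(v_i, \sigma^{-1}(w))$ together with the observation that $\sigma^{-1}(w)$ sweeps out every preference order in $\calL(C)$, since this is precisely what justifies replacing the optimization over candidate renamings by the optimization over rankings that defines the Kemeny score.
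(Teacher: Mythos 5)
Your proposal is correct and follows essentially the same route as the paper's proof: a reduction from \textsc{Kemeny Score} (using the four-voter hardness of Dwork, Kumar, Naor, and Sivakumar) to an instance whose second election consists of identical votes, with correctness resting on the observation that the voter matching is vacuous and the minimization over candidate bijections is exactly the minimization over rankings defining the Kemeny score. The only difference is that you spell out the $\np$-membership certificate and the invariance step $d_\swap(\sigma(v_i),w)=d_\swap(v_i,\sigma^{-1}(w))$ explicitly, which the paper treats as immediate.
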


\begin{proof}
  Membership in~$\np$ is easy to see. We give a reduction from the
  \textsc{Kemeny Score} problem. In the \textsc{Kemeny Score} problem
  we are given an election~$E = (C,V)$ and an integer~$k$, and we ask
  if there exists a preference order~$p$ over~$C$ such that~$\sum_{v
    \in V}d_\swap(v,p) \leq k$. The problem is~$\np$-complete~\citep{bar-tov-tri:j:who-won} and remains~$\np$-complete
  even for the case of four
  voters~\citep{dwo-kum-nao-siv:c:rank-aggregation}.  We reduce it to
  the~$d_\swap$-\textsc{ID} problem in a straightforward way: Given
  election~$E = (C,V)$ and~$k$, our reduction outputs election~$E$, a
  newly constructed election~$E' = (C',V')$, and an integer~$k$, 
  where~$C' = \{c'_1, \ldots, c'_{|C|}\}$ and every voter in~$V'$ has
  identical preference order~$v':c'_1 \succ \cdots \succ
  c'_{|C|}$.

  The reduction runs in polynomial time.
  Let us now argue that it is correct.  Let~$V = (v_1, \ldots, v_n)$
  and let~$V'$ consist of~$n$ copies of~$v'$.  We note that~${d_\swap}(E,E') = \min_{\sigma \in \Pi(C,C')}\sum_{i=1}^n
  d_\swap(\sigma(v_i),v') = \min_{\sigma' \in \Pi(C',C)}\sum_{i=1}^n
  d_\swap(v_i,\sigma'(v'))$, which is at most~$k$ if and only if there
  exists a preference order~$p\in \calL(C)$ such that~$\sum_{v \in V}d_\swap(v,p) \leq k$.
\end{proof}

Since the above reduction works even for elections with four voters,
having a matching between the voters cannot make the problem simpler
(this also follows from the fact that in our reduction one election
consists of identical votes).

\begin{corollary}
  \textsc{$d_\swap$-ID with Voter Matching} is~$\np$-complete.
\end{corollary}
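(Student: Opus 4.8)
The plan is to reuse the reduction from the proof of Proposition~\ref{proposition:swapnphard} essentially without change, after observing that it already produces an instance in which the voter matching carries no information at all.

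First, I would recall the structure of that reduction: it transforms a \textsc{Kemeny Score} instance $(E,k)$ into the pair $(E,E')$ together with the same bound $k$, where the target election $E' = (C',V')$ has all of its $n$ voters reporting one and the same preference order $v' \colon c'_1 \pref \cdots \pref c'_{|C|}$. In particular $v'_j = v'$ for every $j \in [n]$, so $E'$ is a ``constant'' election. This is precisely the feature the text already flags as the reason a voter matching cannot help.

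Second, I would record the key identity that makes the corollary fall out. For any voter permutation $\nu \in S_n$ and any candidate bijection $\sigma \in \Pi(C,C')$,
\[
  \sum_{i=1}^n d_\swap(\sigma(v_i), v'_{\nu(i)}) = \sum_{i=1}^n d_\swap(\sigma(v_i), v'),
\]
since substituting $v'_{\nu(i)} = v'$ collapses the left-hand side to the right-hand side. The value is therefore independent of $\nu$, so fixing any voter matching — say the identity permutation — leaves the remaining minimization over $\sigma$ exactly as it was in the unconstrained $d_\swap$-\textsc{ID} problem. Consequently the same output, now augmented with the identity voter matching, constitutes a reduction from \textsc{Kemeny Score} to \textsc{$d_\swap$-ID with Voter Matching}, and the correctness argument of Proposition~\ref{proposition:swapnphard} transfers verbatim: the optimal value is at most $k$ if and only if the Kemeny score of $E$ is at most $k$.

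Finally, membership in $\np$ is immediate, since with $\nu$ fixed a certificate is just a candidate bijection $\sigma$, whose induced sum $\sum_{i=1}^n d_\swap(\sigma(v_i), v'_{\nu(i)})$ is evaluable in polynomial time and compared against $k$. There is no genuine obstacle here; the only point that must be checked with care is that fixing $\nu$ does not shrink the set of achievable objective values, and this is exactly what the constancy of $E'$ guarantees. The corollary is thus a direct consequence of the earlier proposition rather than an independent hardness argument.
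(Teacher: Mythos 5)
Your proposal is correct and follows exactly the route the paper takes: the reduction from Proposition~\ref{proposition:swapnphard} already outputs an election $E'$ whose votes are all identical, so any voter matching is vacuous and the hardness transfers directly. The paper states this in one sentence; your write-up simply makes the constancy argument explicit.
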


The situation for~$d_\spearman$-ID is somewhat different. In this case
Litvak's rule \citep{litv:j:dist-cons}, defined analogously to 
the Kemeny rule, but for the Spearman distance, is polynomial-time
computable~\citep{dwo-kum-nao-siv:c:rank-aggregation} and we can lift
this result to the case of \textsc{$d_\spearman$-ID with Voter Matching}. Without the voter matching,~$d_\spearman$-ID is
$\np$-complete.

\begin{proposition}\label{pro:spear-with-voters}
  \textsc{$d_\spearman$-ID with Voter Matching} is in~$\p$.
\end{proposition}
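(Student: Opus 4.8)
The plan is to exploit the fact that, unlike the swap distance, the Spearman distance is a sum of independent per-candidate contributions, and that fixing the voter matching turns the search for an optimal candidate bijection into a clean assignment problem.

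First I would assume, without loss of generality, that the given voter matching~$\nu$ is the identity, so that voter~$v_i$ is paired with~$v'_i$; the task is then to find~$\sigma \in \Pi(C,C')$ minimizing~$\sum_{i=1}^n d_\spearman(\sigma(v_i),v'_i)$. The key step is to rewrite a single term of this sum. Since applying~$\sigma$ preserves positions, i.e.,~$\pos_{\sigma(v_i)}(\sigma(c)) = \pos_{v_i}(c)$ for every~$c \in C$, and since~$\sigma$ is a bijection onto~$C'$, reindexing~$c' = \sigma(c)$ gives
\[
  d_\spearman(\sigma(v_i),v'_i) = \sum_{c' \in C'} |\pos_{\sigma(v_i)}(c') - \pos_{v'_i}(c')| = \sum_{c \in C} |\pos_{v_i}(c) - \pos_{v'_i}(\sigma(c))|.
\]
Summing over~$i$ and swapping the order of summation, the total distance becomes~$\sum_{c \in C} w(c, \sigma(c))$, where I define the weight
\[
  w(c,c') = \sum_{i=1}^n |\pos_{v_i}(c) - \pos_{v'_i}(c')|,
\]
which depends only on the pair~$(c,c')$ and on the fixed voter matching.

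With this decomposition in hand, the rest is routine: the objective~$\sum_{c \in C} w(c,\sigma(c))$ is minimized by a minimum-weight perfect matching in the complete bipartite graph on~$C$ and~$C'$ with edge weights~$w(\cdot,\cdot)$. I would precompute all positions~$\pos_{v_i}(c)$ and~$\pos_{v'_i}(c')$, then fill the~$m \times m$ cost matrix~$w$ in~$O(m^2 n)$ time, and finally solve the assignment problem with a standard polynomial-time algorithm (e.g., the Hungarian method). The optimal matching yields the optimal~$\sigma$, and its weight equals the value of~$d_\spearman(E,E')$ under the given voter matching, establishing membership in~$\p$.

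The main thing to get right is the decomposition step: the whole argument hinges on the Spearman distance splitting additively over candidates once the voters are paired, so that the individual candidate contributions decouple and can be optimized jointly by a single matching. This is precisely what fails for the swap distance---there the cost of matching one pair of candidates interacts with how the other candidates are matched---which is why \textsc{$d_\swap$-ID with Voter Matching} is~$\np$-complete whereas the Spearman variant is tractable.
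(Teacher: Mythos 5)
Your proof is correct and follows essentially the same route as the paper: both decompose the Spearman objective into per-candidate costs $\sum_{i=1}^n |\pos_{v_i}(c) - \pos_{v'_{\nu(i)}}(c')|$ once the voter matching is fixed, and then solve a minimum-cost perfect matching on the bipartite candidate graph. The additional remark contrasting this with the swap distance is a nice touch but not part of the paper's argument.
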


\begin{proof}
  Let~$E = (C,V)$ and~$E' = (C',V')$ be two elections, 
  where~$|C| = |C'|$,~$V = (v_1, \ldots, v_n)$, and~$V' = (v'_1, \ldots, v'_n)$, and let~$\nu \in S_n$ be the given
  voter matching. 
  For a bijection~$\sigma\colon C \rightarrow C'$, the Spearman
  distance between~$E$ and~$E'$ is~$\sum_{i=1}^n d_\spearman(\sigma(v_i),v'_{\nu(i)})$, which is:
  \begin{align*}
    \sum_{c' \in C'} \sum_{i=1}^n |\pos_{v_i}(\sigma^{-1}(c')) - \pos_{v'_{\nu(i)}}(c')|.
  \end{align*}
  In consequence, the cost induced by matching candidates~$c \in C$
  and~$c' \in C'$ is~$\cost(c,c') = \sum_{i=1}^n |\pos_{v_i}(c) -
  \pos_{v'_{\nu(i)}}(c')|$. To solve our problem, it suffices to find
  a minimum cost perfect matching in a bipartite graph where
  candidates from~$C$ are the vertices on the left, candidate from~$C'$ 
  are the vertices on the right, and for each~$c \in C$ and~$c' \in C'$ 
 we have an edge from~$c$ to~$c'$ with cost~$\cost(c,c')$.
\end{proof}

The final missing result was proved by~\cite{fal-sko-sli-szu-tal:c:isomorphism}

\begin{theorem}[\cite{fal-sko-sli-szu-tal:c:isomorphism}]\label{theorem:spearmannphard}
  The~$d_\spearman$-\textsc{ID} problem is~$\np$-complete.
\end{theorem}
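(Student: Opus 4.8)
The plan is to show membership in $\np$ and then prove $\np$-hardness. Membership is immediate: a pair $(\nu,\sigma)\in S_n\times\Pi(C,C')$ is a polynomial-size certificate, and given it one evaluates $\sum_{i=1}^n d_\spearman(\sigma(v_i),v'_{\nu(i)})$ in polynomial time and compares it with $k$. So all the work is in the hardness reduction.

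It is useful to first reformulate the objective. Writing the position matrices $P_{ic}=\pos_{v_i}(c)$ for $E$ and $Q_{jc'}=\pos_{v'_j}(c')$ for $E'$, the quantity to be minimized equals $\min_{\nu,\sigma}\sum_{i,c}|P_{ic}-Q_{\nu(i),\sigma(c)}|$; that is, we seek a row permutation $\nu$ and a column permutation $\sigma$ aligning $Q$ to $P$ in $\ell_1$, subject to the extra constraint that each row of $P$ and of $Q$ is a permutation of $[m]$. This lens makes clear why the reduction for $d_\swap$ cannot be reused: there one takes $E'$ to be $n$ identical votes, turning the distance into a consensus-ranking (Kemeny) computation, but the Spearman analogue of that consensus problem is Litvak's rule, which is polynomial (this is exactly what powers Proposition~\ref{pro:spear-with-voters}). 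Moreover, by Proposition~\ref{pro:matching} and Proposition~\ref{pro:spear-with-voters}, fixing either $\sigma$ or $\nu$ leaves a tractable min-cost bipartite matching. Hence the hardness must arise from the genuine \emph{coupling} of the two permutations, and a hard instance has to force $\sigma$ and $\nu$ to be chosen in a mutually dependent way.

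Accordingly, I would reduce from an $\np$-complete problem with a clean assignment/packing flavour (for instance Three-Dimensional Matching, or an equivalent exact-matching problem) and build gadget elections in which the optimal candidate bijection $\sigma$ is forced to encode a candidate solution, while the voter matching $\nu$ performs a consistency check. The intended design is that, for a faithful $\sigma$, each voter of $E$ can be paired with a voter of $E'$ whose position vector agrees with $\sigma$ on the relevant candidate block, keeping the total displacement at exactly the threshold $k$; any $\sigma$ that does not correspond to a valid solution should leave at least one voter of $E$ without a cheap partner, so that the best available $\nu$ is forced to incur strictly more than $k$. Dedicated ``penalty'' candidates and padding voters would be used so that every deviation from a faithful encoding, in either matching, is provably expensive, and one must also verify throughout that all constructed position vectors are genuine permutations of $[m]$ so the instance is realizable as an election.

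The main obstacle is precisely this gadget construction together with the calibration of the single global threshold $k$. Because Spearman distance aggregates absolute positional displacements over all candidates and all voters at once, proving correctness requires, for every candidate and every voter block, both the exact ``honest'' displacement under a faithful $(\sigma,\nu)$ and a provable lower bound on the extra displacement forced by any dishonest choice, and then checking that these two quantities straddle $k$ for all no-instances. Making the honest cost equal $k$ while simultaneously guaranteeing that cheating in $\sigma$ \emph{or} in $\nu$ is penalized enough — despite the fact that either permutation alone can be optimized freely and cheaply — is where the real difficulty lies; the membership argument and the ``fixing one matching is easy'' observations are routine by contrast.
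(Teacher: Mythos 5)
There is a genuine gap here: your text is a plan for a proof, not a proof. The membership argument is fine, and your diagnosis of where the hardness must come from is accurate and well motivated --- you correctly observe that the Kemeny-style reduction used for $d_\swap$ collapses for Spearman because the consensus problem (Litvak's rule) is polynomial, and that Propositions~\ref{pro:matching} and~\ref{pro:spear-with-voters} show that fixing either $\sigma$ or $\nu$ alone leaves a tractable matching problem, so any reduction must genuinely couple the two permutations. But after this diagnosis you stop at ``I would reduce from an $\np$-complete problem with an assignment/packing flavour and build gadget elections in which \dots''. The gadgets are never built: you do not specify the source problem precisely, do not construct the two elections, do not define the threshold $k$, and do not argue either direction of correctness. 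You acknowledge this yourself when you write that the main obstacle is ``precisely this gadget construction together with the calibration of the single global threshold $k$'' --- that obstacle \emph{is} the theorem. In particular, the requirement you flag (that every constructed position vector must be a genuine permutation of $[m]$, i.e., that the gadget matrices be realizable as actual votes) is one of the real technical difficulties, and nothing in the proposal shows it can be met while keeping the honest/dishonest cost gap.

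For context, the dissertation itself does not prove this statement either: Theorem~\ref{theorem:spearmannphard} is imported from \cite{fal-sko-sli-szu-tal:c:isomorphism} and explicitly marked as not a contribution of the thesis, so there is no in-paper proof to compare against. Judged on its own, your submission is a correct problem analysis plus a plausible but unexecuted strategy; to turn it into a proof you would need to commit to a concrete source problem, write down the candidate and voter gadgets explicitly (with their preference orders), compute the honest cost exactly, and prove a lower bound on the cost of every unfaithful pair $(\sigma,\nu)$ that strictly exceeds $k$.
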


\subsection{ILP}
We provide integer linear programs (ILPs) for computing $d_\spearman$ and $d_\swap$.



\begin{proposition}
  There is an ILP for~$d_\spearman$. 
\end{proposition}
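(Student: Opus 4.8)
The plan is to write a polynomial-size integer linear program whose optimal value equals $d_\spearman(E,E')$, faithfully encoding the double minimization over the candidate bijection $\sigma \in \Pi(C,C')$ and the voter permutation $\nu \in S_n$ that appears in the definition of the isomorphism distance. Write $E=(C,V)$ with $V=(v_1,\dots,v_n)$ and $E'=(C',V')$ with $V'=(v'_1,\dots,v'_n)$, and assume $|C|=|C'|=m$. I would first record the elementary but crucial observation that, after substituting $c'=\sigma(c)$, the per-voter Spearman term rewrites as
\[
  d_\spearman(\sigma(v_i),v'_j)=\sum_{c\in C}\bigl|\pos_{v_i}(c)-\pos_{v'_j}(\sigma(c))\bigr|,
\]
so the whole objective decomposes into a sum over triples consisting of a matched voter pair $(i,j)$ and a matched candidate pair $(c,c')$, each carrying a constant coefficient $a_{ijcc'}:=|\pos_{v_i}(c)-\pos_{v'_j}(c')|$ that can be precomputed.

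Next I would introduce two families of binary assignment variables: $x_{c,c'}$, equal to $1$ iff $\sigma(c)=c'$, and $y_{i,j}$, equal to $1$ iff voter $v_i$ is matched to $v'_j$ (that is, $\nu(i)=j$). Standard assignment constraints make each family a permutation:
\[
  \sum_{c'\in C'} x_{c,c'}=1,\quad \sum_{c\in C} x_{c,c'}=1,\quad \sum_{j=1}^{n} y_{i,j}=1,\quad \sum_{i=1}^{n} y_{i,j}=1.
\]

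The main obstacle, and the only nontrivial modelling step, is that the cost of candidate pair $(c,c')$ is incurred precisely when voters $i$ and $j$ are matched, so the objective genuinely couples the two matchings through the \emph{bilinear} product $x_{c,c'}\,y_{i,j}$. I would linearize this by introducing auxiliary variables $z_{ijcc'}\ge 0$ intended to represent $x_{c,c'}y_{i,j}$, together with the constraint
\[
  z_{ijcc'}\ge x_{c,c'}+y_{i,j}-1,
\]
and minimizing $\sum_{i,j,c,c'} a_{ijcc'}\,z_{ijcc'}$. Because every coefficient $a_{ijcc'}$ is nonnegative and we minimize, at optimum each $z_{ijcc'}$ is driven down to $\max(0,\,x_{c,c'}+y_{i,j}-1)$, which equals the product $x_{c,c'}y_{i,j}$ whenever $x$ and $y$ are $0/1$; hence no upper-bounding constraints are needed and the $z$ variables may even be left continuous.

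Finally I would verify correctness in both directions: any feasible $0/1$ assignment corresponds to a unique pair $(\sigma,\nu)$, and the objective then evaluates to exactly $\sum_i d_\spearman(\sigma(v_i),v'_{\nu(i)})$; conversely, every $(\sigma,\nu)$ yields a feasible solution of that cost. Thus the program's optimum equals $d_\spearman(E,E')$. The program has $O(n^2m^2)$ variables and constraints, so it is of polynomial size, which is all the statement requires. The one place to argue carefully is that the linearization is exact under minimization with nonnegative coefficients; everything else is routine assignment-polytope bookkeeping.
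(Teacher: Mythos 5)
Your proposal is correct and follows essentially the same route as the paper: two binary assignment matrices for the candidate and voter matchings, a product variable per voter-pair/candidate-pair quadruple, precomputed costs $|\pos_{v_i}(c)-\pos_{v'_j}(c')|$, and a linearization of the bilinear coupling. The only (immaterial) difference is the direction of the linearization: the paper enforces $P_{k,k',i,i'}\leq N_{k,k'}$, $P_{k,k',i,i'}\leq M_{i,i'}$ together with equality sum constraints, whereas you use the lower bound $z\geq x+y-1$ and let minimization with nonnegative coefficients drive $z$ to the product, which is equally valid.
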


\begin{proof}

  Let~$E = (C,V)$ and~$E' = (C',V')$ be the elections we wish to
  compute the distance for, with~$C = \{c_1, \dots, c_m\}$,~$C' = \{c'_1, \dots, c'_m\}$,~$V = \{v_1, \dots, v_n\}$, and~$V' = \{v'_1, \dots, v'_n\}$.
%
%
%
  For each~$k, k' \in [n]$, we define 
  a binary variable~$N_{k, k'}$ with the intention that value~$1$ indicates that
  voter~$v_k$ is matched to voter~$v'_{k'}$. 
  Similarly, for each~$i, i' \in [m]$, we define 
  a binary variable~$M_{i, i'}$ with the intention that value~$1$ means that candidate~$c_i$ is
  matched to candidate~$c'_{i'}$.
  For each~$k, k' \in [n]$ and each~$i,i' \in [m]$, we define a binary
  variable~$P_{k,k',i,i'}$ with the intention that~$P_{k,k',i,i'} = N_{k,k'}\cdot M_{i,i'}$.
  We introduce the following constraints:
  \begin{align}
    \label{ilp:1}
    &\textstyle\sum_{k' \in [n]} N_{k, k'} = 1, \forall k\in[n]; \text{ }
    \textstyle\sum_{k \in [n]} N_{k, k'} = 1, \forall k' \in[n] \\
    \label{ilp:2}
    &\textstyle\sum_{i' \in [m]} M_{i,i'} = 1, \forall i\in[m]; \text{ }
    \textstyle\sum_{i \in [m]} M_{i,i'} = 1, \forall i' \in[m]\! \\
    \label{ilp:3}
    &\textstyle\sum_{k' \in [n],i' \in [m]} P_{k,k',i,i'} = 1, \text{\quad}\forall i \in [m], k \in [n]\\
    \label{ilp:4}
    &\textstyle\sum_{k \in [n],i \in [m]} P_{k,k',i,i'} = 1, \text{\quad}\forall i' \in [m], k' \in [n]\\
    \label{ilp:5}
    &P_{k, k', i, i'} \leq N_{k, k'}, \text{\quad\quad\quad\quad\quad\ } \forall i,i' \in [m], k,k' \in [n] \\
    \label{ilp:6}
    &P_{k, k', i, i'} \leq M_{i, i'}, \text{\quad\quad\quad\quad\quad\ } \forall i,i' \in [m], k,k' \in [n] 
  \end{align}
  Constraints~\eqref{ilp:1} and~\eqref{ilp:2} ensure that variables~$N_{k,k'}$ and~$M_{i,i'}$ describe matchings between voters and
  candidates, respectively.  Constraints~\eqref{ilp:3}--\eqref{ilp:6}
  implement the semantics of the~$P_{k,k',i,i'}$ variables (the
  former two ensure that for a given vote/candidate pair, there is
  exactly one vote/candidate pair in the other election to which they
  are matched; the latter two ensure connection between the~$P_{k,k',i,i'}$ variables and the~$N_{k,k'}$ and~$M_{i,i'}$ variables).
%
%
%
%
  The optimization goal is to minimize~$\textstyle\sum_{k, k' \in[n],\ i, i' \in [m]} P_{k, k',i,i'} \cdot
  |\pos_{v_k}(c_i) - \pos_{v'_{k'}}(c'_{i'})|$
  (which, for values~$P_{k,k',i,i'}$ that satisfy the constraints of
  the program, defines the Spearman distance for
  the given matchings). Values~$|\pos_{v_k}(c_i) - \pos_{v'_{k'}}(c'_{i'})|$ are precomputed.
\end{proof}

\begin{proposition}
  There is an ILP for~$\SWAP$.
\end{proposition}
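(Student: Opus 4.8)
The plan is to reuse the scaffolding of the Spearman ILP and adapt it to the fact that the swap distance is a \emph{pairwise} quantity rather than a \emph{positionwise} one. As before, I would introduce binary voter-matching variables $N_{k,k'}$ (for $k,k'\in[n]$), binary candidate-matching variables $M_{i,i'}$ (for $i,i'\in[m]$), and the auxiliary product variables $P_{k,k',i,i'}$ with their constraints~\eqref{ilp:1}--\eqref{ilp:6}, so that any feasible solution encodes a voter permutation $\nu\in S_n$ and a candidate bijection $\sigma\in\Pi(C,C')$ with $P_{k,k',i,i'}=N_{k,k'}\cdot M_{i,i'}$.

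The essential new ingredient is that $d_\swap(\sigma(v_k),v'_{k'})$ counts candidate pairs whose relative order disagrees between the two votes, so the contribution of one matched voter pair depends on \emph{two} matched candidate pairs at once. I would therefore introduce, for each $k,k'\in[n]$ and $i,i',j,j'\in[m]$, a binary variable $Q_{k,k',i,i',j,j'}$ whose intended meaning is $Q_{k,k',i,i',j,j'}=N_{k,k'}\cdot M_{i,i'}\cdot M_{j,j'}$. Since the $P$ variables already realize the product $N\cdot M$, I can view $Q_{k,k',i,i',j,j'}=P_{k,k',i,i'}\cdot M_{j,j'}$ as an ordinary product of two binaries and enforce it by the standard linearization $Q_{k,k',i,i',j,j'}\le P_{k,k',i,i'}$, $Q_{k,k',i,i',j,j'}\le M_{j,j'}$, and $Q_{k,k',i,i',j,j'}\ge P_{k,k',i,i'}+M_{j,j'}-1$.

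For the objective I would precompute a constant $w_{k,k',i,i',j,j'}\in\{0,1\}$ equal to $1$ exactly when the pair $\{c_i,c_j\}$ in $v_k$ and the matched pair $\{c'_{i'},c'_{j'}\}$ in $v'_{k'}$ are oppositely ordered, i.e. $w=1$ iff $\sgn(\pos_{v_k}(c_i)-\pos_{v_k}(c_j))\ne\sgn(\pos_{v'_{k'}}(c'_{i'})-\pos_{v'_{k'}}(c'_{j'}))$. The optimization goal is then to minimize $\sum_{k,k'\in[n]}\sum_{i<j}\sum_{i',j'\in[m]}Q_{k,k',i,i',j,j'}\cdot w_{k,k',i,i',j,j'}$, where restricting the inner sum to $i<j$ ensures that each unordered candidate pair is counted once.

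Correctness follows the same template as for the Spearman ILP. Once $N$ and $M$ fix $\nu$ and $\sigma$, then for each voter pair with $N_{k,k'}=1$ and each pair $i<j$ the factors $M_{i,i'}M_{j,j'}$ select the unique image $(i',j')$, so precisely the $Q$ variables consistent with $(\sigma,\nu)$ equal $1$; summing their weights reproduces $\sum_k d_\swap(\sigma(v_k),v'_{\nu(k)})$, which is the definition of the swap-isomorphism distance. Because the weights are nonnegative and we minimize, the constraint $Q\ge P+M-1$ forces $Q$ up to the product when all factors are $1$ while minimization drives every other $Q$ to $0$, so the three linearization inequalities are tight at the optimum. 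The program has $O(n^2m^4)$ variables and constraints, which is polynomial, so this is a legitimate ILP. I expect the only real subtlety to be the bookkeeping in the correctness argument---in particular, checking that summing over $i<j$ rather than over all ordered pairs counts each inversion exactly once---rather than any genuine difficulty, since the computational hardness established in~\Cref{proposition:swapnphard} is absorbed by the fact that an ILP may encode an $\np$-hard problem.
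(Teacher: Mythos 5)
Your proposal is correct and follows essentially the same route as the paper: the same $N$, $M$, $P$ scaffolding plus a six-index binary variable encoding $N_{k,k'}M_{i,i'}M_{j,j'}$, with a precomputed inversion indicator in the objective. The only difference is cosmetic---you enforce the product semantics with the standard linearization $Q \ge P_{k,k',i,i'} + M_{j,j'} - 1$ together with the two upper bounds, whereas the paper uses upper bounds $R \le P_{k,k',i,i'}$, $R \le P_{k,k',j,j'}$ plus the global cardinality constraint $\sum R = n\binom{m}{2}$; both correctly pin the six-index variables to their intended values.
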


\begin{proof}

  The proof for~$\SWAP$ is very similar to the one for~$\SPEAR$. For~$\SWAP$ we need all the constraints presented for~$\SPEAR$ and three more, so, we focus only on the additional ones.
  For each~$k, k' \in [n]$ and each~$i,i',j,j' \in [m]$, we define a binary
  variable~$R_{k,k',i,i',j,j'}$ with the intention that~$R_{k,k',i,i',j,j'} = N_{k,k'}\cdot M_{i,i'}\cdot M_{j,j'}$. Note that, we assume that~$i < j$,~$i' \neq j'$.
  For~$\SPEAR$ it suffices to have four indices (two for voters and two for candidates), because to compute the Spearman distance between two matched votes we only need to iterate over each pair of matched candidates. However, for the~$\SWAP$ we need six indices (two for voters and four for candidates), because to compute the swap distances between two matched votes we need to iterate over each pair of pairs of candidates.

  We introduce the following constraints:
  \begin{align}
    \label{ilp:13}
    &R_{k, k', i,i',j,j'} \leq P_{k,k',i,i'}, \text{\quad\quad\quad} \forall \substack{i',j'\in[m] \\ i < j, i' \neq j'} \in [m], k,k' \in [n] \\
    \label{ilp:14}
    &R_{k, k', i,i',j,j'} \leq P_{k,k',j,j'}, \text{\quad\quad\quad} \forall \substack{i',j'\in[m] \\ i < j, i' \neq j'} \in [m], k,k' \in [n] \\
    \label{ilp:15}
    &\textstyle\sum_{k, k' \in [n], \substack{i',j'\in[m] \\ i < j, i' \neq j'}} R_{k, k', i,i',j,j'} = n \cdot \binom{m}{2} 
  \end{align}
  
  Constraints~\eqref{ilp:13} and~\eqref{ilp:14} ensure that~$R_{k, k', i,i',j,j'}$ can be true only if~$P_{k,k',i,i'}$ and~$P_{k,k',j,j'}$ are true. And constraints~\eqref{ilp:15} ensure that proper number of~$R$ variables are equal to one. 

  The optimization goal is to minimize:
  
\centering
$
\textstyle\sum_{k, k' \in [n], \substack{i',j'\in[m] \\ i < j, i' \neq j'}} R_{k, k', i,i',j,j'} \times 
\begin{cases}
1 \text{ if } pos_{v_k}(c_{i}) > pos_{v'_{k'}}(c'_{j}) \\ \text{ \ \ and }  pos_{v_k}(c_{i'}) < pos_{v'_{k'}}(c'_{j'})\\
1 \text{ if } pos_{v_k}(c_{i}) < pos_{v'_{k'}}(c'_{j})  \\ \text{ \ \ and } pos_{v_k}(c_{i'}) > pos_{v'_{k'}}(c'_{j'})\\
0 \text{ otherwise }   \\
\end{cases} 
$ \\

where~$pos$ values are precomputed.
\end{proof}

While the ILPs described above find optimal solutions, they can be
quite slow to solve for any but the smallest instances. Thus, in practice when we want to compute particular distances, instead of ILPs, we have to use a brute-force (BF) algorithm.

\subsubsection{Comparison}

To compare ILP and BF approaches, we conducted a simple experiment in which we computed the Spearman and swap distances for small numbers of candidates and voters and compared the time needed to find the optimal solution. For the Spearman distance, we use elections with $3,\dots,8$ candidates and $8$ voters, while for the swap distances, we use elections with $3,4,5$ candidates and $5$ voters. The results are presented in~\Cref{tab:ilp_bf_spearman} (for Spearman) and~\Cref{tab:ilp_bf_swap} (for swap). 
In each cell, we have the average time (in seconds) needed to compute a single distance between two random impartial culture elections using ILP and BF. The presented values are averages over~$1000$ iterations. 
The differences are extreme, with BF approach being, literally speaking, thousands times faster. When computing the swap distance with~$5$ candidates and~$5$ voters, the BF approach was more than~$300 000$ times faster than the ILP approach.

\begin{table}[h]
    \centering
    \begin{tabular}{c|cccccc}
      \toprule
      Method & 3 & 4 & 5 & 6 & 7 & 8 \\
      \midrule
      ILP &~$0.12$s  &~$0.61$s  &~$1.69$s & ~$5.49$s &~$15.86$s &~$42.82$s \\
      BF &~$<0.01$s &~$<0.01$s  &~$<0.01$s &~$<0.01$s &~$0.01$s &~$0.05$s  \\
      \bottomrule
    \end{tabular}
    \caption{\label{tab:ilp_bf_spearman}Time needed to compute the Spearman distance between two random impartial culture instances. All the values are presented in seconds. We consider elections with $3,\dots,8$ candidates and $8$ voters.}
\end{table}

\begin{table}[h]
    \centering
    \begin{tabular}{c|ccc}
      \toprule
      Method & 3 & 4 & 5 \\
      \midrule
      ILP &~$0.55$s  &~$12.15$s  &~$97.54$s \\
      BF &~$<0.01$s &~$<0.01$s  &~$<0.01$s  \\
      \midrule
    \end{tabular}
    \caption{\label{tab:ilp_bf_swap}Time needed to compute the swap distance between two random impartial culture instances. All the values are presented in seconds. We consider elections with $3,4,5$ candidates and $5$ voters.}
\end{table}


\subsection{Visualization of the Distances}
In this section we present "cross maps" of preferences, a similar experiment to the one presented in \Cref{ordinal_map_pref}.
The main difference is that, in~\Cref{ordinal_map_pref} on each single picture we presented one election, while now on each single picture we present two elections embedded jointly.

Given two elections, we compute the mapping between the candidates from these elections, such that it minimizes the swap distance between them. Next, given the mapping, we proceed as for previous maps of preferences and simply compute the swap distance between each pair of votes from both elections.

In \Cref{fig:dicroscope_part_1} we present cross maps for the eight following models: impartial culture, antagonism, the Norm-Mallows model with $\normphi = 0.2$, the 0.25-Norm-Mallows model with $\normphi = 0.2$, the urn model with $\alpha=0.2$, and SPOC. We generated $16$ elections (two from each model). Eight of them are as columns (red ones), and the other eight of them are as rows (blue ones). When presented jointly, all red points represent the \emph{column} election and all blue points represent the \emph{row} one.

We analyze the results row by row. The votes from impartial culture occupy the whole space, and hence other models, when combined with it, should look similar to how they look alone. We observe this for all instances with the exception for those from $0.25$-Norm-Mallows model, which are shifted towards the edge.

In the next row we have antagonism, which is slightly ``squeezing'' all other instances. As expected, when combined with $0.25$-Norm-Mallows, two extreme AN votes match the centers of two mallows groups. In the following discussion, we refer to these groups as the smaller group and the larger group.

\begin{figure}
    \centering
    \includegraphics[width=14cm]{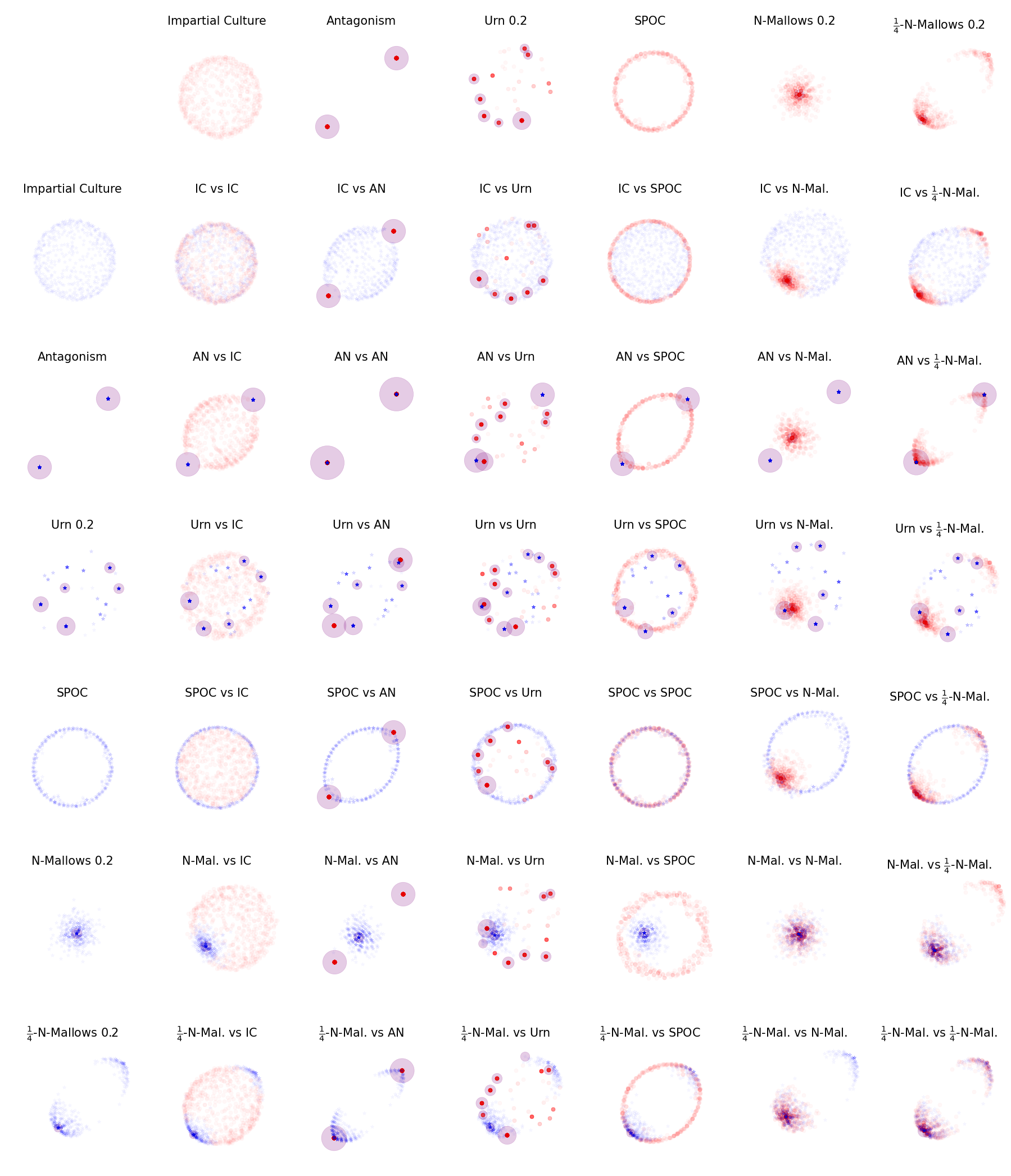}
    \caption{Cross maps of preferences ($10$ candidates, $2 \times 500$ voters).}
    \label{fig:dicroscope_part_1}
\end{figure}

\begin{figure}
    \centering
    \includegraphics[width=14cm]{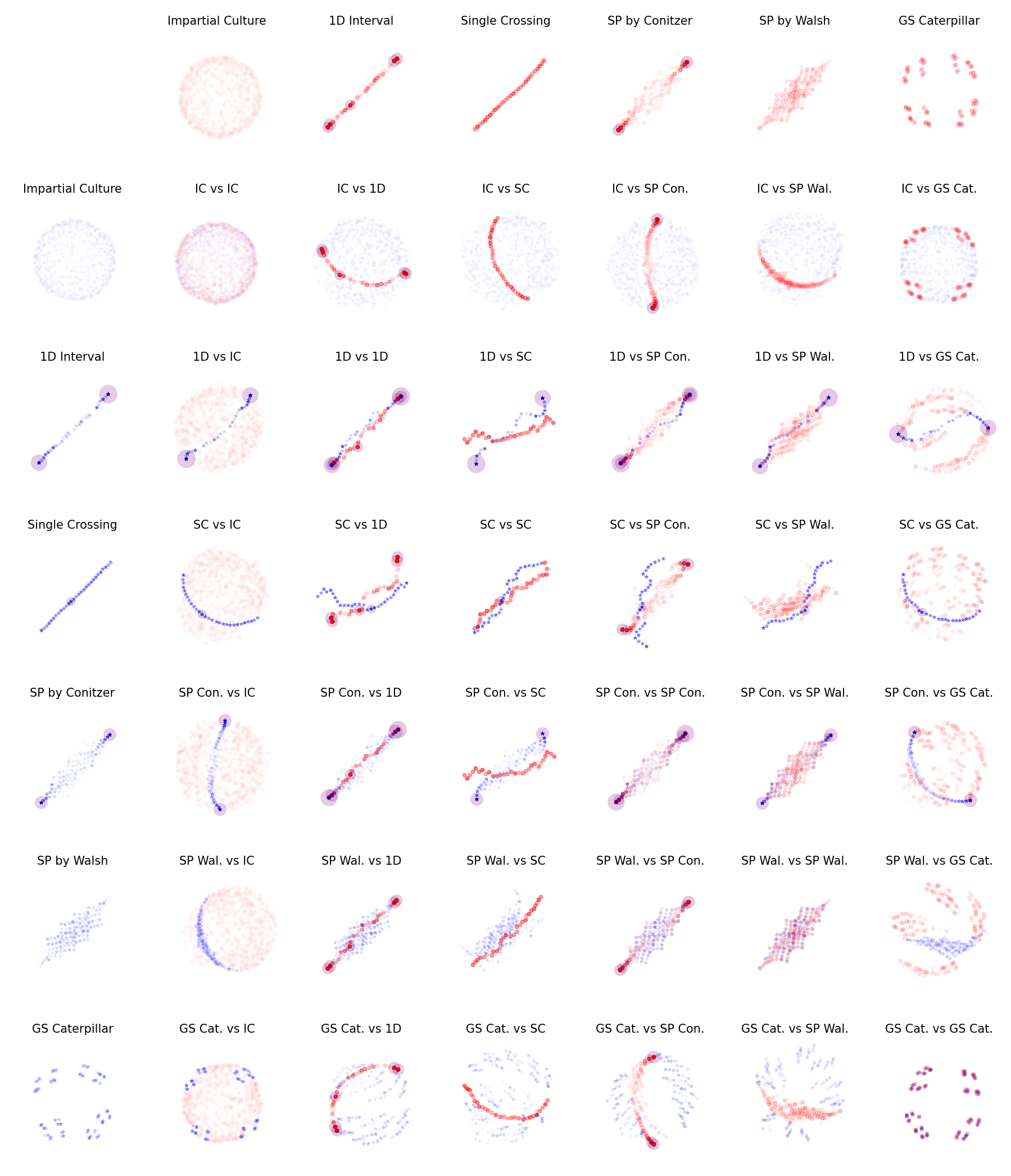}
    \caption{Cross maps of preferences ($10$ candidates, $2 \times 500$ voters).}
    \label{fig:dicroscope_part_2}
\end{figure}

At the diagonal, we have pairs of elections from the same model embedded jointly. Interestingly, most of the elections, when embedded jointly with another election from the same model, produce a very similar picture to those, when they are embedded separately. Moreover, many points from both elections overlap. We can see this most clearly for the $\AN$ elections. When we embed two $\AN$ elections, they fully overlap (which should not be surprising because they are isomorphic). Similarly, if we look, for example, at two SPOC elections embedded jointly, they also strongly overlap. However, it is not the case for all the models, for example, two elections from the urn model are relatively independent of each other, and points from both elections occupy quite different places in the picture.

Then we have SPOC, which is maintaining its circular shape in all pictures. However, sometimes when combined with Norm-Mallows, it is getting less sharp.
When Norm-Mallows is combined with 0.25-Norm-Mallows, it is placing its center over the larger group of 0.25-Norm-Mallows.

In \Cref{fig:dicroscope_part_2} we present another set of cross maps, for the eight following models: impartial culture, 1D Interval, single-crossing, single-peaked by Walsh, single-peaked by Conitzer, and GS Caterpillar.

The single-crossing, Walsh, Conitzer, and 1D Interval models all have the same oblong shape. Interestingly, when embedding jointly Walsh and Conitzer or 1D Interval elections, the oblong shapes are put together nicely, i.e., one over the other. Nonetheless, when single-crossing elections were embedded jointly with other longitudinal instances, they form a crossing-over shape.

Caterpillar group-separable elections are changing a lot depending on what other instances they are embedded with. When combined with impartial culture, they present a very similar shape to the one when embedded alone. However, when embedded jointly with an oblong-shaped instance, they disperse significantly.

\vspace{0.2cm}
\begin{conclusionbox}
\begin{itemize}
    \item Presented maps of preferences confirm the general intuition behind different statistical cultures and the relations between them. Later on, when we see the map of elections in~\Cref{fig:main_matrix_iso}, if two elections are similar (i.e., their red and blue points are close to covering each other in~\Cref{fig:dicroscope_part_1,fig:dicroscope_part_2}) then, indeed, they are next to each other on the map.
\end{itemize}
\end{conclusionbox}

\subsection{Isomorphic Maps of Elections}\label{sec:iso_maps_of_elections}

Next we present our first maps of elections. We start with the description of the concept, and later we present the technical details.
To build a map, we proceed as follows. First, we generate a number of instances of elections. Second, we compute a certain distance between each pair of them. Third, we embed these distances in a two-dimensional Euclidean space. Voilà, we obtain a map of elections. Now, we will go over these three steps with more technical details.

We assembled a number of elections generated using statistical cultures from~\Cref{ch:stat_cult} and four compass
elections that capture four different types of (dis)agreement among voters, identity, uniformity, antagonism, and stratification (see~\Cref{ch:stat_cult}). We expect good metrics to put these compass elections far apart.

\begin{table}[t]
  \centering
{
  \begin{tabular}{lcc}
    \toprule
    Model & Number of Elections \\
    
    \midrule
    Impartial Culture           & 20 \\
    \midrule
    Urn       & 60 \\  
    Mallows    & 60 \\
    \midrule
    Group-Separable (Balanced)     & 20 \\
    Group-Separable (Caterpillar)    & 20 \\
    Single-Peaked (Conitzer)  & 20 \\
    Single-Peaked (Walsh)      & 20 \\
    SPOC (Conitzer)                & 20 \\
    Single-Crossing           & 20 \\
    \midrule
    Interval        & 20 \\
    Disc         & 20 \\
    Cube         & 20 \\
    Circle         & 20 \\
    \midrule
    Compass ($\ID$,~$\AN$,~$\UN$,~$\ST$) & 4 \\
    \bottomrule
  \end{tabular} }
    \caption{\label{tab:testbed}List of selected statistical cultures and numbers of elections from these cultures accordingly.}

\end{table}

We list the exact distributions, and numbers of generated elections used in the map in~\Cref{tab:testbed}.
All in all, we generated~$340+4$ elections, each with~$10$ candidates and $50$ voters, 
some from very popular statistical cultures, and some
from less typical ones, such as the SPOC; plus four compass elections.

Regarding the parameters, for the Norm-Mallows model we choose~$\normphi$ uniformly at random. For the urn model, we choose~$\alpha$ according to the Gamma distribution\footnote{Popular probability distribution parametrized by shape and scale parameters.} with shape parameter~$k=0.8$ and scale parameter~$\theta=1$ (we will discuss this in detail in~\Cref{ch:applications}).
For all Euclidean models, we sample the ideal points of candidates and voters uniformly at random from:
\begin{itemize}
    \item $[0,1]$ interval for the Interval model;
    \item disc with radius~$r=0.5$ and center in~$(0,0)$ for the Disc model;
    \item $[0,1]^3$ cube for the Cube model;
    \item circle with radius~$r=0.5$ and center in~$(0,0)$ for the Circle model.
\end{itemize}

\begin{figure}[t]
   
    \begin{subfigure}[b]{0.325\textwidth}
     \centering
    \includegraphics[width=4.4cm]{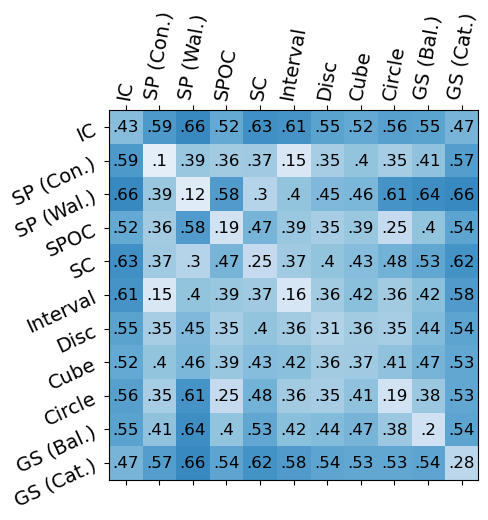}
    \caption{Swap}
    \end{subfigure}
    \begin{subfigure}[b]{0.325\textwidth}
     \centering
     \includegraphics[width=4.4cm]{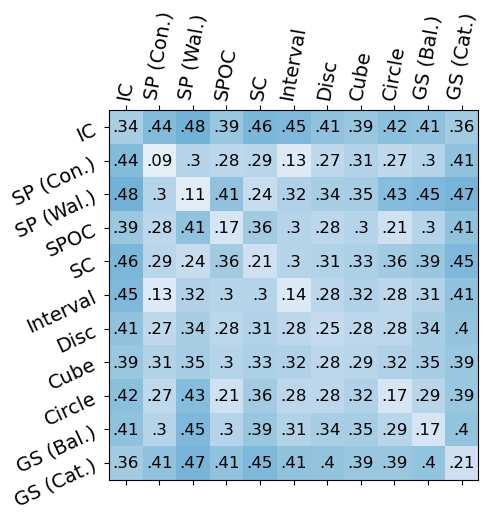}
    \caption{Spearman}
    \end{subfigure}
    \begin{subfigure}[b]{0.325\textwidth}
     \centering
     \includegraphics[width=4.4cm]{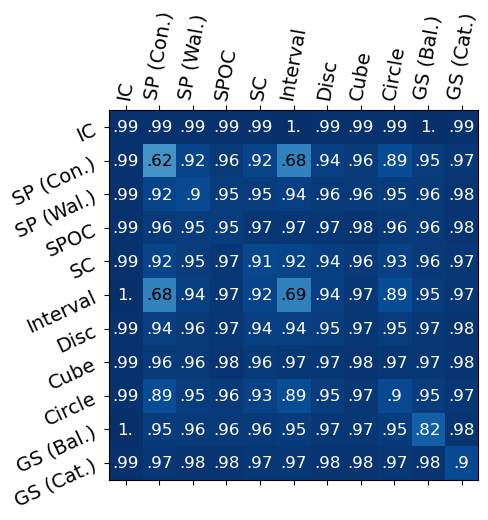}
    \caption{Discrete}
    \end{subfigure}
\caption{The average distances between elections from given cultures (normalized by the largest distance).}
\label{fig:main_matrix_iso}
\end{figure}

As a second step, we computed the swap (Spearman/discrete) distance between each pair of the generated elections. For each set of elections,
we give their average distance to the elections from the other sets
(or to the elections within the set, on the diagonal), normalized by the largest distance.
We show statistics regarding (some of) these distances in~\Cref{fig:main_matrix_iso}. 

With the concrete values of the swap (Spearman/discrete) distances in hand, we
computed a mapping of the generated elections to a 2D space, so that
the Euclidean distances between the points in this mapping
reflect the original distances between the elections.
To compute the embedding, we used a variant of the Kamada-Kawai algorithm, recently proposed by\cite{mt:sapala}, which is based on the work of \cite{kamada1989algorithm}. More details about the different embedding algorithms are presented in \Cref{ch:applications}.

\begin{figure}
    \centering
    \begin{subfigure}[b]{0.49\textwidth}
        \centering
        \includegraphics[width=6.4cm, trim={0.2cm 0.2cm 0.2cm 0.2cm}, clip]{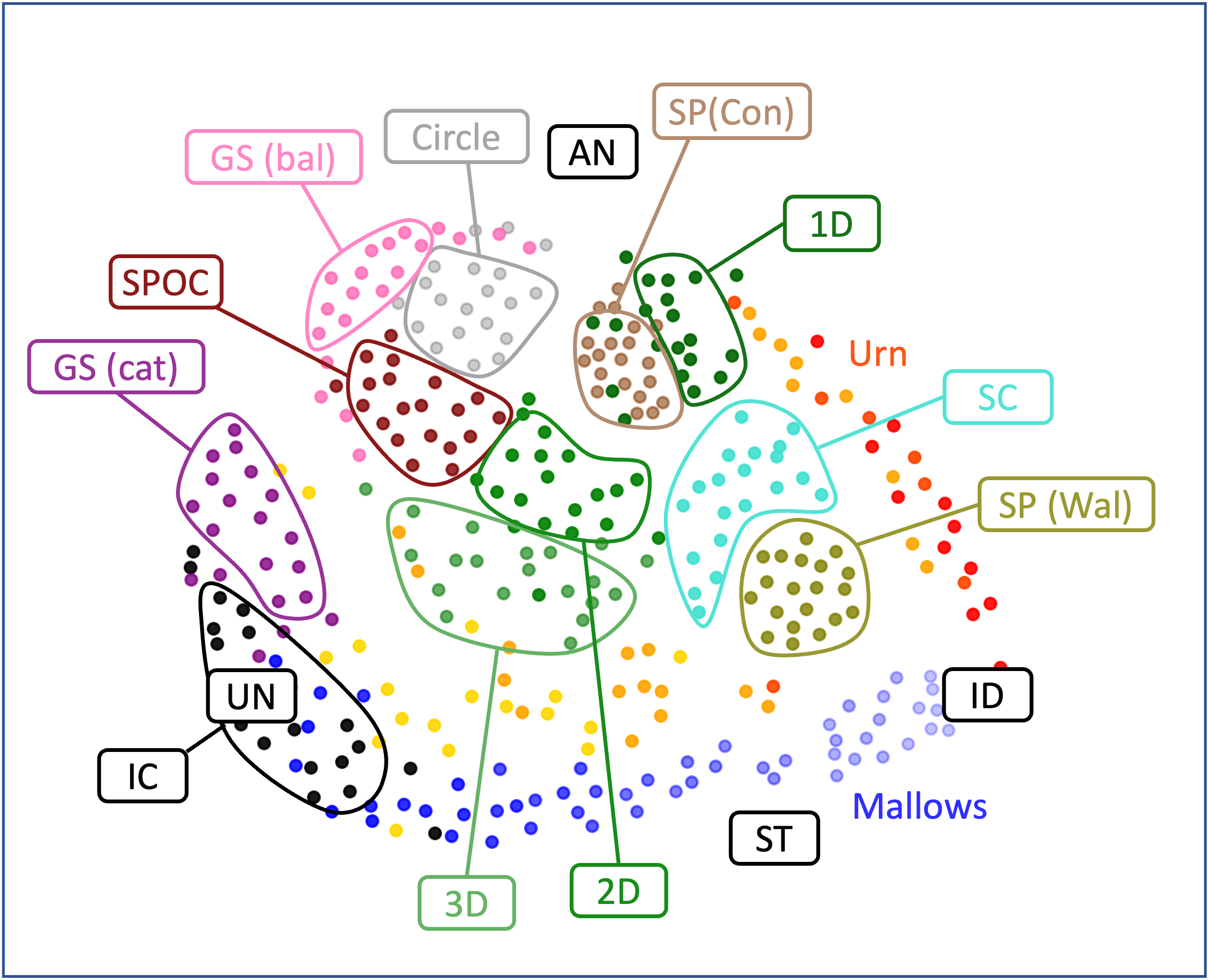}
        \caption{Swap}
    \end{subfigure}
    \begin{subfigure}[b]{0.49\textwidth}
        \centering
        \includegraphics[width=6.4cm, trim={0.2cm 0.2cm 0.2cm 0.2cm}, clip]{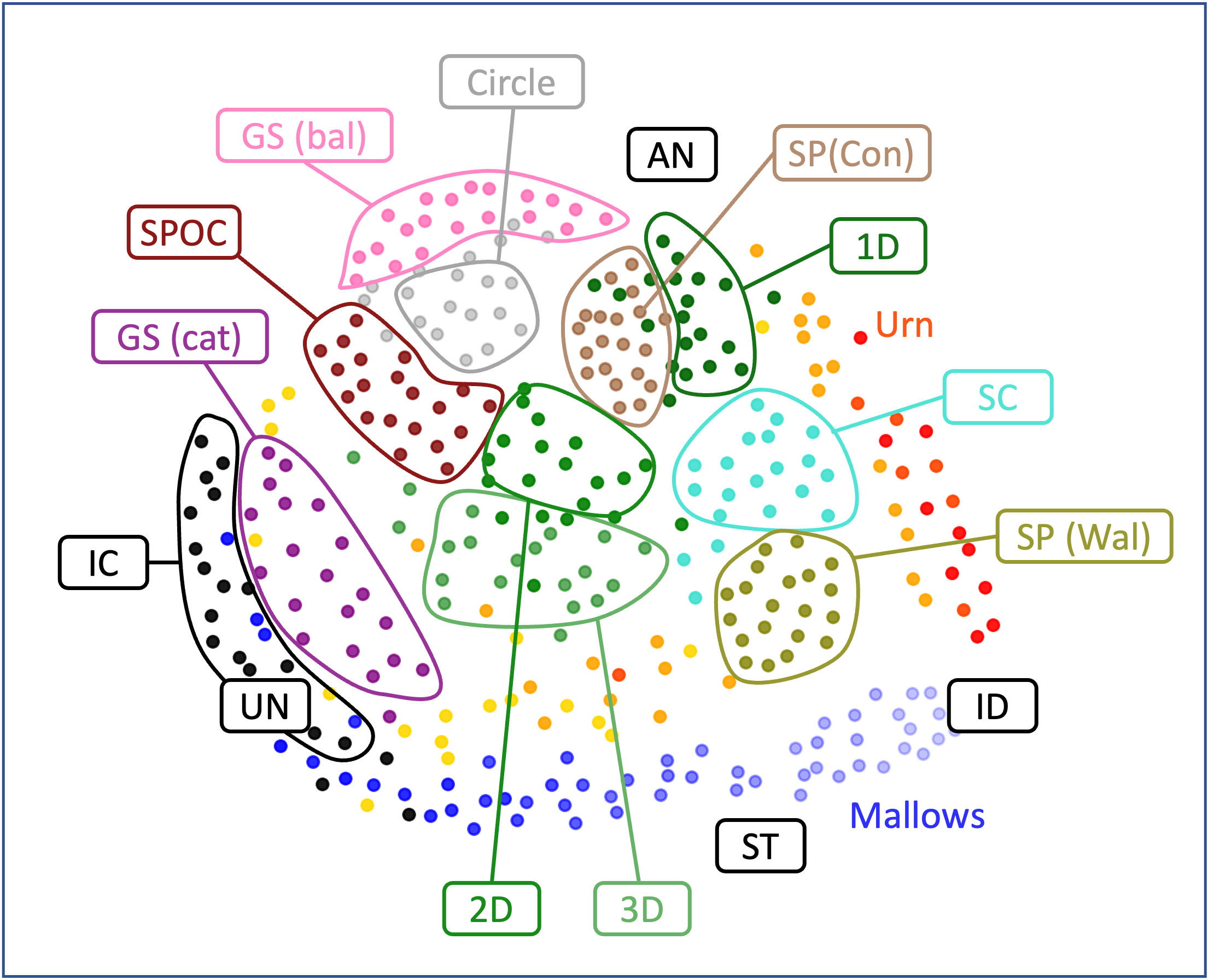}
        \caption{Spearman}
    \end{subfigure}
    
    \begin{subfigure}[b]{0.49\textwidth}
        \centering
        \includegraphics[width=6.4cm, trim={0.2cm 0.2cm 0.2cm 0.2cm}, clip]{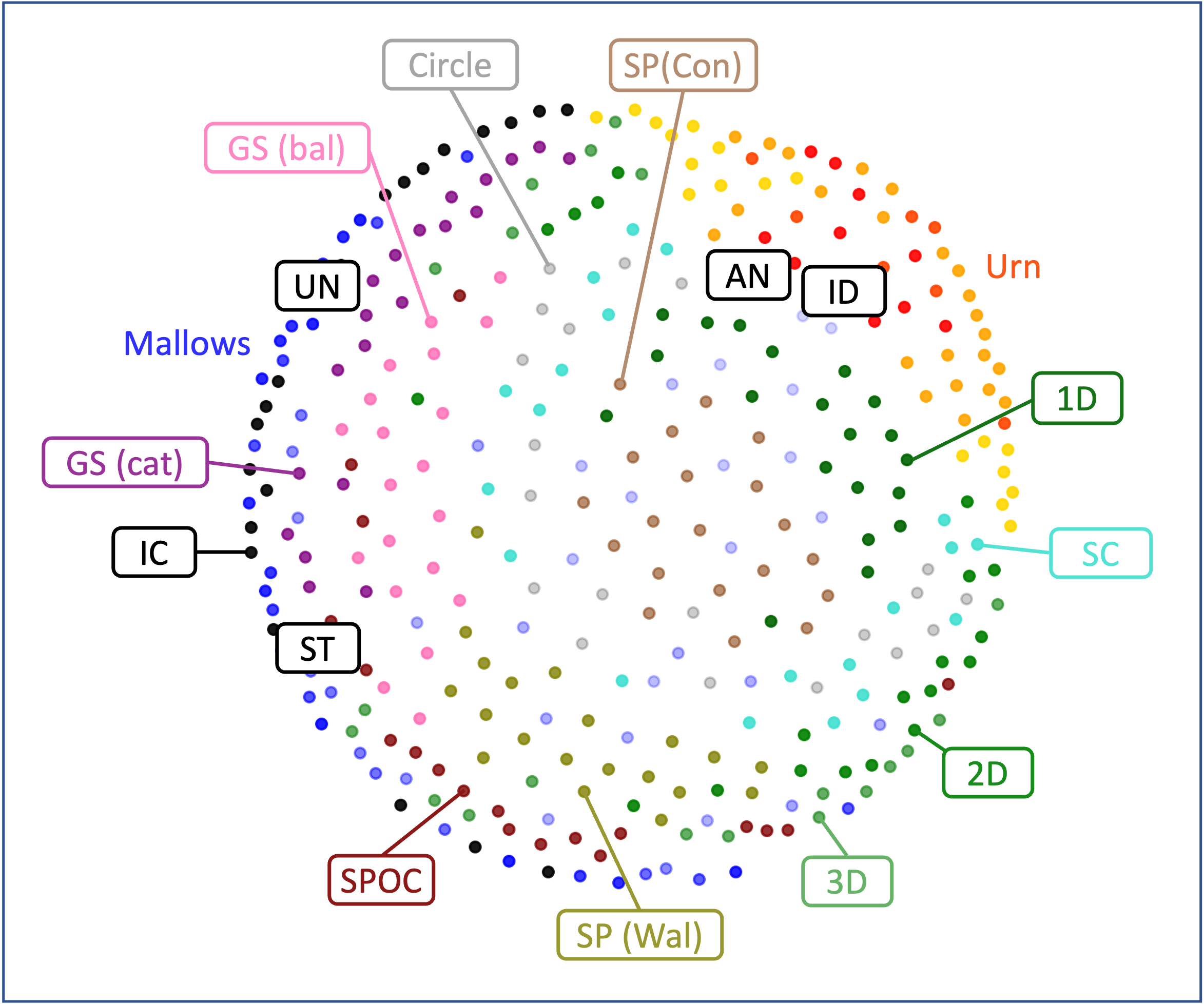}
        \caption{Discrete}
    \end{subfigure}
    \caption{Maps of elections based on isomorphic distances.}
    \label{fig:main_maps_iso}
\end{figure}

We present the visualization we obtained for this embeddings in Figure~\ref{fig:main_maps_iso} and refer to
them as our maps of elections. 
We first focus on the left map, which is based on the swap distances. Three compass elections, i.e., identity, uniformity, and antagonism, form a triangle that is almost equilateral and roughly limits the space. At the bottom, we see elections from the Norm-Mallows model\footnote{For the Norm-Mallows elections, the larger is the transparency of a given point, the smaller is the $\normphi$ value.} that form a path from the identity to the uniformity elections. The higher the~$\normphi$ parameter, the closer we get to the impartial culture.
Elections from the Pólya-Eggenberger urn model\footnote{We mark the urn elections with small, medium, and large $\alpha$ parameters by yellow, orange, and red colors, respectively} are distributed over a large area, in comparison to elections from other models; in principle, the larger the~$\alpha$ parameter, the closer they are to the identity, however, unlike for the Norm-Mallows model, two urn elections with the same~$\alpha$ parameter can be very different from one another.

What is surprising is that the way of sampling single-peaked elections is strongly influencing their location on the map. Elections from the Conitzer model are not that close to elections from the Walsh one. However, they are very close to elections from the Interval model. Similarly, the elections from the SPOC model lie next to those from the Circle model. For Euclidean elections, the higher the dimension, the closer they are to the impartial culture ones.

Almost the whole left upper part, that is, the part between the identity and antagonism, is occupied by elections  where, on average, all candidates have very similar Borda scores\footnote{In particular, standard deviation for the caterpillar group-separable, balanced group-separable, SPOC, IC and Sphere elections on average equals $18.45$, while for other elections on average it equals $76.71$, i.e., four times more.}. In other words, all the candidates perform similarly. We call this part the \emph{Borda balance} area, and we will return to it in~\Cref{ch:applications:sec:score}.



Now, if we look at the map that is based on the Spearman distances, we see that it is very similar to the one based on the swap distances. At the same time, the map created based on the discrete distance is clearly different. We will not exaggerate if we say that the discrete map is of limited usefulness.

\begin{figure}[t]
    \centering
    \begin{subfigure}[b]{0.49\textwidth}
        \centering
          \includegraphics[width=6.6cm]{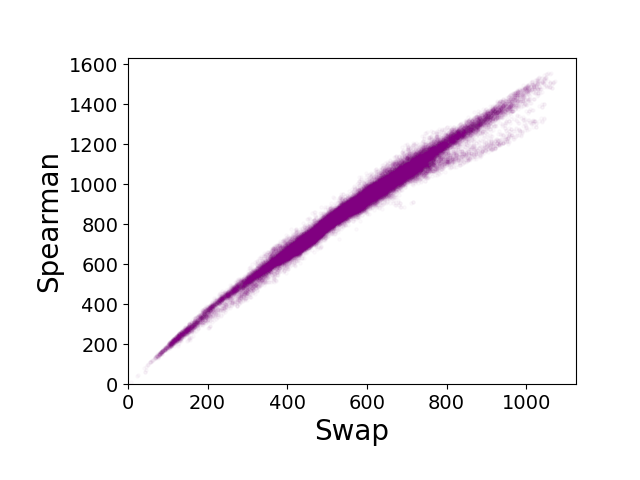}
        \caption{Swap vs Spearman}
    \end{subfigure}
    \begin{subfigure}[b]{0.49\textwidth}
        \centering
          \includegraphics[width=6.6cm]{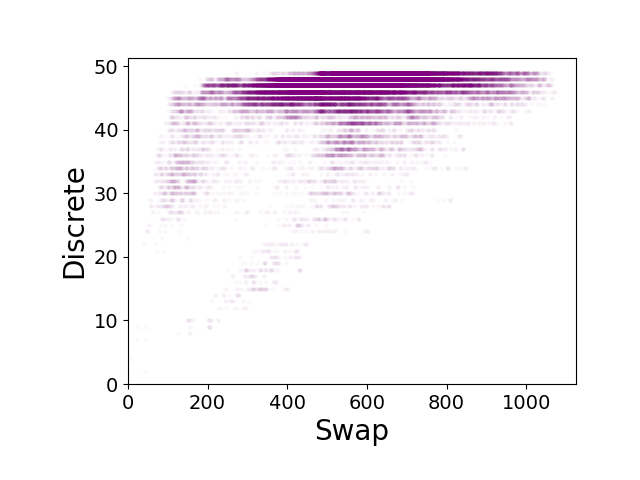}
        \caption{Swap vs Discrete}
    \end{subfigure}

    \caption{Correlation between the isomorphic distances.}
    \label{fig:corr_iso}
\end{figure}

In \Cref{fig:corr_iso} we present the correlation plots for our isomorphic distances. Each purple dot represents the distance between a pair of elections. As we can see, the swap and Spearman distances are very strongly correlated (having Pearson correlation coefficient~$0.99$), while the swap and discrete distances are vaguely correlated (having Pearson correlation coefficient equal to~$0.33$).

While the map
based on the discrete distance is not very appealing, the maps based on the swap
and Spearman distances give us an interesting insight into the space of statistical
cultures. Unfortunately, the computations of these distances, even for instances
with only~$10$ candidates and~$50$ voters, are quite demanding. Due to this fact, if
we wanted a map with a larger number of candidates, such as~$20$ or~$100$, 
we would need a new distance, which could be computed faster. And that is what we
are going to discuss in the next section. Moreover, due to the very strong similarity between the swap and Spearman distances, in the later part of this dissertation we focus only on the swap distance.

\vspace{0.2cm}
\begin{conclusionbox}
\begin{itemize}
   \item The Spearman distance is very strongly correlated with the swap distance.
   \item Elections from the same statistical culture tend to lie next to each other on the maps.
\end{itemize}
\end{conclusionbox}

\section{Nonisomorphic Distances}
\label{sec:non_iso_dist}

Next, we introduce several nonisomorphic distances. For each of these distances, we start by giving its formal definition, then, we show that it is a psuedometric, and finally we discuss its computational complexity. Before that, we present several aggregate representations of elections, which are nothing else but simplified forms of elections, and which will be useful for defining our distances.

\subsection{Aggregate Representations}
Let~$E = (C,V)$ be an election with~$C = \{c_1, \ldots, c_m\}$ and~$V = (v_1, \ldots, v_n)$.
Below, we present the following aggregate representations of~$E$:
\begin{description}

\item[Weighted Majority Relation.] For each two candidates~$c, d \in C$,~$\calM_E(c,d)$ is the
  number of voters that prefer~$c$ to~$d$ in election~$E$. We call it
  the weighted majority relation and represent it as an~$m \times m$
  matrix where rows and columns correspond to the candidates (the
  diagonal is undefined).
  A relative weighted majority relation is a weighted majority
  relation from whose entries we subtract~$\nicefrac{n}{2}$.
    
\item[Position Matrix.] For a candidate~$c \in C$ and a position~$i \in [m]$,~$\calP_E(c,i)$ is the number of voters from~$E$ that rank~$c$ on
  position~$i$;~$\calP_E(c) = (\calP_E(c,1), \ldots, \calP_E(c,m))$ is
  a (column) position vector of~$c$.
  We view~$\calP_E$ as a matrix with columns~$\calP_E(c_1), \ldots, \calP_E(c_m)$ and call it a position matrix.

    
\item[Borda Score Vector.] For a candidate~$c \in C$,~$\calB_E(c) = \sum_{i=1}^n \big(m-\pos_{v_i}(c)\big)$ is the Borda score of~$c$ in~$E$. Then~$\calB_E=(\calB_E(c_1), \dots, \calB_E(c_m))$ is the Borda score vector,
  whose entries correspond to the candidates.
\end{description}

Note that in each of these aggregate representations, we are losing certain information about the election, i.e., there may be two distinct elections that have the same aggregate representation. Next, we provide a simple example that shows how these aggregate representations look in practice.

\begin{example}
  Consider an election~$E = (C,V)$, 
  where~$C = \{a,b,c\}$,~$V = (v_1,v_2,v_3,v_4)$, and the votes are:
  \begin{align*}
    \small
    v_1\colon&  a \pref b \pref c, \\
    v_2\colon&  b \pref c \pref a, \\ 
    v_3\colon&  b \pref a \pref c, \\ 
    v_4\colon&  c \pref a \pref b.
  \end{align*}
  Aggregate representations~$\calM_E$,~$\calP_E$, and~$\calB_E$
  for election~$E$ are as follows:
  \begin{align*}
    \small
    \calM_E = 
    \kbordermatrix{ & a & b & c   \\
    a\!\!\!\! &               - & 2 & 2\\
    b\!\!\!\! &               2 & - & 3\\
    c\!\!\!\! &               2 & 1 & -\\
    },&&
         \small
          \calP_E = 
         \kbordermatrix{ & a & b & c   \\
    1\!\!\!\! &               1 & 2 & 1\\
    2\!\!\!\! &               2 & 1 & 1\\
    3\!\!\!\! &               1 & 1 & 2\\
    },&&
%
%
         \small
         \calB_E =
         \kbordermatrix{ & a & b & c   \\
    &               4 & 5 & 3\\
    }.
  \end{align*}
\end{example}

By a \textit{realization} of an aggregated representation, we refer to an election that has a given aggregated representation.

\subsection{Positionwise Distance}

The first nonisomorphic distance that we will discuss is based on analyzing how frequently the candidates are ranked at particular positions, and we call it the \emph{positionwise distance}. 
(This distance is based on the earth mover's distance~(EMD) introduced in~\Cref{ch:preliminaries}, and on position matrices.)
The definition is as follows.


%

\begin{definition}\label{def:poswise} 
  Let~$E_1 = (C_1,V_1)$ and~$E_2 = (C_2,V_2)$ be two elections such
  that~$|C_1| = |C_2|$.  For a bijection~$\delta  \colon C_1 \rightarrow C_2$, we define~$\delta$-$\POS(E_1,E_2) = \sum_{c \in
    C_1} \EMD(\calP_E(c_1), \calP_E(c_2))$.
  The \emph{positionwise distance} between elections~$E_1$ 
  and~$E_2$,~$\POS(E_1, E_2)$, is the minimum of the~$\delta$-$\POS(E_1,E_2)$ values, taken over~$\delta$.
\end{definition}

We use earth mover's distance in~\Cref{def:poswise} because it captures the
idea that being ranked on the top position is more similar to being
ranked on the second position than to being ranked on the bottom one.
Alternately, instead of using EMD, one can use, e.g., the~$\ell_1$ distance. By~$\POS$ we refer to EMD-positionwise distances (which we treat as the default variant) and by~$\LPOS$ we refer to the~$\ell_1$-positionwise distance, the variant of the distance where we replace EMD with~$\ell_1$.

\begin{example}\label{ex:poswise}
  Consider two elections,~$E_1$ and~$E_2$, over candidate 
  sets~$C_1 = \{a,b,c\}$ and~$C_2 = \{x,y,z\}$. Election~$E_1$ contains
  voters~$v_1, v_2, v_3$ and election~$E_2$ contains 
  voters~$u_1, u_2, u_3$:
  \begin{align*}
    v_1 \colon & a \pref b \pref c, & v_2 \colon & b \pref a \pref c, & v_3 \colon & b \pref c \pref a, \\
    u_1 \colon & x \pref y \pref z, & u_2 \colon & z \pref x \pref y, & u_3 \colon & y \pref x \pref z.                             \end{align*}

\noindent
   The vectors (i.e., columns in the position matrix) associated with each of our candidates are as follows:
  \begin{align*}
    \calP_{E_1}(a)& = (1, 1, 1), &
    \calP_{E_1}(b)& = (2, 1, 0), &
    \calP_{E_1}(c)& = (0, 1, 2),\\
    \calP_{E_2}(x)& = (1, 2, 0), &
    \calP_{E_2}(y)& = (1, 1, 1), &
    \calP_{E_2}(z)& = (1, 0, 2).
  \end{align*}
  We see that~$\EMD(\calP_{E_1}(a),\calP_{E_2}(y)) = 0$,~$\EMD(\calP_{E_1}(b),\calP_{E_2}(x)) = 1$ because to transform~$\calP_{E_1}(b)$ into~$\calP_{E_2}(x)$, we need to move value~$1$
  from the first position to the second one (so we multiply~$1$ by~$1$), and~$\EMD(\calP_{E_1}(c),\calP_{E_2}(z)) = 1$. Thus for~$\delta(a)=~y$,~$\delta(b)=~x$, and~$\delta(c) = z$ we have~$\delta$-$\POS(E_1,E_2) = 2$ and, in fact,~$\POS(E_1,E_2)=2$.
\end{example}

The positionwise distance is not a metric, because the distance between two nonisomorphic elections can be zero. However, it is a pseudometric.

\begin{proposition}
  The positionwise distance is a pseudometric.
\end{proposition}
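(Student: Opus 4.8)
The plan is to verify the three (relaxed) metric axioms for a pseudometric directly from Definition~\ref{def:poswise}: nonnegativity with $\POS(E,E)=0$, symmetry, and the triangle inequality. Since $\POS$ is built by taking a minimum over bijections $\delta$ of a sum of earth mover's distances between position vectors, the key fact to exploit is that $\EMD$ is itself a metric on vectors (it equals $\ell_1(\hat x,\hat y)$ on prefix-sums, as recorded in the preliminaries, and $\ell_1$ is a genuine metric). So most of the work reduces to showing that taking a pointwise sum of metrics and then minimizing over a matching preserves the metric-like properties, while being careful that the matchings for different pairs of elections need not agree.

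\textbf{First} I would handle the easy axioms. Nonnegativity is immediate because $\EMD\ge 0$. For $\POS(E,E)=0$, take $\delta$ to be the identity bijection on the candidate set: each term $\EMD(\calP_E(c),\calP_E(c))=0$, so the minimum is $0$. Symmetry follows because $\EMD$ is symmetric and because bijections $\delta\colon C_1\to C_2$ are in one-to-one correspondence with their inverses $\delta^{-1}\colon C_2\to C_1$; the sum $\sum_c \EMD(\calP_{E_1}(c),\calP_{E_2}(\delta(c)))$ equals $\sum_{c'}\EMD(\calP_{E_2}(c'),\calP_{E_1}(\delta^{-1}(c')))$ after reindexing, so the two minimizations yield the same value. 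I would note that we do \emph{not} get axiom~(1) in full (distinct nonisomorphic elections can be at distance zero), which is exactly why this is only a pseudometric.

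\textbf{The main step} is the triangle inequality. Given three elections $E_1,E_2,E_3$ of equal candidate-count, let $\delta_{12}$ achieve $\POS(E_1,E_2)$ and $\delta_{23}$ achieve $\POS(E_2,E_3)$. The natural candidate matching from $E_1$ to $E_3$ is the composition $\delta_{13}=\delta_{23}\circ\delta_{12}$. Then
\begin{align*}
  \POS(E_1,E_3) &\le \sum_{c\in C_1}\EMD\big(\calP_{E_1}(c),\calP_{E_3}(\delta_{13}(c))\big)\\
  &\le \sum_{c\in C_1}\Big[\EMD\big(\calP_{E_1}(c),\calP_{E_2}(\delta_{12}(c))\big)+\EMD\big(\calP_{E_2}(\delta_{12}(c)),\calP_{E_3}(\delta_{23}(\delta_{12}(c)))\big)\Big]\\
  &=\POS(E_1,E_2)+\POS(E_2,E_3),
\end{align*}
where the first inequality is because $\POS$ minimizes over all bijections (so $\delta_{13}$ is only an upper bound), the second uses the triangle inequality for $\EMD$ termwise, and the final equality splits the sum and reindexes the second half by $c'=\delta_{12}(c)$ (a bijection of $C_2$). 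The only subtlety worth flagging is that this argument hinges on being allowed to \emph{compose} the two optimal matchings and on $\EMD$ genuinely satisfying its own triangle inequality; I would either cite that $\EMD$ is a metric (via its $\ell_1$ characterization) or give a one-line justification. The fact that $\POS$ is defined as a minimum is what makes the composed matching only an upper bound rather than the exact value, and that is precisely the direction needed for the triangle inequality to go through.
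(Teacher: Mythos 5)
Your proof is correct and follows essentially the same route as the paper: the paper likewise dispenses with the easy axioms in one line and establishes the triangle inequality by composing the two optimal candidate matchings, bounding $\POS(E_1,E_3)$ by the value at the composed bijection, and applying the triangle inequality for $\EMD$ termwise before reindexing. Your write-up is, if anything, slightly more careful about the symmetry argument and about flagging that the composed matching is only an upper bound.
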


\begin{proof}
  We show that the positionwise distance satisfies the triangle inequality
  (the other requirements for being a pseudometric are easy to verify).
%
%
%
  Consider three elections with candidate sets of equal size,~$E_1 =
  (C_1,V_1)$,~$E_2 = (C_2,V_2)$, and~$E_3 = (C_3,V_3)$. Let~$\delta$
  and~$\sigma$ be the permutations of the candidates that minimize the distances between~$E_1$ and~$E_2$ and between~$E_2$ and~$E_3$, respectively. We have that:
  \begin{align*}
    \POS(E_1, E_3) &= \textstyle \sum_{c \in C_1} \EMD( \calP_{E_1}(c), \calP_{E_3}(\sigma(\delta(c))) \\
    &\textstyle \leq \sum_{c \in C_1} \EMD( \calP_{E_1}(c), \calP_{E_2}(\delta(c)) \\
    &\textstyle + \sum_{c \in C_2} \EMD( \calP_{E_1}(\delta(c)), \calP_{E_2}(\sigma(\delta(c))) \\
    &\textstyle =\POS(E_1, E_2) + \POS(E_2, E_3) \text{.}
  \end{align*}
  The first inequality follows from the definition of the positionwise
  distance, the second one---from the fact that EMD is a metric.
\end{proof}

%






One of the advantages of the positionwise distance is the fact that it can be computed in polynomial-time.

\begin{proposition}
  There exists a polynomial-time algorithm for computing the positionwise
  distance. 
\end{proposition}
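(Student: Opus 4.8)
The plan is to observe that the minimization over all bijections $\delta$ in \Cref{def:poswise} is nothing more than an instance of the assignment problem (equivalently, a minimum-weight perfect matching in a complete bipartite graph), which is solvable in polynomial time, and that all the edge weights feeding into this matching can be precomputed in polynomial time. So the whole task reduces to ``compute the weights, then solve one matching''.

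First I would compute the two position matrices $\calP_{E_1}$ and $\calP_{E_2}$ directly from the votes. Each matrix has $m^2$ entries, and entry $\calP_{E}(c,i)$ is obtained by counting, over the $n$ voters, how often candidate $c$ sits in position $i$; this is clearly polynomial in $m$ and $n$. Next, for each ordered pair $(c,c')$ with $c \in C_1$ and $c' \in C_2$, I would compute $\EMD(\calP_{E_1}(c), \calP_{E_2}(c'))$ between the corresponding position vectors. There are only $m^2$ such pairs, and by the discussion in \Cref{ch:preliminaries} each individual EMD value is computable by the standard greedy polynomial-time algorithm; note also that every position vector has entries summing to $n$, so the equivalent prefix-sum form $\EMD(x,y) = \ell_1(\hat x, \hat y)$ applies and the distance is well defined. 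Hence after polynomial time all weights are in hand.

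Then I would build a complete bipartite graph whose left vertices are the candidates of $C_1$, whose right vertices are the candidates of $C_2$, and where the edge joining $c$ to $c'$ carries weight $\EMD(\calP_{E_1}(c), \calP_{E_2}(c'))$. Perfect matchings of this graph are in one-to-one correspondence with bijections $\delta \colon C_1 \to C_2$, and the total weight of the matching associated with $\delta$ is exactly $\sum_{c \in C_1}\EMD(\calP_{E_1}(c),\calP_{E_2}(\delta(c))) = \delta\text{-}\POS(E_1,E_2)$. Therefore a minimum-weight perfect matching, computable for instance by the Hungarian algorithm in time polynomial in $m$, returns precisely $\POS(E_1,E_2)$ together with an optimal candidate matching $\delta$.

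There is essentially no hard step here, which is exactly the point of the positionwise distance and the reason it is practical: unlike the swap and Spearman isomorphism distances, we optimize \emph{only} over a candidate matching and never over a voter matching, so a single assignment problem suffices and no simultaneous double matching is required. The only things one must check are routine, namely that the position vectors share common mass $n$ (so EMD is well defined) and that the matching optimum coincides with the minimum in \Cref{def:poswise}, which is immediate from the bijection--matching correspondence above.
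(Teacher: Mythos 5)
Your proposal is correct and matches the paper's own argument: both reduce the minimization over bijections $\delta$ to a minimum-cost perfect matching in the bipartite graph on $C_1 \cup C_2$ with edge weights given by the EMD between the candidates' position vectors, all of which is polynomial-time computable. You merely spell out the routine details (computing the position matrices, the prefix-sum form of EMD, the Hungarian algorithm) that the paper leaves implicit.
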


\begin{proof}
  Let~$E_1 = (C_1,V_1)$
  and~$E_2 = (C_2,V_2)$ be two elections where~$|C_1| = |C_2|$.  The
  value of~$\POS(E_1,E_2)$ is equal to the minimum-cost matching in
  the bipartite graph whose vertex set is~$C_1 \cup C_2$ and which has
  the following edges: For each~$c_1 \in C_1$ and each~$c_2 \in C_2$
  there is an edge with the cost equal to the EMD between~$c_1$'s and~$c_2$'s 
  candidate distribution vectors (these weights
  can be computed independently for each pair of candidates). Such
  minimum-cost matchings can be computed in polynomial time.
\end{proof}

While computing a position matrix of an election is straightforward, the reverse direction is less clear. 
We observe that each~$m\times m$ position matrix has a corresponding
$m$-candidate election with at most~$m^2-2m+2$ distinct preference
orders. This was shown by \citet[Theorem 7]{leep1999marriage} (they
speak of ``semi-magic squares'' and not ``position matrices'' and
show a decomposition of a matrix into permutation matrices, which
correspond to votes in our setting). In other words, given a position matrix we can compute its realization in polynomial-time.


\begin{observation}
  Given a position matrix~$X$, one can compute
  in~$O(m^{4.5})$ time  an election~$E$ that contains at most~$m^2-2m+2$
  different votes such that~$\calP(E) =
  X$. 
\end{observation}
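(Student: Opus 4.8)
The plan is to recognize a position matrix as a \emph{semi-magic square} and then realize it by decomposing it into permutation matrices, one per distinct vote. First I would record the two structural facts that every position matrix $X = \calP(E)$ satisfies: its entries are nonnegative integers, and every row and every column sums to exactly $n$. The column condition holds because each candidate occupies exactly one position in each of the $n$ votes, and the row condition holds because each position is occupied by exactly one candidate in each vote. Conversely, a single vote $\pi$ contributes to $\calP$ precisely the permutation matrix $P_\pi$ (with a $1$ in row $i$, column $c$ iff $\pi$ ranks $c$ at position $i$), so an election realizing $X$ is exactly a decomposition $X = \sum_{j} \lambda_j P_{\pi_j}$ into permutation matrices with positive integer multiplicities $\lambda_j$ summing to $n$; the distinct $\pi_j$ are then the distinct votes of $E$.

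Next I would produce such a decomposition greedily and bound its length and cost. Because all line sums of the current (integer, semi-magic) matrix are equal and positive, the bipartite graph on positions and candidates whose edges are the nonzero entries satisfies Hall's condition, so it contains a perfect matching; this matching is a permutation matrix $P$ supported on nonzero entries. Subtracting $\lambda P$, where $\lambda$ is the smallest entry of the current matrix on the support of $P$, keeps all line sums equal, preserves nonnegativity and integrality, and zeroes out at least one additional entry. Recording $P$ as a vote of multiplicity $\lambda$ and iterating, the process terminates at the all-zero matrix. Since each of the at most $m^2$ nonzero entries is eventually removed and the final step removes a whole permutation matrix ($m$ entries) at once, this already bounds the number of distinct votes by $m^2-m+1$, and it bounds the number of rounds by $O(m^2)$. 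Each round computes a bipartite perfect matching on $2m$ vertices and at most $m^2$ edges, which Hopcroft--Karp performs in $O(m^{2.5})$ time, for a total of $O(m^{2}\cdot m^{2.5}) = O(m^{4.5})$.

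The main obstacle is sharpening the count of \emph{distinct} votes from the easy $m^2-m+1$ down to the claimed $m^2-2m+2 = (m-1)^2+1$ while keeping the multiplicities integral. The natural explanation is polyhedral: the Birkhoff polytope of $m \times m$ doubly stochastic matrices has dimension $(m-1)^2$, and $\tfrac{1}{n}X$ is a point of it, so by Carath\'eodory's theorem it is a convex combination of at most $(m-1)^2+1$ vertices, i.e.\ permutation matrices. The difficulty is that Carath\'eodory yields only \emph{real} coefficients, whereas an election needs integer multiplicities, so the continuous reduction does not transfer verbatim. Reconciling the Carath\'eodory count with integrality is exactly the combinatorial content of \citet[Theorem~7]{leep1999marriage}, which I would invoke to obtain the stated bound rather than reprove; my remaining task is only to translate their ``sum of permutation matrices'' statement into ``distinct votes of an election'' and to pair it with the greedy algorithm above for the $O(m^{4.5})$ running time.
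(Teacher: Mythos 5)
Your proposal is correct and matches the paper's approach: the paper likewise treats the position matrix as a semi-magic square and delegates the $m^2-2m+2$ bound to \citet[Theorem~7]{leep1999marriage}, with the realization computed by the standard greedy permutation-matrix decomposition. Your write-up actually supplies more detail than the paper (which gives no explicit proof of the observation), in particular the Hall's-condition/Hopcroft--Karp accounting that yields the $O(m^{4.5})$ running time and the honest acknowledgment that the naive greedy count only gives $m^2-m+1$ distinct votes.
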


\subsection{Pairwise Distance}

Next, we define the \emph{pairwise distance}, which is inspired by
the class of Condorcet-consistent
voting rules and relies on analyzing the results of head-to-head
majority contests between the candidates.

\begin{definition}
  Let~$E_1 = (C_1,V_1)$ and~$E_2 = (C_2,V_2)$ be two elections such
  that~$|C_1| = |C_2|$. For a bijection~$\delta \colon C_1 \rightarrow C_2$, we define~$\delta$-$\PAIR(E_1,E_2) =
  \sum_{(c,d) \in C_1 \times C_1} \big| \calM_{E_1}(c,d) -
  \calM_{E_2}(\delta(c),\delta(d))\big|$.  The \emph{pairwise distance}
  between elections~$E_1$ and~$E_2$,~$\PAIR(E_1, E_2)$, is the
  minimum value of the~$\delta$-$\PAIR(E_1,E_2)$ values, taken over~$\delta$.
\end{definition}

\begin{example}
  Let us consider the two elections from~\Cref{ex:poswise}. Weighted majority relations look as follows:
  \begin{align*}
  \calM_{E_1} = \kbordermatrix{
    & a & b & c  \\
    a & - & 1 & 2  \\
    b & 2 & - & 3  \\
    c & 1 & 0 & - 
  }\qquad
  \calM_{E_2} = \kbordermatrix{
    & x & y & z  \\
    x & - & 2 & 2 \\
    y & 1 & - & 2  \\
    z & 1 & 1 & - 
  }
  \end{align*}
  For~$\delta(a) = y$,~$\delta(b) = x$, and~$\delta(c) = z$, the~$\delta$-$\PAIR(E_1,E_2) = 2$,
  and  this is also the value of~$\PAIR(E_1, E_2)$.
\end{example}

Note that using EMD for the pairwise distance would not be very useful, because each value in the matrix is in some sense independent of the values surrounding it.

Similarly to the positionwise distance, pairwise distance is a pseudometric.

\begin{proposition}
    The pairwise distance is a pseudometric.
\end{proposition}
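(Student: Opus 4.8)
The plan is to mirror the structure used for the positionwise distance, since the definitions are structurally identical: both minimize over candidate bijections a sum of per-candidate (or per-pair) discrepancies, where the inner discrepancy is itself a metric on the relevant objects. The three pseudometric axioms must be checked. Nonnegativity is immediate since the pairwise distance is a minimum of sums of absolute values. Symmetry follows because if $\delta$ is an optimal bijection from $C_1$ to $C_2$, then $\delta^{-1}$ is a valid bijection from $C_2$ to $C_1$ achieving the same value, using that $|\calM_{E_1}(c,d) - \calM_{E_2}(\delta(c),\delta(d))| = |\calM_{E_2}(\delta(c),\delta(d)) - \calM_{E_1}(c,d)|$ and re-indexing the sum over pairs. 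The condition $\PAIR(E,E) = 0$ holds by taking $\delta$ to be the identity, which makes every summand vanish.

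The substantive step, as with the positionwise proof, is the triangle inequality. First I would fix three elections $E_1 = (C_1,V_1)$, $E_2 = (C_2,V_2)$, $E_3 = (C_3,V_3)$ with candidate sets of equal size. Let $\delta \colon C_1 \to C_2$ and $\sigma \colon C_2 \to C_3$ be the bijections that realize $\PAIR(E_1,E_2)$ and $\PAIR(E_2,E_3)$ respectively. The composition $\sigma \circ \delta$ is a valid bijection from $C_1$ to $C_3$, so it gives an upper bound on $\PAIR(E_1,E_3)$. The key move is to apply the scalar triangle inequality for $|\cdot|$ entrywise: for each pair $(c,d) \in C_1 \times C_1$,
\begin{align*}
\big|\calM_{E_1}(c,d) - \calM_{E_3}(\sigma(\delta(c)),\sigma(\delta(d)))\big|
&\leq \big|\calM_{E_1}(c,d) - \calM_{E_2}(\delta(c),\delta(d))\big| \\
&\quad + \big|\calM_{E_2}(\delta(c),\delta(d)) - \calM_{E_3}(\sigma(\delta(c)),\sigma(\delta(d)))\big|.
\end{align*}
Summing over all pairs $(c,d) \in C_1 \times C_1$ and using that $\delta$ is a bijection (so summing over $(\delta(c),\delta(d))$ ranges over all pairs in $C_2 \times C_2$) lets me identify the first sum with $\delta$-$\PAIR(E_1,E_2) = \PAIR(E_1,E_2)$ and the second with $\sigma$-$\PAIR(E_2,E_3) = \PAIR(E_2,E_3)$. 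Hence $\PAIR(E_1,E_3) \leq \PAIR(E_1,E_2) + \PAIR(E_2,E_3)$.

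The only real subtlety to watch is the reindexing in the second summand: the sum is originally over $(c,d) \in C_1 \times C_1$, but its terms depend on $E_2$ and $E_3$ only through $\delta(c)$ and $\delta(d)$, so the bijectivity of $\delta$ is exactly what guarantees this equals the sum defining $\sigma$-$\PAIR(E_2,E_3)$ over $C_2 \times C_2$. I expect no genuine obstacle here; the proof is a direct analogue of the positionwise case, with the entrywise scalar triangle inequality playing the role that the EMD metric property played before. In the write-up I would present the inequality chain in a single \verb|align*| block and remark that symmetry and the $d(x,x)=0$ condition are routine, exactly as the authors did for the positionwise distance.
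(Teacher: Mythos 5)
Your proposal is correct and follows essentially the same route as the paper's proof: compose the optimal bijections $\delta$ and $\sigma$, bound $\PAIR(E_1,E_3)$ by the resulting candidate matching, and apply the scalar triangle inequality entrywise before summing over pairs. The only difference is that you spell out the reindexing via bijectivity of $\delta$ and the symmetry argument explicitly, which the paper leaves implicit.
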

\begin{proof}
  Clearly, the pairwise distance is symmetric, and for each election~$E$ it holds that~$\PAIR(E, E) = 0$. The triangle inequality
  follows by the same reasoning as in the case of the positionwise
  distance. In particular, we define~$\delta$ and~$\sigma$
  analogously as in that proof. Then:

\begin{align*}
\PAIR(E_1, E_3) &\leq   \sum_{(c,d) \in C_1 \times C_1} \big| \calM_{E_1}(c,d) -
  \calM_{E_3}(\sigma(\delta(c)),\sigma(\delta(d))\big| \\
& \leq \sum_{(c,d) \in C_1 \times C_1} \big| \calM_{E_2}(\delta(c),\delta(d)) -
  \calM_{E_3}(\sigma(\delta(c)),\sigma(\delta(d))\big|\\
&+ \sum_{(c,d) \in C_1 \times C_1} \big| \calM_{E_1}(c,d) -
  \calM_{E_2}(\delta(c),\delta(d)\big|  \\
&=\PAIR(E_1, E_2) + \PAIR(E_2, E_3) \text{.}
\end{align*} 
This completes the proof.
\end{proof}

Both the positionwise distance and the pairwise distance
satisfy our
minimal requirements; they both are pseudometrics
defined to be neutral/anonymous.  Yet, we can compute the positionwise
distances in polynomial-time, but the pairwise distance is intractable
(indeed, it is similar to the~$\np$-complete \textsc{Approximate Graph
  Isomorphism} 
problem~\citep{arv-koe-kuh-vas:c:approximate-graph-isomorphism,gro-rat-woe:c:approximate-isomorphism}).

\begin{proposition}[\cite{szu-fal-sko-sli-tal:c:map}]
  The decision variant of the problem of computing the pairwise distance is~$\np$-complete.
\end{proposition}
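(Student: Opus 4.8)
Membership in $\np$ is immediate: a candidate bijection $\delta \colon C_1 \to C_2$ serves as a certificate, since given $\delta$ we can compute both weighted majority relations and evaluate $\delta$-$\PAIR(E_1,E_2) = \sum_{(c,d) \in C_1 \times C_1}\big|\calM_{E_1}(c,d) - \calM_{E_2}(\delta(c),\delta(d))\big|$ in polynomial time, accepting iff this value is at most $k$. For hardness, the plan is to reduce from \textsc{Approximate Graph Isomorphism} in its entrywise/Hamming variant --- given two graphs on the same vertex set, is there a vertex bijection for which the number of mismatched (added or deleted) edges is at most $k$ --- which is $\np$-complete \citep{arv-koe-kuh-vas:c:approximate-graph-isomorphism,gro-rat-woe:c:approximate-isomorphism}. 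The objective there, $\min_{\phi}\sum_{\{i,j\}}\big|A_1(i,j) - A_2(\phi(i),\phi(j))\big|$ over vertex bijections $\phi$, has exactly the shape of $\PAIR$, with adjacency matrices $A_1,A_2$ playing the role of $\calM_{E_1},\calM_{E_2}$.

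The core of the reduction is to realize prescribed majority relations by actual elections. Given a target integer matrix $T$ with the correct antisymmetry and parity (that is, $T(c,d)+T(d,c)=n$ for a common voter count $n$, with suitably even margins), a McGarvey-style construction produces in polynomial time an election whose weighted majority relation equals $T$: each desired margin on a single ordered pair is installed by adding a constant number of voters that cancel on every other pair. Thus, from graphs $G_1,G_2$ I would compute target matrices $T_1,T_2$ encoding their adjacency, output the realizing elections $E_1,E_2$ together with a (suitably scaled) bound, and arrange that an optimal candidate bijection $\delta$ for $\PAIR(E_1,E_2)$ induces an optimal vertex bijection $\phi$ for the graph instance, with the two objective values agreeing up to the fixed scaling factor.

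The main obstacle is the clash between the \emph{symmetric} adjacency matrices of graphs and the inherently \emph{antisymmetric} majority relations (recall $\calM_E(c,d)+\calM_E(d,c)=n$, so the relative majority is genuinely antisymmetric). A naive encoding that assigns each edge a lopsided margin is not robust: if $\delta$ sends the two endpoints of an edge of $E_1$ to the two endpoints of an edge of $E_2$ but reverses their order, an edge-to-edge \emph{match} is charged a large cost rather than zero, so the minimizer would fail to track graph isomorphism. Overcoming this is where the work lies: I would introduce, for every vertex $i$, a private auxiliary candidate $a_i$ and encode each edge $\{i,j\}$ through a pair of margins that is \emph{symmetric} under swapping $i$ and $j$ (for instance, a fixed margin of $i$ over $a_j$ paired with the same margin of $j$ over $a_i$), so the edge signal no longer depends on how $\delta$ orients an endpoint pair. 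A uniform ``skeleton'' margin between each $i$ and its $a_i$, together with a global signature separating vertices from auxiliaries, forces any low-cost bijection to preserve the vertex/auxiliary partition and the matching $\{i,a_i\}$, so that $\delta$ acts as $\delta(a_i)=a_{\phi(i)}$ whenever $\delta(i)=\phi(i)$. The remaining and most delicate part is the bookkeeping: one must verify that with this gadget the cost of a bijection decomposes as a constant times the number of edge mismatches of the induced $\phi$, plus an additive term independent of $\phi$, so that $\PAIR(E_1,E_2)$ falls below the threshold exactly when the graphs can be made isomorphic within $k$ edge edits. An alternative would be to reduce directly from an $\np$-complete problem whose natural structure is already a weighted tournament, thereby sidestepping the symmetry issue, but the auxiliary-candidate encoding above is the most transparent route.
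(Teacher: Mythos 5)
The first thing to say is that the thesis itself contains no proof of this proposition: the result is imported by citation from \cite{szu-fal-sko-sli-tal:c:map}, and the only in-text hint at the argument is the remark that the problem ``is similar to the $\np$-complete \textsc{Approximate Graph Isomorphism} problem.'' So there is no in-paper proof to compare you against line by line; your plan follows exactly the route the paper gestures at. Your membership argument is complete and correct. On the hardness side, you have correctly isolated the one real obstacle --- the objective $\sum_{(c,d)}|\calM_{E_1}(c,d)-\calM_{E_2}(\delta(c),\delta(d))|$ has the same shape as approximate graph isomorphism, but adjacency matrices are symmetric while majority margins satisfy $\calM_E(c,d)+\calM_E(d,c)=n$, so a naive edge-as-margin encoding is not orientation-invariant --- and your auxiliary-candidate gadget (encoding edge $\{i,j\}$ symmetrically via the entries $(i,a_j)$ and $(j,a_i)$, with a skeleton margin binding $i$ to $a_i$) does resolve it. With a McGarvey-style realization this yields a correct reduction.

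What keeps this from being a finished proof is precisely the part you defer as ``bookkeeping,'' and it is not entirely cosmetic. Three things must be pinned down. First, both constructed elections must have the same number of voters (and compatible parities), since otherwise the $\nicefrac{n}{2}$ baselines of the two majority matrices differ everywhere and swamp the edge signal; McGarvey padding handles this but has to be stated. Second, the forcing argument needs an explicit hierarchy of margins, e.g.\ a block margin $B$ separating vertex-candidates from auxiliaries, a pairing margin $M$ on each $(i,a_i)$, and an edge margin $w$, with $B \gg M \gg w m^{2}$, so that any bijection violating the partition or the pairing already costs more than the maximum possible cost $O(wm^{2})$ of a well-behaved one; without this quantification the claim that the optimum ``tracks'' graph isomorphism is unproven. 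Third, the scaling constant: each unordered candidate pair contributes twice to the sum over ordered pairs, and each graph edge is encoded in two unordered pairs, so one edge mismatch costs exactly $4w$; this factor must appear in the threshold you output. None of these is a conceptual gap --- the construction goes through once they are written out --- but as submitted the proposal is a correct and well-motivated plan rather than a proof.
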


Nonetheless, we can compute the pairwise distance by formulating it as
an integer linear program. 
In practice, this allows us to compute distances
between elections of up to around 20 candidates.

\begin{proposition}
  There is an ILP for~$\PAIR$.
\end{proposition}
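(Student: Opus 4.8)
The plan is to mirror the structure of the integer linear program already given for the Spearman distance, but in a simplified form: since the weighted majority relation~$\calM_E$ already aggregates over all voters, the pairwise distance depends only on the candidate matching~$\delta$, so the program needs no voter-matching variables at all. First I would introduce, for each~$i, i' \in [m]$, a binary variable~$M_{i,i'}$ whose value~$1$ encodes that candidate~$c_i$ is matched to~$c'_{i'}$, and impose the two families of assignment constraints~$\sum_{i'} M_{i,i'} = 1$ for every~$i$ and~$\sum_{i} M_{i,i'} = 1$ for every~$i'$, so that the~$M$ variables describe a bijection~$\delta \colon C_1 \to C_2$, exactly as constraints~\eqref{ilp:2} do in the Spearman program.

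The term contributed by an ordered pair~$(c_i, c_j)$ with~$i \neq j$ is~$|\calM_{E_1}(c_i, c_j) - \calM_{E_2}(\delta(c_i), \delta(c_j))|$, which is bilinear in the matching: it depends on how \emph{both}~$c_i$ and~$c_j$ are mapped. To linearize it, I would introduce, for all~$i, i', j, j'$ with~$i \neq j$ and~$i' \neq j'$, a binary product variable~$Q_{i,i',j,j'}$ intended to equal~$M_{i,i'}\cdot M_{j,j'}$, together with the linking inequalities~$Q_{i,i',j,j'} \le M_{i,i'}$ and~$Q_{i,i',j,j'} \le M_{j,j'}$ (the analogue of constraints~\eqref{ilp:5}--\eqref{ilp:6}). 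The objective is then the linear expression~$\sum Q_{i,i',j,j'} \cdot |\calM_{E_1}(c_i,c_j) - \calM_{E_2}(c'_{i'},c'_{j'})|$, whose coefficients are constants precomputed from the two weighted majority relations.

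The one point that needs care---and it is the analogue of the only nontrivial step in the Spearman program---is that the upper-bound constraints alone do not force~$Q_{i,i',j,j'}$ up to~$1$ when both factors are~$1$; since all objective coefficients are nonnegative and we minimize, the solver would otherwise set every~$Q$ to~$0$. I would close this gap as in the Spearman case, by adding a counting constraint~$\sum_{i' \neq j'} Q_{i,i',j,j'} = 1$ for each ordered pair~$(i,j)$ with~$i \neq j$ (each such pair of candidates is mapped to exactly one ordered pair on the other side); combined with the two upper bounds, this pins~$Q_{i,i',j,j'} = M_{i,i'} M_{j,j'}$ on any feasible matching. Finally I would note that the program has~$O(m^4)$ variables and constraints, hence is of polynomial size, and that on values satisfying all constraints the objective evaluates precisely to~$\delta$-$\PAIR(E_1,E_2)$ for the encoded matching~$\delta$; minimizing therefore yields~$\PAIR(E_1, E_2)$. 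The main obstacle is purely the faithful linearization of the quadratic product, which the counting-plus-upper-bound trick resolves.
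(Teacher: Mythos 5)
Your proposal is correct and matches the paper's own construction essentially line for line: the same candidate-matching variables with assignment constraints, the same product variables (your $Q$ is the paper's $P$) linearized via the two upper bounds plus a counting constraint that forces the product up to one on the matched pair, and the same precomputed objective coefficients $|\calM_{E}(i,j) - \calM_{E'}(i',j')|$. The only cosmetic difference is that the paper also includes the symmetric counting constraint over $(i',j')$ pairs, which is redundant given the bijection constraints, as your argument shows.
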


\begin{proof}
  Let~$E = (C,V)$ and~$E' = (C',V')$ be the elections we wish to
  compute the distance for, with~$C = \{c_1, \dots, c_m\}$,~$C' = \{c'_1, \dots, c'_m\}$,~$V = \{v_1, \dots, v_n\}$, and~$V' = \{v'_1, \dots, v'_n\}$.
For each~$i, i' \in [m]$, we define a binary variable~$M_{i, i'}$ with the intention that value~$1$ indicates that
  candidate~$c_i$ is matched to candidate~$c'_{i'}$. 
For each~$i,i',j,j' \in [m], i \neq j, i' \neq j'$, we define a binary
  variable~$P_{i,i',j,j'}$ with the intention that~$P_{i,i',j,j'} = M_{i,i'}\cdot M_{j,j'}$.
  We introduce the following constraints:
  \begin{align}
    \label{pair_ilp:1}
    &\textstyle\sum_{i'\in[m]} M_{i,i'} = 1, \forall i\in[m]; \\
    \label{pair_ilp:2}
    &\textstyle\sum_{i\in[m]} M_{i,i'} = 1, \forall i'\in[m]; \text{ } \\
    \label{pair_ilp:3}
    &\textstyle\sum_{\substack{i',j'\in[m] \\ i \neq j, i' \neq j'}} P_{i,i',j,j'} = 1, \forall i,j\in[m]; \\
    \label{pair_ilp:4}
    &\textstyle\sum_{\substack{i',j'\in[m] \\ i \neq j, i' \neq j'}} P_{i,i',j,j'} = 1, \forall i',j'\in[m]; \text{ } \\
    \label{pair_ilp:5}
    &P_{i,i',j,j'} \leq M_{i,j}, \forall \substack{i,i',j,j'\in[m] \\ i \neq j, i' \neq j'}; \\
    \label{pair_ilp:6}
    &P_{i,i',j,j'} \leq M_{k,l}, \forall \substack{i,i',j,j'\in[m] \\ i \neq j, i' \neq j'}.
  \end{align}
    Constraints~\eqref{pair_ilp:1} and~\eqref{pair_ilp:2} ensure that variables~$M_{i,j}$ describe matchings between the candidates. Constraints~\eqref{pair_ilp:3}--\eqref{pair_ilp:6}
  implement the semantics of the~$P_{i,i',j,j'}$ variables (the
  former two ensure that there is one-to-one matching between pairs of candidates; the latter two ensure connection between the~$P_{i,i',j,j'}$ variables and the~$M_{i,i'}$ and~$M_{j,j'}$ variables).
  
  The optimization goal is to minimize:
  \[
  \textstyle\sum_{ \substack{i,i',j,j'\in[m] \\ i \neq j, i' \neq j'} } P_{i,i',j,j'}  \cdot
  |\calM_{E}(i,j) - \calM_{E'}(i',j')|.
  \]
  Values~$\calM_{E}(i,j)$ and~$\calM_{E'}(i',j')$ are precomputed.
\end{proof}

  




Unlike for the positionwise distance, for the pairwise distance it is hard to recover an election with a given weighted majority relation.

\begin{theorem}[\cite{boe-fal-nie-szu-was:c:understanding}]
  Given an~$m \times m$ matrix~$M$, it is~$\np$-complete to decide if
  there is an election~$E$ with~$\mathcal{M}_E=M$.
\end{theorem}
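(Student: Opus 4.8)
The plan is to treat this as a realizability question for the weighted majority relation: an $m \times m$ matrix $M$ equals $\mathcal{M}_E$ for some election $E$ exactly when there is a multiset of linear orders (the votes) over the $m$ candidates such that, for every ordered pair $(i,j)$, precisely $M_{ij}$ of these orders rank $i$ above $j$. Before anything else I would record the obvious bookkeeping conditions and reject unless they hold: all entries are nonnegative integers, and there is a single $n$ with $M_{ij} + M_{ji} = n$ for all $i \neq j$, since every voter ranks $i$ above $j$ or the reverse. The key conceptual point, which I would use to locate the source of hardness, is the contrast with McGarvey's theorem: if the number of voters were free, every skew-symmetric pattern of the correct parity would be realizable and the problem would be trivial. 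Here $n$ is pinned down by $M$ itself, so the real question is whether $M$ can be written as a sum of \emph{exactly} $n$ linear-order $0/1$ matrices. It is this rigid budget on the number of votes that should make the problem hard.

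For membership in $\np$ I would exhibit a succinct certificate. If $M$ is realizable at all, then an integer Carathéodory argument in the $\binom{m}{2}$-dimensional space of skew-symmetric matrices shows that it is realizable using only polynomially many distinct preference orders, each carrying a multiplicity at most $n$. The certificate is this short list of distinct votes together with their multiplicities written in binary; verifying that the induced pairwise counts equal $M$ (and that the multiplicities sum to $n$) is a direct polynomial-time computation.

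The core of the proof is the $\np$-hardness, which I would obtain by a reduction from a suitable NP-complete combinatorial problem; an exact-cover or three-dimensional-matching-type problem is the natural candidate, since the ``exactly $n$ votes'' constraint behaves like an exact packing constraint. The idea is to build a candidate set consisting of \emph{element} candidates and auxiliary \emph{gadget} candidates, and to choose the target matrix $M$ so that the only way to meet every prescribed pairwise count exactly is to use a family of votes encoding a feasible cover: each admissible vote corresponds to one set of the instance and contributes a prescribed pattern of head-to-head wins, while the exactness of the counts forces each element to be covered precisely once. I would arrange the gadgets so that unused or ``partial'' votes produce a pairwise count that cannot be corrected, thereby setting up a bijection between feasible covers and realizations of $M$.

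The step I expect to be the main obstacle is exactly this gadget design, because linear orders are globally transitive: fixing some comparisons inside a single gadget vote forces many others, so I cannot prescribe the head-to-head outcome of one candidate pair in isolation. The real work is to engineer the auxiliary candidates and target values so that the transitivity-induced side effects either cancel across the admissible votes or are absorbed by slack candidates, while still ruling out ``cheating'' realizations that hit the counts without encoding a genuine solution. Proving the reverse direction---that every realization of $M$ yields a feasible cover---is where this control over transitivity must be made airtight.
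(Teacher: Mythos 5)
First, a remark on context: the thesis itself does not prove this theorem --- it is imported from \cite{boe-fal-nie-szu-was:c:understanding} and stated without proof --- so there is no in-paper argument to compare yours against line by line. Judged on its own terms, your proposal correctly frames the problem (realizability as writing $M$ as a sum of exactly $n$ linear-order $0/1$ matrices, with $n$ forced by $M_{ij}+M_{ji}=n$), and your observation that the rigidity of $n$ is what separates this from the McGarvey/Debord setting is exactly the right diagnosis. Your $\np$-membership argument is essentially sound but as written rests on a step you have not justified: plain Carath\'eodory for cones only yields a sparse \emph{fractional} combination, and the integer Carath\'eodory property fails in general. You need something like the Eisenbrand--Shmonin bound (every integer conic combination of $0/1$ vectors in dimension $d$ can be rewritten using $O(d\log d)$ generators), applied after appending a coordinate that encodes the voter-count constraint; with that citation the certificate argument goes through.

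The genuine gap is the hardness direction, which is the entire content of the theorem. What you have written is a description of what a reduction \emph{ought} to look like --- element candidates, gadget candidates, exactness of the pairwise counts simulating an exact-cover constraint --- together with an accurate identification of the obstacle (transitivity of linear orders prevents you from prescribing one head-to-head count without perturbing many others). But you do not construct the gadgets, you do not specify the target matrix, and you do not prove either direction of the equivalence. Since you yourself flag the gadget design as ``the main obstacle'' and ``the real work,'' the proposal stops precisely where the proof begins: no concrete source problem is fixed, no vote family is exhibited, and no argument rules out realizations of $M$ that hit every count without encoding a cover. As it stands this is a well-informed research plan, not a proof of $\np$-hardness, and the theorem remains unestablished by it.
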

\subsection{Bordawise Distance}

We introduce one more metric, similar in spirit to the positionwise and
pairwise ones, but defined on top of the election's Borda score vectors.  Given
two equal-sized elections~$E$ and~$E'$, their Bordawise
distance is:
\[
  \BOR(E,E') = \emd( \sort(\calB_E), \sort(\calB_{E'}) ),
\]
where for a vector~$x$,~$\sort(x)$ means the vector obtained from~$x$ by
sorting it in the nonincreasing order.
The Bordawise metric is defined to be as simple as possible, while trying
to still be meaningful. For example, 
sorting the score vectors ensures that two isomorphic elections are at
distance zero and removes the use of an explicit matching between the
candidates.

%

\begin{example}
  The Borda score vectors of the elections from~\Cref{ex:poswise} are
  \begin{align*}
  \calB_{E_1} =
         \kbordermatrix{ & a & b & c   \\
    &               3 & 5 & 1\\
    }, \\
  \calB_{E_2} =
         \kbordermatrix{ & x & y & z   \\
    &               4 & 3 & 2\\
    },
  \end{align*}
  and the distance between them is~$\emd\big((5,3,1),(4,3,2) \big) = 2$.
\end{example}

\begin{observation}
  The Bordawise distance is a pseudometric.
\end{observation}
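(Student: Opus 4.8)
The plan is to verify the three defining properties of a pseudometric, each of which should follow almost immediately from the fact that $\emd$ is a metric on vectors of equal total mass---a fact already invoked in the proof that the positionwise distance is a pseudometric. The key simplification relative to the positionwise and pairwise cases is that $\BOR$ involves no minimization over candidate matchings: the $\sort$ operation canonicalizes each election's Borda score vector into a single well-defined vector, so the metric properties of $\emd$ transfer directly.

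First I would check that $\emd$ is even applicable here. For any election with $m$ candidates and $n$ voters, the entries of $\calB_E$ are nonnegative and sum to $n\binom{m}{2}$, since each voter contributes $0 + 1 + \cdots + (m-1)$. Hence for two equal-sized elections the vectors $\sort(\calB_E)$ and $\sort(\calB_{E'})$ are nonnegative and share the same total mass, so $\emd$ is well-defined on them and, through the identity $\emd(x,y) = \ell_1(\hat{x},\hat{y})$ together with the fact that the prefix-sum map is injective, is a genuine metric on this set of vectors.

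The three checks are then routine. For the relaxed identity axiom, $\BOR(E,E) = \emd(\sort(\calB_E),\sort(\calB_E)) = 0$, since $\emd$ of a vector with itself vanishes. Symmetry, $\BOR(E,E') = \BOR(E',E)$, is inherited directly from the symmetry of $\emd$. For the triangle inequality, write $x = \sort(\calB_{E_1})$, $y = \sort(\calB_{E_2})$, and $z = \sort(\calB_{E_3})$; then $\BOR(E_1,E_3) = \emd(x,z) \le \emd(x,y) + \emd(y,z) = \BOR(E_1,E_2) + \BOR(E_2,E_3)$, where the middle inequality is simply the triangle inequality for $\emd$ with $y$ as the intermediate point.

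There is no genuine obstacle here: unlike the positionwise and pairwise distances, no optimal matching enters the definition, so the triangle inequality does not require reconciling two independently chosen permutations $\delta$ and $\sigma$. The only point worth flagging is why the statement claims a pseudometric rather than a metric---two non-isomorphic elections can share the same sorted Borda score vector, so the strict identity axiom $\BOR(E,E')=0 \Rightarrow E \cong E'$ fails, and a pseudometric is the strongest claim available.
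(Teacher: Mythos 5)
Your proposal is correct and follows essentially the same route as the paper, which likewise argues that all three pseudometric properties are inherited directly from the fact that EMD is a metric on the (sorted) Borda score vectors, and that only the pseudometric claim is available because distinct elections can share a Borda score vector. Your additional check that the vectors have equal total mass $n\binom{m}{2}$ is a worthwhile detail the paper leaves implicit, but it does not change the argument.
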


EMD is a distance itself, and the Bordawise distance simply computes the EMD between two Borda score vectors, so it must satisfy the triangle inequality and symmetry as well, and the distance between two identical vectors is zero. However, it might be the case that different elections will produce the same Borda score vector, so there will be two different elections at distance zero. Therefore, the Bordawise distance is a pseudometric.


\begin{observation}
  There is a polynomial-time algorithm for computing the Bordawise distance.
\end{observation}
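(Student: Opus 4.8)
The plan is to observe that the Bordawise distance $\BOR(E,E') = \emd(\sort(\calB_E), \sort(\calB_{E'}))$ is a composition of three individually polynomial-time computable subtasks, so I would simply verify each in turn. First, I would compute the two Borda score vectors. For an election $E=(C,V)$ with $|C|=m$ and $|V|=n$, each entry $\calB_E(c) = \sum_{i=1}^n (m-\pos_{v_i}(c))$ is obtained by a single scan over the votes: reading off the position of every candidate in every vote and accumulating the scores yields all $m$ entries of $\calB_E$ in $O(mn)$ time, and likewise for $\calB_{E'}$. Second, I would sort each of the two vectors into nonincreasing order, which costs $O(m\log m)$.

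Third, I would compute $\emd$ of the two sorted vectors, which I denote $x=\sort(\calB_E)$ and $y=\sort(\calB_{E'})$. The one point worth checking is that the precondition of the prefix-sum characterization from the preliminaries holds: both $x$ and $y$ are nonnegative (each Borda score lies in $[0,n(m-1)]$) and their entries sum to the same value, since in any $m$-candidate, $n$-voter election every voter contributes $0+1+\cdots+(m-1)$ points, so the total Borda score equals $n\binom{m}{2}$ — the same for $E$ and $E'$ because the elections are equal-sized. Hence $\emd(x,y) = \ell_1(\hat{x},\hat{y})$, computable in $O(m)$ time after forming the prefix sums (equivalently, one may invoke the greedy EMD algorithm referenced in the preliminaries, which is already stated there to run in polynomial time).

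There is no real obstacle here, as the statement is an observation rather than a hard theorem: its only subtlety is confirming the nonnegativity and equal-sum hypothesis that licenses the $\ell_1$-of-prefix-sums formula for EMD, and this follows immediately from the equal-sizedness of the two elections. Once that is noted, all three steps are plainly polynomial, giving an overall running time of $O(mn + m\log m)$.
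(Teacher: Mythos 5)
Your proposal is correct and follows essentially the same route as the paper, which simply notes that converting an election to its Borda score vector and computing EMD between two vectors are both polynomial-time tasks. Your additional check that the two sorted score vectors are nonnegative and sum to the same value $n\binom{m}{2}$ (so the prefix-sum formula for EMD applies) is a worthwhile detail the paper leaves implicit, but it does not change the argument.
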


Converting an election into a Borda score vector requires polynomial time, and computing EMD between two vectors uses polynomial time as well.


Unfortunately, for Borda score vectors (as for weighted majority
relations) it is hard to decide whether there exists a realization.

\begin{theorem}
Given a vector~$x$ of nonnegative integers, it is~$\np$-complete to
decide if there is an election~$E$ with~$\calB_E = x$.
\end{theorem}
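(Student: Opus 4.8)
My plan is to prove both membership in $\np$ and $\np$-hardness; the reduction is the substantive part, since membership follows cleanly from the machinery already in the excerpt.

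\textbf{Membership.} Given $x=(x_1,\dots,x_m)$, I would first observe that any realizing election must have exactly $n:=\frac{\sum_c x_c}{\binom{m}{2}}$ voters, because every single vote contributes $0+1+\cdots+(m-1)=\binom{m}{2}$ to the total Borda score; if this quotient is not a positive integer we reject outright. An election with $\calB_E=x$ exists if and only if there is a nonnegative integer $m\times m$ matrix $A=(a_{ci})$ whose every row sum and every column sum equals $n$ and for which $\sum_{i=1}^m (m-i)\,a_{ci}=x_c$ for each $c$: the forward direction just takes $A=\calP_E$, while the backward direction invokes the Observation above (the Leep--Myerson decomposition), which converts any such semi-magic $A$ into an actual election with $\calP(E)=A$. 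Since $A$ has polynomial bit-length (its entries in binary are at most $n$) and all three families of constraints are checkable in polynomial time, the problem lies in $\np$. Note that the election itself may be exponentially large, so it is essential to certify by the compressed object $A$ rather than by the list of votes.

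\textbf{Reframing and strategy.} Writing $w_i=m-i$, the feasibility question is now an integer transportation problem with fixed margins plus one prescribed weighted row sum per candidate. Its fractional relaxation is solvable exactly when $x/n$ is majorized by $(m-1,\dots,1,0)$, so all the difficulty is concentrated in the \emph{integrality} of the permutation decomposition: already $(1,1,1)$ with $m=3,\,n=1$ sits inside the permutohedron yet is not the score vector of a single vote. I would exploit precisely this gap, reducing from an $\np$-complete number problem (\textsc{Subset Sum}/\textsc{Partition}).

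\textbf{The gadget.} I would introduce one \emph{budget} candidate $c^\star$, one \emph{item position} of weight $\omega_j$ for each input number $a_j$, a zero-weight \emph{filler} position, and a block of \emph{padding} candidates with auxiliary positions. Set $x_{c^\star}$ to the subset-sum target $T$. Since $c^\star$ has $n$ tokens to distribute among the columns and scores the sum of the corresponding weights, its feasible rows are exactly the ways of writing $T$ as a bounded sum of item weights, with the token count padded up to $n$ by the zero-weight position; choosing the $\omega_j$ super-increasing and well separated forces every representation of $T$ to correspond to a genuine subset. The padding candidates and auxiliary positions exist only to \emph{complete} the matrix: whatever residual demand $c^\star$ leaves in the columns, the leftover is a transportation problem with matching margins, and the padding targets $x_p$ are set to central, robustly realizable values so that this completion always succeeds, independently of which subset $c^\star$ picks. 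Conversely, from any realizing election the row of $c^\star$ exhibits a subset summing to $T$.

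\textbf{Main obstacle.} The hard part is the padding block. Because the whole vector $x$, including the padding entries, is fixed before $c^\star$'s choice is made, I must engineer the auxiliary positions so that the padding candidates can simultaneously absorb the residual column demand left by \emph{every} admissible choice of $c^\star$ and still hit their own prescribed scores; this is exactly where the reduction's slack must be built in, and a careful feasibility/majorization argument is needed to show both that yes-instances complete and that no completion can ``cheat'' by realizing $T$ through an illegitimate multiset of weights. Two bookkeeping points round this out: the global total $\sum_c x_c$ must be made divisible by $\binom{m}{2}$ so that $n$ is a positive integer (arranged by the filler/padding), and the weights, though large in binary, keep the instance polynomial, consistent with ordinary (not strong) $\np$-completeness.
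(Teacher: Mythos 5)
Your membership argument is fine (and more careful than the paper, which does not spell it out): certifying by the position matrix rather than the vote list is indeed necessary, since $n=\sum_c x_c/\binom{m}{2}$ can be exponential in the bit-length of $x$, and the Leep--Myerson decomposition converts any semi-magic matrix with the right weighted row sums back into an election.

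The hardness reduction, however, has a fatal flaw before one even reaches the padding issue you flag. You treat the position weights $\omega_j$ as free parameters that can be chosen ``super-increasing and well separated,'' but Borda weights are forced: position $i$ carries weight exactly $m-i$, so the available weights are the consecutive integers $0,1,\dots,m-1$. To realize a super-increasing family of $k$ item weights you would need $m\ge 2^{k-1}$ candidates, i.e., an exponential-size instance, which breaks the reduction; and with consecutive weights there is no way to force the budget candidate's row to select ``exactly one token at position $j$ iff item $j$ is in the subset,'' since $T$ admits far too many representations as a sum of $n$ values from $\{0,\dots,m-1\}$. On top of this, the completion lemma for the padding block --- the part you yourself identify as the crux --- is only asserted, not proved. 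The paper sidesteps both problems by reducing from a problem of Yu, Hoogeveen, and Lenstra that is already phrased in terms of consecutive weights: given $a_1\ge\cdots\ge a_m$ with $\sum_i a_i=m(m+1)$ and $2\le a_i\le 2m$, decide whether two permutations $\phi,\phi'\in S_m$ satisfy $\phi(i)+\phi'(i)=a_i$ for all $i$. Setting $x=(a_1-2,\dots,a_m-2)$ gives $\sum_i x_i=m(m-1)$, which forces any realizing election to have exactly two voters, and the two votes are precisely the two permutations. If you want to keep a Subset Sum flavor you would need a genuinely different encoding that respects the fixed weight sequence $0,1,\dots,m-1$; as written, the gadget cannot be instantiated.
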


\begin{proof}
  \citet{DBLP:journals/scheduling/YuHL04} showed that given a sequence
  of positive integers~$a_1, \dots, a_m$ such that~$a_1 \geq a_2 \geq \cdots \geq a_m$,~$\sum_{i=1}^m a_i=m(m+1)$, and such that for each~$i \in [m]$ we have~$2\leq a_i \leq 2m$, it is~$\np$-complete to decide if there are two permutations~$\phi, \phi' \in S_m$ such that for all~$i \in [m]$ it holds that~$\phi(i)+\phi'(i)=a_i$. We reduce this problem to the one from the
  statement of the theorem by forming a vector~$x = (a_1-2, \ldots, a_m-2)$.

  If there are two permutations~$\phi$ and~$\phi'$ that satisfy the
  conditions of \citeauthor{DBLP:journals/scheduling/YuHL04}'s
  problem, then we form a two-voter election~$E = (C,V)$ as follows:
  We let~$C = [m]$ and we form two votes,~$v$ and~$v'$.  For each
  candidate~$i \in C$, the first (the second) voter ranks~$i$ on
  position~$m-\phi(i)+1$ ($m-\phi'(i)+1$); note that the produced
  votes rank exactly one candidate in each position because~$\phi$ and~$\phi'$ are permutations. Then, the Borda score of each~$i \in C$ is~$\phi(i)-1+\phi'(i)-1=a_i-2$.

  For the other direction, assume that there is an election~$E = (C,V)$ with Borda score vector~$x$. 
  Then,~$E$ must contain exactly two voters because otherwise the sum of the candidates'
  scores would either be too large or too small. W.l.o.g., we assume
  that~$C = [m]$ and that each candidate~$i \in C$ has Borda score~$a_i-2$. 
  Let~$v$ and~$v'$ be the two votes in~$E$.  We form a
  permutation~$\phi$ so that for each~$i \in C$ we have~$\phi(i) = m-\pos_v(i)+1$, We form~$\phi'$ analogously, but using~$v'$ instead of~$v$. It follows that for each~$i \in [m]$ we have~$\phi(i)+\phi'(i)=(a_i-2)+2=a_i$. This completes the proof.
\end{proof}

\subsection{Maps of Elections Using Nonisomorphic Distances}

\begin{figure}[t]
    \centering
    
    \begin{subfigure}[b]{0.49\textwidth}
        \centering
        \includegraphics[width=5.6cm, trim={0.2cm 0.2cm 0.2cm 0.2cm}, clip]{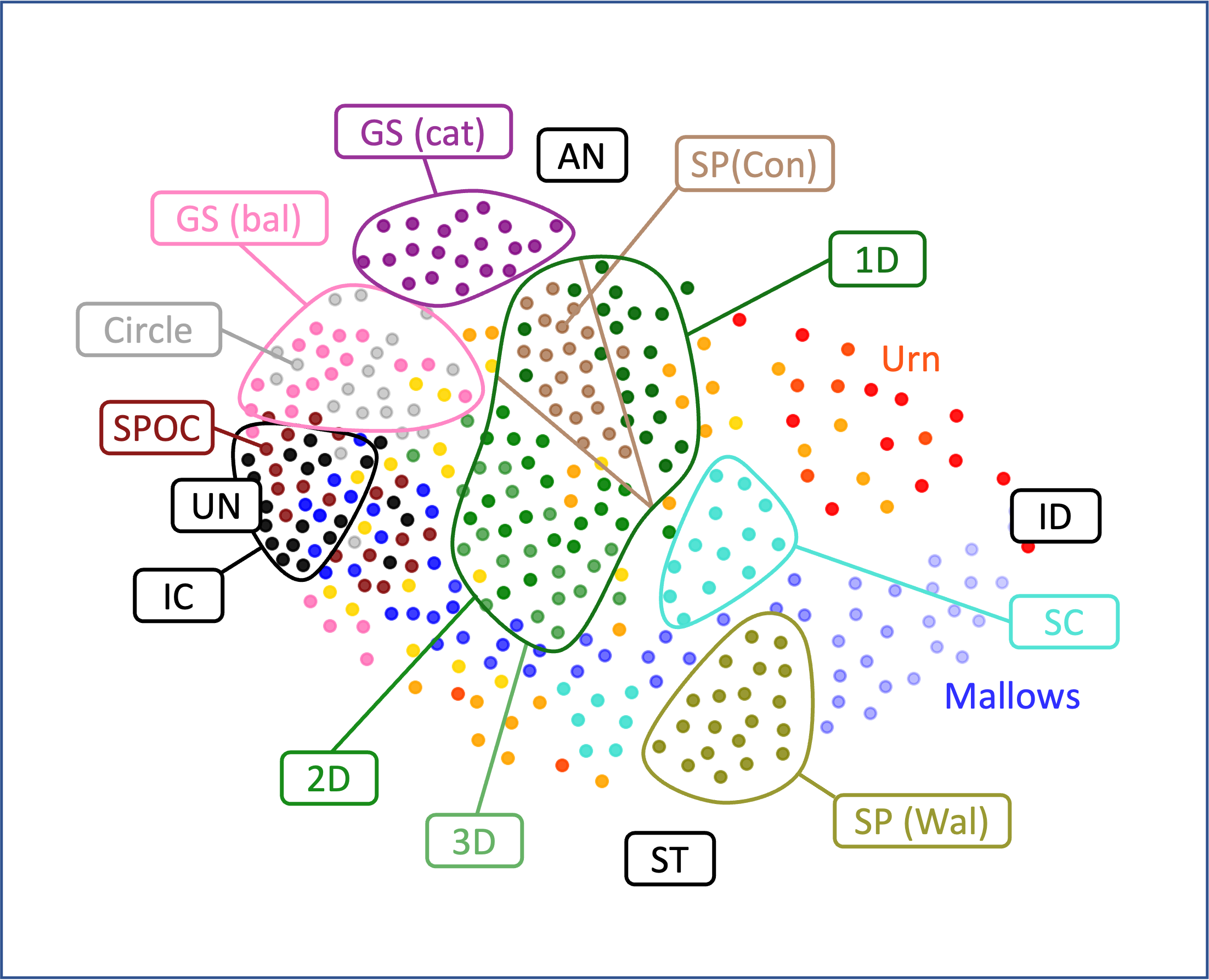}
        \caption{EMD-positionwise}
    \end{subfigure}
    \begin{subfigure}[b]{0.49\textwidth}
        \centering
        \includegraphics[width=5.6cm, trim={0.2cm 0.2cm 0.2cm 0.2cm}, clip]{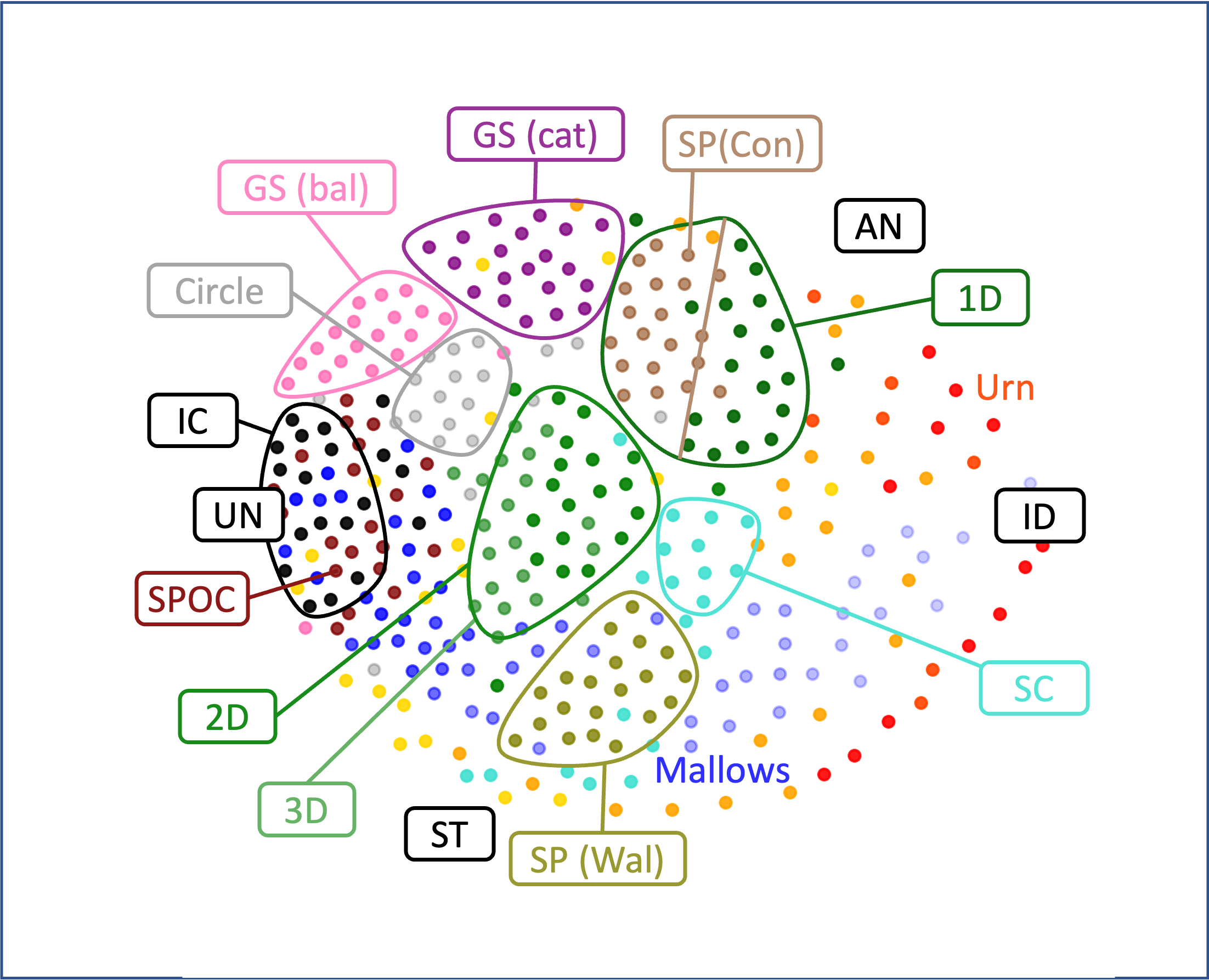}
        \caption{$\ell_1$-positionwise}
    \end{subfigure}
    
    \begin{subfigure}[b]{0.49\textwidth}
        \centering
        \includegraphics[width=5.6cm, trim={0.2cm 0.2cm 0.2cm 0.2cm}, clip]{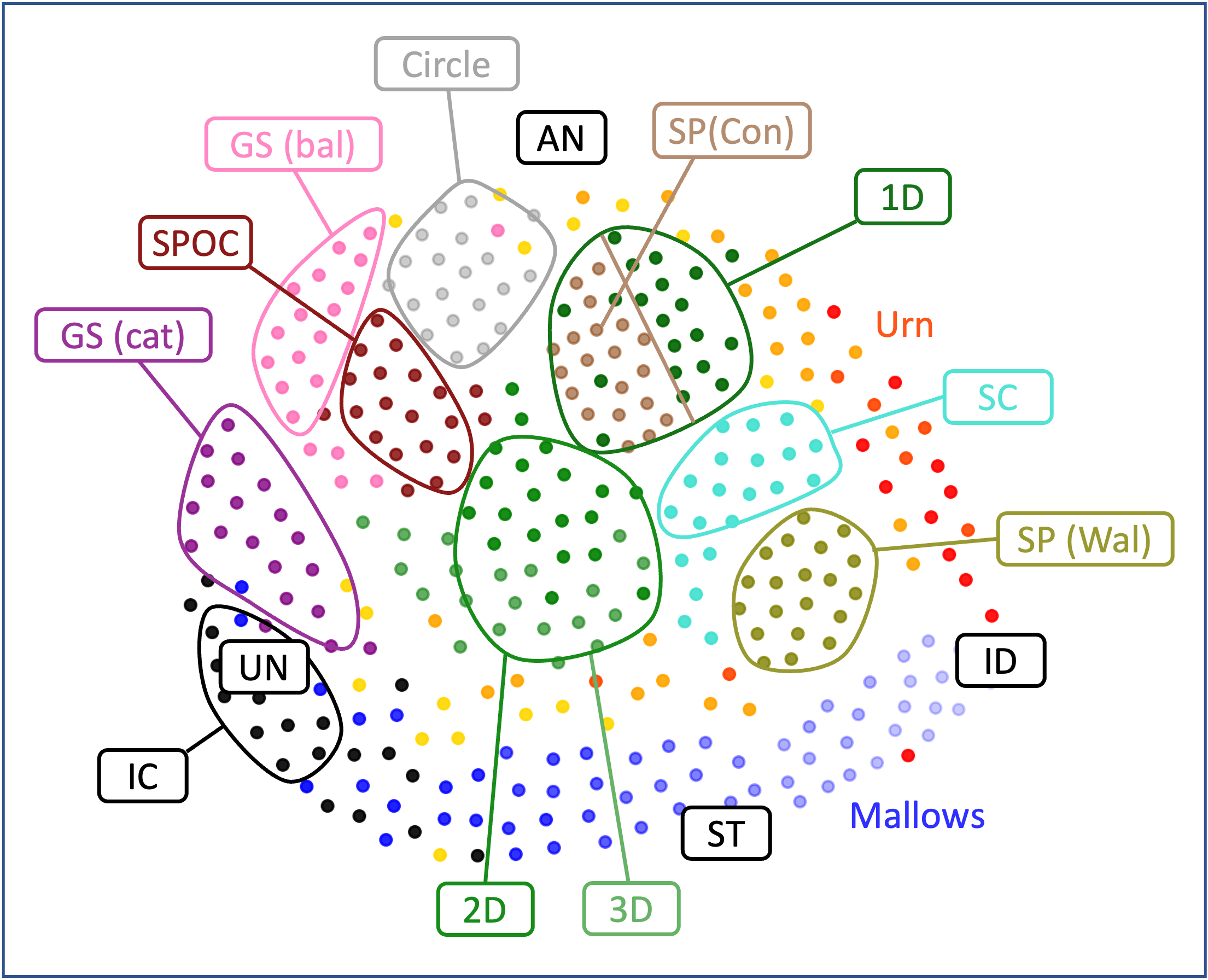}
        \caption{Swap}
    \end{subfigure}

    \caption{Comparison of maps of elections.}
    \label{fig:main_maps_noniso_1}
\end{figure}

This section is analogous to \Cref{sec:iso_maps_of_elections}, but this time we focus on the maps based on nonisomorphic distances and compare them with those for the swap distance.

We use the same elections as before. (Details of the dataset were described in \Cref{tab:testbed}). Just as a reminder, all elections consist of~$10$ candidates and~$50$ voters. However, the embedding algorithm differs from the one used in \Cref{sec:iso_dist}. In this chapter we decided to use the algorithm of
\citet{fruchterman1991graph} to place the points\footnote{More details about the differences between embeddings will be presented in \Cref{ch:applications}}.

We will start by focusing on maps based on the positionwise distance. In \Cref{fig:main_maps_noniso_1} we present two maps for the EMD- and~$\ell_1$- variants of the positionwise distance, and one map for the isomorphic swap distance, which will serve as a reference point. As we can see, EMD- and~$\ell_1$- variants are quite similar, and at first glance it is hard to say which one is better. By being better, we mean that the map is more similar to the one produced based on the swap distance.

There are three significant differences between the positionwise variants and the swap one. First, let us have a look at group-separable elections. Under the swap distance, balanced group-separable elections are closer to~$\AN$ than the caterpillar group-separable elections, while for the positionwise variants, the caterpillar elections are closer to~$\AN$ than the balanced ones. Second, for positionwise maps,~$\ST$ appears to be one of the extreme points, while for the swap distance map, the space seems to span between~$\AN$,~$\ID$, and~$\UN$, while~$\ST$ is not that crucial. Third, the swap distance is far better at distinguishing between SPOC and impartial culture elections.

\begin{figure}[t]
    \centering
    
    \begin{subfigure}[b]{0.49\textwidth}
        \centering
        \includegraphics[width=5.2cm, trim={0.2cm 0.2cm 0.2cm 0.2cm}, clip]{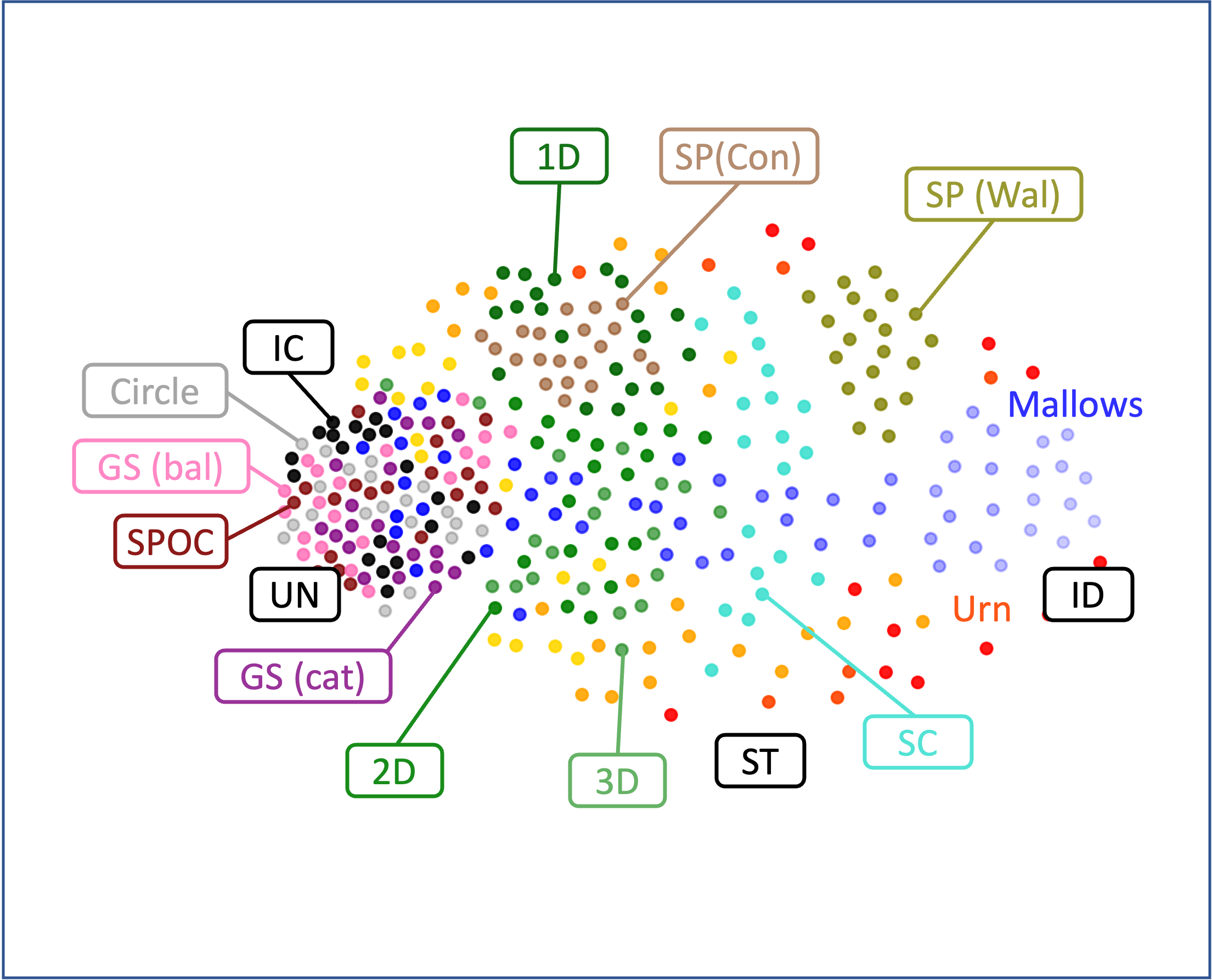}
        \caption{$\ell_1$-pairwise}
    \end{subfigure}
    \begin{subfigure}[b]{0.49\textwidth}
        \centering
        \includegraphics[width=5.2cm, trim={0.2cm 0.2cm 0.2cm 0.2cm}, clip]{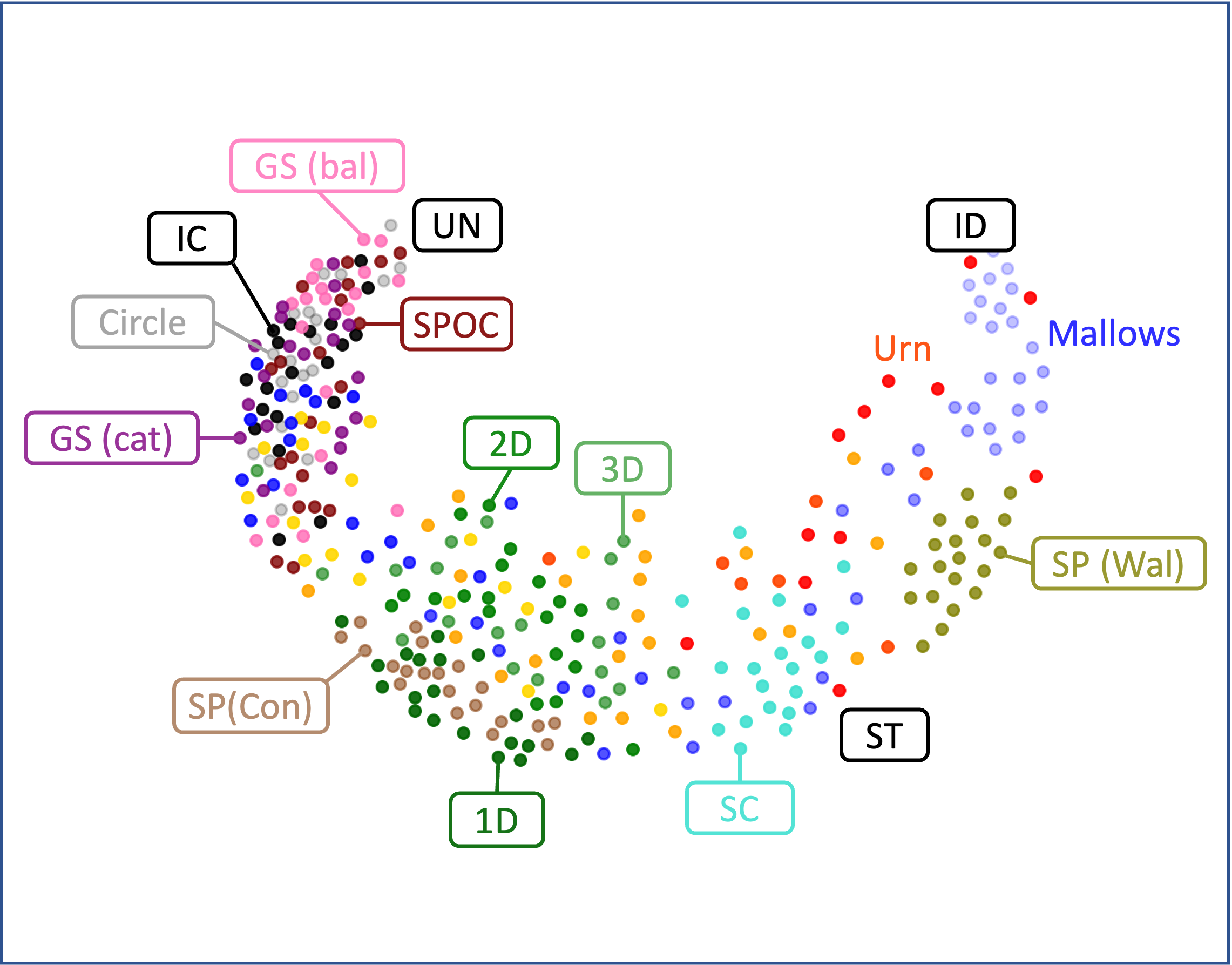}
        \caption{EMD-Bordawise}
    \end{subfigure}

    \caption{Comparison of maps of elections.}
    \label{fig:main_maps_noniso_2}
\end{figure}
\vspace{-0.45cm}

Next, we move to two more maps, i.e., the maps based on the pairwise and Bordawise distances. The results are presented in \Cref{fig:main_maps_noniso_2}. 
Both the Borda score vectors and the weighted majority relations do not distinguish between uniformity and antagonism elections (i.e., under both the Bordawise and pairwise distances, the distance between $\UN$ and $\AN$ is zero). Unfortunately, for the Bordawise distance, the situation is even more drastic. If the Borda score of all the candidates is more or less equal, then such elections will be almost identical under the Bordawise distance.
Note that in the maps based on the swap or positionwise distances, in all elections that lie in the upper left part of the map (somewhere between~$\UN$ and~$\AN$), all the candidates (on average) have very similar Borda scores. As to the pairwise distance, in spite of the fact that the whole Borda balance area is collapsing onto~$\UN$, the rest of the map looks relatively fine, that is, it roughly resembles the map based on the swap distance. 
In \Cref{fig:main_matrix_noniso} we present the average distances between elections from each pair of statistical cultures (we omitted the urn and Mallows elections because they are parametrized and comparing the average value would be meaningless). Each cell gives the average distance (according to a given metric) between the elections generated from respective models. All values are normalized by the largest possible distance under the given metric, i.e., the distance between~$\ID$ and~$\UN$ (we will return to the problem of calculating the largest possible distance, for a given metric, in~\Cref{subsec:analysis_compass}).

\vspace{0.2cm}
\begin{conclusionbox}
The main conclusion of~\Cref{sec:non_iso_dist} is the following.
Maps based on the positionwise distances show a lot of similarities to the map based on the swap distance, while, at the same time, being much easier to generate, due to computational complexity of respective distances.
\end{conclusionbox}

\begin{figure}[]
   
    \begin{subfigure}[b]{0.49\textwidth}
        \centering
        \includegraphics[width=6.6cm]{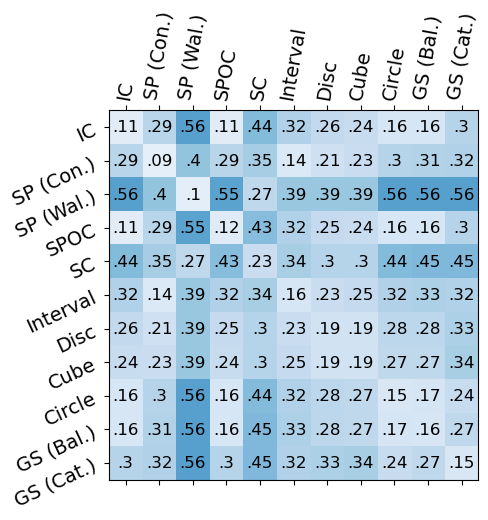}
        \caption{EMD-positionwise}
    \end{subfigure}
    \begin{subfigure}[b]{0.49\textwidth}
        \centering
        \includegraphics[width=6.6cm]{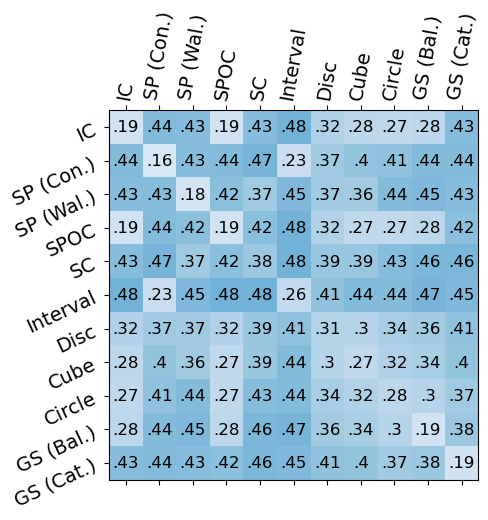}
        \caption{$\ell_1$-positionwise}
    \end{subfigure}
    
    \begin{subfigure}[b]{0.49\textwidth}
        \centering
        \includegraphics[width=6.6cm]{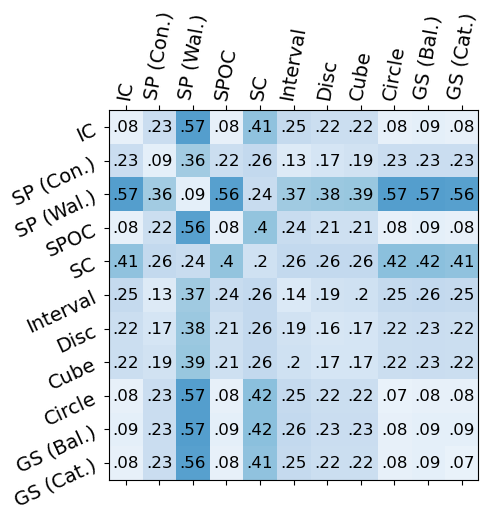}
        \caption{$\ell_1$-pairwise}
    \end{subfigure}
    \begin{subfigure}[b]{0.49\textwidth}
        \centering
        \includegraphics[width=6.6cm]{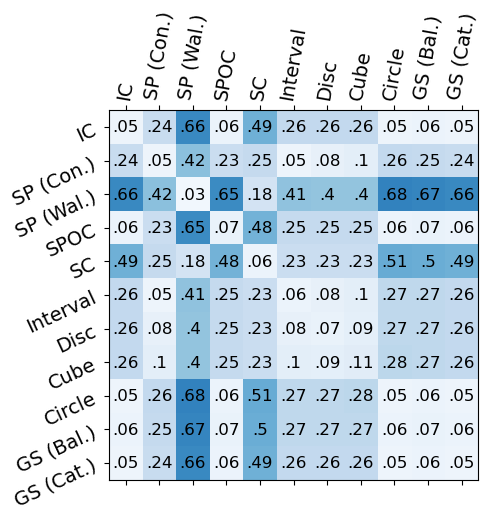}
        \caption{EMD-Bordawise}
    \end{subfigure}
    
\caption{The average distances between the elections from given cultures.}
\label{fig:main_matrix_noniso}
\end{figure}

\vspace{-0.2cm}
\section{Comparison}
In this section we compare nonisomorphic and isomorphic distances with each other. We start with an analysis of the compass. The relation between compass elections differs depending on the distance chosen. Next, we focus on the maps of elections and compare maps based on nonisomorphic distances with the map based on the swap distance---which we treat as an ideal one. Finally, we discuss the correlation between metrics, and conclude by discussing equivalence classes of our distances. 

\subsection{Analysis of the Compass}\label{subsec:analysis_compass}

For isomorphic distances, we can easily create instances of identity and antagonism elections. For the uniformity and stratification ones, we need exponentially many voters with respect to the number of candidates. Hence we usually use their approximations, as we described it in \Cref{ch:stat_cult}. Luckily, for nonisomorphic distances such as the positionwise, pairwise, and Bordawise ones, we can represent the compass perfectly, using the aggregate representations. We are going to describe all four characteristic points, and their aggregate representations, for each of the metrics described in the previous section. Moreover, we are going to present the distances between these characteristic points. We focus on the following variants: EMD-positionwise, $\ell_1$-positionwise, $\ell_1$-pairwise, and EMD-Bordawise. All our nonisomorphic distances are independent of the number of voters, and can be computed between elections with different numbers of voters. From isomorphic distances we study the swap and discrete ones---for them, to compare two elections, we need exactly the same numbers of voters and candidates in both elections.

The proofs of all the propositions from this section are in the~\Cref{apdx:supp}, due to their tediously technical character and limited interest.

\subsubsection{EMD-positionwise}

We start with the EMD-positionwise distance. Sometimes instead of using position matrix, it is more convenient to use its normalized variant, which we define as follows.

Consider an election $E = (C,V)$ with $C = \{c_1, \ldots, c_m\}$ and
$V = (v_1, \ldots, v_n)$. For each candidate $c_j$ and position
$i \in [m]$, we define $\#\freq_E(c_j,i)$ to be the fraction of the
votes from $V$ that 
rank $c_j$ in position $i$. We define the column vector
$\#\freq_E(c_j)$ to be $(\#\freq_E(c_j,1), \ldots, \#\freq_E(c_j,m))$
and matrix $\#\freq(E)$ to consist of vectors
$\#\freq_E(c_1), \ldots, \#\freq_E(c_m)$. We refer to $\#\freq(E)$ as the {\em frequency
  matrix} of election~$E$.  Frequency matrices are bistochastic, i.e.,
their entries are nonnegative and each of their rows and columns sums
up to one. Note that if we take position matrix and divide all its entries by the number of voters we immediately obtain the frequency matrix of the same election.\footnote{When we use frequency matrices instead of position matrices all the distances between such matrices are scaled by the factor of $\nicefrac{1}{n}$.}

Two most important matrices are the identity matrix~$\ID$ and the
uniformity matrix~$\UN$. The identity matrix corresponds to elections where each voter has the
same preference order, i.e., there is a common ordering of the
candidates from the most to the least desirable one. For $\ID$, we have ones on the diagonal and zeros elsewhere, as presented below.
\begin{align*}
 \ID_m = \begin{bmatrix}
    1 & 0 & \cdots & 0\\
    0 & 1 & \cdots & 0\\
    \vdots & \vdots & \ddots & \vdots \\
    0 & 0 & \cdots & 1
 \end{bmatrix}.\!\!
\end{align*}

\noindent

In contrast, the uniformity matrix captures elections where each candidate is
ranked on each position equally often, i.e., where, in aggregate, all
the candidates are viewed as equally good. Uniformity elections are
quite similar to the IC ones
and, in the limit, indistinguishable from them. 
Yet, for a fixed number of voters, typically IC elections are at some
(small) positionwise distance from uniformity. For $\UN$, each entry is equal to~$\nicefrac{1}{m}$.

\begin{align*}
 \UN_m = \begin{bmatrix}
    \nicefrac{1}{m} & \nicefrac{1}{m} & \cdots & \nicefrac{1}{m}\\
    \nicefrac{1}{m} & \nicefrac{1}{m} & \cdots & \nicefrac{1}{m}\\
    \vdots & \vdots & \ddots & \vdots \\
    \nicefrac{1}{m} & \nicefrac{1}{m} & \cdots & \nicefrac{1}{m}
 \end{bmatrix}.                   
\end{align*}

\noindent

\medskip

The next matrix, \emph{stratification}, is defined as
follows (we assume that~$m$ is even):
\[
  \ST_m = \begin{bmatrix}
    \UN_{\nicefrac{m}{2}} & 0 \\
    0 & \UN_{\nicefrac{m}{2}}
  \end{bmatrix}.
\]
Stratification matrices correspond to elections where the voters agree
that half of the candidates are 
more desirable than the other half, but, in aggregate, are unable to
distinguish between the qualities of the candidates in each group.
\medskip

For the final matrix, we need one more piece of notation.
Let~$\rID_m$ be the matrix obtained by reversing the order of the columns of the identity matrix~$\ID_m$. We define the \emph{antagonism} matrix,~$\AN_m$, to be
$
\textstyle
\nicefrac{1}{2} \ID_m+\nicefrac{1}{2} \rID_m.
$

\[
 \textstyle
 \AN_m = \frac{1}{2}\begin{bmatrix}
    1 & 0 & \cdots & 0\\
    0 & 1 & \cdots & 0\\
    \vdots & \vdots & \ddots & \vdots \\
    0 & 0 & \cdots & 1
 \end{bmatrix}
 +
 \frac{1}{2}\begin{bmatrix}
    0 & 0 & \cdots & 1\\
    \vdots & \vdots & \iddots & \vdots \\
    0 & 1 & \cdots & 0\\
    1 & 0 & \cdots & 0
 \end{bmatrix}.
\]

Such matrices are generated, for example, by
elections where half of the voters rank the
candidates in one way, and half of the voters rank them in the
opposite one, so there is a clear conflict. 
In some sense, stratification and antagonism are based on
similar premises. Under stratification, the group of candidates is
partitioned into halves with different properties, whereas in
antagonism (for the case where half of the voters rank the candidates in the same order)  the voters are partitioned. However, the nature of the
partitioning is, naturally, quite different.\medskip



\begin{restatable}{proposition}{distemdpos} 
\label{pr:emd-pos-dist}
    If~$m$ is divisible by~$4$, then it holds that:
  \begin{enumerate}
  \item~$\POS(\ID_m,\UN_m) = \frac{1}{3}(m^2-1)$,
  \item~$\POS(\ID_m,\AN_m) = \POS(\UN_m,\ST_m) = \frac{m^2}{4}$,
  \item
   ~$\POS(\ID_m,\ST_m) = \POS(\UN_m,\AN_m)  = 
    \frac{2}{3}(\frac{m^2}{4}-1)$,
  \item~$\POS(\AN_m,\ST_m) = \frac{13}{48} m^2 - \frac{1}{3}$.
  \end{enumerate}
\end{restatable}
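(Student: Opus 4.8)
The plan is to work throughout with the frequency-matrix (bistochastic) representations of the four compass elections, so that every column is a probability vector over the $m$ positions and all distances become independent of the number of voters. Recall from the preliminaries that for two vectors with equal entry-sums one has $\EMD(x,y) = \ell_1(\hat x, \hat y)$, where $\hat x$ is the prefix-sum vector; since each column of each compass matrix sums to $1$, I can compute every column-to-column EMD by this clean prefix-sum formula rather than by the greedy transport. By \Cref{def:poswise}, $\POS$ between two of these matrices is then the minimum, over bijections $\delta$ of the columns, of the total column EMD, i.e.\ a minimum-weight perfect matching in the complete bipartite graph whose edge weights are the pairwise column EMDs. The first step is therefore to record the explicit columns: $\ID_m$ has columns $e_a$ (unit mass at position $a$); $\UN_m$ has every column equal to $u=(\nicefrac1m,\dots,\nicefrac1m)$; $\AN_m$ has column $\nicefrac12 e_j+\nicefrac12 e_{m+1-j}$ (so the columns come in identical symmetric pairs $\{j,m+1-j\}$); and $\ST_m$ has $\nicefrac m2$ columns uniform on $\{1,\dots,\nicefrac m2\}$ and $\nicefrac m2$ columns uniform on $\{\nicefrac m2+1,\dots,m\}$.

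Three of the six distances need no optimization over $\delta$. Whenever one argument is $\UN_m$ all its columns are identical, so the matching is irrelevant and the distance is simply the sum of the per-column EMDs. Thus $\POS(\ID_m,\UN_m)=\sum_{a=1}^m \EMD(e_a,u)$; plugging $\hat e_a$ and $\hat u_i=\nicefrac im$ into the prefix-sum formula gives $\EMD(e_a,u)=\frac{(a-1)a+(m-a)(m-a+1)}{2m}$, and summing this quadratic expression (the two halves coincide under $a\mapsto m+1-a$) yields $\frac13(m^2-1)$, which is item~1. The same no-matching computation handles $\POS(\UN_m,\ST_m)$ and $\POS(\UN_m,\AN_m)$: each is a sum of per-column EMDs against the common vector $u$, giving $\nicefrac{m^2}{4}$ and $\frac23(\nicefrac{m^2}{4}-1)$ respectively. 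This already delivers the second equalities in items~2 and~3.

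For the remaining three distances the matching genuinely matters, and the core of the argument is to identify and justify the optimal $\delta$. For $\POS(\ID_m,\AN_m)$ I will match the unit mass $e_a$ to the (doubled) antagonism column whose support contains $a$, so each symmetric pair $\{a,m+1-a\}$ contributes $2\EMD(e_a,\nicefrac12 e_a+\nicefrac12 e_{m+1-a}) = m+1-2a$; summing over $a=1,\dots,\nicefrac m2$ gives $\nicefrac{m^2}{4}$. For $\POS(\ID_m,\ST_m)$ I will match the identity columns at positions $1,\dots,\nicefrac m2$ to the first-half stratification columns and those at $\nicefrac m2+1,\dots,m$ to the second-half ones; each block is exactly an $\ID_{\nicefrac m2}$-versus-$\UN_{\nicefrac m2}$ instance, so the total is $2\cdot\frac13((\nicefrac m2)^2-1)=\frac23(\nicefrac{m^2}{4}-1)$, item~3. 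In both cases optimality of the proposed $\delta$ follows from an exchange (Monge-type) argument: the column-EMD cost matrix is such that any ``crossing'' pair of assignments can be uncrossed without increasing total cost, so a monotone matching respecting the support structure is optimal.

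The main obstacle is item~4, $\POS(\AN_m,\ST_m)$. Here every antagonism column splits its mass evenly between the two halves of the position axis while every stratification column lives in a single half, so the optimal matching must send half of the antagonism columns to first-half and half to second-half stratification columns, and then sort within each side. Evaluating a single matched pair $\EMD(\nicefrac12 e_j+\nicefrac12 e_{m+1-j},\ \text{uniform on a half})$ via prefix sums produces a piecewise-linear expression whose breakpoint sits at position $\nicefrac m4$; this is exactly where the hypothesis $4\mid m$ is used, guaranteeing that $\nicefrac m4$ is an integer and that the case split ($j\le\nicefrac m4$ versus $j>\nicefrac m4$) is clean. After assembling the per-pair costs and summing the two resulting arithmetic series, the total collapses to $\frac{13}{48}m^2-\frac13$. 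I expect the bulk of the work --- and the only real subtlety beyond the optimality arguments above --- to be this case analysis and the careful bookkeeping of the summation ranges, which is why the detailed computation is deferred to the appendix.
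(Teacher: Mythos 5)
Your proposal is correct and follows essentially the same route as the paper's proof: identify the explicit columns of the four compass matrices, observe that the matching is irrelevant whenever one argument is $\UN_m$ (and, for $\AN_m$ vs.\ $\ST_m$, that the palindromic antagonism columns are equidistant from all stratification columns, so any matching works), reduce $\POS(\ID_m,\ST_m)$ to two copies of $\POS(\ID_{m/2},\UN_{m/2})$, and handle the $\AN_m$--$\ST_m$ case by a case split at position $m/4$. The only cosmetic differences are that you compute column EMDs via the prefix-sum identity where the paper uses the direct transport bookkeeping, and that your Monge-type exchange argument for the $\ID$--$\AN$ and $\ID$--$\ST$ matchings replaces the paper's slightly more direct observation that the identity permutation attains each column's individual minimum.
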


What is worth emphasizing is the fact that~$\POS(\ID,\UN)$ is the largest possible distance in the whole space; more precisely, there do not exist any other pair of elections that are at larger distance than~$\ID$ and~$\UN$.

The same is true for all other distances that we will analyze within this section. For more details, see the work of \cite{boe-fal-nie-szu-was:c:understanding}). 

\begin{theorem}[\cite{boe-fal-nie-szu-was:c:understanding}]
  For each two elections~$X$ and~$Y$, each over~$m$ candidates, it holds that~$\POS(X,Y) \leq \POS(\ID_m,\UN_m)$.
\end{theorem}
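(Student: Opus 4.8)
The plan is to prove the bound not just for elections but for every pair of bistochastic matrices, since each frequency matrix is bistochastic and enlarging the class can only make an upper bound harder to establish. Let $P$ and $Q$ denote the frequency matrices of $X$ and $Y$, with $P(c)$ the column of candidate $c$ and $\hat P_i(c)=\sum_{j\le i}P_{j,c}$ its prefix sums. Throughout I would use the prefix-sum form of the earth mover's distance from the preliminaries, $\EMD(x,y)=\ell_1(\hat x,\hat y)$, which for two columns reads $\EMD(P(c),Q(c'))=\sum_{i=1}^{m-1}\bigl|\hat P_i(c)-\hat Q_i(c')\bigr|$ (the $i=m$ term vanishes because every column sums to one). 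The single structural fact I would extract is that, since each row of a frequency matrix also sums to one, at every level $i$ the numbers $\{\hat P_i(c)\}_{c}$ lie in $[0,1]$ and total exactly the integer $i$.

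The crucial step is to bound the minimum over matchings by an average. Because $\POS(X,Y)$ is the minimum of $\sum_{c}\EMD(P(c),Q(\delta(c)))$ over all bijections $\delta$, and a minimum of a finite family is at most its mean, I would average over a uniformly random $\delta$. Since $\Pr[\delta(c)=c']=\nicefrac1m$ for every pair, this gives
\[
  \POS(X,Y)\le \frac1m\sum_{c,c'}\EMD(P(c),Q(c'))
  =\frac1m\sum_{i=1}^{m-1}\ \sum_{c,c'}\bigl|\hat P_i(c)-\hat Q_i(c')\bigr|,
\]
after swapping the order of summation. This move sidesteps the coupling of the levels through a single shared $\delta$, which is exactly what makes a direct level-by-level minimization awkward.

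It then remains to bound, for each fixed $i$, the quantity $S_i=\sum_{c,c'}|a_c-b_{c'}|$, where $a=(\hat P_i(c))_c$ and $b=(\hat Q_i(c'))_{c'}$ range over vectors in $[0,1]^m$ summing to the integer $i$. I would note that $S_i$ is a convex function of $(a,b)$ (a sum of absolute values of affine maps), so its maximum over this product of polytopes is attained at a vertex; and the vertices of $\{a\in[0,1]^m:\sum_c a_c=i\}$ are precisely the $0/1$ vectors with exactly $i$ ones. For such $0/1$ vectors a direct count gives $S_i=2i(m-i)$, independently of where the ones sit, hence $S_i\le 2i(m-i)$ in general. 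Substituting this and using $\sum_{i=1}^{m-1} i(m-i)=\nicefrac{(m-1)m(m+1)}{6}$ yields
\[
  \POS(X,Y)\le\frac1m\sum_{i=1}^{m-1}2i(m-i)=\frac{(m-1)(m+1)}{3}=\frac{m^2-1}{3},
\]
which is exactly $\POS(\ID_m,\UN_m)$ by Proposition~\ref{pr:emd-pos-dist}.

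The main obstacle is the coupling of the $m-1$ prefix-sum levels by one shared candidate matching $\delta$, so that the levels cannot be optimized separately; the averaging argument is what resolves this, trading the hard minimization for an expectation that factorizes across candidate pairs and hence across levels. The only other point needing care is the per-level extremal computation, where recognizing the feasible sets as hypersimplices with \emph{integer} parameter $i$ is what forces purely $0/1$ vertices and the clean value $2i(m-i)$. I would close by remarking that the estimate is tight precisely at $\ID_m$ versus $\UN_m$: there the random-matching inequality is an equality because all columns of $\UN_m$ coincide, and each per-level bound is already met, recovering the constant $\nicefrac{(m^2-1)}{3}$.
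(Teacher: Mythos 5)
Your proof is correct. Note first that the thesis itself does not prove this statement: it is imported with a citation to \cite{boe-fal-nie-szu-was:c:understanding}, so there is no in-paper argument to compare against. Judged on its own, your argument is sound and self-contained: the reduction to frequency matrices is legitimate (the footnotes in the compass section confirm all distances scale by $\nicefrac{1}{n}$, so the inequality is scale-invariant), the prefix-sum form of EMD applies because columns sum to one, the min-over-matchings $\le$ average-over-random-matchings step is valid, and the per-level quantity $S_i$ is indeed a convex function maximized at vertices of the product of hypersimplices, which for integer $i$ are exactly the $0/1$ vectors with $i$ ones, giving $S_i\le 2i(m-i)$ and hence the bound $\nicefrac{(m^2-1)}{3}$. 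Two small polish points. First, you invoke Proposition~\ref{pr:emd-pos-dist} for the value of $\POS(\ID_m,\UN_m)$, but that proposition is stated under the hypothesis $4\mid m$; the $\ID$--$\UN$ formula actually holds for every $m$ (and is trivial to verify directly, since all columns of $\UN_m$ coincide and no matching is needed), so you should either compute it inline or note that the divisibility assumption is only needed for the $\ST$/$\AN$ items. Second, the claim that the sum $\sum_c \hat P_i(c)$ equals the integer $i$ uses that rows of a frequency matrix sum to one, i.e.\ bistochasticity rather than mere column-stochasticity; you do say this, but it is the one place where the structure of elections (every vote ranks exactly one candidate in each position) genuinely enters, and it deserves the emphasis you give it, since without it the hypersimplex vertices need not be $0/1$ and the extremal value $2i(m-i)$ would fail.
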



To normalize the distances from \Cref{pr:emd-pos-dist}, we divide them by~$D(m) = \POS(\ID_m, \UN_m)$.
For each two matrices~$X$ and~$Y$ among our four compass matrices, we let~$\POS(X,Y) = \lim_{m \rightarrow
  \infty}\nicefrac{\POS(X_m,Y_m)}{D(m)}$. A simple computation shows
the following (see also the drawing on the right side; we sometimes omit
the subscript~$m$ for simplicity):

\begin{minipage}[b]{0.45\textwidth}
  \centering
  \begin{align*}
  \POS(\ID,\UN) &= 1,\\
  \POS(\ID,\AN) &= \POS(\UN,\ST) = \nicefrac{3}{4},\\
  \POS(\AN,\ST) &= \nicefrac{13}{16},\\
  \POS(\ID,\ST) &= \POS(\UN,\AN) = \nicefrac{1}{2}.\\
\end{align*}
\end{minipage}
\begin{minipage}[b]{0.5\textwidth}
  \centering
    \begin{tikzpicture}[xscale=0.5, yscale=0.5]
        \clip (-0.5, -3) rectangle (9, 3.5);
        \drawunabove{0}{0}
        \drawidabove{8}{0}
        \drawanaboveright{3}{2}
        \drawstbelowleft{5}{-2}
        \draw (1,0.5) -- (8,0.5);
        \draw (3,0.5) node[anchor=south] {$1$};
        \draw (1,1) -- (3,2.5);
        \draw (2,1.75) node[anchor=south] {$\frac{1}{2}$};
        \draw (4,2.5) -- (8,1);
        \draw (6,1.75) node[anchor=south] {$\frac{3}{4}$};
        \draw (4,2) -- (5,-1);
        \draw (4.7,0.75) node[anchor=south] {$\frac{13}{16}$};
        \draw (1,0) -- (5,-1.5);
        \draw (2,-0.5) node[anchor=north] {$\frac{3}{4}$};
        \draw (6,-1.5) -- (8,0);
        \draw (7.2,-0.5) node[anchor=north] {$\frac{1}{2}$};
      \end{tikzpicture}
\end{minipage}

\subsubsection{$\boldsymbol{\ell_1}$-positionwise}
For the~$\ell_1$-positionwise variant, all compass matrices are exactly the same as for the~$\EMD$-positionwise, so we move directly to computing distances between them.

\begin{restatable}{proposition}{distlonepos} 
    If $m$ is divisible by~$4$, then it holds that:
    \begin{enumerate}
        \item $\LPOS(\ID_m,\UN_m) = 2(m-1)$ 
        \item $\LPOS(\UN_m,\AN_m) = \LPOS(\AN_m,\ST_m) = \LPOS(\ID_m,\ST_m) = 2(m-2)$ 
        \item $\LPOS(\UN_m,\ST_m) = \LPOS(\ID_m,\AN_m) = m$
    \end{enumerate}
\end{restatable}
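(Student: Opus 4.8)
The plan is to work directly with the frequency (column) vectors of the four compass matrices and to exploit the symmetry of $\UN_m$, $\AN_m$, and $\ST_m$ to reduce each distance to a single per-candidate $\ell_1$ computation. First I would record the columns: writing $e_i$ for the $i$-th standard basis vector, the $i$-th column of $\ID_m$ is $e_i$; every column of $\UN_m$ equals the uniform vector $u = (\nicefrac{1}{m},\dots,\nicefrac{1}{m})$; the $j$-th column of $\AN_m$ is $a_j = \tfrac{1}{2}(e_j + e_{m+1-j})$; and $\ST_m$ has only two column types, $s^{(1)}$ (mass $\nicefrac{2}{m}$ on positions $1,\dots,\nicefrac{m}{2}$) and $s^{(2)}$ (mass $\nicefrac{2}{m}$ on positions $\nicefrac{m}{2}+1,\dots,m$), each occurring $\nicefrac{m}{2}$ times. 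Two structural facts drive everything: all columns of $\UN_m$ are identical, and each column $a_j$ of $\AN_m$ places exactly one of its two mass points in each half $\{1,\dots,\nicefrac{m}{2}\}$, $\{\nicefrac{m}{2}+1,\dots,m\}$ (moreover $a_j = a_{m+1-j}$, so each distinct $\AN_m$ column occurs twice).

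For three of the distances the candidate matching $\delta$ is irrelevant. Since $\LPOS(E_1,E_2) = \min_\delta \sum_c \ell_1(\cdot,\cdot)$, whenever one of the two elections has all columns equal every matching yields the same sum; hence $\LPOS(\ID_m,\UN_m) = m\cdot \ell_1(e_i,u)$, $\LPOS(\UN_m,\AN_m) = m\cdot \ell_1(u,a_j)$, and $\LPOS(\UN_m,\ST_m) = m\cdot \ell_1(u,s^{(1)})$. A direct computation gives $\ell_1(e_i,u) = \tfrac{2(m-1)}{m}$, $\ell_1(u,a_j) = 2 - \tfrac{4}{m}$, and $\ell_1(u,s^{(1)}) = 1$, yielding $2(m-1)$, $2(m-2)$, and $m$ respectively. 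For $\LPOS(\AN_m,\ST_m)$ the matching is again immaterial, but for a different reason: because each $a_j$ has one mass point in each half, $\ell_1(a_j,s^{(1)}) = \ell_1(a_j,s^{(2)}) = 2 - \tfrac{4}{m}$ independently of $j$ and of the chosen half, so every matching sums to $m(2-\tfrac{4}{m}) = 2(m-2)$.

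The two remaining distances, $\LPOS(\ID_m,\ST_m)$ and $\LPOS(\ID_m,\AN_m)$, are the only ones where the matching genuinely matters, and this is where the (mild) work lies. For each I would compute the two possible per-candidate costs and then exhibit a bijection attaining the smaller cost simultaneously on every column. For $\ID_m$--$\ST_m$: $\ell_1(e_i,s^{(1)}) = 2-\tfrac{4}{m}$ when $i\le\nicefrac{m}{2}$ and $=2$ otherwise, so matching the $\nicefrac{m}{2}$ indices of each half to the $\nicefrac{m}{2}$ $\ST_m$-columns of the same half is a valid bijection attaining $2-\tfrac{4}{m}$ on every term, for a total of $2(m-2)$. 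For $\ID_m$--$\AN_m$: $\ell_1(e_i,a_j) = 1$ exactly when $i \in \{j, m+1-j\}$ and $=2$ otherwise, so, using that each distinct $a_j$ appears twice, I match $e_j$ and $e_{m+1-j}$ to the two copies of $a_j$, attaining cost $1$ on every term, for a total of $m$.

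The only point requiring care, the ``main obstacle'' though a modest one, is justifying optimality of the matchings in the two $\ID_m$ cases. This is clean: in each case the exhibited bijection meets the obvious per-term lower bound (the minimum of the two possible per-candidate costs) on every single candidate at once, and no global trade-off can beat a termwise minimum, so the termwise minimum is the true minimum. I would also note that $m$ even already suffices for every computation (the blocks of $\ST_m$ only require $\nicefrac{m}{2}\in\mathbb{Z}$, and $m\ge 4$ guarantees $\nicefrac{2}{m}\le\nicefrac{1}{2}$ so that each absolute value opens with a fixed sign); divisibility by $4$ is inherited from the $\EMD$ statement but is not actually needed here.
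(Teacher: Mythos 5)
Your proposal is correct and follows essentially the same route as the paper's proof: compute the per-candidate $\ell_1$ costs column by column, use the identity (or half-preserving) matching, and argue optimality where the matching matters. Your explicit termwise-lower-bound justification for the two $\ID_m$ cases is in fact a bit cleaner than the paper's, and your observation that $m$ even already suffices is accurate.
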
 
 
As before, we normalize these distances by dividing them by the largest possible distance,~$\LPOS(\ID_m,\UN_m)$, and then compute the limits.
 
\medskip



 
 \begin{minipage}[b]{0.45\textwidth}
  \centering
  \begin{align*}
  \LPOS(\ID,\UN) &= \LPOS(\AN,\ST) \\
             &= \LPOS(\ID,\ST) \\
             &= \LPOS(\UN,\AN) = 1.\\
  \LPOS(\ID,\AN) &= \LPOS(\UN, \ST) = \nicefrac{1}{2},\\
\end{align*}
\end{minipage}
\begin{minipage}[b]{0.5\textwidth}
  \centering
    \begin{tikzpicture}[xscale=0.5, yscale=0.5]
        \clip (-0.5, -3) rectangle (9, 3.5);
        \drawunabove{0}{0}
        \drawidabove{8}{0}
        \drawanaboveleft{5}{2}
        \drawstbelowright{3}{-2}
        \draw (1,0.5) -- (8,0.5); 
        \draw (3.5,0.5) node[anchor=south] {$1$}; 
        \draw (1,1) -- (5,2.5); 
        \draw (2.5,1.5) node[anchor=south] {$1$}; 
        \draw (6,2.5) -- (8,1); 
        \draw (7.25,1.5) node[anchor=south] {$\frac{1}{2}$};  
        \draw (5,2) -- (4,-1); 
        \draw (5.2,0.75) node[anchor=south] {$1$}; 
        \draw (1,0) -- (3,-1.5); 
        \draw (1.5,-0.5) node[anchor=north] {$\frac{1}{2}$}; 
        \draw (4,-1.5) -- (8,0); 
        \draw (6.5,-0.5) node[anchor=north] {$1$}; 
      \end{tikzpicture}
\end{minipage}

If we compare the EMD and~$\ell_1$ variants, we will see that, for EMD,~$\POS(\ID,\UN)$ is dominating all other distances, while for~$\ell_1$,~$\POS(\AN,\ST)$, ~$\POS(\ID,\ST)$, and~$\POS(\UN,\AN)$ are almost as large as~$\POS(\ID,\UN)$.


\subsubsection{$\boldsymbol{\ell_1}$-pairwise}
For the pairwise distance we only consider the~$\ell_1$ variant, so usually instead of~$\ell_1$-pairwise we simply write pairwise. 

As before, we start by defining weighted majority relations for our compass elections, 
normalized by the number of voters.\footnote{When we use normalized weighted majority relations instead of  unnormalized ones, all the distances are scaled by the factor of $\nicefrac{1}{n}$.} 
For the identity, we simply have a matrix with ones above the diagonal and zeros below.
     \begin{align*}
\ID_m &= {\scriptsize\begin{bmatrix}
    - & 1 & \cdots & 1 & 1\\
    0 & - & \cdots & 1 & 1\\
    \vdots & \vdots & \ddots & \vdots & \vdots\\
    0 & 0 & \cdots & - & 1 \\
    0 & 0 & \cdots & 0 & -
  \end{bmatrix}},
   \end{align*}
   
Now, we observe something interesting. Both uniformity and antagonism produce exactly the same weighted majority relation with undefined values on the diagonal and~$0.5$ values everywhere else. In head-to-head comparisons between any two candidates there is always a tie. 

  \begin{align*}
\UN_m = \AN_m &= {\scriptsize\begin{bmatrix}
    - & 0.5 & \cdots & 0.5 & 0.5\\
    0.5 & - & \cdots & 0.5 & 0.5\\
    \vdots & \vdots & \ddots & \vdots & \vdots\\
    0.5 & 0.5 & \cdots & - & 0.5 \\
    0.5 & 0.5 & \cdots & 0.5 & -
  \end{bmatrix}}, 
   \end{align*}
   
Finally, we present the matrix for the stratification election. It consists of four squares. The upper-right square is filled with ones, the lower-left square is filled with zeros, while the upper-left and lower-right squares are undefined on the diagonal and have~$0.5$ values elsewhere.

     \begin{align*}
\ST_m &= {\scriptsize\begin{bmatrix}
- & 0.5 & \cdots & 0.5 & 1 &\cdots & 1 & 1\\
0.5 & - & \cdots & 0.5 & 1 &\cdots & 1 & 1 \\
\vdots & \vdots & \ddots & \vdots&  \vdots & \ddots & \vdots &   \vdots \\
0.5 & 0.5 & \cdots & - & 1 &\cdots & 1 & 1 \\
0 & 0 & \cdots & 0 & - & \cdots & 0.5 & 0.5 \\
\vdots & \vdots & \ddots & \vdots&  \vdots & \ddots & \vdots &   \vdots \\
0 & 0 & \cdots & 0 & 0.5 & \cdots & -  & 0.5\\
0 & 0 & \cdots & 0 & 0.5 & \cdots & 0.5  & -
\end{bmatrix}}
\end{align*}

\noindent
Next, we compute the pairwise distances of these matrices. Because~$\UN$ and~$\AN$ are identical, we omit distances between~$\AN$ and other points from the compass.

\begin{restatable}{proposition}{distpair} 
    It holds that:
  \begin{enumerate}
  \item $\PAIR(\ID_m,\UN_m) = \frac{1}{2}m(m-1)$
  \item $\PAIR(\UN_m,\ST_m) = \frac{1}{4}m^2$
  \item $\PAIR(\ID_m,\ST_m) = \frac{1}{4}m(m-2)$
  \end{enumerate}
\end{restatable}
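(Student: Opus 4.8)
The plan is to work directly with the normalized weighted majority matrices displayed above, since $\PAIR$ is by definition the minimum over candidate bijections $\delta$ of the entrywise $\ell_1$ difference between one matrix and the $\delta$-relabeling of the other. The key simplification for parts (1) and (2) is that $\UN_m$ is invariant under every relabeling of the candidates: all its off-diagonal entries equal $\nicefrac12$, so $\calM_{\UN_m}(\delta(c),\delta(d)) = \nicefrac12$ for every $\delta$ and every pair. Hence the minimization over $\delta$ is vacuous, and each of these two distances is just a direct sum of absolute differences over the $m(m-1)$ ordered off-diagonal positions.

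For (1), each off-diagonal entry of $\ID_m$ is either $1$ (above the diagonal) or $0$ (below it), so it differs from the corresponding $\nicefrac12$ entry of $\UN_m$ by exactly $\nicefrac12$; summing over all $m(m-1)$ ordered pairs gives $\nicefrac12\cdot m(m-1)$. For (2), I would compare $\UN_m$ with $\ST_m$ block by block: the two diagonal $\nicefrac m2\times\nicefrac m2$ blocks of $\ST_m$ already carry $\nicefrac12$ off-diagonal and contribute nothing, while the upper-right block (all ones) and the lower-left block (all zeros) each differ from $\nicefrac12$ by $\nicefrac12$ in every one of their $(\nicefrac m2)^2$ entries; this yields $2\cdot(\nicefrac m2)^2\cdot\nicefrac12 = \nicefrac14 m^2$.

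Part (3) is the genuinely combinatorial case and the main obstacle, because neither $\ID_m$ nor $\ST_m$ is relabeling-invariant, so the minimization over $\delta$ must actually be solved. The crucial observation I would use is that the cost depends on $\delta$ only through the partition it induces: all intra-block entries of $\ST_m$ equal $\nicefrac12$, so what matters is only which of the $m$ identity-indices are mapped into the first half $A$ and which into the second half $B$ (with $|A|=|B|=\nicefrac m2$ forced). I would then evaluate the cost one unordered pair $\{i,j\}$ at a time, summing its two ordered directions: a pair whose images lie in the same half costs $\nicefrac12+\nicefrac12 = 1$, and since the block sizes are fixed this same-half contribution is the constant $2\binom{\nicefrac m2}{2}$; a cross-half pair costs $0$ if the smaller identity-index goes to $A$ and $2$ if it goes to $B$. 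Minimizing therefore amounts to routing as many cross pairs as possible with their smaller index into $A$, which is achieved exactly by the prefix assignment $A=\{1,\dots,\nicefrac m2\}$, making every cross pair cost $0$. This leaves only the fixed same-half term $2\binom{\nicefrac m2}{2} = \nicefrac m2(\nicefrac m2-1) = \nicefrac14 m(m-2)$, as claimed. The only thing to write out carefully is this per-pair cost table together with the argument that the prefix partition is optimal; everything else is routine counting.
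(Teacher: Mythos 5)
Your proposal is correct, and for parts (1) and (2) it coincides with the paper's proof: both exploit that $\UN_m$ is invariant under relabeling, so the minimization over $\delta$ is vacuous, and then count $\nicefrac{1}{2}$ per off-diagonal entry (all $m(m-1)$ of them for $\ID_m$ vs.\ $\UN_m$, and only the $2(\nicefrac{m}{2})^2$ entries of the off-diagonal blocks for $\UN_m$ vs.\ $\ST_m$).

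For part (3) your route is slightly different and, in fact, more complete than the paper's. The paper evaluates the identity permutation by observing that the off-diagonal blocks match exactly and each diagonal block reduces to an $\ID_{\nicefrac{m}{2}}$-vs.-$\UN_{\nicefrac{m}{2}}$ computation, giving $2\cdot\frac{1}{2}\cdot\frac{m}{2}(\frac{m}{2}-1)=\frac{1}{4}m(m-2)$; it then dismisses all other permutations with a one-sentence claim that each candidate can only contribute the same or more. Your per-pair cost table (same-half pair costs $1$, cross pair costs $0$ or $2$ depending on which image lands in the top half) makes the optimality argument explicit: the same-half contribution $2\binom{m/2}{2}=\frac{1}{4}m(m-2)$ is forced by the block sizes alone, and the prefix assignment zeroes out every cross pair. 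This buys you a genuinely airtight proof that the identity permutation is optimal, at the cost of a little extra bookkeeping; the paper's version is shorter but leaves the minimization step essentially unargued. Both arrive at the same numbers, so there is nothing to fix.
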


As before, we normalize these distances by dividing them by the largest possible distance,~$\PAIR(\ID_m,\UN_m)$, and then compute the limits.

\begin{minipage}[b]{0.45\textwidth}
  \centering
  \begin{align*}
  \PAIR(\ID,\UN) &= 1 \\  
  \PAIR(\UN,\ST) &= \PAIR(\ID,\ST) = \frac{1}{2} \\
  &
    \end{align*}
\end{minipage}
\begin{minipage}[b]{0.5\textwidth}
  \centering
    \begin{tikzpicture}[xscale=0.5, yscale=0.5] 
        \clip (-0.5, -2.2) rectangle (9, 3.5);
        
        \draw (0.75,0.5) node[anchor=south] (A) {};
        \draw (8.25,0.5) node[anchor=south] (D) {};
        \draw [decorate,decoration={brace,amplitude=20}] (A.south east) -- (D.south west);
        
        \draw (4.25,0.5) node[anchor=south] (B) {};
        \draw [decorate,decoration={brace,amplitude=10,mirror}] (A.south east) -- (B.south west);
        
        \draw (4.75,0.5) node[anchor=south] (C) {};
        \draw [decorate,decoration={brace,amplitude=10,mirror}] (C.south east) -- (D.south west);
        
        \drawunabovepair{0}{0}
        \drawidabovepair{8}{0}
        \drawstbelowpair{4}{0}
        \draw (1,0.5) -- (8,0.5); 
        \draw (4.5,1.9) node[anchor=south] {$1$}; 
        \draw (1,0.5) -- (4,0.5); 
        \draw (2.5,-0.1) node[anchor=north] {$\frac{1}{2}$}; 
        \draw (5,0.5) -- (8,0.5); 
        \draw (6.5,-0.1) node[anchor=north] {$\frac{1}{2}$}; 
      \end{tikzpicture}
\end{minipage}


\subsubsection{EMD-Bordawise}
For the Bordawise distance we only consider the EMD variant, so usually instead of EMD-Bordawise we simply write Bordawise. Moreover, we normalize all the values in Borda score vectors by $n$.\footnote{It means that all the distances between such normalized vectors are scaled by the factor of $\nicefrac{1}{n}$.}

The Borda score vector of the identity election is as follows:
\begin{align*}
\ID_m &= [{\scriptstyle(m-1),(m-2),\dots,1,0}]%
\end{align*}

\noindent
As was the case for the pairwise distance, here again uniformity and antagonism are indistinguishable and produce the same Borda score vector.
\begin{align*}
\UN_m = \AN_m &= [{\scriptstyle \frac{m-1}{2},\dots,\frac{m-1}{2} }]
\end{align*}

\noindent
Finally, we have the vector for the stratification election.
\begin{align*}
\ST_m &= [{\scriptstyle \frac{3(m-1)}{4},\dots,\frac{3(m-1)}{4},\frac{m-1}{4},\dots,\frac{m-1}{4}}]
\end{align*}

\noindent
Next we compute the Bordawise distances of these vectors:

\begin{restatable}{proposition}{distborda} 
    If~$m$ is even, it holds that:
  \begin{enumerate}
  \item $\BOR(\ID_m,\UN_m) = \frac{1}{12} \cdot m (m^2-1)$
  \item $\BOR(\UN_m,\ST_m)) = \frac{1}{16} \cdot m^2 (m-1)$
  \item $\BOR(\ID_m,\ST_m) = \frac{1}{48} \cdot m (m^2+3m-4)$
  \end{enumerate}
\end{restatable}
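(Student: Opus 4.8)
The plan is to reduce each Bordawise distance to an $\ell_1$ computation on prefix-sum vectors. Recall from the preliminaries that whenever two nonnegative vectors have equal coordinate sums, their earth mover's distance equals $\ell_1(\hat{x},\hat{y})$, where $\hat{x}$ denotes the prefix-sum vector. The three score vectors $\ID_m$, $\UN_m$, $\ST_m$ listed above the statement are already written in nonincreasing order, so the $\sort$ operator acts trivially, and each of them sums to $\tfrac{1}{2}m(m-1)$: for $\ID_m$ this is $\sum_{j=0}^{m-1} j$, for $\UN_m$ it is $m\cdot\tfrac{m-1}{2}$, and for $\ST_m$ it is $\tfrac{m}{2}\big(\tfrac{3(m-1)}{4}+\tfrac{m-1}{4}\big)$. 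Hence the hypotheses for the prefix-sum formula hold for every pair, and it remains to compute the three $\ell_1$ distances between prefix-sum vectors.

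First I would dispatch the two easy cases. For $\BOR(\ID_m,\UN_m)$, the $i$-th prefix sums are $im-\tfrac{i(i+1)}{2}$ and $\tfrac{i(m-1)}{2}$, and a short simplification gives a difference of $\tfrac{i(m-i)}{2}\ge 0$ for all $i$; the $\ell_1$ distance is therefore $\tfrac{1}{2}\sum_{i=1}^{m} i(m-i)$, which evaluates via $\sum i$ and $\sum i^2$ to $\tfrac{1}{12}m(m^2-1)$. For $\BOR(\UN_m,\ST_m)$, since $\ST_m$ is constant $\tfrac{3(m-1)}{4}$ on the first $m/2$ coordinates and constant $\tfrac{m-1}{4}$ afterwards, its prefix sum is piecewise linear; subtracting the linear prefix sum of $\UN_m$ one checks that the difference is $\tfrac{i(m-1)}{4}$ for $i\le m/2$ and $\tfrac{(m-i)(m-1)}{4}$ for $i>m/2$, both nonnegative. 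Summing these two arithmetic progressions (with $k=m/2$) gives $\tfrac{(m-1)k^2}{4}=\tfrac{1}{16}m^2(m-1)$.

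The main obstacle is $\BOR(\ID_m,\ST_m)$, where the quadratic prefix sum of $\ID_m$ is compared against the piecewise-linear prefix sum of $\ST_m$, and one must rule out a sign change of the difference inside each block before dropping the absolute value. I would split at $i=m/2$: for $i\le m/2$ one computes the prefix-sum difference to be $\tfrac{i(m+1-2i)}{4}$, which is nonnegative precisely because $2i\le m$; for $i>m/2$ the same difference factors as $\tfrac{(m-i)(2i-m+1)}{4}$, again nonnegative since $i\le m$ and $2i\ge m+2$. The $\ell_1$ distance is then $\tfrac14\sum_{i=1}^{m/2} i(m+1-2i)+\tfrac14\sum_{i=m/2+1}^{m}(m-i)(2i-m+1)$, and substituting $k=m/2$ together with $j=i-k$ in the second sum turns both into standard polynomial sums in $k$. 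Combining them yields $\tfrac{k(2k-1)(k+2)}{12}=\tfrac{m(m-1)(m+4)}{48}=\tfrac{1}{48}m(m^2+3m-4)$. The only genuine care needed throughout is this sign bookkeeping and keeping the two summation ranges straight; everything else is the mechanical use of $\sum_{i=1}^{k} i=\tfrac{k(k+1)}{2}$ and $\sum_{i=1}^{k} i^2=\tfrac{k(k+1)(2k+1)}{6}$.
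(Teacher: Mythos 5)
Your computations are correct, and all three closed forms check out (including the sign analysis: the prefix-sum differences you exhibit are indeed nonnegative on each block, and the final sums simplify as you claim). Your route differs from the paper's in a way worth noting. The paper handles items 1 and 2 by direct transport arguments on the sorted score vectors (pairing surplus and deficit positions symmetrically about the middle and summing move costs), and then obtains item 3 with no further computation by writing $\BOR(\ID_m,\ST_m) = \BOR(\UN_m,\ID_m) - \BOR(\UN_m,\ST_m)$ --- that is, by implicitly asserting that $\ST$ sits on a geodesic between $\ID$ and $\UN$ so that the triangle inequality is tight. You instead run everything uniformly through the identity $\EMD(x,y)=\ell_1(\hat{x},\hat{y})$ and compute item 3 directly, at the cost of one extra polynomial summation and the per-block sign check. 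Your approach is the more self-contained of the two: the paper's subtraction step is stated without justification, and what it actually requires is precisely that the $i$-th prefix sum of $\ST$ lies between those of $\UN$ and $\ID$ for every $i$ --- which is exactly the set of inequalities your sign bookkeeping establishes. So your direct computation of item 3 simultaneously proves the collinearity fact the paper takes for granted.
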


\medskip




As before, we normalize these distances by dividing them by the largest possible distance,~$\BOR(\ID_m,\UN_m)$, and then compute the limits.

\begin{minipage}[b]{0.45\textwidth}
  \centering
  \begin{align*}
  \BOR(\ID,\UN) &= 1 \\  
  \BOR(\UN,\ST) &= \frac{3}{4} \\
  \BOR(\ID,\ST) &= \frac{1}{4}
\end{align*}
\end{minipage}
\begin{minipage}[b]{0.5\textwidth}
  \centering
    \begin{tikzpicture}[xscale=0.5, yscale=0.5]
        \clip (-0.5, -1.8) rectangle (9, 3.5);
    
        \draw (0.75,0.5) node[anchor=south] (A) {};
        \draw (8.25,0.5) node[anchor=south] (D) {};
        \draw [decorate,decoration={brace,amplitude=20}] (A.south east) -- (D.south west);
        
        \draw (5.75,0.5) node[anchor=south] (B) {};
        \draw [decorate,decoration={brace,amplitude=10,mirror}] (A.south east) -- (B.south west);
        
        \draw (6.25,0.5) node[anchor=south] (C) {};
        \draw [decorate,decoration={brace,amplitude=10,mirror}] (C.south east) -- (D.south west);

        \drawunaboveborda{0}{0}
        \drawidaboveborda{8}{0}
        \drawstbelowborda{5.5}{0}
        \draw (1,0.5) -- (5.5,0.5); 
        \draw (6.5,0.5) -- (8,0.5); 
        \draw (4.5,1.9) node[anchor=south] {$1$}; 
        \draw (1,0.5) -- (5.5,0.5); 
        \draw (3.25,-0.1) node[anchor=north] {$\frac{3}{4}$}; 
        \draw (6.5,0.5) -- (8,0.5); 
        \draw (7.25,-0.1) node[anchor=north] {$\frac{1}{4}$}; 
      \end{tikzpicture}
\end{minipage}

The distances (and the whole picture) for the Bordawise distance are very similar to those of the pairwise distance. The only major difference is the placement of the stratification election. While under the pairwise distance it is located in the middle between identity and uniformity, for the Bordawise distance it is much closer to identity.

\subsubsection{Swap}
Unlike for the nonisomorphic distances, for the swap distances we do not have any aggregate form of elections and compute the swap distances on the original elections. Unfortunately, not all compass elections we can easily generate with any~$m$ and~$n$. For the identity election we have the simplest scenario, because for any~$n$ we can easily generate~$\ID_{m,n}$. For the antagonism election it is also simple: To generate~$\AN_{m,n}$, it suffices to assume that~$n$ is even. However, for the stratification and uniformity elections the situation is getting complicated, because for~$\ST_{m,n}$ we need~$\frac{m}{2}!|n$, and for~$\UN_{m,n}$ we need~$m!|n$.

\begin{proposition}
    If $m!|n$ it holds that:
  \begin{enumerate}
    \item $d_\swap(\ID_{m,n},\UN_{m,n}) = d_\swap(\ID_{m,n},\AN_{m,n}) = \frac{1}{4}n(m^2-m)~$
    \item $d_\swap(\ID_{m,n},\ST_{m,n}) = \frac{1}{8}n(m^2-2m)$
    \item $d_\swap(\UN_{m,n},\AN_{m,n}) = \Theta(nm^2)\text{ (see Remark 1 below)}$
    \item $d_\swap(\UN_{m,n},\ST_{m,n})= \frac{1}{8}nm^2$
    \item $d_\swap(\AN_{m,n},\ST_{m,n}) = \Theta(nm^2)\text{ (see Remark 1 below)}$ 
  \end{enumerate}
\end{proposition}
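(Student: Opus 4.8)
The plan is to exploit the decomposition of the swap distance into a sum over unordered candidate pairs: for two votes $u,v$ over the same candidate set, $d_\swap(u,v) = \sum_{\{c,d\}} [\,u \text{ and } v \text{ order } \{c,d\} \text{ oppositely}\,]$, where the maximum per-pair contribution is $1$, so $d_\swap(u,v) + d_\swap(u,\overleftarrow{v}) = \binom{m}{2}$ for every $u$. Throughout I would also use that $\UN_{m,n}$ is invariant under every relabeling of the candidates, so whenever one of the two elections is $\UN_{m,n}$ the outer minimization over the candidate matching $\sigma \in \Pi(C,C')$ is vacuous and may be fixed to the identity, leaving only the voter matching.

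For parts (1) and (2) the key simplification is that $\ID_{m,n}$ consists of $n$ identical votes, so after fixing the candidate matching the whole left election collapses to a single order $p$, the voter matching is irrelevant, and the distance equals $\min_p \sum_{u} d_\swap(p,u)$ over the right election's multiset. For $\ID$ versus $\UN$ this is $\frac{n}{m!}\sum_{\pi \in S_m} d_\swap(p,\pi) = n\cdot\E_{\pi}[d_\swap(p,\pi)]$, and since each of the $\binom{m}{2}$ pairs is inverted with probability exactly $\tfrac12$ under a uniform $\pi$, the expectation is $\tfrac12\binom{m}{2} = \tfrac14(m^2-m)$. For $\ID$ versus $\AN$ the identity $d_\swap(p,w)+d_\swap(p,\overleftarrow{w}) = \binom{m}{2}$ makes $\tfrac{n}{2}d_\swap(p,w)+\tfrac{n}{2}d_\swap(p,\overleftarrow{w})$ constant in $p$ and equal to the same value. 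For $\ID$ versus $\ST$ I would split the pair-sum into within-$A$, within-$B$, and cross pairs; choosing $p$ to rank $A$ above $B$ zeroes out the cross pairs, and each block behaves like an $\ID$-versus-$\UN$ instance on $m/2$ candidates, giving $2\cdot n\cdot \tfrac12\binom{m/2}{2} = \tfrac18 n(m^2-2m)$.

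Part (4) is the most satisfying exact case. Fixing $\sigma = \id$, I would again split into within-block and cross pairs. Every $\ST$ vote ranks all of $A$ above all of $B$, whereas exactly $n/2$ of the $\UN$ votes invert each fixed cross pair, so every cross pair contributes exactly $n/2$ disagreements under any voter matching, for a forced total of $\tfrac{m^2}{4}\cdot\tfrac{n}{2} = \tfrac18 nm^2$. This is a lower bound, and it is attained: grouping votes by their pair of within-block orders $(\pi_A,\pi_B)\in S_{m/2}\times S_{m/2}$, both elections contain exactly $n/((m/2)!)^2$ votes of each type (here I use $m!\mid n$, which through $\binom{m}{m/2}$ forces $((m/2)!)^2 \mid n$), so the blocks can be matched type-to-type at zero within-block cost, yielding $d_\swap(\UN_{m,n},\ST_{m,n}) = \tfrac18 nm^2$.

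For parts (3) and (5) I would only establish the order of growth. The upper bound $O(nm^2)$ is immediate, since each of the $n$ paired votes contributes at most $\binom{m}{2}$. For the lower bounds: for $\UN$ versus $\AN$, fixing $\sigma=\id$ reduces the task to assigning each $\UN$ vote $u$ either to $w$ (cost $x_u := d_\swap(u,w)$) or to $\overleftarrow{w}$ (cost $\binom{m}{2}-x_u$) under the balance constraint of $n/2$ per group; since the inversion count $x_u$ is symmetric about its mean $\mu = \tfrac12\binom{m}{2}$ (the map $u\mapsto\overleftarrow{u}$ sends $x_u$ to $2\mu-x_u$), the optimum assigns the lower half to $w$ and equals $n(\mu - \E|X-\mu|)$, and because the standard deviation, hence the mean absolute deviation, of the inversion statistic is $\Theta(m^{3/2}) = o(m^2)$, this is $\tfrac14 nm^2 - o(nm^2) = \Theta(nm^2)$. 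For $\AN$ versus $\ST$, a short calculation shows the cross-pair contribution equals $\tfrac18 nm^2$ for every candidate matching (reversing a vote flips all cross pairs, so the two $\AN$ groups contribute complementary amounts summing to a constant), which already forces $\Omega(nm^2)$ and, with the $O(nm^2)$ ceiling, gives $\Theta(nm^2)$. The main obstacle, and the reason parts (3) and (5) are stated only up to $\Theta$ in ``Remark 1'', is that the residual within-block optimal-matching cost and the mean absolute deviation of the inversion statistic have no clean closed form, so no exact formula is available even though the leading asymptotics are transparent.
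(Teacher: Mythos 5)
Your proof is correct. One caveat on the comparison itself: although the text asserts that the proofs of all propositions in that section appear in the appendix, the appendix only contains the computations for the positionwise, pairwise, and Bordawise compass distances --- the swap-distance proposition is left unproved --- so there is no paper proof to hold yours against line by line. Judged on its own, your argument is sound and uses exactly the tools one would expect: the per-pair decomposition of the swap distance, the invariance of $\UN_{m,n}$ under candidate relabeling (which kills the outer minimization over $\sigma$), the identity $d_\swap(p,w)+d_\swap(p,\overleftarrow{w})=\binom{m}{2}$ for the antagonism cases, and, for $\UN$ versus $\ST$, the forced cross-pair count of $n/2$ per pair together with a type-to-type matching of within-block restrictions (which exists because $((\nicefrac{m}{2})!)^2$ divides $m!$, hence divides $n$). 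It is worth noting that your bounds for parts (3) and (5) are actually sharper than what Remark 1 records: the forced cross-pair contribution gives $d_\swap(\AN_{m,n},\ST_{m,n}) \ge \frac{1}{8}nm^2$, strictly above the stated lower bound $\frac{1}{8}n(m^2-2m)$, and your median-split argument identifies $d_\swap(\UN_{m,n},\AN_{m,n})$ exactly as $\frac{1}{4}n(m^2-m)$ minus $n$ times the mean absolute deviation of the inversion statistic, i.e., $\frac{1}{4}nm^2-O(nm^{3/2})$, which is asymptotically the stated upper bound rather than merely something between the two bounds of the remark.
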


\begin{remark}
Unfortunately for~$d_\swap(\UN_{m,n},\AN_{m,n})$ and~$d_\swap(\AN_{m,n},\ST_{m,n})$ we do not have closed form formulas (and we are not sure if they exist). However, it holds that~$\nicefrac{1}{8} \ n(m^2 - 3m + 2) \le d_\swap(\UN_{m,n},\AN_{m,n}) \le \nicefrac{1}{4} \ n(m^2 -m)$ and also~$\nicefrac{1}{8} \ n(m^2 - 2m) \le d_\swap(\AN_{m,n},\ST_{m,n}) \le \nicefrac{1}{4} \ n(m^2 -m)$.
\end{remark}

As before, we normalize these distances by dividing them by the largest possible distance,~$d_\swap(\ID_{m,n},\UN_{m,n})$, and then compute the limits.

    \vspace{0.5em}

\begin{minipage}[b]{0.45\textwidth}
  \centering
  \begin{align*}
    \SWAP(\ID,\UN) &= \SWAP(\ID,\AN) = 1 \\
    \SWAP(\ID,\ST) &= \SWAP(\UN,\ST) =  \frac{1}{2} \\
    & \\
    & \\
\end{align*}
\end{minipage}
\begin{minipage}[b]{0.5\textwidth}
  \centering
    \begin{tikzpicture}[xscale=0.5, yscale=0.5]
        \clip (-0.5,-2) rectangle (9, 8);

        \draw (0.75,0.5) node[anchor=south] (A) {};
        \draw (8.25,0.5) node[anchor=south] (D) {};
        \draw [decorate,decoration={brace,amplitude=20,mirror}] (A.south east) -- (D.south west);

        \drawunswap{0}{0}
        \drawidswap{8}{0}
        \drawanswap{4}{6.5}
        \drawstswap{4}{0}
        \draw (1,0.5) -- (4,0.5);
        \draw (5,0.5) -- (8,0.5);
        \draw (4.5,-2) node[anchor=south] {$1$};
        \draw (1,1) -- (4,6.5);
        \draw (2.3,3.8) node[anchor=south] {$?$};
        \draw (5,6.5) -- (8,1);
        \draw (6.8,3.8) node[anchor=south] {$1$};
        \draw (4.5,1) -- (4.5,6.5);
        \draw (4.9,3) node[anchor=south] {$?$};
        \draw (2.5,2) node[anchor=north] {$\frac{1}{2}$};
        \draw (6.5,2) node[anchor=north] {$\frac{1}{2}$};
      \end{tikzpicture}
\end{minipage}
    \vspace{0.5em}

\subsubsection{Discrete}
For the discrete distance, the situation is analogous to the case of the swap distance.

\begin{proposition}
    If $m!|n$ it holds that:
  \begin{enumerate}
  \item $\DISC(\ID_{m,n},\UN_{m,n}) = n \frac{m!-1}{m!}$
  \item $\DISC(\ID_{m,n},\AN_{m,n}) = \frac{1}{2} n$
  \item $\DISC(\UN_{m,n},\AN_{m,n}) = n \frac{m!-2}{m!}$
  \item $\DISC(\UN_{m,n},\ST_{m,n})=n \frac{m! - ((m/2)!)^2}{m!}$ 
  \item $\DISC(\ID_{m,n},\ST_{m,n}) =  \DISC(\AN_{m,n},\ST_{m,n}) = n\frac{((m/2)!)^2-1}{((m/2)!)^2}$ 
  \end{enumerate}
\end{proposition}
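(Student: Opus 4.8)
The plan is to first reduce the computation of $\DISC$ to a purely combinatorial counting problem. Under a fixed candidate bijection $\sigma$, the optimal voter matching $\nu$ contributes cost $1$ for every voter whose renamed vote is \emph{not} matched to an identical vote, so the number of $0$-cost pairs is exactly the size of the multiset intersection of the two vote-collections. Writing $\mu(E,E')$ for the maximum number of voters of $E$ that can be paired with identically-voting voters of $E'$, i.e.\ $\sum_{w}\min(\#_w(E),\#_w(E'))$ where $\#_w$ counts occurrences of the order $w$, we get
\[
  \DISC(E,E') = n - \max_{\sigma \in \Pi(C,C')} \mu(\sigma(E), E').
\]
Thus each item reduces to maximizing the overlap $\mu(\sigma(E),E')$ over renamings $\sigma$ and subtracting from $n$; since $\DISC$ is a pseudometric it is symmetric, so I may orient each pair so that $E'$ is the simpler election. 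Listing vote-multiplicities (using $m!\mid n$, which also gives $((m/2)!)^2\mid n$): $\ID_{m,n}$ is one order repeated $n$ times; $\AN_{m,n}$ is an order $q$ together with its reverse $\overline q$, each repeated $n/2$ times; $\UN_{m,n}$ contains every order of $\calL(C)$ exactly $n/m!$ times; and $\ST_{m,n}$ contains each of the $((m/2)!)^2$ orders of the stratification domain $D$ (those ranking all of $A$ above all of $B$) exactly $n/((m/2)!)^2$ times.

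Next I would isolate three structural facts. \emph{(i) Uniformity is renaming-invariant}: $\sigma(\UN_{m,n})=\UN_{m,n}$, so $\mu(\sigma(E),\UN_{m,n})$ is independent of $\sigma$ and equals $\sum_{w\in\mathrm{supp}(E)}\min(\#_w(E),n/m!)$. \emph{(ii) Identity is a single repeated vote}: $\mu(\sigma(E),\ID_{m,n})$ is the multiplicity of the identity order inside $\sigma(E)$, so its maximum over $\sigma$ is the largest multiplicity of any single order in $E$. \emph{(iii) Reversal leaves the domain}: if $w\in D$ then $\overline w$ ranks all of $B$ above all of $A$, so $\overline w\notin D$ (using that both halves are nonempty). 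With (i) and (ii) the four mixed items are immediate: for $\ID$--$\UN$ the overlap is $n/m!$; for $\ID$--$\AN$ it is the top multiplicity of $\AN$, namely $n/2$; for $\UN$--$\AN$ it is $\min(n/2,n/m!)=n/m!$ summed over the two distinct orders $q,\overline q$, i.e.\ $2n/m!$; for $\UN$--$\ST$ it is $\min(n/((m/2)!)^2,n/m!)=n/m!$ summed over all $((m/2)!)^2$ stratification orders, i.e.\ $((m/2)!)^2\cdot n/m!$; and for $\ID$--$\ST$ it is the top multiplicity of $\ST$, namely $n/((m/2)!)^2$. Subtracting each overlap from $n$ gives exactly the stated formulas for items $1$, $2$, $3$, $4$ and the $\ID$--$\ST$ half of item $5$.

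The one genuinely non-routine case is $\DISC(\AN_{m,n},\ST_{m,n})$, and fact (iii) is what resolves it. Taking $E=\AN_{m,n}$, the renamed election $\sigma(\AN_{m,n})$ has support $\{\sigma(q),\overline{\sigma(q)}\}$, and by (iii) at most one of these two orders can lie in $D$. Hence the overlap with $\ST_{m,n}$ is at most $\min(n/2,n/((m/2)!)^2)$, which for $m\ge 4$ equals $n/((m/2)!)^2$ (since then $((m/2)!)^2\ge 4>2$), and this is attained by any $\sigma$ sending $q$ into $D$. Subtracting from $n$ yields $\DISC(\AN_{m,n},\ST_{m,n})=n\,\frac{((m/2)!)^2-1}{((m/2)!)^2}$, matching $\DISC(\ID_{m,n},\ST_{m,n})$. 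The main obstacle to watch is this reversal argument together with the implicit assumption $m\ge 4$: for $m=2$ one has $((m/2)!)^2=1$, the domain $D$ degenerates to a single order, and the $\AN$--$\ST$ formula fails (its true value is $n/2$), so the statement should be read in the regime $m\ge 4$.
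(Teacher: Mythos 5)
Your proof is correct. Note that although the paper states that the proofs of all propositions from this section appear in the appendix, the appendix in fact only contains the proofs for the EMD-positionwise, $\ell_1$-positionwise, $\ell_1$-pairwise, and EMD-Bordawise cases; no proof of the discrete-distance proposition is actually given, so there is nothing to compare against and your argument stands on its own. The reduction $\DISC(E,E') = n - \max_{\sigma}\mu(\sigma(E),E')$ is exactly the right way to set this up, the three structural facts (renaming-invariance of $\UN$, single-support of $\ID$, and reversal leaving the stratification domain) dispatch all five items cleanly, and your verification that the $\AN$--$\ST$ overlap is $\min(n/2,\,n/((m/2)!)^2) = n/((m/2)!)^2$ for $m\geq 4$ is the one place where care is genuinely needed. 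Your observation that the $\AN$--$\ST$ formula fails for $m=2$ (where it would give $0$ rather than the true value $n/2$) is a legitimate caveat about the proposition as stated; it is consistent with the paper's surrounding conventions (the neighbouring propositions assume $m$ divisible by $4$), but the discrete-distance proposition itself only states $m!\mid n$, so flagging the implicit $m\geq 4$ assumption is a genuine, if minor, correction.
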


As before, we normalize these distances by dividing them by the largest possible distance,~$\DISC(\ID_{m,n},\UN_{m,n})$, and then compute the limits.

\begin{minipage}[b]{0.45\textwidth}
  \centering
  \begin{align*}
    \DISC(\ID,\UN) &= \DISC(\ID,\ST) = \DISC(\UN,\AN) \\
                   &= \DISC(\UN,\ST) = \DISC(\AN,\ST)  = 1 \\
    \DISC(\ID,\AN) &= \frac{1}{2} \\
\end{align*}
\end{minipage}

Unlike for the other distances, for the discrete distance we do not present graphical representation due to its obscurity.

\subsection{Equivalence Classes}\label{subsec:eqclass}

Given a distance, two elections are in the same equivalence class if
their distance is zero.
An \emph{anonymous, neutral equivalence class (ANEC)} consists of all elections 
that are isomorphic to each other~\citep{ege-gir:j:isomorphism-ianc}.
While ANECs are 
the equivalence classes of the isomorphic distances (e.g., the swap one), the other distances are less precise and
their equivalence classes are unions of some ANECs.

\begin{table}[t]
    \centering
    \begin{tabular}{ c | c | c | c | c }
     $|C|\times|V|$ & ANECs & Positionwise  & Pairwise & Bordawise\\
    	\midrule
       $3 \times 3$    & 10 & 10 & 8 & 8 \\
       $3 \times 4$    & 24 & 23 & 17 & 13 \\
       $3 \times 5$    & 42 & 40 & 25 & 18 \\
    	\midrule
       $4 \times 3$    & 111 & 93 & 50 & 37 \\
       $4 \times 4$    & 762 & 465 & 200 & 76  \\
       $4 \times 5$    & 4095 & 1746 & 513 & 131  \\
       
        \end{tabular}
        \caption{\label{table:num_classes} Number of equivalence classes under our metrics.}
\end{table}

To get a feeling as to how much precision is lost
due to various aggregate representations,
%
%
in~\Cref{table:num_classes} we compare the numbers of ANECs and
the numbers of equivalence classes 
of the positionwise, pairwise, and Bordawise metrics, for small
elections; we computed the table using exhaustive
search\footnote{There are exact formulas for some columns in~\Cref{table:num_classes}, but not for all. See, e.g., the work
  of \citet{ege-gir:j:isomorphism-ianc}.}  (note that EMD- and
$\ell_1$-positionwise metrics have the same equivalence classes).

Among these metrics, the positionwise ones perform best and Bordawise performs
worst.
Next, we provide a partial theoretical
explanation for this observation.
We say that a metric~$d$ is at least as fine as a metric~$d'$ if for
each two elections~$A$ and~$B$,~$d(A,B) = 0$ implies that
$d'(A,B) = 0$ (i.e., each equivalence class of~$d$ is a subset of
some equivalence class of~$d'$). Metric~$d$ is finer
than~$d'$ if it is at least as fine as~$d'$ but~$d'$ is not at least
as fine as~$d$.

\begin{proposition}[\cite{boe-fal-nie-szu-was:c:understanding}]\label{prop:equclass}
The swap and discrete isomorphic distances are finer than the
  EMD/$\ell_1$-positionwise and pairwise ones, which both are finer than the
  Bordawise distance. Neither the EMD/$\ell_1$-positionwise distance is finer than the pairwise distance
  nor the other way round.
\end{proposition}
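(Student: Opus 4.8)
The plan is to reduce everything to a clean description of when each (pseudo)metric assigns distance zero, and then to chain implications and exhibit a handful of small witnessing elections. Since $\EMD$ is a genuine metric on vectors, $\EMD(x,y)=0$ iff $x=y$; hence $\POS(A,B)=0$ iff there is a candidate bijection $\delta$ with $\calP_A(c)=\calP_B(\delta(c))$ for all $c$ (equal position matrices up to a permutation of columns), and likewise $\BOR(A,B)=0$ iff the sorted Borda vectors coincide. For the pairwise distance, $\PAIR(A,B)=0$ iff some $\delta$ satisfies $\calM_A(c,d)=\calM_B(\delta(c),\delta(d))$ for all $c,d$, and the swap and discrete isomorphism distances vanish exactly on isomorphic pairs. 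This bookkeeping step is routine.

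Next I would establish the ``at least as fine'' inclusions. Isomorphic elections trivially share, up to the candidate bijection, both their position matrices and their weighted majority relations, since renaming candidates and reordering voters leave these aggregates invariant; hence isomorphism implies $\POS=0$ and $\PAIR=0$, so the swap and discrete distances are at least as fine as the positionwise and pairwise ones. For the inclusion into Bordawise I would use the two identities $\calB_E(c)=\sum_{i=1}^m \calP_E(c,i)\,(m-i)$ and $\calB_E(c)=\sum_{d\neq c}\calM_E(c,d)$. The first shows that if position matrices agree up to $\delta$ then matched candidates have equal Borda scores, so $\POS(A,B)=0$ forces the multiset of Borda scores, hence the sorted vector, to coincide, giving $\BOR(A,B)=0$; the second does the same for the pairwise distance via row sums. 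Thus positionwise and pairwise are each at least as fine as Bordawise.

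It remains to prove the inclusions strict and that positionwise and pairwise are incomparable, and here I would exhibit three elections over $C=\{a,b,c\}$ with six voters each: $A$ containing every linear order once, $B_1=\{a\pref b\pref c,\ b\pref c\pref a,\ c\pref a\pref b\}$ with each order doubled, and $B_2$ consisting of $a\pref b\pref c$ three times and $c\pref b\pref a$ three times. A direct check shows $A$ and $B_1$ have the same (all-$2$) position matrix but different majority relations ($A$ ties every pair $3$--$3$, whereas $B_1$ is a $4$--$2$ Condorcet cycle), while $A$ and $B_2$ share the all-ties majority relation but have different position matrices; all three give every candidate Borda score $6$. Hence $\POS(A,B_1)=0$ with $\PAIR(A,B_1)>0$, and $\PAIR(A,B_2)=0$ with $\POS(A,B_2)>0$, which establishes incomparability; and since neither $B_1$ nor $B_2$ is isomorphic to $A$ (they have strictly fewer distinct votes, an invariant of isomorphism) while $\BOR(A,B_1)=\BOR(A,B_2)=0$, the same two pairs yield strictness of swap/discrete over positionwise and pairwise, and of positionwise and pairwise over Bordawise.

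The only place demanding care---the main obstacle---is arguing the strict inequalities for the example pairs, namely that no candidate permutation reconciles the relevant aggregates. For $\PAIR(A,B_1)>0$ this is immediate because one majority matrix is constant off the diagonal while the other is not, and for $\POS(A,B_2)>0$ the multisets of column vectors of the two position matrices simply differ; both are one-line inspections for three candidates. So the genuine difficulty is purely in choosing examples that discriminate exactly the intended pair of metrics while remaining Bordawise-trivial, and the triple $\{A,B_1,B_2\}$ does this economically.
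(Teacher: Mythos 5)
The thesis does not actually prove this proposition---it is imported by citation from \cite{boe-fal-nie-szu-was:c:understanding}---so there is no in-paper argument to compare against; your proposal stands on its own, and it is correct and complete. The reduction to zero-distance characterizations is sound: since $\EMD$ and $\ell_1$ are genuine metrics on vectors, $\POS(A,B)=0$ (in either variant) indeed means the position matrices agree up to a column bijection, and the two identities $\calB_E(c)=\sum_{i}\calP_E(c,i)(m-i)$ and $\calB_E(c)=\sum_{d\neq c}\calM_E(c,d)$ correctly push both aggregate representations down to the Borda vector, giving all four ``at least as fine'' inclusions. The three witnessing elections check out on direct computation: $A$ and $B_1$ share the all-$2$ position matrix but have majority matrices $3$-everywhere versus a $4$--$2$ cycle (no candidate permutation reconciles a constant off-diagonal matrix with a non-constant one); $A$ and $B_2$ share the all-ties majority matrix but have distinct column multisets in their position matrices; all three elections give every candidate Borda score $6$; and the distinct-vote counts $6$, $3$, $2$ rule out isomorphism. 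This single triple therefore simultaneously delivers strictness of swap/discrete over both positionwise and pairwise, strictness of each of those over Bordawise, and the two-way incomparability of positionwise and pairwise, which is an economical packaging of all the required separations.
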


In~\Cref{fig:aov} we present a scheme that illustrates the relationship between different distances (i.e., the implications of \Cref{prop:equclass}).

\begin{figure}

   \begin{center}
     \begin{tikzpicture}
       \node[inner sep=3pt] (id)  at (1.5,2) {ANECs};
       \node[inner sep=3pt] (pad)  at (0,1) {Pairwise ECs};
       \node[inner sep=3pt] (pod)  at (3,1) {Positionwise ECs};
       \node[inner sep=3pt] (bo)  at (1.5,0) {Bordawise ECs};
       
        \draw[draw=black, line width=0.4mm,->]     (pad) edge node  {} (id);
        \draw[draw=black, line width=0.4mm,->]     (pod) edge node  {} (id);
        \draw[draw=black, line width=0.4mm,->]     (bo) edge node  {} (pad);
        \draw[draw=black, line width=0.4mm,->]     (bo) edge node  {} (pod);

      \end{tikzpicture}

    \end{center}
    \caption{Relationship between different metrics. An arc from metric $d$ to $d'$ means that $d'$ is at least as expressive as $d$. 
    } 
    \label{fig:aov}

\end{figure}
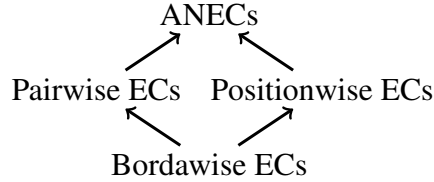

\begin{figure}
    \centering
    
    \begin{subfigure}[b]{0.49\textwidth}
        \centering
          \includegraphics[width=6.6cm]{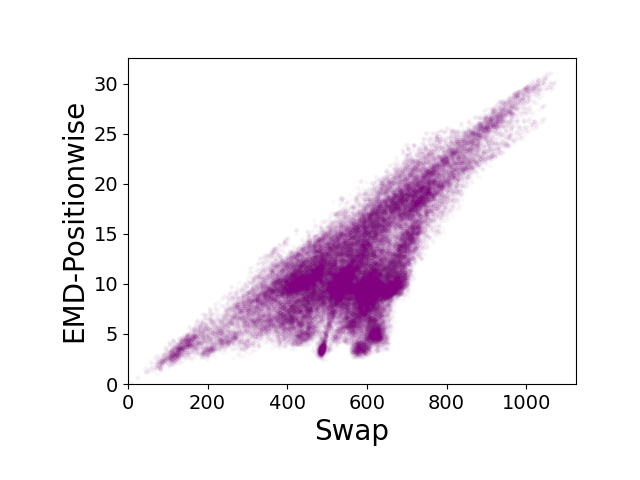}
        \caption{Swap vs EMD-positionwise}
    \end{subfigure}
    \begin{subfigure}[b]{0.49\textwidth}
        \centering
          \includegraphics[width=6.6cm]{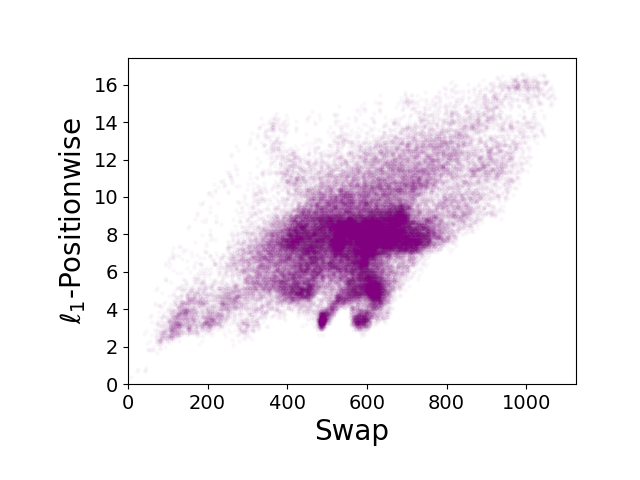}
        \caption{Swap vs~$\ell_1$-positionwise}
    \end{subfigure}
    
    \vspace{1em}
    
    \begin{subfigure}[b]{0.49\textwidth}
        \centering
          \includegraphics[width=6.6cm]{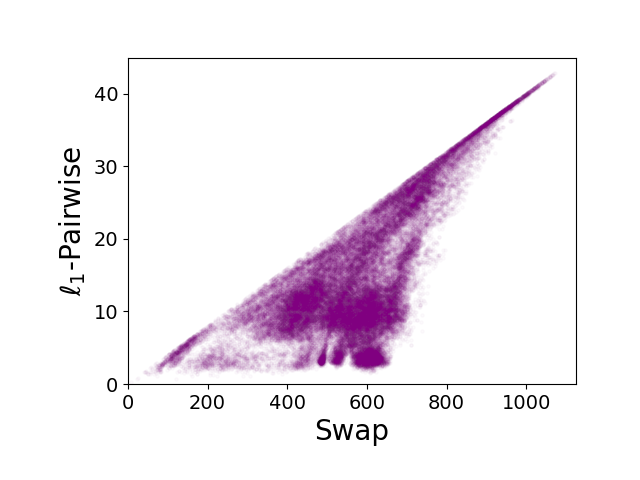}
        \caption{Swap vs~$\ell_1$-pairwise}
    \end{subfigure}
    \begin{subfigure}[b]{0.49\textwidth}
        \centering
          \includegraphics[width=6.6cm]{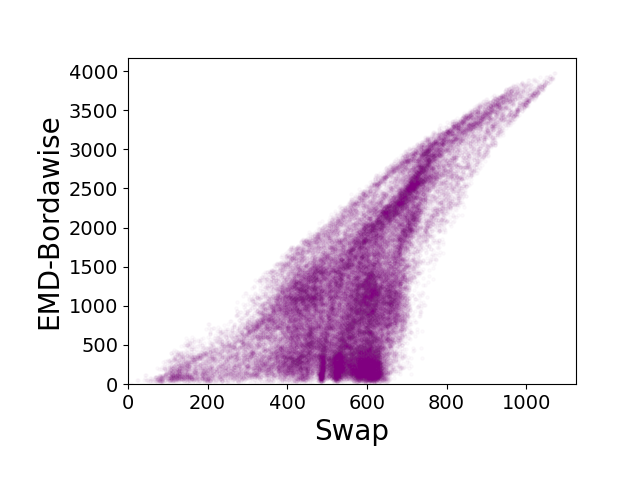}
        \caption{Swap vs EMD-Bordawise}
    \end{subfigure}

    \caption{Correlation between the nonisomorphic distances and the swap distances based on the synthetic dataset described in~\Cref{tab:testbed}.}
    \label{fig:correlationPlots}
\end{figure}

\subsection{Correlation}
In \Cref{fig:correlationPlots} we present the correlation plots for the nonisomorphic distances and the swap distance. (Recall that the synthetic dataset that we use consists of elections with $10$ candidates and $50$ voters sampled from~$13$ models; for~$11$ of them we generated~$20$ elections and for the Norm-Mallows and urn models we sampled~$60$ elections.)
As a complement to the correlation plots, we present two additional tables. First, we have~\Cref{table:pearson_correlation}, where we have computed Pearson correlation coefficient between the swap distances and other distances
Second, we have~\Cref{table:pearson_correlation_cultures}, where we have computed the PCC between the swap distances and those provided by the other metrics for each statistical culture independently (i.e., for each statistical culture we give the correlation coefficients between the swap distances of all pairs of elections from this culture and their distances according to our other metrics).

As we can see, the strongest correlation is witnessed by the EMD-positionwise distance (having PCC equal~$0.745$, see~\Cref{table:pearson_correlation}) followed by the EMD-Bordawise distance (having PCC equal~$0.713$). Then we have the~$\ell_1$-pairwise distance (having PCC equal~$0.708$), and $\ell_1$-positionwise distance (having PCC equal~$0.563$), and, finally, the worst correlation is witnessed by discrete distance (having PCC equal~$0.342$). The surprisingly high correlation for the EMD-Bordawise and $\ell_1$-pairwise distances apparently comes from the fact that this distance works well for elections from the Normalized Mallows and urn models, and in our dataset we had a lot of elections from these two models. In \Cref{table:pearson_correlation_cultures}, we can see that for almost all other models (with the exception of group-separable and impartial culture) the correlation is insignificant.

\begin{table}[t]
    \centering
    \begin{tabular}{ c | c | c | c | c | c }
       $|C| \times |V|$ & EMD-Pos.&~$\ell_1$-Pos. &~$\ell_1$-Pair. & EMD-Bordawise & Discrete\\
    	\midrule
       $3 \times 3$    & 0.942 & 0.748 & 0.860 & 0.817 & 0.614 \\
       $3 \times 4$    & 0.900 & 0.697 & 0.860 & 0.737 & 0.636 \\
       $3 \times 5$    & 0.920 & 0.759 & 0.843 & 0.747 & 0.680 \\ 
    	\midrule
       $4 \times 3$    & 0.850 & 0.577 & 0.735 & 0.675 & 0.402 \\
       $4 \times 4$    & 0.782 & 0.561 & 0.689 & 0.610 & 0.434 \\
       $4 \times 5$    & 0.772 & 0.567 & 0.672 & 0.606 & 0.432 \\[1mm]
    	\midrule
     $\substack{\mathrm{10 \times 50} \\ \text{(340 elections)}}$  \rule{0mm}{2.75mm}  & 0.745 & 0.563 & 0.708 & 0.713 & 0.342 \\
        \end{tabular}
        \caption{\label{table:pearson_correlation} Pearson correlation coefficients between the swap distance and the
        other ones computed for our datasets.}
    
\end{table}

\begin{table}[t]
    \centering
    \small
    \begin{tabular}{ c | c | c | c | c | c }
        Name & EMD-Pos.&~$\ell_1$-Pos. &~$\ell_1$-Pair. & EMD-Borda. & Discrete\\
    	\midrule
        Impartial Culture  & 0.481 & 0.114 & 0.525 & 0.471 & -0.039 \\
        SP by Conitzer  & 0.471 & 0.727 & -0.142 & -0.015 & 0.976 \\
        SP by Walsh  & 0.377 & 0.467 & -0.119 & 0.111 & 0.7 \\
        SPOC  & 0.297 & 0.409 & -0.074 & 0.079 & 0.622 \\
        Single-Crossing  & 0.252 & 0.248 & 0.123 & 0.098 & 0.625 \\
        Interval  & 0.242 & 0.219 & 0.101 & 0.088 & 0.606 \\
        Disc  & 0.337 & 0.317 & 0.203 & 0.149 & 0.636 \\
        Cube  & 0.406 & 0.347 & 0.311 & 0.286 & 0.67 \\
        Circle  & 0.406 & 0.329 & 0.335 & 0.287 & 0.651 \\
        Urn  & 0.84 & 0.86 & 0.803 & 0.772 & 0.102 \\
        Norm-Mallows  & 0.86 & 0.784 & 0.839 & 0.82 & 0.255 \\
        GS Balanced  & 0.863 & 0.793 & 0.844 & 0.822 & 0.259 \\
        GS Caterpillar  & 0.864 & 0.795 & 0.845 & 0.824 & 0.252 \\
        \midrule
        \end{tabular}
    \caption{\label{table:pearson_correlation_cultures} Pearson correlation coefficients between the swap distance and the
        other ones computed for each statistical culture used in our maps.}
    
\end{table}


\vspace{0.2cm}
\begin{conclusionbox}
Among studied distances (i.e., EMD-positionwise, $\ell_1$-positionwise, EMD-Bordawise, $\ell_1$-pairwise) the EMD-positionwise is most strongly correlated with the swap distance, with a clear advantage
over the other metrics. Hence, we recommend it for using in practice, especially when dealing with large elections (i.e., with many voters and candidates).
\end{conclusionbox}


\section{Summary}
The main objective of this chapter was to find meaningful ways of calculating the distances between elections with ordinal ballots. We believe that we have succeeded in fulfilling this task, or at least we have shown the direction in which to go.

We proposed three isomorphic distances, i.e., swap, Spearman, and discrete distances. Both the swap and Spearman distances are very precise, but slow to compute. Without surprise, the discrete distance proves to be quite useless, with most elections being at maximal (or almost maximal) distances from each other.

We also introduced several nonisomorphic distances. We have two variants of the positionwise distance, one using EMD and the other using~$\ell_1$ as the underlying norms. Although both variants are quite similar, we favor EMD over~$\ell_1$ due to its stronger correlation with the swap distance.
Then we have pairwise and Bordawise distances. Neither of them is convincing because they collapse the whole Borda balanced area into a single point (in particular, uniformity, and antagonism elections become indistinguishable).
Nonetheless, both are doing well enough at placing elections between the uniformity and identity. Regarding the amount of time needed to compute the distances, pairwise distance is relatively slow to compute, while Bordawise is extremely fast---but it is its only advantage.

As a major conclusion, we can say that if the number of candidates is limited (e.g., not larger than~$10$) then we recommend using the swap distance, as it is the most precise one. For elections with more candidates, we recommend the EMD-positionwise as it achieves the best trade-off between precision and time.

\vspace{0.2cm}
\begin{contributionbox}
\begin{itemize}
    \item Introduction of various distances that serve for measuring similarities between ordinal elections. 
    \item Introduction of the map of elections framework. 
    \item Detailed comparison between different distances, concluding that for small elections we suggest using swap distance, while for larger elections we suggest using the EMD-positionwise one.
\end{itemize}
\end{contributionbox}

\chapter{Applications}
\label{ch:applications}

\section{Introduction}
Creating a map of elections consists of the three following steps. First, we have to prepare the elections---we can either sample them from statistical models or select some real-life ones. Second, we compute the distances between each pair of elections---this gives us a distance matrix. Third, we embed the distance matrix in a two-dimensional Euclidean space. Each of these steps can be done in numerous ways, that is, there are many ways of generating elections, there are several distances to choose from, and finally we have to decide on a particular embedding algorithm. 

In the first part of this chapter, we argue that the way we design the map is reasonable. We present results for several different embeddings, and explain why we recommend using one over the other. In particular, we analyze the concepts of \textit{monotonicity} and \textit{distortion} of embeddings, which test the quality of a given embedding. We also discuss how changing the number of candidates is influencing the map, in other words, we answer the question of scalability of the map. 

Later, in the second part, we provide numerous practical examples of applications of the map. We study single-winner voting rules such as plurality, Borda, Copeland, and Dodgson, and multiwinner voting rules such as Chamberlin--Courant and Harmonic-Borda. We use the map to show the relationship between various voting rules and statistical cultures. In particular, we are curious if elections lying next to one another on the map behave in a similar manner---for example, the winning candidate/committee has a similar score, or computing the winning candidate or committee is taking similar amount of time. For example, for Dodgson rule the longest running time was witnessed by group-separable caterpillar elections, while for Harmonic-Borda it was~$4$-Sphere elections.
Moreover, for the Chamberlin--Courant and Harmonic-Borda rules, we compare the effectiveness of their approximation algorithms.
Then, we analyze a number of real-life instances of elections, and see where they land on the map. We study political elections, surveys, and sport competitions.
Finally, we briefly discuss the concept of a skeleton map, where instead of sampling numerous elections from a given distribution, we present only one frequency matrix that captures that statistical culture.

All the maps presented in this chapter are based on the positionwise distance. This means that we operate on matrices rather than elections. We have chosen the positionwise distance in order to be able to draw maps with large (up to~$100$) numbers of candidates.

\section{Setup}
We start by outlining the basic setup for our experiments, i.e., the set of elections on which the map is based. 

In \Cref{tab:embed_setup} we list all the models that we use in our maps, and also the numbers of elections sampled from each model. The exact number of candidates and voters will be specified for each experiment independently.

We also introduce new artificial families of elections, called \textit{paths}, which serve for making the map more stable and easier to interpret. Briefly speaking, we take convex combinations of the compass matrices and create paths between them. Below, we describe this concept in more detail.

\begin{table}[]
  \centering
{\footnotesize
  \begin{tabular}{lcc}
    \toprule
    Model & Number of Elections \\
    
    \midrule
    Impartial Culture           & 20 \\
    \midrule
    Single-Peaked (Conitzer)  & 20 \\
    Single-Peaked (Walsh)      & 20 \\
    SPOC                       & 20 \\
    Single-Crossing           & 20 \\
    \midrule
    Interval        & 20 \\
    Square         & 20 \\
    Cube         & 20 \\
    5-Cube         & 20 \\
    10-Cube         & 20 \\
    20-Cube         & 20 \\
    \midrule
    Circle         & 20 \\
    Sphere         & 20 \\
    4-Sphere   & 20 \\
    \midrule
    Group-Separable (Balanced)     & 20 \\
    Group-Separable (Caterpillar)    & 20 \\
    \midrule
    Urn       & 80 \\  
    Mallows    & 80 \\
    \midrule
    Compass ($\ID$,~$\AN$,~$\UN$,~$\ST$) & 4 \\
    \midrule
    Paths & 20$\times$4 \\
    \bottomrule
  \end{tabular} }
    \caption{\label{tab:embed_setup} Setup}
\end{table}

\subsubsection{Paths between Election Matrices}
We consider convex combinations of frequency matrices.
Given two such matrices,~$X$ and~$Y$, and~$\alpha \in [0,1]$, one
might expect that matrix~$Z = \alpha X + (1-\alpha)Y$ would lie at positionwise 
distance~$(1-\alpha)\cdot \POS(X,Y)$ from~$X$ and at positionwise distance~$\alpha \cdot \POS(X,Y)$ from~$Y$, so that we would~have:
\begin{align*}
     \POS(X,Y) = \POS(X, Z) 
                      + \POS(Z, Y).
\end{align*}
However, without further assumptions, this is not necessarily the
case. Indeed, if we take~$X = \ID_m$ and~$Y = \rID_m$, then~$0.5X+0.5Y = \AN_m$ and~$\POS(X,Y) = 0$, 
but~$\POS(X,0.5X+0.5Y) = \POS(\ID,\AN) > 0$.
However, if we arrange the two 
matrices~$X$ and~$Y$
so that their positionwise distance is achieved by the identity
permutation of their column vectors, then their convex combination lies
exactly between them. 

\begin{figure}[t]
    \centering

        \includegraphics[width=5.0cm, trim={0.2cm 0.2cm 0.2cm 0.2cm}, clip]{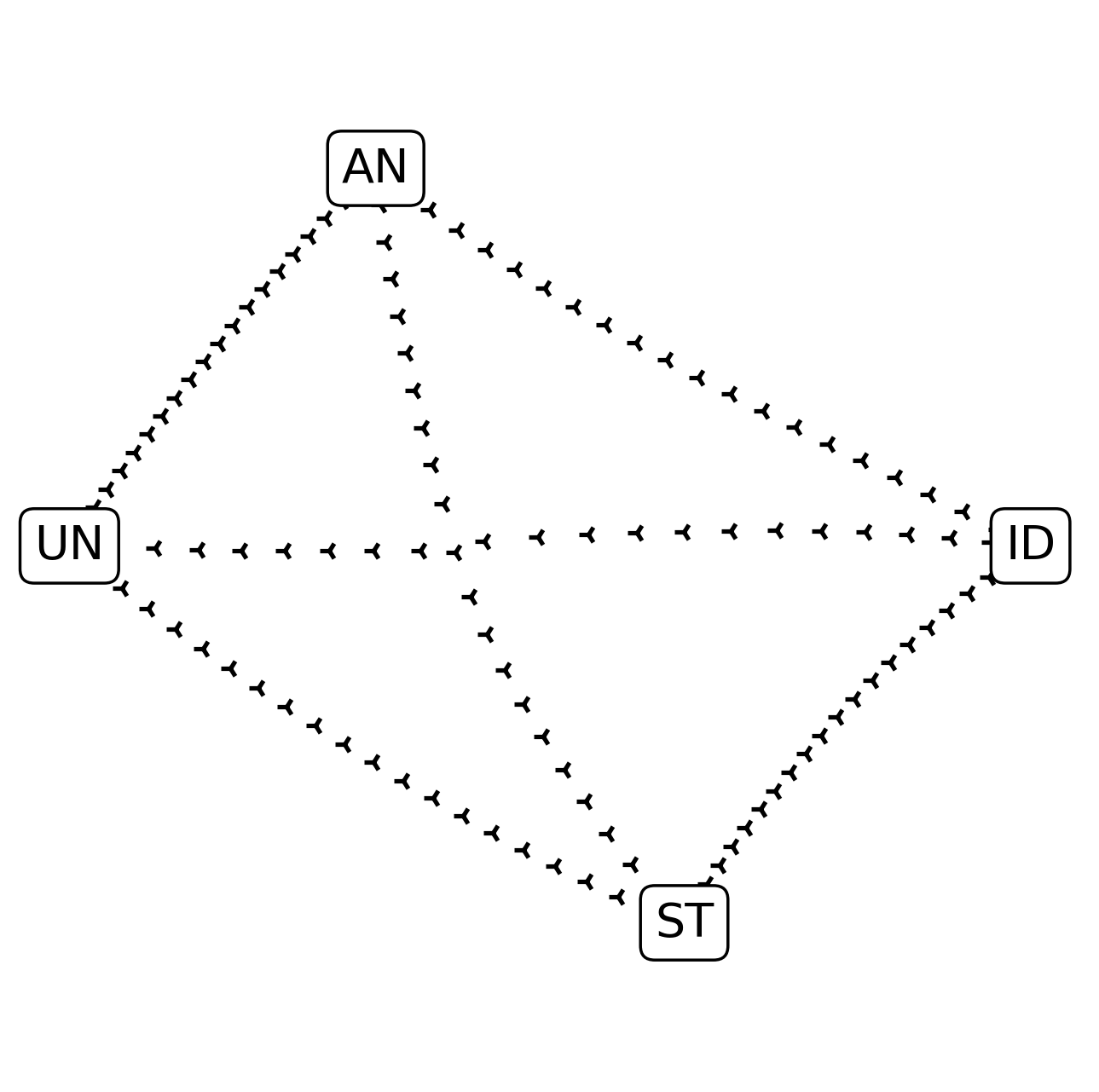}
    
    \caption{Paths between compass matrices.}
    \label{fig:paths}
\end{figure}

\begin{proposition}\label{pro:paths}
     Let~$X = (x_1, \ldots, x_m)$ and~$Y = (y_1, \ldots y_m)$ be two~$m \times m$ frequency matrices
     such that~$
     \POS(X,Y) = \textstyle \sum_{i=1}^m \EMD(x_i,y_i).~$
  Then, for each~$\alpha \in [0,1]$ it holds that~$\POS(X,Y) = \POS(X, \alpha X + (1-\alpha)Y) + \POS( \alpha X +
  (1-\alpha)Y, Y)$.
\end{proposition}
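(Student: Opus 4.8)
The plan is to prove the nontrivial inequality $\POS(X,Z) + \POS(Z,Y) \le \POS(X,Y)$, where I abbreviate $Z = \alpha X + (1-\alpha)Y$ with columns $z_i = \alpha x_i + (1-\alpha) y_i$; the reverse inequality $\POS(X,Y) \le \POS(X,Z) + \POS(Z,Y)$ is immediate since the positionwise distance was already shown to be a pseudometric and hence satisfies the triangle inequality. So equality reduces to the single bound above.

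First I would bound each summand by evaluating the defining expression $\delta$-$\POS$ at the \emph{identity} permutation of the column vectors, which is one admissible bijection in the minimization: this gives $\POS(X,Z) \le \sum_{i=1}^m \EMD(x_i, z_i)$ and $\POS(Z,Y) \le \sum_{i=1}^m \EMD(z_i, y_i)$. These are genuinely upper bounds only; we do not claim the identity realizes either minimum.

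Second, and this is the crux, I would establish the per-column additivity $\EMD(x_i, z_i) + \EMD(z_i, y_i) = \EMD(x_i, y_i)$. Because the columns of frequency matrices are nonnegative and sum to one, all three pairs have equal column sums, so I may use the prefix-sum characterization $\EMD(a,b) = \ell_1(\hat a, \hat b)$ from the preliminaries. Prefix-summing is linear, hence $\hat z_i = \alpha \hat x_i + (1-\alpha)\hat y_i$, giving entrywise $\hat z_i - \hat x_i = (1-\alpha)(\hat y_i - \hat x_i)$ and $\hat y_i - \hat z_i = \alpha(\hat y_i - \hat x_i)$. Taking absolute values and summing yields $\EMD(x_i, z_i) = (1-\alpha)\,\EMD(x_i, y_i)$ and $\EMD(z_i, y_i) = \alpha\,\EMD(x_i, y_i)$, which sum to $\EMD(x_i, y_i)$; intuitively, $\hat z_i$ sits on the segment between $\hat x_i$ and $\hat y_i$ in the $\ell_1$ geometry. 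Combining with the first step and then invoking the hypothesis of the proposition,
\[
  \POS(X,Z) + \POS(Z,Y) \;\le\; \sum_{i=1}^m \EMD(x_i, y_i) \;=\; \POS(X,Y),
\]
which closes the argument.

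The main obstacle is not any calculation but pinning down exactly where the alignment hypothesis $\POS(X,Y) = \sum_{i} \EMD(x_i,y_i)$ is used and why it is indispensable. The per-column computation only controls the \emph{identity-aligned} sums; the final equality $\sum_i \EMD(x_i,y_i) = \POS(X,Y)$ is precisely the hypothesis, and without it the conclusion fails. Indeed, the $\ID$/$\rID$ example preceding the statement (where $0.5\,\ID + 0.5\,\rID = \AN$ yet $\POS(\ID,\rID)=0 < \POS(\ID,\AN)$) shows the convex combination need not lie between the endpoints when their distance is achieved by a nonidentity column permutation. Thus the real content is the linearity of EMD along the prefix-sum segment, correctly packaged together with the alignment assumption.
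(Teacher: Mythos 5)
Your proof is correct and follows essentially the same route as the paper's: both reduce to the inequality $\POS(X,Z)+\POS(Z,Y)\leq\POS(X,Y)$ via the identity column matching, establish $\EMD(x_i,z_i)=(1-\alpha)\EMD(x_i,y_i)$ and $\EMD(z_i,y_i)=\alpha\EMD(x_i,y_i)$ column by column, and close with the hypothesis plus the triangle inequality. The only cosmetic difference is that the paper derives the column-wise scaling from the abstract properties $\EMD(a+b,a+c)=\EMD(b,c)$ and $\EMD(\lambda b,\lambda c)=\lambda\EMD(b,c)$, whereas you obtain it directly from the prefix-sum characterization $\EMD(a,b)=\ell_1(\hat{a},\hat{b})$.
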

\begin{proof}
  Let~$Z = (z_1, \ldots, z_m) = \alpha X + (1-\alpha) Y$ be our convex
  combination of~$X$ and~$Y$.  We note two properties of the earth
  mover's distance. Let~$a$,~$b$, and~$c$ be three vectors that
  consist of nonnegative numbers, where the entries in~$b$ and~$c$ sum up to the same
  value. Then, it holds that~$\EMD(a+b,a+c) = \EMD(b,c)$. Further, for a
  nonnegative number~$\lambda$, we have 
  that~$\EMD(\lambda b, \lambda c) = \lambda\EMD(b,c)$.  Using these
  observations and the definition of the earth mover's distance, we
  note that:
  \begin{align*}
    \textstyle
    \POS(X,Z)  & \textstyle\leq \sum_{i=1}^m \EMD(x_i,z_i) \\
    &\!\!\!\!\!\!\!\!\!\!\!\!\!\!\!\! =   \textstyle\sum_{i=1}^m \EMD(x_i,\alpha x_i + (1-\alpha)y_i) \\
    &\!\!\!\!\!\!\!\!\!\!\!\!\!\!\!\! =   \textstyle\sum_{i=1}^m \EMD((1-\alpha)x_i, (1-\alpha)y_i) \\
    &\!\!\!\!\!\!\!\!\!\!\!\!\!\!\!\! =    \textstyle(1-\alpha) \sum_{i=1}^m \EMD(x_i,y_i) = (1-\alpha)\POS(X,Y).
  \end{align*}
  The last equality follows by our assumption regarding~$X$ and~$Y$. By an analogous reasoning, 
  we also have that~$\POS(Z,Y) \leq \alpha \POS(X,Y)$. By putting these two inequalities
  together, we have that:
  \[
    \POS(X,Z) + \POS(Z,Y) \leq \POS(X,Y).
  \]
  By the triangle inequality, we have that~$\POS(X,Y) \leq \POS(X,Z) + \POS(Z,Y)$
  and, so, we have that~$\POS(X,Z) + \POS(Z,Y) = \POS(X,Y)$.
\end{proof}

Using \Cref{pro:paths}, for any two compass matrices,
we can generate a sequence of matrices that form a path between them.
For example, matrix~$0.5\ID + 0.5\UN$ is exactly at the same distance
from~$\ID$ and from~$\UN$.  

In \Cref{fig:paths} we show a map
of elections that contains our four compass matrices
and, for each two of them, i.e., for each two~$X, Y \in \{\ID, \UN, \AN, \ST\}$, 
a set of~$20$ matrices obtained as their convex
combinations with values of~$\alpha$ uniformly spread over~$[0,1]$. The map was created using the MDS embedding.
Even though each path consisted of the same number of matrices, we see that proportions of the distances between the compass matrices are maintained. Recall that, if~$\POS(\ID,\UN) = 1$, then~$\POS(\ID,\AN) = \POS(\UN, \ST) = \nicefrac{3}{4}$,~$\POS(\AN,\ST) = \nicefrac{13}{16}$, and~$\POS(\ID,\ST) = \POS(\UN,\AN) = \nicefrac{1}{2}$.

\section{Embedding}\label{ch:applications:sec:embedding}

In this section, we compare various different embedding methods. In particular, we consider the following six  algorithms: multidimensional scaling (MDS), t-distributed stochastic neighbor embedding (t-SNE), locally linear embedding (LLE), a variant of the Kamada and Kawai algorithm (KK), principal component analysis (PCA), and the Fruchterman and Reingold algorithm (FR). Technical aspects of these methods were described in \Cref{ch:preliminaries}. Sometimes we use the term \textit{MDS map}, as an abbreviated form for \textit{the map that was created using the MDS embedding}. Whenever we write \textit{original distance} we refer to the EMD-positionwise distance between the elections,
and whenever we write \textit{embedded distance} we refer to the Euclidean distance between the points on the plane (which correspond to these elections) after embedding. Whenever we write \emph{normalized distance}, we refer to the distance divided by the distance between the identity and uniformity because it is the largest possible one.

\begin{figure}[]
    \centering

    \begin{subfigure}[b]{0.49\textwidth}
        \centering
        \includegraphics[width=6cm, trim={0.2cm 0.2cm 0.2cm 0.2cm}, clip]{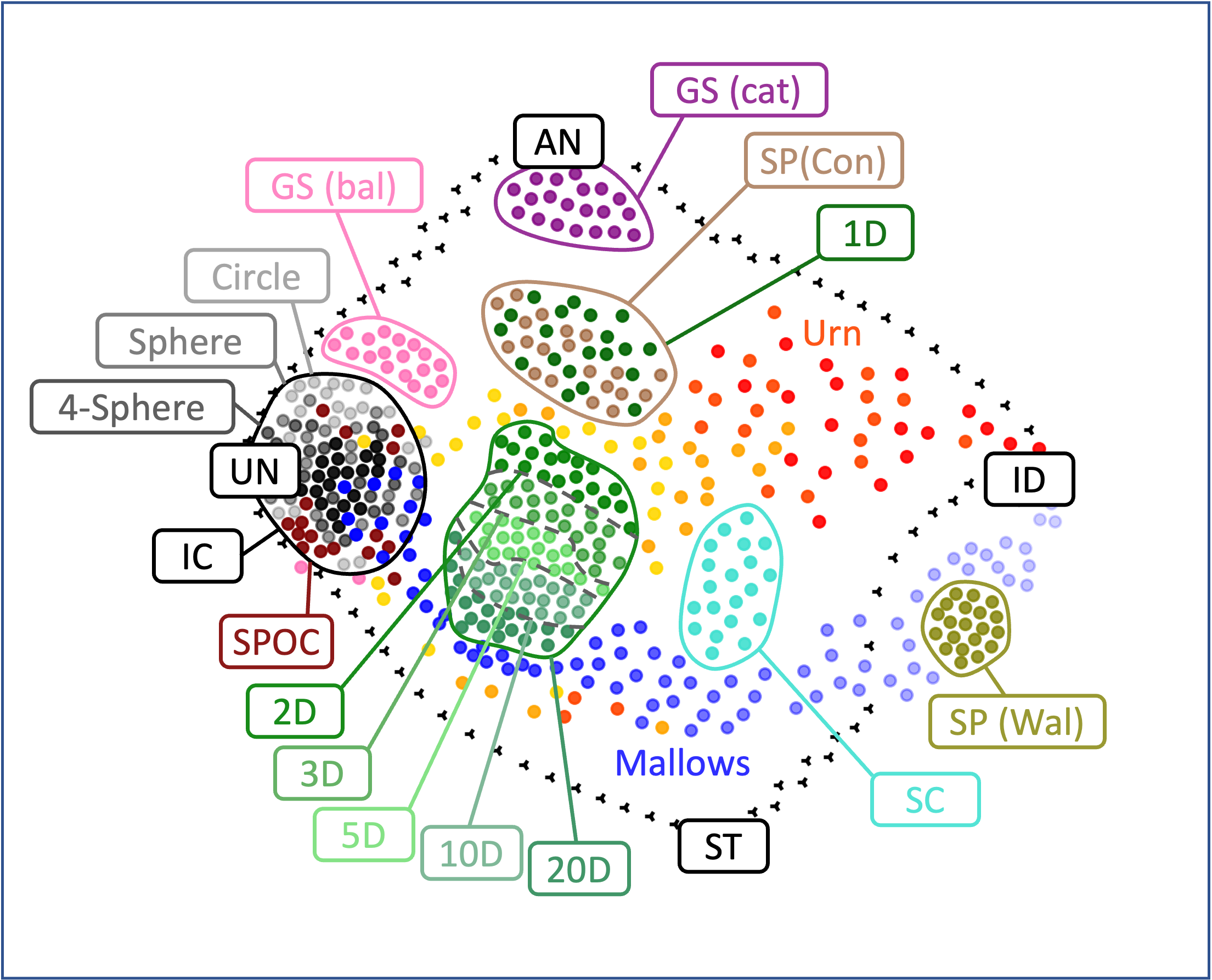}
        \caption{FR}
    \end{subfigure}
    \begin{subfigure}[b]{0.49\textwidth}
        \centering
        \includegraphics[width=6cm, trim={0.2cm 0.2cm 0.2cm 0.2cm}, clip]{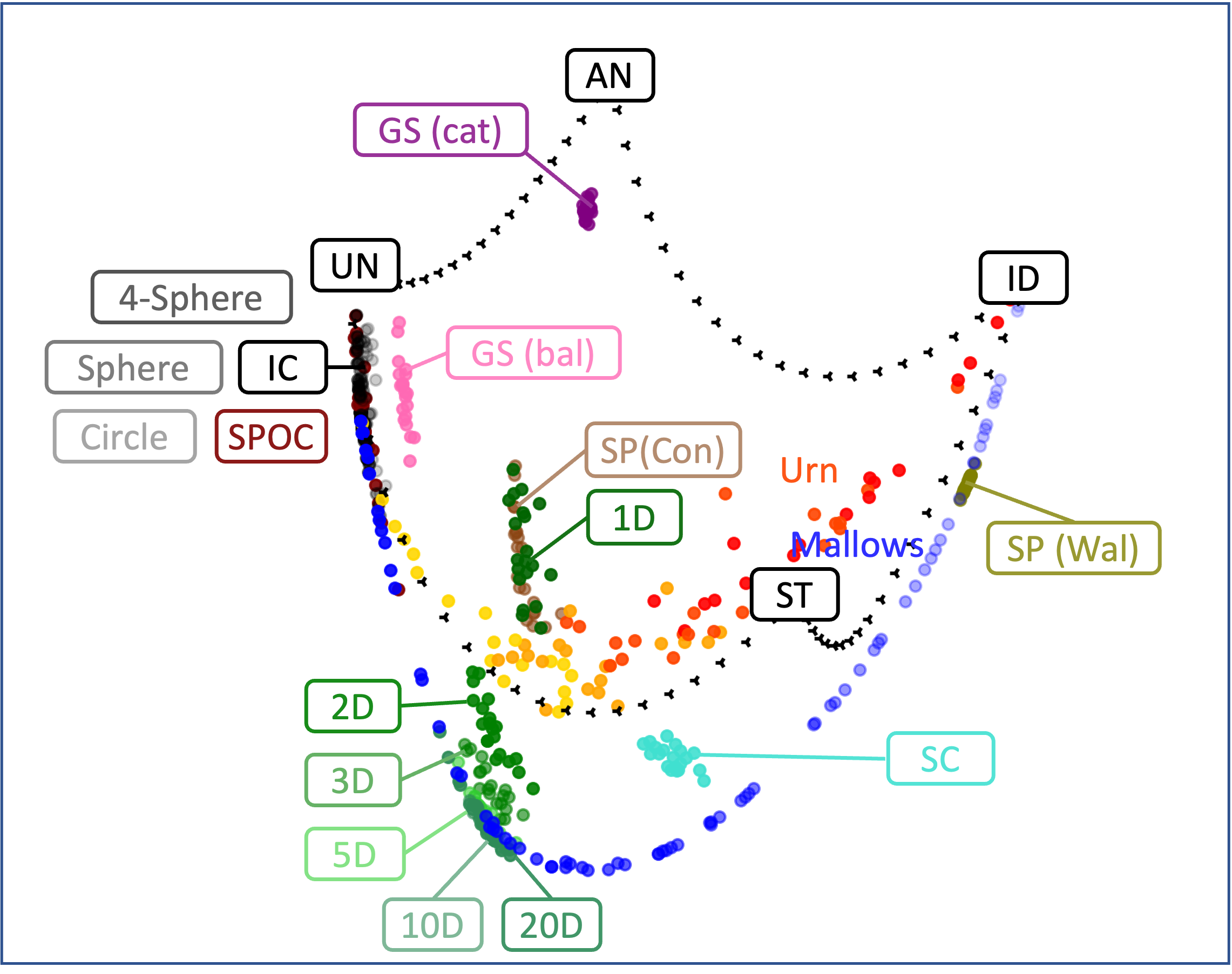}
        \caption{LLE}
    \end{subfigure}
    
     \vspace{1em}

    \begin{subfigure}[b]{0.49\textwidth}
        \centering
        \includegraphics[width=6cm, trim={0.2cm 0.2cm 0.2cm 0.2cm}, clip]{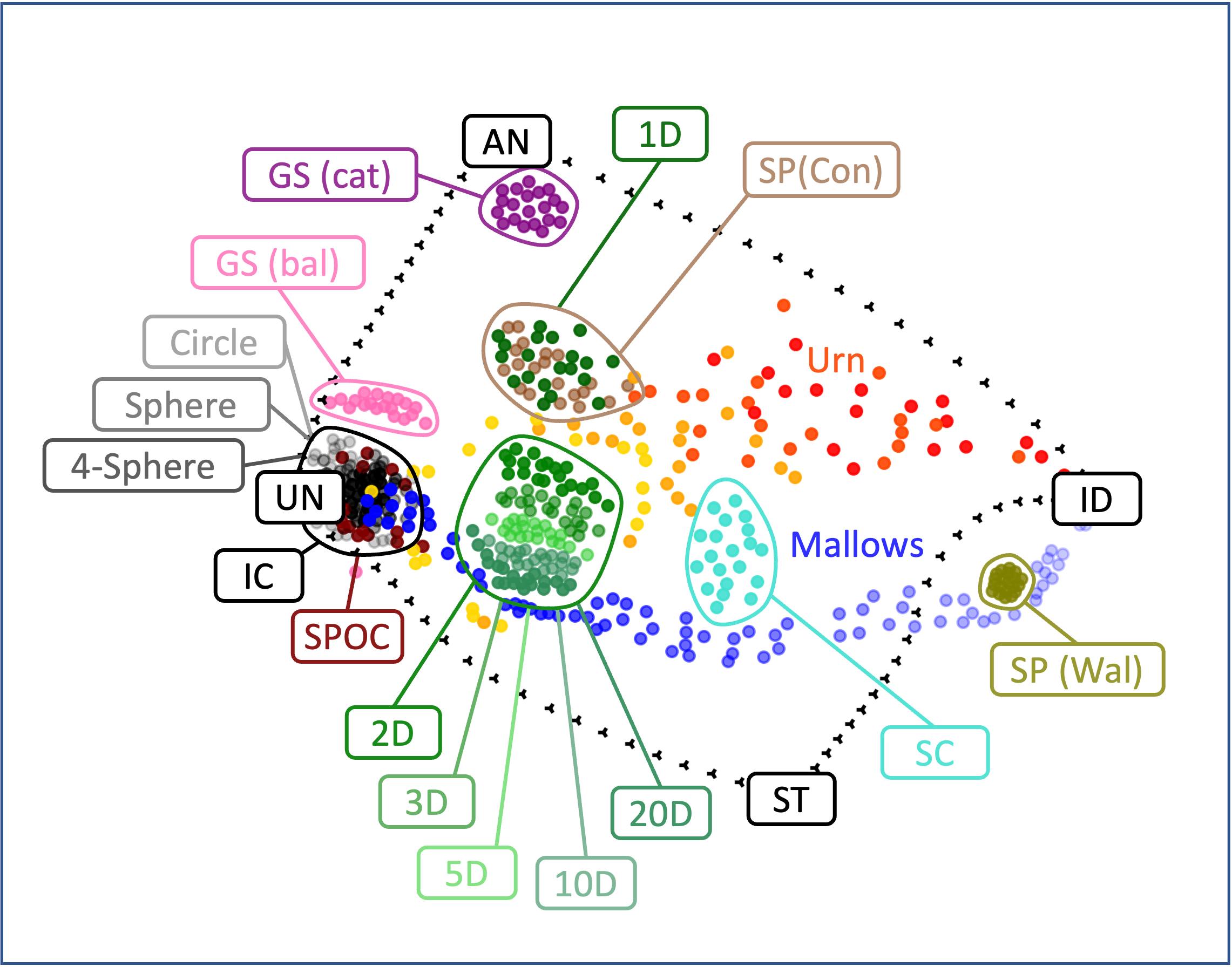}
        \caption{KK}
    \end{subfigure}
    \begin{subfigure}[b]{0.49\textwidth}
        \centering
        \includegraphics[width=6cm, trim={0.2cm 0.2cm 0.2cm 0.2cm}, clip]{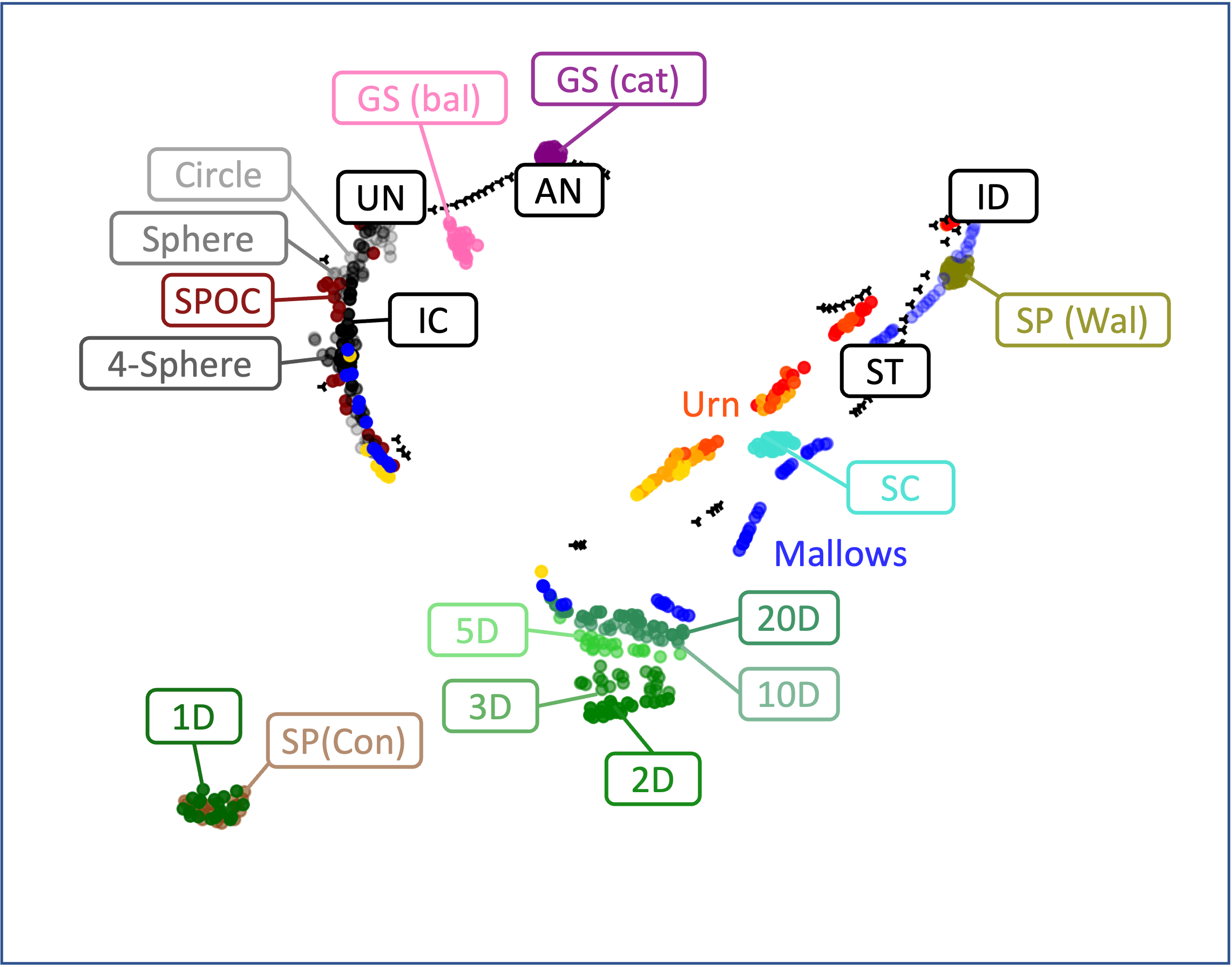}
        \caption{t-SNE}
    \end{subfigure}

     \vspace{1em}
    \begin{subfigure}[b]{0.49\textwidth}
        \centering
        \includegraphics[width=6cm, trim={0.2cm 0.2cm 0.2cm 0.2cm}, clip]{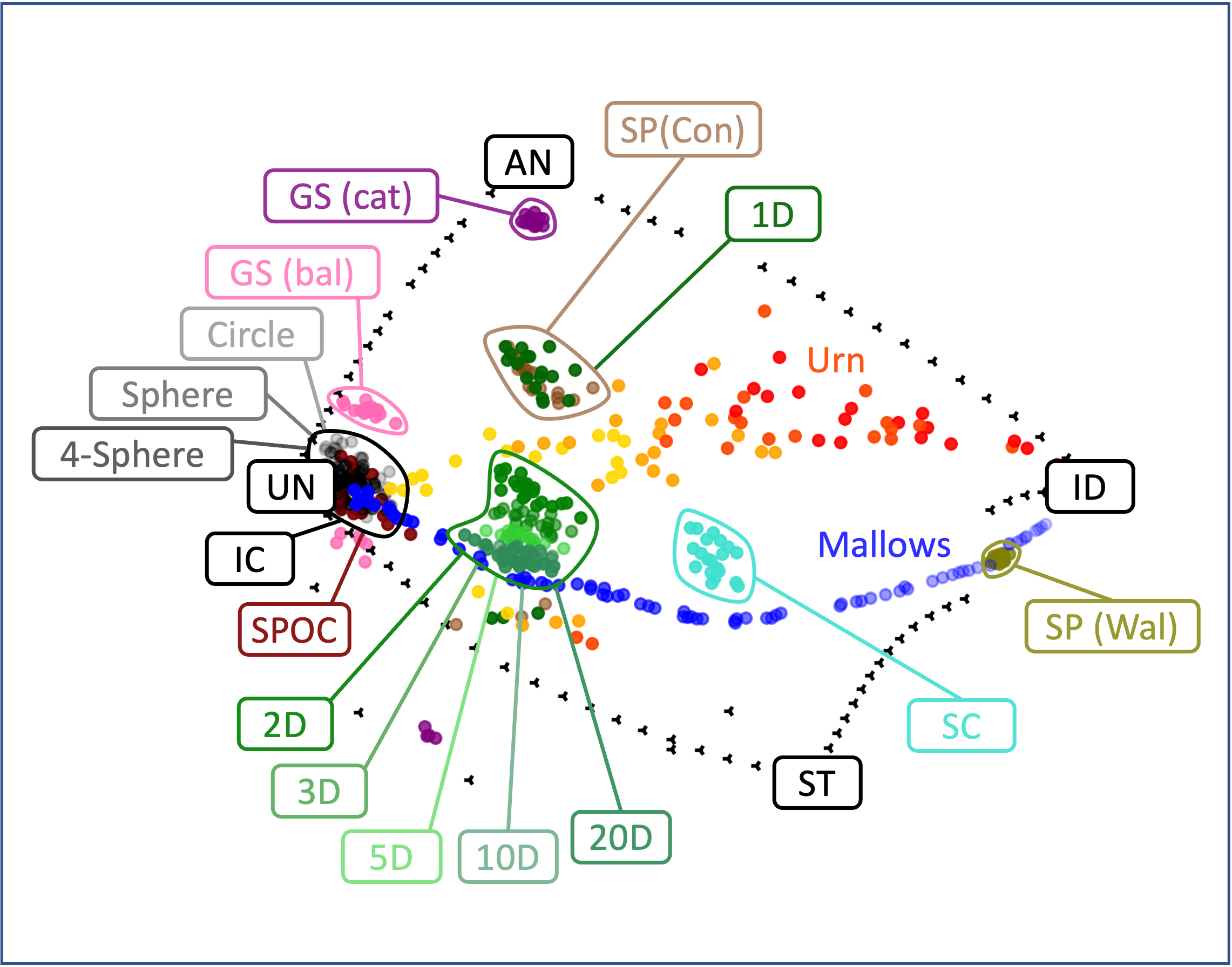}
        \caption{MDS}
    \end{subfigure}
    \begin{subfigure}[b]{0.49\textwidth}
        \centering
        \includegraphics[width=6cm, trim={0.2cm 0.2cm 0.2cm 0.2cm}, clip]{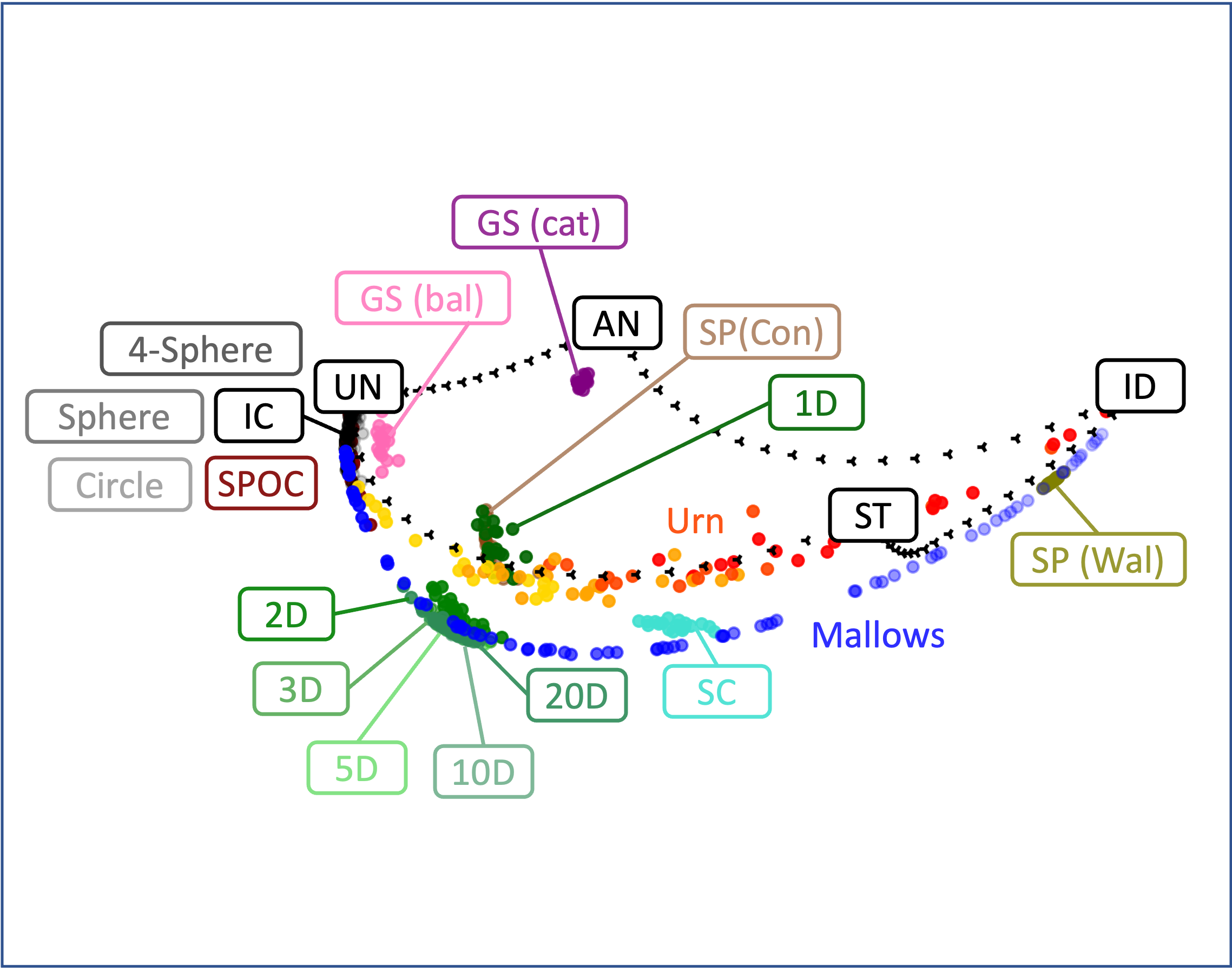}
        \caption{PCA}
    \end{subfigure}

    \caption{Comparison of embeddings algorithms.}
    \label{fig:embed}
\end{figure}

How to compare two different embeddings? To answer this question, first we have to explain what the main purpose of the map is. We want to make it easier and more intuitive to see certain features and properties of elections. Although we have the table with the original distances, it is hard to analyze the data solely by looking at the values in the table. 
If two elections are similar, we would like them to lie next to each other on the map. However, in most cases we are interested especially in the local correctness of the map (i.e., if two elections are far away on the map, we do not give that much attention to distinguishing whether they are far or very far). Nonetheless, it is important to know which embeddings are focusing on local correctness, and which are trying to correctly embed all the distances.

In \Cref{fig:embed} we present the results for several embeddings. As we can see, the maps are quite diverse. For t-SNE and PCA, we clearly see that they will not be very useful for us. The LLE embedding is slightly better, and we can see the main shape, yet it is still far from what we want (we would like most elections to lie between the four paths, and, if possible, to be more spread over the space). As to the MDS, KK and FR, they produce more or less the same shape but with different levels of compactness---with the MDS being the most compact and the FR being the least. From now on, we focus only on these three embeddings and discuss them in more detail.

For the MDS map we witness a flaw, that is, some elections are questionably placed. For example, several elections from the group-separable caterpillar model are far away from the rest---which is not the case when we look closely at the original distances. We usually observe such flaws in maps with a high number of candidates (for example~$100$). For a moment, let us forget about this flaw because for numerous maps with smaller numbers of candidates it is not occurring.

Which map is the best? One approach would be to verify the correlation between the embedded distances and the  original ones. Here is what we get: For the KK method the PCC is the highest and is equal to~$0.9805$, for the MDS method it is equal to~$0.9748$ and for the FR method it is equal~$0.9364$.

If our goal were to localize where a given election precisely lies, we would recommend KK---we give arguments for this in the following sections about monotonicity and distortion of the embeddings. However, if we would later color the map according to certain features, for example, the highest Borda score in each election or the time needed to compute the winning committee under a particular rule, it is useful to have a less compact map---as long as it maintains the proper shape and is not giving us misleading impressions. Therefore, for the maps colored by features, we recommend using the FR embedding. Nevertheless, we still find the MDS algorithm useful for some other tasks, such as, for example, the maps of preferences (for ordinal preferences recall~\Cref{ordinal_map_pref}, and for approval preferences see~\Cref{approval_map_pref}).
    
Whenever we write~$m \times n$ elections, we refer to elections with~$m$ candidates and~$n$ voters. To simplify the discussion of the concepts of \textit{monotonicity} and \textit{distortion} we introduce the notion of an experiment.

\subsubsection*{Experiment}
By an experiment~$Q=(\mathcal{E}, d_\mathcal{M}, d_\Euc)$ we refer to a triple that consists of a set of elections~$\mathcal{E}$, original distances~$d_\mathcal{M}$ between these elections according to metric~$\mathcal{M}$, and Euclidean distances~$d_\Euc$ between these elections after the embedding. In our case, for~$d_\mathcal{M}$ we select the EMD-positionwise distance.

\begin{figure}[t]
    \centering
    
    \begin{subfigure}[b]{0.49\textwidth}
        \centering
        \includegraphics[width=6cm, trim={0.2cm 0.2cm 0.2cm 0.2cm}, clip]{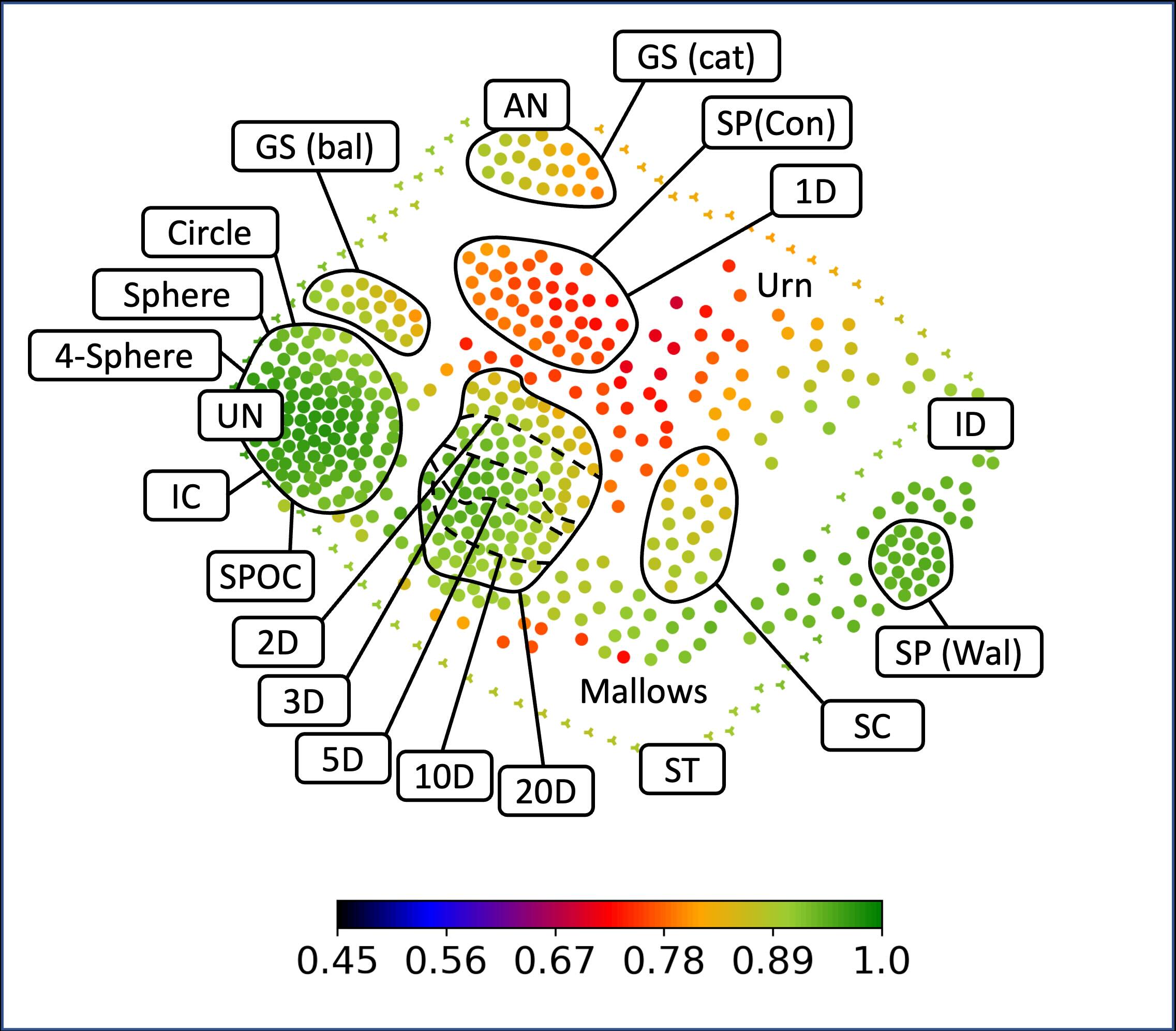}
        \caption{FR}
    \end{subfigure}%
    \begin{subfigure}[b]{0.49\textwidth}
        \centering
        \includegraphics[width=6cm, trim={0.2cm 0.2cm 0.2cm 0.2cm}, clip]{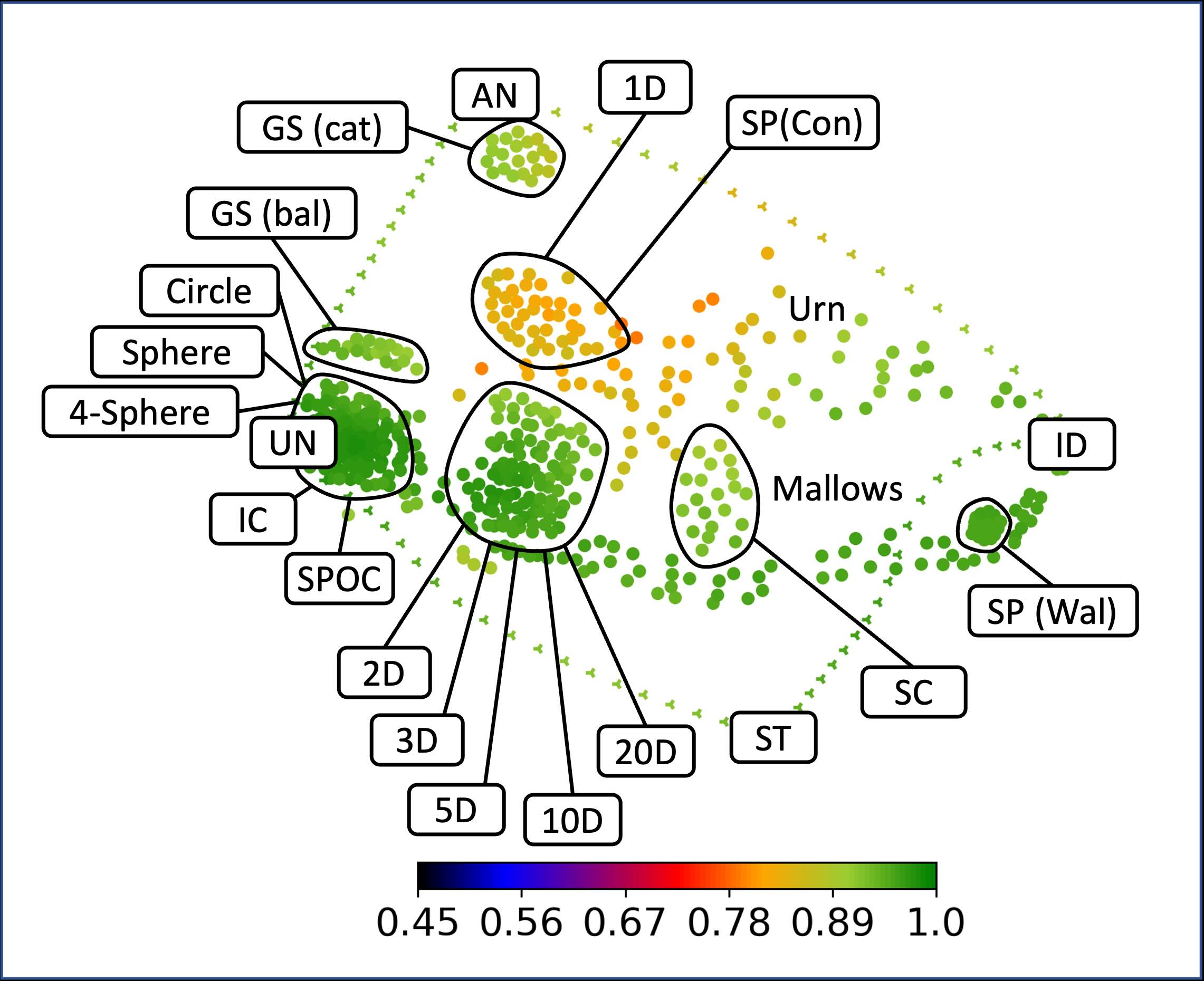}
        \caption{KK}
    \end{subfigure}
    
    \begin{subfigure}[b]{0.49\textwidth}
        \centering
        \includegraphics[width=6cm, trim={0.2cm 0.2cm 0.2cm 0.2cm}, clip]{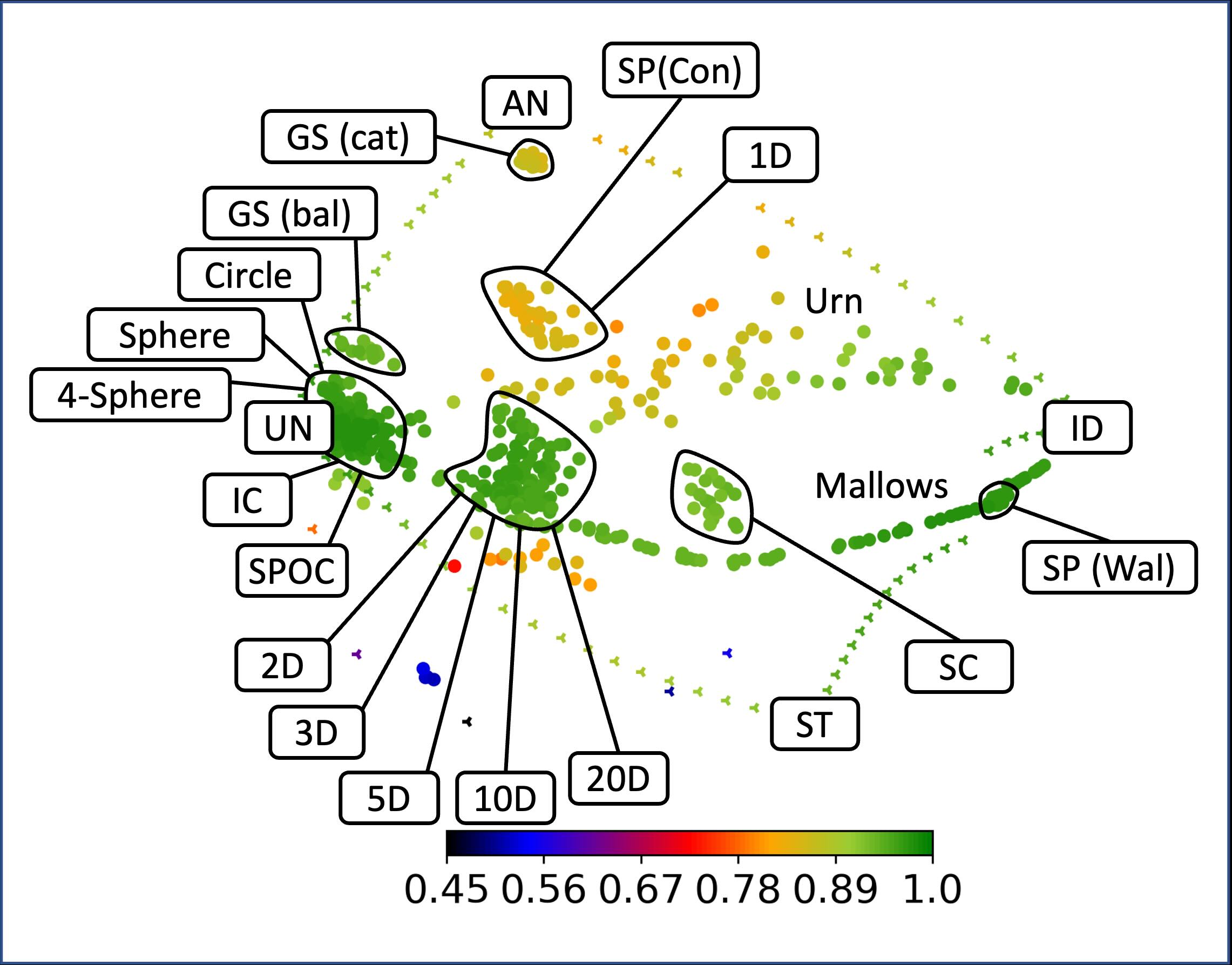}
        \caption{MDS}
    \end{subfigure}
    
    \caption{Monotonicity with~$\epsilon=0$ ($100\times 100$).}
    \label{fig:monotonicity}
\end{figure}

\subsection{Monotonicity}\label{ch:applications:sec:monotonicity}

One of the tools that we use to evaluate the quality of different embeddings is what we call \textit{monotonicity}. The intuition is that if the original distance between elections~$A$ and~$B$ is larger than the original distance between elections~$A$ and~$C$, then we expect that the same will hold for the embedded distances, that is, the embedded distance between elections~$A$ and~$B$ will be larger than the embedded distance between elections~$A$ and~$C$. Now, we move to formal definition.

For a given experiment~$Q=(\mathcal{E}, d_\mathcal{M}, d_\Euc)$ and a given election~$X\in \mathcal{E}$, we define the total monotonicity of this election in this experiment to be:
\[
\mu_{Q}(X) = \sum_{Y,Z \in \mathcal{E}} \Delta_{X}(Y,Z),
\]
where~$\Delta_{X}(Y,Z)$ is equal to 1, if 
\[
\sgn(d_\Euc(X,Y)-d_\Euc(X,Z)) = \sgn(d_\mathcal{M}(X,Y)-d_\mathcal{M}(X,Z)),
\]
and is equal to~$0$ otherwise. Positive (negative) signs mean that both the original and the embedded distances between~$X$ and~$Y$ were larger (smaller) than the distances between~$X$ and~$Z$. In principle, the larger the total monotonicity the better. We also consider a relaxed variant of the monotonicity notion, where in the case of different signs we allow for a small error. Formally, for a given~$\epsilon \in \R$, maximal error~$\Delta_{X}^{\epsilon}(Y,Z)$ is equal to~$1$ if
\[
\sgn(d_\Euc(X,Y)-d_\Euc(X,Z)) = \sgn(d_\mathcal{M}(X,Y)-d_\mathcal{M}(X,Z)), \\
\]
or
\[
|d_\Euc(X,Y)-d_\Euc(X,Z)| \leq \epsilon \cdot \min{(d_\Euc(X,Y), d_\Euc(X,Z))}.
\]
This means that, given target point~$A$ and two other points~$B$ and~$C$, if originally point~$B$ was closer to~$A$ than point~$C$, and after the embedding point~$B$ is further from~$A$ than point~$C$, but the difference between embedded distances between points~$A$ and~$B$, and~$A$ and~$C$ is relatively small, than we can argue that the embedding of~$A$ in relation to~$B$,~$C$ is not perfect but still useful, because the error is small. 

\newcommand{\numberbarMonotonicity}[1]{\tikz{
    \fill[blue!17] (0,0) rectangle (#1*15mm,10pt);
    \node[inner sep=0pt, anchor=south west] at (0,0) {#1};}
}

In \Cref{fig:monotonicity} we present the maps (created using the FR, KK, and MDS embeddings), where each point (election) is colored accordingly to its monotonicity (with~$\epsilon=0$). The larger (the closer to green) the value, the better, and the lower (the closer to black) the value, the worse. Monotonicity equal to~$1$ means that all inequalities are maintained after the embedding. For all three maps, the main message is the same, elections from the IC, SPOC, Mallows, Walsh, and multidimensional Euclidean models are nicely embedded. Then, elections from the single-crossing and group-separable models are still fine, but on average worse than the previously mentioned models. Finally, we have elections from the Interval, Conitzer, and urn models --- which are the worst embedded (not counting some elections from group-separable caterpillar group for MDS embedding, which are obviously wrong). 

\begin{table*}[]
\centering
\begin{tabular}{l|l|l|l}
    \toprule
    Algorithm &~$\epsilon=0$ &~$\epsilon=0.05$ &~$\epsilon=0.1$ \\
    \midrule
    FR & \numberbarMonotonicity{0.887} & \numberbarMonotonicity{0.912} & \numberbarMonotonicity{0.929} \\
    KK & \numberbarMonotonicity{0.928} & \numberbarMonotonicity{0.951} & \numberbarMonotonicity{0.964} \\
    MDS & \numberbarMonotonicity{0.925} & \numberbarMonotonicity{0.947} & \numberbarMonotonicity{0.947} \\ 
    \bottomrule
  \end{tabular}
  \caption{\label{tab:monotonicity}Monotonicity  ($100\times 100$).}
\end{table*}

Moreover, in \Cref{tab:monotonicity} we enclose the average monotonicity for the presented maps, also for two other~$\epsilon$ values,~$0.05$, and~$0.1$. We see that with respect to monotonicity, the KK embedding performs best, with MDS right behind it, and followed by FR.

\subsection{Distortion}\label{ch:applications:sec:distortion}
\begin{figure}[t]
    \centering
    
     \begin{subfigure}[b]{0.49\textwidth}
        \centering
        \includegraphics[width=6.cm, trim={0.2cm 0.2cm 0.2cm 0.2cm}, clip]{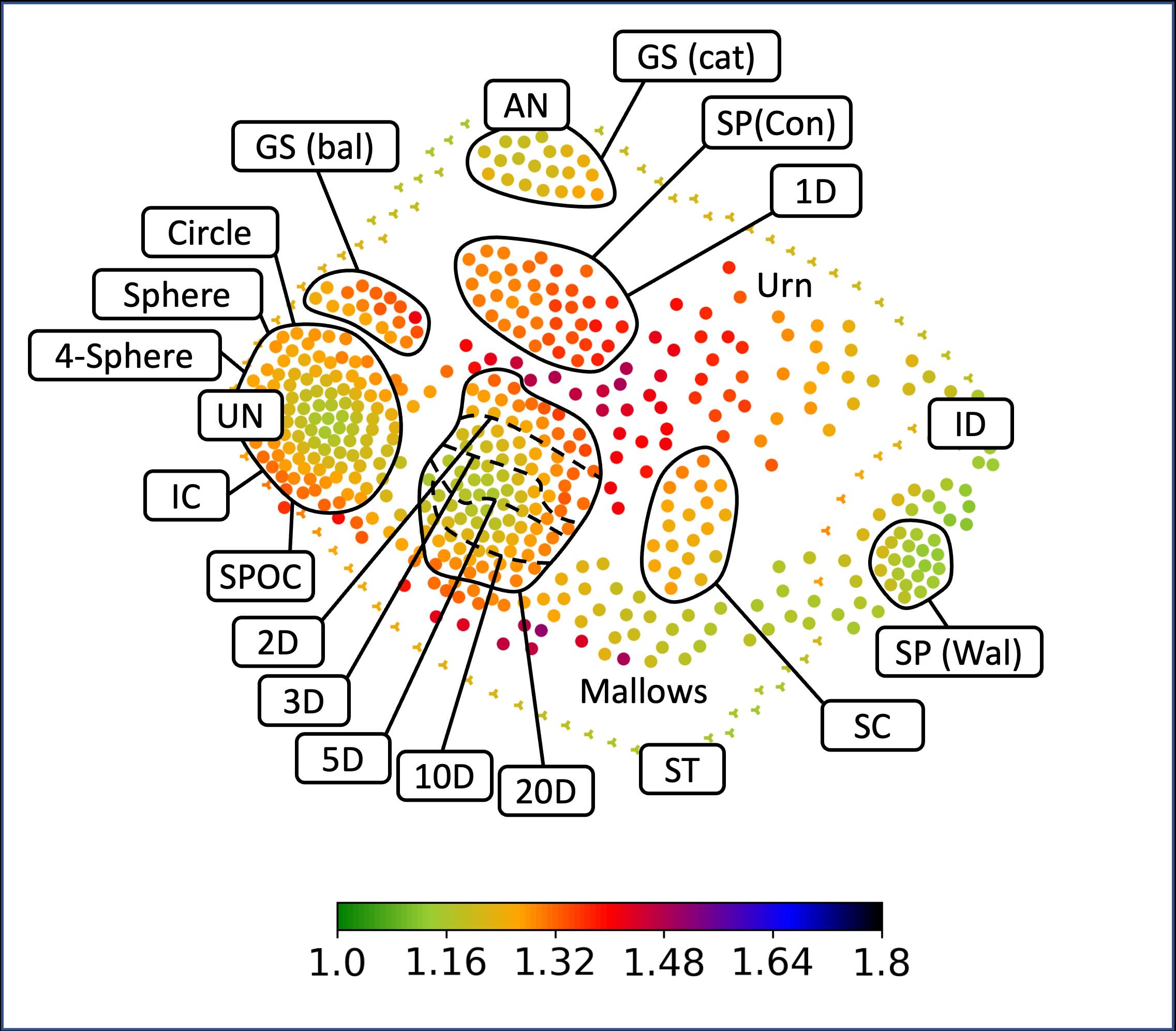}
        \caption{FR}
    \end{subfigure}
     \begin{subfigure}[b]{0.49\textwidth}
        \centering
        \includegraphics[width=6.cm, trim={0.2cm 0.2cm 0.2cm 0.2cm}, clip]{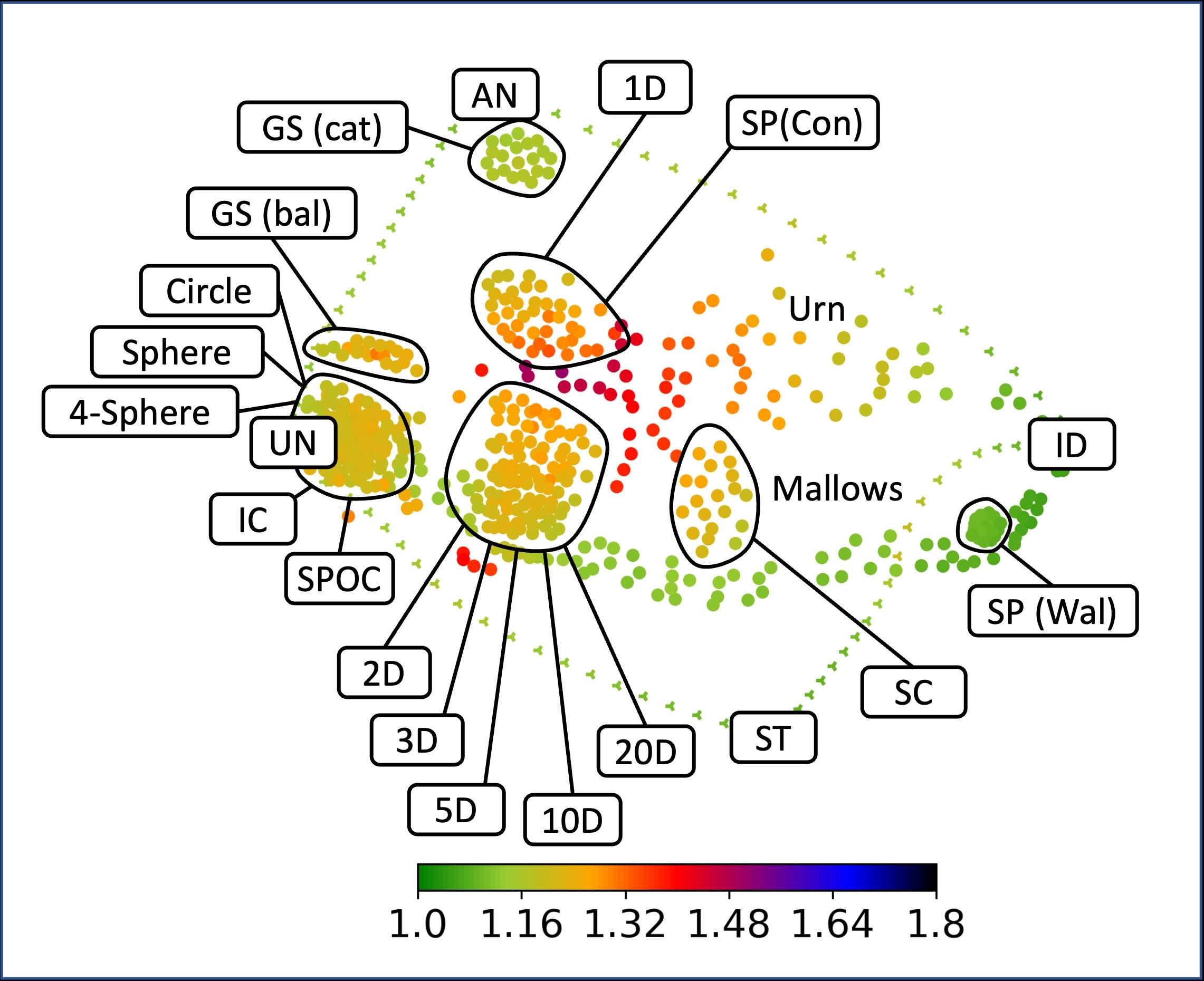}
        \caption{KK}
    \end{subfigure}     
    \begin{subfigure}[b]{0.49\textwidth}
        \centering
        \includegraphics[width=6.cm, trim={0.2cm 0.2cm 0.2cm 0.2cm}, clip]{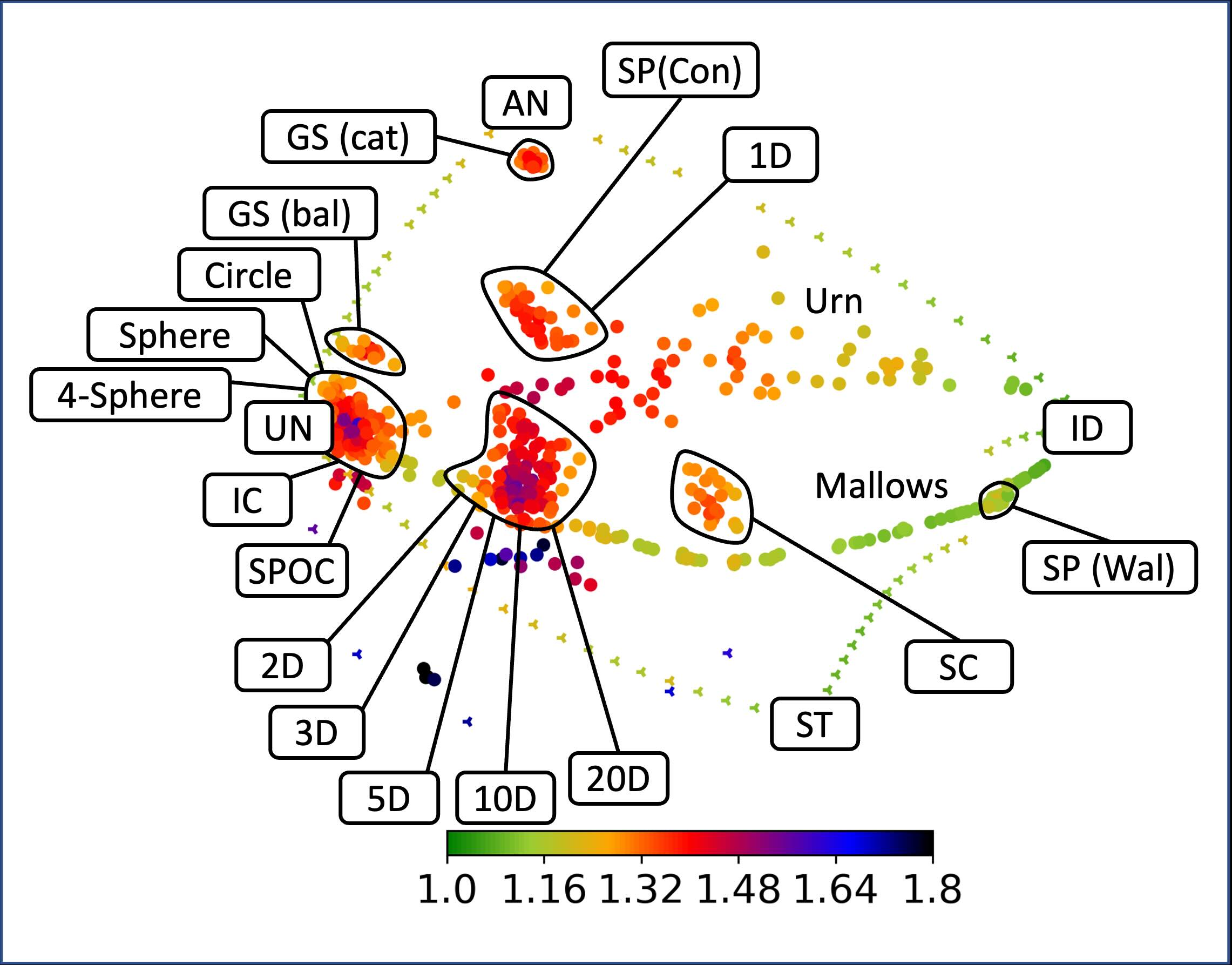}
        \caption{MDS}
    \end{subfigure}

    \caption{\label{fig:distortion} Distortion ($100\times 100$).}
\end{figure}

In addition to monotonicity, we also consider \textit{distortion}. In spirit, it is similar to monotonicity but instead of triplets, it analyzes pairs. The intuition is that the normalized embedded distance should be similar to the normalized original distance. Formally, for a given pair of elections~$X$ and~$Y$ the 
\emph{distortion} is 
defined as:
\[
\MR(X,Y) = \frac{\max(\bar{d}_\Euc(X,Y), \bar{d}_\mathcal{M}(X,Y))}{\min(\bar{d}_\Euc(X,Y), \bar{d}_\mathcal{M}(X,Y))},  
\]
where~$\bar{d}(X,Y)$ means that the
distance between~$X$ and~$Y$ is normalized by the distance between~$\ID$ and~$\UN$. 
For a given experiment~$Q$ and a given election~$X$, we define the total 
\emph{distortion}
of this election in this experiment to be:
\[
\TMR_{Q}(X) = \sum_{Y \in Q} \MR(X,Y).
\]

The closer is the~$\mathrm{TMR}$ value to one, the better---this means that the embedded distanced are proportional to the original ones.

In \Cref{fig:distortion}, we present the maps colored according to their distortion. The best distortion is witnessed by the KK embedding, with elections from the urn model having the worst distortion. For the FR embedding, the situation is very similar, however on average we have slightly worse distortion. For the MDS embedding the situation differs. Besides the misplaced group-separable caterpillar elections and some urn elections in the lower part of the map, we can see dark points in the middle---in the impartial culture cluster, and among highly dimensional Euclidean cluster---which was not the case for the previous two embeddings.

Moreover, in \Cref{tab:distortion} we present the average values for the discussed methods and different numbers of candidates. 
We observe two patterns. The first one is related to the embedding methods: KK is always the best, followed by FR, with MDS being the worst. The second pattern is related to the number of candidates: For all three embedding algorithms the higher the number of candidates, the lower the distortion.
Nonetheless, for each method the distances in the embedding are, on average, off by $20-30\%$. This means that we can get intuitions from the maps, but we always need to carefully verify them.

\newcommand{\numberbarDistortion}[1]{\tikz{
    \fill[red!80!black!16] (0,0) rectangle (#1*12mm,10pt);
    \node[inner sep=0pt, anchor=south west] at (0,0) {#1};}
}

\begin{table*}
\centering
\begin{tabular}{l|l|l|l|l}
    \toprule
    Algorithm &~$4 \times 100$ &~$10 \times 100$ &~$20 \times 100$ &~$100 \times 100$ \\
    \midrule
    FR & \numberbarDistortion{1.322} & \numberbarDistortion{1.282} & 
    \numberbarDistortion{1.272} & \numberbarDistortion{1.255} \\
    KK & \numberbarDistortion{1.258} & \numberbarDistortion{1.248} & 
    \numberbarDistortion{1.236} & \numberbarDistortion{1.200} \\
    MDS & \numberbarDistortion{1.333} & \numberbarDistortion{1.331} & 
    \numberbarDistortion{1.320} & \numberbarDistortion{1.315} \\ 
    \bottomrule
  \end{tabular}
  \caption{\label{tab:distortion}Distortion}
\end{table*}

\subsection{Scalability}

\begin{figure}[ht]
    \centering
    
    \begin{subfigure}[b]{0.49\textwidth}
        \centering
        \includegraphics[width=6.cm, trim={0.2cm 0.2cm 0.2cm 0.2cm}, clip]
        {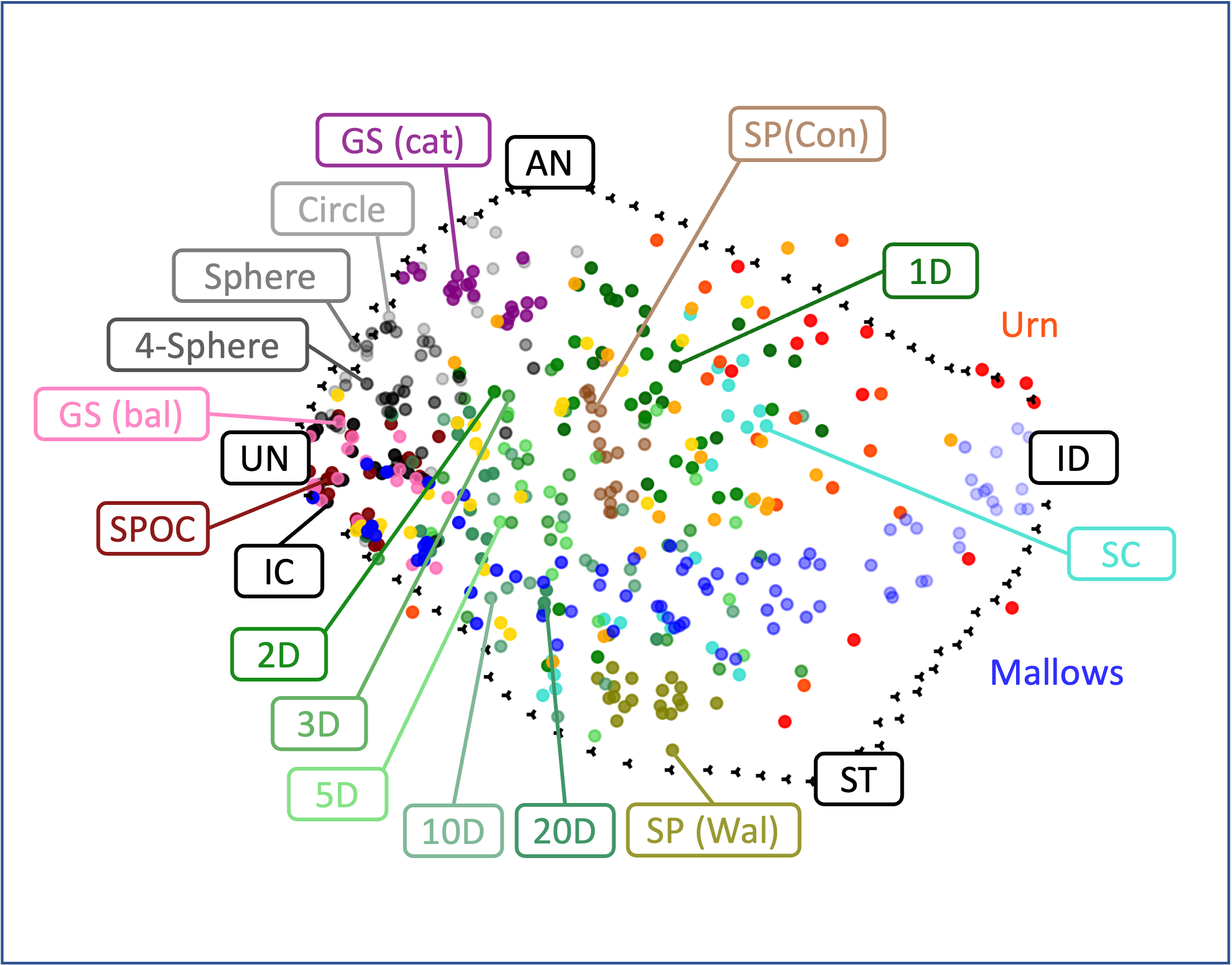}
        \caption{4 candidates \& 100 voters}
    \end{subfigure}
    \begin{subfigure}[b]{0.49\textwidth}
        \centering
        \includegraphics[width=6.cm, trim={0.2cm 0.2cm 0.2cm 0.2cm}, clip]
        {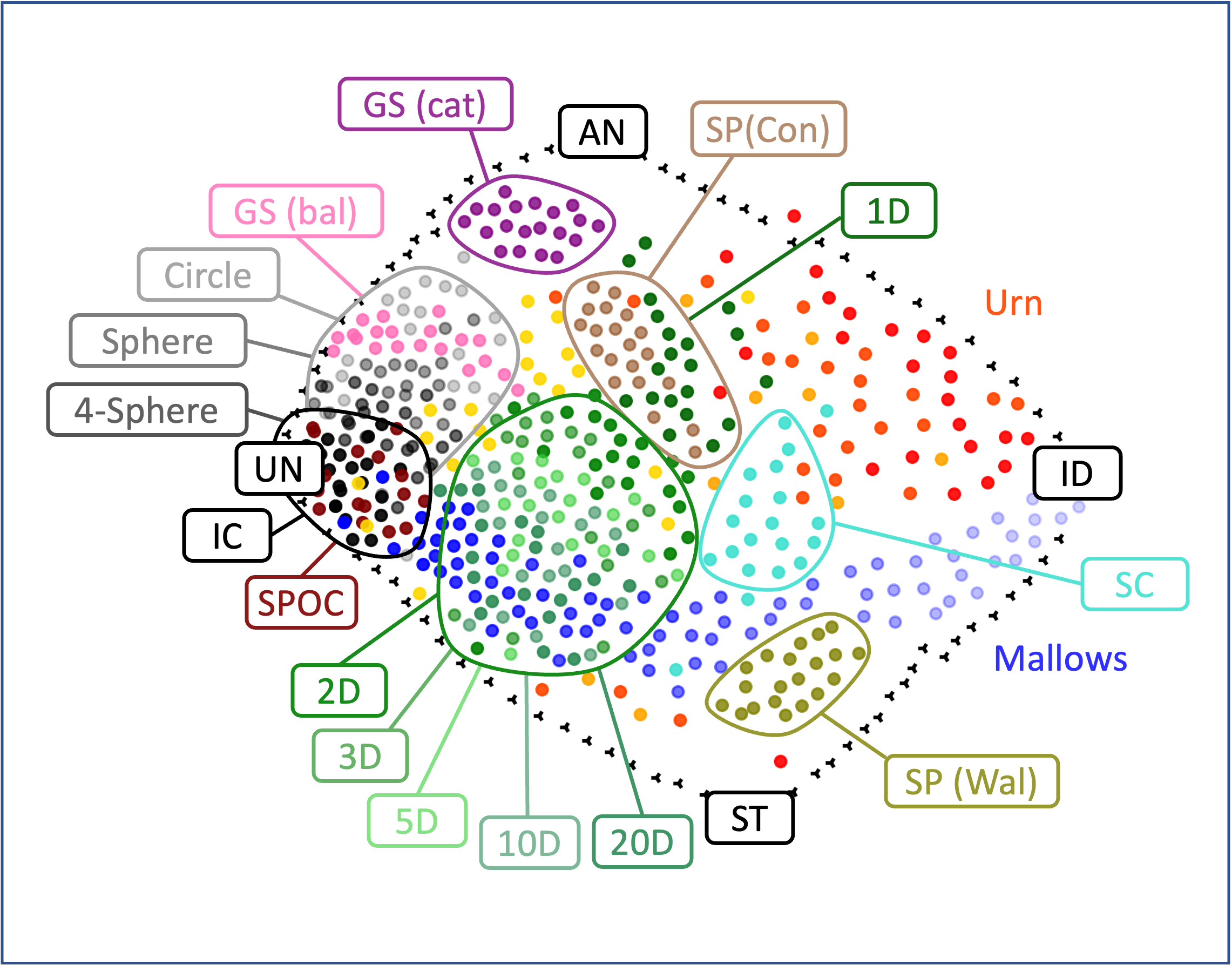}
        \caption{10 candidates \& 100 voters}
    \end{subfigure}

    \vspace{1em}

    \begin{subfigure}[b]{0.49\textwidth}
        \centering
        \includegraphics[width=6.cm, trim={0.2cm 0.2cm 0.2cm 0.2cm}, clip]
        {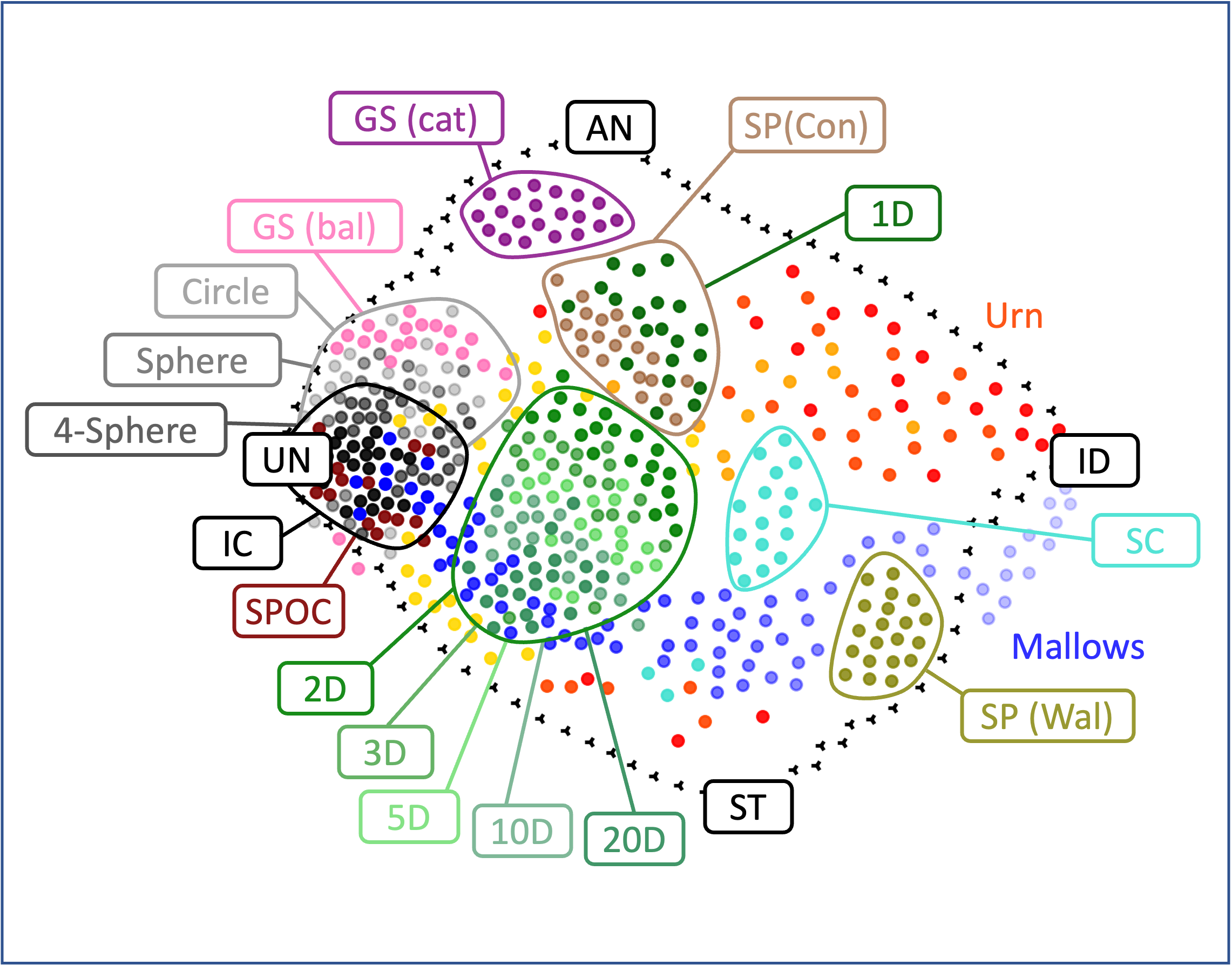}
        \caption{20 candidates \& 100 voters}
    \end{subfigure}
    \begin{subfigure}[b]{0.49\textwidth}
        \centering
        \includegraphics[width=6.cm, trim={0.2cm 0.2cm 0.2cm 0.2cm}, clip]
        {img/embed/spring_emd-positionwise.png}
        \caption{100 candidates \& 100 voters}
    \end{subfigure}
    
    \caption{Maps of elections with different number of candidates.}
    \label{fig:scale}
\end{figure}

In this section we compare the embedding results for elections with different numbers of candidates. In \Cref{fig:scale}, we present four maps with~$4$,~$10$,~$20$, and~$100$ candidates, created using the FR embedding. As we can see, the maps for~$10$,~$20$, and~$100$ candidates are surprisingly similar. The largest difference we can observe is for the balanced group-separable elections---for the case of~$100$ candidates they are clearly separated, while for~$10$ and~$20$ candidates they are mingling with the Circle elections. In addition, multidimensional Euclidean elections are better separated for~$100$ candidates than for~$10$ or~$20$ candidates.
Another interesting thing is that the Walsh elections are shifting toward the right side of the embedding as we increase the number of candidates. At the same time, the caterpillar group-separable elections are shifting toward~$\AN$.

Only the map with four candidates is different. Yet, the main shape is still maintained. Note that for~$4$ candidates, there are only~$24$ possible different votes, which likely explains why the map is not as meaningful. In principle, elections with only four candidates are very similar to each other (we confirm this in~\Cref{ch:sub:sec:experiments}).

\subsubsection{Parametrized Models}
Some of the statistical cultures that we study are parametrized. We are especially interested in the urn and Mallows models, because for both these models, depending on the values of their parameters we can generate elections that are either similar to~$\ID$ or to~$\UN$ or lie somewhere in between. What we would like to verify is whether the urn or Mallows elections with a given parameter occupy the same part of the map regardless of the number of candidates.

\paragraph{Urn Model}
\begin{figure}[t]
    \centering
    
     \begin{subfigure}[b]{0.32\textwidth}
        \centering
        \includegraphics[width=4.cm, trim={0.2cm 0.2cm 0.2cm 0.2cm}, clip]{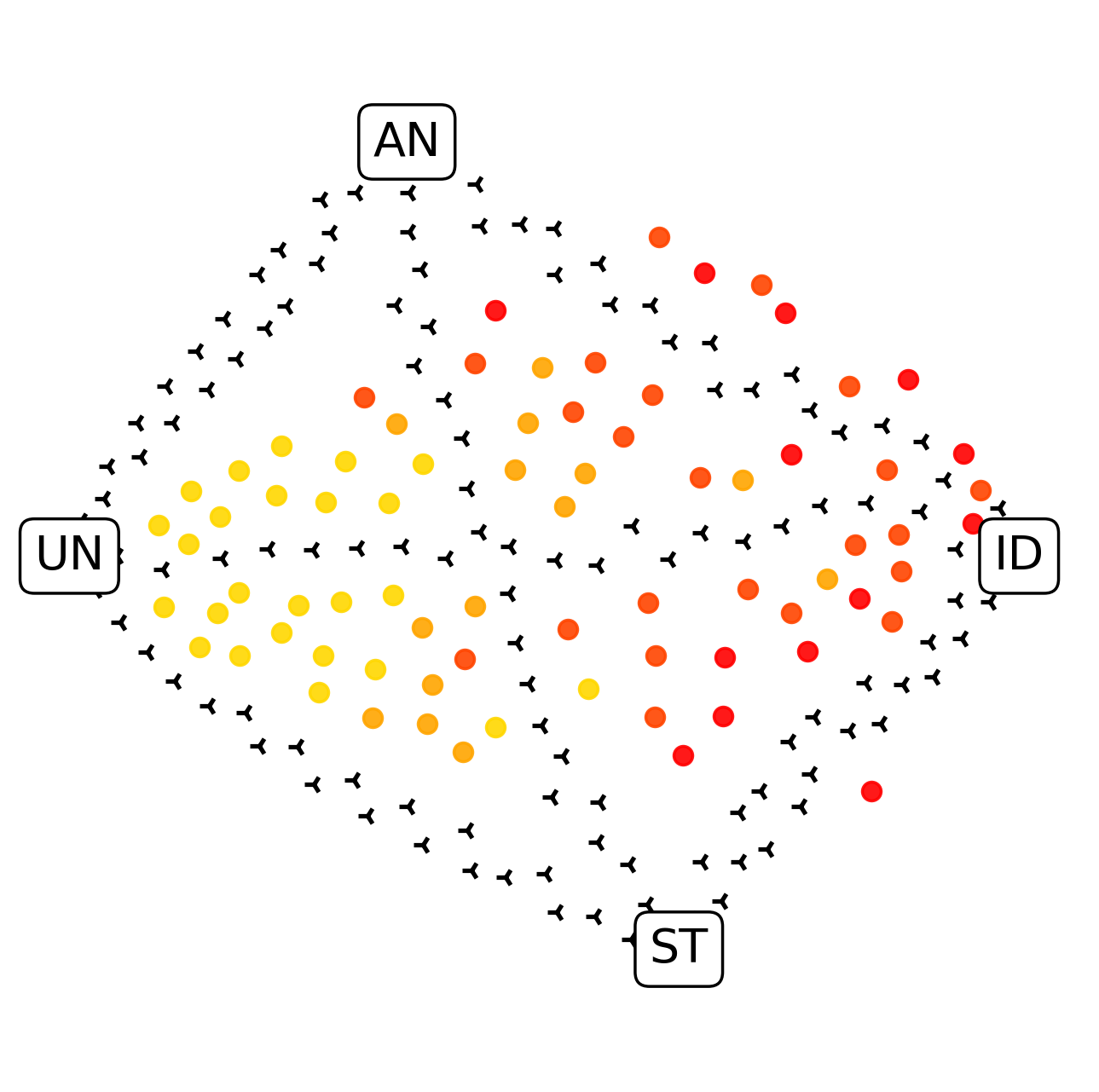}
        \caption{$10 \times 100$}
    \end{subfigure}
     \begin{subfigure}[b]{0.32\textwidth}
        \centering
        \includegraphics[width=4.cm, trim={0.2cm 0.2cm 0.2cm 0.2cm}, clip]{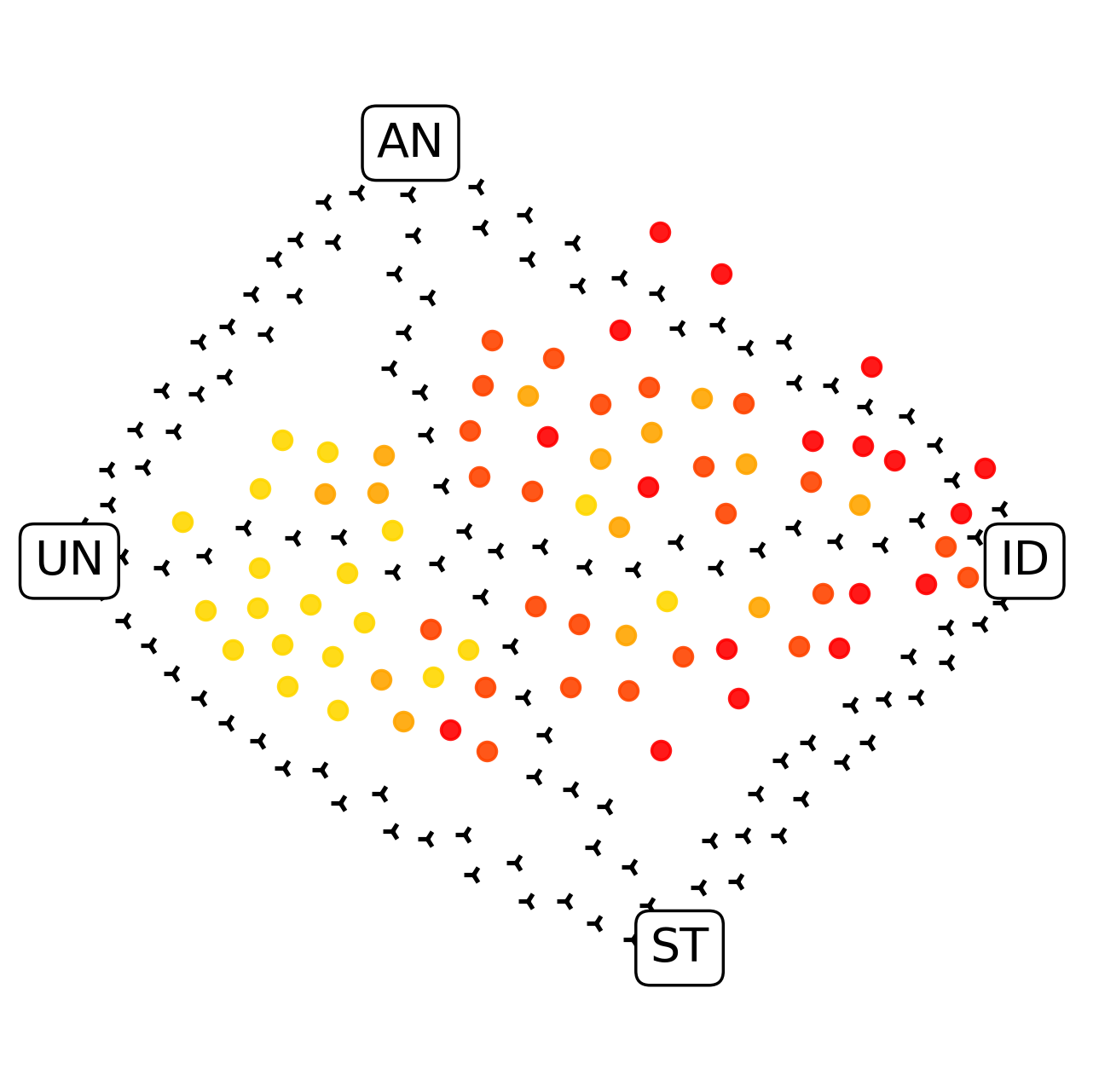}
        \caption{$20 \times 100$}
    \end{subfigure}     
    \begin{subfigure}[b]{0.32\textwidth}
        \centering
        \includegraphics[width=4.cm, trim={0.2cm 0.2cm 0.2cm 0.2cm}, clip]{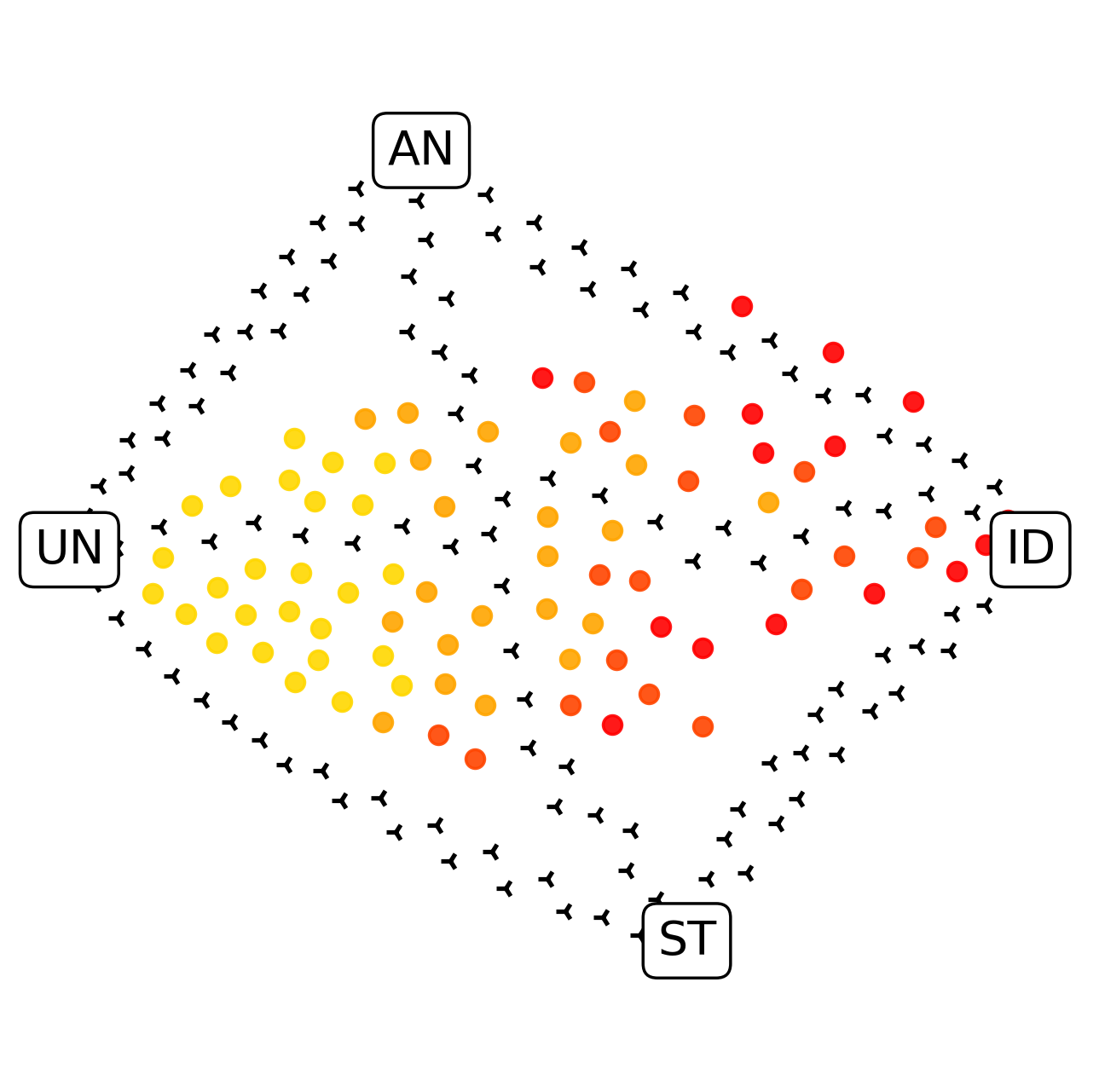}
        \caption{$40 \times 100$}
    \end{subfigure}

    \caption{Scalability of the urn model;~$\alpha$~follows the Gamma distribution.}
    \label{fig:urn_scalability}
\end{figure}

Interestingly, given certain parameters~$\alpha$, the urn model behaves the same no matter how many candidates we have. In \Cref{fig:urn_scalability} we present the comparison of the urn model elections with~$100$ voters and~$10$,~$20$, and~$40$ candidates. We generated six paths between the compass points,~$20$ points each, and~$80$ elections from the urn model, where~$\alpha$ was sampled from the~$Gamma(0.8,1)$ distribution.\footnote{It is far from obvious how to sensibly sample the~$\alpha$ parameter. Firstly, its domain is unbounded on one side. Secondly, the larger are the~$\alpha$ values, the smaller are the differences between elections generated using them. For example, changing~$\alpha$ from~$0.1$ to~$0.3$ is influencing the result far more than changing it from~$2.1$ to~$2.3$. This means that we would like to have a decreasing probability density function. Taking the above into consideration, we suggest using the \emph{Gamma} distribution. A particular selection of parameters (that is, \emph{shape} equal to 0.8, and \emph{scale} equal to 1) produces the outcome where more or less half of the urn elections lie closer to~$\ID$ and the other half lie closer to~$\UN$.} 
As we can see, all three maps are similar to each other. 


\paragraph{Mallows Model}
\begin{figure}[ht]
    \centering

     \begin{subfigure}[b]{0.32\textwidth}
        \centering
        \includegraphics[width=4.cm, trim={0.2cm 0.2cm 0.2cm 0.2cm}, clip]{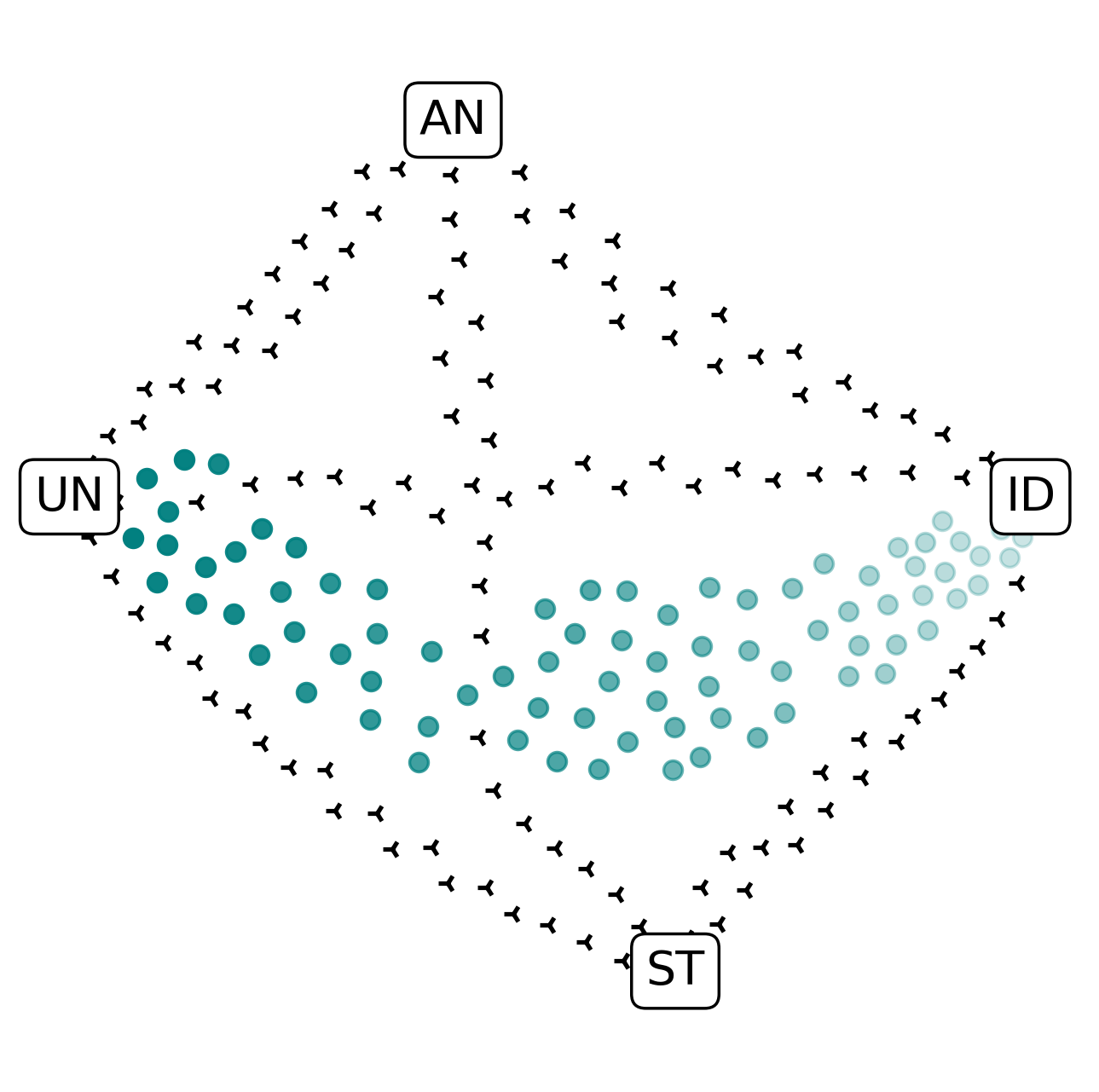}
    \end{subfigure}
     \begin{subfigure}[b]{0.32\textwidth}
        \centering
        \includegraphics[width=4.cm, trim={0.2cm 0.2cm 0.2cm 0.2cm}, clip]{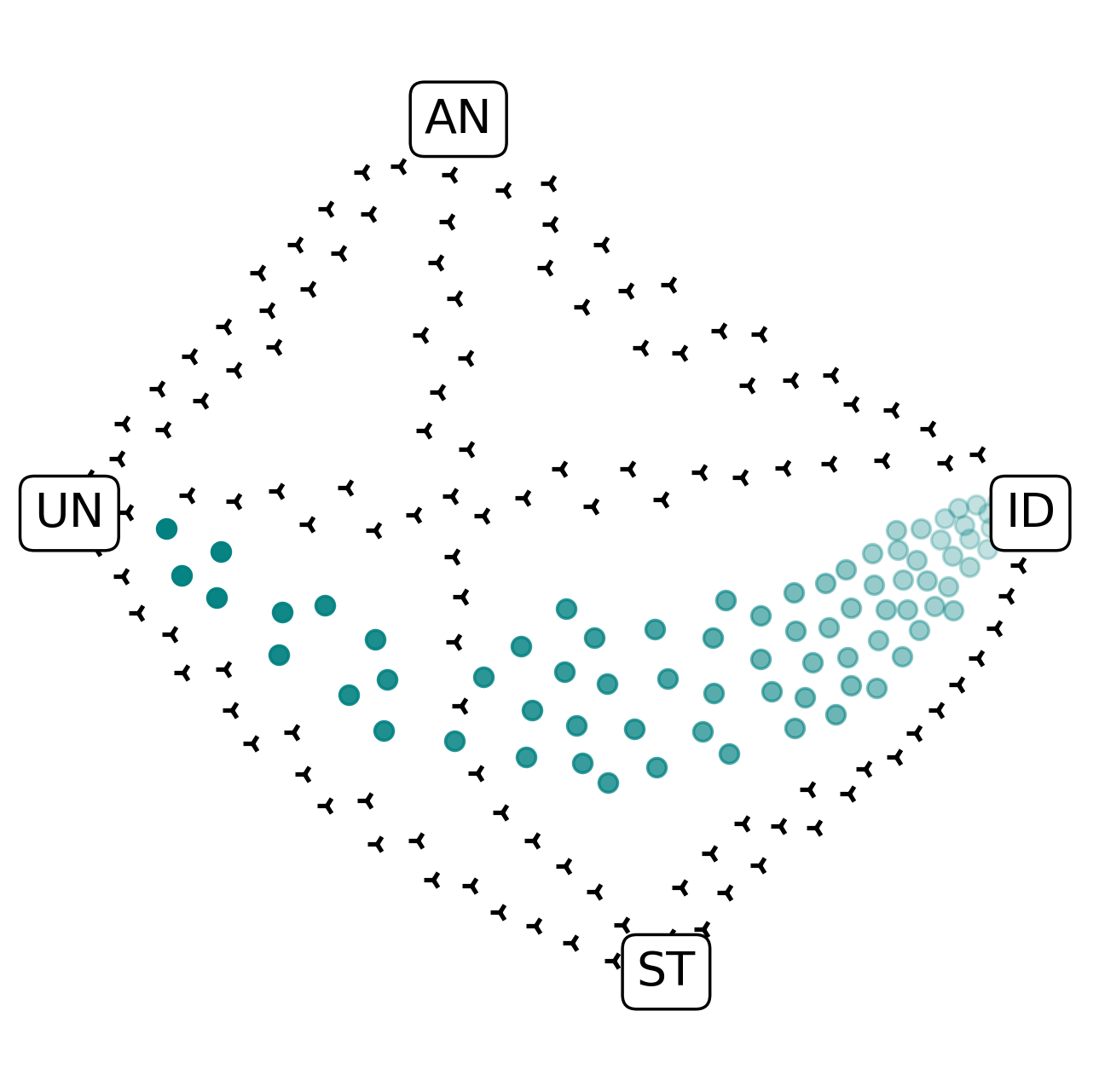}
    \end{subfigure}     
    \begin{subfigure}[b]{0.32\textwidth}
        \centering
        \includegraphics[width=4.cm, trim={0.2cm 0.2cm 0.2cm 0.2cm}, clip]{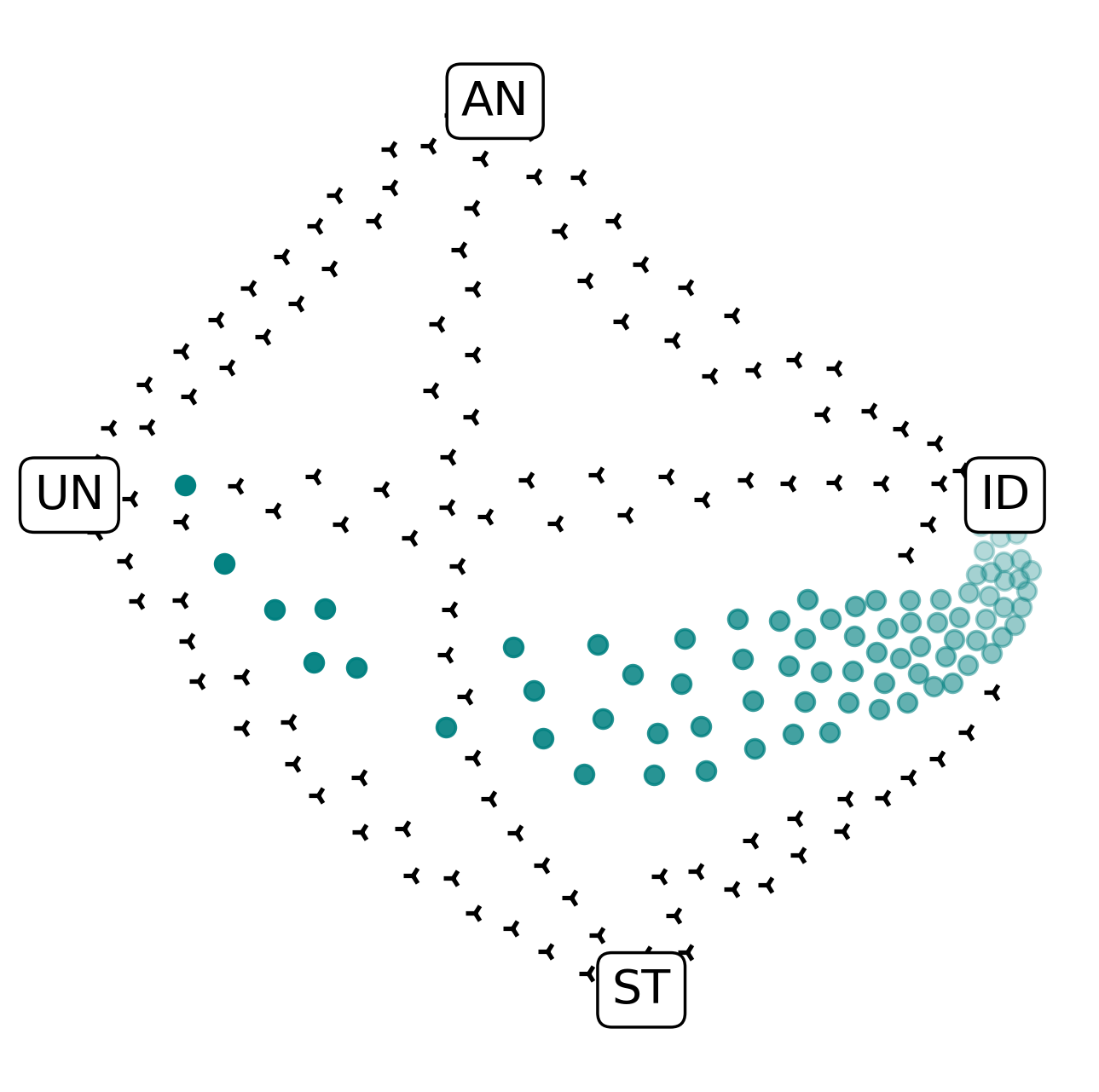}
    \end{subfigure}
    
    \begin{subfigure}[b]{0.32\textwidth}
        \centering
        \includegraphics[width=4.cm, trim={0.2cm 0.2cm 0.2cm 0.2cm}, clip]{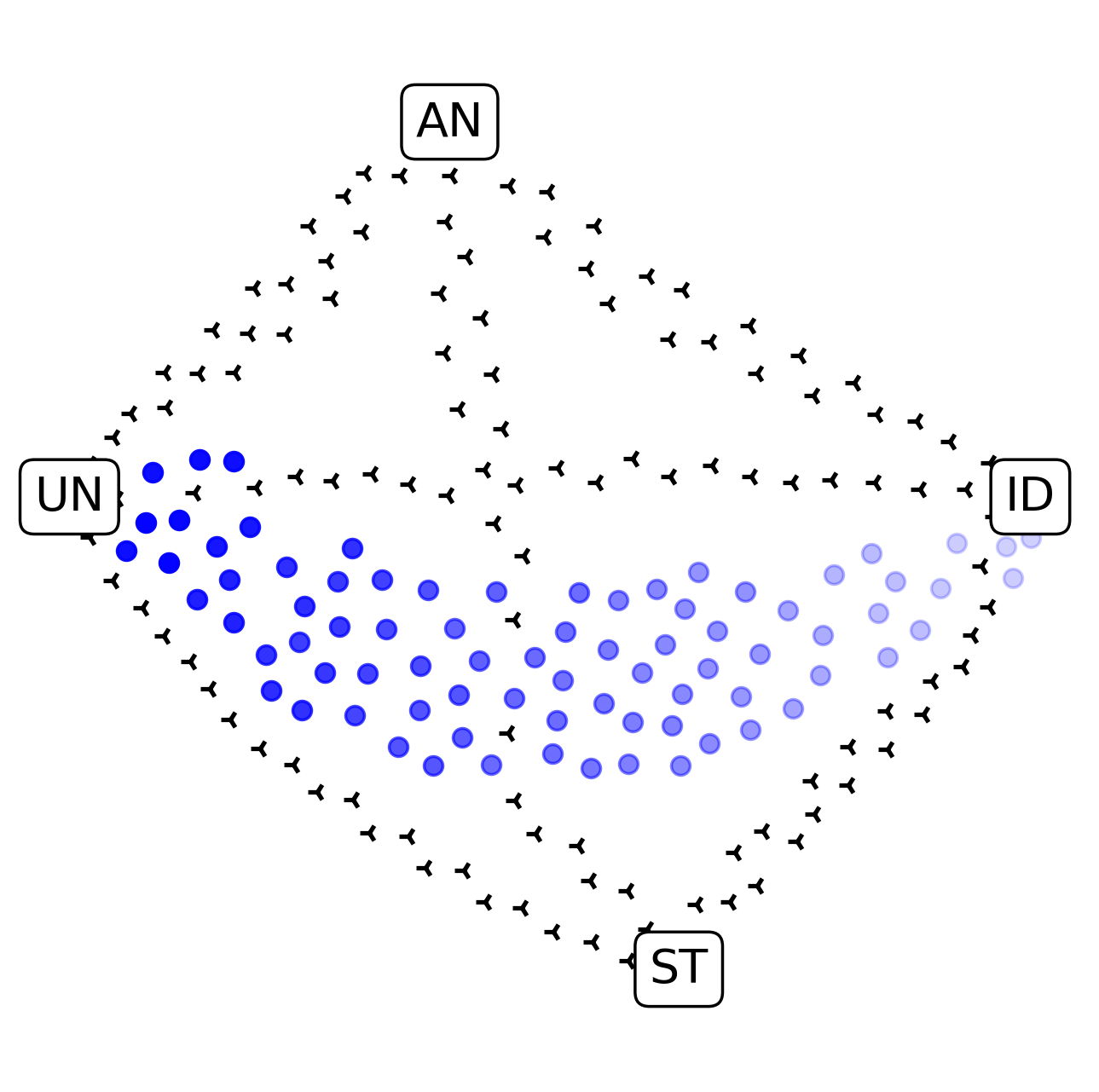}
        \caption{$10 \times 100$}
    \end{subfigure}
     \begin{subfigure}[b]{0.32\textwidth}
        \centering
        \includegraphics[width=4.cm, trim={0.2cm 0.2cm 0.2cm 0.2cm}, clip]{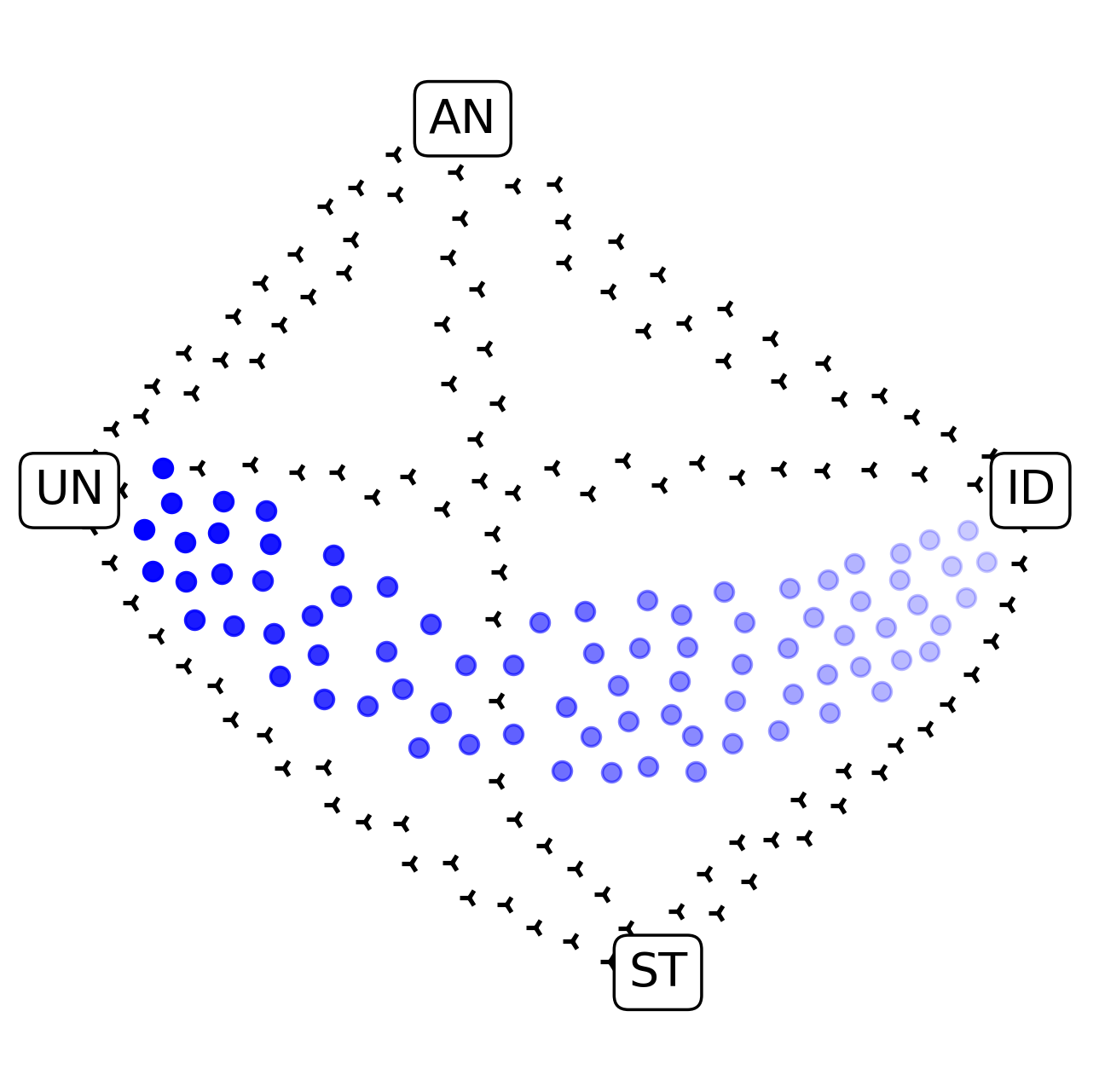}
        \caption{$20 \times 100$}
    \end{subfigure}     
    \begin{subfigure}[b]{0.32\textwidth}
        \centering
        \includegraphics[width=4.cm, trim={0.2cm 0.2cm 0.2cm 0.2cm}, clip]{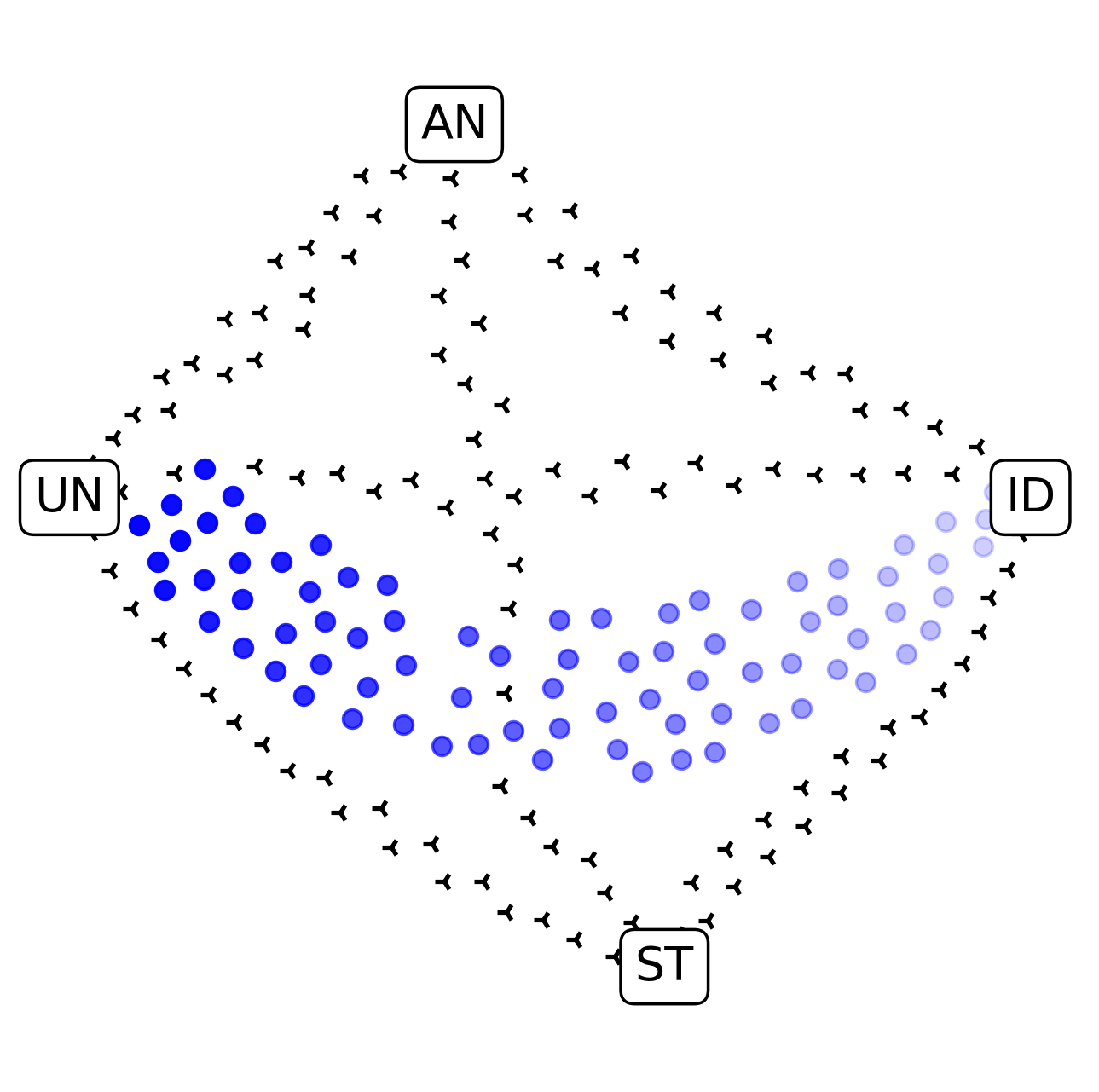}
        \caption{$40 \times 100$}
    \end{subfigure}

    \caption{Scalability of the Mallows model. Teal points in the upper row represent elections from the Mallows model, while blue points in the lower row represent elections from the Normalized Mallows model;~$\phi$ and~$\normphi$ follow the uniform distribution.}
    \label{fig:mallows_scalability}
\end{figure}

Unfortunately, for the standard Mallows model (i.e., not normalized one), given a certain~$\phi$ parameter, the more candidates we have, the closer our elections are to the identity ones. It means that when comparing the results of some experiment with a fixed~$\phi$ and different numbers of candidates, we might get a false impression of some phenomena. Fortunately, \cite{boe-bre-fal-nie-szu:c:compass} propose a useful way of normalizing the~$\phi$ parameter by the number of swaps---whereby, if we use the normalized parameter, we maintain the same position between uniformity and identity even if we change the number of candidates.
In \Cref{fig:mallows_scalability} we present a comparison of the standard Mallows model and its normalized variant, for elections with~$100$ voters and~$10$,~$20$, and~$40$ candidates. As before, we generated six paths between the compass points,~$20$ points each, and~$80$ elections from the Mallows (Norm-Mallows) model, where~$\phi$ ($\normphi$) was sampled from the uniform distribution. In the upper row, we show the results for the Mallows model, and in the lower row for the Norm-Mallows model. For the Norm-Mallows model, all three pictures look more or less the same, as was the case for the urn model. However, for elections from the standard Mallows model, we see that the more candidates we have, the more are the points shifted toward the identity.

\vspace{0.2cm}
\begin{conclusionbox}
\begin{itemize}
    \item Depending on the context, all three embedding algorithms (i.e., FR, KK, and MDS) might be a reasonable choice, with FR generating the most scattered maps, and MDS the least. Nonetheless, with regard to monotonicity and distortion criteria, KK performs best.
    \item Map of elections framework is not recommended for maps with few candidates.
    \item One has to be careful when comparing elections of different sizes (in particular, when comparing elections with different numbers of candidates). 
\end{itemize}
\end{conclusionbox}

\section{Voting Rules}\label{ch:applications:sec:rules}
To demonstrate the usefulness of our map framework, we show several practical applications related to analysis of voting rules. We use the same elections as described in~\Cref{tab:embed_setup} with~$100$ candidates and~$100$ voters. First we focus on scores obtained by winning candidates for single-winner rules, and winning committees for multiwinner rules. Then we look more closely at the running time of ILP-based algorithms for selected NP-hard rules. Finally, we focus on several approximation methods used to approximate the highest score for the Chamberlin--Courant and Harmonic-Borda voting rules.

\subsection{Score}\label{ch:applications:sec:score}
In this section we present the behavior of various voting rules on the map. We start with single-winner voting rules such as the Plurality, Borda, Copeland, and Dodgson. Next, we consider the following two multiwinner voting rules: Chamberlin--Courant and Harmonic-Borda. 

\subsubsection{Single-winner Voting Rules}
All rules that we discuss work similarly, that is, we compute a score for each candidate and then the candidate with the highest (or lowest) score wins the election:

\begin{description}
    \item[Plurality.] Each voter assigns one point to his or her favorite candidate. The candidate with the highest score wins.
    \item[Borda.] Each voter assigns~$m-1$ points to his or her favorite candidate,~$m-2$ points to his or her second favorite candidate, and so on.  The candidate with the highest score wins.
    \item[Copeland.] We examine all pairs of candidates. In each pair, the candidate who is preferred by more than half of the voters gets a point. In case of a draw, both candidates receive half a point. The candidate with the highest score wins.
    \item[Dodgson.] Condorcet winner is a candidate who, when compared one-to-one with every other candidate, is preferred by more than half of the voters. For each candidate, we check what is the minimum number of swaps of adjacent candidates in the votes needed to make him or her the Condorcet winner. The candidate for whom the value is the lowest wins.
\end{description}

Note that if a Condorcet winner exists, he or she will always be selected by the Copeland and Dodgson rules. A rule that always elects the Condorcet winner when one exists is called a Condorcet Extension.

\begin{example}
  Consider an election~$E = (C,V)$, where~$C = \{a,b,c,d,e\}$,~$V = (v_1,v_2,v_3,v_4,v_5)$, and the votes are as follows:
  \begin{align*}
    \small
    v_1\colon&  a \pref c \pref b \pref d \pref e, \\
    v_2\colon&  a \pref c \pref b \pref e \pref d, \\
    v_3\colon&  d \pref e \pref b \pref c \pref a, \\
    v_4\colon&  b \pref e \pref d \pref c \pref a, \\
    v_5\colon&  c \pref b \pref e \pref d \pref a.
 \end{align*}
  
According to the plurality rule,~$a$ is the winner having score of~$2$, while all the other candidates have score of~$0$ or~$1$. According to the Borda rule,~$b$ is the winner having~$13$ points, followed by~$c$ with~$12$ points. Then we have~$e$ with~$9$ points, and in the end there are~$a$ and~$d$ having~$8$ points each. Under the Copeland rule,~$a$ has~$0$ points (losing all duels),~$c$ has~$4$ points (winning all duels),~$b$ has~$3$ points,~$d$ has~$1$ point, and~$e$ has~$2$ points. Clearly,~$c$ is selected as the winner. Moreover, note that~$c$ is a Condorcet winner as well. Hence, he or she will also win under the Dodgson rule, having the lowest score of 0 (i.e., no swap is needed to make him or her a Condorcet winner).
\end{example}

For each of the elections in our map, we computed a winning candidate and his or her score. For Plurality, Borda, and Copeland, the higher, the better, while for Dodgson the lower, the better. Computing the Plurality, Borda, and Copeland scores is straightforward. To compute the Dodgson score, we used the ILP proposed by~\cite{bar-tov-tri:j:who-won}.
We present the results in \Cref{fig:score_single_winner}, where the color of each point corresponds to the highest (lowest for Dodgson) score obtained by the winning candidate.

\begin{figure}[]
    \centering
    
    \begin{subfigure}[b]{0.49\textwidth}
        \centering
        \includegraphics[width=6.5cm, trim={0.2cm 0.2cm 0.2cm 0.2cm}, clip]
        {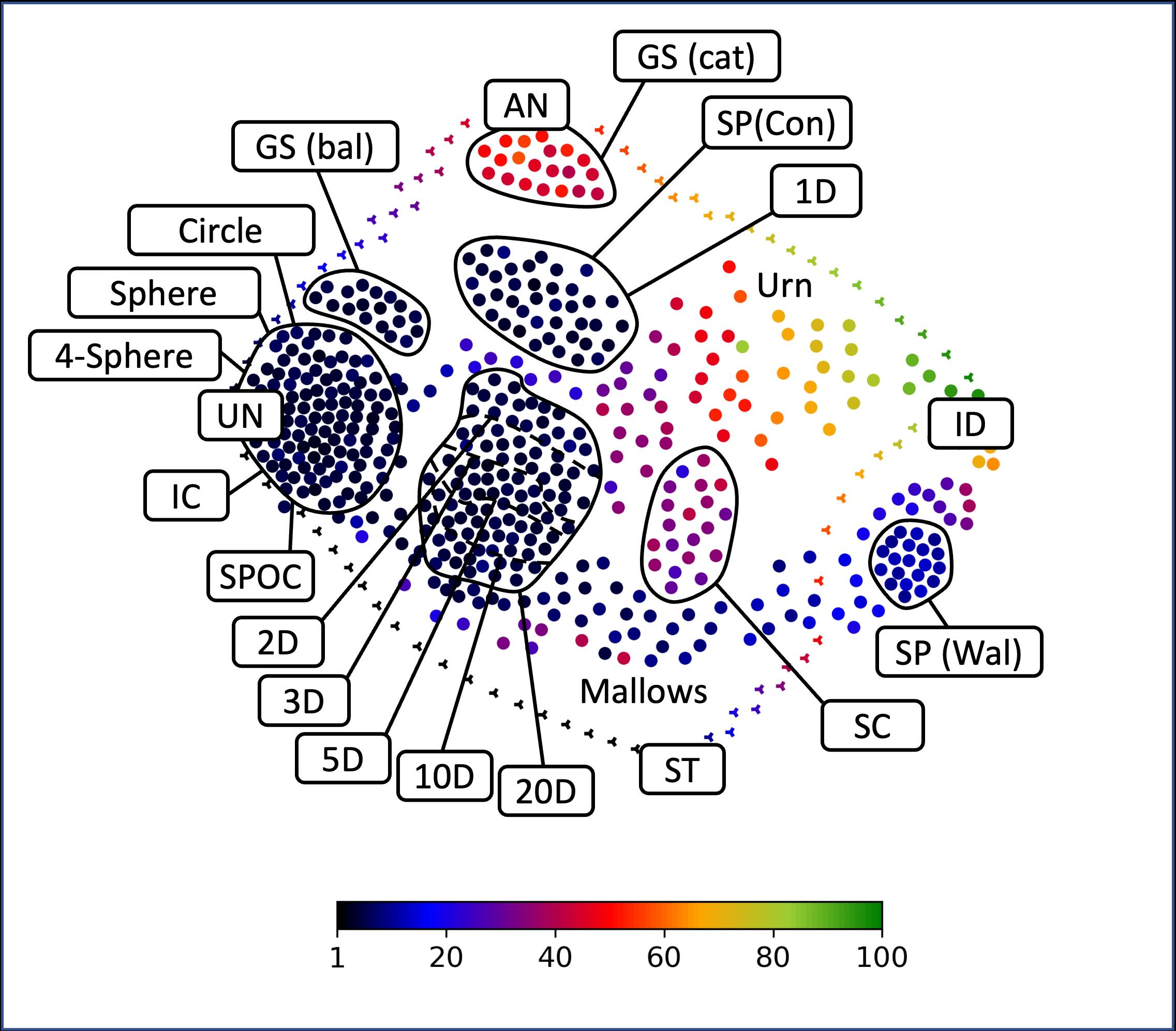}
        \caption{Highest Plurality score}
    \end{subfigure}
    \begin{subfigure}[b]{0.49\textwidth}
        \centering
        \includegraphics[width=6.5cm, trim={0.2cm 0.2cm 0.2cm 0.2cm}, clip]
        {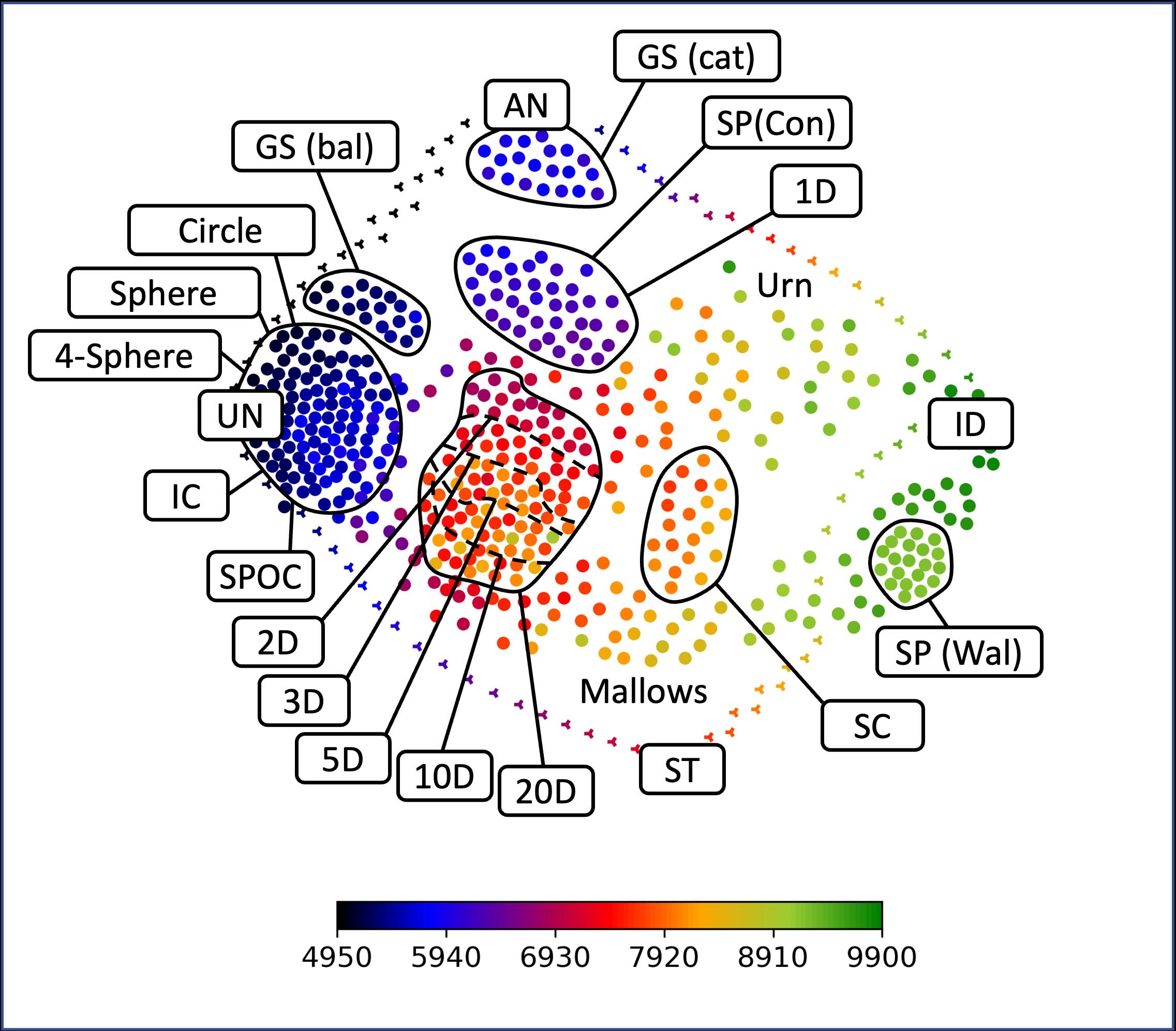}
        \caption{Highest Borda score}
    \end{subfigure}

    \vspace{1em}
    \begin{subfigure}[b]{0.49\textwidth}
        \centering
        \includegraphics[width=6.5cm, trim={0.2cm 0.2cm 0.2cm 0.2cm}, clip]
        {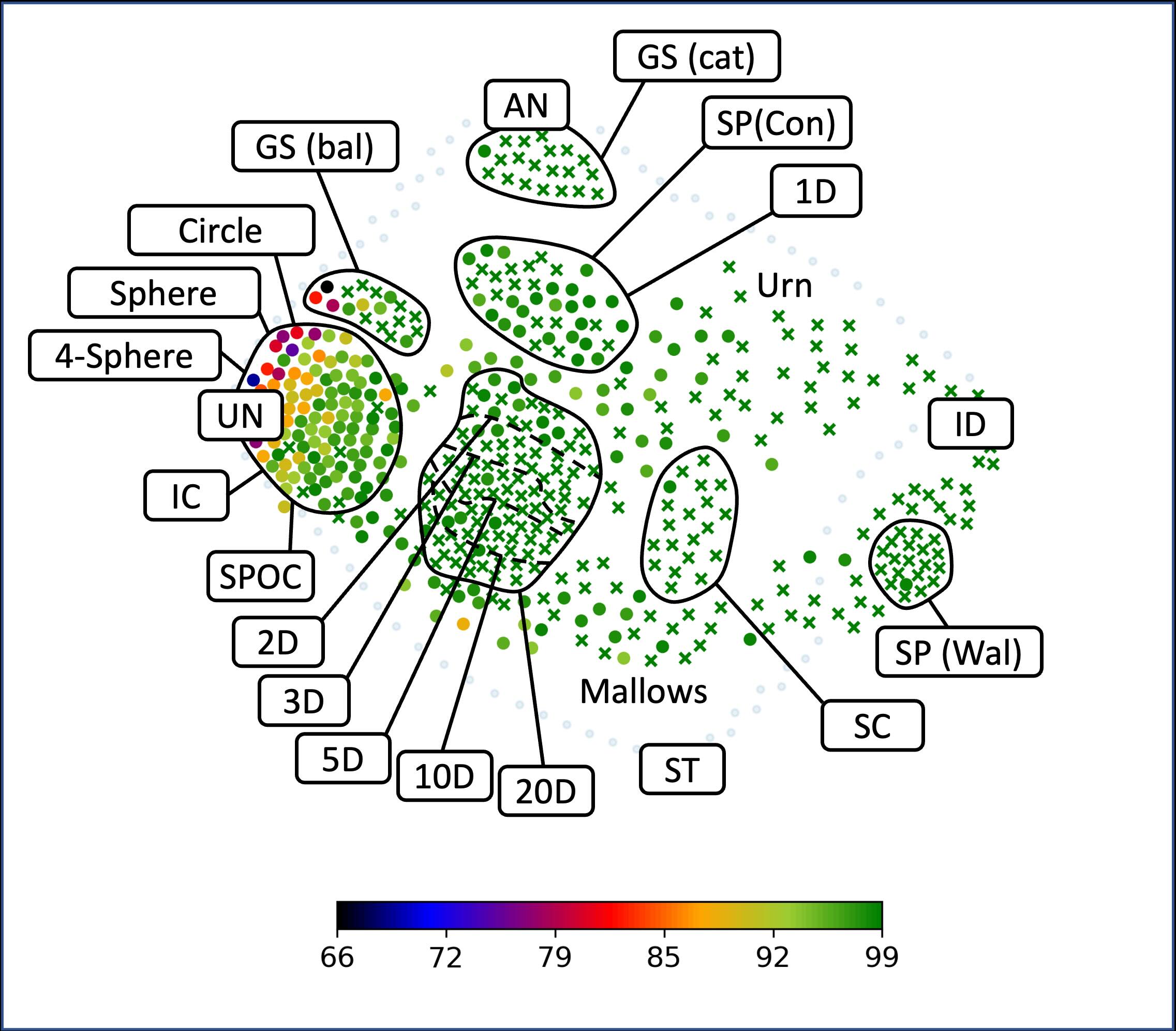}
        \caption{Highest Copeland score}
    \end{subfigure}
    \begin{subfigure}[b]{0.49\textwidth}
        \centering
        \includegraphics[width=6.5cm, trim={0.2cm 0.2cm 0.2cm 0.2cm}, clip]
        {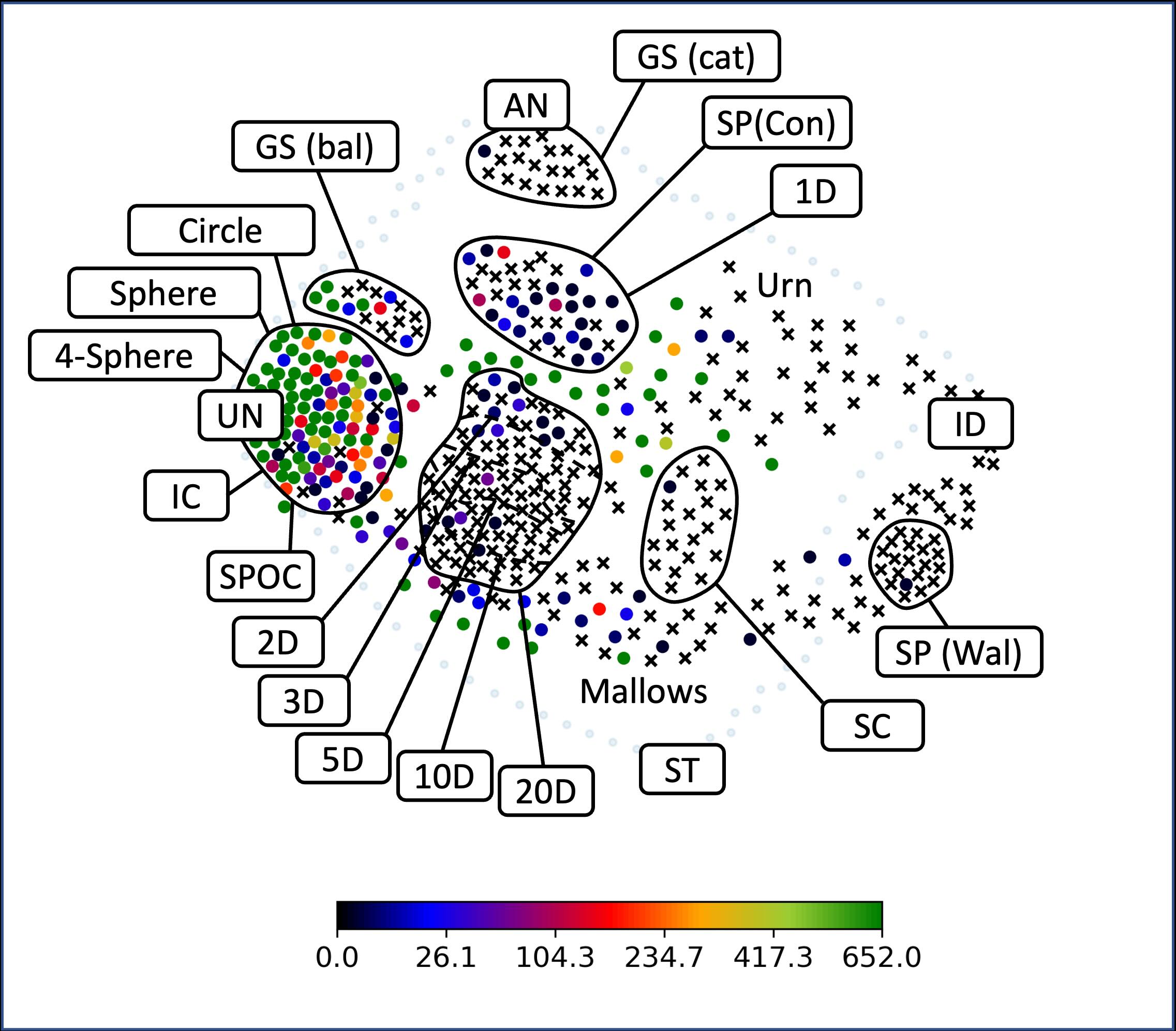}
        \caption{Lowest Dodgson score}
    \end{subfigure}
    
    \caption{Maps colored according to the score obtained by the winner.}
    \label{fig:score_single_winner}
\end{figure}


For Plurality, half of the elections have the highest score of six or less---which we regard as very low. Not many models witness higher values; nonetheless, let us have a closer look at them. For caterpillar group-separable elections, the highest score is around~$48$ (with more voters, or more elections, the average should converge to~$50$); for elections from the Walsh model, the highest score is around~$12$; for single-crossing elections it is around~$35$. For the Norm-Mallows model it is strongly correlated with the~$\normphi$ parameter, hence we see shading from UN to ID, but still most of the points are dark. Only for the urn elections we witness numerous elections with high score values, which is not surprising, because for elections from the urn model the highest plurality score will be similar to the size of the largest group of (identical) voters. Sometimes it might be slightly higher if two or more groups have the same favorite candidate.

As to the Borda coloring, we observe much smoother shading than for Plurality, because, for Plurality we ignore all the voters' preferences, but first choices, while for Borda we care about each position in the votes. Moreover, Borda score nicely correlates with position on the map. If we move closer to ID the highest score is increasing, and if we move towards the UN the score is decreasing.

Note that for Plurality and Borda maps, the paths are colored as well, because to compute Plurality or Borda score it suffices to have the position matrix (or frequency matrix and the number of voters). As to the Copeland and Dodgson scores, we have left the paths' points uncolored, because the position matrix is not sufficient to determine the score.

Last observation regarding the Borda score is about the \emph{Borda balance} area (mentioned previously in~\Cref{sec:iso_maps_of_elections}). We observe that in the left upper part of the map most elections have very low highest Borda score (close to the lowest possible). However, if the highest Borda score is close to the lowest possible Borda score, it means that most candidates need to have very similar scores. To emphasize this, we present one additional map in~\Cref{fig:borda_spread} where the color of each point corresponds to the difference between the highest and the lowest Borda score in a given election. As expected, the closer we are to the upper left part of the map, the smaller the difference.

\begin{figure}[]
    \centering
    \includegraphics[width=6.5cm, trim={0.2cm 0.2cm 0.2cm 0.2cm}, clip]{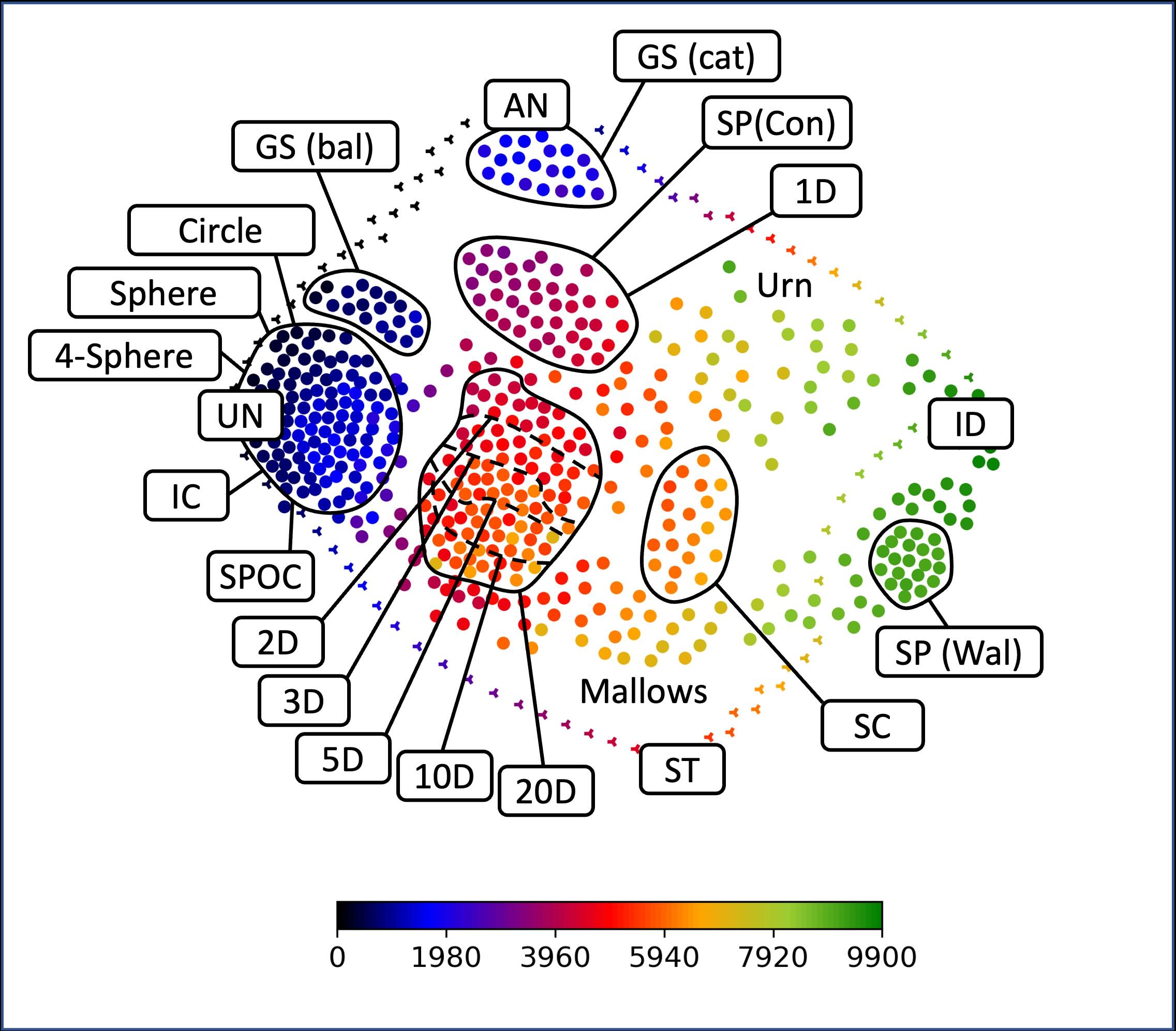}
    \caption{Difference between the highest and the lowest Borda score.}
    \label{fig:borda_spread}
\end{figure}

For the Copeland and Dodgson maps, with crosses, we mark elections that have a Condorcet winner. In our map~$55\%$ (264 out of 480) of generated elections have a Condorcet winner.\footnote{Elections that are single-peaked, single-crossing, or group-separable, for an uneven number of voters, always have a Condorcet winner, and for an even number of voters, always have a weak Condorcet winner (or winners).} In particular, almost all the single-crossing, Walsh, caterpillar group-separable, 3-Cube, 5-Cube, 10-Cube, 20-Cube, and around half of Interval, Square, Conitzer, balanced group-separable elections have a Condorcet winner. Even one IC elections has a Condorcet winner. In terms of the urn and Norm-Mallows elections, having a Condorcet winner is strongly related to their parameters.

The lowest Copeland score is witnessed by a balanced group-separable election. In general, the vast majority of lowest values are obtained by Circle, Sphere and~$4$-Sphere and some SPOC and balanced group-separable elections. Note that all values are much larger than the lowest possible value, which is~$49.5$ with everyone having exactly the same Copeland score. Such an election can be achieved by, for example, taking the identity election and reversing half of the votes.


The highest (i.e., the worst) Dodgson score is witnessed by an urn election. In general, the highest scores are obtained by the urn elections and some balanced group-separable, Circle, Sphere and~$4$-Sphere, SPOC elections.


\begin{figure}[]
    \centering
    
    \begin{subfigure}[b]{0.49\textwidth}
        \centering
        \includegraphics[width=6.5cm, trim={0.2cm 0.2cm 0.2cm 0.2cm}, clip]{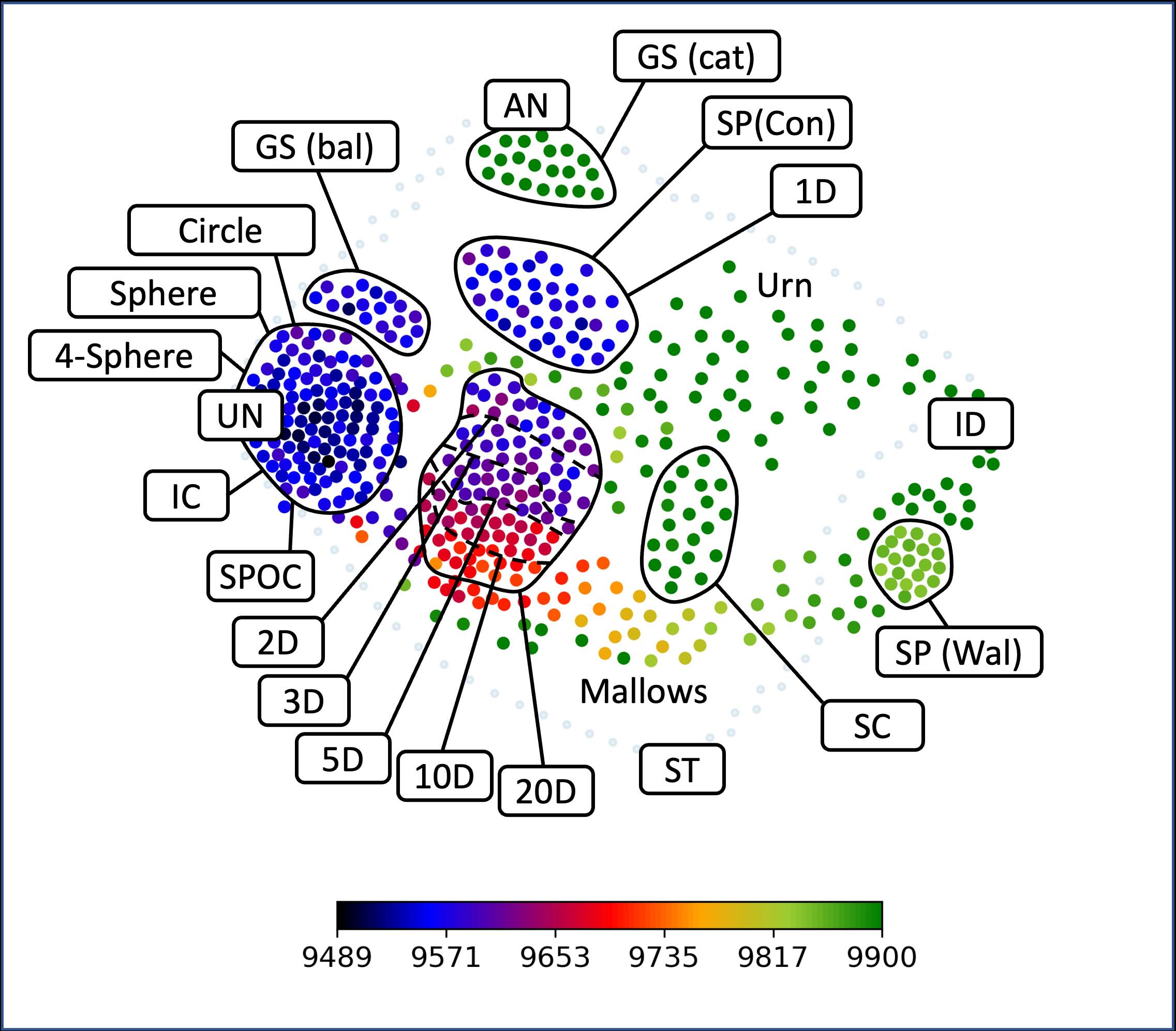}
        \caption{Highest CC score}
    \end{subfigure}
    \begin{subfigure}[b]{0.49\textwidth}
        \centering
        \includegraphics[width=6.5cm, trim={0.2cm 0.2cm 0.2cm 0.2cm}, clip]
        {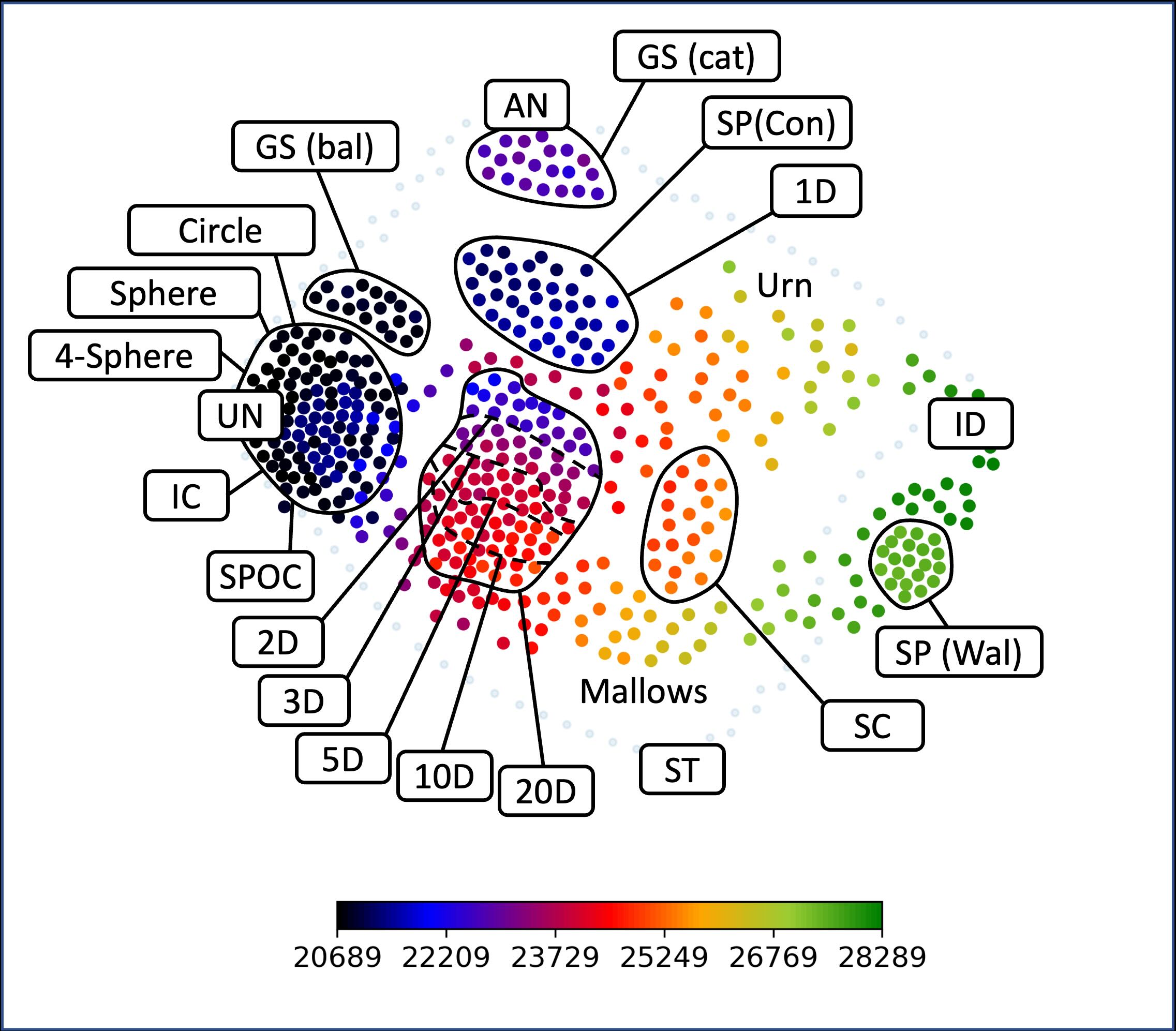}
        \caption{Highest HB score}
    \end{subfigure}
    
    \caption{Maps colored according to the score of the winning committee.}
    \label{fig:score_multi_winner}
\end{figure}

\subsubsection{Multiwinner Rules}

We consider two multiwinner voting rules, Harmonic-Borda (HB) and Chamberlin--Courant (CC).
We start by defining HB.
Given an election~$E=(C,V)$ and committee
size~$k$, the rule outputs a set of~$k$ candidates, referred to as the
\emph{winning committee}. It chooses this committee as follows:
Consider a committee~$S$, a voter~$v$, and denote by~$p_1, \ldots, p_k$ the positions 
of the members of~$S$, sorted from the smallest (most preferred) to the largest
(e.g., for a vote~$v \colon c_2 \pref c_3 \pref c_1$ and committee~$S =
\{c_1, c_3\}$, we would have~$p_1 = 2$,~$p_2 = 3$).  
Then the
\emph{satisfaction} of~$v$ is~$\sum_{i \in [k]}
\nicefrac{(m-p_i)}{i}$; it captures the notion of how a given voter is satisfied with a given committee. HB selects a committee~$S$ that maximizes the
sum of the voters' satisfaction values. 
CC is similar to HB, but simpler. Under CC, each voter gives points only to his or her favorite candidate. Formally, the \emph{satisfaction} of~$v$ is~$m-p_1$.

\begin{example}
  Consider an election~$E = (C,V)$, where~$C = \{a,b,c,d\}$, $V = (v_1,v_2,v_3,v_4)$, and the votes are:
  \begin{align*}
    \small
    v_1\colon&  a \pref b \pref c \pref d \\
    v_2\colon&  a \pref b \pref c \pref d, \\
    v_3\colon&  a \pref b \pref c \pref d, \\
    v_4\colon&  d \pref c \pref b \pref a.
 \end{align*}
According to the CC rule,~$\{a,d\}$ is the winning committee with score of~$12$ (which is the largest possible score for elections with~$4$ candidates and~$4$ voters).
Under the HB rule,~$\{a, b\}$ is the winning committee with score of~$13$, with first three voters giving~$4$ points each ($3\cdot1$ for~$a$ and~$2\cdot\frac{1}{2}$ for~$b$), so~$12$ in total, and the last voter giving~$1$ point  ($1\cdot1$ for~$b$ and~$0\cdot\frac{1}{2}$ for~$a$)

\end{example}

Rules such as CC or HB have received quite some attention from the research
community (for more details, see, e.g., the chapter of~\cite{fal-sko-sli-tal:b:multiwinner-voting}). Both CC and HB are OWA-based~\citep{lang2018multi} committee scoring rules~\citep{fal-sko-sli-tal-tal:c:hierarchy-committee}).

Unfortunately, identifying a winning committee under CC is NP-hard~\citep{procaccia2012maximum, lu2011budgeted, bet-sli-uhl:j:mon-cc} and the same is true for HB~\citep{fal-sko-sli-tal:c:paths}, but we can try to overcome
this issue, for example, by formulating the problem as an integer linear program (ILP) and solving it with an
off-the-shelf ILP solver, or by designing efficient polynomial-time approximation algorithms.
%
We show how our map can be helpful in establishing 
how feasible the ILP approach is (i.e., how quickly can we compute
winning committees).

For each of the elections on our map, we computed a winning
committee of size~$10$ using an ILP solver (CPLEX; we used the ILP
formulation for OWA-based rules of~\cite{sko-fal-lan:j:collective}, applied to the case of HB and CC).

In \Cref{fig:score_multi_winner}, we present the scores obtained by the winning committee under CC (left) and HB (right). In both pictures, we can see nice shading from ID to UN for the Norm-Mallows elections. However, when we look at urn elections, we see a huge difference; for HB the shading of urn elections is similar to that of Norm-Mallows ones, but for CC, the vast majority of urn elections witness a very high score. It is because in the urn model, we have groups of identical voters, so a committee selected by the CC rule will usually satisfy~$k$ largest group of voters, by selecting their top candidates. Another striking difference is the behavior of group-separable caterpillar elections. Under CC, all of them get the highest possible score, while under HB the scores are relatively small.

It is also interesting to consider elections where our rules have the lowest scores of the winning committees.
For CC, most of them are witnessed by the IC elections, while for HB the lowest scores are witnessed by the Circle, Sphere,~$4$-Sphere, and SPOC models, with the IC elections having noticeably higher scores.

\subsection{Running Time}
\begin{figure}[]
    \centering

    \begin{subfigure}[b]{0.49\textwidth}
        \centering
        \includegraphics[width=6.5cm, trim={0.2cm 0.2cm 0.2cm 0.2cm}, clip]{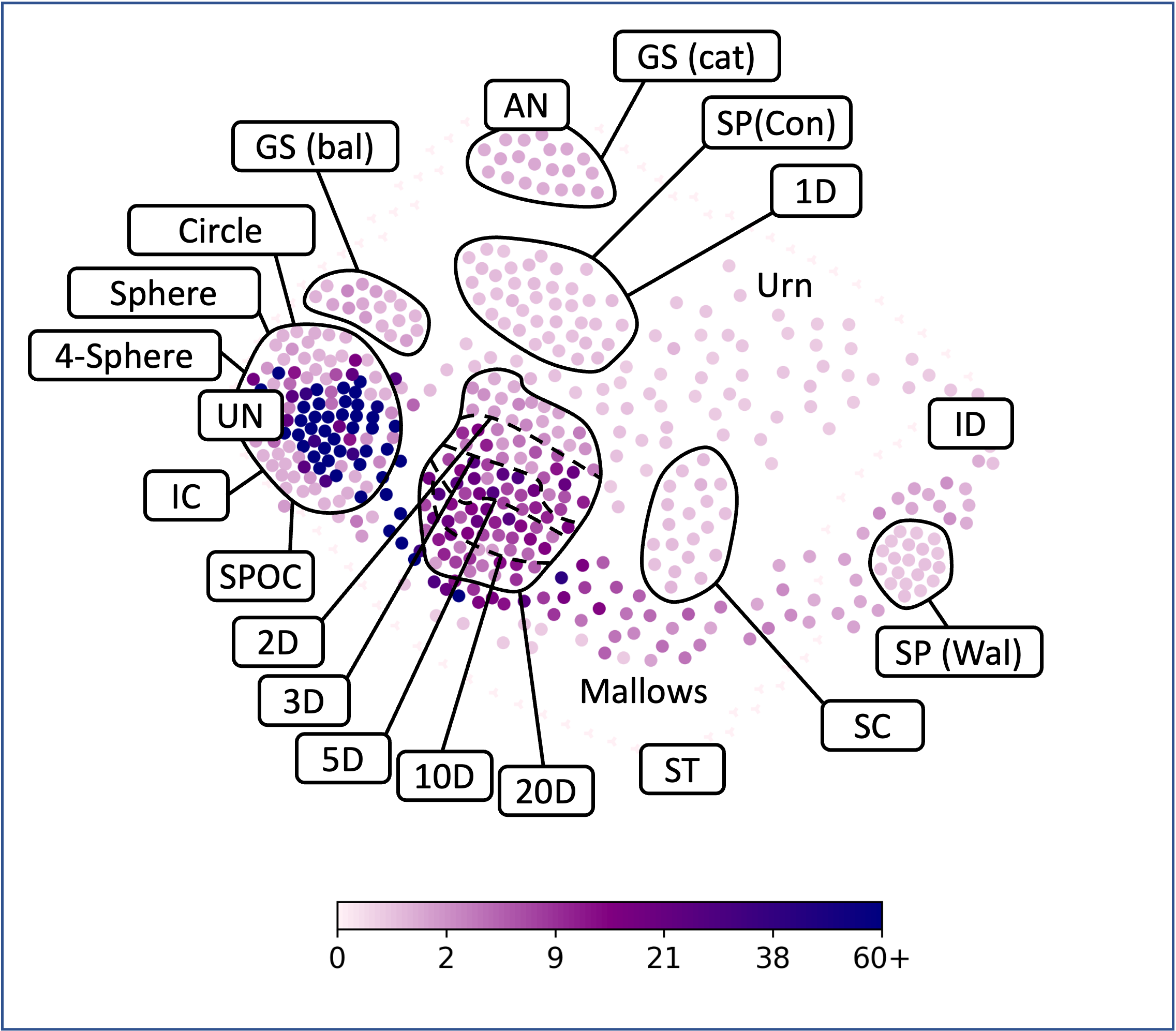}
        \caption{CC running time}
    \end{subfigure}
    \begin{subfigure}[b]{0.49\textwidth}
        \centering
        \includegraphics[width=6.5cm, trim={0.2cm 0.2cm 0.2cm 0.2cm}, clip]{img/time/highest_hb_score_time.png}
        \caption{HB running time}
    \end{subfigure}
    
    \begin{subfigure}[b]{0.49\textwidth}
        \centering
        \includegraphics[width=6.5cm, trim={0.2cm 0.2cm 0.2cm 0.2cm}, clip]{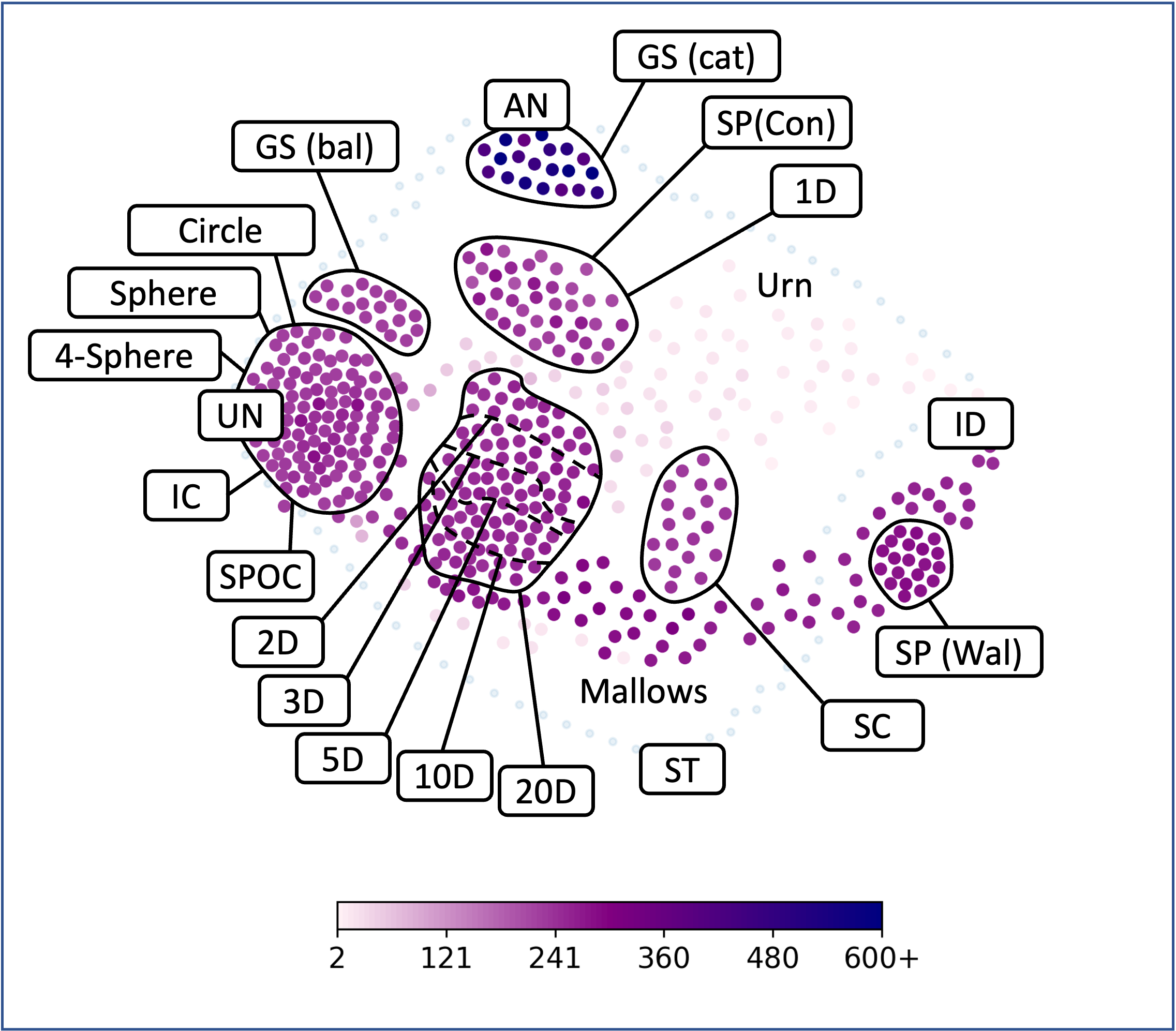}
        \caption{Dodgson running time}
    \end{subfigure}
    
    \caption{ILPs running time (in seconds) for CC, HB, and Dodgson.}
    \label{fig:voting_rules_time}
\end{figure}

\begin{figure}[t]
    \centering
  \begin{subfigure}[b]{0.49\textwidth}
        \centering
        \includegraphics[width=6cm, trim={0.2cm 0.2cm 0.2cm 0.2cm}, clip]{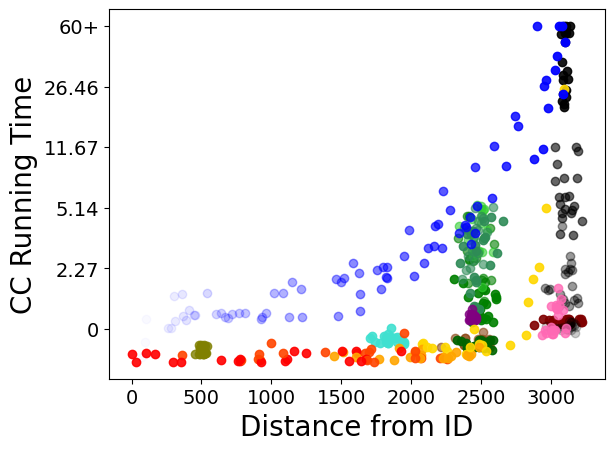}
        \caption{CC}
    \end{subfigure}
    \begin{subfigure}[b]{0.49\textwidth}
        \centering
        \includegraphics[width=6cm, trim={0.2cm 0.2cm 0.2cm 0.2cm}, clip]{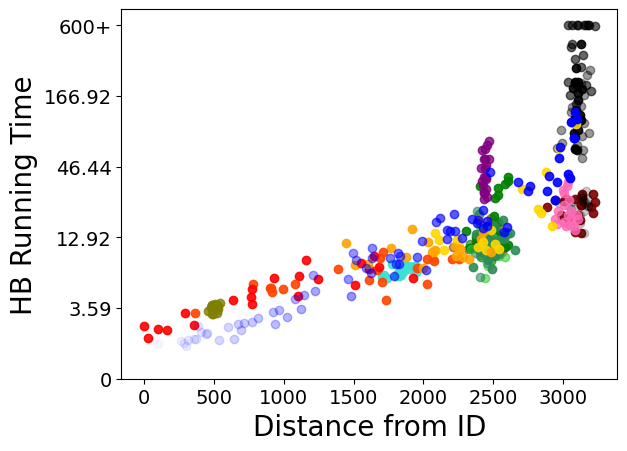}
        \caption{HB}
    \end{subfigure}
    
    \begin{subfigure}[b]{0.49\textwidth}
        \centering
        \includegraphics[width=6cm, trim={0.2cm 0.2cm 0.2cm 0.2cm}, clip]{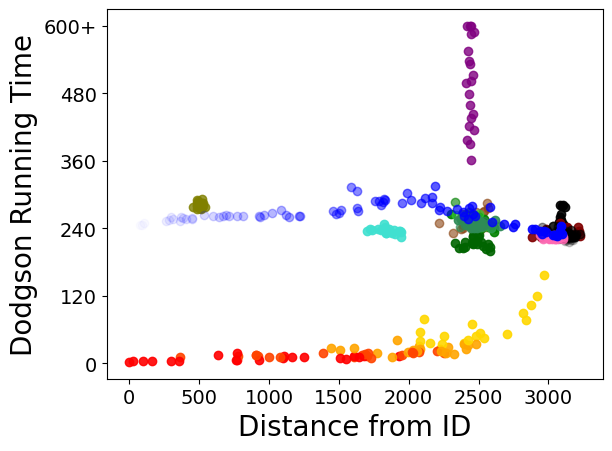}
        \caption{Dodgson}
    \end{subfigure}
    
    \caption{ILPs running time (in seconds) vs distance from ID. Note that, for the Dodgson picture (right) we use the linear scale, while for the HB picture (left) we use logarithmic one.}
    \label{fig:voting_rules_time_vs_id}
\end{figure}

In this part, we analyze the time that is needed to compute the outcomes of the voting rules described above. Four of them, that is, Plurality, Borda, Copeland, the running time relies only on the input size (i.e., numbers of candidates and voters); hence, it is hard to conclude anything interesting. We focus only on Dodgson (single-winner rule), and CC and HB (multiwinner rules).

We report the achieved running times in Figure~\ref{fig:voting_rules_time}, where the colors give the running times (the darker the color, the longer the computation time; for the CC picture we set a limit\footnote{All instances that took longer to compute than the limit are colored with the same color.} at $60$ seconds, and for HB and Dodgson ones we set a limit at~$600$ seconds), and all instances that need longer time have the same color. Moreover, the scale for the HB picture is quadratic.\footnote{We would prefer to have the same scale for all three pictures, however if all three pictures would have linear scale it would have been hard to see anything interesting for CC and HB, and if all three pictures would have quadratic scale it would have been hard to see anything interesting for Dodgson.}



%

We start our analysis with the CC rule by looking at the running time of particular instances. The worst case (i.e., the longest running time) took two minutes to solve, while the simplest one (i.e., the shortest running time) took less than one second. The twenty worst cases were due to impartial culture and Norm-Mallows (with $\normphi$ parameter having close to $1$ value) instances. On the other hand, the simplest ones were those from the urn model.

As to the HB rule, the worst case took five hours to solve, while the simplest one took less than two seconds. Ten worst cases were witnessed by~$4$-Sphere elections---which suggests that it is particularly hard to find optimal winning committee under HB rule for elections from the~$4$-Sphere model.

Perhaps the most visible phenomenon is that the ILP solver needs most
time on the elections similar to those from the impartial culture, and the farther elections we
consider, the less time is needed.


    

\begin{table}
    \centering
    \footnotesize
    \begin{tabular}{c | c | c | c | c | c | c }
     \toprule
        Culture &  \multicolumn{2}{c|}{CC} & \multicolumn{2}{c|}{HB} & \multicolumn{2}{c}{Dodgson} \\
         & avg. time & std. dev. & avg. time & std. dev.  & avg. time & std. dev. \\
    \midrule
    Impartial Culture & 43.8s & 22.8 & 155.7s & 85.5 & 251.4s & 17.2 \\
    \midrule
    Conitzer SP & 0.8s & 0.0 & 12.2s & 2.2 & 252.7s & 14.9 \\
    Walsh SP & 0.7s & 0.0 & 3.6s & 0.2 & 281.1s & 5.1 \\
    SPOC & 1.1s & 0.0 & 22.3s & 4.1 & 230.0s & 8.0 \\
    Single-Crossing & 0.9s & 0.0 & 7.1s & 0.6 & 235.6s & 4.4 \\
    \midrule
    Interval & 0.8s & 0.0 & 11.8s & 1.4 & 213.2s & 7.8 \\
    Square & 1.4s & 0.3 & 20.6s & 9.1 & 249.9s & 7.3 \\
    Cube & 2.7s & 1.0 & 11.4s & 1.8 & 254.1s & 10.6 \\
    5-Cube & 3.7s & 1.0 & 10.8s & 2.0 & 251.0s & 6.7 \\
    10-Cube & 3.5s & 0.9 & 12.0s & 3.4 & 249.3s & 5.2 \\
    20-Cube & 2.8s & 1.0 & 12.1s & 3.0 & 249.7s & 6.0 \\
    \midrule
    Circle & 1.0s & 0.0 & 22.9s & 3.7 & 222.3s & 2.9 \\
    Sphere & 1.9s & 0.6 & 126.8s & 64.9 & 230.5s & 5.0 \\
    4-Sphere & 6.5s & 2.5 & 1614.1s & 3912.1 & 232.3s & 6.0 \\
    \midrule
    Balanced GS & 1.2s & 0.2 & 22.2s & 6.9 & 223.0s & 2.1 \\
    Caterpillar GS & 1.2s & 0.0 & 44.7s & 14.3 & 513.0s & 106.1 \\
    \midrule
    Urn & 1.1s & 2.8 & 11.1s & 13.1 & 30.0s & 35.1 \\
    Norm-Mallows & 8.6s & 17.9 & 17.3s & 24.1 & 264.2s & 19.7 \\
    	\bottomrule

        \end{tabular}
    \caption{\label{table:time_hb_dodgson} Analysis of ILPs running time for CC, HB and Dodgson rules.}
    
\end{table}

For the Dodgson rule the situation is quite different. Most instances need the same amount of time (i.e., around four minutes on average). Two exceptions are the urn elections (which needed half a minute per election on average) and caterpillar group-separable elections (which needed eight and a half minutes per election on average).
In~\Cref{table:time_hb_dodgson} we present average values and standard deviation for each statistical culture that we used. Moreover, in~\Cref{fig:voting_rules_time_vs_id} we show the correlation between the running time and distance from $\ID$. As we can see, for the CC and HB rules all the hardest instances were at almost the largest possible distance from $\ID$, while for the Dodgson rule it is not the case. Although, usually the less structure in the election the longer it takes to compute a given voting rule, sometimes (like, for example, in the case of the Dodgson rule) adding structure to the election might increase the running time of a particular algorithm.


\subsection{Approximation}

We compare four approximation algorithms for the CC multiwinner voting rule.
We refer to them as \textit{SeqCC} (sequential, sometimes also referred to as the greedy variant), \textit{RemovalCC}, \textit{RangingCC}, and \textit{BanzhafCC}. At the second part of this section, we also compare two approximation algorithms for the HB multiwinner voting rule.

Let~$k\in[m]$ be the size of the committee we want to select, where~$m$ is the total number of candidates.
Below we describe these four algorithms.

SeqCC starts with an empty committee 
and works in~$k$ iterations, where in each of them it adds to the
committee a single candidate, so that the resulting committee has as
large total satisfaction as possible. RemovalCC proceeds similarly,
but it starts with a committee containing all candidates and works in~$m-k$ iterations, 
in each of them removing a single candidate, so the
resulting committee has as large total satisfaction as possible.
Both algorithms are
well-known in the literature 
and are used for computing approximate
winning committees under various
voting rules~\citep{sko-fal-lan:j:collective,sko-lac-bri-pet-elk:c:proportional-rankings,fal-lac-pet-tal:c:csr-heuristics}, for example, CC or HB.

\begin{figure}[]
    \centering
    
    \begin{subfigure}[b]{0.49\textwidth}
        \centering
        \includegraphics[width=6.5cm, trim={0.2cm 0.2cm 0.2cm 0.2cm}, clip]{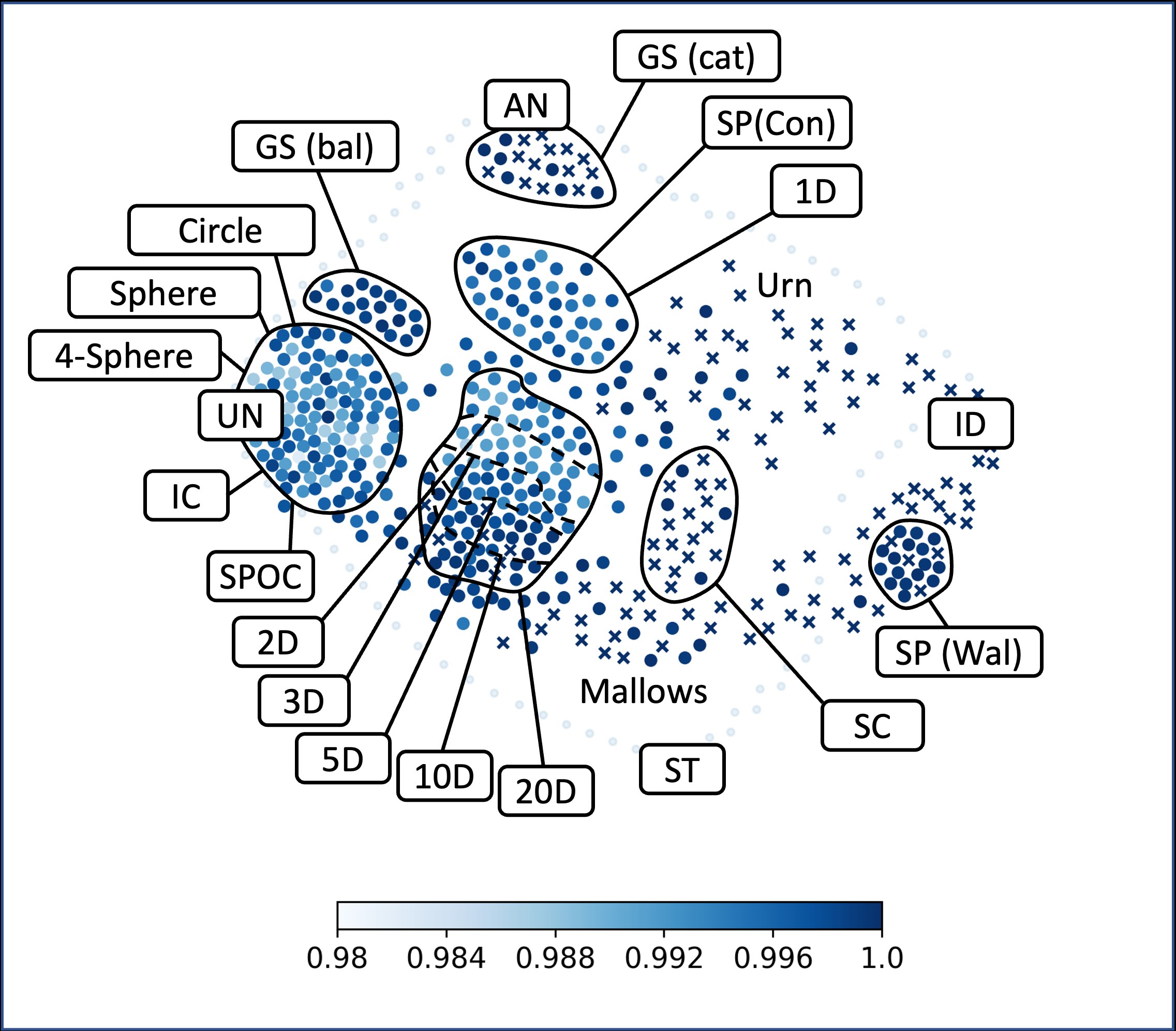}
        \caption{Sequential CC}
    \end{subfigure}
    \begin{subfigure}[b]{0.49\textwidth}
        \centering
        \includegraphics[width=6.5cm, trim={0.2cm 0.2cm 0.2cm 0.2cm}, clip]{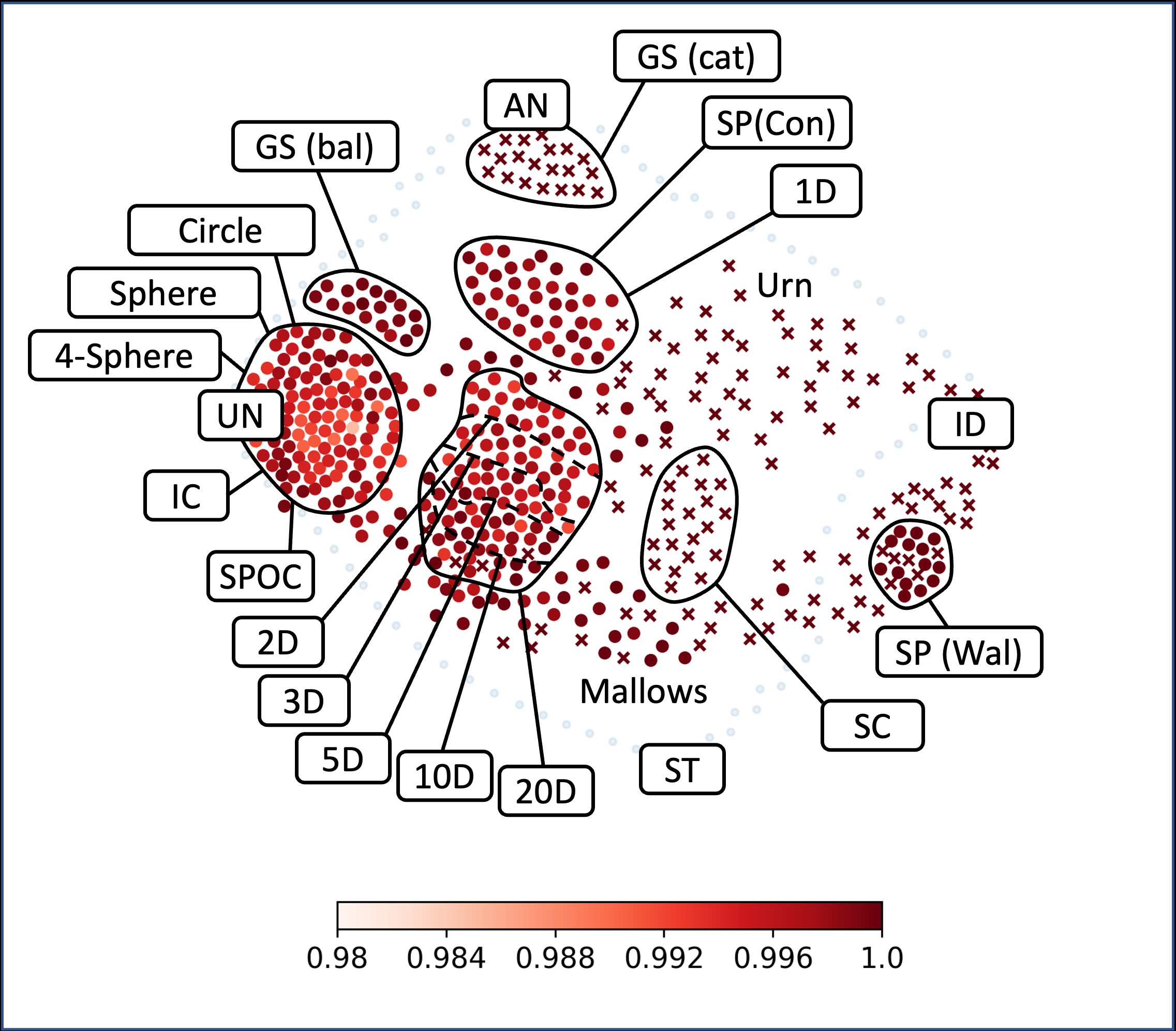}
        \caption{Removal CC}
    \end{subfigure}
    
    \vspace{1em}
    
    \begin{subfigure}[b]{0.49\textwidth}
        \centering
        \includegraphics[width=6.5cm, trim={0.2cm 0.2cm 0.2cm 0.2cm}, clip]{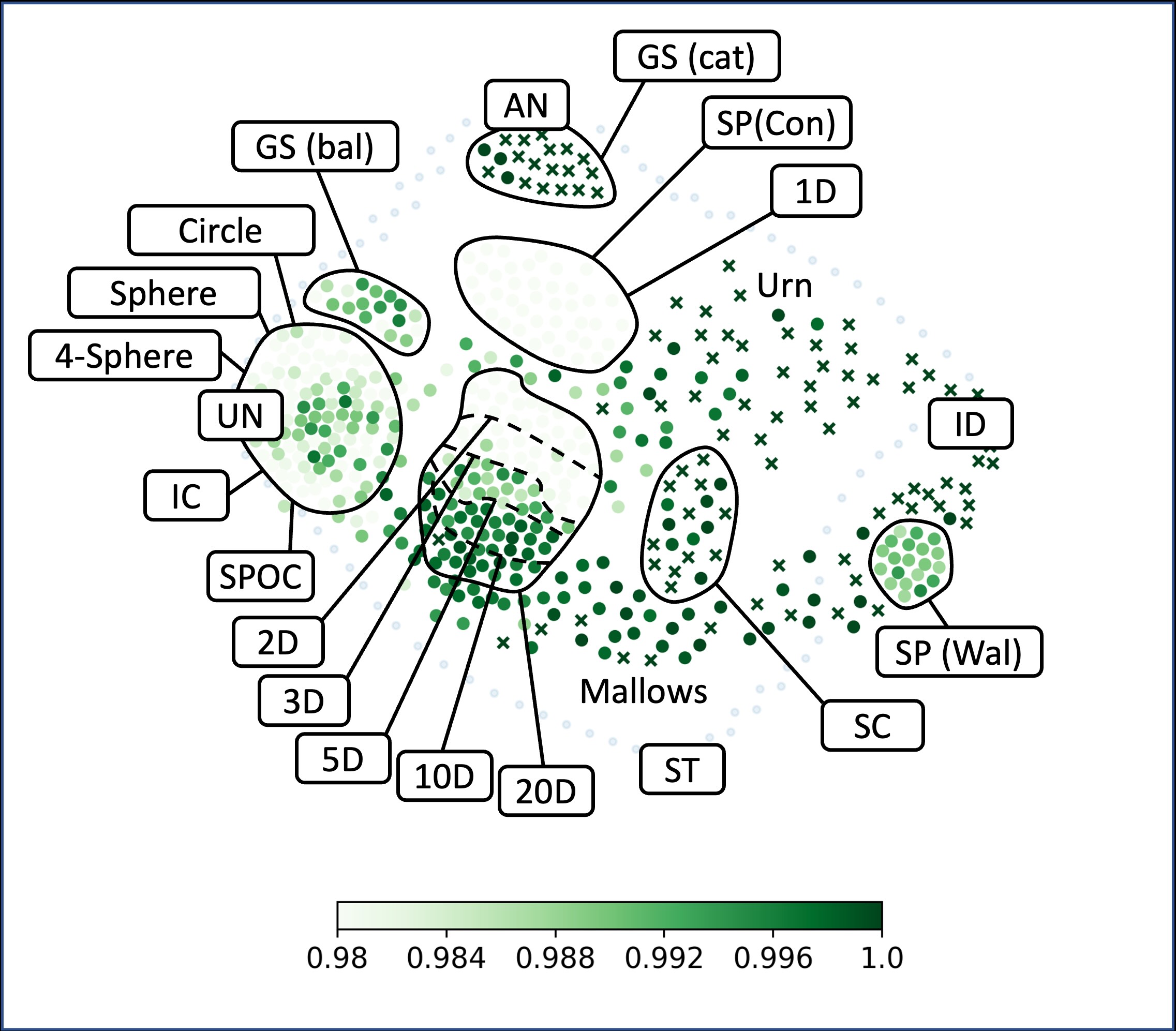}
        \caption{Banzhaf CC}
    \end{subfigure}
    \begin{subfigure}[b]{0.49\textwidth}
        \centering
        \includegraphics[width=6.5cm, trim={0.2cm 0.2cm 0.2cm 0.2cm}, clip]{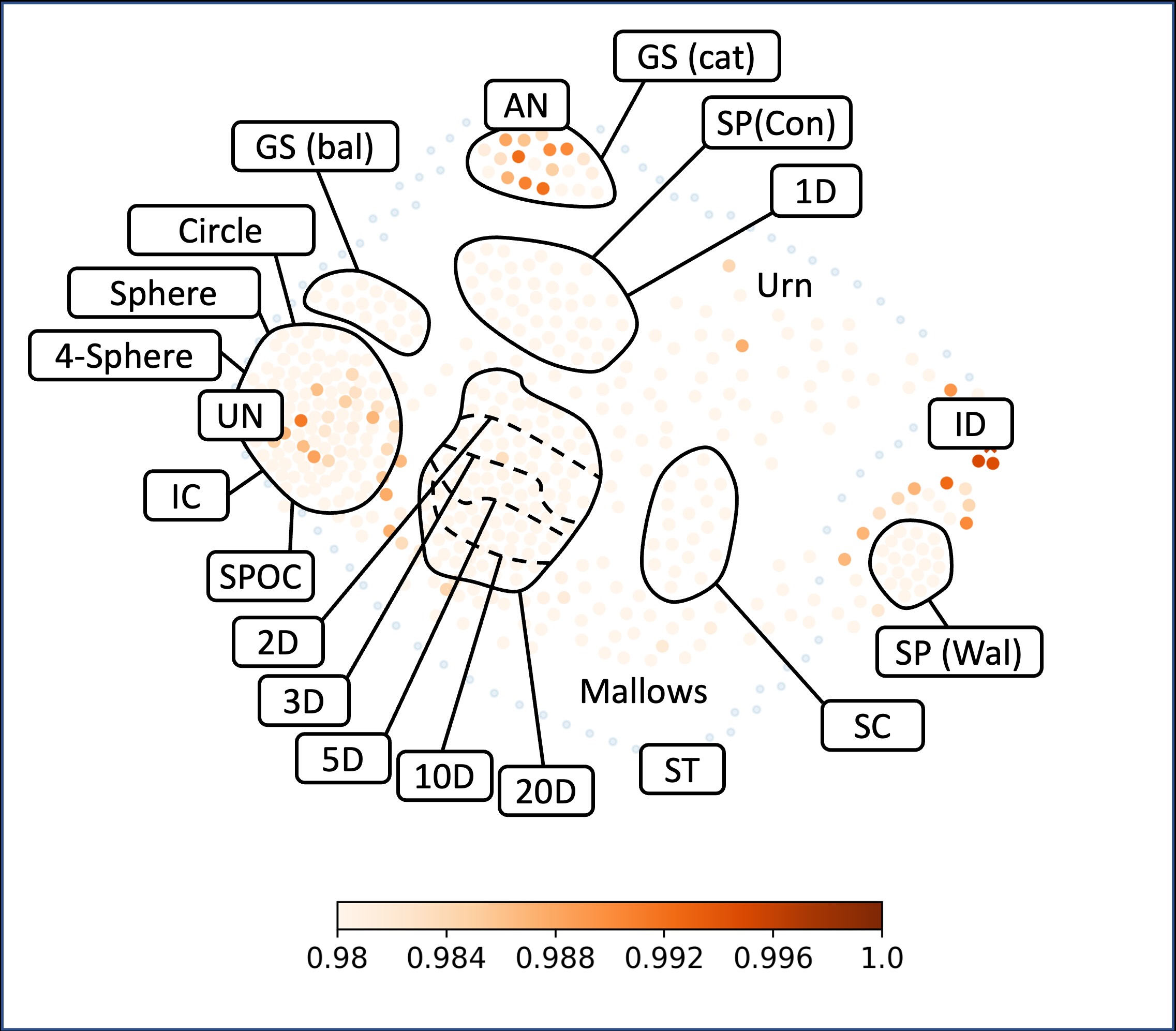}
        \caption{Ranging CC}
    \end{subfigure}
    
    \caption{Comparison of approximation algorithms for CC.}
    \label{fig:approx_4}
\end{figure}

We also consider RangingCC algorithm \citep{sko-fal-sli:j:multiwinner, elk-fal-las-sko-sli-tal:c:2d-multiwinner}, which was designed especially for the CC voting rule.

The last algorithm we discuss is BanzhafCC \citep{fal-lac-pet-tal:c:csr-heuristics}, which is a special variant of SeqCC method. When deciding which candidate should be added to the committee in each step, it is using the concept of the Banzhaf index.
Briefly put, SeqCC always adds the candidate that currently leads to the highest score. BanzhafCC, on the other hand, adds the candidate that is expected to maximize the committee score if all but one missing committee members were chosen randomly. For details, we point the readers to the work of \cite{fal-lac-pet-tal:c:csr-heuristics} and \cite{munagala2021optimal}. (Both RangingCC and BanzhafCC can serve as a base for polynomial-time approximation schemes for CC).



We evaluate the approximation algorithms by computing the approximation ratio, that is, the score of the winning committee selected by the approximation algorithms divided by the score of the winning committee selected by the optimal method. \new{The larger the approximation ratio, the better.}

In \Cref{fig:approx_4}, we present the results. For all four approximation algorithms, the closer we are to~$\UN$, the worse is the approximation ratio. Moreover, the results for SeqCC and RemovalCC are significantly better than for BanzhafCC and RangingCC. Finally, RangingCC is clearly the worst one. Nonetheless, all four algorithms on average have very high approximation ratios (close to one).

\begin{figure}[t]
    \centering
    
    \includegraphics[width=11cm, trim={0.2cm 0.2cm 0.2cm 0.2cm}, clip]{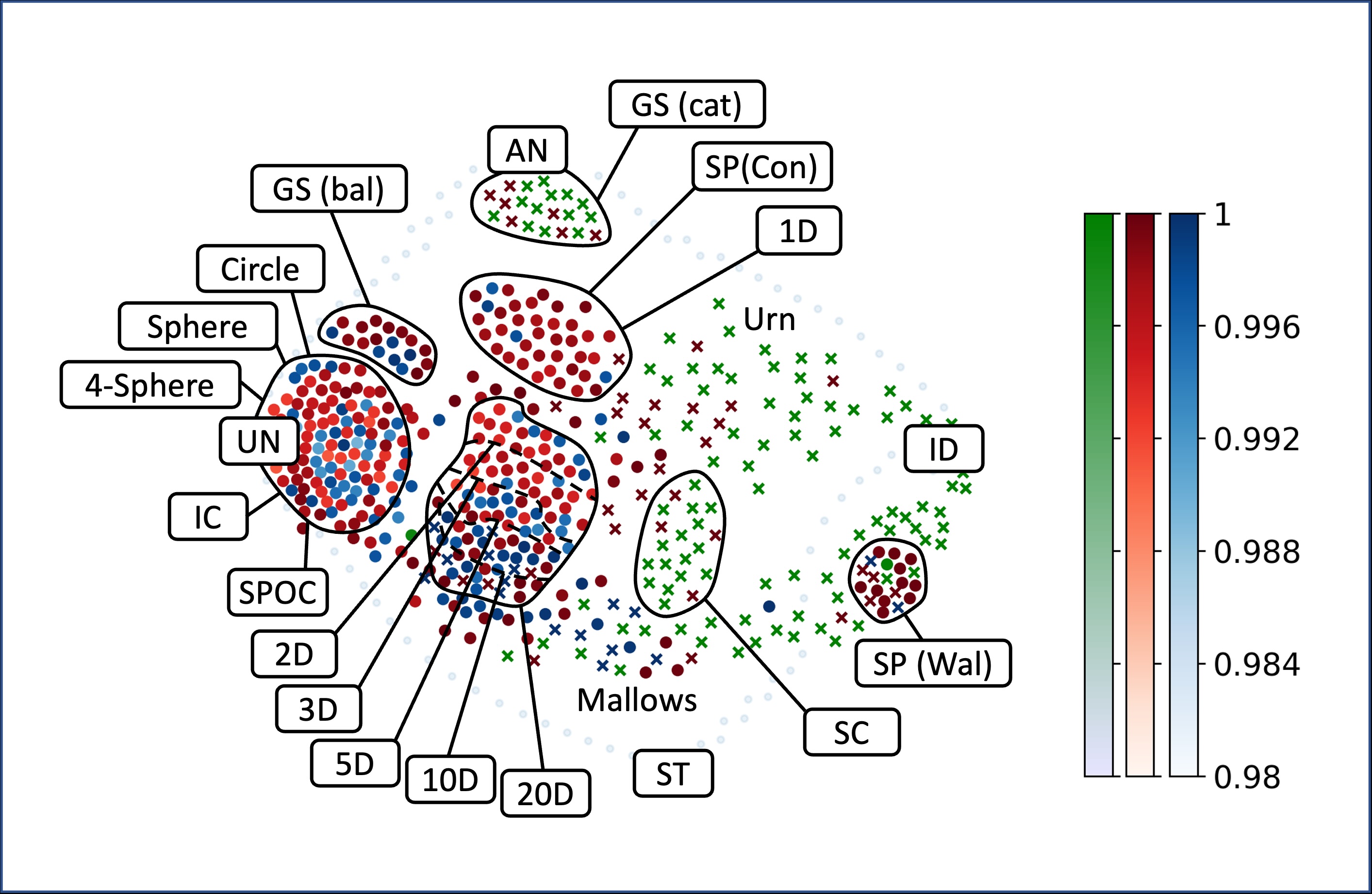}
    
    \caption{SeqCC versus RemovalCC.}
    \label{fig:approx_2in1}
\end{figure}

What is interesting for Banzhaf, is the fact that even though Walsh elections are relatively close to~$\ID$, the approximation ratio is quite bad in comparison to the elections lying next to them (e.g., the Norm-Mallow elections).

In \Cref{fig:approx_2in1}, we present a comparison of SeqCC and RemovalCC, where the blue points refer to places where SeqCC is better at approximating CC, the red points mark the elections where RemovalCC is better, and finally the green points depict elections where there is a draw.

On most single-crossing, around half of the urn and Norm-Mallows elections (those with low~$\normphi$), some caterpillar group-separable and three Walsh elections there is a draw. For all the rest, one or the other of the algorithms was better. 


When there was no tie, for the majority of the instances, RemovalCC was better. The only three models for which for more than half of the instances SeqCC was better are
the IC,~$10$-Cube, and Norm-Mallows. This means that for low dimensional Euclidean elections RemovalCC was much better, and for high dimensional Euclidean elections, like~$10$-Cube elections, SeqCC was better.



\begin{figure}[t]
    \centering
    
    \begin{subfigure}[b]{0.49\textwidth}
        \centering
        \includegraphics[width=6.5cm, trim={0.2cm 0.2cm 0.2cm 0.2cm}, clip]{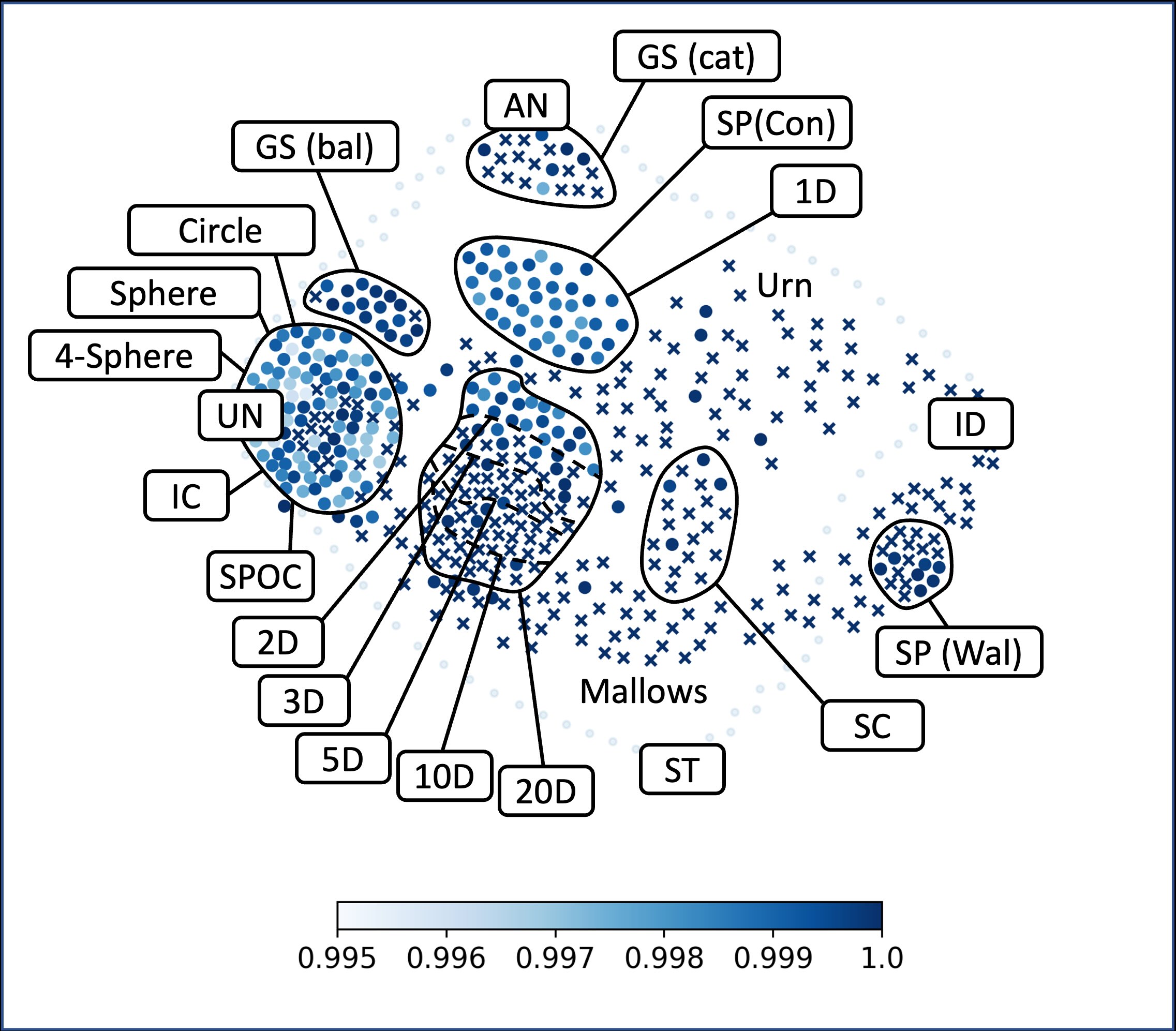}
        \caption{Sequential HB}
    \end{subfigure}
    \begin{subfigure}[b]{0.49\textwidth}
        \centering
        \includegraphics[width=6.5cm, trim={0.2cm 0.2cm 0.2cm 0.2cm}, clip]{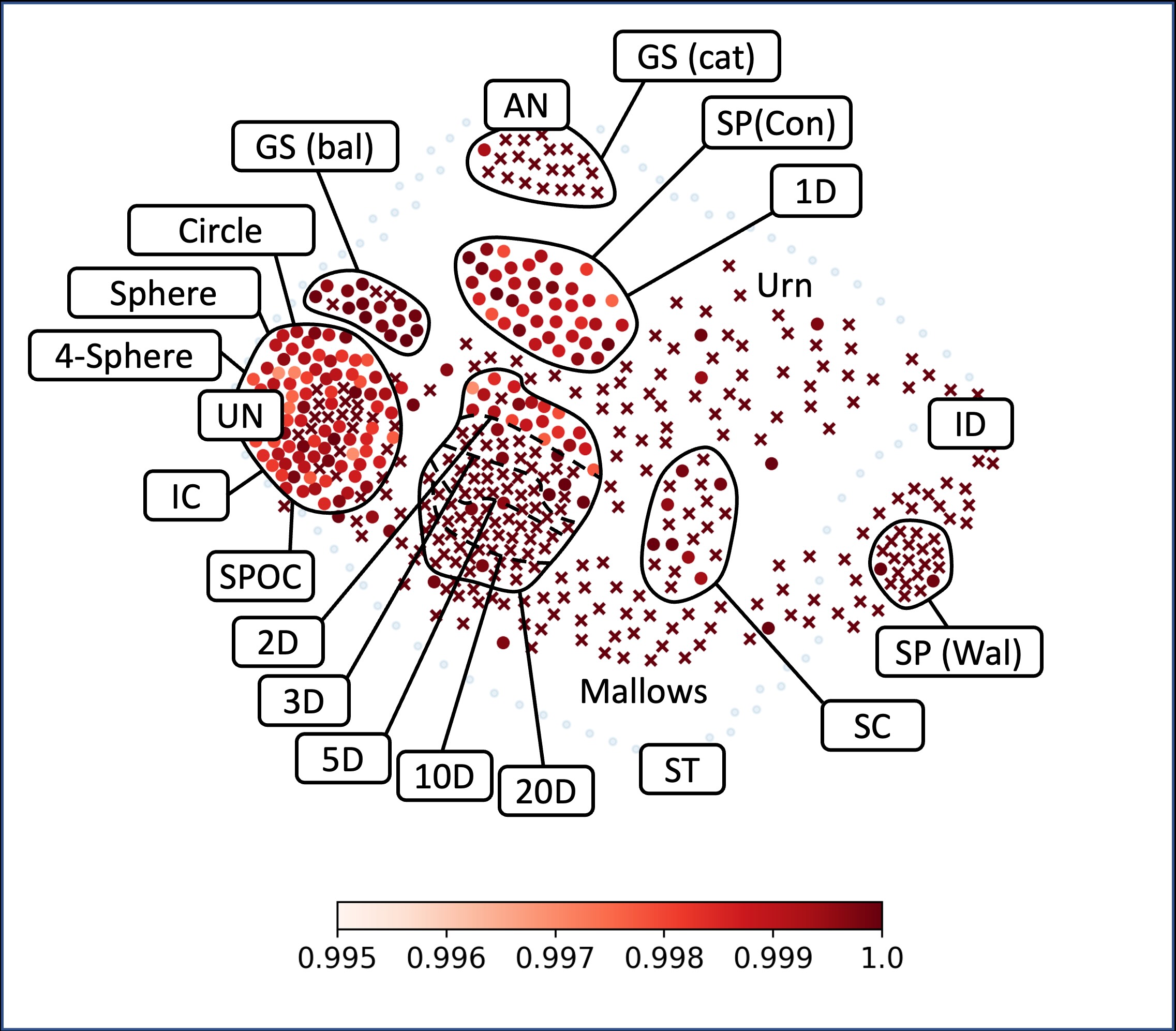}
        \caption{Removal HB}
    \end{subfigure}
    
    \caption{Comparison of approximation algorithms for HB.}
    \label{fig:hb_2}
\end{figure}

\begin{figure}[t]
    \centering
    
    \includegraphics[width=11cm, trim={0.2cm 0.2cm 0.2cm 0.2cm}, clip]{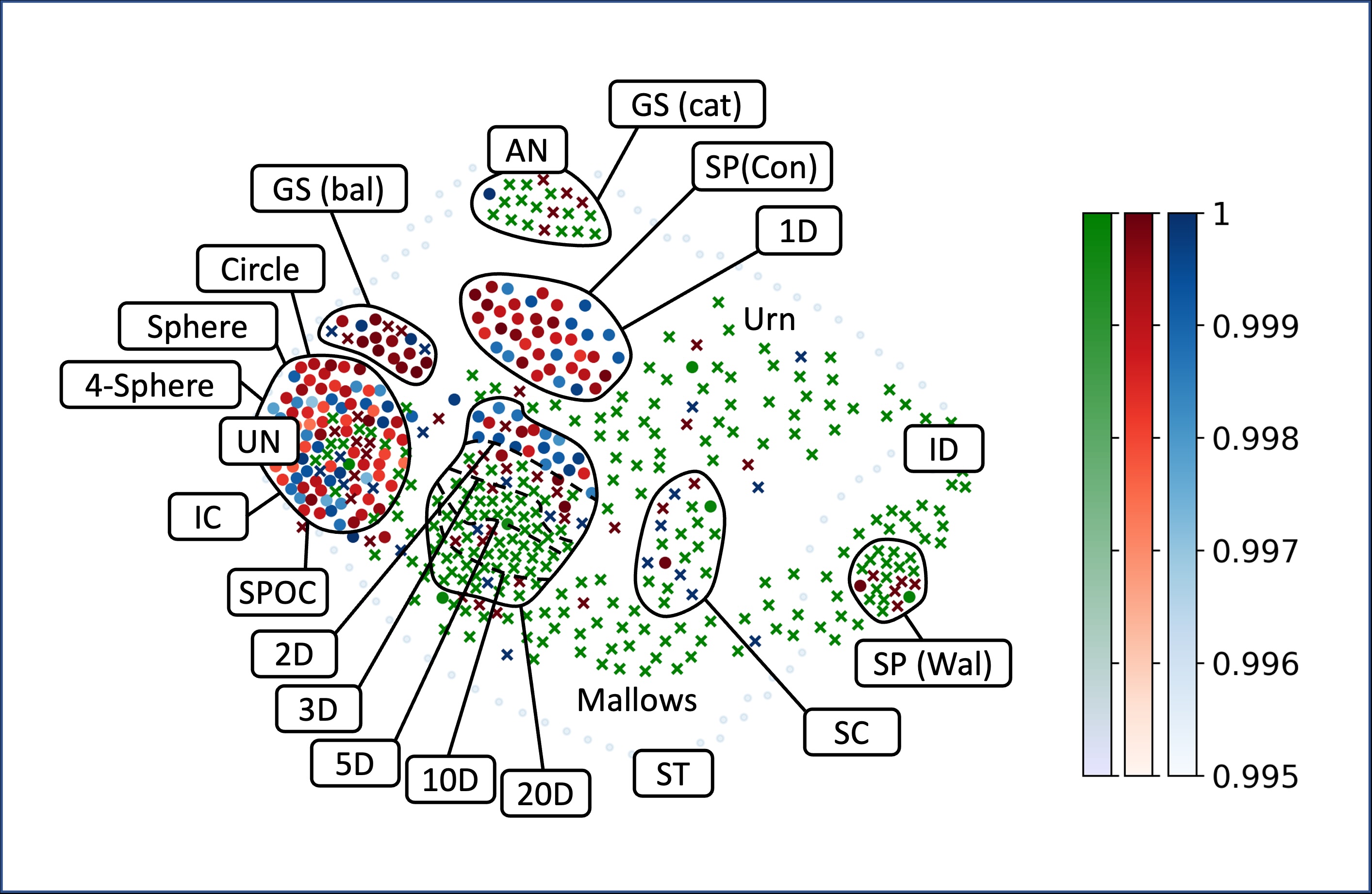}
    
    \caption{SeqCC versus RemovalCC.}
    \label{fig:approx_hb2in1}
\end{figure}

Next, we move on to a similar experiment, but for the HB rule. We study two approximation algorithms, to which we refer to as SeqHB and RemovalHB. These two methods are defined analogously to the SeqCC and RemovalCC. In~\Cref{fig:hb_2} we present results individually for each of the algorithms, and in~\Cref{fig:approx_hb2in1} we present them jointly. The performance of both algorithms is almost excellent. In approximately half of the instances ($239$ out of~$480$) an optimal solution was found by both methods. In~$153$ instances solution found by RemovalHB was better, and in~$82$ instances the one found by SeqHB was better. 

For the Walsh and caterpillar group separable elections almost always an optimal solution was found, however, sometimes it was only found by RemovalHB, while the solution found by SeqHB was suboptimal. In most elections from the Interval, Conitzer, and balanced group-separable, RemovalHB was better than SeqHB.

While SeqCC and SeqHB have approximation guarantees of $1-\frac{1}{e}$, RemovalCC and RemovalHB do not have any such guarantees.

\vspace{0.2cm}
\begin{conclusionbox}
The conclusions from~\Cref{ch:applications:sec:rules} are as follows.
\begin{itemize}
    \item Elections that are similar (i.e., are at small distance from one another) behave similarly under different voting rules. For instance, similar elections obtain similar scores under scoring voting rules and require similar amount of time to compute winners under a given voting rule. For instance, usually the closer elections are to IC, the longer it takes to compute their outcomes. However, there are some exceptions. For example, for the Dodgson voting rule, the longest time was witness by elections from group-separable statistical culture, which are far away from IC.
    \item While SeqCC and SeqHB have approximation guarantees of $1-\frac{1}{e}$ (and even a stronger one for SeqCC), in practice we observe that they are outperformed by RemovalCC and RemovalHB, which do not have such guarantees.
\end{itemize}
\end{conclusionbox}

\section{Real-Life Instances} \label{preflib_info}

Next, we focus on real-life elections. We have several sources of our datasets, where the most prominent two are PrefLib \citep{mat-wal:c:preflib} and the work of \cite{boe-sch:t:datasets}. There are two main problems with real-life elections. One is the fact that usually there are few candidates participating. The second one is that in many cases, numerous votes are incomplete. In the beginning, we will describe how to preprocess the data in general, for example, to have complete preference orders. Then we will describe our datasets one by one, and how we preprocessed each of them specifically---some of the datasets needed some special treatment. Then, we present where these real-life elections land on our maps of elections. Finally, we will try approximating real life elections with the Norm-Mallows model, i.e., we will search for such~$\normphi$ parameters so that elections sampled from the Norm-Mallows model with this parameter are as close to the real-life elections as possible.

Whenever we speak of real-life elections in this section, we mean
elections from our datasets.

\subsubsection{Selection of Datasets}
In \Cref{tab:preflib_selected}, we present a detailed description of the selected datasets. All of them are available at PrefLib.
We chose eleven real-life datasets of different types, and we divided them into three categories. 
The first group contains \textit{political} elections: city council
elections in Glasgow and Aspen~\citep{openstv}, elections from
Dublin North and Meath constituencies (Irish), and elections held by
non-profit organizations, trade unions, and professional organizations
(ERS). The second group consists of \textit{sport} elections: 
Tour de France (TDF)~\citep{boe-sch:t:datasets}, 
Giro d'Italia (GDI)~\citep{boe-sch:t:datasets}, 
speed skating~\citep{boe-bre-fal-nie-szu:c:compass}, 
and figure skating. The last
group consists of \textit{surveys}: preferences over Sushi~\citep{kam:c:sushi}, T-Shirt
designs, and costs of living and population in different cities~\citep{caragiannis2019optimizing}. 
For TDF and GDI, each race is a vote, 
and each season is an election. For speed skating, each lap is a
vote, and each competition is an election. For figure skating, each
judge's opinion is a vote, and each competition is an election.

We are only interested in elections that have at least~$10$ candidates\footnote{The more candidates, the more interesting is the data. On the other hand, if we had required too many candidates, we would end up having very few instances. In that sense,~$10$ candidates is a tradeoff between the number of candidates, and the number of instances that have at least~$10$ candidates.}.
As our model only allows us to consider complete votes without ties, we are interested in instances where votes are as complete as possible and contain only a few ties. In some datasets, only parts of the data meet our criteria (i.e., complete votes without ties over at least~$10$ candidates). For example, in the dataset containing Irish elections, we have three different elections, but one of them (an election from the Dublin West constituency) contains only nine candidates. We delete all such elections. After doing so, we finally arrive at eleven real-life datasets containing elections meeting our criteria.

As we cannot include all elections from each dataset on the map of elections\footnote{If there are too many elections embedded jointly on a single map, the picture is becoming unclear and difficult to interpret.}, we further reduce the number of elections by considering only selected elections. In \Cref{tab:preflib_selected}, we include in the column \textit{\# Valid Elections} the number of elections we selected from each dataset in the end. We based our decision on the number of voters and candidates. That is, for ERS, we only take elections with at least~$500$ voters, for Speed Skating with at least~$80$ voters, for TDF with at least~$20$ voters, and for Figure Skating with at least~$9$. In addition to that, for TDF, we only select elections with no more than~$75$ candidates.

\begin{table*}[t]
\centering
\resizebox{\textwidth}{!}{\begin{tabular}{c  c  c  c  c  c}
			\toprule
			Category & Name & \# Valid Elections & Avg.~$m$ & Avg.~$n$ & Description\\	
			\midrule
			Political & Irish &~$2$ &~$13$ &~$\sim$~$54011$ & Elections from Dublin North and Meath\\
			Political & Glasgow	&~$13$ &~$\sim$~$11$ &~$\sim$~$8758$ & City council elections \\
            Political & Aspen &~$1$ &~$11$ &~$2459$	& City council elections\\
			Political & ERS	&~$13$ &~$\sim$~$12$ &~$\sim$~$988$ & Various elections held by non-profit organizations,\\ 
			        & & & & & trade unions, and professional organizations  \\
            \midrule
			Sport & Figure Skating &~$40$ &~$\sim$~$23$ &~$9$ & Figure skating  \\
			Sport & Speed Skating &~$13$ &~$\sim$~$14$ &~$196$ & Speed skating  \\
            Sport & TDF &~$12$ &~$\sim$~$55$ &~$\sim$~$22$ & Tour de France\\	
            Sport & GDI &~$23$ &~$\sim$~$152$ &~$20$  & Giro d’Italia	\\	
            \midrule
            Survey & T-Shirt &~$1$ &~$11$ &~$30$ & Preferences over T-Shirt logo \\	
            Survey & Sushi &~$1$ &~$10$ &~$5000$ & Preferences over Sushi\\	
            Survey & Cities &~$2$ &~$42$ &~$392$ & Preferences over cities	\\			
			\bottomrule
	\end{tabular}}
	\caption{\label{tab:preflib_selected} Each row contains a description of one of the real-life datasets we consider. In the column \textit{\# Selected Elections}, we denote the number of elections we finally select from the respective dataset.}
\end{table*}

\subsubsection{Preprocessing of Datasets}
There are two types of problems that we encounter in selected datasets. First, ties (i.e., pairs or larger sets of
candidates that are reported as equally good). Any ties that appear we break randomly.
Second, incomplete votes (i.e., votes where some of the top candidates are ranked and the remaining candidates are not). Sometimes both problems happen at the same time.

For all elections from our selected datasets that contain incomplete
votes, we need to fill-in all the missing data. For the
decision how to complete each vote, we use the other votes as
references, assuming that voters that rank the same candidates on top
also continue to rank candidates similarly toward the bottom. 

\new{For each incomplete vote~$v$, we proceed as follows. Let us assume that vote~$v$ is over $m'$ candidates. Let~$V_P$ be the set of all original votes of which~$v$ is a prefix. We uniformly at random select
one vote~$v_p$ from~$V_P$ and then at the end of vote~$v$ we add candidate which is at position $m'+1$ in vote $v_p$. We repeat the procedure until vote~$v$ is
complete. If the set~$V_P$ is empty, then we choose~$c$ uniformly at
random (from those candidates that are not part of~$v$ yet).}

After applying these preprocessing steps, we arrive at a collection of datasets containing elections with ten or more candidates and complete votes without ties. As we focus on ten candidates, we need to select a subset of ten candidates for each election. In a given election we compute the Borda score of each candidate, and select ten candidates with the highest ones. In case there is a tie, we break it randomly.

 We refer to the resulting datasets as \emph{intermediate} datasets.

\subsubsection{Sampling Elections from the Intermediate Datasets}
We treat each of our intermediate datasets as a separate election model from which we sample~$15$ elections to create the final datasets that we use. For each intermediate dataset, we sample elections as follows. First, we randomly select one of the elections present internally in it (for example, the election held in Dublin North constituency from the Irish dataset). Second, we sample~$100$ votes from this election uniformly at random (this implies that for elections with less than~$100$ votes, we select some votes multiple times, and for elections with more than~$100$ votes, we do not select some votes at all). We do so to make full use of elections with far more than~$100$ votes. For instance, our Sushi intermediate dataset contains only one election consisting of~$5000$ votes. Sampling an election from the Sushi intermediate dataset thus corresponds to drawing~$100$ votes uniformly at random from the set of~$5000$ votes. On the other hand, for intermediate datasets containing a higher number of elections, e.g., the Tour de France intermediate dataset, most of the sampled elections come from different original elections.

After executing this procedure, we arrive at eleven sets, each containing~$15$ elections consisting of~$100$ complete and strict votes over~$10$ candidates, which we use for our experiments.

\subsection{Real-Life Elections on the Map} \label{sub:maprel}

In Figure~\ref{fig:main_preflib_map}, we show a map of our real-life elections
along with the compass, Mallows, and urn elections. For readability, we
present the Mallows and urn elections as large, pale-colored areas. Not
all real-life elections form clear clusters, hence the labels refer to
the largest compact groupings.

While the map is not a perfect representation of distances among
elections,
nevertheless, analyzing it leads to many conclusions.
Most strikingly, real-life elections occupy a very
limited area of the map; this is especially true for political
elections and surveys. Except for several sport elections, all
elections are closer to~$\UN$ than to~$\ID$, and none of the
real-life elections falls in the top-right part of the map. Another
observation is that Mallows elections go right through the real-life
elections, while urn elections are on average further away. This means
that for most real-life elections there exists a parameter~$\phi$ such
that elections generated according to the Mallows model with that
parameter are relatively close
(see the next section for specific recommendations).
    
\begin{figure}[t]
    \centering
    \includegraphics[width=8.5cm]{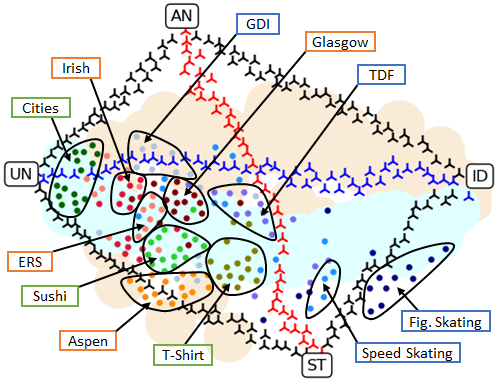}

    \caption{Map of real-life instances.}
    \label{fig:main_preflib_map}
\end{figure}

Most of the political elections lie close to each other and are
located next to the Mallows elections and high-dimensional hypercube
ones. At the same time, sport elections are spread over a larger part
of the map and, with the exception of GDI, are shifted toward~$\ID$. Regarding the surveys, the Cities survey is very similar to a sample from IC.\footnote{In the survey people were casting votes in the form of truncated ballots, ranking only their six favorite options. This is partly the reason why it is so similar to IC. Nevertheless, we have not observed any particular structure within these votes, hence its similarity to IC is not accidental.} The Sushi survey is surprisingly similar to political
elections. The T-Shirt survey is shifted toward stratification (apparently, people often agree which designs are better and which are
worse).


\subsection{Capturing Real-Life Elections} \label{sub:recom}
Let us now analyze how to choose the~$\normphi$ parameter so
that elections generated using the Mallows model with our normalization resemble the real-life
ones. We consider four different datasets, each consisting of elections with~$10$ candidates and~$100$ voters (created as described in \Cref{sub:maprel}): the set of all political
elections, the set of all sport elections, the set of all survey
elections, and the combined set, i.e., the union of the three preceding ones. For
each of these four datasets, to find the value of~$\normphi$ that
produces elections that are as similar as possible to the respective
real-life elections, we conducted the following experiment. 
For each~$\normphi\in \{0,0.001,0.002,...,0.999,1\}$, we generate~$100$
elections with~$10$ candidates and~$100$ voters from the Mallows model
with the given~$\normphi$ parameter. Subsequently, we compute the
average distance between these elections and the elections from the
respective dataset. Finally, we select the value of~$\normphi$ that minimizes this
distance. We present the results of this experiment in
\Cref{ta:realphiRL}.

\begin{table*}
\centering
\small
\begin{tabular}{l|c|c|c|c}
    \toprule
    Type of elections& Value of & Avg. Norm. & Norm. Std.& Num. of \\
   &~$\normphi$ & Distance & Dev. & Elections \\
    \midrule
    Political elections &~$0.750$ &~$0.15$ &~$0.036$ &~$60$ \\
    Sport elections &~$0.534~$ &~$0.27$ &~$0.080$ &~$60$ \\
    Survey elections &~$0.730$ &~$0.20$ &~$0.034$ &~$45$ \\ 
    All real-life elections &~$0.700$ & ~$0.22$ &~$0.106$ &~$165$ \\
    \bottomrule
  \end{tabular}
  \caption{\label{ta:realphiRL}Values of~$\normphi$ such that elections
    generated with the Mallows model for~$m=10$ are, on average, as close
    as possible to elections from the respective dataset. We include
    the average distance of the elections generated with Mallows model for
    this parameter~$\normphi$ from the elections from the dataset as
    well as the standard deviation, both normalized by distance between the uniformity and identity.
    The last column gives the number of elections in the respective
    real-life dataset.}
\end{table*}

Recall that in the previous section we have observed that a majority
of real-life elections are close to some elections generated from the
Mallows model with a certain dispersion parameter. However, we have
also seen that the real-life datasets consist of elections that differ
to a certain extent from one another (in particular, this is very
visible for the sports elections). Thus, it is to be expected that
elections drawn from the Mallows model for a fixed dispersion
parameter are at some nonzero (average) distance from the real-life
ones. Indeed, this is the case here. However, the more
homogeneous political elections and survey elections can be captured quite well using the Mallows model with parameter~$\normphi=0.750$ and~$\normphi=0.730$, respectively.  Generally speaking, if one wants to
generate elections that should be particularly close to elections from
the real world, then choosing a~$\normphi$ value between~$0.7$ and~$0.78$ seems like a good strategy. If, however, one wants to capture the full
spectrum of real-life elections, then we recommend using the Mallows
model with different values of~$\normphi$ from the interval~$[0.5,0.8]$.


\vspace{0.2cm}
\begin{conclusionbox}
The conclusions from~\Cref{preflib_info} are as follows.
\begin{itemize}
     \item Real-life elections that we study witness surprisingly little antagonism.
     \item Most of the political elections are quite similar to each other, and can be approximated by the Norm-Mallows model with~$\normphi=0.75$.
\end{itemize}
\end{conclusionbox}


\section{Skeleton Map}
Our final goal in this chapter is to form what we call a
\emph{skeleton map of vote distributions} (skeleton map, for short),
evaluate its quality and robustness, 
and compare it to the previous maps, thus getting some insights
regarding the quality and credibility of the latter.  
A trivial version of a skeleton map was presented in~\Cref{fig:paths}, where we showed a map with only compass matrices and paths between them.

\cite{boehmer2022expected} proved that for some statistical cultures, it is possible to create frequency matrices of distributions. 
In other words, instead of doing it empirically (i.e., sampling numerous elections from a given model, computing frequency matrices for each of them, and then creating a final matrix as the average over these matrices), we can analytically calculate the expected frequency matrix of a given distribution---i.e., if we had sampled infinitely many elections, and took the average over their frequency matrices, we would have obtained such an expected frequency matrix. 

We start with the Mallows model. Let~$\Phi = \{0, 0.05, 0.1, \ldots, 1\}$ be a set of normalized
dispersion parameters that we will use for Mallows-based
distributions.

For a given number of candidates, we consider the four
compass matrices ($\UN$, $\ID$, $\AN$, $\ST$) and paths between each matrix pair consisting of their convex combinations (denoted by gray dots on the map, which will be shown later), the frequency matrices
of the Norm-Mallows distribution with its normalized dispersion parameters from~$\Phi$ (denoted by blue triangles), and the frequency matrices of Conitzer (CON), Walsh (WAL), and group-separable caterpillar (CAT). Moreover, we add the frequency matrices of the following 
vote distributions (we again use the dispersion parameters from~$\Phi$):
\begin{enumerate}
\item 
The distribution for Norm-Mallows with~$\omega=0.5$ and~$\omega=0.25$   (denoted by red and green triangles, respectively),
\item 
The~$\phi$-Conitzer and~$\phi$-Walsh distributions 
where first we sample a vote~$v$ from the Walsh and Conitzer
 distributions, and then we sample the
 final vote from the Mallows distribution with dispersion parameter $\phi$ and with~$v$ as the central
 vote (denoted by magenta and orange crosses, respectively).
\end{enumerate}

In Figure~\ref{fig:skeleton_map} we show our map for the
case of~$10$ candidates. The lines between some points/matrices show
their positionwise distances (to maintain clarity, we provide only some of them). The map was created using the MDS embedding.

\begin{figure*}[t]
	\centering
 	\begin{minipage}{1\textwidth}
 		\centering
 		\scalebox{1.4}{\input{img/tex/skeleton_10}}
        \caption{The skeleton map with 10 candidates. We have~$\textrm{MID} = \nicefrac{1}{2}\AN + \nicefrac{1}{2}\ID$.}
      \label{fig:skeleton_map}
 	\end{minipage}\hfill
\end{figure*}

\begin{figure*}[t]
	\centering
 	\begin{minipage}[t]{.49\textwidth}
 		\centering
        \includegraphics[width=0.83\textwidth]{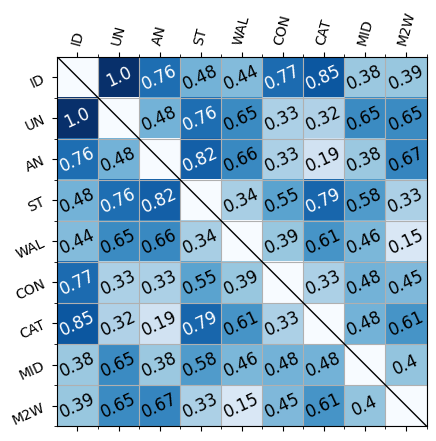}
        \caption{Normalized positionwise distances between selected matrices.}
      \label{fig:skeleton:distance}
 	\end{minipage}\hfill
 	\begin{minipage}[t]{.49\textwidth}
 		\centering
        \includegraphics[width=0.83\textwidth]{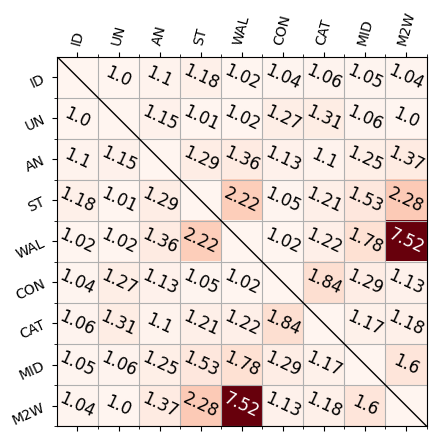}
        \caption{Distortion of the MDS embedding of the skeleton map.}
        \label{fig:skeleton:distortion}
 	\end{minipage}\hfill
\end{figure*}


We now verify the credibility of the skeleton map.
As the map 
does not have many points,  
we expect its
embedding to truly reflect the positionwise distances between the
matrices. This, indeed, seems to be the case, although 
some distances are represented (much) more accurately than others.

In Figure~\ref{fig:skeleton:distance} 
we provide the positionwise distances between the several selected matrices (for~$m=10$; matrix M2W is the Mallows matrix in our data set that is closest to the Walsh matrix), and
in Figure~\ref{fig:skeleton:distortion} for each pair of selected
  matrices we report the distortion (the smaller the value, the more accurate the embedding, recall~\Cref{ch:applications:sec:distortion}).
Most of the distortions are below $1.5$, with the majority being below $1.2$, and all but one are below $2.5$\footnote{Note that the average distortion for MDS map presented in~\Cref{fig:embed} was $1.315$}.
Thus, in most cases, the map is quite accurate and offers good intuition
about the relations between the matrices. Yet, some distances are
particularly badly represented.  As an extreme example, 
the Euclidean distance between the Walsh matrix and
the closest Mallows matrix, M2W, is off by almost a factor of~$8$ (these matrices are close, but not as close as the map suggests).
While one always has to verify claims
suggested by the skeleton map, we view it as quite credible.
This conclusion is particularly valuable when we compare the skeleton
map and the previous maps. The two maps are similar, and analogous
points (mostly) appear in analogous positions. Perhaps the biggest
difference is the location of the Conitzer matrix on the skeleton
map and Conitzer elections in the previous maps, but even this difference is
not huge.
We remark that the Conitzer matrix is closer to~$\UN$
and~$\AN$ than to~$\ID$ and~$\ST$, whereas for the Walsh matrix the opposite is true.
We made a similar observation before;
our results allow us to make this claim formal.


\subsubsection{Skeleton Map for Different Numbers of Candidates}

A natural question to raise is whether the map would look similarly if we had taken different numbers of candidates.
In \Cref{fig:skeletonMapVaryingM} we present 
three additional skeleton maps with~$5$,~$25$, and~$50$ candidates.
We see that the maps in general are quite similar, however some of the models are “moving”. In particular, we observe that when we increase the number of candidates, Walsh model is shifting towards~$\ID$, and group-separable caterpillar model is shifting towards~$\AN$. At the same time, the Conitzer model is staying almost in the same place.

\vspace{0.2cm}
\begin{conclusionbox}
The "skeleton" map witnesses lover distortion than the standard map of elections. It confirms general intuition about the positions of particular statistical cultures on the map.
\end{conclusionbox}

\begin{figure*}[t]
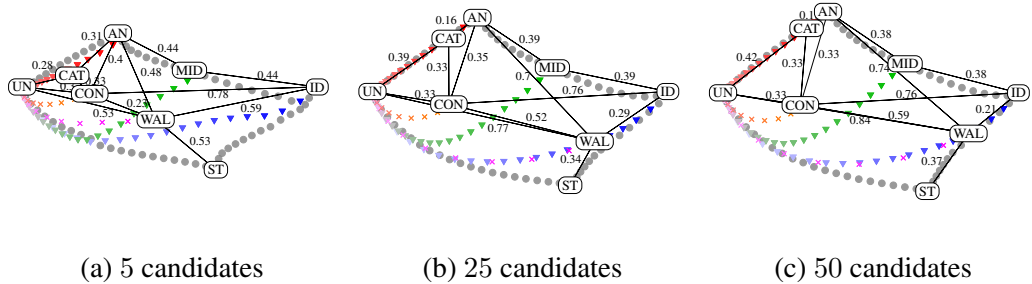

\centering
\begin{subfigure}{.32\textwidth}
  \centering
  \resizebox{\textwidth}{!}{\input{img/tex/skeleton_5}} 
  \caption{$5$ candidates}
  \label{fig:m5}
\end{subfigure}\hfill
\begin{subfigure}{.32\textwidth}
  \centering
  \resizebox{\textwidth}{!}{\input{img/tex/skeleton_25}} 
  \caption{$25$ candidates}
  \label{fig:m20}
\end{subfigure}\hfill
\begin{subfigure}{.32\textwidth}
  \centering
  \resizebox{\textwidth}{!}{\input{img/tex/skeleton_50}} 
  \caption{$50$ candidates}
  \label{fig:m50}
\end{subfigure}
\caption{Skeleton map for different number of candidates.}
\label{fig:skeletonMapVaryingM}
\end{figure*}

\section{Summary}
The main reason behind the analysis given in this chapter was to justify the use of particular parameters, embeddings, etc., and showing how selecting one embedding or another can influence the outcome. The second reason was to prove the practicality of the map, as well as its usefulness. We presented how the map of elections framework can be used to study the behavior of voting rules. We also showed the relationship between synthetic elections and real-life data.

Some of the most interesting findings are the following. For many voting rules (e.g., Borda) there is a correlation between the position of a given election on the map and the result (e.g., the highest Borda score in that election). While Urn model is easily scalable, with Mallows model we should be more cautious and depending on what the goal is, we should decide on using the normalized version or not. Most of political real-life elections lie in the particular (lower-left) part of the map and can be approximated by the Mallows model.

\vspace{0.2cm}
\begin{contributionsbox}
\begin{itemize}
    \item Evaluation of the robustness of the map of elections framework.
    \item Examples of usefulness of the map of elections framework:
    \begin{itemize}
        \item Analysis of voting rules, i.e., showing that different voting rules behave differently on different types of elections (different parts of the map). 
        \item Analysis of real-life elections, i.e., showing where they land on the map, and how to approximate them with the Norm-Mallows model.
    \end{itemize}
    \item Introduction of the "skeleton" map approach, which confirms general intuition about the positions of particular statistical cultures on the map.
\end{itemize}
\end{contributionsbox}

\chapter{Subelections}
\label{ch:subelections}

\section{Introduction}
In this chapter we study the computational complexity of several extensions of the \textsc{Election Isomorphism}
problem, which was introduced in \Cref{ch:distances} as an analogue of \textsc{Graph Isomorphism}. While in the latter we are given two graphs and we ask if they can be made identical by renaming the
vertices, in the former we are given two ordinal elections and we ask if they can be made identical by renaming
the candidates and reordering the voters. 
As we mentioned in \Cref{ch:distances}, even though the exact complexity of \textsc{Graph Isomorphism}, as well as of many related problems, remains elusive, \textsc{Election Isomorphism} has a simple polynomial-time
algorithm. Yet, in many
practical settings, perfect isomorphism is too stringent and
approximate variants are necessary. For the case of \textsc{Graph Isomorphism},
researchers considered two types of relaxation:
Either they focused on making a small number of modifications to the
input graphs that make them isomorphic (see, e.g.,
the works of~\citet{arv-koe-kuh-vas:c:approximate-graph-isomorphism}
and~\citet{gro-rat-woe:c:approximate-isomorphism}), or
they sought (maximum) isomorphic subgraphs of the input ones (see,
e.g., the classic paper of~\citet{coo:c:theorem-proving} and the
textbook of~\citet{gar-joh:b:int}; for an overview
focused on applications in cheminformatics we point to
the work of~\citet{ray-wil:j:max-common-subgraph-isomorphism}).

While in~\Cref{ch:distances} we focused on comparing different elections by measuring distances between them, which is analogous to the first type of relaxation of \textsc{Graph Isomorphism}, in this chapter we consider the second type.
In particular, we consider the \textsc{Subelection Isomorphism}
and \textsc{Maximum Common Subelection} families of problems. In the
former, we are given two elections, a smaller and a larger one, and we
ask if it is possible to remove some candidates and voters from the
larger election so that it becomes isomorphic to the smaller one. Put
differently, we ask if the smaller election occurs as a minor in the
larger one. One reason why this problem is interesting is its
connection to restricted preference domains.
For example, single-peaked and single-crossing 
elections are characterized as those that do not have certain forbidden
minors~\citep{bal-har:j:characterization-single-peaked,bre-che-woe:j:single-crossing}.
We show that \textsc{Subelection Isomorphism} is~$\np$-complete and
$\wone$-hard to parameterize by the size of the smaller
election, which suggests that there are no fast algorithms for the
problem. Fortunately, the characterizations of single-peaked and
single-crossing elections use minors of constant size and
such elections can be recognized efficiently; indeed, there are
very fast algorithms for these
tasks~\citep{bar-tri:j:stable-matching-from-psychological-model,esc-lan-ozt:c:single-peaked-consistency,elk-fal-sli:c:decloning}.
Our results show that characterizations with
nonconstant minors might lead to~$\np$-hard recognition problems.

In our second problem, \textsc{Maximum Common Subelection}, we
ask for the largest isomorphic subelections of the two input ones.
Although we find that
many of our problems are~$\np$-hard, we
also find polynomial algorithms, also for practically useful cases.

For both of our problems, we consider their \textit{candidate} and \textit{voter} variants.
For example, in \textsc{Candidate Subelection Isomorphism} we ask if it is possible
to remove candidates from the larger election (but without deleting
any voters) so that it becomes isomorphic with the smaller
one. Similarly, in \textsc{Maximum Common Voter-Subelection} we ask if
we can ensure the isomorphism of the two input elections by only deleting
voters (so that at least a given number of voters remains).
In Section~\ref{ch:sub:sec:experiments} we use this latter problem to
evaluate the similarity between elections generated from various
statistical cultures and some real-life elections.
These results confirm some findings observed in previous chapters and provide a new perspective on some of these statistical cultures and real-life elections.

In the most general variants of our problems, we assume that both
input elections are over different candidate sets and include
different voters. Yet, sometimes it is natural to assume that the
candidates or the voters are the same (for example, in a presidential
election votes collected in two different districts would have the
same candidate sets, but different voters, whereas two consecutive
presidential elections would largely involve the same voters, but not
necessarily the same candidates). We model such scenarios by variants
of our problems in which either the matchings between the candidates or
the voters of the input elections are given. Although one would expect
that having such matchings would make our problems easier, there are
cases where they remain~$\np$-hard even with both matchings. This
contrasts sharply with the results
from \Cref{tab:iso_complexity} from \Cref{ch:distances}. For a summary of our
results, see Table~\ref{tab:sub_complexity_results}.

The approach taken in this chapter is significantly different from the one presented in~\Cref{ch:distances} but, as before, the main aim is to get a better understanding of the nature of statistical culture models as well as of real-life elections. Results presented in this chapter are complementary to the previous ones, and give us a better understand of the map of elections.

\begin{table*}[t]
\centering
\resizebox{\textwidth}{!}{\begin{tabular}{ c | c | c | c | c }
		\toprule
		 & no & voter & candidate & both\\
		Problem & matching & matching & matching & matchings\\
		\midrule
		Election Isomorphism &~$\p$ &~$\p$  &~$\p$  &~$\p$  \\
		\midrule
		Subelection Isomorphism &~$\wone$-hard &~$\np$-com.&~$\p$&~$\p$  \\

		Cand.-Subelection Isomorphism &~$\np$-com. &~$\np$-com. &~$\p$ &~$\p$ \\

		Voter-Subelection Isomorphism &~$\p$&~$\p$ &~$\p$ &~$\p$ \\
		\midrule
		Max. Common Subelection &~$\wone$-hard &~$\np$-com.&~$\np$-com. &$\np$-com. \\

		Max. Common Cand.-Subelection &~$\np$-com. &~$\np$-com. &~$\np$-com. &~$\wone$-com. \\

		Max. Common Voter-Subelection &~$\p$ &~$\p$ &~$\p$ &~$\p$\\
		\bottomrule
\end{tabular}}
\caption{\label{tab:sub_complexity_results} An overview of our results;
    those for \textsc{Election Isomorphism} are taken from \Cref{ch:distances}.
    $\wone$-hardness holds with respect to the size of the smaller election or a common subelection.
    The~$\wone$-hard problems are also~$\np$-hard.
  }
\end{table*}

\vspace{-0.3cm}
\section{Variants of the Isomorphism Problem}
\vspace{-0.2cm}

Given elections~$E = (C,V)$ and~$E' = (C',V')$, we say that~$E'$ is a
\emph{subelection} of~$E$ if~$C'$ is a subset of~$C$ and~$V'$ can be
obtained from~$V$ by deleting some voters and restricting the remaining
ones to the candidates from~$C'$.
We say that~$E'$ is a \emph{voter subelection} of~$E$
if we can obtain it by only deleting voters from~$E$,
and that~$E'$ is a \emph{candidate subelection} of~$E$
if we can obtain it from~$E$ by only deleting candidates. 
By the \textit{size} of an election, we mean the number of candidates multiplied by the number of voters.

As a reminder, two elections are \emph{isomorphic} if it is possible to rename their
candidates and reorder their voters so that they become
identical. Formally,
elections~$(C_1,V_1)$ and~$(C_2,V_2)$, are isomorphic if
$|C_1| = |C_2|$,~$|V_1| = |V_2|$, and there is a bijection
$\sigma \colon C_1 \rightarrow C_2$ and a permutation
$\pi \in S_{|V_1|}$ such that~$(\sigma(C_1),\sigma(\pi(V_1))) = (C_2,V_2)$.
We refer to~$\sigma$ as the candidate matching and to~$\pi$ as the
voter matching. 

Given a graph~$G$, we write~$V(G)$ to refer to its set of
vertices and~$E(G)$ to refer to its set of edges. Most of our intractability proofs follow by reductions from the
\textsc{Clique} problem. An instance of \textsc{Clique} consists of a
graph~$G$ and a nonnegative integer~$k$, and we ask if~$G$ contains
$k$ vertices that are all connected to each other. \textsc{Clique} is
well-known to be both~$\np$-complete and~$\wone$-complete, for the
parameterization by~$k$~\citep{downey1995fixed}. As all the problems that we study can easily be seen to belong to~$\np$,
in our~$\np$-completeness proofs we only give hardness arguments.

Now we are ready to introduce two extensions of the \textsc{Election Isomorphism} problem,
called \textsc{Subelection Isomorphism} and
\textsc{Maximum Common Subelection}. In the former, we are given two
elections, and we ask if the smaller one is isomorphic to a subelection
of the larger one. That is, we ask if we can remove some
candidates and voters from the larger election to make the two
elections isomorphic.

\begin{definition}
  An instance of \textsc{Subelection Isomorphism} consists of two
  elections,~$E_1 = (C_1, V_1)$ and~$E_2 = (C_2,V_2)$, such that~$|C_1| \leq |C_2|$ 
  and~$|V_1| \leq |V_2|$. We ask if there is a
  subelection~$E'$ of~$E_2$ isomorphic to~$E_1$.
\end{definition}
The \textsc{Voter-Subelection Isomorphism} problem is defined in the
same way, except that we require~$E'$ to be a voter subelection of
$E_2$. Similarly, in \textsc{Candidate-Subelection Isomorphism} we
require~$E'$ being a candidate subelection. We often abbreviate the
name of the latter problem to \textsc{Cand.-Subelection Isomorphism}.

\begin{example}\label{ex:1}
  Consider elections~$E = (C,V)$ and~$F = (D, U)$, where~$C=\{a,b,c\}$,
  ~$D = \{x,y,z,w\}$,
  ~$V=(v_1,v_2,v_3)$ and~$U=(u_1,u_2,u_3)$, with 
  preference orders:
  \begin{align*}
    &v_1 \colon a \pref b \pref c,
    &u_1 \colon w \pref x \pref y \pref z,& \\
    &v_2 \colon b \pref a \pref c,
    &u_2 \colon y \pref w \pref x \pref z,& \\
    &v_3 \colon c \pref b \pref a,
    &u_3 \colon z \pref w \pref y \pref x.
  \end{align*}
  If we remove candidate~$w$ from~$(D,U)$, then we find that the
  resulting elections are isomorphic (to see this, it suffices to match
  voters~$v_1,v_2,v_3$ with~$u_1,u_2,u_3$, respectively, and
  candidates~$a,b,c$ with~$x,y,z$). Thus~$E$ is isomorphic to a
  (candidate) subelection of~$F$ and, so,~$(E,F)$ is a
  \emph{yes}-instance of \textsc{(Cand.-)Subelection Isomorphism}.
\end{example}

In the \textsc{Maximum Common Subelection} problem,
we seek the largest isomorphic subelections of two given ones.
We often abbreviate \textsc{Maximum} as \textsc{Max.}

\begin{definition}
  An instance of \textsc{Max. Common Subelection} consists of two
  elections,~$E_1 = (C_1,V_1)$ and~$E_2 = (C_2,V_2)$, and a
  positive integer~$t$. We ask if there is a subelection~$E'_1$
  of~$E_1$ and a subelection~$E'_2$ of~$E_2$ such that~$E'_1$ 
  and~$E'_2$ are isomorphic and the size of~$E'_1$ (or equivalently, the
  size of~$E'_2$) is at least~$t$.
\end{definition}

Analogously to the case of \textsc{Subelection Isomorphism},
we also consider the \textsc{Max. Common Cand.-Subelection} and
\textsc{Max. Common Voter-Subelection} problems. In the former,~$E'_1$
and~$E'_2$ must be candidate subelections
and in the latter they need to be voter subelections
(thus, in the former problem~$E_1$ and~$E_2$ must have the
same numbers of voters, and in the latter~$E_1$ and~$E_2$ must have
the same numbers of candidates).

For each of the above-defined problems, we consider its variant
with or without the candidate or voter matching. Specifically, the
variants defined above are \emph{with no matchings}. Variants
\emph{with candidate matching} include a bijection~$\sigma$ that
matches (some of) the candidates in one election to (some of) those
in the other (in the case of \textsc{Subelection Isomorphism}
and its variants, all candidates in the smaller election must be
matched to those in the larger one;
in case of \textsc{Max. Common Subelection} there are no such requirements).
Then we ask for an isomorphism between respective subelections that agrees with
$\sigma$. In particular, this means that none of the unmatched
candidates remains in the considered subelections (another
interpretation is to assume that both input elections have the
same candidate sets).
\begin{example}
  Consider elections~$(C,V)$ and~$(D, U)$ from Example~\ref{ex:1}, and
  a matching~$\sigma$ such that~$\sigma(a)=x, \sigma(b)=w$, where~$c$,~$y$, and~$z$ are unmatched.
  After applying it and dropping the unmatched candidates,
  the votes in the first election become
  \begin{align*}
      v_1 \colon x \pref w, \ \ v_2 \colon w \pref x, \ \ v_3 \colon w \pref x,
  \end{align*}
  whereas all the voters in the second election
  have preference order~$w \pref x$. Thus, this instance of
  \textsc{Max. Common Subelection with Candidate Matching} has
  isomorphic subelections, respecting the matching~$\sigma$, of size~$2 \cdot 2 = 4$.
\end{example}
The variants \emph{with voter matching} are defined similarly: We are
given a matching between (some of) the voters from one election and
(some of) the voters from the other (and, again, for
\textsc{Subelection Isomorphism} and its variants, each voter in the
smaller election is matched to some voter in the larger one). The
sought-after isomorphism must respect this matching (again, this means
that we can disregard the unmatched voters).

The variants \emph{with both matchings} include both the matching between
the candidates and the matching between the voters (note that these
variants are not trivial because we still need to decide who to
remove). By writing \emph{all four matching cases} we mean the four
just described variants of a given problem.

Finally, we note that each variant of \textsc{Max. Common Subelection}
is at least as computationally difficult as its corresponding variant
of \textsc{Subelection Isomorphism}.

\begin{repproposition}{M reduces to S}
  \label{prop:reduction}
  Let~$M$ be a variant of \textsc{Max. Common Subelection} and let~$S$
  be a corresponding variant of \textsc{Subelection Isomorphism}. We
  have that~$S$ 
  reduces to~$M$ in polynomial time.
\end{repproposition}

\begin{proof}
  We are given a problem~$M$ which is a variant of \textsc{Max. Common
  Subelection}, and problem~$S$, which is an analogous variant of
  \textsc{Subelection Isomorphism} (so if the former only allows
  deleting candidates, then so does the latter, etc.). We want to show that~$S$ 
  reduces to~$M$ in
  polynomial time. Let~$I_S = (E_1,E_2)$ be an instance of~$S$, where~$E_1$ 
  is the smaller election. We form an instance~$I_M = (E_1,E_2,t)$ of~$M$, 
  which uses the very same elections and
  where~$t$ is set to be the size of~$E_1$. This means that in~$I_M$
  we cannot perform any operations on election~$E_1$, because that
  would decrease its size below~$t$. Therefore, we can only perform
  operations on~$E_2$, so the situation is the same as in the~$S$
  problem and in the~$I_S$ instance.
\end{proof}

\section{Computational Complexity Analysis}\label{sec:comp-complexity}

In this section, we present our complexity results. Although in most cases
we obtain intractability (recall Table~\ref{tab:sub_complexity_results} for a
summary of our results), we find that all our problems focused on
voter subelections are solvable in polynomial time, and having
candidate matchings leads to the polynomial-time algorithm for all
variants of \textsc{Subelection Isomorphism}.

All our polynomial-time results are based on the trick
used for the
case of \textsc{Election Isomorphism} in \Cref{ch:distances}. The idea is to guess a pair
of (matched) voters and use them to derive the candidate matching.

\begin{reptheorem}{Polynomial-time algorithm results}
\label{thm:poly-algo}
  \textsc{Voter-Subelection Isomorphism} and
  \textsc{Max. Common Voter-Subelection} are in~$\p$ for all four matching
  cases. \textsc{Subelection Isomorphism},
  \textsc{Cand.-Subelection Isomorphism} are in~$\p$ for cases
  with candidate matchings.
\begin{proof}
  We first give an algorithm for \textsc{Max. Common Voter-Subelection}.
  Let~$E = (C, V)$ and~$F = (D,U)$ be our input elections and let~$t$
  be the desired size of their isomorphic subelections. Since we are
  looking for a voter subelection, without loss of generality we may
  assume that~$|C| = |D|$ (and we write~$m$ to denote the number of candidates
  in each set). For each voter~$v \in V$ and each voter~$u \in U$ 
  we perform the following algorithm:
  \begin{enumerate}
  \item Denoting the preference orders of~$v$ and~$u$ as~$c_1 \pref_v c_2 \pref_v \cdots \pref_v c_m$
  and~$d_1 \pref_u d_2 \pref_u \cdots \pref_u d_m$, respectively, we
    form a bijection~$\sigma \colon C \rightarrow D$ such that for
    each~$c_i \in C$ we have~$\sigma(c_i) = d_i$.
  \item We form a bipartite graph where the voters from~$V$ form one
    set of vertices, the voters from~$U$ form the other set of
    vertices, and there is an edge between voters~$v' \in V$ 
    and~$u'\in U$ if~$\sigma(v') = u'$.
  \item We compute the maximum cardinality matching in this graph and
    form subelections that consist of the matched voters. We accept if
    their size is at least~$t$.
  \end{enumerate}
  If the algorithm does not accept for any choice of~$v$ and~$u$, we
  reject.

Very similar algorithms also work for the variants of \textsc{Max. Common
Voter-Subelection} with either one or both of the matchings: If we are
given the candidate matching, then we can omit the first step in the
enumerated algorithm above, and if we are given a voter matching then
instead of trying all pairs of voters~$v$ and~$u$ it suffices to try
all voters from the first election and obtain the other one via the
matching. Analogous algorithms also work for
\textsc{Voter-Subelection Isomorphism} (for all four matching cases)
and for all the other variants of \textsc{Subelection Isomorphism},
provided that the candidate matching is given.
\end{proof}
\end{reptheorem}

\subsection{Intractability of Subelection Isomorphism}

Next, we show the computational hardness of all the remaining variants of our problems.
In this section we consider \textsc{Subelection Isomorphism}.

\begin{theorem}\label{thm:sub-elec-np-hard}
  \textsc{Subelection Isomorphism} is~$\np$-complete 
  and~$\wone$-hard with respect to the size of the smaller election.
\end{theorem}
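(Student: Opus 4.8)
The plan is to reduce from the \textsc{Clique} problem, which the excerpt has already noted is both $\np$-complete and $\wone$-complete when parameterized by the clique size $k$. This single reduction will establish both $\np$-completeness and $\wone$-hardness with respect to the size of the smaller election simultaneously, provided I make the smaller election's size depend only on $k$ (and not on the size of the input graph $G$). Membership in $\np$ was already observed in the excerpt, so I only need the hardness direction.

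Given an instance $(G,k)$ of \textsc{Clique}, I would construct two elections $E_1$ (the smaller) and $E_2$ (the larger) so that $E_1$ embeds as a subelection of $E_2$ if and only if $G$ has a clique of size $k$. The natural encoding is to let the candidates of $E_2$ correspond to the vertices $V(G)$ (possibly with a few auxiliary ``gadget'' candidates to enforce structure), and to let the voters of $E_2$ encode the edge relation of $G$---for instance, one voter per edge, with a preference order that distinguishes the two endpoints of that edge from the remaining vertices. The smaller election $E_1$ would then be a fixed ``clique pattern'' whose candidates represent the $k$ clique vertices and whose $\binom{k}{2}$ voters encode a complete graph on those $k$ vertices. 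Removing candidates from $E_2$ corresponds to selecting a $k$-subset $S \subseteq V(G)$, and removing voters corresponds to selecting which edges to keep; the isomorphism to $E_1$ can be arranged to hold exactly when the chosen $S$ induces all $\binom{k}{2}$ edges, i.e.\ when $S$ is a clique. The key design requirement is that $|C_1|$ and $|V_1|$ depend only on $k$ (they will be roughly $k$ and $\binom{k}{2}$), so that the size of the smaller election is a function of $k$ alone, giving the $\wone$-hardness parameterized by that size.

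The main obstacle, as usual for such reductions, will be controlling the candidate and voter matchings so that a subelection isomorphic to $E_1$ is \emph{forced} to correspond to a genuine clique, ruling out spurious isomorphisms. Two difficulties must be handled: first, the votes encoding edges must be designed so that the positions of the candidates within each vote unambiguously identify which vertices an edge connects (otherwise a matching might mix up endpoints); second, I must prevent $E_1$ from embedding via some unintended matching that does not come from $k$ actual clique vertices. The standard remedy is to attach distinguishing gadget candidates---for example, a block of ``marker'' candidates ranked in a rigid, asymmetric pattern in every vote---so that any valid candidate matching $\sigma$ must respect the intended roles, and any voter selected in $E_2$ that maps to a voter of $E_1$ must correspond to an edge both of whose endpoints are among the $k$ chosen candidates. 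Verifying this rigidity (both soundness and completeness of the reduction) is where the real care lies.

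Concretely, I would prove the two directions as follows. For the forward direction, if $G$ has a clique $K$ of size $k$, I delete from $E_2$ all candidates outside $K$ (together with the gadget bookkeeping) and all voters except those encoding the $\binom{k}{2}$ edges internal to $K$; the resulting subelection is then isomorphic to $E_1$ by construction. For the reverse direction, I assume a subelection $E'$ of $E_2$ isomorphic to $E_1$, and argue---using the rigidity enforced by the marker candidates---that the $k$ surviving candidates must all be vertex-candidates whose pairwise edges are all present among the surviving voters, so they form a clique in $G$. Since the reduction runs in polynomial time and the parameter (the size of $E_1$) is bounded by a function of $k$, this yields both $\np$-completeness and $\wone$-hardness with respect to the size of the smaller election, completing the proof.
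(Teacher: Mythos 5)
Your plan follows essentially the same route as the paper's proof: a reduction from \textsc{Clique} in which vertices become candidates, edges become voters, the smaller election is the edge-encoding of the complete graph $K_k$ (so its size depends only on $k$, yielding $\np$-hardness and $\wone$-hardness simultaneously), and auxiliary marker candidates enforce the rigidity needed for soundness. The paper's concrete realization of your ``marker'' gadget is two special candidates $\alpha_H,\beta_H$ ranked at positions three and four of every vote, together with \emph{four} voters per edge (covering both orderings of the endpoints and both orderings of $\alpha_H,\beta_H$), which is precisely what makes the soundness direction you flag as delicate go through via a final counting argument on the surviving candidates.
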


\begin{proof}
  Before we describe our reduction, we first provide a method for
  transforming a graph into an election: For a graph~$H$, we let~$E_H$
  be an election whose candidate set consists of the vertices of~$H$
  and two special candidates,~$\alpha_H$ and~$\beta_H$, and whose
  voters correspond to the edges of~$H$. Specifically, for each edge~$e = \{x,y\} \in E(H)$ 
  we have four voters,~$v^1_e, \ldots v^4_e$,
  with preference orders:
  \begin{align*}
    v^1_e \colon& x \pref y \pref \alpha_H \pref \beta_H \pref V(H) \setminus \{x,y\}, \\
    v^2_e \colon& x \pref y \pref \beta_H \pref \alpha_H \pref V(H) \setminus \{x,y\}, \\
    v^3_e \colon& y \pref x \pref \alpha_H \pref \beta_H \pref V(H) \setminus \{x,y\}, \\
    v^4_e \colon& y \pref x \pref \beta_H \pref \alpha_H \pref V(H) \setminus \{x,y\}.
  \end{align*}
  Note that elements from the set~$V(H) \setminus \{x,y\}$ are always in the same order.
  We give a reduction from \textsc{Clique}. Given an instance~$(G,k)$
  of \textsc{Clique}, where~$G$ has at least~$k$ vertices 
  and~$\binom{k}{2}$ edges,
  we let~$K$ be a size-$k$ complete graph and we form 
  an instance~$(E_K, E_G)$ of \textsc{Subelection Isomorphism}. The reduction
  runs in polynomial time and it remains to show its correctness.

  First, let us assume that~$G$ has a size-$k$ clique. Let~$X$ be the
  set of its vertices and let~$Y$ be the set of its edges. We form a
  subelection~$E'$ of~$E_G$ by removing all the candidates that are
  not in~$X \cup \{\alpha_G, \beta_G\}$ and removing all the voters that
  do not correspond to the edges from~$Y$. One can verify that~$E'$ 
  and~$E_K$ are, indeed, isomorphic.

  Second, let us assume that~$E_K$ is isomorphic to some 
  subelection~$E'$ of~$E_G$. We will show that this implies that~$G$ has a
  size-$k$ clique. First, we claim that~$E'$ includes 
  both~$\alpha_G$ and~$\beta_G$. To see why this is so, consider the
  following two cases:
  \begin{enumerate}
  \item If~$E'$ contained exactly one of~$\alpha_G, \beta_G$, then this
    candidate would appear in every vote in~$E'$ among the top three
    positions. Yet, in~$E_K$ there is no candidate with this property,
    so~$E'$ and~$E_K$ would not be isomorphic.
  \item If~$E'$ contained neither~$\alpha_G$ nor~$\beta_G$ then every
    vote in~$E'$ would rank some vertex candidates~$z$ and~$w$ on
    positions three and four
    (to be able to match~$\alpha_K$ and~$\beta_K$ to them).
    However, by the construction of~$E_G$,
    either in every vote from~$E'$ we would have~$z \pref w$ or in
    every vote from~$E'$ we would have~$w \pref z$. Since in~$E_K$
    half of the voters rank the candidates from positions three and
    four in the opposite way,~$E'$ and~$E_K$ would not be isomorphic.
  \end{enumerate}
  Thus~$\alpha_G$ and~$\beta_G$ are included in~$E'$.
  Moreover~$\alpha_G$ and~$\beta_G$ are matched 
  with~$\alpha_K$ and~$\beta_K$ because they are the only
  candidates from~$E_G$ that can appear on positions three and four
  in every vote in~$E'$ but possibly in different order.
  As a consequence, for each vote~$v$ from~$E_G$ that appears in~$E'$,
  the candidate set of~$E'$ must include
  the two candidates from~$V(G)$ that~$v$ ranks
  on top (if it were not the case, then~$E'$ would contain a
  candidate---either~$\alpha_G$ or~$\beta_G$---that appeared in all
  the votes within the top four positions and in some vote within top
  two positions; yet~$E_K$ does not have such a candidate).
  This means that for each edge~$e \in E(G)$, if~$E'$ contains some
  voter~$v^i_e$ for~$i \in [4]$, then it also contains the other
  voters corresponding to~$e$ (otherwise,~$E'$ and~$E_K$ would not be
  isomorphic).
  The number of voters in~$E_K$ is~$4\binom{k}{2}$,
  and the number of distinct corresponding edges 
  from~$G$ is~$\binom{k}{2}$.
  As said before, for each such chosen edge, we also choose two
  corresponding vertices as candidates.
  It means that the number of chosen candidates
  (except~$\alpha_G$ and~$\beta_G$) is between~$k$ and~$2\binom{k}{2}$.
  However, the number of candidates in~$E_K$ except~$\alpha_K$ 
  and~$\beta_K$ is~$k$, therefore we conclude that chosen vertex-candidates
  form a size-$k$ clique in~$G$.
  This completes the proof of~$\np$-hardness.

  To show~$\wone$-hardness, note that the number of candidates and
  voters in the smaller election equals~$k+2$ and~$4\binom{k}{2}$
  respectively, hence the size of the smaller election is a function of
  parameter~$k$ for which \textsc{Clique} is~$\wone$-hard.
\end{proof}

Next, we consider \textsc{Cand.-Subelection Isomorphism}.
In this problem both elections have the same number of voters, and
we ask if we can delete candidates from the one that has more,
so that they become isomorphic.
We first show that this problem is~$\np$-complete for the
case where the voter matching is given (which also proves the same
result for \textsc{Subelection Isomorphism with Voter Matching}) and
next we describe how this proof can be adapted to the variant without
any matchings (the variant with candidate matching is in~$\p$ and was
considered in the preceding section).

\begin{reptheorem}{Cand-Subelection with voter matching hardness}\label{thm:subelection-voter-candidate}
  \textsc{Subelection Isomorphism with Voter Matching} and
  \textsc{Cand.-Subelection Isomorphism with Voter Matching} are~$\np$-complete.
\end{reptheorem}

\begin{proof}
  It suffices to consider \textsc{Cand.-Subelection Isomorphism with
  Voter Matching}. We give a reduction from the \textsc{Exact Cover
  by 3-Sets} problem \textsc{(X3C)}. An instance of \textsc{X3C}
  consists of a set~$X = \{x_1, \ldots, x_m\}$ of elements and a
  family~$\calS = \{S_1, \ldots, S_n\}$ of three-element subsets of~$X$. We ask if~$\calS$ contains a subfamily~$\calS'$ such that each
  element from~$X$ belongs to exactly one set from~$S'$. Given such
  an instance, we form two elections,~$E_1$ and~$E_2$, as follows.

  Election~$E_1$ will be our smaller election and~$E_2$ will be the
  larger one. We let~$X$ be the candidate set for election~$E_1$,
  whereas to form the candidate set of~$E_2$ we proceed as follows.
  For each set~$S_t = \{x_i, x_j, x_k\} \in \calS$, we introduce
  candidates~$s_{i,t}$,~$s_{j,t}$, and~$s_{k,t}$. Intuitively, if some
  candidate~$s_{u,t}$ remains in a subelection isomorphic to~$E_1$, 
  then we will interpret this fact as saying that element~$x_u$
  is covered by set~$S_t$; this will, of course, require introducing
  appropriate consistency gadgets to ensure that~$S_t$ also covers its
  other members. For each~$i \in [m]$, we let~$P_i$ be the set of all
  the candidates of the form~$s_{i,t}$, where~$t$ belongs to~$[n]$ (in
  other words, each~$P_i$ contains the candidates from~$E_2$ that
  are associated with element~$x_i$).

  \begin{example}
    Let~$X=\{x_1,x_2,x_3,x_4,x_5,x_6\}$ 
    and~$\calS=\{S_1, \ldots, S_4\}$, 
    where
    \begin{align*}
    S_1 = \{x_1,x_2,x_5\}, S_2 = \{x_1,x_3,x_6\}, S_3 = \{x_2,x_3,x_5\}, S_4 = \{x_3, x_4,x_6\} \}. 
    \end{align*}
    Then
    \begin{align*}
    &P_1 = \{s_{1,1}, s_{1,2}\}, &P_2 = \{s_{2,1}, s_{2,3}\}, \ \ \ \ \ \ \ \ \ \  &P_3 = \{s_{3,2}, s_{3,3}, s_{3,4}\}, \\
    &P_4 = \{s_{4,4}\}, &P_5 = \{s_{5,1}, s_{5,3}\},  \ \ \ \ \ \ \ \ \ \  &P_6 = \{s_{6,2}, s_{6,4}\}.
    \end{align*}
  \end{example}

  We denote the voter collection of election~$E_1$ 
  as~$V = (v, v', v_1, v'_1, \ldots, v_{n}, v'_n)$ and the voter
  collection of~$E_2$ 
  as~$U = (u, u', u_1, u'_1, \ldots, u_{n}, u'_n)$. Voters~$v$ and~$v'$
  are matched to voters~$u$ and~$u'$, respectively, and for 
  each~$i \in [2n]$,~$v_i$ is matched to~$u_i$ and~$v'_i$ is matched 
  to~$u'_i$. The preference orders of the first two pairs of voters are:
  \begin{align*}
    v  &\colon x_1 \succ x_2 \succ \dots \succ x_m, &
    u  &\colon P_1 \succ P_2 \succ \dots \succ P_m, \\
    v' &\colon x_1 \succ x_2 \succ \dots \succ x_m, &
    u' &\colon
      \overleftarrow{P}_1 \succ \overleftarrow{P}_2 \succ \dots \succ
      \overleftarrow{P}_m.
  \end{align*}
  Next, for each~$t \in [n]$ such that~$S_t = \{x_i,x_j,x_k\}$,
    ~$i < j < k$, we let the preference orders of~$v_{t}$,~$v'_{t}$ and
  their counterparts from~$U$ be as follows (by writing ``$\cdots$'' in
  the votes from~$E_1$ we mean listing the candidates 
  from~$X \setminus \{x_i,x_j,x_k\}$ in the order of increasing indices, and for
  the voters from~$E_2$ by ``$\cdots$'' we mean 
  order~$P_1 \pref P_2 \pref \cdots \pref P_m$ with~$P_i$,~$P_j$, and~$P_k$
  removed):
  \begin{align*}
    v_{t} & \colon x_{i} \pref x_{j} \pref x_{k} \pref \cdots,\\
    u_{t} & \colon s_{i,t} \pref s_{j,t} \pref s_{k,t}
        \pref P_{i} \setminus \{s_{i,t}\} \pref P_{j} \setminus \{s_{j,t}\} \pref P_{k} \setminus \{s_{k,t}\}
        \pref \cdots, \\
    v'_{t} & \colon x_{k} \pref x_{j} \pref x_{i} \pref \cdots,\\
    u'_{t} & \colon s_{k,t} \pref s_{j,t} \pref s_{i,t}
        \pref
        P_{k} \setminus \{s_{k,t}\} \pref P_{j} \setminus \{s_{j,t}\} \pref
        P_{i} \setminus \{s_{i,t}\} \pref \cdots.
  \end{align*}
  This finishes our construction. It is clear that it is
  polynomial-time computable and it remains to show that it is
  correct.

  Let us assume that we have a \emph{yes}-instance of \textsc{X3C},
  that is, there is a family~$\calS'$ of sets from~$\calS$ such that
  each element from~$X$ belongs to exactly one set from~$\calS'$. We
  form a subelection~$E'$ of~$E_2$ by deleting all the 
  candidates~$s_{i,t}$ except for those for whom set~$S_t$ belongs to~$\calS'$.
  Then, let~$\sigma$ be a function such that for each~$x_i \in X$ we
  have~$\sigma(x_i) = s_{i,t}$, where~$S_t$ is a set from~$\calS'$
  that contains~$x_i$. Together with our voter matching,~$\sigma$
  witnesses that~$E_1$ and~$E'$ are isomorphic.

  For the other direction, let us assume that~$E_2$ has 
  a subelection~$E'$ that is isomorphic to~$E_1$ and let~$C'$ be its candidate set.
  First, we claim that for each~$i \in [m]$ exactly one candidate 
  from~$P_i$ is included in~$C'$. Indeed, if it were not the case, then~$u$
  and~$u'$ would not have identical preference orders, as required by
  the fact that they are matched to~$v$ and~$v'$. Second, we note
  that for each~$i \in [m]$ the candidate matching that witnesses our
  isomorphism must match~$x_i$ with the single candidate 
  in~$P_i \cap C'$.
  Finally, we claim that if some candidate~$s_{i,t}$ is included 
  in~$C'$, where~$S_t = \{x_i,x_j,x_k\}$,~$i < j < k$, then 
  candidates~$s_{j,t}$ and~$s_{k,t}$ are included in~$C'$ as well. Indeed, 
  if~$s_{j,t}$ were not included in~$C'$, then~$u_t$ and~$u'_t$ would
  rank the members of~$P_i \cap C'$ and the members of~$P_j \cap C'$ in
  the same order, whereas~$v_j$ and~$v'_j$ rank~$x_i$ and~$x_j$ in
  opposite orders (and, by the second observation,~$x_i$ and~$x_j$ are
  matched to the member of~$P_i \cap C'$ and~$P_j \cap C'$,
  respectively). The same argument applies to~$x_{k,t}$.
  We say that a set~$S_t = \{x_i, x_j, x_k\} \in \calS$ is selected 
  by~$C'$ if the candidates~$s_{i,t}$,~$s_{j,t}$, and~$s_{k,t}$ belong to~$C'$. 
  By the above reasoning, we see that exactly~$n/3$ sets are
  selected and that they form an exact cover of~$X$.
\end{proof}

\textsc{Cand.-Subelection Isomorphism}
remains~$\np$-complete also without the voter matching.
By doubling the voters and using a few extra candidates, we ensure that
only the intended voter matching is possible.

\begin{repproposition}{Cand.-Subelection Isomorphism problem}
\label{prop:can-subelection}
\textsc{Cand.-Subelection Isomorphism} is~$\np$-complete.
\end{repproposition}

\begin{proof}
  We give a reduction from \textsc{Cand.~Subelection Isomorphism with
  Voter Matching}. Let the input instance be~$(E_1,E_2)$, where the
  smaller election,~$E_1$, has voter collection~$(v_1, \ldots, v_n)$
  and the larger election,~$E_2$, has voter 
  collection~$(u_1, \ldots, u_n)$. Additionally, for each~$i \in [n]$ voter~$v_i$ is
  matched with voter~$u_i$. We form elections~$E'_1$ and~$E'_2$ in
  the following way. The candidate set of~$E'_1$ is the same as that
  of~$E_1$ except that it also includes candidates from the 
  set~$D = \{d_1, \ldots, d_{2n}\}$. Similarly,~$E'_2$ contains the same
  candidates as~$E_2$ plus the candidates from the 
  set~$F = \{f_1, \ldots, f_{2n}\}$. The voter collections of~$E'_1$ 
  and~$E'_2$ are, respectively,~$(v'_1, v''_1, \ldots, v'_n,v''_n)$ 
  and~$(u'_1, u''_1, \ldots, u'_n, u''_n)$. For each~$i \in [n]$ these
  voters have the following preference orders (by writing~$[v_i]$ 
  or~$[u_i]$ in a preference order we mean inserting the preference order
  of voter~$v_i$ or~$u_i$ in a given place:
  \begin{align*}
    v'_{i} \colon & d_{2i-1} \pref d_1 \pref \cdots \pref d_{2n} \pref [v_i], \\
    u'_{i} \colon & f_{2i-1} \pref f_1 \pref \cdots \pref f_{2n} \pref [u_i], \\
    v''_{i} \colon & d_{2i} \pref d_1 \pref \cdots \pref d_{2n} \pref [v_i], \\
    u''_{i} \colon & f_{2i} \pref f_1 \pref \cdots \pref f_{2n} \pref [u_i].
  \end{align*}
  We claim that~$E'_1$ is isomorphic to a candidate subelection 
  of~$E'_2$ if and only if~$E_1$ is isomorphic to a candidate subelection
  of~$E_2$ with the given voter matching. In one direction this is
  clear: If~$E_1$ is isomorphic to a subelection of~$E_2$ with a given
  voter matching, then it suffices to use the same voter matching (extended in the obvious way) for
  the case of~$E'_1$ and~$E'_2$, and the same candidate matching,
  extended with matching each candidate~$d_i$ to~$f_i$.

  Next, let us assume that~$E'_1$ is isomorphic to some 
  subelection~$E'$ of~$E'_2$. By a simple counting argument, we note that~$E'$
  must contain some candidates not in~$F$. Further, we also note that
  it must contain all members of~$F$. Indeed, each voter in~$E'_1$ has
  a different candidate on top, and this would not be the case in~$E'$
  if it did not include all members of~$F$ (if~$E'$ did not include
  any members of~$F$ then this would hold for each two votes~$u'_i$
  and~$u''_i$, and if~$E'$ contained some members of~$F$ but not all
  of them, then this would hold because each voter in~$E'$
  would rank some member of~$F$ on top, but there would be fewer
  members of~$F$ than voters in the election).

  As a consequence, every candidate matching~$\sigma$ that witnesses
  isomorphism between~$E'_1$ and~$E'$ matches some member of~$D$ to
  some member of~$F$. Furthermore, we claim that for each~$i \in [2n]$,
    ~$\sigma(d_i) = f_i$. For the sake of contradiction, let us assume
  that this is not the case and consider some~$i \in [2n-1]$ for which
  there are~$j$ and~$k$ such that~$\sigma(d_i) = f_j$,
    ~$\sigma(d_{i+1}) = f_k$ and~$j > k$ (such~$i$,~$j$,~$k$ must exist
  under our assumption). However, in~$E'_1$, all but one voter 
  rank~$d_i$ ahead of~$d_{i+1}$, while in~$E'$ all but one voter 
  rank~$\sigma(d_{i+1})$ ahead of~$\sigma(d_i)$. Thus~$\sigma$ cannot
  witness isomorphism between~$E'_1$ and~$E'$.

  Finally, since for each~$i \in [2n]$ we have that~$d_i$ is matched
  to~$f_i$, it also must be the case that for each~$j \in [n]$ 
  voters~$v'_j$ and~$v''_j$ are matched to~$u'_j$ and~$u''_j$, respectively
  (indeed,~$v'_j$ is the only voter who ranks~$d_{2j-1}$ on top, 
  and~$u'_j$ is the only voter who ranks~$f_{2j-1}$ on top; the same
  argument works for the other pair of voters). As a consequence, we
  have that~$E_1$ is isomorphic to a subelection of~$E_2$ under the
  voter matching that for each~$i \in [n]$ matches~$v_i$ to~$u_i$.
\end{proof}

\subsection{Intractability of Max. Common Subelection}

Perhaps the most surprising result regarding
\textsc{Max. Common Subelection} is that it is
$\np$-complete even when both matchings are given.
The surprise stems from the fact that \textsc{Isomorphism
  Distance} problems (i.e., computing Spearman or Swap distances between elections) are solvable in
polynomial-time given both matchings.
We first show this result for
candidate subelections.

\begin{reptheorem}{Common-Cand with both matching hardness}\label{thm:common-cand-both-np-hard}
\textsc{Max. Common Cand.-Subelection with Both Matchings} is~$\np$-complete and~$\wone$-complete with respect to the candidate set size of isomorphic candidate subelections.
\end{reptheorem}

\begin{proof}
  We give a reduction from the \textsc{Clique} problem,
  where the idea is to encode the adjacency matrix of a given graph
  by a pair of elections with both matchings defined.
  Missing edges in the graph we encode as a conflict on candidate
  ordering within matched voters.

  Formally, given an instance~$(G,k)$ of \textsc{Clique}, we form
  two elections,~$E_1 = (C,V_1)$ and~$E_2 = (C,V_2)$,
  where~$C = V(G)$.
  Since we need to provide an instance with candidate matching,
  we simply specify both elections over the same candidate set.
  Without loss of generality, we assume that~$V(G) = \{1, \ldots, n\}$.
  For each~$x \in V(G)$ we define the neighborhood of~$x$ in~$G$
  as~$N(x) = \{y \in V(G): \{x,y\} \in E(G)\}$ and the set of
  non-neighbors as~$M(x) = V(G) \setminus \{N(x) \cup \{x\}\}$.

  For each vertex~$x \in V(G)$ we define two matched voters,
  ~$v^1_x$ in~$E_1$ and~$v^2_x$ in~$E_2$, defined as follows:
  \begin{align*}
    &v^1_x \colon M(x) \pref x \pref N(x),& \\
    &v^2_x \colon x \pref M(x) \pref N(x).&
  \end{align*}
  We ask if~$E_1$ and~$E_2$ have isomorphic candidate subelections
  that contain at least~$k$ candidates each.
  Intuitively, in a solution to the problem, for each vertex $x$ one has to remove
  either~$x$ or all vertices from~$M(x)$.
  It is a direct definition of a clique: Either~$x$ is not in
  a clique or all its nonneighbors are not in a clique.
  It is clear that the reduction can be computed in polynomial time
  and it remains to show its correctness.

  First, let us assume that~$G$ has a size-$k$ clique. Let~$K$ be the
  set of this clique's vertices. We form elections~$E'_1$ and~$E'_2$
  by restricting~$E_1$ and~$E_2$ to the candidates from~$K$.
  To verify that~$E'_1$ and~$E'_2$ are isomorphic via the given
  matchings, let us consider an arbitrary pair of matched 
  voters~$v^1_x$ and~$v^2_x$.
  If~$x$ is not included in~$K$ then preference orders of~$v^1_x$
  and~$v^2_x$ restricted to~$K$ are identical.
  Indeed, removing even only~$x$ from the set of candidates
  makes~$v^1_x$ and~$v^2_x$ identical.
  Otherwise, if~$x$ is in~$K$ then~$K \cap M(x) = \emptyset$ 
  as~$K$ is a clique. Therefore, removing~$M(x)$ from the set of
  candidates makes~$v^1_x$ and~$v^2_x$ identical.

  For the other direction, let us assume that there are subelections~$E'_1$ 
  and~$E'_2$ of~$E_1$ and~$E_2$, respectively, each with
  candidate set~$K$, such that~$|K| \geq k$ and~$E'_1$ and~$E'_2$ are
  isomorphic via the given matchings. It must be the case that the
  vertices from~$K$ form a clique because if~$K$ contained two
  vertices~$x$ and~$y$ that were not connected by an edge, then 
  votes~$v^1_x$ and~$v^2_x$ would not be identical.
  Indeed, we would have~$y \pref x$ in~$v^1_x$ 
  and~$x \pref y$ in~$v^2_x$, respectively, when restricted to candidates
  from~$K$. This completes the proof of~$\np$-hardness.
    To show~$\wone$-hardness, note that the required number of
  candidates in isomorphic candidate subelections is equal to the
  parameter~$k$ for which \textsc{Clique} is~$\wone$-hard.



    Let us now give a reduction in the opposite direction. Let~$E_1 = (C,V_1)$ and~$E_2 = (C,V_2)$ be our input elections and let~$k$ be the number of candidates in maximum isomorphic candidate subelections (since we are in the ``with candidate matching'' regime, we take the candidate sets to be equal).
  Let~$m = |C|$,~$n = |V_1| = |V_2|$ (since we cannot remove the voters).

  We create an instance~$(G,k)$ of \textsc{Clique} as follows.
  We define~$G$ as having vertices corresponding to candidates, i.e.,~$V(G) = C$.
  We construct the set of edges by starting from a complete graph and removing some of them as follows.
  For every two matched voters~$v$ and~$u$ and every two candidates~$x$ and~$y$ such that~$x \pref_v y$ and~$y \pref_u x$, we remove edge~$\{x,y\}$ from the graph.
  It is clear that the reduction can be computed in polynomial time and both parameters have the same value.
  It remains to show its correctness.

  First, let us assume that there are subelections~$E'_1$ and~$E'_2$ of~$E_1$ and~$E_2$, respectively, each with
  candidate set~$K$, such that~$|K| \geq k$ and~$E'_1$ and~$E'_2$ are
  isomorphic via the given matchings.
  It must be the case that the vertices from~$K$ form a clique.
  Indeed, if~$K$ contained two vertices~$x$ and~$y$ that were not connected by an edge,
  then edge~$\{x,y\}$ had to be removed by some two matched voters~$v$ and~$u$ such that~$x \pref_v y$ and~$y \pref_u x$.
  Since both voters belong to subelections~$E'_1$ and~$E'_2$, we obtain a contradiction that they are isomorphic via the given matchings.

  For the other direction, let us assume that~$G$ has a size-$k$ clique.
  Let~$K$ be the set of this clique's vertices.
  We form elections~$E'_1$ and~$E'_2$ by restricting~$E_1$ and~$E_2$ to the candidates from~$K$.
  To verify that~$E'_1$ and~$E'_2$ are isomorphic via the given matchings, let us consider an arbitrary pair of matched voters~$v$ and~$u$ and arbitrary pair of candidates~$x,y \in K$.
  It follows that~$x \pref_v y$ and~$x \pref_u y$.
  Otherwise edge~$\{x,y\}$ would have been removed during the reduction, hence~$K$ would not be a clique.
  A contradiction.
\end{proof}

All the remaining variants of \textsc{Max. Common Cand.-Subelection}
also are~$\np$-complete. The proofs follow either by applying
Proposition~\ref{prop:reduction} or by introducing candidates that
implement a required voter matching. In the latter case,
$\wone$-hardness does not follow from this reduction as we introduce
dummy candidates that have to be included in a solution, but their
number is not a function of the \textsc{Clique} parameter (clique
size).

\begin{repproposition}{Max. Common Cand.-Subelection problem with
    candidate matching}
    \label{prop:mccs-cm-vm}
    \textsc{Max. Common Cand.-Subelection} is~$\np$-complete and so are its
    variants with a given candidate matching and with a given voter matching.
\end{repproposition}

\begin{proof}
  Below we give the reductions for all the three cases, i.e., the case
  with a given candidate matching, with a given voter matching, and
  without any matchings.

  \paragraph{The case with a given candidate matching.}
  We give a reduction from \textsc{Max. Common Cand.-Subelection with
  both Matchings}. Let~$E_1 = (C,V)$ and~$E_2 = (C,U)$ be our input
  elections, where~$V = (v_1, \ldots, v_n)$ and~$U = (u_1, \ldots, u_n)$, and let~$t$ be the desired size of the
  isomorphic subelection (since we are in the setting with both
  matchings, we can assume that both elections are over the same
  candidate set). We assume that for each~$i \in [n]$ voter~$v_i$ is
  matched to~$u_i$. Let~$m = |C|$ and let~$k = t/n$. We note that~$k \leq m$.

  Our construction proceeds as follows. First, we form~$m+1$ sets,~$A$,~$D_1, \ldots, D_m$, each containing~$m+1$ new candidates. Let~$\calD = A \cup D_1 \cup \cdots \cup D_m$. Note that~$|\calD| = (m+1)^2$. We form elections~$E'_1 = (C \cup \calD, V')$
  and~$E'_2 = (C \cup \calD, U')$, where~$V' = (v'_1, \ldots, v'_n)$
  and~$U' = (u'_1, \ldots, u'_n)$. For each~$i \in [n]$, we set their
  preference orders as follows (by writing~$[v_i]$ or~$[u_i]$ we mean
  copying the preference order of the respective voter):
  \begin{align*}
    v'_i \colon& D_1 \pref \cdots \pref D_{i-1} \pref A \pref D_{i} \pref \cdots \pref D_{m} \pref [v_i],\\
    u'_i \colon& D_1 \pref \cdots \pref D_{i-1} \pref A \pref D_{i} \pref \cdots \pref D_{m} \pref [u_i].
  \end{align*}
  Finally, we set the desired size of the isomorphic subelections
  to be~$t' = n \cdot (k + (m+1)^2) = t + n(m+1)^2$.

  We claim that~$E'_1$ and~$E'_2$ have isomorphic candidate
  subelections of size~$t'$ for the given candidate matching if and
  only if~$E_1$ and~$E_2$ have isomorphic candidate subelections of
  size~$t$ for given candidate and voter matchings.

  Let us assume that~$E'_1$ and~$E'_2$ have the desired candidate
  subelections,~$E''_1$ and~$E''_2$. We claim that their isomorphism
  is witnessed by such a matching that for each~$i$ voter~$v'_i$ is
  matched to~$u'_i$. If it were not the case, then to maintain the
  isomorphism these subelections would have to lose at least~$m-1$
  candidates from~$\calD$ (e.g., the candidates from~$A$) and their
  sizes would be at most~$n(m+m(m+1)) = n((m+1)^2-1) < t'$. Thus the
  isomorphism of~$E''_1$ and~$E''_2$ is witnessed by the same voter
  matching as the one required by our input instance. A simple
  counting argument shows that after dropping candidates from~$\calD$
  from subelections~$E''_1$ and~$E''_2$, we obtain elections
  witnessing that~$(E_1,E_2)$ is a \emph{yes}-instance of
  \textsc{Max. Common Cand.-Subelection with both Matchings}. The
  reverse direction is immediate.

  \paragraph{The case with a given voter matching.}
  This case follows by Proposition~\ref{prop:reduction} and the fact that
  \textsc{Cand.-Subelection with Voter Matching} is~$\np$-complete.

  \paragraph{The case without any given matchings.}
  This case follows by Proposition~\ref{prop:reduction} and the fact
  that \textsc{Cand.-Subelection Isomorphism} problem is~$\np$-complete.
\end{proof}

Similarly to all four matching cases of
the \textsc{Max. Common Cand.-Subelection}, all four matching cases of the
\textsc{Max. Common Subelection} also are~$\np$-complete.

\begin{repproposition}{All 4 MCS}
\label{prop:mcs}
All four matching cases of \textsc{Max. Common Subelection} are~$\np$-complete.
\end{repproposition}

\begin{proof}
  For the case without any matchings and the case with the voter
  matching, we use Proposition~\ref{prop:reduction} to reduce from the
  corresponding variant of \textsc{Subelection Isomorphism}.
  For the variants that include the candidate matching (for which
  \textsc{Subelection Isomorphism} is in~$\p$), we reduce from the
  corresponding variants of \textsc{Max. Common Cand.-Subelection}.
  Let~$E_1 = (C,V_1)$ and~$E_2 = (C,V_2)$ be our input elections and
  let~$t$ be the desired size of their isomorphic candidate
  subelections (since we are in the ``with candidate matching''
  regime, we take the candidate sets to be equal). Without loss of
  generality, we can assume that~$|V_1| = |V_2|$; 
  our~$\np$-completeness proofs for \textsc{Max. Common Cand.-Subelection}
  give such instances.

  Let~$m = |C|$,~$n = |V_1| = |V_2|$, and let~$D$ be a set of~$(n-1)m$
  dummy candidates. We form elections~$E'_1$ and~$E'_2$ to be
  identical to~$E_1$ and~$E_2$, respectively, except that they also
  include the candidates from~$D$, who are always ranked on the
  bottom, in the same order.
  Therefore, the number of candidates in~$E'_1$ and~$E'_2$ equals~$nm$.
  We ask if~$E'_1$ and~$E'_2$ have
  isomorphic subelections of size~$t' = t + n(n-1)m$.

  If~$E_1$ and~$E_2$ have isomorphic candidate subelections of 
  size~$t$, then certainly~$E'_1$ and~$E'_2$ have isomorphic subelections
  of size~$t'$ (it suffices to take the same subelections as 
  for~$E_1$ and~$E_2$ and include the candidates from~$D$).

  On the other hand,
  if~$E'_1$ and~$E'_2$ have isomorphic subelections of size
~$t'$, then~$E_1$ and~$E_2$ have size-$t$ isomorphic candidate
  subelections. In fact, the subelections of~$E'_1$ and~$E'_2$ must
  include all the~$n$ voters. Otherwise their sizes would
  be at most~$(n-1)mn < t+(n-1)mn \leq t'$.
  Thus the subelections of~$E'_1$ and~$E'_2$ are candidate
  subelections. As we can also assume that the subelections of~$E'_1$ 
  and~$E'_2$ include all the candidates from~$D$, by omitting
  these candidates we get the desired candidate subelections of~$E_1$
  and~$E_2$.
\end{proof}

\vspace{0.2cm}
\begin{conclusionsbox}
\begin{itemize}
    \item All problems related to \textsc{Voter-Subelection} are in P. On the other hand, general Subelection problems and \textsc{Cand-Subelection} ones tend to be NP-hard. The \emph{isomorphic} variants, given the candidate matching, become significantly simpler (shifting to P). However, the most interesting (or surprising) is the fact that the \textsc{Max. Common Subelection} problem with both matchings remains NP-complete.
\end{itemize}
\end{conclusionsbox}

\section{Experiments}\label{ch:sub:sec:experiments}

Next we use the \textsc{Max. Common Voter-Subelection} problem to analyze
similarity between elections generated from various
statistical models. While \textsc{Max. Common Voter-Subelection} has
a polynomial-time algorithm, it is too slow for our purposes.
Thus we have expressed it as an integer linear program (ILP) and
we were solving it using the CPLEX ILP solver.
A formal ILP formulation is as follows.
\begin{enumerate}
\item For each pair of voters~$v \in V$ and~$u \in U$, we have a binary
  variable~$N_{v,u}$. If it is set to~$1$, then we interpret it as
  saying that voter~$v$ is included in the subelection of~$E$,
  voter~$u$ is included in the subelection of~$F$, and the two
  voters are matched. Value~$0$ means that the preceding statement
  does not hold.
\item For each pair of candidates~$c \in C$ and~$d \in D$, we have a
  binary variable~$M_{c,d}$. If it is set to~$1$ then we interpret it
  as saying that~$c$ is matched to~$d$ in isomorphic subelections
  (note that, since we are looking for voter subelections, every
  candidate from~$C$ has to be matched to some candidate from~$D$, and
  the other way round).
\end{enumerate}
To ensure that variables~$N_{v,u}$ and~$M_{c,d}$ describe the
respective matchings, we have the following basic constraints:
\begin{align*}
  &\textstyle \sum_{u \in U}N_{v,u} \leq 1, \;\; \forall v \in V,&
  &\textstyle \sum_{d \in D}M_{c,d} = 1, \;\; \forall c \in C, \\
  &\textstyle \sum_{v \in V}N_{v,u} \leq 1, \;\; \forall u \in U,&
  &\textstyle \sum_{c \in C}M_{c,d} = 1, \;\; \forall d \in D.
\end{align*}
For each pair of voters~$v \in V$,~$u \in U$ and each pair of
candidates~$c \in C$ and~\mbox{$d \in D$}, we introduce constant
$w_{v,u,c,d}$ which is set to~$1$ if~$v$ ranks~$c$ on the same
position as~$u$ ranks~$d$, and which is set to~$0$ otherwise. We use
these constants to ensure that the matchings specified by variables
$N_{v,u}$ and~$M_{c,d}$ indeed describe isomorphic
subelections. Specifically, we have the following constraints (let
$m = |C| = |D|$):
\begin{align*}
  \textstyle \sum_{c \in C}\sum_{d \in D} w_{v,u,c,d}\cdot M_{c,d} \geq m \cdot N_{v,u}, \quad \forall v \in V, u \in U.
\end{align*}
For each~$v \in V$ and~$u \in U$, they ensure that if~$v$ is matched
to~$u$ then each candidate~$c$ appears in~$v$ on the same position as
the candidate matched to~$c$ appears in~$u$.

We stress that we could have used other problems from the \textsc{Max. Common Subelection} family in this section.
We chose \textsc{Max. Common Voter-Subelection} because its outcomes are particularly easy to interpret, which is not always the case for \textsc{Max. Common Subelection}. For example, in \textsc{Max. Common Subelection} problem if the resulting value is~$k$, then we do not know if it is due to an election with one vote over~$k$ candidates or an election with~$k$ voters voting for a single candidate, or (if~$k$ is not a prime number) something in between.

Our findings are similar to those presented in the previous chapters, but our claims of similarity
between statistical cultures are stronger, whereas our dissimilarity
claims are weaker. Further, our results are most appealing for very small numbers of candidates, whereas 
in the preceding chapters we focused on larger candidates sets. 

\subsection{Results and Analysis}
We study the following nine models: IC, 1D-Interval, Conitzer model, Walsh model, urn (with $\alpha \in \{0.1, 0.5\}$), Norm-Mallows (with $\mathrm{norm}$-$\phi \in \{\nicefrac{1}{3}, \nicefrac{2}{3}\}$), and identity.
We consider elections with~$4$,~$6$,~$8$, and~$10$ candidates and with
$50$ voters.
For each scenario and
each two of the selected models, we have generated~$1000$ pairs
of elections. For each pair of
models, we recorded the average number of voters in the maximum common
voter subelections (normalized by fifty, i.e., the number of voters in
the original elections), as well as the standard deviation of this
value. 

We show our numerical results in Figure~\ref{fig:sub_experiment}
(each cell corresponds to a pair of models; the number in the top-left
corner is the average, and the one in the bottom-right corner is the
standard deviation). Note that the matrices in
Figure~\ref{fig:sub_experiment} are symmetric (the results for models~$A$ and
$B$ are the same as for models~$B$ and~$A$).

\begin{figure}[]
    \centering
    \begin{subfigure}[b]{0.49\textwidth}
        \centering
        \includegraphics[width=6cm]{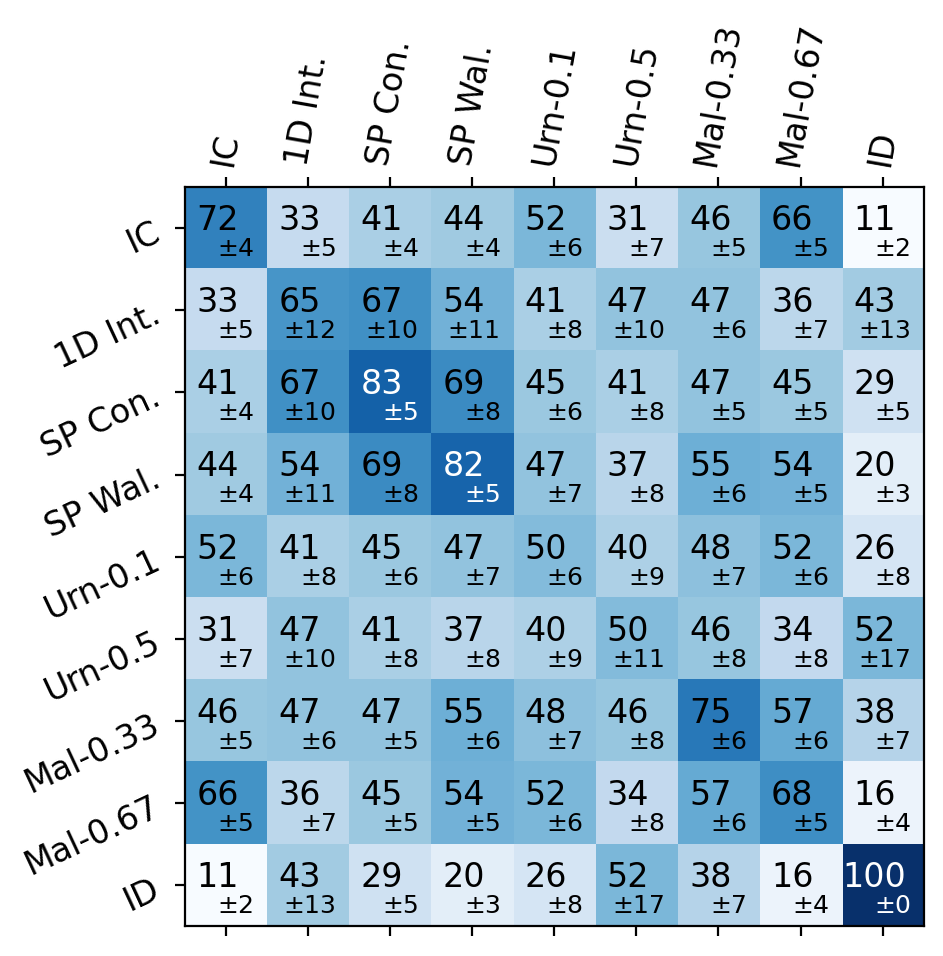}
        \caption{4 candidates \& 50 voters}
    \end{subfigure}
    \begin{subfigure}[b]{0.49\textwidth}
        \centering
        \includegraphics[width=6cm]{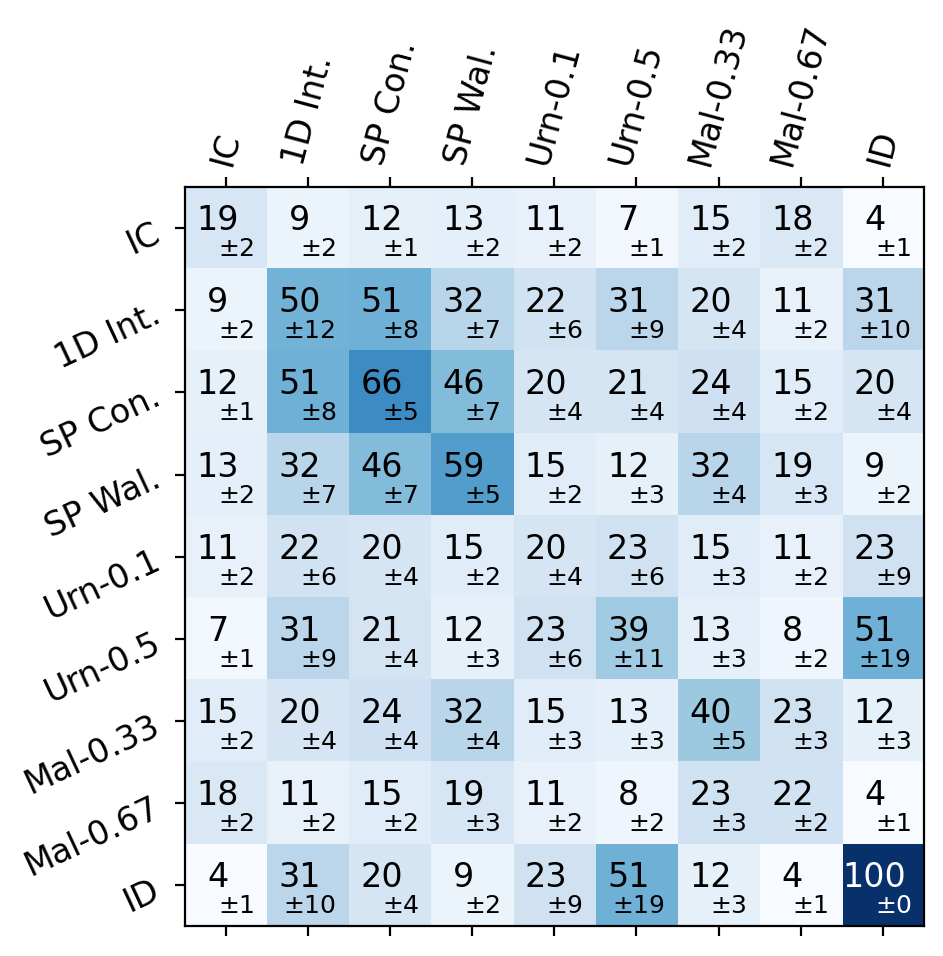}
        \caption{6 candidates \& 50 voters}
    \end{subfigure}
    
    \vspace{1em}
    
    \begin{subfigure}[b]{0.49\textwidth}
        \centering
        \includegraphics[width=6cm]{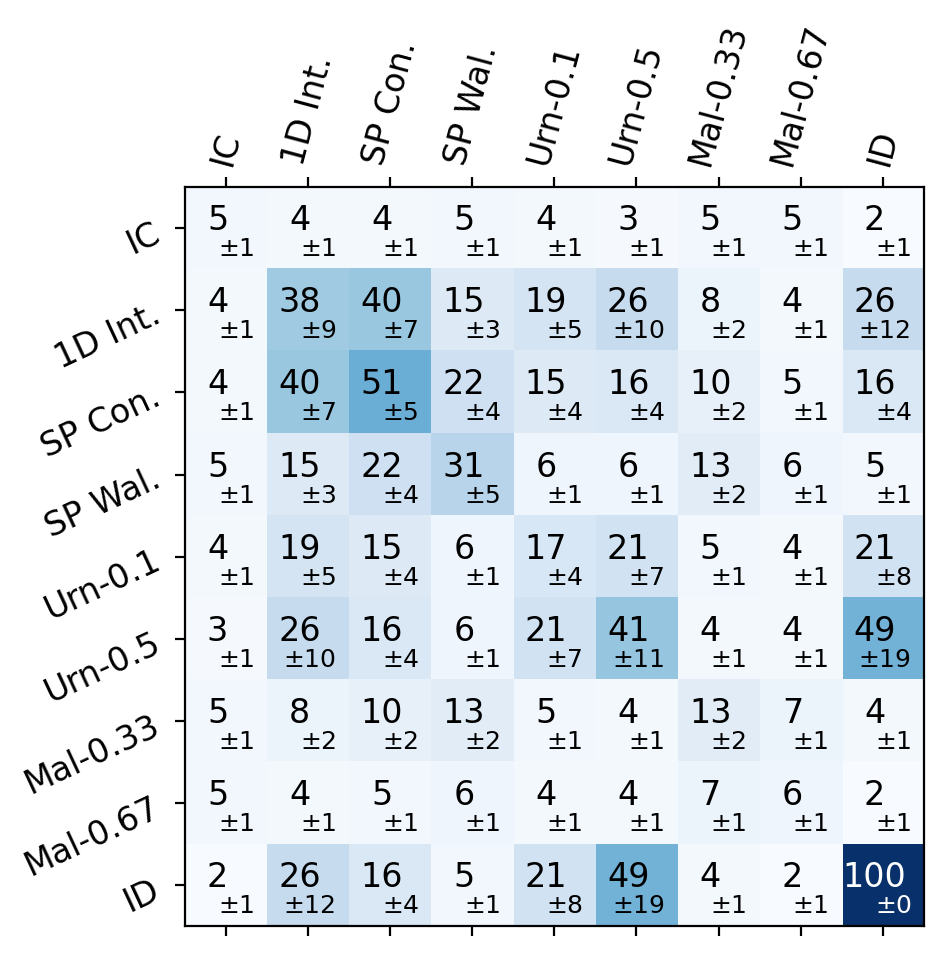}
        \caption{8 candidates \& 50 voters}
    \end{subfigure}
    \begin{subfigure}[b]{0.49\textwidth}
        \centering
        \includegraphics[width=6cm]{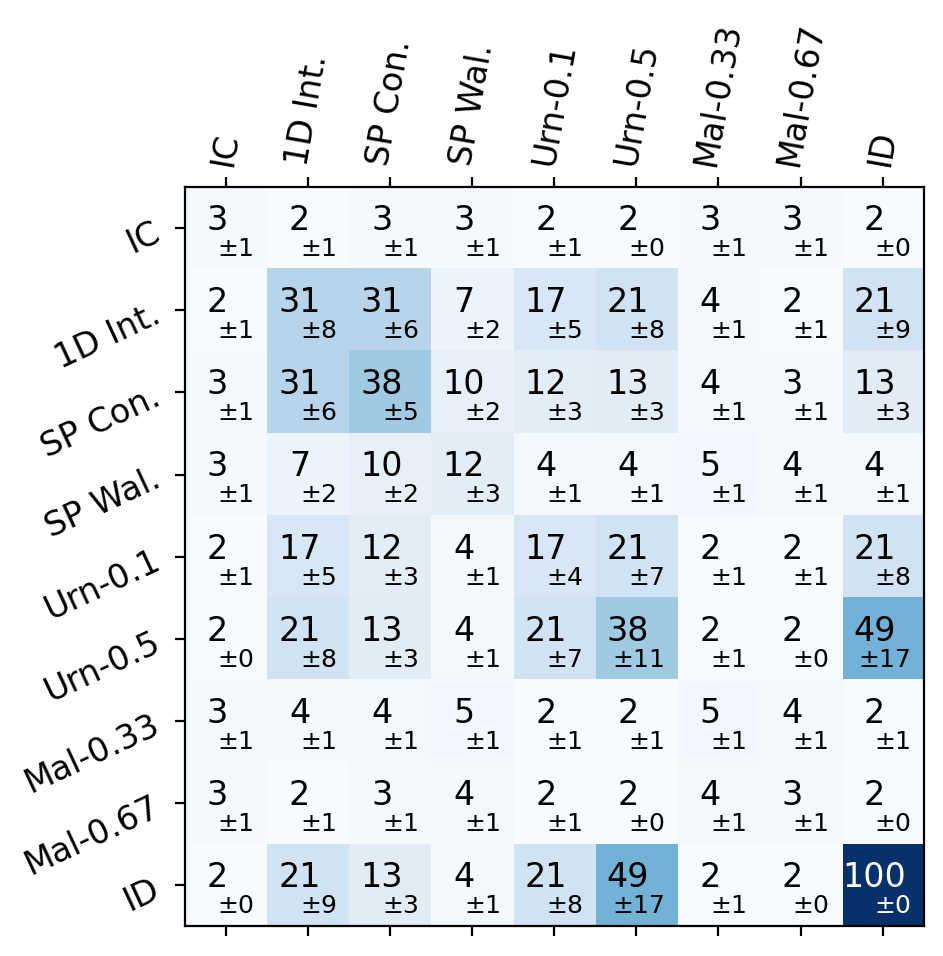}
        \caption{10 candidates \& 50 voters}
    \end{subfigure}

    \caption{\label{fig:sub_experiment} The numbers typeset in large font denote the rounded \% of matched votes for \textsc{Max. Common Voter-Subelection}. The numbers typeset in small font denote the rounded standard deviation. There are results for elections with~$4$,~$6$,~$8$, and~$10$ candidates and~$50$ voters.}
\end{figure}

For the case with four candidates,
we see that the level of similarity between elections from
various models is quite high and drops sharply as the number of candidates increases.
This shows that for experiments with very few candidates it is not as
relevant to consider very different election models, but for
more candidates using diverse models is justified.

Despite the above, some models remain
similar even for~$6$,~$8$, and sometimes even~$10$ candidates. This is particularly
visible for the case of single-peaked elections. The 1D-Interval model
remains very similar to the Conitzer model, and the Walsh model is
quite similar to these two for up to~$6$ candidates, but for~$8$ and~$10$
candidates it starts to stand out.

We also note that the urn models remain relatively similar to each
other (and to the 1D-Interval and Conitzer models) for all numbers of
candidates, but this is not the case for the Norm-Mallows models. One
explanation for this is that the urn model proceeds by copying some of
the votes already present in the election, whereas the Norm-Mallows model
generates votes by perturbing the central one. The former leads to
more identical votes in an election. Indeed, to verify this, it
suffices to consider the ``ID'' column (or row) of the matrix: The
similarity to the identity elections simply shows how often the most
frequent vote appears in elections from a given model. For 10
candidates, urn elections with~$\alpha \in \{0.1, 0.5\}$ have, on
average,~$21\pm8\%$ and~$49\pm17\%$ identical votes, respectively. For
Norm-Mallows elections, this value drops to around~$2\%$ (in our
setting, this means 1 or 2 voters, on average).

Finally, we consider the diagonals of the matrices in
Figure~\ref{fig:sub_experiment}, which show the self-similarity of our models.
Intuitively, the larger these values, the fewer elections of a
given type one needs in an experiment.
Single-peaked elections stand out here for all numbers of candidates,
whereas urn models become more prominent for larger candidate sets.

\begin{figure}[]
    \centering
    \includegraphics[width=9cm]{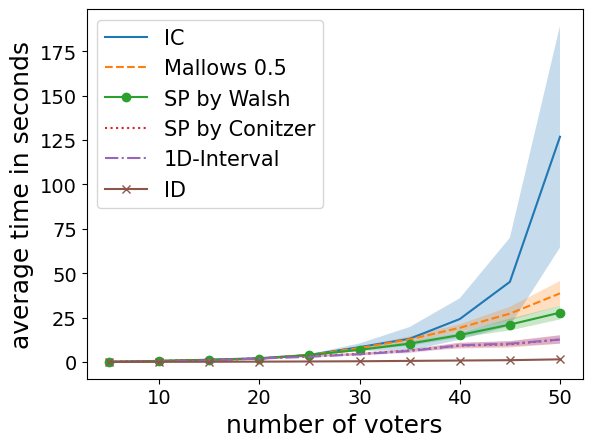}
    \includegraphics[width=9cm]{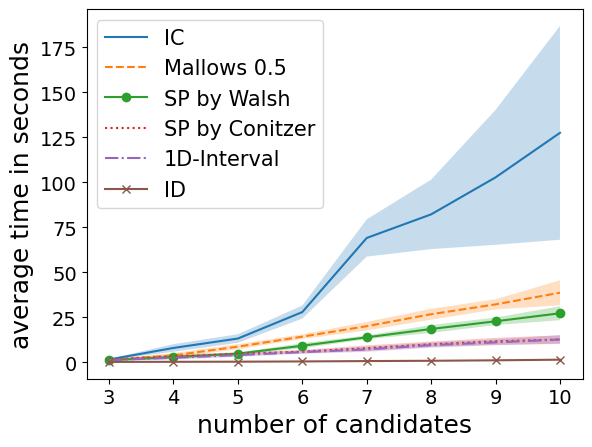}
    \caption{\label{fig:sub_experiment_time} Average time needed to find the maximum common voter subelections with the fixed number of candidates (upper), and fixed number of voters (lower). The shaded parts depict the standard deviation.}
\end{figure}

We have also analyzed the average running time that CPLEX needed to
find the maximum common voter subelections.
We focus on IC, identity, Walsh model,
Conitzer model, Norm-Mallows model with norm-$\phi=0.5$, and
1D-Interval. First, we generate~$1000$ pairs of elections from each
model with~$10$ candidates and~$5,10,\dots,45,50$ voters (in each pair both elections are from the same model),  and
calculate the average time needed to find the maximum common voter
subelections. Second, we fixed the number of voters to~$50$ and
generated elections with~$3,4,\dots,9,10$ candidates, and, like before,
calculate the average time needed to find the maximum common voter
subelections.

The results are
presented in Figure~\ref{fig:sub_experiment_time}. As we increase the
number of voters, the time seems to increase exponentially. We observe
large differences between the models, with the IC being by far the
slowest. Conitzer model and Walsh model are significantly different
from each other, even though both generate single-peaked
elections. Moreover, the fact that the 1D-Interval and Conitzer models
need on average the same amount of time confirms their similarity.

\vspace{0.2cm}
\begin{conclusionbox}
    Most elections with few candidates are very similar to each other -- this explains why the maps with few candidates are not that informative.
\end{conclusionbox}

\subsection{Real-life Subelections}\label{real-life-sub-exp}

We also conducted analogous experiments, but instead of using statistical cultures, we used real-life data.
We used the same~$11$ models as in \Cref{preflib_info} and we also added impartial culture as a reference point.
We selected one election from each model, and then treating each election as a \emph{distribution},\footnote{\new{To treat election $E$ as a distribution means that each vote is sampled with probability $\frac{a}{n}$, where $a$ is the number of copies of a given vote, and $n$ is the number of all votes in election $E$.}} we sampled~$10$ instances from it, in total having~$120$ elections.

We consider elections with~$4$,~$6$,~$8$, and~$10$ candidates, and~$50$ voters. 
In \Cref{tab:sub_preflib_params} we present the total number of votes and the number of distinct votes for each distribution that we use. Note that for several models, the number of votes is smaller than~$50$, and the smaller is the number of votes, the larger is the probability that some votes will be selected multiple times.

\begin{table}[]
\centering
\small
\begin{tabular}{ c  c  c  c }
			\toprule
			Category & Name & \# Votes & \# Distinct Votes \\	
			\midrule
			Political & Irish & 43942 & 29908 \\
			Political & Glasgow	& 10376 & 5790 \\
            Political & Aspen & 2459 & 2018 \\
			Political & ERS	& 380 & 336 \\
            \midrule
			Sport & Figure Skating & 9 & 9 \\
			Sport & Speed Skating & 12 & 12 \\
            Sport & TDF & 15 & 15 \\	
            Sport & GDI & 17 & 17 \\	
            \midrule
            Survey & T-Shirt & 30 & 30 \\	
            Survey & Sushi & 5000 & 4926 \\	
            Survey & Cities & 392 & 392 \\			
			\bottomrule
	\end{tabular}
	\caption{\label{tab:sub_preflib_params} Number of votes in the real-life elections used as distributions for sampling.}
\end{table}

In \Cref{fig:sub_preflib_experiment} we show the results. For the experiment with only four candidates (upper left matrix), we observe that there is a correlation between the similarity with the impartial culture elections and their position in \Cref{fig:main_preflib_map}. The same is true (but on a smaller scale) for the cases of~$6$,~$8$ and~$10$ candidates. Without surprise, we observe that the smaller is the number of distinct votes in a given distribution, the more similar are the elections from that distribution to each other (values displayed on the diameter). For elections with~$8$ and~$10$ candidates, all values, except those on the diameter, are very small. The only part of the matrix with slightly larger values is the lower right corner, but the similarity between these models is due to a smaller number of different votes in the distribution from which these elections were sampled. 
  
Interestingly, for elections with only four candidates, sport elections are less similar to each other than the rest of elections, even though they have fewer distinct votes. The exception is the similarity between Tour de France (TDF) and Giro d'Italia (GDI) -- two cycling competitions, which (among sport elections) seem to be very similar. Moreover, note that for GDI and TDF there is a very large difference between the case for~$4$ and~$6$ candidates, when compared with the differences for other pairs of sport instances. 

Regarding impartial culture, political elections, and surveys with four candidates, when comparing any two elections, usually we can match two thirds of the votes, which is around 33 votes out of 50. This number seems to be quite large and implies a high level of chaos in these instances.

Another thing worth pointing out is the relative similarity between two political elections: Irish and Glasgow. For elections with four candidates, Irish elections are the most similar to Glasgow ones; however, the opposite is not true. But for elections with six candidates, both Glasgow and Irish are each other's closest instances and more similar to each other than any other pair (except for the Speed Skating and Figure Skating).


\begin{figure}[]
    \centering
    \begin{subfigure}[b]{0.49\textwidth}
        \centering
        \includegraphics[width=6.6cm]{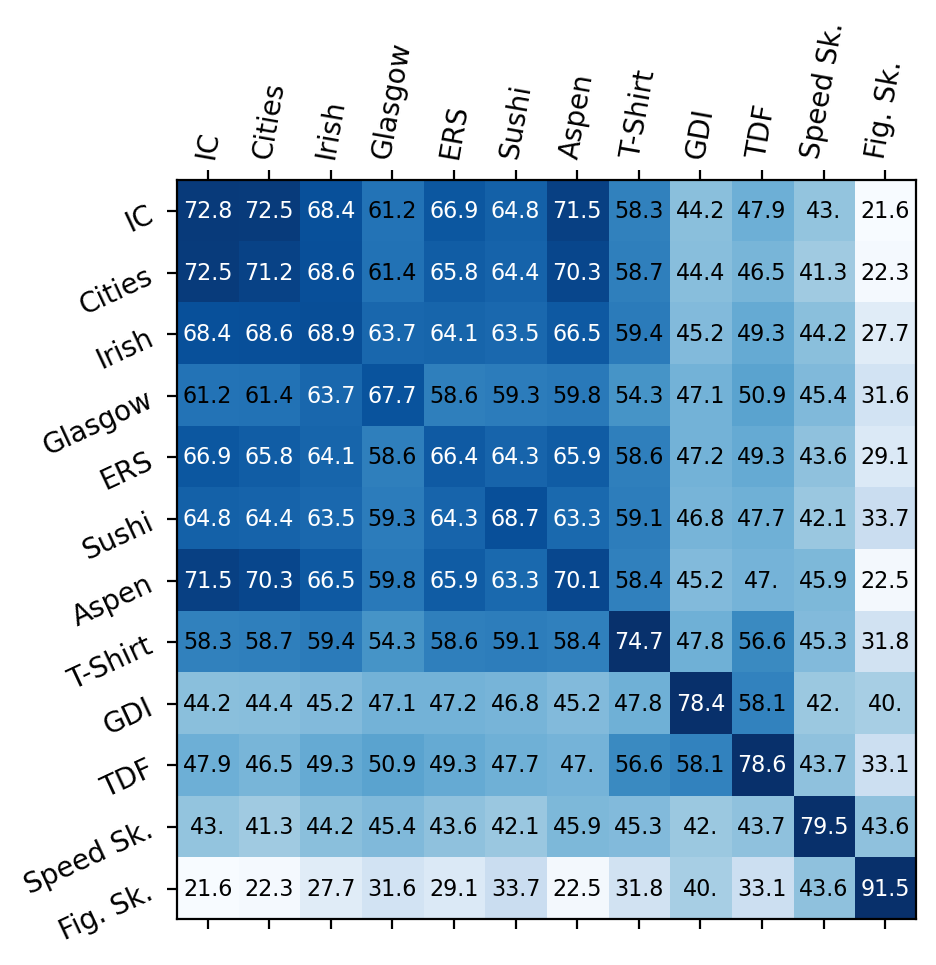}
        \caption{4 candidates \& 50 voters}
    \end{subfigure}
    \begin{subfigure}[b]{0.49\textwidth}
        \centering
        \includegraphics[width=6.6cm]{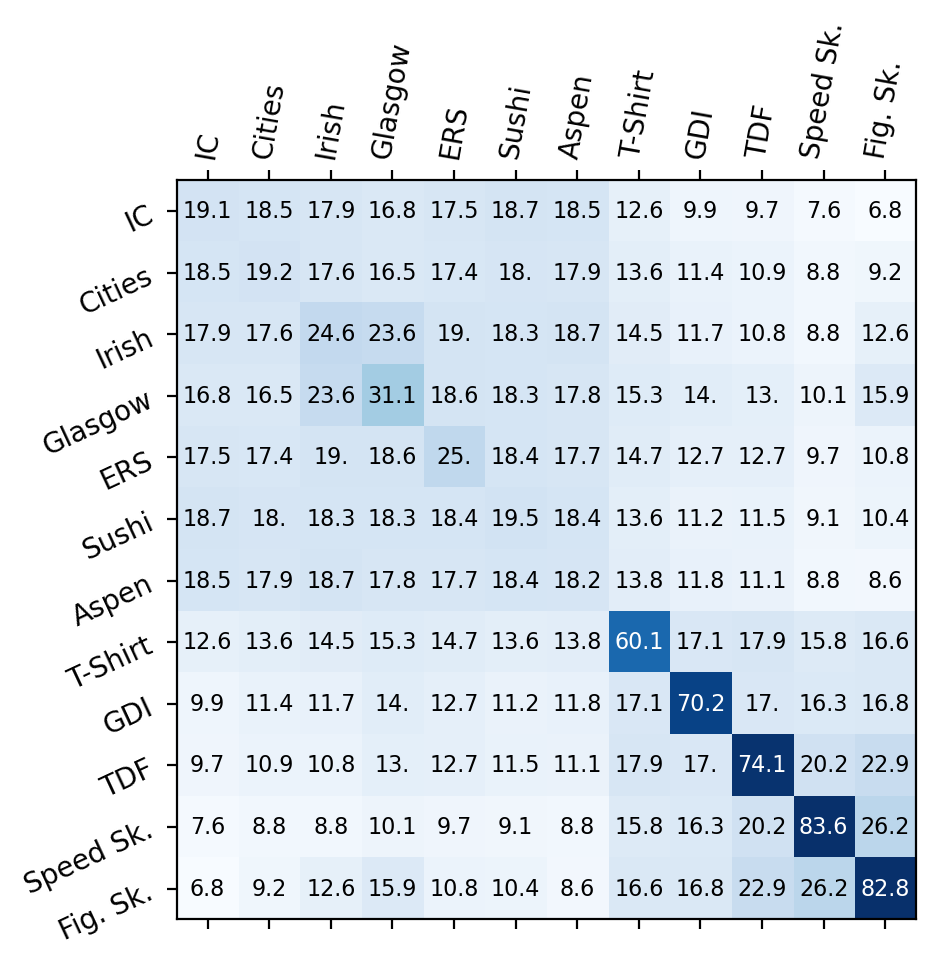}
        \caption{6 candidates \& 50 voters}
    \end{subfigure}
    
    \vspace{1em}
    
    \begin{subfigure}[b]{0.49\textwidth}
        \centering
        \includegraphics[width=6.6cm]{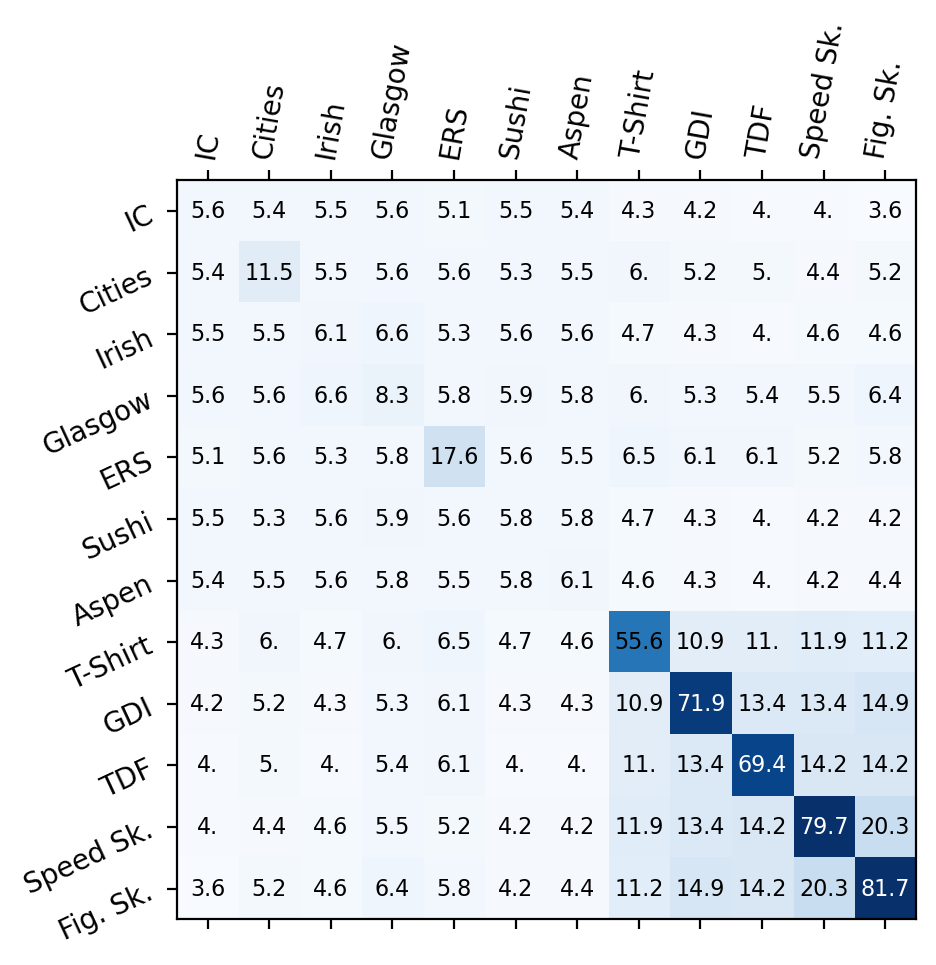}
        \caption{8 candidates \& 50 voters}
    \end{subfigure}
    \begin{subfigure}[b]{0.49\textwidth}
        \centering
        \includegraphics[width=6.6cm]{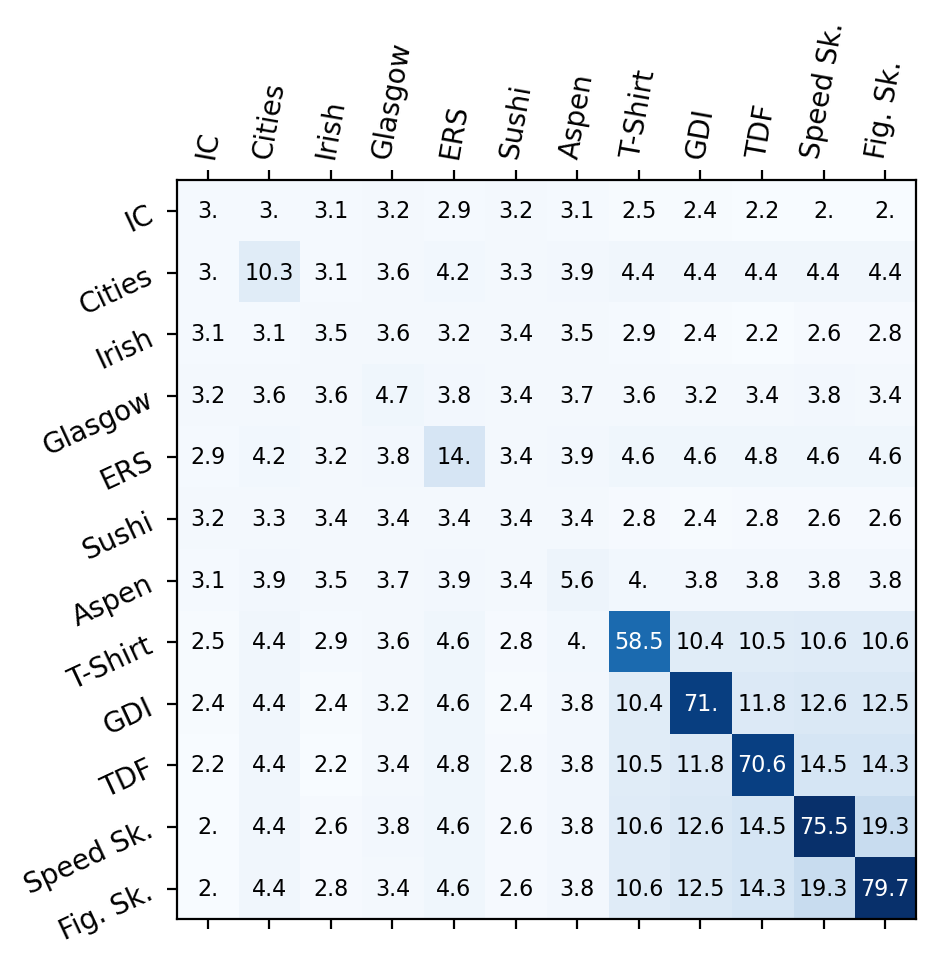}
        \caption{10 candidates \& 50 voters}
    \end{subfigure}

    \caption{\label{fig:sub_preflib_experiment} The numbers denote the rounded \% of matched votes for \textsc{Max. Common Voter-Subelection}. There are results for elections with~$4$,~$6$,~$8$, and~$10$ candidates and~$50$ voters.}
\end{figure}


\vspace{0.2cm}
\begin{conclusionbox}
The experimental results based on real-life elections from Preflib confirm our previous observations (drawn from the map of real-life elections). For example, we see that sport elections are similar to each other and quite similar to identity. Also, as was the case for statistical cultures, analyzing elections with very few candidates is not particularly meaningful because most such elections are very similar to each other.
\end{conclusionbox}

\newpage
\section{Summary}
We have shown that variants of \textsc{Election Isomorphism} that are
based on considering subelections are largely intractable but,
nevertheless, some of them can be solved in polynomial-time.
In fact, we have used the polynomial-time solvable
\textsc{Max. Common Voter-Subelection} problem to analyze the similarity
between various different models of generating random elections.

In Section~\ref{sec:comp-complexity} we classified variants of the problem as either
belonging to~$\p$ or being~$\np$-complete (and some being~$\wone$-hard).

Finally, in \Cref{ch:sub:sec:experiments} we presented some more experimental results based on synthetic and real-life data, showing that computing the \textsc{Max. Common Voter-Subelection} can serve as a measure of similarity between elections. For example, it helped noticing the difference between Walsh and Conitzer model, with elections from Conitzer model being more similar to each other, than those from Walsh model.
Experimental results for four candidates confirms observations from~\Cref{ch:applications}, i.e., maps for elections with small number of candidates are a bit chaotic because most of the elections are quite similar to each other.

\vspace{0.2cm}
\begin{contributionbox}
The main contribution of this chapter is theoretical analysis of a family of subelection problems. Moreover, we provide experimental results both on synthetic and real-life data, showing that majority of small elections are very similar to each other. 
\end{contributionbox}

\chapter{Approval Elections}
\label{ch:approval}

\section{Introduction}
So far we were focusing only on ordinal elections, where each voter ranks all the candidates from the most to the least appreciated one. In this chapter we consider approval elections~\citep{bra-fis:b:approval-voting}.
In an approval election, each voter
indicates which candidates he or she finds acceptable for a certain
task (e.g., to be a president, to join the parliament, or to enter the
final round of a competition), and a voting rule is used to aggregate these preferences and determine
the winner or the winning committee.
In the single-winner setting (e.g., when choosing the president), the
most popular rule is to pick the candidate with the highest number of
approvals.  In the multiwinner setting (e.g., in parliamentary
elections or when choosing finalists in a competition), there is a rich
spectrum of rules to select from, each with different properties and
advantages (see, e.g.,  the overview of~\cite{lackner2023approval}.
Approval voting is particularly attractive due to its simplicity and
low cognitive load imposed on the voters.
%
%
In fact,
its practical applicability 
has already been tested in several field experiments, including
those in
France~\citep{las-str:j:approval-experiment,bau-ige:b:french-approval-voting,baujard2014s,bou-bla-bau-dur-ige-lan-lar-las-leb-mer:t:french-approval-voting-2017}
and Germany~\citep{alo-gra:j:german-approval-voting}.  Over the recent
years, there was also a tremendous progress regarding its theoretical
properties (see, e.g., the overview of
\cite{las-san:chapter:approval-multiwinner}.


In spite of all these achievements, numerical experiments regarding
approval voting are still challenging to design. One of the main
difficulties is caused by the lack of consensus about which statistical
cultures to use. \new{To answer this problem, in particular, we introduced various new statistical cultures. Moreoever, we evaluated them using experiments -- showing their usefullness, and at the same time showing drawbacks of previously used models.}
Below we list a few cultures that were recently used:


\begin{enumerate}
\item In the impartial culture setting, we assume that each vote is
  equally likely. Taken literally, this means that each voter approves
  each candidate with
  probability~$\nicefrac{1}{2}$ \citep{bar-lan-yok:c:hamming-approval-manipulation}.
  As this is quite unrealistic, several authors treat the approval
  probability as a
  parameter~\citep{bre-fal-kac-nie2019:experimental_ejr,fal-sli-tal:c:vnw}
  or require that all voters approve the same (small) number of
  candidates~\citep{lac-sko:j:av-vs-cc}. A further refinement is to
  choose an individual approval probability for each
  candidate~\citep{lac-mal:c:vnw-shortlisting}.

\item In Euclidean models, each candidate and voter is a point 
 in~$\mathbb{R}^d$, where~$d$ is a parameter, and a voter approves a
  candidate if they are sufficiently near. Such models are used, e.g.,
  by \cite{bre-fal-kac-nie2019:experimental_ejr} and
  \cite{god-bat-sko-fal:c:2d}. Naturally, the distribution of the
  candidate and voter points strongly affects the outcomes.

\item Some authors consider statistical cultures designed for the
  ordinal setting (where the voters rank the candidates from the most
  to the least desirable one) and let the voters approve some
  top-ranked candidates (e.g., a fixed number of them). This approach
  is taken, e.g., by \cite{lac-sko:j:av-vs-cc} on top of the ordinal
  Mallows model (later on, \cite{allouche2022truth} and \cite{caragiannis2022evaluating} provided
  approval-based analogues of the Mallows model).
\end{enumerate}
Furthermore, even if two papers use the same model, they often choose its
parameters differently. Since it is not clear how the parameters
affect the models, comparing the results from different papers is not
easy.

Our goal is to initiate a systematic study of approval-based
statistical cultures and to attempt to rectify at least some of the above
issues. We do so by applying our map of elections framework.
 
To create a map for approval elections,
we start by identifying two metrics between approval elections, 
the {\em isomorphic Hamming distance} and the
{\em approvalwise distance}. 
The first one is accurate, but difficult to compute,
whereas the second one is less precise, but easily computable.
Fortunately, 
in our election datasets
the two
metrics are strongly correlated; thus, we use mostly the latter one.

Next, we analyze the space of approval elections with a given number
of candidates and voters.  For each~$p \in [0,1]$, by~$p$-identity
($p$-ID) elections we mean those
where all the votes are identical and approve the same~$p$-fraction of
candidates.
By~$p$-impartial culture ($p$-IC) elections we mean those where each voter chooses to approve each
candidate with probability~$p$.  We view~$p$-ID and
$p$-IC elections as two extremes on the spectrum of agreement between
the voters and, intuitively, we expect that every election (where each
voter approves on average a~$p$ fraction of candidates) is located
somewhere between these two.  In particular, for~$p, \phi \in [0,1]$,
we introduce the~$(p,\phi)$-resampling model, which generates
elections whose expected approvalwise distance from~$p$-ID is exactly
the~$\phi$ fraction of the distance between~$p$-ID and~$p$-IC (and the
expected distance from~$p$-IC is the~$1-\phi$ fraction). 

Armed with these tools, we proceed to draw maps of
elections. First, we consider~$p$-ID,~$p$-IC, and
$(p,\phi)$-resampling elections, where the~$p$ and~$\phi$ values are
chosen to form a grid, and compute the approvalwise distances between
them. 
We find that, for a fixed value
of~$p$, the~$(p,\phi)$-resampling elections indeed form lines between
the~$p$-ID and~$p$-IC ones, whereas for fixed~$\phi$ values they form
lines between~$0$-ID and~$1$-ID ones (which we refer to as the
\emph{empty} and \emph{full} elections).  We obtain more maps by
adding elections generated according to other statistical
cultures; the presence of the \emph{$(p,\phi)$-resampling grid} helps
in understanding the locations of these new elections.
For each of our elections we compute several parameters, such as, e.g,
the highest number of approvals that a candidate receives, the time
required to compute the results of a certain multiwinner voting rule,
or the cohesiveness level (see Section~\ref{sec:prelims} for a
definition). For each of the statistical cultures, we present maps
where we color the elections according to these values. This gives
further insight into the nature of the elections they generate.
Finally, we compare the results for randomly generated elections with
those appearing in real-life, in the context of participatory
budgeting.

We also provide maps of approval preferences (similar to those for ordinal preferences shown in~\Cref{ordinal_map_pref}). We present maps from both voters' and candidates' perspectives (in the former ones each point depicts a voter, while in the latter ones each point depicts a candidate). To create these maps we use the Hamming distance and the Jaccard distance (which is a normalized variant of the Hamming distance that is putting more emphasis on approvals than on disapprovals).

The structure of this chapter is different from that of the previous ones. Since we move from the ordinal to the approval world of elections, we need \textit{new} preliminaries where we define several things, such as, for instance, a \textit{vote} or an \textit{election}. In a way, within this chapter we repeat the work that for ordinal elections was divided into several parts. That is, this chapter is an application of all the contributions from the previous ones and shows how the map framework can be applied to new types of objects (see also our work on the maps of stable roommates instances \citep{boehmer2023map})

\section{Preliminaries}\label{sec:prelims}

    
\paragraph{Elections.}

A (simple) approval election~$E=(C,V)$ consists of a set of candidates
$C=\{c_1,\dots,c_m\}$ and a collection of voters
$V=(v_1,\dots,v_n)$. Each voter~$v \in V$ casts an approval ballot,
i.e., he or she selects a subset of candidates that he or she
approves.  Given a voter~$v$, we denote this subset by
$A(v)$. Occasionally, we refer to the voters or their approval
ballots as votes; the exact meaning will always be clear from the
context.
%
%
%
An approval-based committee election (an ABC election) is a triple
$(C,V,k)$, where~$(C,V)$ is a simple approval election and~$k$ is the
size of the desired committee.  We use simple elections when the goal
is to choose a single individual and ABC elections when we seek a
committee.

%
%

Given an approval election~$E$ (be it a simple election or an ABC
one) and a candidate~$c$, we write~$\score_\av(c)$ to denote the
number of voters that approve~$c$. We refer to this value as the
\emph{approval score} of~$c$. The single-winner approval rule (called AV)
returns the candidate with the highest approval score (or the set of
such candidates, in case of a tie).

\paragraph{Distances Between Votes.}

For two voters~$v$ and~$u$, their Hamming distance is \linebreak
$\ham(v, u) = |A(v) \triangle A(u)| = |A(v) \setminus A(u)| + |A(u) \setminus A(v)|$, i.e.,  the number of
candidates approved by exactly one of them. Other distances include,
e.g., the Jaccard one, defined as
$\mathrm{jac}(v, u) = \frac{\ham(v,u)}{|A(v) \cup A(u)|}$.
For other examples of such distances, 
we point to the work
of \cite{caragiannis2022evaluating}.



\paragraph{Approval-Based Committee Voting Rules.}

An \emph{approval-based committee voting rule} (an ABC rule) is a function that maps an ABC election~$(C,V, k)$ to a nonempty set of committees of size~$k$.
If an ABC rule returns more than one committee, then we consider them tied.
%

We introduce two prominent ABC rules. \emph{Multiwinner Approval  Voting (AV)} selects the~$k$ candidates with the highest approval scores.
Given a committee~$W$, its approval score is the sum of the scores of its members;~$\score_\av(W) = \sum_{w \in W}\score_\av(w)$.
If there is more than one committee that achieves a maximum score, AV returns all tied committees.
The second rule is \emph{Proportional Approval Voting (PAV)}. PAV outputs all committees with the maximum PAV-score:
\[ \textstyle
  \score_{\text{pav}}(W)=\sum_{v\in V} h(|A(v)\cap W|),
\]
where~$h(x)=\sum_{j=1}^x \nicefrac{1}{j}$ is the harmonic function.
Intuitively, AV selects committees that contain the ``best'' candidates (in the sense of having the most approvals) and PAV selects committees that are in a strong sense proportional \citep{justifiedRepresentation,brill2018multiwinner}.
In contrast to AV, which is polynomial-time computable, PAV is NP-hard to compute \citep{azi-gas-gud-mac-mat-wal:c:approval-multiwinner,sko-fal-lan:j:collective}.
In practice, PAV can be computed  by solving an integer linear program \citep{pet-lac:j:spoc} or by an approximation algorithm~\citep{DudyczMMS20-tight-pav-apx}.

\paragraph{Cohesive Groups.}

Intuitively, a proportional committee should represent all groups of voters in a way that (roughly) corresponds to their size.
To speak of proportional committees in ABC elections, \cite{justifiedRepresentation} introduced the concept of \emph{cohesive groups}.
\begin{definition}
Consider an ABC election~$(C,V,k)$ with~$n$ voters and some non-negative integer~$\ell$.
A group of voters~$V'\subseteq V$ is \emph{$\ell$-cohesive} if
(i)~$|V'| \geq \ell\cdot \frac{n}{k}$
and
(ii)~$\left|\bigcap_{v \in V'} A(v) \right| \geq  \ell$.
\end{definition}
An~$\ell$-cohesive group is large enough to deserve~$\ell$ representatives in the committee and is cohesive in the sense that there are~$\ell$ candidates that can represent it.
A number of proportionality notions have been proposed based on cohesive groups,
such as 
(extended) justified representation~\citep{justifiedRepresentation}, proportional justified representation~\citep{Sanchez-Fernandez2017Proportional}, proportionality degree~\citep{sko:c:prop-degree}, and others.
For our purposes, it is sufficient to note that all these concepts guarantee cohesive groups different types and levels of representations
(see also the  survey of \cite{lackner2023approval} for a comprehensive overview).


\section{Statistical Cultures for Approval Elections}
\label{sec:statistical-cultures}

In the following, we present several statistical cultures (probabilistic models) for
generating approval elections.  Our input consists of the desired
number of voters~$n$ and a set of candidates~$C=\{c_1,\dots,c_m\}$.
For models that already exist in the literature, we provide examples of
papers that use them.


\paragraph{Resampling, IC, and ID Models.}
Let~$p$ and~$\phi$ be two numbers in~$[0,1]$. In the
\emph{$(p,\phi)$-resampling} model, we first draw a central
ballot~$u$, by choosing~$\lfloor p \cdot m\rfloor$ approved candidates
uniformly at random. Then, we generate each new vote~$v$ by initially
setting~$A(v) = A(u)$ and executing the following procedure for every
candidate~$c_i \in C$: With probability~$1-\phi$, we leave~$c_i$'s
approval intact and with probability~$\phi$ we resample its value
(i.e., we let~$c_i$ be approved with probability~$p$). The resampling
model is our contribution and is one of our basic tools for analyzing
approval elections.  By fixing~$\phi = 1$, we get the
\emph{$p$-impartial culture} model ($p$-IC) where each
candidate in each vote is approved with probability~$p$; it was used,
e.g., by \cite{bre-fal-kac-nie2019:experimental_ejr} and
\cite{fal-sli-tal:c:vnw}.  By fixing~$\phi = 0$, we ensure that all
votes in an election are identical (i.e., approve the same~$p$ fraction of the candidates). We refer to this model as \emph{$p$-identity} ($p$-ID).

\paragraph{Moving Model.} The \emph{$(p,\phi)$-moving} model is a variant
of the~$(p,\phi)$-resampling one, where each time a new vote is
generated, the new vote replaces the central one. Occasionally, we also consider~$(p,\phi, g)$-moving model where we add one more parameter~$g$, which denotes the number of groups. It works as follows. After each~$\lfloor\frac{n}{g}\rfloor$ votes are generated, we set the central vote back to the original central ballot instead of setting it to the last vote. Note that if the value of $g$ is equal to the number of voters then $(p,\phi, g)$-moving model and $(p,\phi)$-resampling model are equivalent, because after sampling each vote we are setting central ballot back to the original one.

\paragraph{Disjoint Model.} The \emph{$(p,\phi, g)$-disjoint} model,
where~$p$ and~$\phi$ are numbers in~$[0,1]$ and~$g$ is a non-negative
integer, works as follows: We draw a random partition of~$C$ into $\lfloor p\cdot m \rfloor$-sized~$g$
sets,~$C_1, \ldots, C_g$ (note that, if $p\cdot g < 1$ then some candidates will not be members of any group, and if $p\cdot g > 1$ the model is not well-defined), and, to generate a vote, we choose~$i \in [g]$ 
uniformly at random and sample the vote from a~$(p,\phi)$-resampling model with the central vote that approves exactly the candidates from~$C_i$.

\paragraph{Noise Models.} Let~$p$ and~$\phi$ be two numbers from~$[0,1]$ and let~$d$ be a distance between approval votes (such as the Hamming or Jaccard ones). We require
that~$d$ is polynomial-time computable and, for each two approval votes~$u$ and~$v$,~$d(u,v)$ depends only on~$|A(u)|$,
$|A(v)|$, and~$|A(u) \cap A(v)|$; both Hamming and Jaccard distances have this property. In the~$(p,\phi,d)$-noise model we first generate a central vote~$u$ 
as in the resampling model and, then, each new vote~$v$ is generated with probability
proportional to~$\phi^{d(u,v)}$. Such noise models are analogous to
the Mallows model for ordinal elections and were
studied, e.g., by \cite{allouche2022truth} and \cite{caragiannis2022evaluating}. In particular,
\cite{caragiannis2022evaluating} gave a sampling procedure for the Hamming
distance. We extend it to arbitrary distances.

\begin{proposition}
  There is a polynomial-time sampling procedure for the
 ~$(p,\phi,d)$-noise models (as defined above).
\end{proposition}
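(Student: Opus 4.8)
The plan is to reduce the general noise-model sampling problem to a two-step procedure: first sample the \emph{size} of the overlap between a new vote and the central vote (equivalently, the number of approvals retained and the number of fresh approvals introduced), and then sample a uniformly random vote of that prescribed shape. The key structural fact I would exploit is the hypothesis that $d(u,v)$ depends only on $|A(u)|$, $|A(v)|$, and $|A(u)\cap A(v)|$. Since the central vote $u$ is fixed with $|A(u)| = \lfloor p\cdot m\rfloor$, any candidate vote $v$ is fully characterized, as far as its probability weight $\phi^{d(u,v)}$ is concerned, by the pair $(a,b)$ where $a = |A(v)\cap A(u)|$ counts approvals inside the central set and $b = |A(v)\setminus A(u)|$ counts approvals outside it. All votes sharing the same $(a,b)$ receive identical weight.

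First I would partition the space of all $2^m$ possible votes into classes indexed by $(a,b)$, where $a$ ranges over $\{0,\dots,|A(u)|\}$ and $b$ ranges over $\{0,\dots,m-|A(u)|\}$. The number of votes in class $(a,b)$ is exactly $\binom{|A(u)|}{a}\binom{m-|A(u)|}{b}$, and each such vote has weight $\phi^{d_{a,b}}$, where $d_{a,b}$ is the common distance value computed from $|A(u)|$, $|A(v)| = a+b$, and $a$. Hence the total weight of class $(a,b)$ is $\binom{|A(u)|}{a}\binom{m-|A(u)|}{b}\,\phi^{d_{a,b}}$, and the overall normalizing constant is the sum of these over all $(a,b)$. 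There are only $O(m^2)$ classes, and each class weight is computable in polynomial time (binomials and a single evaluation of $d$), so the full table of class weights and their sum can be built in polynomial time.

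Next I would sample in two stages. In the first stage, draw a class $(a,b)$ with probability proportional to its total weight, using the precomputed table (this is a single draw from an $O(m^2)$-point distribution, done in polynomial time). In the second stage, having fixed $(a,b)$, produce a uniformly random vote within that class: choose a uniformly random $a$-subset of $A(u)$ to keep as approved, and a uniformly random $b$-subset of $C\setminus A(u)$ to approve as well. Because every vote in a given class carries the same weight $\phi^{d_{a,b}}$, conditioning on the class and sampling uniformly within it yields exactly the target distribution in which each vote $v$ is generated with probability proportional to $\phi^{d(u,v)}$. I would verify this by a short computation showing that the probability assigned to a fixed vote $v$ in class $(a,b)$ equals $\big(\text{class weight}/\text{class size}\big)/Z = \phi^{d_{a,b}}/Z$, as required.

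I expect the main obstacle to be purely expository rather than mathematical: one must argue carefully that the dependence hypothesis on $d$ genuinely collapses the exponentially many votes into polynomially many equal-weight classes, and that sampling the central vote $u$ itself (as in the resampling model) is already polynomial. A minor technical point to handle cleanly is the degenerate case $\phi = 0$, where $\phi^{0}$ should be read as $1$ and all other powers as $0$, so that the procedure simply returns a copy of the central vote; I would note this convention explicitly so the formulas remain valid. Beyond that, the correctness argument and the polynomial-time bound are both routine once the class decomposition is in place.
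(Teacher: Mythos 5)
Your proposal is correct and follows essentially the same route as the paper's proof: both partition the votes into $O(m^2)$ equal-weight classes indexed by the number of approvals inside and outside $A(u)$, weight each class by $\binom{z}{x}\binom{m-z}{y}\phi^{d(x,y,z)}$, sample a class proportionally, and then pick a uniform vote within it. Your added remarks on verifying the per-vote probability and on the $\phi=0$ convention are fine but not substantively different from the paper's argument.
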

\begin{proof}
  Let~$u$ be the central vote and let~$z = |A(u)|$. Consider
  non-negative integers~$x$ and~$y$ such that~$x \leq z$ 
  and~$y \leq m - z$. The probability of generating a vote~$v$ that
  contains~$x$ candidates from~$A(u)$ and~$y$ candidates 
  from~$C \setminus A(u)$ is proportional to the following value (abusing notation, we write~$d(x,y,z)$ to mean the value~$d(u,v)$;
  indeed,~$d(u,v)$ depends only on~$x$,~$y$, and~$z$):
  \[
    f(x,y) = \textstyle \binom{z}{x} \binom{m-z}{y} \phi^{d(x,y,z) }.
  \]
  Next, let~$Z = \sum_{x \in [z]_0, y \in [m-z]_0} f(x,y)$.
%
  To sample a vote, we draw values~$x \in [z]$ and~$y \in [m-z]$ with
  probability~$\frac{f(x,y)}{Z}$ and form the vote as approving~$x$
  random members of~$A(u)$ and~$y$ random members 
  of~$C \setminus A(u)$.
\end{proof}
\noindent
In the reminder, we only use the noise model with the Hamming distance
and we refer to it as the~$(p,\phi)$-noise model.  Note that the roles of~$p$
and~$\phi$ in this model are similar but not the same as in 
the~$(p,\phi)$-resampling model (for example, for~$\phi = 0$ we get the~$p$-ID model, 
but for~$\phi=1$ we get the~$0.5$-IC one).

\vspace{-0.125cm}
\paragraph{Euclidean Models.}
In the~$t$-dimensional Euclidean model, 
each candidate and each voter is a point 
from~$\reals^t$ and a voter~$v$ approves candidate~$c$ if the distance
between their points is at most~$r$ (this value is called the radius);
such models were discussed, e.g., in the classical works
of Enelow and Hinich~[\citeyear{enelow1984spatial},\citeyear{enelow1990advances}], and more recently
by~\citet{elk-lac:c:ci-vi--approval},
\citet{elk-fal-las-sko-sli-tal:c:2d-multiwinner},
\citet{bre-fal-kac-nie2019:experimental_ejr}, and
\citet{god-bat-sko-fal:c:2d}.
We consider~$t$-dimensional models for~$t \in \{1,2\}$, where the
agents' points are distributed uniformly at random on~$[0,1]^t$. We
refer to them as Interval and Square models (note that to fully
specify each of them, we also need to indicate the radius value).

\vspace{-0.125cm}
\paragraph{Truncated Urn Models.}
Let~$p$ be a number in~$[0,1]$ and let~$\alpha$ be a non-negative real
number (the parameter of contagion).  Truncated urn model is based on Pólya-Eggenberger urn model (see \Cref{ch:stat_cult}), however, after sampling an ordinal vote we convert it to an approval one.
We start with an
urn that contains all~$m!$ possible linear orders over the candidate
set. To generate a vote, we (1) draw a random order~$r$ from the urn,
(2)~produce an approval vote that consists of~$\lceil p\cdot m \rceil$
top candidates according to~$r$ (this is the generated vote), and
(3)~return~$\alpha m!$ copies of~$r$ to the urn.
%
For~$\alpha = 0$, all votes with~$\lceil p\cdot m \rceil$ approved
candidates are equally likely, whereas for large values of~$\alpha$ 
all votes are likely to be identical (so the model becomes 
similar to~$p$-ID).

\vspace{0.2cm}
\begin{conclusionbox}
    Introduction of new statistical culture models, including: resampling, disjoint, and moving models.
\end{conclusionbox}

\section{Maps of Approval Preferences}\label{approval_map_pref}
Now we will have a closer look at instances generated according to the statistical cultures described above. We will conduct an analogous experiment to the one described in \Cref{ordinal_map_pref}, however, this time we focus on approval elections.

\begin{figure}
    \centering
    \begin{minipage}{.49\textwidth}
        \centering
        \text{ \ \ \ Hamming}\\
        \text{}
 	    \includegraphics[width=7cm]{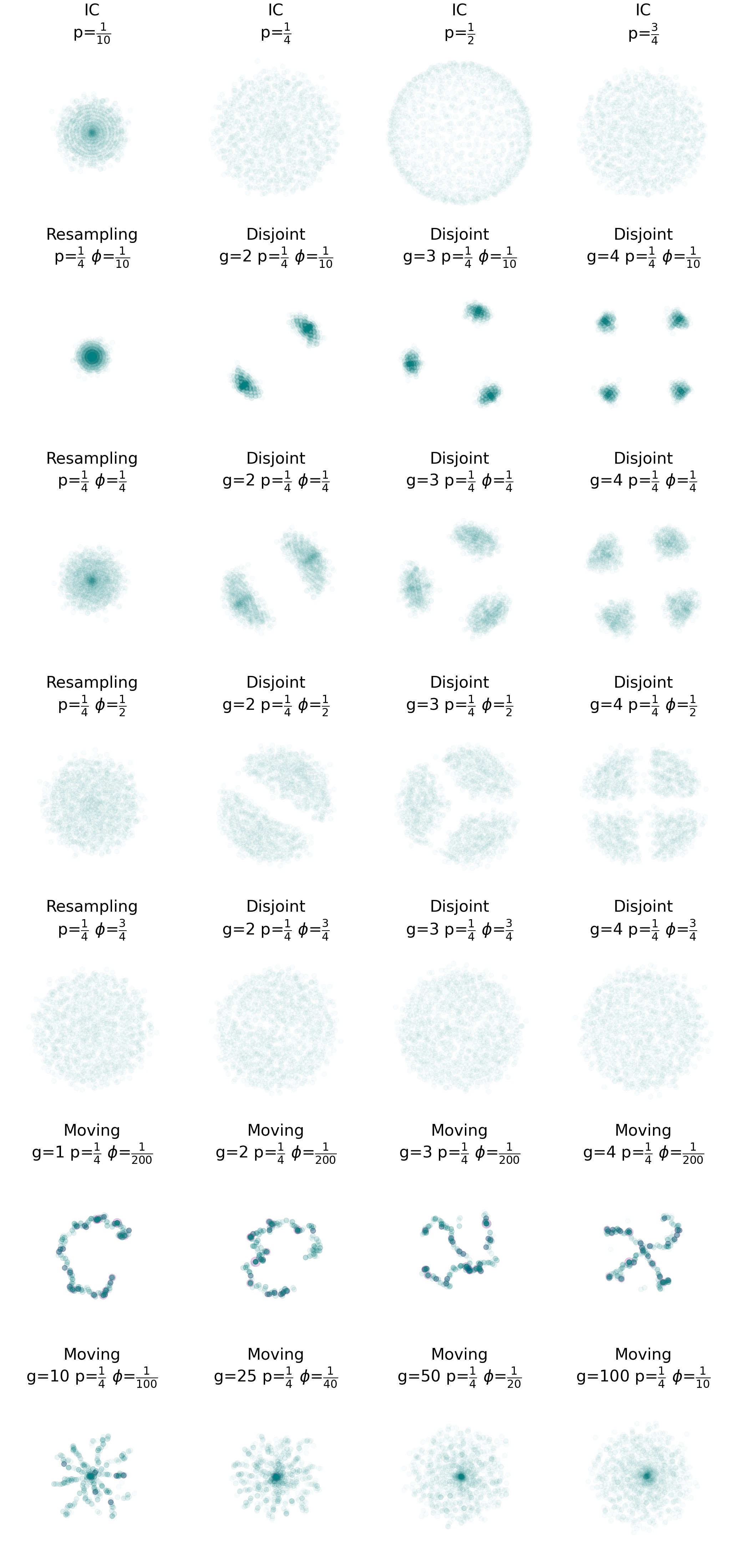}
 	\end{minipage}\hfill
 	\begin{minipage}{.49\textwidth}
 		\centering
        \text{ \ \ Jaccard}\\
        \text{}
        \includegraphics[width=7cm]{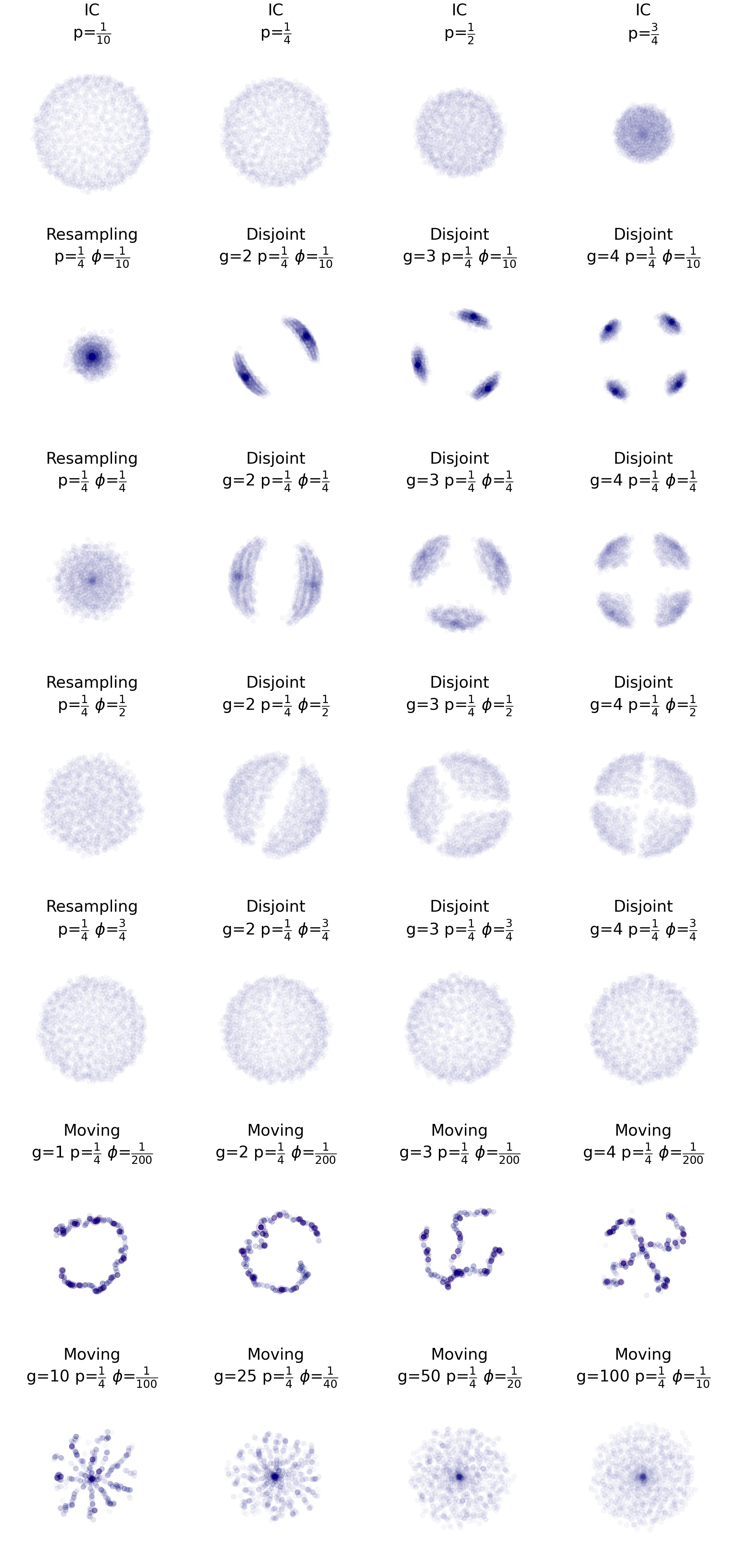}
 	\end{minipage}\hfill

    \caption{Maps of (Approval) Preferences ($100$ candidates,~$1000$ voters). On the left (teal) based on the Hamming distance, and on the right (navy) based on the Jaccard distance.}
    \label{fig:app_vote_part_1}
\end{figure}

\begin{figure}
    \centering
     	\begin{minipage}{.49\textwidth}
     	        \centering
        \text{ \ \ \ Hamming}\\
        \text{}
 		    \includegraphics[width=7cm]{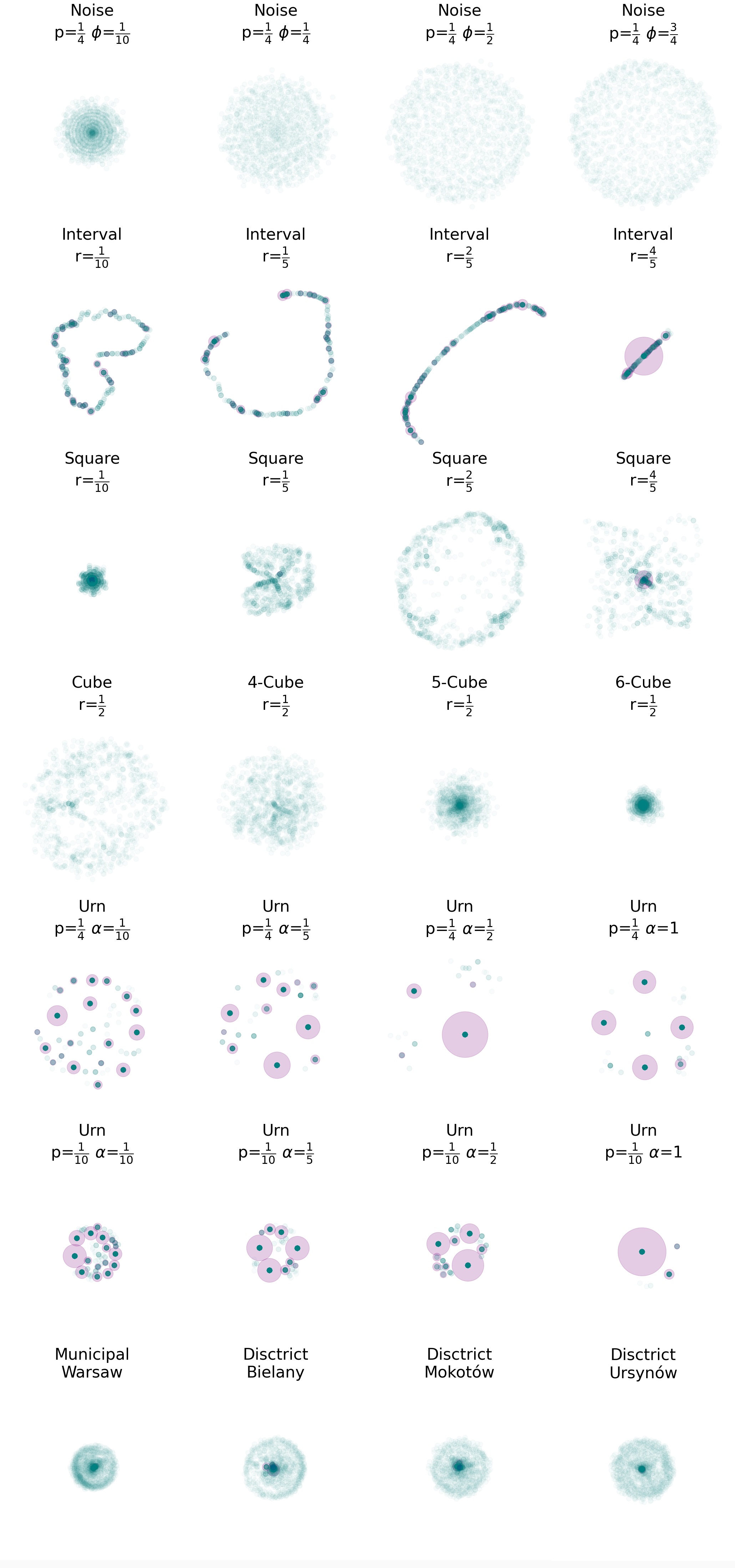}
 	\end{minipage}\hfill
 	\begin{minipage}{.49\textwidth}
 		\centering
        \text{ \ \ Jaccard}\\
        \text{}
        \includegraphics[width=7cm]{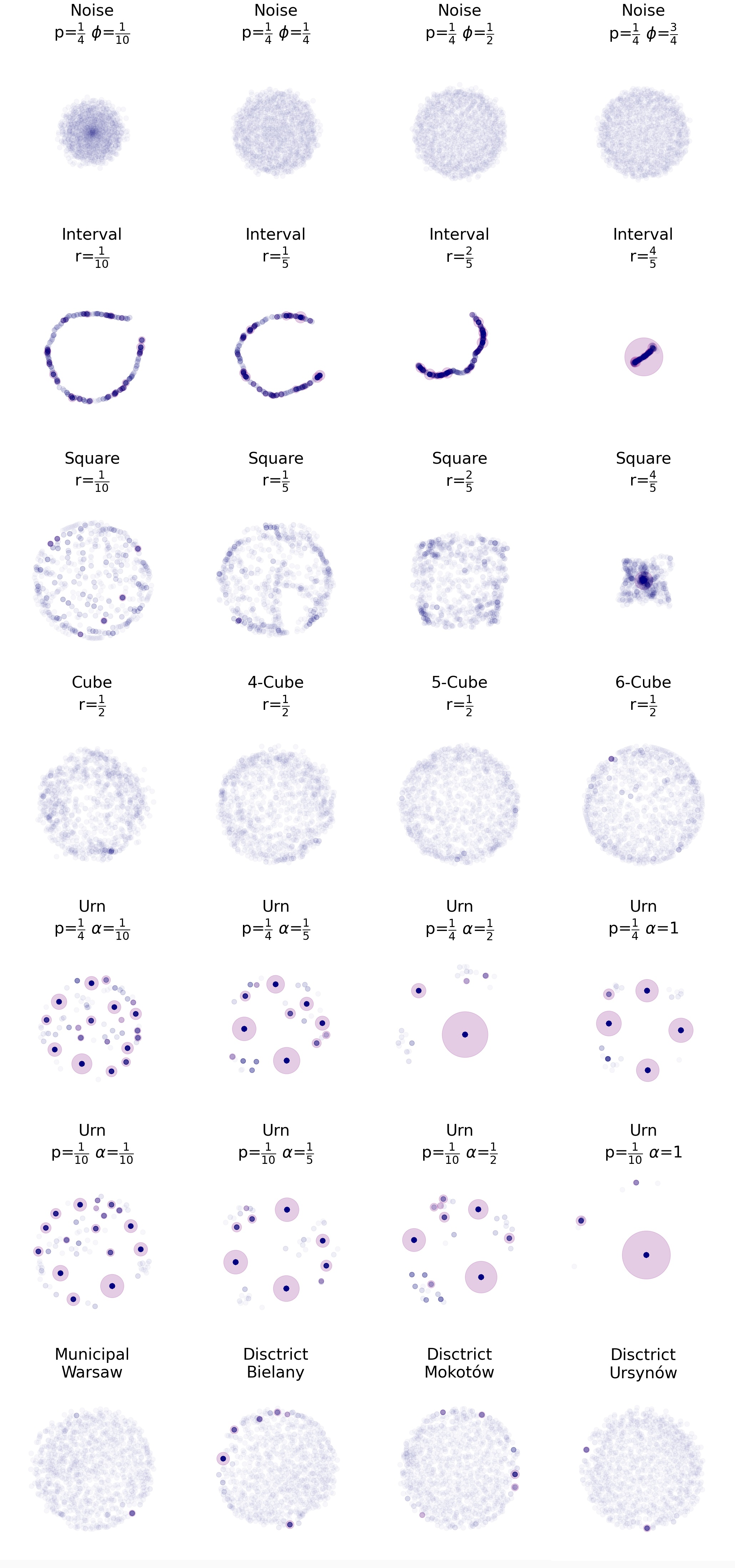}
 	\end{minipage}\hfill

    \caption{Maps of (Approval) Preferences ($100$ candidates,~$1000$ voters). On the left (teal) based on the Hamming distance, and on the right (navy) based on the Jaccard distance.}
    \label{fig:app_vote_part_2}
\end{figure}

As a metric between two approval votes, we use the Hamming and Jaccard distances. All generated instances consist of~$100$ candidates and~$1000$ voters. As for maps of ordinal preferences, we use the MDS embedding and with purple discs we depict the cases where more than~$10$ votes were identical (the larger the circle, the more votes were identical).
We present the results in \Cref{fig:app_vote_part_1,fig:app_vote_part_2}. The names and the parameters of each instance are presented above each picture. Now we discuss the results (more or less moving from the upper rows toward the bottom ones).

In the first row, we have the impartial culture elections. We see that for the Hamming distance the closer we are to~$p=0.5$, the larger is the circle. For small values of~$p$ the whole map has smaller diameter (i.e., the largest distance between any two votes in a~$p$-IC election is smaller) because all votes are similar to one another just by not accepting numerous candidates. (For instance, if we have~$100$ candidates and two disjoint votes each approving~$10$ candidates, then their Hamming distance is~$20$, while if we have two disjoint votes each approving~$50$ candidates, then their Hamming distances is~$100$; hence, we observe a large difference between these values, even though in both cases the sets of approved candidates are disjoint).
The Hamming distance is symmetric with regard to approvals and disapprovals, so if we replace~$p$ with~$1-p$ in the IC model, we should have the same result. For example, we observe that the picture for~$p=0.25$ is alike to the one for~$p=\frac{3}{4}$. However, this is far from true for the Jaccard distance, which is not symmetric with regard to approvals and disapprovals, and, in some sense, favors the approvals.
For the Jaccard distance when we increase the average number of approvals, the votes from the impartial culture will be on average at smaller distances from each other.

In the next four rows, we show results for the disjoint model (note that the resampling model is equivalent to the disjoint model with only one group). In the first three rows (i.e., for~$\phi \in \{\frac{1}{100},\frac{1}{40},\frac{1}{20}\}$ we observe clear division into groups. For the last row (i.e.,~$\phi=0.75$) the boundaries between the groups are fading away. 

In the last two rows, we show the moving model. In the first row, we fix~$\phi$ value to~$\frac{1}{200}$ and increase the number of groups from one to four. Note that having two groups is equivalent to having one group. In the second row, we consider larger numbers of groups, i.e.,~$10,25,50,100$ and proportionally increased~$\phi$ values, i.e.,~$\frac{1}{100},\frac{1}{40},\frac{1}{20},\frac{1}{10}$, respectively. (We increase the $\phi$ value because otherwise, for large numbers of groups, like $50$ or $100$, we would end up having many votes extremely similar to one another). Moreover, note that, the resampling model can be seen as an extreme case of the moving model, where the number of groups is equal to the number of votes.

In the second set of maps (\Cref{fig:app_vote_part_2}), we start with the noise model. When we increase the~$\phi$ value, we move closer toward IC (in particular, closer to~$0.5$-IC). Unlike for the resampling model, for noise model when we increase the noise we also increase the average number of approvals (or decrease if the initial~$p$ value was above~$0.5$).
Then we have three rows of the Euclidean elections. Keep in mind that the larger the radius, the more approvals we have on average.

Next, we have two rows for the urn elections. With~$\alpha$ parameter increasing from~$\frac{1}{10}$ up to~$1$, and~$p$ equal~$\frac{1}{4}$ in the upper row, and~$\frac{1}{10}$ in the lower row. Again we can see that under the Hamming distance, for smaller~$p$ the diameter of the whole map is smaller. At the same time the Jaccard distance is proportionally stretching the maps in the lower row so for both values of~$p$ they look similar.

Finally, we have four real-life instances based on the participatory budgeting elections held in Warsaw in 2022. One Municipal, where citizens could approve up to~$10$ projects, and three district ones, where citizens were allowed to approve up to~$15$ projects. 
For the Hamming distance, we observe dense centers in all four instances. These centers depict the voters that selected only one project, hence, are at most at distance $2$ from each other. The further a given point is from the center, the more projects were approved by the voter which that point represents. For the Jaccard distance, we see that the votes were very diverse; however, some of them had some copies. In principle, we do not observe any particular structure.

\new{
We also conducted a very similar experiment, but from the candidates' perspective. Due to the fact, that results for candidates, in essence, were not significantly different from those for voters, we decided to shift detailed description of these results to~\Cref{apdx:map_app_cand}. 
}

\vspace{0.2cm}
\begin{conclusionbox}
    The maps of approval preferences (similarly to the maps of ordinal preferences) confirm our intuition about the behavior of statistical cultures. They also help us get a better understanding of how particular parameters influence the models.
\end{conclusionbox}

\section{Metrics}
Next, we describe two (pseudo)metrics used to measure
distances between approval elections. Since we are interested in
distances between randomly generated elections, our metrics are
independent of renaming the candidates and 
voters.

Consider two equally-sized candidate sets~$C$ and~$D$, and a voter~$v$ with a ballot over~$C$.
For a bijection~$\sigma \colon C \rightarrow D$, by~$\sigma(v)$ we mean a voter with
an approval ballot~$A(\sigma(v)) = \{ \sigma(c) \mid c \in C\}$. In
other words,~$\sigma(v)$ is the same as~$v$, but with the candidates
renamed by~$\sigma$. 
Next, we define the isomorphic Hamming distance (inspired by the isomorphic swap and Spearman distances \Cref{swapSpearDef}).


\begin{definition}
  Let~$E = (C,V)$ and~$F = (D,U)$ be two elections, where~$|C| = |D|$,
 ~$V = (v_1, \ldots, v_n)$ and~$U = (u_1, \ldots, u_n)$.  The
  \emph{isomorphic Hamming distance} between~$E$ and~$F$, denoted~$d_{\hamming}(E,F)$, is defined as:
  \begin{align*}
   \textstyle\min_{\sigma \in \Pi(C,D)}\min_{\rho \in S_n} \left(\sum_{i=1}^n \ham(\sigma(v_i), u_{\rho(i)}) \right).
  \end{align*}
\end{definition}
\noindent Intuitively, under the isomorphic Hamming distance we unify
the names of the candidates in both elections and match their voters
to minimize the sum of the resulting Hamming distances. We call this
distance \emph{isomorphic} because its value is zero exactly if the
two elections are identical, up to renaming the candidates and voters.
Computing this distance is~$\np$-hard (see also the
related results for approximate graph
isomorphism~\citep{arv-koe-kuh-vas:c:approximate-graph-isomorphism,gro-rat-woe:c:approximate-isomorphism}).

\begin{proposition}\label{hamming-hard}[\cite{SFJLSS22}]
  Computing the isomorphic Hamming distance between two approval elections is $\np$-hard.
\end{proposition}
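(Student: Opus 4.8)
The plan is to reduce from a known $\np$-hard graph problem, exploiting the fact that the isomorphic Hamming distance forces one to optimize \emph{simultaneously} over the candidate matching $\sigma$ and the voter matching $\rho$. As a sanity check that this simultaneity is what creates hardness, I would first observe that if the voter matching $\rho$ were fixed, the problem would be polynomial: writing $x_c\in\{0,1\}^n$ for the vector recording which (matched) voters approve candidate $c$, and $y_d$ for the analogous vector in the second election, one has $\sum_{i=1}^n \ham(\sigma(v_i),u_{\rho(i)}) = \sum_{c} \ham(x_c, y_{\sigma(c)})$, which is minimized by a min-cost perfect matching between the candidate-columns. Hence, exactly as for the Spearman isomorphic distance, all the difficulty is concentrated in the choice of $\rho$, and the reduction must encode a genuinely hard combinatorial decision there.

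For the reduction itself I would use the edge-voter encoding of graphs already exploited in the literature on approval \textsc{Election Isomorphism}: given a graph $G$, let the candidates be the vertices of $G$ and introduce one voter per edge $e=\{x,y\}$ approving exactly $\{x,y\}$. For two such two-element ballots, $\ham$ equals $0$ if they coincide, $2$ if they share a vertex, and $4$ if they are disjoint. Starting from two graphs $G_1,G_2$ with equal vertex counts and equal edge counts (so that the two elections have equally many voters, as the distance requires), I would argue that for a fixed vertex bijection $\sigma$ the optimal $\rho$ is a min-cost bipartite matching between the relabeled edges of $G_1$ and the edges of $G_2$: edges common to $\sigma(G_1)$ and $G_2$ are matched at cost $0$, and the leftover edges, numbering $\tfrac12|E(\sigma(G_1))\triangle E(G_2)|$ on each side, are paired among themselves. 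This ties the election distance to the minimum edge-disagreement over vertex bijections, the quantity shown $\np$-hard in the approximate graph isomorphism works cited above.

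The main obstacle is precisely the freedom of the voter matching $\rho$: because two leftover edges that share a vertex can be paired at cost $2$ rather than $4$, the raw election distance is only sandwiched between $|E(\sigma(G_1))\triangle E(G_2)|$ and twice that value, rather than being an exact affine function of it. To obtain a clean threshold I would neutralize this ``shared-vertex discount.'' One route is to enlarge the construction with dummy candidates and voters engineered so that any two \emph{distinct} ballots are forced to a common large pairwise Hamming value; then the only way to realize cost $0$ is to match identical images, while every mismatch incurs a fixed penalty, so the optimal $\rho$ reports exactly $\tfrac12|E(\sigma(G_1))\triangle E(G_2)|$ mismatches and the distance becomes affine in the edge-disagreement. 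Checking that such padding neither creates spurious low-cost matchings nor disturbs the equal sizes of the two elections is where the bulk of the technical work lies; once that is in place, the equivalence between a distance threshold and the existence of a good vertex bijection — and hence $\np$-hardness — follows immediately.
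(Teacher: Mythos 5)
First, a caveat: the dissertation does not actually prove this proposition — it imports it from \cite{SFJLSS22} with only a pointer to the approximate-graph-isomorphism literature — so your proposal can only be judged on its own terms. On those terms the framing is sound: the fixed-$\rho$ case is indeed a min-cost candidate matching, the edge-voter encoding is the natural one, and you correctly isolate the obstacle — for a fixed vertex bijection $\sigma$ an optimal voter matching pairs the common edges at cost $0$ (a triangle-inequality exchange argument makes this without loss of generality) and the leftover edges at cost $2$ or $4$ each depending on whether they share a vertex, so the election distance is only sandwiched between $|E(\sigma(G_1))\triangle E(G_2)|$ and twice that value.

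The genuine gap is the padding step, which carries the entire weight of the reduction and is only asserted to exist. You need, after adding dummies, that any two distinct ballots lie at a common Hamming distance $C$; unwinding what that requires, the padded images of two edges that \emph{share} a vertex must end up \emph{farther} apart (by exactly $2$) than the padded images of two disjoint edges, and this must hold equivariantly, i.e., be preserved under every candidate bijection $\sigma$, because the adversary also chooses how the dummy candidates are matched. The natural paddings all fail this: private per-voter dummy candidates only add an optional extra cost without closing the $2$-versus-$4$ discrepancy, and vertex blow-ups or complement copies scale both cases by the same factor. Until you exhibit a concrete gadget with this anti-monotone, $\sigma$-equivariant property, the claimed affine relation between the election distance and $|E(\sigma(G_1))\triangle E(G_2)|$ does not hold and the reduction does not go through. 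Two ways to rescue the plan: either invoke an inapproximability result for minimizing $|E(\sigma(G_1))\triangle E(G_2)|$ strong enough that the factor-$2$ sandwich already separates yes- from no-instances, or drop the edge-voter encoding entirely and view an approval election directly as a bipartite graph with sides (voters, candidates), so that the isomorphic Hamming distance \emph{is} the edge-edit distance between bipartite graphs under side-respecting isomorphisms and no cost equalization is needed. In either case you should also verify that the hard instances of the source problem can be taken with equal numbers of edges, since the distance is only defined for elections with equally many voters.
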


Consequently, we compute this distance using a brute-force
algorithm (which is faster than using, e.g., ILP formulations).
%
%
Since this limits the size of elections that we can deal with, we also
introduce a simple, polynomial-time computable metric.

\newcommand{\fracc}[2]{{#1}/{#2}}
\begin{definition}
  Let~$E$ 
  be an 
  election 
  with candidate set~$\{c_1, \ldots, c_m\}$ and
$n$ voters. 
  Its approvalwise vector, denoted~$\av(E)$, is
  obtained by sorting the vector~$(\fracc{\score_\av(c_1)}{n}$,~$\ldots, \fracc{\score_\av(c_m)}{n})$
  in the non-increasing order.
  Then, the approvalwise distance between elections~$E$ and~$F$ with approvalwise
  vectors~$\av(E) = (x_1, \ldots, x_m)$ and~$\av(F) = (y_1, \ldots, y_m)$
  is defined as:~$$d_\app(E,F) = |x_1-y_1| + \cdots + |x_m-y_m|.$$
\end{definition}

\noindent In other words, the approvalwise vector of an election 
is a sorted vector of the normalized approval scores of its candidates, and an approvalwise distance between two elections is the~$\ell_1$ distance between their approvalwise vectors.
We sort the vectors 
to avoid the explicit use of candidate matching,
as is needed in the Hamming distance.  Occasionally we will speak of
approvalwise distances between approvalwise vectors, without
referring to the elections that provide them.

It is easy to see that the approvalwise distance is computable in
polynomial time. In fact, its definition is so simplistic that it is
natural to even question its usefulness. In its spirit, the approvalwise distance is very similar to the Bordawise distance (used for the ordinal elections); both distances convert elections to vectors of length $m$ and compare them. While the Bordawise distance seems not to be very useful, surprisingly, the approvalwise distance is quite effective. 

In~\Cref{sec:correlation} we
will see that in our election datasets the approvalwise distance is strongly correlated with the
Hamming distance. Thus, in the following discussion, we focus on
approvalwise distances.

\section{A Grid of Approval Elections}\label{sec:grid}

To better understand the approvalwise metric space of elections, next we analyze expected distances between elections generated according
to the~$(p,\phi)$-resampling model.

Fix some number~$m$ of candidates and parameters~$p, \phi \in [0,1]$, such that~$pm$ is an
integer, and consider the process of generating votes from
the~$(p,\phi)$-resampling model. 
In the limit, the approvalwise vector of the resulting election is:
\[
  (\underbrace{(1-\phi) + (\phi \cdot p), \ldots, (1-\phi) +
    (\phi\cdot p)}_{p\cdot m}, \underbrace{\phi\cdot p, \ldots,
    \phi\cdot p}_{(1-p)\cdot m}).
\]
Indeed, each of the~$p\cdot m$ candidates approved in the central
ballot either stays approved (with probability~$1-\phi$) or is
resampled (with probability~$\phi$, and then gets an approval with
probability~$p$). Analogous reasoning applies to the remaining~$(1-p)\cdot m$
candidates.
With a slight abuse of notation, we call the above vector~$\av(p,\phi)$. Furthermore, we refer to~$\av(p,0)$ as the~$p$-ID vector,
to~$\av(p,1)$ as the~$p$-IC vector, and to~$0$-ID and \linebreak~$1$-ID vectors
as the \emph{empty} and \emph{full} ones, respectively (note that~$0$-ID~$=$~$0$-IC and \linebreak~$1$-ID~$=$~$1$-IC).

Now, consider two additional numbers,~$p', \phi' \in [0,1]$, such that
$p' m$ is an integer. Simple
calculations show that:
\begin{align*}
  d_\app(\mathit{empty},\mathit{full}) &= m, \\
  d_\app(p\hbox{-}\mathrm{IC}, p\hbox{-}\mathrm{ID}) &= 2mp(1-p),\\
  d_\app(\av(p,\phi), \av(p',\phi) ) &= m \cdot|p-p'|,\\
  d_\app(\av(p,\phi), \av(p,\phi') ) &= 2mp(1-p)\cdot|\phi-\phi'|.
\end{align*}
Thus~$d_\app(\av(p,\phi), \mathit{empty})=mp$ is a~$p$~fraction of the
distance between \emph{empty} and \emph{full}, and~$d_\app(\av(p,\phi), p\hbox{-}\mathrm{ID}) = 2mp(1-p)\phi$ is a~$\phi$
fraction of the distance between~$p$-IC and~$p$-ID (see also
Figure~\ref{compass}).  Furthermore,
$d_\app(\mathit{empty},\mathit{full}) = m$ is the largest possible
approvalwise distance.

\begin{figure}[t]
\centering
    \scriptsize
   \begin{center}
     \begin{tikzpicture}
       \node[inner sep=3pt, anchor=south] (up)  at (1,4) {full};
       \node[inner sep=3pt, anchor=east] (left)  at (0,2) {0.5-IC};
       \node[inner sep=3pt, anchor=west] (right)  at (2,2) {0.5-ID};
       \node[inner sep=3pt, anchor=north] (down)  at (1,0) {empty};
       
        \draw[draw=black, line width=0.4mm, -]     (left) edge node  {} (up);
        \draw[draw=black, line width=0.4mm, -]     (left) edge node  {} (right);
        \draw[draw=black, line width=0.4mm, -]     (left) edge node  {} (down);
        \draw[draw=black, line width=0.4mm, -]     (up) edge node  {} (right);
        \draw[draw=black, line width=0.4mm, -]     (up) edge node  {} (down);
        \draw[draw=black, line width=0.4mm, -]     (right) edge node  {} (down);
        
        \normalsize
        \node[] at (0,3.4) {$\nicefrac{m}{2}$};
        \node[] at (2,3.4) {$\nicefrac{m}{2}$};
        \node[] at (0,0.6) {$\nicefrac{m}{2}$};
        \node[] at (2,0.6) {$\nicefrac{m}{2}$}; 
        
        \node[] at (1.5,2+0.25) {$\nicefrac{m}{2}$}; 
        \scriptsize
        \node[] at (1+0.25,3) {$m$}; 
        
        \scriptsize
        \node[] at (0.05-0.4,1.25) {$p$-IC};
        \node[] (pic) at (0.05,1.25) [circle,fill,inner sep=1.5pt]{};
        \node[] at (1.95+0.4,1.25) {$p$-ID}; 
        \node[] (pid) at (1.95,1.25) [circle,fill,inner sep=1.5pt]{};
        \draw[draw=black, line width=0.2mm, -]     (pic) edge node  {} (pid);
        \node[] at (0.66,1.05) {$(p,\phi)$}; 
        \node[] (p) at (0.66,1.25) [circle,fill,inner sep=1.5pt]{};
        
        \node[] at (1.4,1.35) {$a$};
        \node[] at (0.4,1.38) {$b$}; 
        \node[] at (0.67,0.62) {$c$}; 
        \node[] at (0.74,3.0) {$d$};  
        \node[anchor=west] at (4.1,2.5) {$a=2mp(1-p)\phi$};
        \node[anchor=west] at (4.1,2.2) {$b=2mp(1-p)(1-\phi)$};
        \node[anchor=west] at (4.1,1.9) {$c=mp$};  
        \node[anchor=west] at (4.1,1.6) {$d=m(1-p)$};  
        
        \draw[draw=black, line width=0.2mm, -]     (p) edge node  {} (up);
        \draw[draw=black, line width=0.2mm, -]     (p) edge node  {} (down);
    
      \end{tikzpicture}
    \end{center}
    \caption{\label{compass} Distances between resampling elections.}
\end{figure}
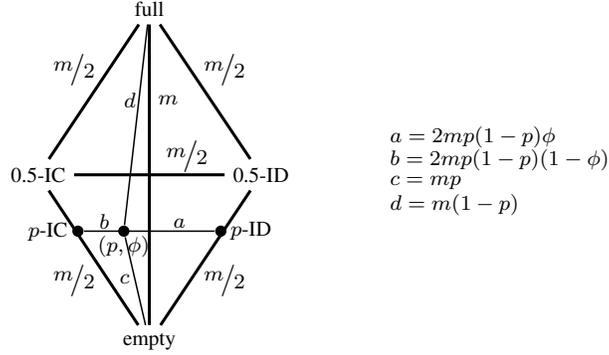

Intuitively,~$(p,\phi)$-resampling elections form a grid that spans
the space between the extreme points of our election space; the larger
the~$\phi$ parameter, the more ``chaotic'' an election becomes
(formally, the closer it is to the~$p$-IC elections), and the larger
the~$p$~parameter, the more approvals it contains (the closer it is to the
\emph{full} election). We use~$(p,\phi)$-resampling elections as
a \emph{background} dataset, which consists of~$241$ elections with~$100$
candidates and~$1000$ voters each, with the following~$p$ and~$\phi$
parameters:
\begin{enumerate}
\item~$p$ is chosen from~$\{0, 0.1, 0.2, \dots, 0.9,1\}$ and~$\phi$ is
  chosen from the interval~$(0,1)$,\footnote{By generating~$t$
    elections with a parameter from interval~$(a,b)$, we mean
    generating one election for each value~$a+i\frac{b-a}{t+1}$, 
    for~$i \in [t]$.}  
\item~$\phi$ is chosen from~$\{0, 0.25, 0.5, 0.75, 1\}$ and~$p$ is
  chosen from the interval~$(0,1)$.
\end{enumerate}
For each of these elections, we compute a point in~$\reals^2$, 
so that the Euclidean distances between these points are as similar to
the approvalwise distances between the respective elections as
possible.  For this purpose, we use the Fruchterman-Reingold
force-directed algorithm~(see \Cref{desc:embed}). For
the resulting map, we see the clear grid-like shape on the left side of Figure~\ref{app-main-results-1}.\footnote{While our visualizations fit nicely into the two-dimensional
embedding, our election space has a much higher dimension.}
Whenever we present maps of elections later in the paper, we compute
them in the same way as described above (but for datasets that include
other elections in addition to the background ones).

\newcommand\width{3.3cm}
\newcommand\smallwidth{2.2cm}
\newcommand\finalwidth{5.2cm}

\begin{figure}[]
    \centering
    \includegraphics[width=\finalwidth, trim={3 0 3 0}, clip]{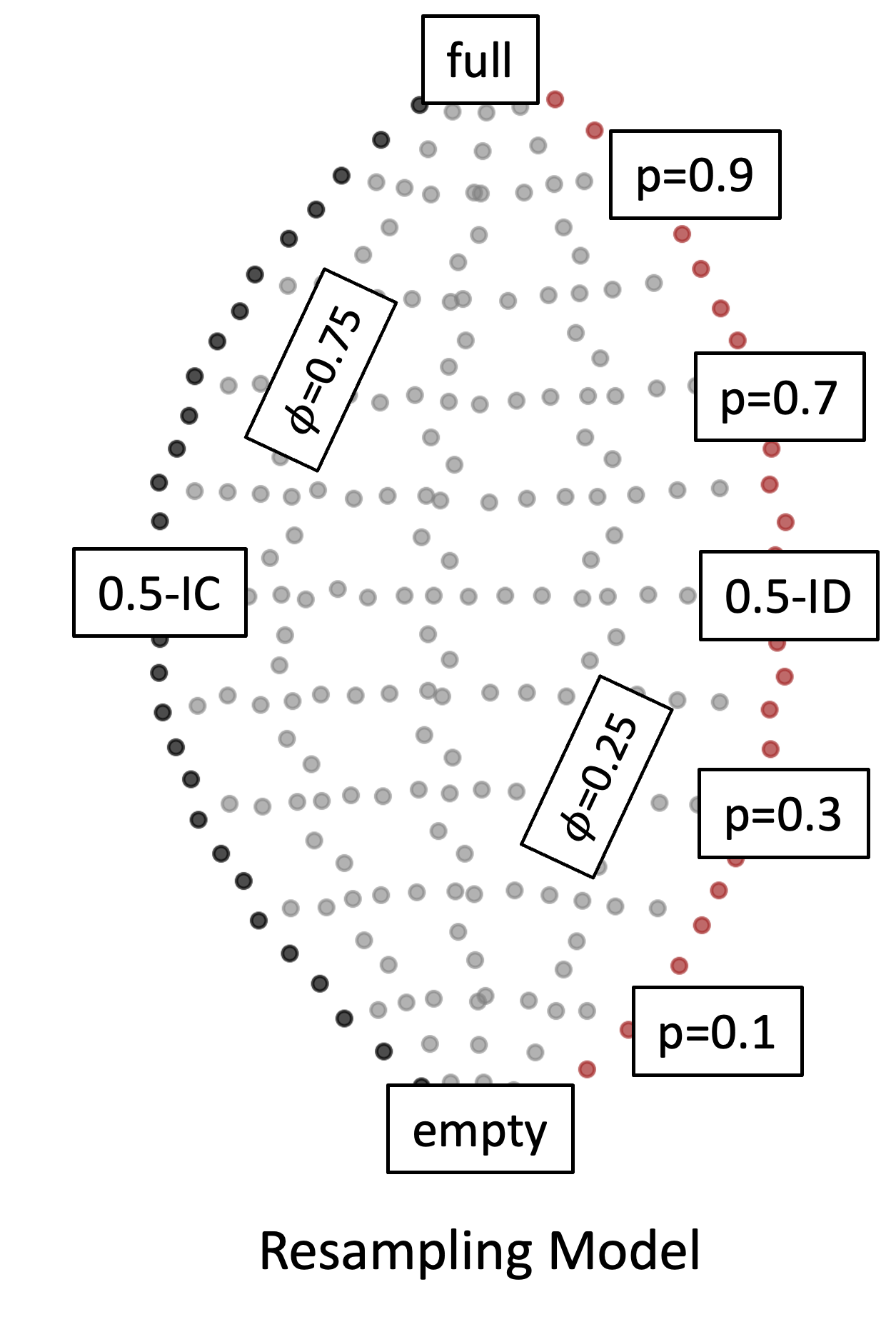}%
    \includegraphics[width=\finalwidth]{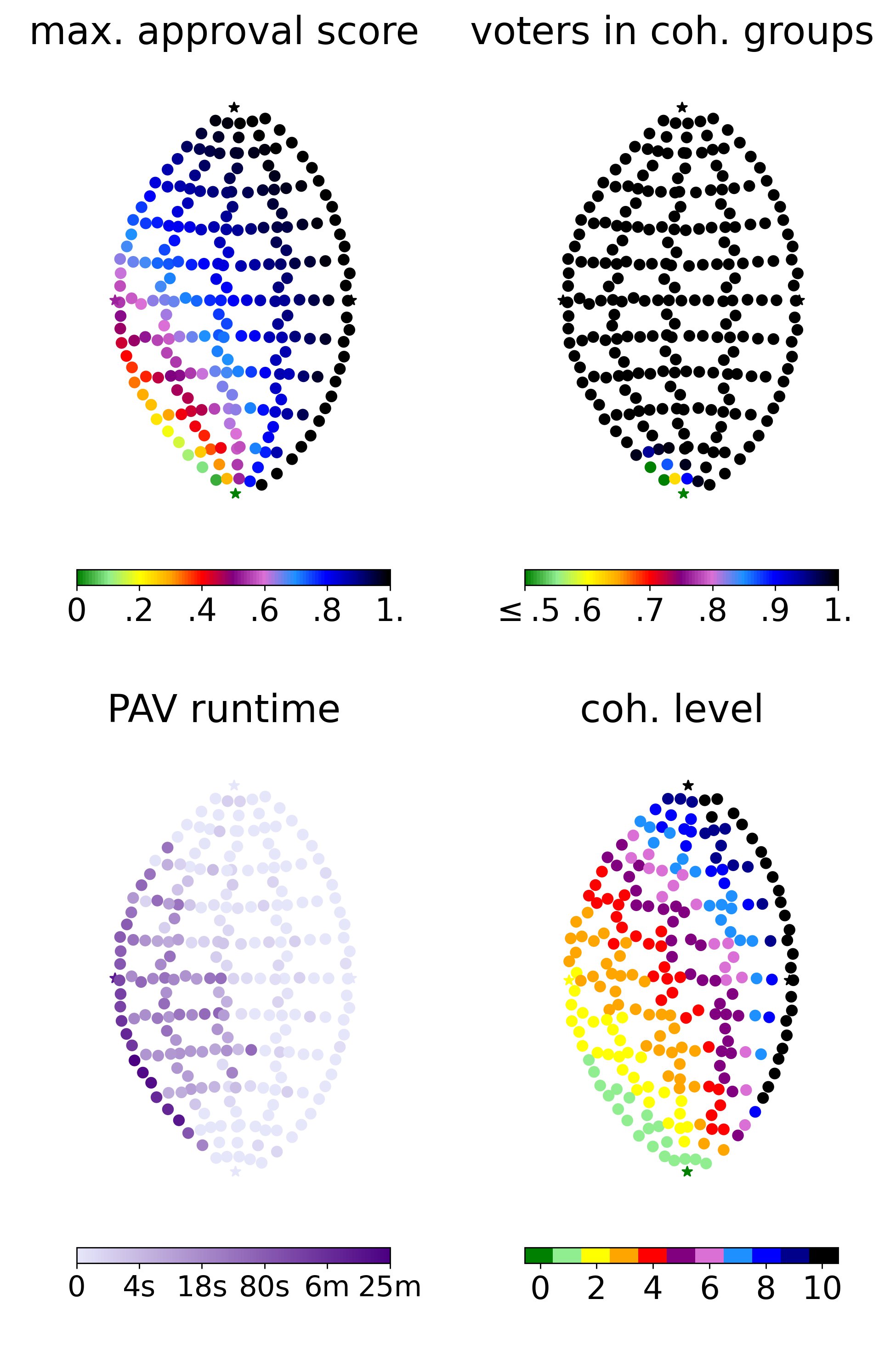}
    
    \caption{Maps for the resampling model.}
     \label{app-main-results-1}
\end{figure}

\section{Experiments}

In this section, we use the \textit{map of elections} approach
%
to analyze the quantitative properties of approval elections generated
according to our models.
%
In particular, we will see how an election's position in the grid
influences each of the properties, and what parameters to use to
%
%
generate elections with the quantitative property in a desired range.

\subsection{Experimental Design}

We use the map framework to visualize information about the following four statistics:

\begin{description}
  \item[Maximal Approval Score.] The highest approval score
  among all the candidates in a given election, normalized by the maximum
  possible score, i.e., the number of voters.

  \item[Cohesiveness Level.] The largest integer~$\ell$ such that
    there exists an~$\ell$-cohesive group (for committee size~$10$).
    To compute this, we use the algorithm based on the one provided by \cite{janeczko2022complexity}.

  \item[Voters in Cohesive Groups.] Fraction of voters that belong
    to at least one~$1$-cohesive group (for committee size~$10$).

  \item[PAV Runtime.] Runtime (in seconds) required to
  compute a winning committee under the PAV rule, by solving an
  integer linear program provided by the \texttt{abcvoting} library
  \citep{abcvoting}, using the Gurobi ILP solver.
\end{description}









We use the background dataset and six new datasets. Five of them
are generated using our statistical cultures
and consist of~$100$ candidates
and~$1000$ voters (except for the experiments related to the
cohesiveness level, where we have~$50$ candidates and~$100$ voters, due to
computation time). We have: 
\begin{itemize}
    \item $225$ elections from the noise model with Hamming distance ($25$ for each
$p \in \{0.1, 0.2, \dots, 0.9\}$ with~$\phi \in (0,1)$);

    \item $250$ elections from the disjoint model ($50$ for each
$g \in \{2,3,4,5,6\}$ with~$\phi \in (0.05, \nicefrac{1}{g})$);

    \item $225$
elections from the moving model ($25$ for each~$p \in \{0.1, 0.2, \dots, 0.9\}$ 
with~$\phi \in (0,\nicefrac{1}{100})$, and $g=1$);

    \item $200$
elections from Euclidean model ($100$ for Interval, with radius in
$(0.0025,0.25)$, and~$100$ for Square, with radius in
$(0.005, 0.5)$); these parameters are as used by
\cite{bre-fal-kac-nie2019:experimental_ejr}.

    \item $225$ elections from the truncated urn model ($25$ for each $p$~$\in$ $\{0.1,$ $0.2, \dots, 0.9\}$ with~$\alpha \in (0,1)$);

\end{itemize}

The last dataset uses real-life participatory budgeting data and contains~$44$ elections from
Pabulib~\citep{pabulib}, where for each (large enough) election we
randomly selected a subset of~$50$ candidates and~$1000$ voters
(other real-life datasets we considered had much fewer candidates).

\subsection{Experimental Results}

Our visualizations are shown in Figures~\ref{app-main-results-1}, \ref{app-main-results-ab}, \ref{app-main-results-cd}, and \ref{app-main-results-ef}.
%
%
We use the grid structure of the background dataset for comparison
with other datasets.  Notably, some of them do not fill this grid: the
disjoint model (Figure~\ref{app-main-results-ab}b) is restricted to the
lower half (i.e., the disjoint model does not yield elections with
very many approvals), the Euclidean model
(Figure~\ref{app-main-results-cd}d) is restricted to the left half (due to
the uniform distribution of points, its elections are rather ``chaotic''),
and the real-world dataset Pabulib (Figure~\ref{app-main-results-ef}f) is
placed very distinctly in the bottom left part.

\begin{figure}[]
    \centering

    \includegraphics[width=\finalwidth, trim={3 0 3 0}, clip]{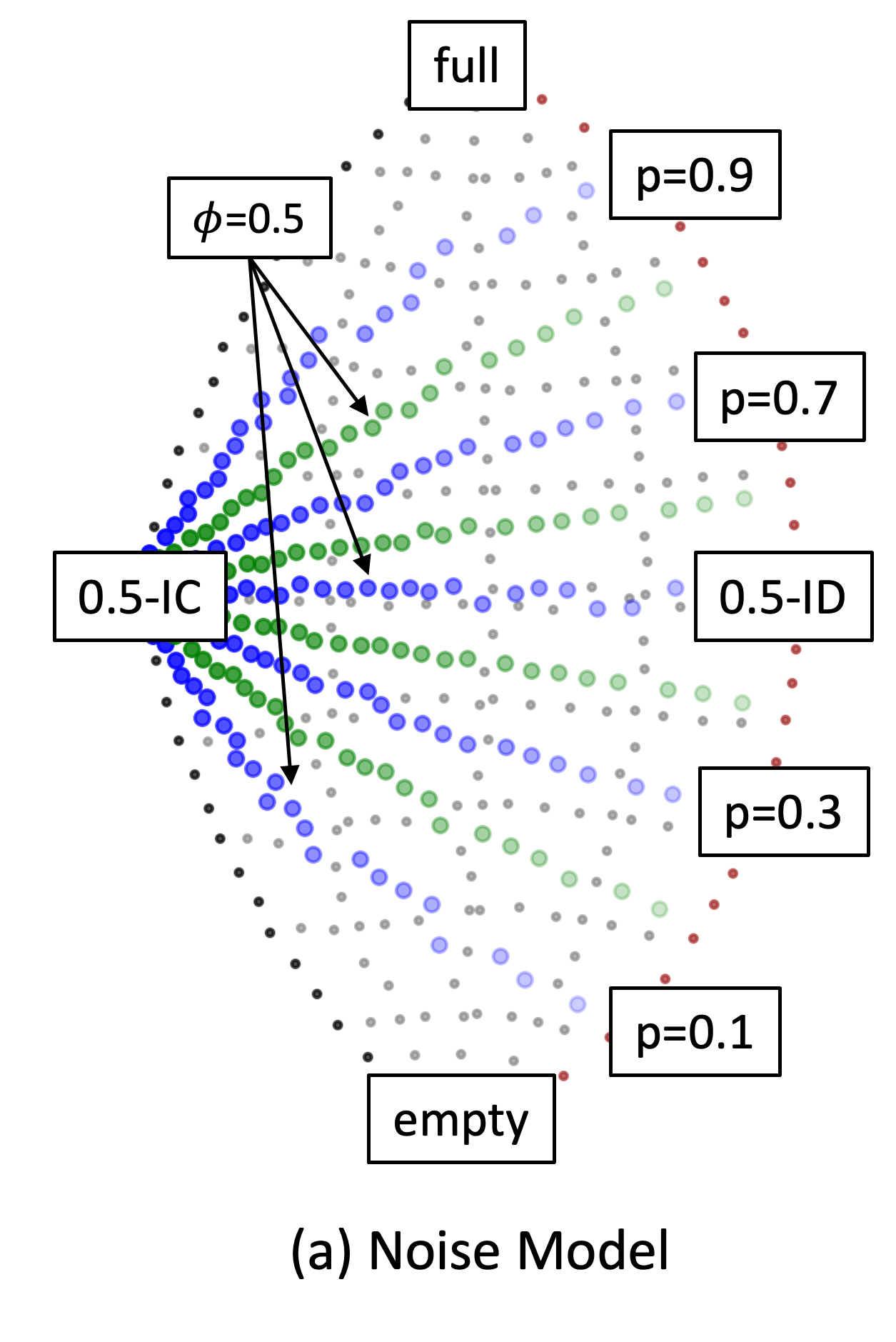}%
    \includegraphics[width=\finalwidth]{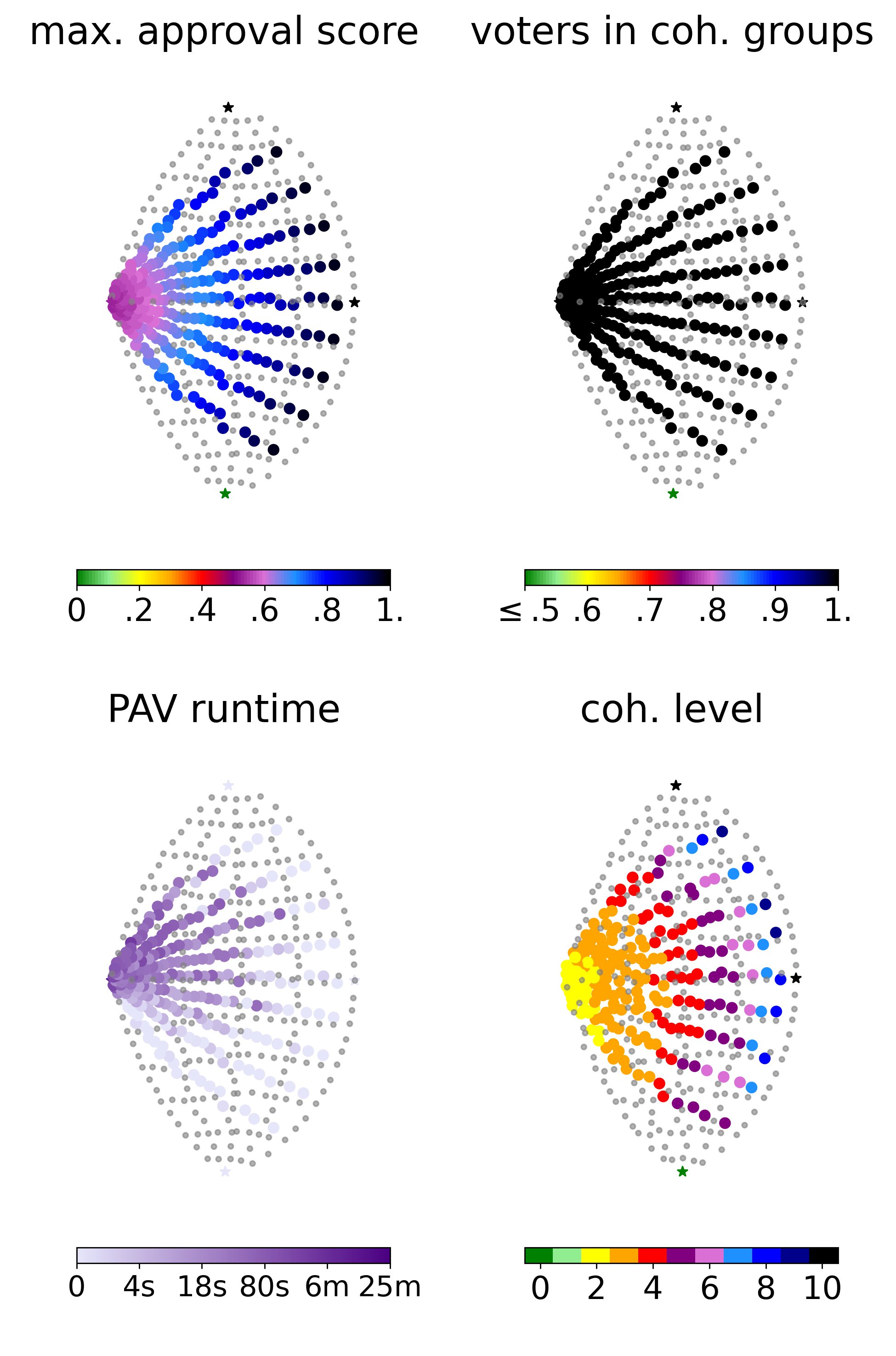}%
    \\
    \includegraphics[width=\finalwidth, trim={3 0 3 0}, clip]{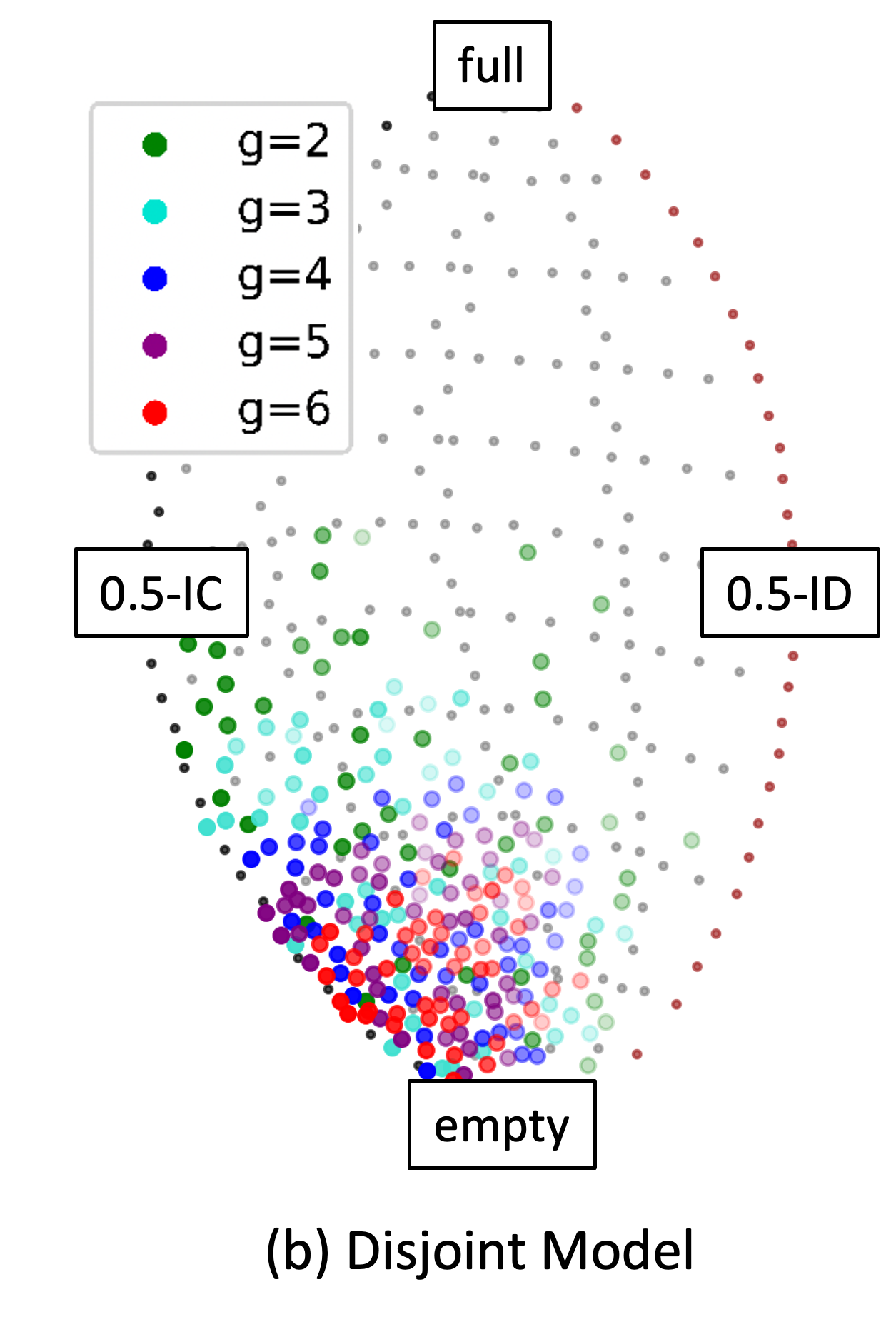}%
    \includegraphics[width=\finalwidth]{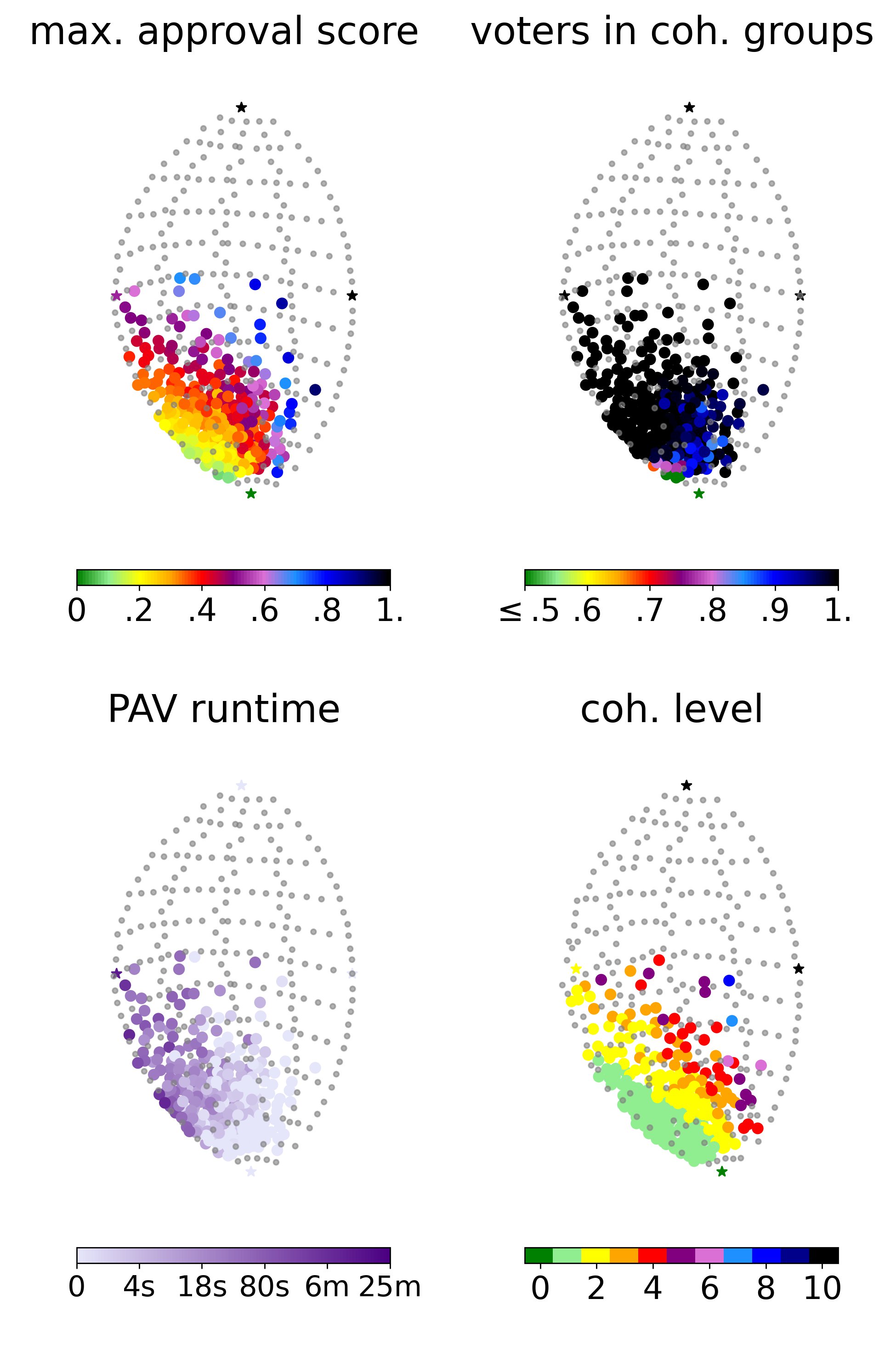}%

    \caption{Maps for (a)~the noise model and (b) the disjoint
      model. The darker a dot in the main plot is, the larger is the value of the~$\phi$ parameter.}
    \label{app-main-results-ab}
\end{figure}

\begin{figure}[]
    \centering

    \includegraphics[width=\finalwidth, trim={3 0 3 0}, clip]{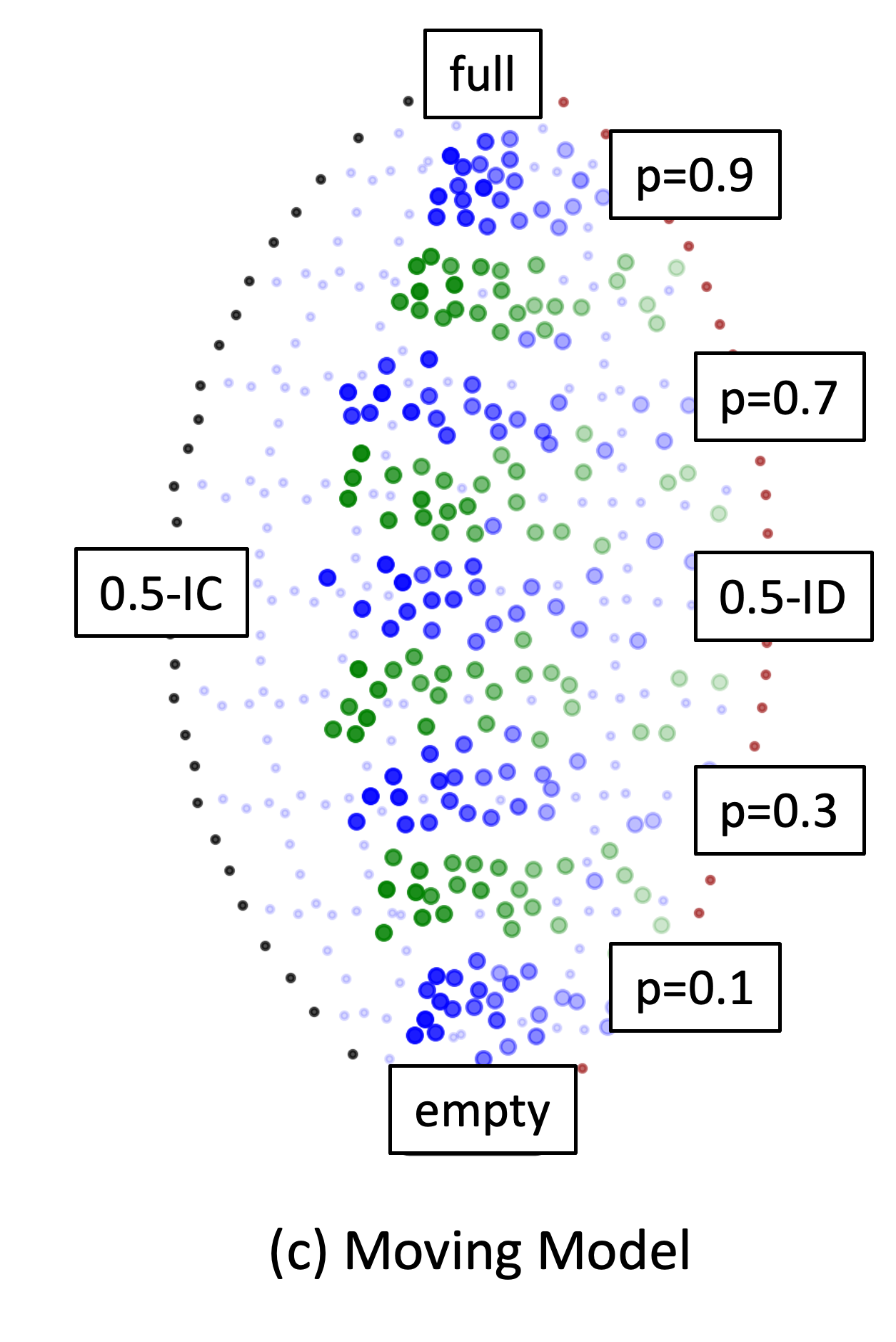}%
    \includegraphics[width=\finalwidth]{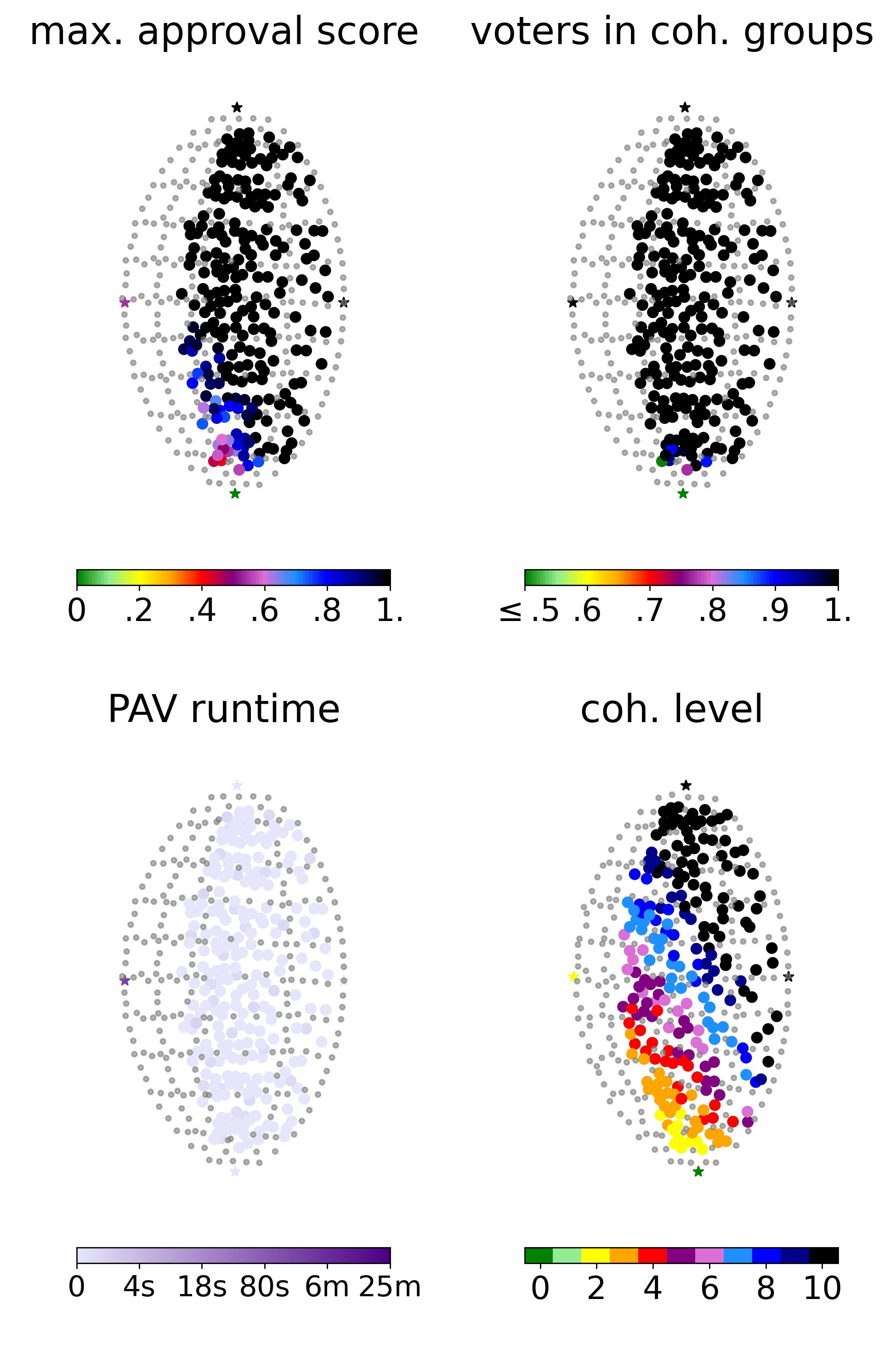}
    \\
    \includegraphics[width=\finalwidth, trim={3 0 3 0}, clip]{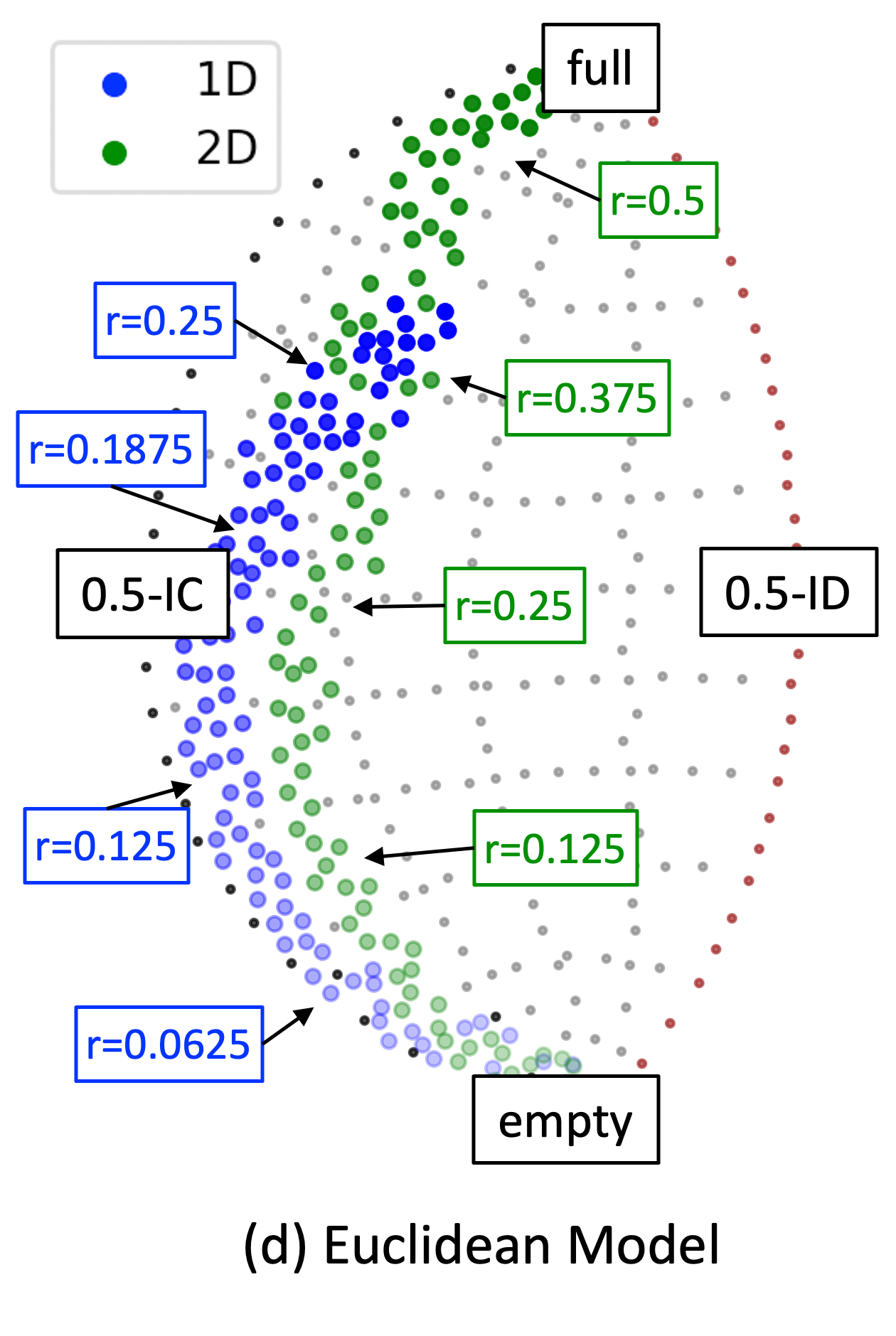}%
    \includegraphics[width=\finalwidth]{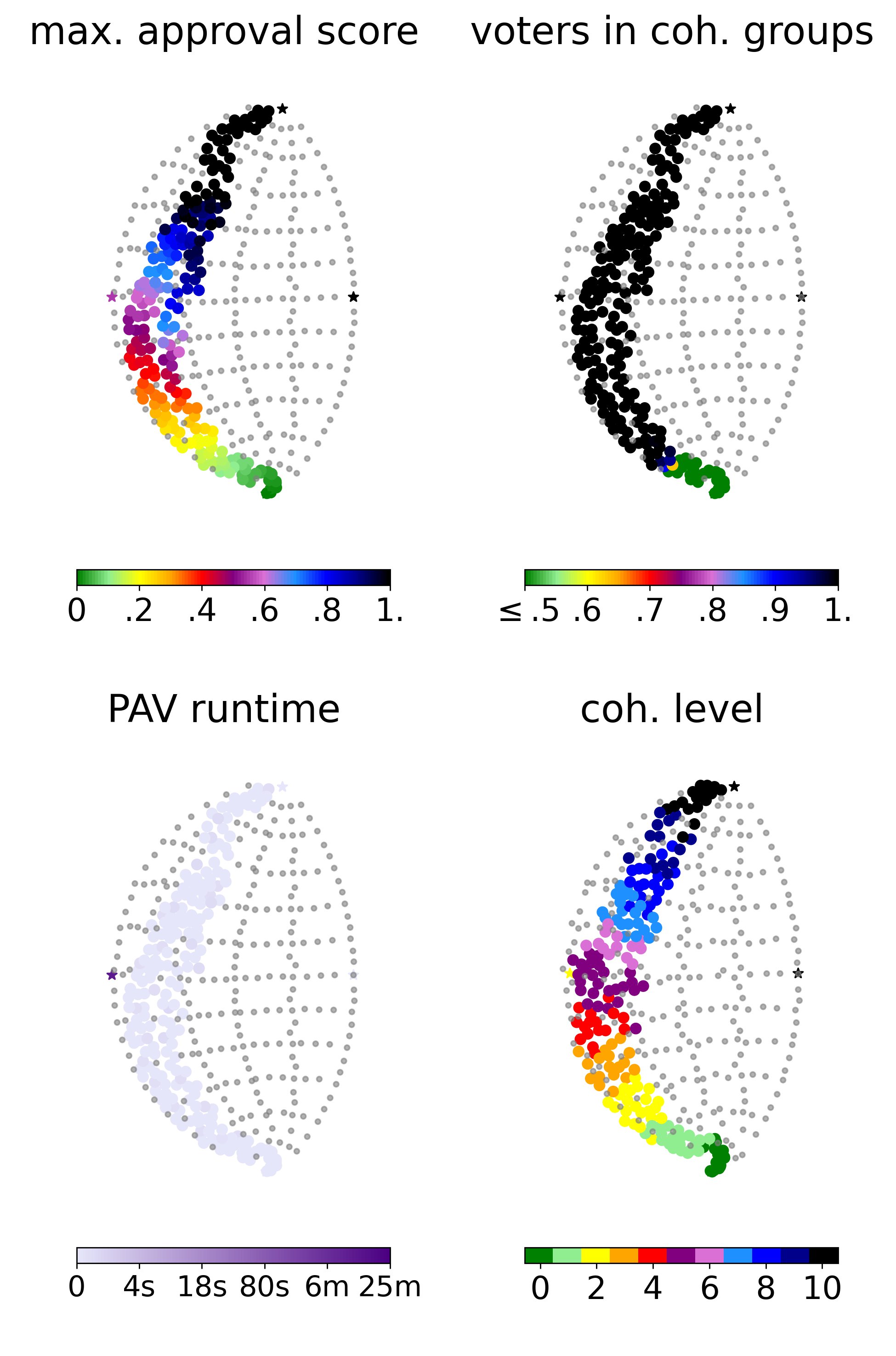}
        \caption{Maps for (c)~the moving model, (d)~the Euclidean model. The darker a dot in the main plot is, the larger is the value of~$\phi$ parameter for (c), and the larger is the length of the radius for (d).}
         \label{app-main-results-cd}
\end{figure}

\begin{figure}[]
    \centering
    \includegraphics[width=\finalwidth, trim={3 0 3 0}, clip]{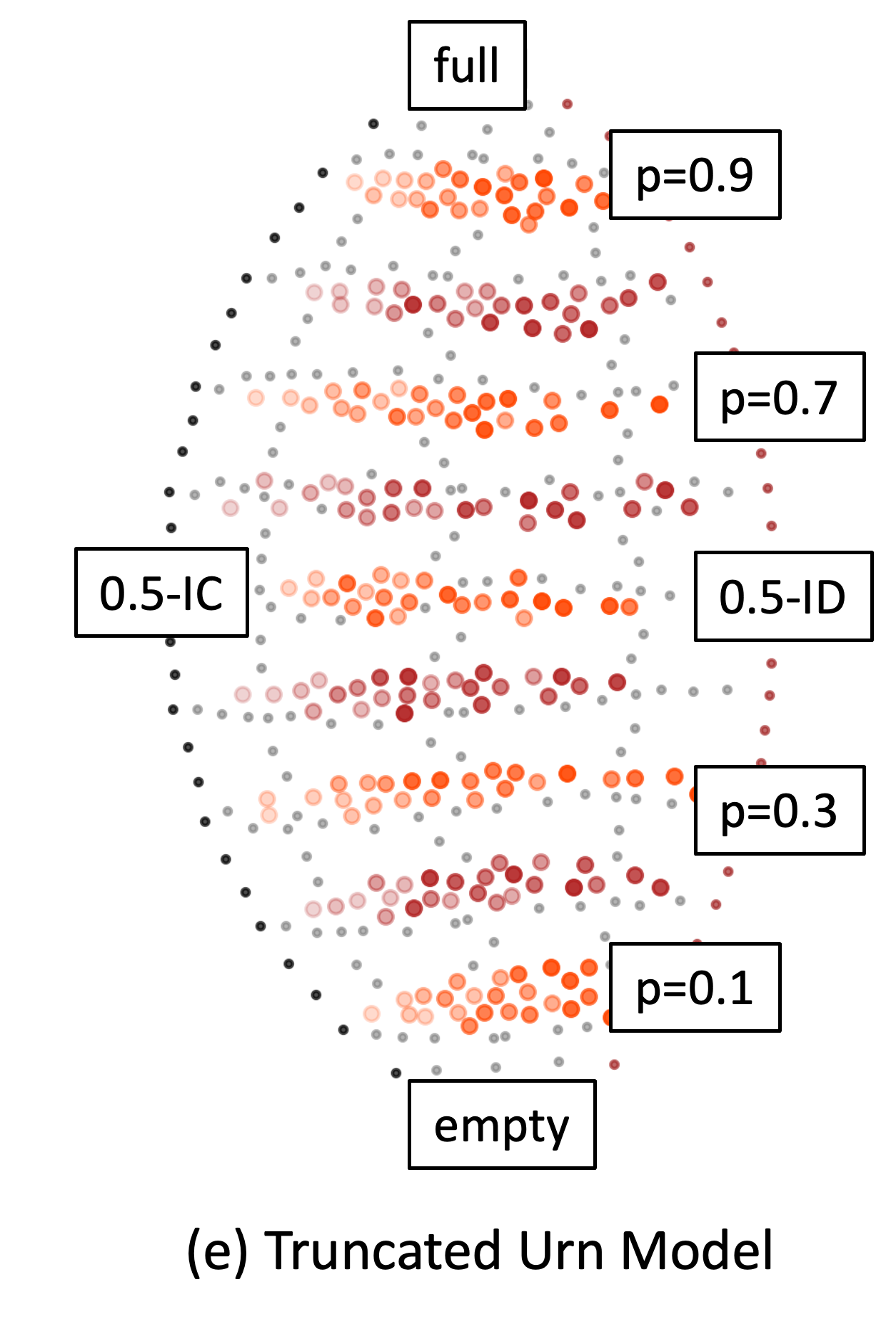}%
    \includegraphics[width=\finalwidth]{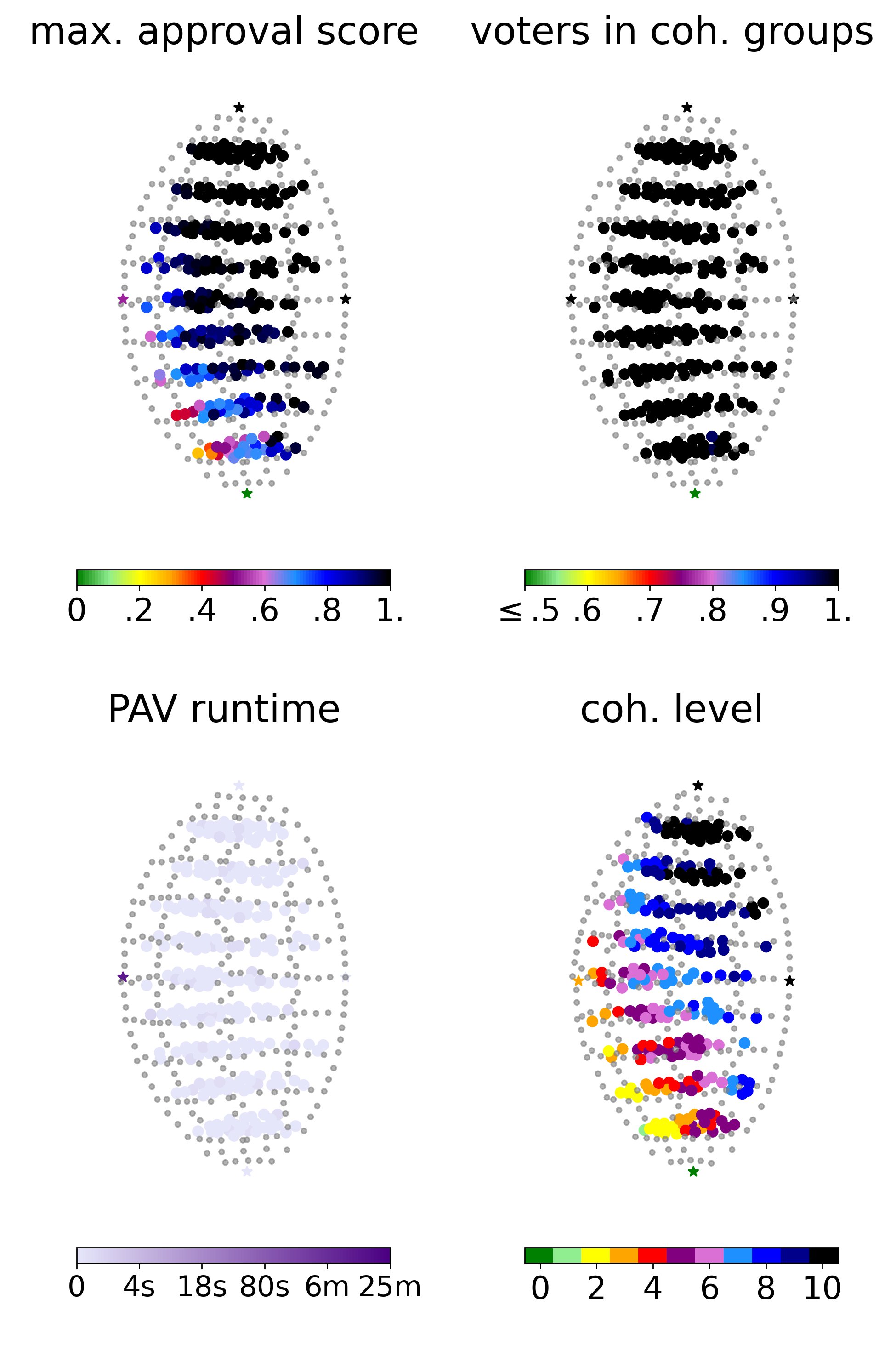}
    \\
    \includegraphics[width=\finalwidth, trim={3 0 3 0}, clip]{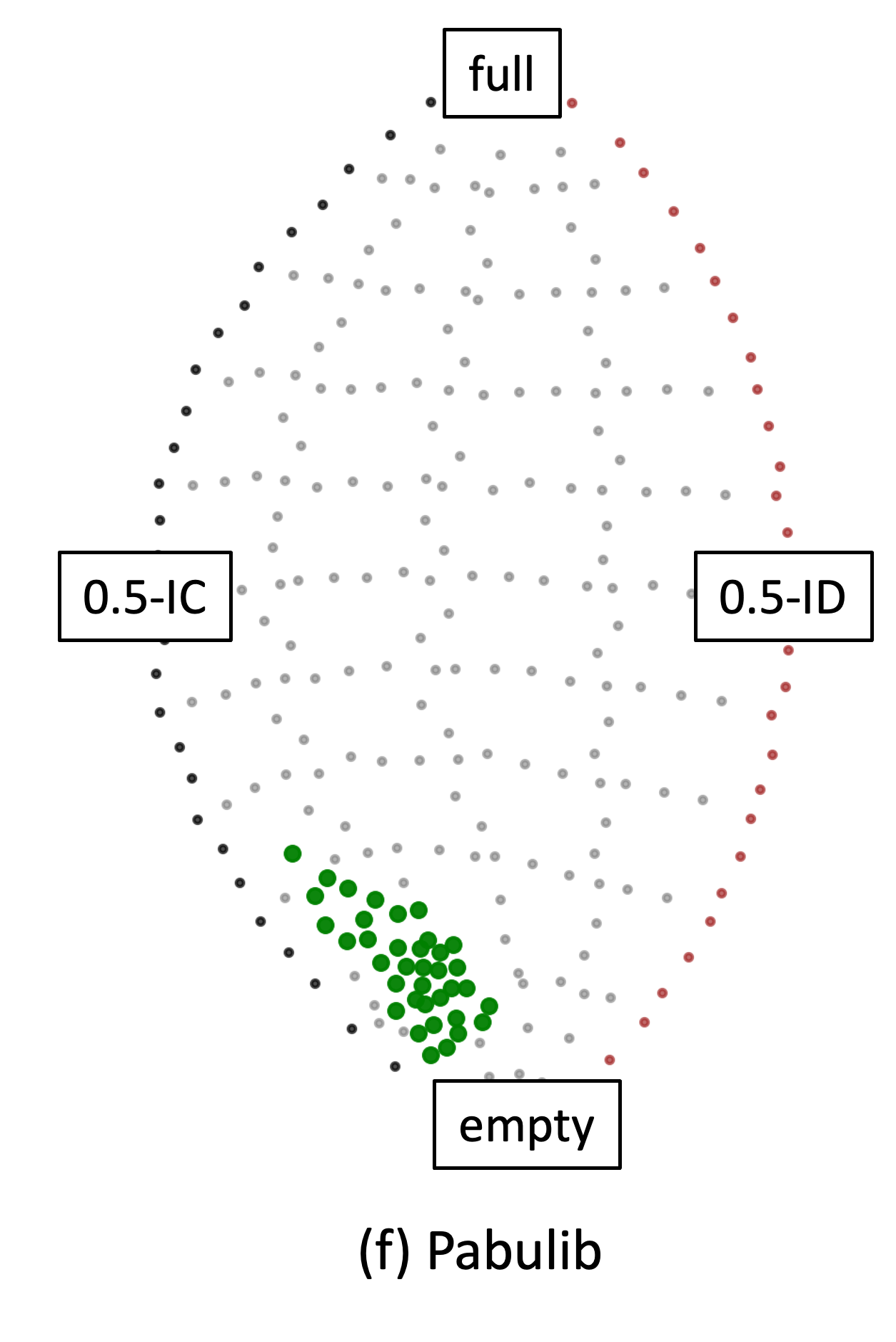}%
    \includegraphics[width=\finalwidth]{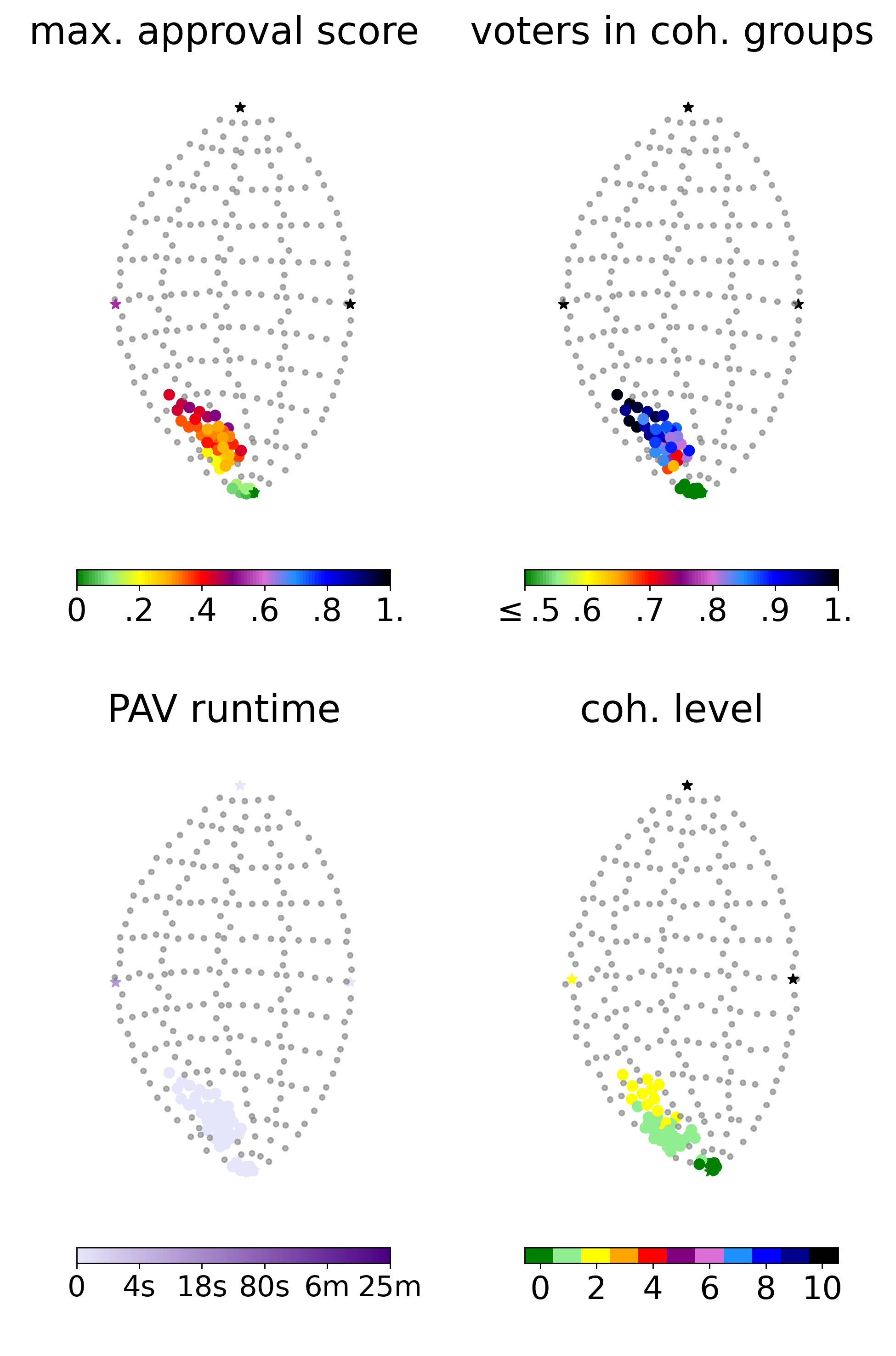}%
    \caption{Maps for (e)~the truncated urn model and (f)~Pabulib. The darker a dot in the main plot is, the larger is the value of~$\alpha$ parameter for (e).}
    \label{app-main-results-ef}
\end{figure}

To get an intuitive understanding of the four statistics, let us
consider the background dataset in Figure~\ref{app-main-results-1}.  We
see that the highest approval score value is lowest in the lower left side
and increases toward up and right. This is sensible: If the average
number of approved candidates increases, so does this statistic. Furthermore,
if voters become more homogeneous, high-scoring candidates are likely
to exist.  Moreover, regarding voters in cohesive groups, it turns out
that in most elections almost all voters belong to some 1-cohesive groups,
with the left lower part as an exception (where there are not enough
approvals to form~$1$-cohesive groups). The time needed to find a
winning committee under PAV is correlated with the distance from
0.5-IC. We see that it takes the longest to find winning committees if the election is unstructured.  Similarly to the highest approval score,
the cohesiveness level increases when moving up or right in the
diagram. Cohesive groups with levels close to the committee size only
exist in very homogeneous elections (rightmost path) and elections
with many approvals (top part).

We move on to the results for the six other datasets.
Note that each figure also contains the background dataset (gray dots)
for reference. These results help to understand the differences
between our statistical cultures.

The maximum approval score statistic provides insight into whether there is a 
candidate that is universally supported. Instances with a value close to~$1$ possess such
a candidate. In a single-winner election, this candidate is likely to be 
a clear winner. This is undesirable when simulating, for example, contested elections. Also note that in the real-world
data set (Pabulib) we do not observe such candidates.

When looking at the PAV runtime, we find some
statistical cultures that generate computationally difficult
elections, such as, e.g., the~$(p,\phi)$-resampling model with parameter values
close to~$p=0.5$ and~$\phi=1$ (0.5-IC), the noise model with
parameters~$p\in[0.5,0.9]$ and~$\phi>0.5$, and the disjoint model
with~$g=2$.  Yet, 
instances
from the real-world dataset, as well as from the Euclidean and urn ones, can be computed very quickly.\footnote{Less than 1 second on a single core (Intel Xeon Platinum 8280 CPU @ 2.70GH)
  of a 224 core machine with 6TB RAM. In contrast, the worst-case instance (0.3-IC) required 25 minutes on 13 cores.}

Concerning voters in cohesive groups, whenever this statistic is close to 1,
it is easy to satisfy most voters with at least one approved candidate
in the committee; such committees are easy to find~\citep{justifiedRepresentation}.
Since many proportional rules take special care of voters who belong
to cohesive groups, in such elections there are no voters that are at
a systematic disadvantage.  In many of our generated elections (almost)
all voters belong to~$1$-cohesive groups, but this is not the case for
the real-world, Pabulib data.
Indeed, to simulate Pabulib data well, we would likely need to
provide some new
statistical culture(s).
%

For the cohesiveness level, we see that all models generate a full
spectrum (i.e.,~$[0,10]$) of cohesiveness levels.
However, we expect realistic elections to appear in the ``lower
left'' part of our grid (with few approvals), and such elections tend
to have low cohesiveness levels.  Indeed, this is also the case for the 
Pabulib elections. Hence, it is important how proportional rules 
treat~$\ell$-cohesive groups with small~$\ell$.


\subsection{Correlation}\label{sec:correlation}
\Cref{app-main-results-1,app-main-results-ab,app-main-results-cd,app-main-results-ef} are based on the
approvalwise distance.  We 
argue that they
would not change much if we used the (computationally intractable)
isomorphic Hamming distance.  To this end, we generated~$413$ elections
with~$10$ candidates and~$50$ voters from the statistical cultures used in
the previous experiment.
The dataset we use for comparing metrics consists of:~$40$ elections from the disjoint models,~$45$ elections from the noise models with Hamming distance,~$50$ elections from moving model, ~$50$ elections from the truncated urn models,~$50$ elections from Euclidean models,~$134$ elections from resampling models, ~$20$ elections from IC,~$20$ elections from ID, and four extreme elections (i.e., 0.5-IC, 0.5-ID, Empty, Full).

\begin{figure}[]
    \centering
    \includegraphics[width=7cm]{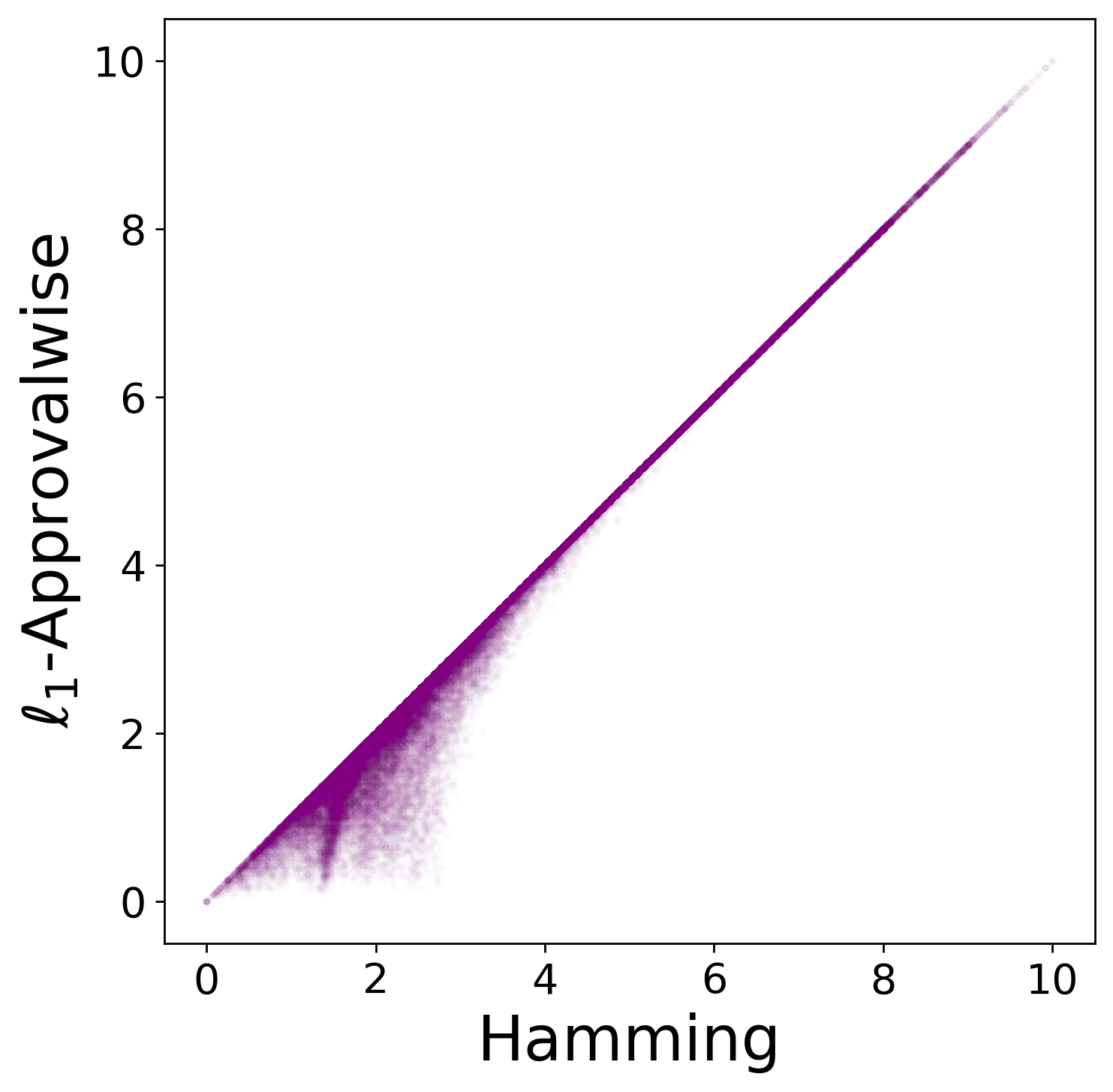}%
    \caption{Correlation between isomorphic Hamming and approvalwise metrics.}
    \label{correlation}
\end{figure}

We compare Hamming and approvalwise
distances. The results are presented in Figure~\ref{correlation}. Each
dot there represents a pair of elections, and its coordinates are the
distances between them, according to the Hamming and approvalwise
metrics. The Pearson Correlation Coefficient is~$0.9899$, and for~$67\%$ of pairs of 
elections the distances are identical. In~\Cref{tab:pcc_app_cult} we take a more fine-grained view of different models, presenting PCC individually for each of them. In each row, we present a correlation based on distances between elections, where at least one of the elections is from a given model. As we can see, when computing Hamming and approvalwise distances from ID we have a perfect correlation, while for distances from impartial culture we have the worst correlation -- which is still extremely high and equals $0.966$.

\begin{table}[]
    \centering
    \begin{tabular}{ c | c | c}
        Statistical Culture & PCC & \% equal \\
    	\midrule
        Identity  & 1.0 & 1.0 \\
        Disjoint  & 0.997 & 0.777 \\
        Moving & 0.995 & 0.652 \\
        2D Euclidean & 0.994 & 0.682 \\
        Resampling  & 0.992 & 0.713 \\
        Truncated Urn  & 0.985 & 0.490 \\
        Noise  & 0.974 & 0.579 \\
        1D Euclidean  & 0.971 & 0.494 \\
        Impartial Culture  & 0.966 & 0.556 \\
        \midrule
        \end{tabular}
    \caption{Pearson correlation coefficients between the Hamming distance and the
        approvalwise distance for each statistical culture used in our maps. The last column contains the percentage of pairs of elections for which both distances are equal.
        }
    \label{tab:pcc_app_cult}
\end{table}

\vspace{0.2cm}
\begin{conclusionbox}
The conclusions of our experiments are as follows.
\begin{itemize}
    \item The approvalwise distance is surprisingly (given how simple it is) strongly correlated with the Hamming distance, which we treat as the ideal one.
    \item Resampling model seems to be quite powerful. We can easily interpret its parameters and generate a large variety of elections.
\end{itemize}
\end{conclusionbox}

\section{Summary}

We introduced several models for generating synthetic approval elections. We believe that these models (in particular, the resampling model) will make it easier to perform future experiments that involve approval elections. We also introduced two distances between approval elections; one isomorphic \emph{ideal} one, which is precise but slow (it takes a lot of time to compute it), and other that is less precise, but fast (can be computed immediately even for instances with thousands of voters and candidates), and strongly correlated with the ideal one. We presented the applications of the map of approval elections showing how different models behave under different circumstances. Among others, we analyzed the running time of PAV rule, exhibiting regions of the map in which the time needed to compute the winning committee is the longest. Moreover, we show where some of the real-life elections lie on the map, however, a good direction and an important task for future work is to broadly study more real-life datasets with the methods proposed in this chapter.

\vspace{0.2cm}
\begin{contributionbox}
\begin{itemize}
    \item Introduction of new models for generating approval elections. In particular, introduction of the resampling model, which turned out to be very practical and is already used by other researchers~(\cite{brill2023robust,lackner2023abcvoting}).
    \item Adaptation of the map of elections framework for approval ballots.
    \item Experimental analysis of the introduced models, showing their strengths and weaknesses.
\end{itemize}
\end{contributionbox}

\chapter{Discussion \& Future Work}
\label{ch:summary}

We would like to emphasize that the main contribution of this thesis is a framework that can be used for numerous novel applications. We started a new line of research and to the date of submission of this thesis, there are already several papers using the content provided within this dissertation.


There exist many possible applications and extensions of the presented research. One possibility is to use the framework to study new types of instances. A good example of direct application of the map framework is a recent paper dedicated to the Stable Roommates and Stable Marriage instances. (This paper received Best Student Paper Award at AAMAS-2023).

\begin{itemize}
\item \emph{A Map of Diverse Synthetic Stable Roommates Instances} \\ Niclas Boehmer, Klaus Heeger, and \textbf{Stanisław Szufa}; AAMAS-\citeyear{boehmer2023map}
\end{itemize}

\noindent
Another approach is to use the map concept to visualize the election data, what was done in:

\begin{itemize}
\item \emph{Collecting, Classifying, Analyzing, and Using Real-World Ranking Data} \\ Niclas Boehmer and Nathan Schaar; AAMAS-\citeyear{boe-sch:t:datasets}
\end{itemize}

\noindent
Moreover, one can study more deeply proposed distances and aggregate representation of election associated with them. For example, in the following paper, authors focus on the analysis of position matrices.

    \begin{itemize}
\item \emph{Properties of Position Matrices and Their Elections} \\ Niclas Boehmer, Jin-Yi Cai, Piotr Faliszewski, Austen Z. Fan, Łukasz Janeczko, 
Andrzej Kaczmarczyk, and Tomasz W\c{a}s; AAAI-\citeyear{boe-cai-fal-fan-jan-kac-was:c:position-matrices}
\end{itemize}

Below, we present a list of other papers that also study similar problems.  Note that, in this chapter, we mention only those works which are either coauthored by Stanisław Szufa or by his close coworkers.

\begin{itemize}

    \item \emph{Diversity, Agreement, and Polarization in Elections} \\ Tomasz W\c{a}s, Piotr Faliszewski, Andrzej Kaczmarczyk, Krzysztof Sornat, and \textbf{Stanisław Szufa}; IJCAI-\citeyear{DBLP:conf/ijcai/Faliszewski0SSW23}
    
    \item \emph{An Experimental Comparison of Multiwinner Voting Rules on Approval Elections} \\ Piotr Faliszewski, Martin Lackner, Krzysztof Sornat, and \textbf{Stanisław Szufa}; IJCAI-\citeyear{DBLP:conf/ijcai/FaliszewskiLSS23}
    
    \item \emph{Participatory Budgeting: Data, Tools, and Analysis} \\ 
    Piotr Faliszewski, Jarosław Flis, Dominik Peters, Grzegorz Pierczyński, Piotr Skowron, Dariusz Stolicki, \textbf{Stanisław Szufa}, Nimrod Talmon; \\ IJCAI-\citeyear{fal-fli-pet-pie-sko-sto-szu-tal:c:pabulib}
    
    \item \emph{A Quantitative and Qualitative Analysis of the Robustness of (Real-World) Election Winners\footnote{Previously the paper was called \emph{On the Robustness of Winners: Counting Briberies in Elections}, and in its full arXiv version used the map of elections framework.}} \\
    Niclas Boehmer, Robert Bredereck, Piotr Faliszewski, and Rolf Niedermeier; EAAMO-\citeyear{boehmer2022quantitative}
    
    \item \emph{Discovering Consistent Subelections} \\ Łukasz Janeczko, Jérôme Lang, Grzegorz Lisowski, and \textbf{Stanisław Szufa}; To appear at AAMAS-2024
    
\end{itemize}

There are also some related problems that have not been given enough attention yet. For example, given a set of ordinal elections called $B$, find a new election, such that its distance to the closest election from $B$ is the largest possible. This will allow us to fill in the potential gaps in our map. Another problem is how to reasonably compare approval and ordinal elections, and, more generally, elections of different sizes and elections with partial preference data.

We see this thesis as an invitation to a deeper study of different elections, statistical cultures, and their relations.

\appendix

\renewcommand{\thesection}{\Alph{section}.\arabic{section}}
\setcounter{section}{0}


\chapter{Distances Between the Compass Elections}
Here, we provide missing proofs from~\Cref{subsec:analysis_compass}.
\label{apdx:supp}
\subsubsection{EMD-Positionwise}

\distemdpos*

\begin{proof}
\textbf{$\mathbf{\textbf{ID}_m}$ and $\mathbf{\textbf{UN}_m}$.} We start by computing the distance between $\ID_m$ and $\UN_m$. Note that $\UN_m$ always remains the same matrix regardless of how its columns are ordered. Thus, we can compute the distance between these two matrices using the identity permutation between the columns of the two matrices: 

{\scriptsize
\begin{align*}
    \POS(\ID_m,\UN_m) &= \sum_{i=1}^m \EMD((\ID_m)_i, (\UN_m)_i)
\\ &=\textstyle\sum_{i=1}^m (\textstyle\sum_{j=1}^{i-1} \frac{j}{m} + \textstyle\sum_{j=1}^{m-i} \frac{j}{m}) 
\\ &= \frac{1}{m} \textstyle\sum_{i=1}^m ( \frac{1 + (i-1)}{2}(i-1) + \frac{1 + (m-i)}{2}(m-i) ) 
\\ &= \frac{1}{2m} \textstyle\sum_{i=1}^m (2i^2 - 2i - 2mi + m^2  + m) 
\\ &= \frac{1}{2m} (2\frac{m(m+1)(2m+1)}{6} - m(m+1)-m^2(m+1)  + m(m^2 + m))
\\ &= \frac{1}{2m} (\frac{(m^2+m)(2m+1)}{3} - (m+1)(m+m^2)  + m(m^2 + m))
\\ &= \frac{m+1}{2} (\frac{(2m+1)}{3} - (m+1)  + m) 
\\ &= \frac{(m+1)(m-1)}{3} 
\\ &= \frac{1}{3}(m^2-1).
\end{align*}
}
In the following, we use $(*)$ when we omit some calculations analogous to the calculations for $\POS(\ID_m,\UN_m)$.

\medskip
\noindent \textbf{$\mathbf{\textbf{UN}_m}$ and $\mathbf{\textbf{ST}_m}$:} Similarly, we can also directly compute the distance between $\UN_m$ and $\ST_m$ using the identity permutation between the columns of the two matrices. In this case, all column vectors of the two matrices have in fact the same $\EMD$ distance from each other:

 \noindent
 $\POS(\UN_m,\ST_m) =  m\cdot (\frac{1}{2}+2\cdot\textstyle\sum_{i=1}^{\frac{m}{2}-1} \frac{i}{m})= \frac{m}{2}+\frac{m}{2}(\frac{m}{2}-1) = \frac{m^2}{4}.$
 
  \medskip
\noindent \textbf{$\mathbf{\textbf{UN}_m}$ and $\mathbf{\textbf{AN}_m}$:} Next, we compute the distance between $\UN_m$ and $\AN_m$ using the identity permutation between the columns of the two matrices. Recall that $\AN_m$ can be written as:
\[
  \AN_m = 0.5\begin{bmatrix}
    \ID_{\nicefrac{m}{2}} & \rID_{\nicefrac{m}{2}} \\
    \rID_{\nicefrac{m}{2}} & \ID_{\nicefrac{m}{2}}
  \end{bmatrix}.
\] Thus, it is possible to reuse our ideas from computing the distance between identity and uniformity:

  \noindent
 $\POS(\UN_m,\AN_m) = 4 \textstyle\sum_{i=1}^{\frac{m}{2}}(\textstyle\sum_{j=1}^{i-1} \frac{j}{m} + \textstyle\sum_{j=1}^{\frac{m}{2}-i} \frac{j}{m}) = (*) = \frac{2}{3}(\frac{m^2}{4}-1).$
 
 \medskip
\noindent \textbf{$\mathbf{\textbf{ID}_m}$ and $\mathbf{\textbf{ST}_m}$:}
There exist only two different types of column vectors in $\ST_m$, i.e.,  $\frac{m}{2}$ columns starting with $\frac{m}{2}$ entries of value $\frac{2}{m}$ followed by $\frac{m}{2}$ zero-entries and $\frac{m}{2}$ columns starting with $\frac{m}{2}$ zero entries followed by $\frac{m}{2}$ entries of value $\frac{2}{m}$. In $\ID_m$, $\frac{m}{2}$ columns have a one entry in the first $\frac{m}{2}$ rows and $\frac{m}{2}$ columns have a one entry in the last $\frac{m}{2}$ rows. Thus, again, the identity permutation between the columns of the two matrices minimizes the $\EMD$ distance:

  \noindent
 $\POS(\ID_m,\ST_m) = 2\cdot \POS(\ID_{\frac{m}{2}},\UN_{\frac{m}{2}}) = \frac{2}{3}(\frac{m^2}{4}-1)$
 
  \medskip
\noindent \textbf{$\mathbf{\textbf{AN}_m}$ and $\mathbf{\textbf{ST}_m}$:} We now turn to computing the distance between $\AN_m=(\an_1,\dots , \an_m)$ and $\ST_m=(\stt_1,\dots , \stt_m)$. As all column vectors of $\AN_m$ are palindromes, each column vector of $\AN_m$ has the same $\EMD$ distance to all column vectors of $\ST_m$, i.e., for $i\in [m]$ it holds that $\EMD(\an_i,\stt_j)=\EMD(\an_i,\stt_{j'})$ for all $j,j'\in [m]$. Thus, the distance between $\AN_m$ and $\ST_m$ is the same for all permutation between the columns of the two matrices. Thus, we again use the identity permutation. 
We start by computing $\EMD(\an_i,\stt_i)$ for different $i\in [m]$ separately distinguishing two cases. Let $i\in [\frac{m}{4}]$. Recall that $\an_i$ has a $0.5$ at position $i$ and position $m-i+1$ and that $\stt_i$ has a $\frac{2}{m}$ at entries $j\in [\frac{m}{2}]$. We now analyze how to transform $\an_i$ to $\stt_i$. For all $j\in [i-1]$, it is clear that it is optimal that the value $\frac{2}{m}$ moved to position $j$ comes from position $i$. The overall cost of this is $\textstyle\sum_{j=1}^{i-1} \frac{2j}{m}$. Moreover, the remaining surplus value at position $i$ (that is, $\frac{1}{2}-\frac{2i}{m}$) needs to be moved toward the end. Thus, for $j\in [i+1,\frac{m}{4}]$, we move value $\frac{2}{m}$ from position $i$ to position $j$. The overall cost of this is  $\textstyle\sum_{j=1}^{\frac{m}{4}-i} \frac{2j}{m}$. Lastly, we need to move value $\frac{2}{m}$ to positions $j\in [\frac{m}{4}+1,\frac{m}{2}]$. This needs to come from position $m-i+1$. Thus, for each $j\in [\frac{m}{4}+1,\frac{m}{2}]$, we move value $\frac{2}{m}$ from position $m-i+1$ to position $j$. The overall cost of this is $\frac{1}{2}\cdot (\frac{m}{2}-i)+\textstyle\sum_{j=1}^{\frac{m}{4}} \frac{2j}{m}=\frac{1}{2}(\frac{m}{2}-i)+\frac{m}{16}+\frac{1}{4}$ 

Now, let $i\in [\frac{m}{4}+1,\frac{m}{2}]$. For $j\in [\frac{m}{4}]$, we need to move value $\frac{2}{m}$ from position $i$ to position $j$. The overall cost of this is $\frac{1}{2}\cdot (i-\frac{m}{4}-1)+\textstyle\sum_{j=1}^{\frac{m}{4}} \frac{2j}{m}=\frac{1}{2}\cdot (i-\frac{m}{4}-1)+\frac{m}{16}+\frac{1}{4}$. For $j\in [\frac{m}{4}+1,\frac{m}{2}]$, we need to move value $\frac{2}{m}$ from position $m-i+1$ to position $j$. The overall cost of this is $\frac{1}{2}\cdot (\frac{m}{2}-i)+\textstyle\sum_{j=1}^{\frac{m}{4}} \frac{2j}{m}=\frac{1}{2}\cdot (\frac{m}{2}-i)+\frac{m}{16}+\frac{1}{4}$. 

Observing that the case $i\in [\frac{3m}{4}+1,m]$ is symmetric to $i\in [\frac{m}{4}]$ and the case $i\in [\frac{m}{2}+1,\frac{3m}{4}]$ is symmetric to $i\in [\frac{m}{4}+1,\frac{m}{2}]$ the $\EMD$ distance between $\AN_m$ and $\ST_m$ can be computed as follows:

{\scriptsize 
\begin{align*}
\POS(\AN_m,\ST_m) &= 2\cdot ( A + \frac{1}{2}\cdot (\sum_{i=1}^{\frac{m}{4}} \frac{m}{2}-i) + \frac{m}{4}\cdot (\frac{m}{16}+\frac{1}{4}) + \frac{1}{2} \cdot (\sum_{i=\frac{m}{4}+1}^{\frac{m}{2}} \cdot (i-\frac{m}{4}-1)) + \frac{m}{4}\cdot (\frac{m}{16}+\frac{1}{4}) \\
& \qquad \qquad + \frac{1}{2}\cdot (\sum_{i=\frac{m}{4}+1}^{\frac{m}{2}} \frac{m}{2}-i) + \frac{m}{4}\cdot (\frac{m}{16}+\frac{1}{4}) \\
&=\frac{m^2}{48} - \frac{1}{3} + \frac{3m^2-4m}{32}  + \frac{m}{2}\cdot (\frac{m}{16}+\frac{1}{4})+  \frac{m^2-4m}{32}+\frac{m}{2}\cdot (\frac{m}{16}+\frac{1}{4}) \\
& \qquad \qquad + \frac{m^2-4m}{32}+\frac{m}{2}\cdot (\frac{m}{16}+\frac{1}{4}) \\
&=\frac{m^2}{48}-\frac{1}{3}+\frac{3m^2-4m}{32}+\frac{3m}{2}(\frac{m}{16}+\frac{1}{4})+\frac{m^2-4m}{16} \\
&=(\frac{1}{48}+\frac{3}{32}+\frac{3}{32}+\frac{1}{16})m^2+(-\frac{4}{32}+\frac{3}{8}-\frac{4}{16})m\frac{1}{3} \\
&=\frac{13}{48}m^2-\frac{1}{3}
\end{align*}
}
\smallskip
\noindent with 

\noindent$A =  \textstyle\sum_{i=1}^{\frac{m}{4}}(\textstyle\sum_{j=1}^{i-1} \frac{2j}{m} + \textstyle\sum_{j=1}^{\frac{m}{4}-i} \frac{2j}{m}) = (*) = \frac{1}{6}(\frac{m^2}{16}-1) = \frac{1}{2}(\frac{m^2}{48} - \frac{1}{3})$

  \medskip
  
\noindent \textbf{$\mathbf{\textbf{ID}_m}$ and $\mathbf{\textbf{AN}_m}$:} Lastly, we consider $\ID_m=(\id_1,\dots , \id_m)$ and $\AN_m=(\an_1,\dots , \an_m)$. Note that, for $i\in [m]$, $\id_i$ contains a $1$ at position $i$ and $\an_i$ contains a $0.5$ at position $i$ and position $m-i$. Note further that for $i\in [\frac{m}{2}]$ it holds that $\an_i=\an_{m-i+1}$.
Fix some $i\in [\frac{m}{2}]$. For all $j\in [i,m-i+1]$ it holds that $\EMD(\an_i,\id_j)=\frac{m-2i+1}{2}$ and for all $j\in [1,i-1]\cup [m-i+2,m]$ it holds that $\EMD(\an_i,\id_j)>\frac{m-2i+1}{2}$. 
That is, for every $i\in [m]$, $\an_i$ has the same distance to all column vectors of $\ID_m$ where the one entry lies in between the two $0.5$ entries of $\an_i$ but a larger distance to all column vectors of $\ID_m$ where the one entry is above the top $0.5$ entry of $\an_i$ or below the bottom $0.5$ entry of $\an_i$. Thus, it is optimal to choose a mapping of the column vectors such that for all $i\in [m]$ it holds that $\an_i$ is mapped to a vector $\id_j$ where the one entry of $\id_j$ lies between the two $0.5$ in $\an_i$. This is, among others, achieved by the identity permutation, which we use to compute:

 \noindent
 $\POS(\ID_m,\AN_m) = 2 \textstyle\sum_{i=1}^{\frac{m}{2}} (\frac{1}{2} (m-2i+1)) = \frac{m}{2}m-\frac{m}{2}(\frac{m}{2}+1)+\frac{m}{2} = \frac{m^2}{4}$
\end{proof}


\newpage
\subsubsection{$\boldsymbol{\ell_1}$-Positionwise}

\distlonepos*

\begin{proof}

\noindent \textbf{$\mathbf{\textbf{ID}_m}$ and $\mathbf{\textbf{UN}_m}$:} Whenever computing the distances between $\UN_m$ and any other matrix, we can assume the identity permutation between the columns of the both matrices; any other permutation will produce exactly the same distance because in the $\UN_m$ matrix all columns are identical. On the diagonal, we have $m$ elements contributing $|1-\frac{1}{m}|$ to the total distance each, and all the other $m(m-1)$ elements are contributing $|0-\frac{1}{m}|$ each. Hence, the total distance is $\frac{1}{m} \cdot m(m-1) + \frac{m-1}{m} \cdot m = 2(m-1)$.

\medskip
\noindent \textbf{$\mathbf{\textbf{UN}_m}$ and $\mathbf{\textbf{ST}_m}$:} Similarly, we can also directly compute the distance between $\UN_m$ and $\ST_m$ using the identity permutation. Each element contributes $\frac{1}{m}$ (either $|\frac{1}{m}-\frac{2}{m}|$, or $|\frac{1}{m} - 0|$), hence the total distance is~$\frac{1}{m} \cdot m^2 = m$.

\medskip
\noindent \textbf{$\mathbf{\textbf{UN}_m}$ and $\mathbf{\textbf{AN}_m}$:} Again, we can directly compute the distance between $\UN_m$ and $\AN_m$ using the identity permutation. Each element on the diagonal and anti-diagonal contributes $|\frac{1}{2} - \frac{1}{m}|$ to the total distance, while all the other $m(m-2)$ elements contributes $\frac{1}{m}$ each. Therefore, the total distance is~$\frac{m-2}{2m} \cdot 2m + \frac{1}{m} \cdot m(m-2) = 2(m-2)$.

\medskip
\noindent \textbf{$\mathbf{\textbf{ID}_m}$ and $\mathbf{\textbf{ST}_m}$:} Let us assume the identity permutation. Both matrices have zeros in the upper-right and lower-left quarter, hence we focus only on the upper-left and bottom-right quarters. However, note that each of these two parts is equivalent to $\ID_{\nicefrac{m}{2}}$ for $\ID_m$, and $\UN_{\nicefrac{m}{2}}$ for $\ST_m$. Therefore, the total distance is~$2 \cdot 2(\frac{m}{2}-1) = 2(m-2)$. If we use any another permutation than identity permutation, then each candidate can contribute to the total distance either the same as for identity permutation or more (i.e., $2$ instead of $2 \cdot (1 - \frac{2}{m})$).

\medskip
\noindent \textbf{$\mathbf{\textbf{AN}_m}$ and $\mathbf{\textbf{ST}_m}$:} Let us assume the identity permutation. All elements from upper-left and lower-right quarters (but not on a diagonal) contribute~$\frac{2}{m}$ to the total distance. 
All elements from upper-right and lower-left quarters (but not on an anti-diagonal) contribute~$0$ because they are equal in both matrices.
All elements on the diagonal contribute~$\frac{1}{2} - \frac{2}{m}$, and all elements on the anti-diagonal
contribute~$\frac{1}{2}$.
Therefore, the total distance is~$\frac{2}{m} \cdot (\frac{1}{2}m^2-m)  + \frac{m-4}{2m} m + \frac{1}{2}  m = 2(m-2)$. Any other permutation will produce exactly the same distance.

\medskip
\noindent \textbf{$\mathbf{\textbf{ID}_m}$ and $\mathbf{\textbf{AN}_m}$:} Again, let us assume the identity permutation. The elements on the diagonal and anti-diagonal contributes to the total distance~$\frac{1}{2}$ (either~$|1-\frac{1}{2}|$ or $|0-\frac{1}{2}|$) each. All the other elements in both matrices are zeros, hence the total distances is~$\frac{1}{2}\cdot 2m = m$. If we use any another permutation than identity permutation, then each candidate can contribute to the total distance either the same as for identity permutation or more (i.e., $2$ instead of $1$).

\end{proof}

 
\newpage
\subsubsection{$\boldsymbol{\ell_1}$-Pairwise}

\distpair*

\begin{proof} Given the fact that each matrix is not defined on the diagonal, we omit it in our reasoning, and focus only on the other $m(m-1)$ elements. 

    \medskip
    \noindent \textbf{$\mathbf{\textbf{ID}_m}$ and $\mathbf{\textbf{UN}_m}$:} 
    In $\UN_m$ matrix all column vectors are identical, hence we do not need to worry about the candidate permutation. Therefore, calculating the distance is straightforward. If we use the identity permutation (or any other permutation), then the distances is as follows. Each element contributes $\frac{1}{2}$ (either $|1-\frac{1}{2}|$ or $|0-\frac{1}{2}|$) to the total distance. Hence, the total distance is~$\frac{1}{2}m(m-1)$.
    
    \medskip
    \noindent \textbf{$\mathbf{\textbf{UN}_m}$ and $\mathbf{\textbf{ST}_m}$:} Like for the previous distance, we do not need to worry about the permutation, and can simply assume that we use identity permutation. The values in the upper-left and lower-right quarters of both matrices are identical, so the distance between the elements in these parts is zero. As for the upper-right and lower-left quarters, each element contributes $\frac{1}{2}$ (either $|1-\frac{1}{2}|$ or $|0-\frac{1}{2}|$) like for the distances between $\ID_m$ and $\UN_m$. There are $\frac{1}{2}m^2$ such elements, hence the total distance is~$\frac{1}{4}(m^2)$.
    
    \medskip
    \noindent \textbf{$\mathbf{\textbf{ID}_m}$ and $\mathbf{\textbf{ST}_m}$:}  Let us assume the identity permutation.
    The values in the upper-right and lower-left quarters of both matrices are identical, so the distance between the elements in these parts is zero. As for the upper-right and lower-left quarters, for $\ID_m$ each of these parts is equivalent to $\ID_{\nicefrac{m}{2}}$, and for $\ST_m$ each of these part is equivalent to $\UN_{\nicefrac{m}{2}}$. Hence, the total distance is twice the distances between $\ID_{\nicefrac{m}{2}}$ and $\UN_{\nicefrac{m}{2}}$ which is $2 \cdot \nicefrac{1}{2} \cdot \frac{m}{2}(\frac{m}{2}-1)  = \frac{1}{4}m(m-2)$. If we use any another permutation than identity permutation, then each candidate can contribute to the total distance either the same as for identity permutation or more.

\end{proof}


\newpage
\subsubsection{EMD-Bordawise}

\distborda*

\begin{proof}

The calculations are as follows.

\begin{align*}
\BOR(\ID_m, \UN+m) &= n \displaystyle\mathop{\Sigma}_{i=1}^{\nicefrac{m}{2}} (i-\frac{1}{2})\frac{m}{2}  + 
             n \displaystyle\mathop{\Sigma}_{i=1}^{\nicefrac{m}{2}-1} i (\frac{m}{2} - i) \\
 &= n \left[ \frac{1}{16} m^3 + \frac{1}{16} (m-2) m^2  - \frac{1}{24} (m-2)(m-1)m \right] \\
 &= \frac{1}{12} n \cdot m (m^2-1) \text{ (for even  $m$)}, \\
\BOR(\UN_m,\ST_m) &= \frac{m}{2} \frac{m}{2} \frac{m-1}{4}n = \frac{1}{16} n \cdot m^2 (m-1), \\
\BOR(\ID_m,\ST_m) &= \BOR(\UN_m,\ID_m) - \BOR(\UN_m,\ST_m) \\
&= \frac{1}{48} n \cdot m (m^2+3m-4). \\
\end{align*}

\end{proof}


\chapter{Maps of Approval Candidates}\label{apdx:map_app_cand}


Let~$S(c)$ denote the set of supporters of candidate~$c$ (i.e., those voters that approve~$c$).
Then, for two candidates~$c$ and~$d$, their (candidate) Hamming distance is~$|S(c) \triangle S(d)|$, i.e.,  the number of
voters that approve exactly one of them. The (candidate) Jaccard distance is~$\frac{|S(c) \triangle S(d)|}{|S(c) \cup S(d)|}$.

In \Cref{fig:app_candidate_part_1,fig:app_candidate_part_2} we present the candidate maps. As before, on the left side are the results for the Hamming distance, and on the right side, are the results for the Jaccard one. We used exactly the same elections as before (i.e.,~$100$ candidates and~$1000$ voters). With the purple discs, we depict the cases where more than five candidates are identical. Numerous things which were true for the maps of votes are also true for the maps of candidates; hence, we mainly focus on the differences.
The first difference is related to the disjoint model. Here, we observe one more cloud of points than the number of groups (with an exception for the map with~$g=4$). It is because the additional group consists of candidates that were not approved in any of the initial ballots. For~$g=4$ we do not witness such a group because each candidate is a member of one of the groups.
Regarding the maps for the resampling model with~$\phi=\frac{3}{4}$, for the maps of preferences, all four maps were almost indistinguishable. However, for the maps of candidates, we see the differences between maps (especially between the first two).

For the noise model (as for the resampling model) we see a crucial difference between the candidates' and the voters' perspectives.
Candidates are divided into two groups, those that are approved in the central ballot, and those that are not.
What is interesting, although justified, is the fact that the maps for the Euclidean elections for candidates are very similar to the analogous maps for the voters.
The pictures for the urn elections are relatively chaotic. In the urn elections, the larger the~$\alpha$, the smaller the number of different votes.

For real-life elections, for the Hamming distance, the concentration in the middle of all four instances is due to numerous weak projects that are similar to each other because they were disapproved by most voters. So, the farther a project is from the center, the more approval it is likely to get. If we look at the Jaccard maps, we observe tiny clustering of points in the outskirts, which means there were some groups of similar projects; however, there were no groups of projects that were approved by a large fraction of the society. For the Jaccard distance, not much information can be gained from these pictures, which is an information in itself. It means that there is not much structure in real-life elections.

\begin{figure}
    \centering
     	\begin{minipage}{.49\textwidth}
     	        \centering
        \text{ \ \ \ Hamming}\\
        \text{}
 		    \includegraphics[width=7cm]{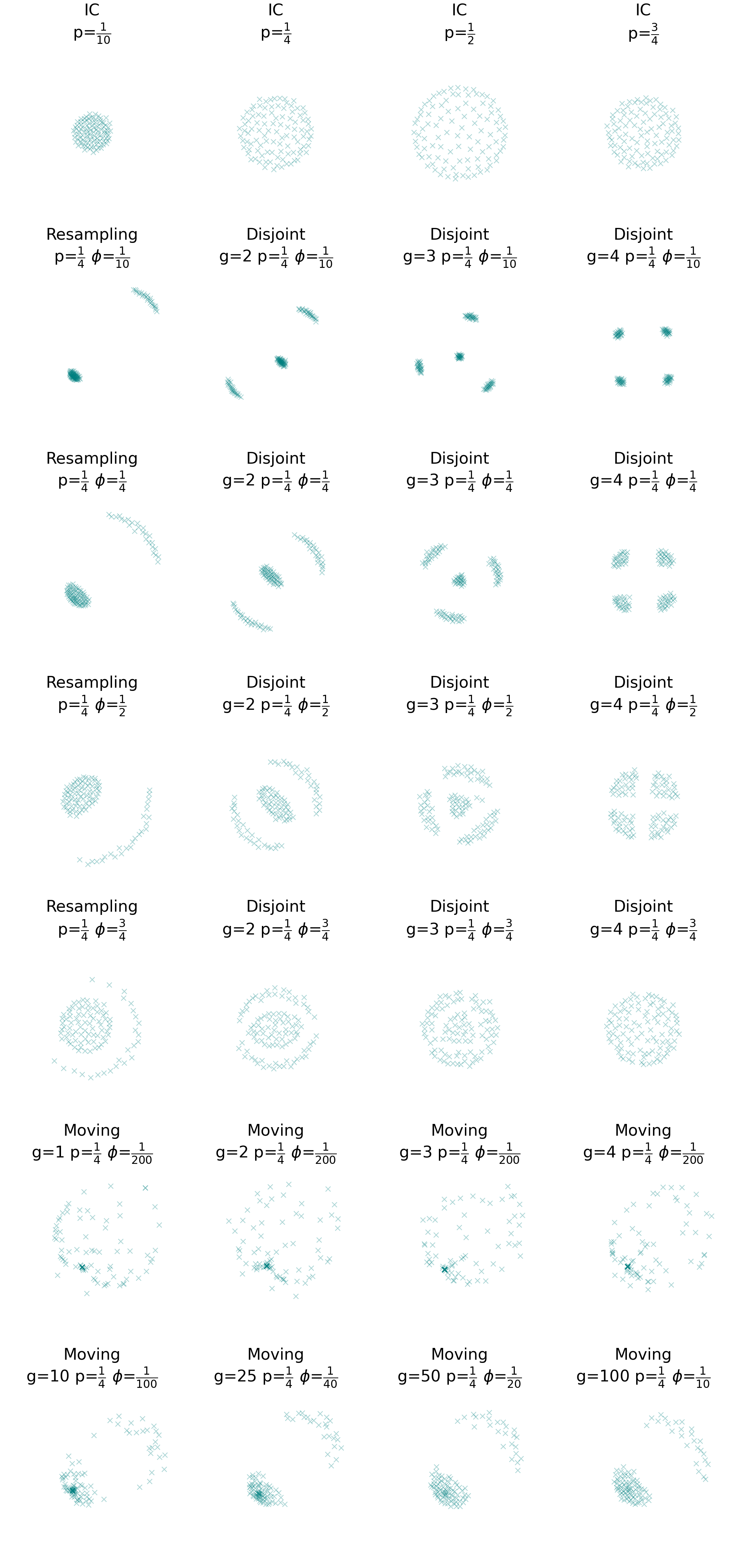}
 	\end{minipage}\hfill
 	\begin{minipage}{.49\textwidth}
 		\centering
        \text{ \ \ Jaccard}\\
        \text{}
        \includegraphics[width=7cm]{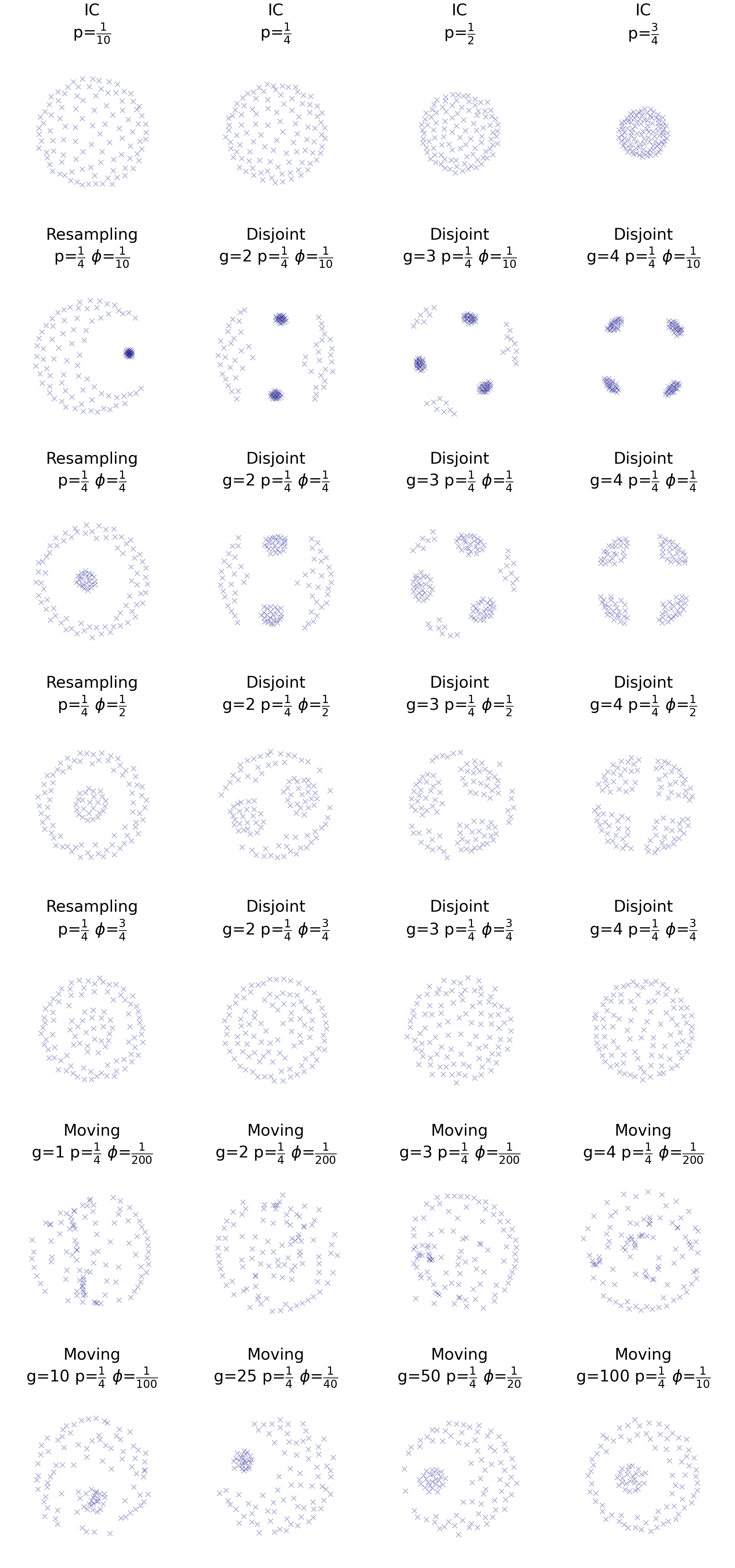}
 	\end{minipage}\hfill

    \caption{Maps of (Approval) Candidates ($100$ candidates,~$1000$ voters). On the left (teal) based on the Hamming distance, and on the right (navy) based on the Jaccard distance.}
    \label{fig:app_candidate_part_1}
\end{figure}

\begin{figure}
    \centering
     	\begin{minipage}{.49\textwidth}
     	        \centering
        \text{ \ \ \ Hamming}\\
        \text{}
 		    \includegraphics[width=7cm]{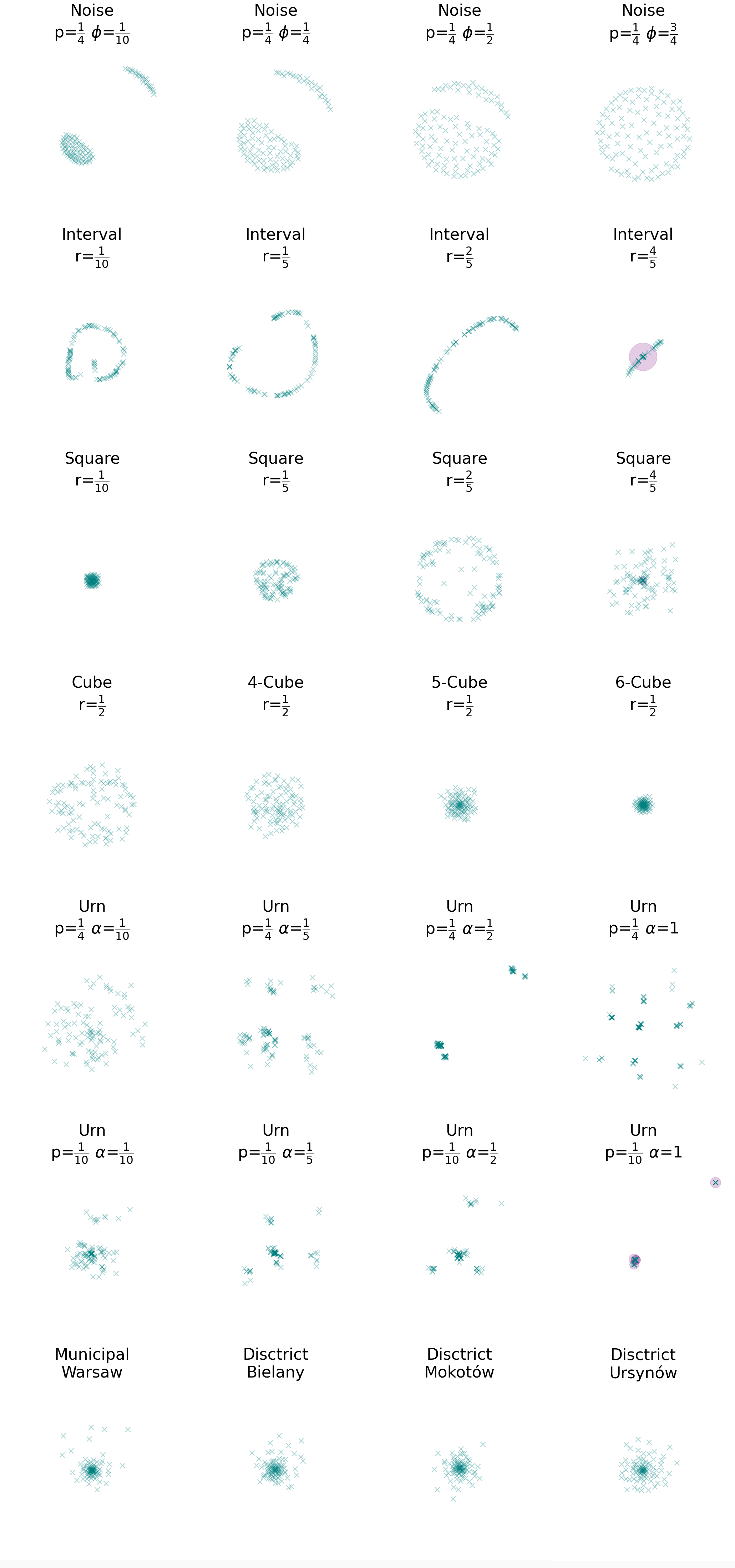}
 	\end{minipage}\hfill
 	\begin{minipage}{.49\textwidth}
 		\centering
        \text{ \ \ Jaccard}\\
        \text{}
        \includegraphics[width=7cm]{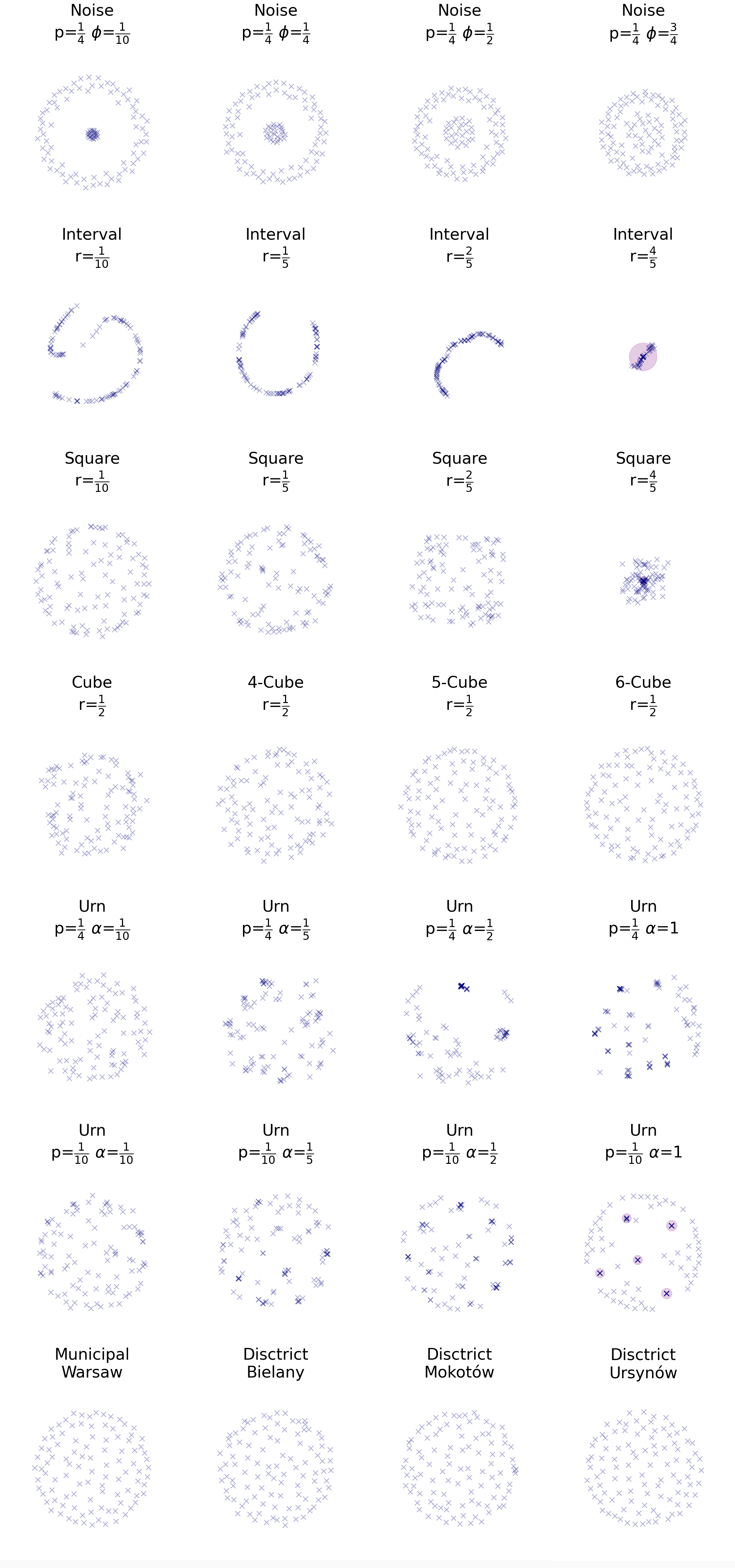}
 	\end{minipage}\hfill

    \caption{Maps of (Approval) Candidates ($100$ candidates,~$1000$ voters). On the left (teal) based on the Hamming distance, and on the right (navy) based on the Jaccard distance.}
    \label{fig:app_candidate_part_2}
\end{figure}

\bibliographystyle{plainnat}
\bibliography{bib}

\begin{thebibliography}{115}
\providecommand{\natexlab}[1]{#1}
\providecommand{\url}[1]{\texttt{#1}}
\expandafter\ifx\csname urlstyle\endcsname\relax
  \providecommand{\doi}[1]{doi: #1}\else
  \providecommand{\doi}{doi: \begingroup \urlstyle{rm}\Url}\fi

\bibitem[Allouche et~al.(2022)Allouche, Lang, and Yger]{allouche2022truth}
T.~Allouche, J.~Lang, and F.~Yger.
\newblock Truth-tracking via approval voting: Size matters.
\newblock In \emph{Proceedings of the AAAI-2022}, volume~36, pages 4768--4775,
  2022.

\bibitem[Alós-Ferrer and Granić(2012)]{alo-gra:j:german-approval-voting}
C.~Alós-Ferrer and D.~Granić.
\newblock Two field experiments on approval voting in {G}ermany.
\newblock \emph{Social Choice and Welfare}, 39\penalty0 (1):\penalty0 171--205,
  2012.

\bibitem[Arvind et~al.(2012)Arvind, K{\"{o}}bler, Kuhnert, and
  Vasudev]{arv-koe-kuh-vas:c:approximate-graph-isomorphism}
V.~Arvind, J.~K{\"{o}}bler, S.~Kuhnert, and Y.~Vasudev.
\newblock Approximate graph isomorphism.
\newblock In \emph{Proceedings of MFCS-2012}, pages 100--111, 2012.

\bibitem[Aziz et~al.(2015)Aziz, Gaspers, Gudmundsson, Mackenzie, Mattei, and
  Walsh]{azi-gas-gud-mac-mat-wal:c:approval-multiwinner}
H.~Aziz, S.~Gaspers, J.~Gudmundsson, S.~Mackenzie, N.~Mattei, and T.~Walsh.
\newblock Computational aspects of multi-winner approval voting.
\newblock In \emph{Proceedings of AAMAS-2015}, pages 107--115, 2015.

\bibitem[Aziz et~al.(2017)Aziz, Brill, Conitzer, Elkind, Freeman, and
  Walsh]{justifiedRepresentation}
H.~Aziz, M.~Brill, V.~Conitzer, E.~Elkind, R.~Freeman, and T.~Walsh.
\newblock Justified representation in approval-based committee voting.
\newblock \emph{Social Choice and Welfare}, 48\penalty0 (2):\penalty0 461--485,
  2017.

\bibitem[Babai et~al.(2015)Babai, Dawar, Schweitzer, and
  Tor{\'{a}}n]{bab-daw-sch-tor:j:graphi-isomorphism}
L.~Babai, A.~Dawar, P.~Schweitzer, and J.~Tor{\'{a}}n.
\newblock The graph isomorphism problem (dagstuhl seminar 15511).
\newblock \emph{Dagstuhl Reports}, 5\penalty0 (12):\penalty0 1--17, 2015.

\bibitem[Ballester and
  Haeringer(2011)]{bal-har:j:characterization-single-peaked}
M.~Ballester and G.~Haeringer.
\newblock A characterization of the single-peaked domain.
\newblock \emph{Social Choice and Welfare}, 36\penalty0 (2):\penalty0 305--322,
  2011.

\bibitem[Barrot et~al.(2017)Barrot, Lang, and
  Yokoo]{bar-lan-yok:c:hamming-approval-manipulation}
N.~Barrot, J.~Lang, and M.~Yokoo.
\newblock Manipulation of {H}amming-based approval voting for multiple
  referenda and committee elections.
\newblock In \emph{Proceedings of AAMAS-2017}, pages 597--605, 2017.

\bibitem[{{Bartholdi}} and
  Trick(1986)]{bar-tri:j:stable-matching-from-psychological-model}
J.~{{Bartholdi}}, III and M.~Trick.
\newblock Stable matching with preferences derived from a psychological model.
\newblock \emph{Operations Research Letters}, 5\penalty0 (4):\penalty0
  165--169, 1986.

\bibitem[{{Bartholdi}} et~al.(1989){{Bartholdi}}, Tovey, and
  Trick]{bar-tov-tri:j:who-won}
J.~{{Bartholdi}}, III, C.~Tovey, and M.~Trick.
\newblock Voting schemes for which it can be difficult to tell who won the
  election.
\newblock \emph{Social Choice and Welfare}, 6\penalty0 (2):\penalty0 157--165,
  1989.

\bibitem[Baujard and Igersheim(2011)]{bau-ige:b:french-approval-voting}
A.~Baujard and H.~Igersheim.
\newblock Framed-field experiment on approval voting and evaluation voting.
  {Some} teachings to reform the {French} presidential electoral system.
\newblock In B.~Dolez, B.~Grofman, and A.~Laurent, editors, \emph{In Situ and
  Laboratory Experiments on Electoral Law Reform}, Studies in Public Choice,
  pages 69--89. Springer, 2011.
\newblock \doi{10.1007/978-1-4419-7539-3}.

\bibitem[Baujard et~al.(2014)Baujard, Igersheim, Lebon, Gavrel, and
  Laslier]{baujard2014s}
A.~Baujard, H.~Igersheim, I.~Lebon, F.~Gavrel, and J-F. Laslier.
\newblock Who's favored by evaluative voting? {A}n experiment conducted during
  the 2012 {F}rench presidential election.
\newblock \emph{Electoral Studies}, 34:\penalty0 131--145, 2014.

\bibitem[Berg(1985)]{berg1985paradox}
S.~Berg.
\newblock Paradox of voting under an urn model: {The} effect of homogeneity.
\newblock \emph{Public Choice}, 47\penalty0 (2):\penalty0 377--387, 1985.

\bibitem[Betzler et~al.(2013)Betzler, Slinko, and
  Uhlmann]{bet-sli-uhl:j:mon-cc}
N.~Betzler, A.~Slinko, and J.~Uhlmann.
\newblock On the computation of fully proportional representation.
\newblock \emph{Journal of Artificial Intelligence Research}, 47:\penalty0
  475--519, 2013.

\bibitem[Black(1958)]{bla:b:polsci:committees-elections}
D.~Black.
\newblock \emph{The Theory of Committees and Elections}.
\newblock Cambridge University Press, 1958.

\bibitem[Boehmer and Schaar(2023)]{boe-sch:t:datasets}
N.~Boehmer and N.~Schaar.
\newblock Collecting, classifying, analyzing, and using real-world ranking
  data.
\newblock In \emph{Proceedings of AAMAS-2023}, pages 1706--1715. {ACM}, 2023.

\bibitem[Boehmer et~al.(2021)Boehmer, Bredereck, Faliszewski, Niedermeier, and
  Szufa]{boe-bre-fal-nie-szu:c:compass}
N.~Boehmer, R.~Bredereck, P.~Faliszewski, R.~Niedermeier, and S.~Szufa.
\newblock Putting a compass on the map of elections.
\newblock In \emph{Proceedings of IJCAI-2021}, pages 59--65, 2021.

\bibitem[Boehmer et~al.(2022{\natexlab{a}})Boehmer, Bredereck, Elkind,
  Faliszewski, and Szufa]{boehmer2022expected}
N.~Boehmer, R.~Bredereck, E.~Elkind, P.~Faliszewski, and S.~Szufa.
\newblock Expected frequency matrices of elections: Computation, geometry, and
  preference learning.
\newblock In \emph{Proceedings of NeurIPS-2022}, 2022{\natexlab{a}}.

\bibitem[Boehmer et~al.(2022{\natexlab{b}})Boehmer, Bredereck, Faliszewski, and
  Niedermeier]{boehmer2022quantitative}
N.~Boehmer, R.~Bredereck, P.~Faliszewski, and R.~Niedermeier.
\newblock A quantitative and qualitative analysis of the robustness of
  (real-world) election winners.
\newblock 2022{\natexlab{b}}.

\bibitem[Boehmer et~al.(2022{\natexlab{c}})Boehmer, Faliszewski, Niedermeier,
  Szufa, and Wąs]{boe-fal-nie-szu-was:c:understanding}
N.~Boehmer, P.~Faliszewski, R.~Niedermeier, S.~Szufa, and T.~Wąs.
\newblock Understanding distance measures among elections.
\newblock In \emph{Proceedings of IJCAI-2022}, pages 102--108,
  2022{\natexlab{c}}.

\bibitem[Boehmer et~al.(2023{\natexlab{a}})Boehmer, Cai, Faliszewski, and
  Z.~Fan]{boe-cai-fal-fan-jan-kac-was:c:position-matrices}
N.~Boehmer, J.~Cai, P.~Faliszewski, and A.~Kaczmarczyk T.~Wąs Z.~Fan,
  Ł.~Janeczko.
\newblock Properties of position matrices and their elections.
\newblock In \emph{Proceedings of AAAI-2023}, pages 5507--5514,
  2023{\natexlab{a}}.

\bibitem[Boehmer et~al.(2023{\natexlab{b}})Boehmer, Heeger, and
  Szufa]{boehmer2023map}
N.~Boehmer, K.~Heeger, and S.~Szufa.
\newblock A map of diverse synthetic stable roommates instances.
\newblock In \emph{Proceedings of AAMAS-2023}, pages 1003--1011,
  2023{\natexlab{b}}.

\bibitem[Booth and Lueker(1976)]{boo-lue:j:consecutive-ones-property}
K.~Booth and G.~Lueker.
\newblock Testing for the consecutive ones property, interval graphs, and graph
  planarity using {PQ}-tree algorithms.
\newblock \emph{Journal of Computer and System Sciences}, 13\penalty0
  (3):\penalty0 335--379, 1976.

\bibitem[Bouveret et~al.(2019)Bouveret, Blanch, Baujard, Durand, Igersheim,
  Lang, Laruelle, Laslier, Lebon, and
  Merlin]{bou-bla-bau-dur-ige-lan-lar-las-leb-mer:t:french-approval-voting-2017}
S.~Bouveret, R.~Blanch, A.~Baujard, F.~Durand, H.~Igersheim, J.~Lang,
  A.~Laruelle, J.~Laslier, I.~Lebon, and V.~Merlin.
\newblock Voter autrement 2017 for the {French} presidential election - the
  data of the in situ experiment.
\newblock Technical report, Zenodo, 2019.
\newblock http://doi.org/10.5281/zenodo.3548574.

\bibitem[Brams and Fishburn(1983)]{bra-fis:b:approval-voting}
S.~Brams and P.~Fishburn.
\newblock \emph{Approval Voting}.
\newblock Birkh{\"{a}}user, Boston, 1983.

\bibitem[Bredereck et~al.(2013)Bredereck, Chen, and
  Woeginger]{bre-che-woe:j:single-crossing}
R.~Bredereck, J.~Chen, and G.~Woeginger.
\newblock A characterization of the single-crossing domain.
\newblock \emph{Social Choice and Welfare}, 41\penalty0 (4):\penalty0 989--998,
  2013.

\bibitem[Bredereck et~al.(2019)Bredereck, Faliszewski, Kaczmarczyk, and
  Niedermeier]{bre-fal-kac-nie2019:experimental_ejr}
R.~Bredereck, P.~Faliszewski, A.~Kaczmarczyk, and R.~Niedermeier.
\newblock An experimental view on committees providing justified
  representation.
\newblock In \emph{Proceedings of IJCAI-2019}, pages 109--115, 2019.

\bibitem[Brill and Peters(2023)]{brill2023robust}
M.~Brill and J.~Peters.
\newblock Robust and verifiable proportionality axioms for multiwinner voting.
\newblock \emph{arXiv preprint arXiv:2302.01989}, 2023.

\bibitem[Brill et~al.(2018)Brill, Laslier, and Skowron]{brill2018multiwinner}
M.~Brill, J-F. Laslier, and P.~Skowron.
\newblock Multiwinner approval rules as apportionment methods.
\newblock \emph{Journal of Theoretical Politics}, 30\penalty0 (3):\penalty0
  358--382, 2018.

\bibitem[Caragiannis et~al.(2019)Caragiannis, Chatzigeorgiou, Krimpas, and
  Voudouris]{caragiannis2019optimizing}
I.~Caragiannis, X.~Chatzigeorgiou, G.~Krimpas, and A.~Voudouris.
\newblock Optimizing positional scoring rules for rank aggregation.
\newblock \emph{Artificial Intelligence}, 267:\penalty0 58--77, 2019.

\bibitem[Caragiannis et~al.(2022)Caragiannis, Kaklamanis, Karanikolas, and
  Krimpas]{caragiannis2022evaluating}
I.~Caragiannis, C.~Kaklamanis, N.~Karanikolas, and G.~A. Krimpas.
\newblock Evaluating approval-based multiwinner voting in terms of robustness
  to noise.
\newblock \emph{Autonomous Agents and Multiagent Systems}, 36\penalty0
  (1):\penalty0 1--22, 2022.

\bibitem[Conitzer(2009)]{con:j:eliciting-singlepeaked}
V.~Conitzer.
\newblock Eliciting single-peaked preferences using comparison queries.
\newblock \emph{Journal of Artificial Intelligence Research}, 35:\penalty0
  161--191, 2009.

\bibitem[Cook(1971)]{coo:c:theorem-proving}
S.~Cook.
\newblock The complexity of theorem-proving procedures.
\newblock pages 151--158. ACM Press, May 1971.

\bibitem[de~Leeuw(2005)]{de2005modern}
J.~de~Leeuw.
\newblock Modern multidimensional scaling: Theory and applications.
\newblock \emph{Journal of Statistical Software}, 14:\penalty0 1--2, 2005.

\bibitem[der Maaten(2010)]{van2010fast}
L.~Van der Maaten.
\newblock Fast optimization for t-{SNE}.
\newblock In \emph{NeurIPS-2010 Workshop on Challenges in Data Visualization},
  volume 100, 2010.

\bibitem[der Maaten and Hinton(2008)]{van2008visualizing}
L.~Van der Maaten and G.~Hinton.
\newblock Visualizing data using t-{SNE}.
\newblock \emph{Journal of machine learning research}, 9\penalty0 (11), 2008.

\bibitem[Donoho and Grimes(2003)]{donoho2003hessian}
D.~Donoho and C.~Grimes.
\newblock Hessian eigenmaps: Locally linear embedding techniques for
  high-dimensional data.
\newblock \emph{Proceedings of the National Academy of Sciences}, 100\penalty0
  (10):\penalty0 5591--5596, 2003.

\bibitem[Downey and Fellows(1995)]{downey1995fixed}
R.~Downey and M.~Fellows.
\newblock Fixed-parameter tractability and completeness {II}: {O}n completeness
  for~{W}[1].
\newblock \emph{Theoretical Computer Science}, 141\penalty0 (1-2):\penalty0
  109--131, 1995.

\bibitem[Dudycz et~al.(2020)Dudycz, Manurangsi, Marcinkowski, and
  Sornat]{DudyczMMS20-tight-pav-apx}
S.~Dudycz, P.~Manurangsi, J.~Marcinkowski, and K.~Sornat.
\newblock Tight approximation for proportional approval voting.
\newblock In \emph{Proceedings of IJCAI-2020}, pages 276--282, 2020.

\bibitem[Dwork et~al.(2001)Dwork, Kumar, Naor, and
  Sivakumar]{dwo-kum-nao-siv:c:rank-aggregation}
C.~Dwork, R.~Kumar, M.~Naor, and D.~Sivakumar.
\newblock Rank aggregation methods for the web.
\newblock In \emph{Proceedings of WWW-2001}, pages 613--622, March 2001.

\bibitem[Elkind and Lackner(2015)]{elk-lac:c:ci-vi--approval}
E.~Elkind and M.~Lackner.
\newblock Structure in dichotomous preferences.
\newblock In \emph{Proceedings of IJCAI-2015}, pages 2019--2025, 2015.

\bibitem[Elkind and Slinko(2016)]{elk-sli:b:rationalization}
E.~Elkind and A.~Slinko.
\newblock Rationalizations of voting rules.
\newblock In F.~Brandt, V.~Conitzer, U.~Endriss, J.~Lang, and A.~Procaccia,
  editors, \emph{Handbook of Computational Social Choice}, chapter~8, pages
  169--196. Cambridge University Press, 2016.

\bibitem[Elkind et~al.(2012)Elkind, Faliszewski, and
  Slinko]{elk-fal-sli:c:decloning}
E.~Elkind, P.~Faliszewski, and A.~Slinko.
\newblock Clone structures in voters' preferences.
\newblock In \emph{Proceedings of EC-2012}, pages 496--513, June 2012.

\bibitem[Elkind et~al.(2015)Elkind, Faliszewski, and Slinko]{elk-fal-sli:j:dr}
E.~Elkind, P.~Faliszewski, and A.~Slinko.
\newblock Distance rationalization of voting rules.
\newblock \emph{Social Choice and Welfare}, 45\penalty0 (2):\penalty0 345--377,
  2015.

\bibitem[Elkind et~al.(2017)Elkind, Faliszewski, Laslier, Skowron, Slinko, and
  Talmon]{elk-fal-las-sko-sli-tal:c:2d-multiwinner}
E.~Elkind, P.~Faliszewski, J.~Laslier, P.~Skowron, A.~Slinko, and N.~Talmon.
\newblock What do multiwinner voting rules do? {An} experiment over the
  two-dimensional euclidean domain.
\newblock In \emph{Proceedings of AAAI-2017}, pages 494--501, 2017.

\bibitem[Elkind et~al.(2022)Elkind, Lackner, and Peters]{elkind2022preference}
E.~Elkind, M.~Lackner, and D.~Peters.
\newblock Preference restrictions in computational social choice: A survey.
\newblock \emph{arXiv preprint arXiv:2205.09092}, 2022.

\bibitem[Enelow and Hinich(1984)]{enelow1984spatial}
J.~Enelow and M.~Hinich.
\newblock \emph{The Spatial Theory of Voting: {An} Introduction}.
\newblock Cambridge University Press, 1984.

\bibitem[Enelow and Hinich(1990)]{enelow1990advances}
J.~Enelow and M.~Hinich.
\newblock \emph{Advances in the Spatial Theory of Voting}.
\newblock Cambridge University Press, 1990.

\bibitem[Escoffier et~al.(2008)Escoffier, Lang, and
  {\"O}zt{\"u}rk]{esc-lan-ozt:c:single-peaked-consistency}
B.~Escoffier, J.~Lang, and M.~{\"O}zt{\"u}rk.
\newblock Single-peaked consistency and its complexity.
\newblock In \emph{Proceedings of ECAI-2008}, pages 366--370. IOS Press, July
  2008.

\bibitem[E\u{g}ecio\u{g}lu and Giritligil(2013)]{ege-gir:j:isomorphism-ianc}
{\"{O}}.~E\u{g}ecio\u{g}lu and A.~Giritligil.
\newblock The impartial, anonymous, and neutral culture model: {A} probability
  model for sampling public preference structures.
\newblock \emph{Journal of Mathematical Sociology}, 37\penalty0 (4):\penalty0
  203--222, 2013.

\bibitem[Faliszewski et~al.(2016)Faliszewski, Skowron, Slinko, and
  Talmon]{fal-sko-sli-tal-tal:c:hierarchy-committee}
P.~Faliszewski, P.~Skowron, A.~Slinko, and N.~Talmon.
\newblock Committee scoring rules: {A}xiomatic classification and hierarchy.
\newblock In \emph{Proceedings of IJCAI-2016}, pages 250--256, 2016.

\bibitem[Faliszewski et~al.(2017{\natexlab{a}})Faliszewski, Skowron, Slinko,
  and Talmon]{fal-sko-sli-tal:b:multiwinner-voting}
P.~Faliszewski, P.~Skowron, A.~Slinko, and N.~Talmon.
\newblock Multiwinner voting: {A} new challenge for social choice theory.
\newblock In U.~Endriss, editor, \emph{Trends in Computational Social Choice}.
  AI Access Foundation, 2017{\natexlab{a}}.

\bibitem[Faliszewski et~al.(2017{\natexlab{b}})Faliszewski, Skowron, Slinko,
  and Talmon]{fal-sko-sli-tal:c:paths}
P.~Faliszewski, P.~Skowron, A.~Slinko, and N.~Talmon.
\newblock Multiwinner rules on paths from $k$-{Borda} to {Chamberlin--Courant}.
\newblock In \emph{Proceedings of IJCAI-2017}, pages 192--198,
  2017{\natexlab{b}}.

\bibitem[Faliszewski et~al.(2018)Faliszewski, Lackner, Peters, and
  Talmon]{fal-lac-pet-tal:c:csr-heuristics}
P.~Faliszewski, M.~Lackner, D.~Peters, and N.~Talmon.
\newblock Effective heuristics for committee scoring rules.
\newblock In \emph{Proceedings of AAAI-2018}, pages 1023--1030, 2018.

\bibitem[Faliszewski et~al.(2019)Faliszewski, Skowron, Slinko, Szufa, and
  Talmon]{fal-sko-sli-szu-tal:c:isomorphism}
P.~Faliszewski, P.~Skowron, A.~Slinko, S.~Szufa, and N.~Talmon.
\newblock How similar are two elections?
\newblock In \emph{Proceedings of AAAI-2019}, pages 1909--1916, 2019.

\bibitem[Faliszewski et~al.(2020)Faliszewski, Slinko, and
  Talmon]{fal-sli-tal:c:vnw}
P.~Faliszewski, A.~Slinko, and N.~Talmon.
\newblock Multiwinner rules with variable number of winners.
\newblock In \emph{Proceedings of ECAI-2020}, pages 67--74, 2020.

\bibitem[Faliszewski et~al.(2022)Faliszewski, Sornat, and
  Szufa]{faliszewski2022complexity}
P.~Faliszewski, K.~Sornat, and S.~Szufa.
\newblock The complexity of subelection isomorphism problems.
\newblock In \emph{Proceedings of AAAI-2022}, volume~36, pages 4991--4998,
  2022.

\bibitem[Faliszewski et~al.(2023{\natexlab{a}})Faliszewski, Flis, Peters,
  Pierczyński, Skowron, Stolicki, Szufa, and
  Talmon]{fal-fli-pet-pie-sko-sto-szu-tal:c:pabulib}
P.~Faliszewski, J.~Flis, D.~Peters, G.~Pierczyński, P.~Skowron, D.~Stolicki,
  S.~Szufa, and N.~Talmon.
\newblock Participatory budgeting: Data, tools and analysis.
\newblock In \emph{Proceedings of IJCAI-2023}, pages 2667--2674, 8
  2023{\natexlab{a}}.

\bibitem[Faliszewski et~al.(2023{\natexlab{b}})Faliszewski, Kaczmarczyk,
  Sornat, Szufa, and Wąs]{DBLP:conf/ijcai/Faliszewski0SSW23}
P.~Faliszewski, A.~Kaczmarczyk, K.~Sornat, S.~Szufa, and T.~Wąs.
\newblock Diversity, agreement, and polarization in elections.
\newblock In \emph{Proceedings of IJCAI-2023}, pages 2684--2692,
  2023{\natexlab{b}}.

\bibitem[Faliszewski et~al.(2023{\natexlab{c}})Faliszewski, Lackner, Sornat,
  and Szufa]{DBLP:conf/ijcai/FaliszewskiLSS23}
P.~Faliszewski, M.~Lackner, K.~Sornat, and S.~Szufa.
\newblock An experimental comparison of multiwinner voting rules on approval
  elections.
\newblock In \emph{Proceedings of IJCAI-2023}, pages 2675--2683,
  2023{\natexlab{c}}.

\bibitem[Fishburn and Gehrlein(1978)]{iac2}
P.~Fishburn and W.~Gehrlein.
\newblock Condorcet's paradox and anonymous preference profiles.
\newblock \emph{Public Choice}, 26:\penalty0 1--18, 1978.

\bibitem[Fruchterman and Reingold(1991)]{fruchterman1991graph}
T.~Fruchterman and E.~Reingold.
\newblock Graph drawing by force-directed placement.
\newblock \emph{Software: Practice and Experience}, 21\penalty0 (11):\penalty0
  1129--1164, 1991.

\bibitem[Garey and Johnson(1979)]{gar-joh:b:int}
M.~Garey and D.~Johnson.
\newblock \emph{Computers and Intractability: {A} Guide to the Theory of
  {NP}-Completeness}.
\newblock {W. H. Freeman and Company}, 1979.

\bibitem[Godziszewski et~al.(2021)Godziszewski, Batko, Skowron, and
  Faliszewski]{god-bat-sko-fal:c:2d}
M.~Godziszewski, P.~Batko, P.~Skowron, and P.~Faliszewski.
\newblock An analysis of approval-based committee rules for {2D-Euclidean}
  elections.
\newblock In \emph{Proceedings of AAAI-2021}, pages 5448--5455, 2021.

\bibitem[Grohe et~al.(2018)Grohe, Rattan, and
  Woeginger]{gro-rat-woe:c:approximate-isomorphism}
M.~Grohe, G.~Rattan, and G.~Woeginger.
\newblock Graph similarity and approximate isomorphism.
\newblock In \emph{Proceedings of MFCS-2018}, pages 20:1--20:16, 2018.

\bibitem[Hashemi and Endriss(2014)]{has-end:c:diversity-indices}
V.~Hashemi and U.~Endriss.
\newblock Measuring diversity of preferences in a group.
\newblock In \emph{Proceedings of ECAI-2014}, pages 423--428, 2014.

\bibitem[Inada(1964)]{ina:j:group-separable}
K.~Inada.
\newblock A note on the simple majority decision rule.
\newblock \emph{Econometrica}, 32\penalty0 (32):\penalty0 525--531, 1964.

\bibitem[Inada(1969)]{ina:j:simple-majority}
K.~Inada.
\newblock The simple majority decision rule.
\newblock \emph{Econometrica}, 37\penalty0 (3):\penalty0 490--506, 1969.

\bibitem[Janeczko and Faliszewski(2022)]{janeczko2022complexity}
{\L}.~Janeczko and P.~Faliszewski.
\newblock The complexity of proportionality degree in committee elections.
\newblock \emph{Proceedings of AAAI-2022}, 2022.

\bibitem[Kamada and Kawai(1989)]{kamada1989algorithm}
T.~Kamada and S.~Kawai.
\newblock An algorithm for drawing general undirected graphs.
\newblock \emph{Information processing letters}, 31\penalty0 (1):\penalty0
  7--15, 1989.

\bibitem[Kamishima(2003)]{kam:c:sushi}
T.~Kamishima.
\newblock Nantonac collaborative filtering: {Recommendation} based on order
  responses.
\newblock In \emph{Proceedings of KDD-2003}, pages 583--588, 2003.

\bibitem[Karpov(2019)]{kar:j:group-separable}
A.~Karpov.
\newblock On the number of group-separable preference profiles.
\newblock \emph{Group Decision and Negotiation}, 28\penalty0 (3):\penalty0
  501--517, 2019.

\bibitem[Kruskal(1964)]{kruskal1964multidimensional}
J.~Kruskal.
\newblock Multidimensional scaling by optimizing goodness of fit to a nonmetric
  hypothesis.
\newblock \emph{Psychometrika}, 29\penalty0 (1):\penalty0 1--27, 1964.

\bibitem[Kuga and Nagatani(1974)]{iac1}
K.~Kuga and H.~Nagatani.
\newblock Voter antagonism and the paradox of voting.
\newblock \emph{Econometrica: Journal of the Econometric Society}, pages
  1045--1067, 1974.

\bibitem[Lackner and Maly(2021)]{lac-mal:c:vnw-shortlisting}
M.~Lackner and J.~Maly.
\newblock Approval-based shortlisting.
\newblock In \emph{Proceedings of AAMAS-2021}, pages 737--745, 2021.

\bibitem[Lackner and Skowron(2020)]{lac-sko:j:av-vs-cc}
M.~Lackner and P.~Skowron.
\newblock Utilitarian welfare and representation guarantees of approval-based
  multiwinner rules.
\newblock \emph{Artificial Intelligence}, 288:\penalty0 103366, 2020.

\bibitem[Lackner and Skowron(2023)]{lackner2023approval}
M.~Lackner and P.~Skowron.
\newblock Approval-based committee voting.
\newblock In \emph{Multi-Winner Voting with Approval Preferences}, pages 1--7.
  Springer, 2023.

\bibitem[Lackner et~al.(2021)Lackner, Regner, Krenn, and Forster]{abcvoting}
M.~Lackner, P.~Regner, B.~Krenn, and S.~Forster.
\newblock {abcvoting: A Python library of approval-based committee voting
  rules}, 2021.
\newblock URL \url{https://doi.org/10.5281/zenodo.3904466}.
\newblock Current version: \url{https://github.com/martinlackner/abcvoting}.

\bibitem[Lackner et~al.(2023)Lackner, Regner, and Krenn]{lackner2023abcvoting}
M.~Lackner, P.~Regner, and B.~Krenn.
\newblock abcvoting: A python package for approval-based multi-winner voting
  rules.
\newblock \emph{Journal of Open Source Software}, 8\penalty0 (81):\penalty0
  4880, 2023.

\bibitem[Lang and Skowron(2018)]{lang2018multi}
J.~Lang and P.~Skowron.
\newblock Multi-attribute proportional representation.
\newblock \emph{Artificial Intelligence}, 263:\penalty0 74--106, 2018.

\bibitem[Laslier and der Straeten(2008)]{las-str:j:approval-experiment}
J.~Laslier and K.~Van der Straeten.
\newblock A live experiment on approval voting.
\newblock \emph{Experimental Economics}, 11\penalty0 (1):\penalty0 97--105,
  2008.

\bibitem[Laslier and Sanver(2010)]{las-san:chapter:approval-multiwinner}
J.~Laslier and R.~Sanver, editors.
\newblock \emph{Handbook on Approval Voting}.
\newblock Springer, 2010.

\bibitem[Leep and Myerson(1999)]{leep1999marriage}
D.~Leep and G.~Myerson.
\newblock Marriage, magic, and solitaire.
\newblock \emph{The American Mathematical Monthly}, 106\penalty0 (5):\penalty0
  419--429, 1999.

\bibitem[Litvak(1983)]{litv:j:dist-cons}
B.~Litvak.
\newblock Distances and consensus rankings.
\newblock \emph{Cybernetics and systems analysis}, 19\penalty0 (1):\penalty0
  71--81, 1983.
\newblock Translated from Kibernetika, No. 1, pp. 57--63, January--February,
  1983.

\bibitem[Lu and Boutilier(2011)]{lu2011budgeted}
T.~Lu and C.~Boutilier.
\newblock Budgeted social choice: From consensus to personalized decision
  making.
\newblock In \emph{Proceedings of IJCAI-2011}, 2011.

\bibitem[Lu and Boutilier(2014)]{lu-bou:j:sampling-mallows}
T.~Lu and C.~Boutilier.
\newblock Effective sampling and learning for {M}allows models with
  pairwise-preference data.
\newblock \emph{Journal of Machine Learning Research}, 15\penalty0
  (1):\penalty0 3783--3829, 2014.

\bibitem[Mallows(1957)]{mal:j:mallows}
C.~Mallows.
\newblock Non-null ranking models.
\newblock \emph{Biometrica}, 44:\penalty0 114--130, 1957.

\bibitem[Mattei and Walsh(2013)]{mat-wal:c:preflib}
N.~Mattei and T.~Walsh.
\newblock Preflib: A library for preferences.
\newblock In \emph{Proceedings of ADT-2013}, pages 259--270, 2013.

\bibitem[{McCabe-Dansted} and Slinko(2006)]{mcc-sli:j:similarity-rules}
J.~{McCabe-Dansted} and A.~Slinko.
\newblock Exploratory analysis of similarities between social choice rules.
\newblock \emph{Group Decision and Negotiation}, 15:\penalty0 77--107, 2006.

\bibitem[Meskanen and Nurmi(2008)]{mes-nur:b:distance-realizability}
T.~Meskanen and H.~Nurmi.
\newblock Closeness counts in social choice.
\newblock In M.~Braham and F.~Steffen, editors, \emph{Power, Freedom, and
  Voting}. Springer-Verlag, 2008.

\bibitem[Minka(2000)]{minka2000automatic}
T.~Minka.
\newblock Automatic choice of dimensionality for {PCA}.
\newblock \emph{Proceedings of NeurIPS-2000}, 13, 2000.

\bibitem[Mirrlees(1971)]{mir:j:single-crossing}
J.~Mirrlees.
\newblock An exploration in the theory of optimal income taxation.
\newblock \emph{Review of Economic Studies}, 38:\penalty0 175--208, 1971.

\bibitem[Monjardet(2009)]{mon:survey}
B.~Monjardet.
\newblock Acyclic domains of linear orders: A survey.
\newblock In S.~Brams, W.~Gehrlein, and F.~Roberts, editors, \emph{The
  Mathematics of Preference, Choice and Order}, Studies in Choice and Welfare,
  pages 139--160. Springer Berlin Heidelberg, 2009.

\bibitem[Munagala et~al.(2021)Munagala, Shen, and Wang]{munagala2021optimal}
K.~Munagala, Z.~Shen, and K.~Wang.
\newblock Optimal algorithms for multiwinner elections and the
  {C}hamberlin-{C}ourant rule.
\newblock In \emph{Proceedings of EC-2021}, pages 697--717, 2021.

\bibitem[Nitzan(1981)]{nit:j:closeness}
S.~Nitzan.
\newblock Some measures of closeness to unanimity and their implications.
\newblock \emph{Theory and Decision}, 13\penalty0 (2):\penalty0 129--138, 1981.

\bibitem[O'Neill(2013)]{openstv}
J.~O'Neill.
\newblock Open {STV}, www.openstv.org.
\newblock 2013.

\bibitem[Peters and Lackner(2020)]{pet-lac:j:spoc}
D.~Peters and M.~Lackner.
\newblock Preferences single-peaked on a circle.
\newblock \emph{Journal of Artificial Intelligence Research}, 68:\penalty0
  463--502, 2020.

\bibitem[Procaccia et~al.(2012)Procaccia, Reddi, and
  Shah]{procaccia2012maximum}
A.~Procaccia, S.~Reddi, and N.~Shah.
\newblock A maximum likelihood approach for selecting sets of alternatives.
\newblock In \emph{Proceedings of the Twenty-Eighth Conference on Uncertainty
  in Artificial Intelligence}, pages 695--704, 2012.

\bibitem[Puppe and Slinko(2019)]{pup-sli:j:single-crossing}
C.~Puppe and A.~Slinko.
\newblock Condorcet domains, median graphs and the single-crossing property.
\newblock \emph{Economic Theory}, 67\penalty0 (1):\penalty0 285--318, 2019.

\bibitem[Raymond and Willett(2002)]{ray-wil:j:max-common-subgraph-isomorphism}
J.~Raymond and P.~Willett.
\newblock Maximum common subgraph isomorphism algorithms for the matching of
  chemical structures.
\newblock \emph{Journal of Computer-Aided Molecular Design}, 16\penalty0
  (7):\penalty0 521--533, 2002.

\bibitem[Roberts(1977)]{rob:j:tax}
K.~Roberts.
\newblock Voting over income tax schedules.
\newblock \emph{Journal of Public Economics}, 8\penalty0 (3):\penalty0
  329--340, 1977.

\bibitem[Rubner et~al.(2000)Rubner, Tomasi, and Guibas]{rubner2000earth}
Y.~Rubner, C.~Tomasi, and L.~Guibas.
\newblock The earth mover's distance as a metric for image retrieval.
\newblock \emph{International Journal of Computer Vision}, 40\penalty0
  (2):\penalty0 99--121, 2000.

\bibitem[S{\'a}nchez-Fern{\'a}ndez et~al.(2017)S{\'a}nchez-Fern{\'a}ndez,
  El\-kind, Lackner, Fern{\'a}ndez, Fisteus, Val, and
  Skowron]{Sanchez-Fernandez2017Proportional}
L.~S{\'a}nchez-Fern{\'a}ndez, E.~El\-kind, M.~Lackner, N.~Fern{\'a}ndez, J.~A.
  Fisteus, P.~Basanta Val, and P.~Skowron.
\newblock Proportional justified representation.
\newblock In \emph{Proceedings of AAAI-2017}, pages 670--676, 2017.

\bibitem[Sapała(2022)]{mt:sapala}
K.~Sapała.
\newblock Algorithms for embedding metrics in euclidean spaces.
\newblock Master's thesis, AGH University of Science and Technology, 2022.

\bibitem[Skowron(2021)]{sko:c:prop-degree}
P.~Skowron.
\newblock Proportionality degree of multiwinner rules.
\newblock In \emph{Proceedings of EC-2021}, pages 820--840, 2021.

\bibitem[Skowron et~al.(2015)Skowron, Faliszewski, and
  Slinko]{sko-fal-sli:j:multiwinner}
P.~Skowron, P.~Faliszewski, and A.~Slinko.
\newblock Achieving fully proportional representation: {Approximability}
  result.
\newblock \emph{Artificial Intelligence}, 222:\penalty0 67--103, 2015.

\bibitem[Skowron et~al.(2016)Skowron, Faliszewski, and
  Lang]{sko-fal-lan:j:collective}
P.~Skowron, P.~Faliszewski, and J.~Lang.
\newblock Finding a collective set of items: {From} proportional
  multirepresentation to group recommendation.
\newblock \emph{Artificial Intelligence}, 241:\penalty0 191--216, 2016.

\bibitem[Skowron et~al.(2017)Skowron, Lackner, Brill, Peters, and
  Elkind]{sko-lac-bri-pet-elk:c:proportional-rankings}
P.~Skowron, M.~Lackner, M.~Brill, D.~Peters, and E.~Elkind.
\newblock Proportional rankings.
\newblock In \emph{Proceedings of IJCAI-2017}, pages 409--415, 2017.

\bibitem[Slinko et~al.(2021)Slinko, Wu, and Wu]{slinko2021characterization}
A.~Slinko, Q.~Wu, and X.~Wu.
\newblock A characterization of preference domains that are single-crossing and
  maximal condorcet.
\newblock \emph{Economics Letters}, 204:\penalty0 109918, 2021.

\bibitem[Stolicki et~al.(2020)Stolicki, Szufa, and Talmon]{pabulib}
D.~Stolicki, S.~Szufa, and N.~Talmon.
\newblock Pabulib: A participatory budgeting library.
\newblock \emph{arXiv preprint arXiv:2012.06539}, 2020.

\bibitem[Szufa et~al.(2020)Szufa, Faliszewski, Skowron, Slinko, and
  Talmon]{szu-fal-sko-sli-tal:c:map}
S.~Szufa, P.~Faliszewski, P.~Skowron, A.~Slinko, and N.~Talmon.
\newblock Drawing a map of elections in the space of statistical cultures.
\newblock In \emph{Proceedings of AAMAS-2020}, pages 1341--1349, 2020.

\bibitem[Szufa et~al.(2022)Szufa, Faliszewski, Janeczko, Lackner, Slinko,
  Sornat, and Talmon]{SFJLSS22}
S.~Szufa, P.~Faliszewski, {\L}.~Janeczko, M.~Lackner, A.~Slinko, K.~Sornat, and
  N.~Talmon.
\newblock How to sample approval elections?
\newblock In \emph{Proceedings of IJCAI-2022}, pages 496--502, 2022.

\bibitem[Walsh(2015)]{wal:t:generate-sp}
T.~Walsh.
\newblock Generating single peaked votes.
\newblock Technical Report arXiv:1503.02766 [cs.GT], arXiv.org, March 2015.

\bibitem[Yu et~al.(2004)Yu, Hoogeveen, and
  Lenstra]{DBLP:journals/scheduling/YuHL04}
W.~Yu, H.~Hoogeveen, and J.~K. Lenstra.
\newblock Minimizing makespan in a two-machine flow shop with delays and
  unit-time operations is {NP}-hard.
\newblock \emph{Journal of Scheduling}, 7\penalty0 (5):\penalty0 333--348,
  2004.

\bibitem[Zhang and Wang(2006)]{zhang2006mlle}
Z.~Zhang and J.~Wang.
\newblock {MLLE}: Modified locally linear embedding using multiple weights.
\newblock \emph{Proceedings of NeurIPS-2006}, 19, 2006.

\end{thebibliography}

\end{document}